\theoremstyle{definition}
\newtheorem{definition}{Definition}
\newtheorem{theorem}{Theorem}
\newtheorem{proposition}{Proposition}
\newtheorem{corollary}{Corollary}
\newtheorem{conjecture}{Conjecture}
\newtheorem{hypothesis}{Working hypothesis}
\newcommand{\tr}{\mathrm{tr}}
\newcommand{\cA}{{\mathcal A}}
\newcommand{\cB}{{\mathcal B}}
\newcommand{\cC}{{\mathcal C}}
\newcommand{\cD}{{\mathcal D}}
\newcommand{\cE}{{\mathcal E}}
\newcommand{\cF}{{\mathcal F}}
\newcommand{\cG}{{\mathcal G}}
\newcommand{\cH}{{\mathcal H}}
\newcommand{\cL}{{\mathcal L}}
\newcommand{\cM}{{\mathcal M}}
\newcommand{\cN}{{\mathcal N}}
\newcommand{\cO}{{\mathcal O}}
\newcommand{\cP}{{\mathcal P}}
\newcommand{\cR}{{\mathcal R}}
\newcommand{\cS}{{\mathcal S}}
\newcommand{\cT}{{\mathcal T}}
\newcommand{\cV}{{\mathcal V}}
\newcommand{\cZ}{{\mathcal Z}}
\newcommand{\Inv}{{\mathrm{Inv}}}
\def\inv{{\mbox{\tiny -1}}}
\newcommand\beq{\begin{equation}}
\newcommand\eeq{\end{equation}}
\newcommand{\be}{\begin{equation}}
\newcommand{\ee}{\end{equation}}
\newcommand{\bes}{\begin{eqnarray}}
\newcommand{\ees}{\end{eqnarray}}
\newcommand{\bea}{\begin{eqnarray}}
\newcommand{\eea}{\end{eqnarray}}
\def\psihat{{\widehat \psi}}
\def\cc{{\cal C}}
\def\ot{{\,\otimes \,}}
\def\act{\rhd}
\def\vphi{{\varphi}}
\def\vphib{\overline{{\varphi}}}
\newcommand{\one}{\mbox{$1 \hspace{-1.0mm}  {\bf l}$}}
  \def\cc{{\cal C}}    
      \def\nn{{\nonumber}}
\def\tr{{\mathrm{tr}}}
\def\hpsi{{\widehat \psi}}
\def\act{{\, \triangleright\, }}
\newcommand{\su}{\mathfrak{su}}
\newcommand{\so}{\mathfrak{so}}
\newcommand{\SU}{\mathrm{SU}}
\newcommand{\U}{\mathrm{U}}
\newcommand{\SO}{\mathrm{SO}}
\def\extd{\mathrm {d}}
\newcommand{\e}{\epsilon}
\newcommand\acts\triangleright
\newcounter{letter} \newcounter{numeral} \newcounter{Numeral}
\newcommand\Tr{\mathrm{Tr}}
\def\vphi{\varphi}
\def\vphihat{\widehat{\varphi}}
\def\e{\mbox{e}}
\def\E{\mbox{E}}
\def\extd{\mathrm {d}}
\def\ve{\varepsilon}
\newcommand\maps{\colon}
\newtheorem{theo}{Theorem}
\newtheorem{lemma}[theo]{Lemma}
\begin{document}



\thispagestyle{empty}

\noindent{\sc Université Paris-Sud 11 \hfill  LPT Orsay}

\

\noindent{\sc MPI for Gravitational Physics \hfill Microscopic Quantum Structure}\\
\noindent{\sc (Albert Einstein Institute) \hfill \& Dynamics of Spacetime}

\vspace{2cm}
\begin{center}
\textbf{\Huge Tensorial methods and renormalization\\
in\\
Group Field Theories\\}
\vspace{1.5cm}
{\large Doctoral thesis in physics, presented by \\}
\vspace{5mm}
\textbf{\Large Sylvain Carrozza}

\vfill
{\large Defended on September 19$^{\rm{th}}$, 2013, in front of the jury\\
\vspace{10mm}
\begin{tabular}{rcl}
\hline
Pr. Renaud Parentani && Jury president \\
Pr. Bianca Dittrich && Referee \\
Dr. Razvan Gurau && Referee \\
Pr. Carlo Rovelli && Jury member \\
Pr. Daniele Oriti && Supervisor\\
Pr. Vincent Rivasseau && Supervisor \\
\hline
\end{tabular}
}
\vspace{1.5cm}

\includegraphics[height=2.2cm,keepaspectratio]{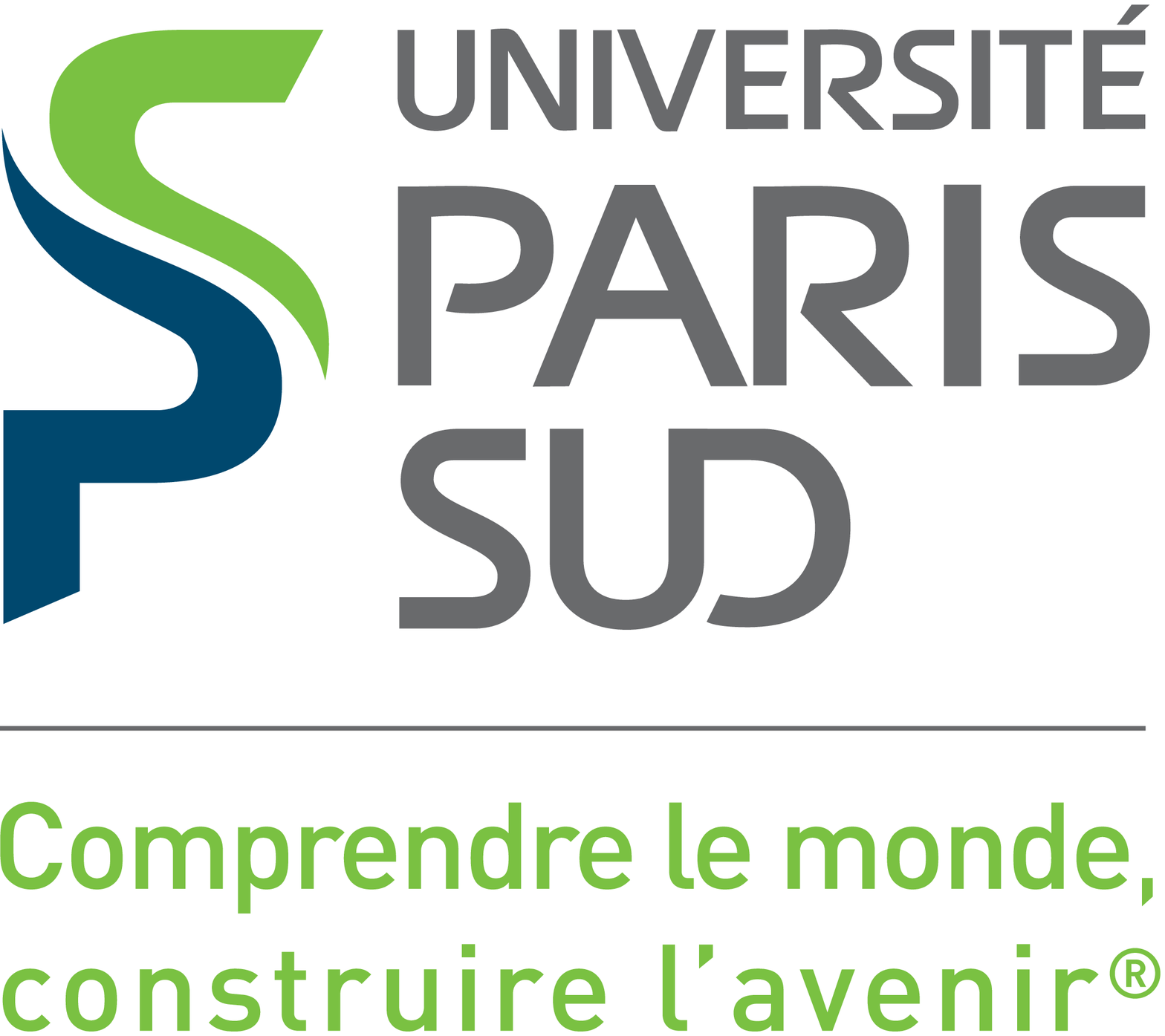}
\hspace{2cm}
\includegraphics[height=2.2cm,keepaspectratio]{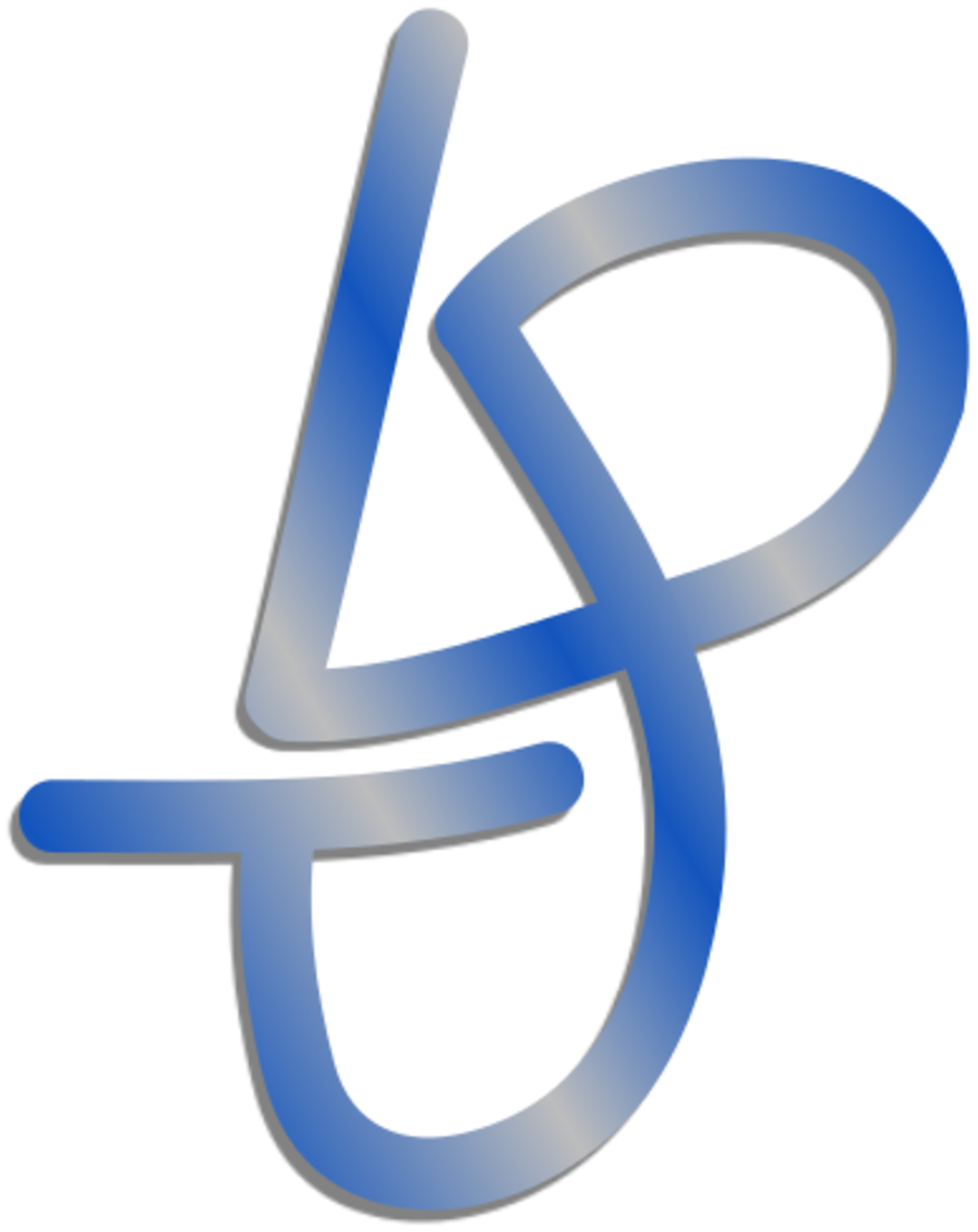}
\hspace{2cm}
\includegraphics[height=2.2cm,keepaspectratio]{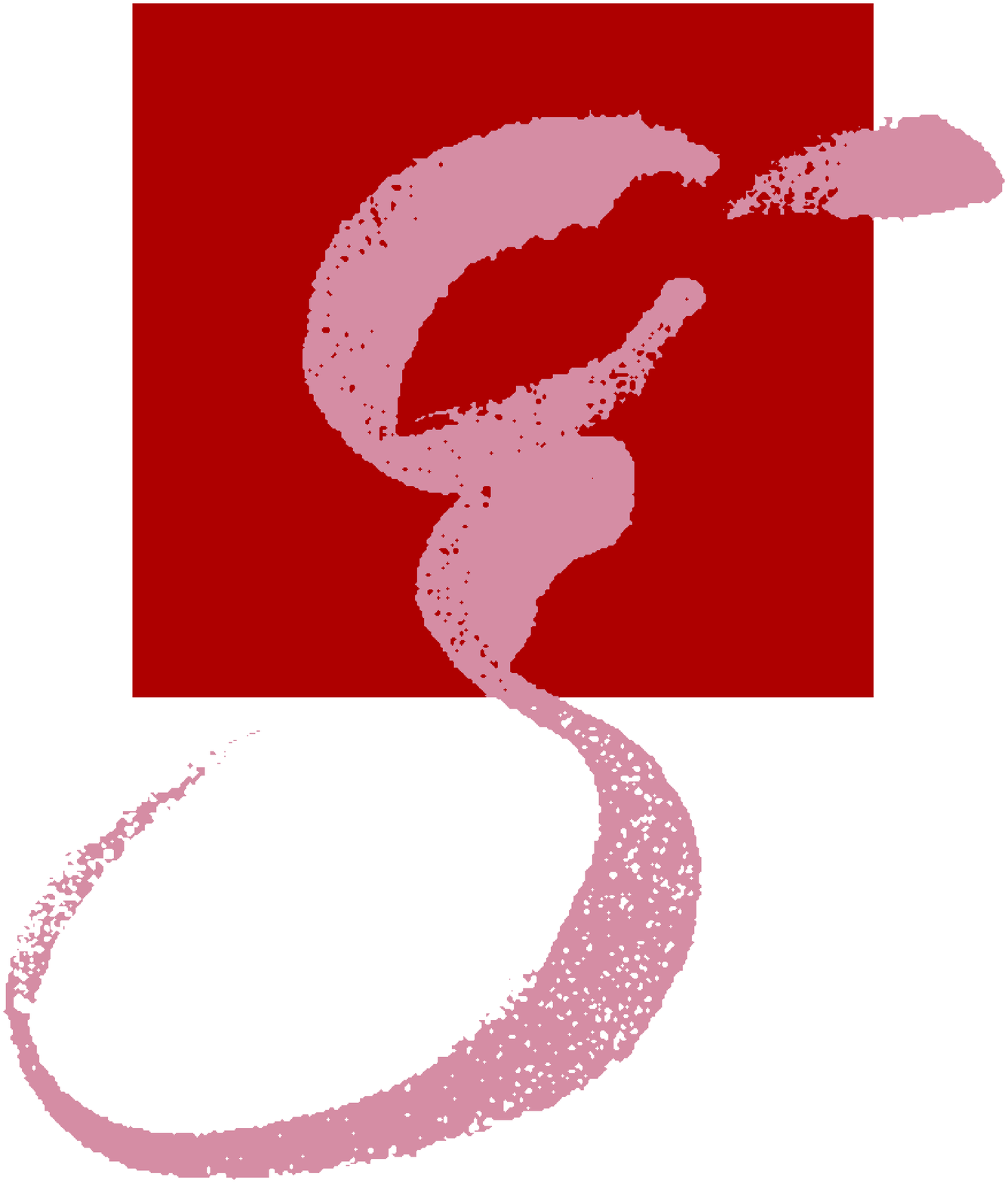}
\hspace{2cm}
\includegraphics[height=2.2cm,keepaspectratio]{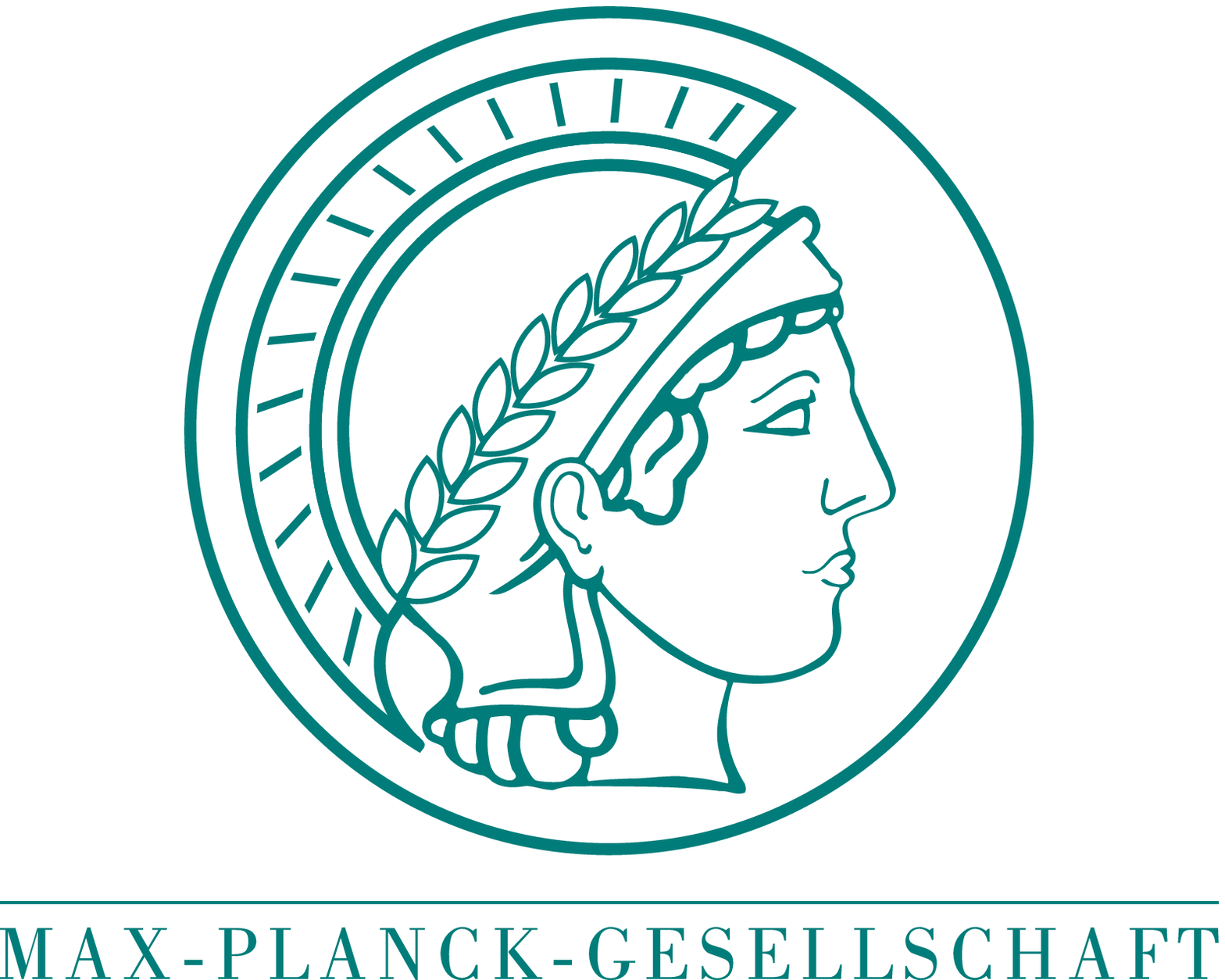}
\end{center}


\newpage
\thispagestyle{empty}
\cleardoublepage
\thispagestyle{plain}
\frontmatter
\setcounter{page}{1}
{\footnotesize{
\begin{center}
\textbf{Abstract:}\\
\end{center}
In this thesis, we study the structure of Group Field Theories (GFTs) from the point of view of renormalization theory. Such quantum field theories are found in approaches to quantum gravity related to Loop Quantum Gravity (LQG) on the one hand, and to matrix models and tensor models on the other hand. They model quantum space-time, in the sense that their Feynman amplitudes label triangulations, which can be understood as transition amplitudes between LQG spin network states. The question of renormalizability is crucial if one wants to establish interesting GFTs as well-defined (perturbative) quantum field theories, and in a second step connect them to known infrared gravitational physics. Relying on recently developed tensorial tools, this thesis explores the GFT formalism in two complementary directions. First, new results on the large cut-off expansion of the colored Boulatov-Ooguri models allow to explore further a non-perturbative regime in which infinitely many degrees of freedom contribute. The second set of results provide a new rigorous framework for the renormalization of so-called Tensorial GFTs (TGFTs) with gauge invariance condition. In particular, a non-trivial 3d TGFT with gauge group $\SU(2)$ is proven just-renormalizable at the perturbative level, hence opening the way to applications of the formalism to (3d Euclidean) quantum gravity.  

\vspace{0.3cm}
\noindent \textbf{Key-words:} quantum gravity, loop quantum gravity, spin foam, group field theory, tensor models, renormalization, lattice gauge theory.

\vspace{1cm}

\begin{center}
\textbf{Résumé :}\\
\end{center}
Cette thèse présente une étude détaillée de la structure de théories appelées GFT ("Group Field Theory" en anglais), à travers le prisme de la renormalisation. Ce sont des théories des champs issues de divers travaux en gravité quantique, parmi lesquels la gravité quantique à boucles et les modèles de matrices ou de tenseurs. Elles sont interprétées comme des modèles d'espaces-temps quantiques, dans le sens où elles génèrent des amplitudes de Feynman indexées par des triangulations, qui interpolent les états spatiaux de la gravité quantique à boucles.
Afin d'établir ces modèles comme des théories des champs rigoureusement définies, puis de comprendre leurs conséquences dans l'infrarouge, il est primordial de comprendre leur renormalisation. C'est à cette tâche que cette thèse s'attèle, grâce à des méthodes tensorielles développées récemment, et dans deux directions complémentaires. Premièrement, de nouveaux résultats sur l'expansion asymptotique (en le cut-off) des modèles colorés de Boulatov-Ooguri sont démontrés, donnant accès à un régime non-perturbatif dans lequel une infinité de degrés de liberté contribue. Secondement, un formalisme général pour la renormalisation des GFTs dites tensorielles (TGFTs) et avec invariance de jauge est mis au point. Parmi ces théories, une TGFT en trois dimensions et basée sur le groupe de jauge $\SU(2)$ se révèle être juste renormalisable, ce qui ouvre la voie à l'application de ce formalisme à la gravité quantique.  
\vspace{0.3cm}

\noindent \textbf{Mots-clés:} gravité quantique, gravité quantique à boucles, mousse de spin, group field theory, modèles tensoriels, renormalisation, théorie de jauge sur réseau.
}}
\

\vfill
{\footnotesize{Thèse préparée au sein de l'Ecole Doctorale de Physique de la Région Parisienne (ED 107), dans le Laboratoire de Physique Théorique d'Orsay (UMR 8627),
Bât. 210, Université Paris-Sud 11, 91405 Orsay Cedex; 
et en cotutelle avec le Max Planck Institute for Gravitational Physics (Albert Einstein Institute),
Am Mühlenberg 1, 14476 Golm, Allemagne, dans le cadre de l'International Max Planck Research School (IMPRS).}}  


\newpage
\thispagestyle{plain}
\cleardoublepage
\thispagestyle{plain}

\noindent {\huge Acknowledgments}\\

First of all, I would like to thank my two supervisors, Daniele Oriti and Vincent Rivasseau. Obviously, the results exposed in this thesis could not be achieved without their constant implication, guidance and help. They introduced me to numerous physical concepts and mathematical tools, with pedagogy and patience. Remarkably, their teachings and advices were always complementary to each other, something I attribute to their open-mindedness and which I greatly benefited from. I particularly appreciated the trusting relationship we had from the beginning. It was thrilling, and to me the right balance between supervision and freedom. 

I feel honoured by the presence of Bianca Dittrich, Razvan Gurau, Renaud Parentani and Carlo Rovelli in the jury, who kindly accepted to examine my work. Many thanks to Bianca and Razvan especially, for their careful reading of this manuscript and their comments.

I would like to thank the people I met at the AEI and at the LPT, who contributed to making these three years very enjoyable. The Berlin quantum gravity group being almost uncountable, I will only mention the people I had the chance to directly collaborate with: Aristide Baratin, Francesco Caravelli, James Ryan, Matti Raasakka and Matteo Smerlak.

\

It is quite difficult to keep track of all the events which, one way or another, conspired to pushing me into physics and writing this thesis. It is easier to remember and thank the people who triggered these long forgotten events. 

First and foremost, my parents, who raised me with dedication and love, turning the ignorant toddler I once was into a curious young adult. Most of what I am today takes its roots at home, and has been profoundly influenced by my younger siblings: Manon, Julia, Pauline and Thomas. My family at large, going under the name of Carrozza, Dislaire, Fontès, Mécréant, Minden, Ravoux, or Ticchi, has always been very present and supportive, which I want to acknowledge here. 

The good old chaps, Sylvain Aubry, Vincent Bonnin and Florian Gaudin-Delrieu, deeply influenced my high school years, and hence the way I think today. Meeting them in different corners of Europe during the three years of this PhD was very precious and refreshing.

My friends from the ENS times played a major role in the recent years, both at the scientific and human levels. In this respect I would especially like to thank Antonin Coutant, Marc Geiller, and Baptiste Darbois-Texier: Antonin and Marc, for endless discussions about theoretical physics and quantum gravity, which undoubtedly shaped my thinking over the years; Baptiste for his truly unbelievable stories about real-world physics experiments; and the three of them for their generosity and friendship, in Paris, Berlin or elsewhere.  

\

Finally, I measure how lucky I am to have Tamara by my sides,
who always supported me with unconditional love. I found the necessary happiness and energy to achieve this PhD thesis in the dreamed life we had together in Berlin.

\newpage
\thispagestyle{plain}
\cleardoublepage
\thispagestyle{plain}
\vspace*{\stretch{1}}

\hfill\begin{minipage}{10cm}
{\footnotesize {\it
Wir sollen heiter Raum um Raum durchschreiten,

An keinem wie an einer Heimat hängen,

Der Weltgeist will nicht fesseln uns und engen,

Er will uns Stuf' um Stufe heben, weiten.}}
\vspace{0.2cm}
\newline {\footnotesize {\bf Hermann Hesse}, \textit{Stufen}, in \textit{Das Glasperlenspiel}, 1943.}
\end{minipage}

\vspace*{\stretch{1}}


\newpage
\thispagestyle{plain}
\tableofcontents

\selectlanguage{english}

\mainmatter
\chapter{Motivations and scope of the present work}



\hfill\begin{minipage}{10cm}
{\footnotesize {\it Nous sommes en 50 avant Jésus-Christ. Toute la Gaule est occupée par les Romains... Toute? Non! Un village peuplé d'irréductibles Gaulois résiste encore et toujours à l'envahisseur. Et la vie n'est pas facile pour les garnisons de légionnaires romains des camps retranchés de Babaorum, Aquarium, Laudanum et Petibonum\ldots}}
\vspace{0.2cm}
\newline {\footnotesize {\bf René Goscinny} and {\bf Albert Uderzo}, \it{Astérix le Gaulois}}
\end{minipage}

\section{Why a quantum theory of gravity cannot be dispensed with}


A consistent quantum theory of gravity is mainly called for by a conceptual clash between the two major achievements of physicists of the XX$^{\rm{th}}$ century. On the one hand, the realization by Einstein that classical space-time is a dynamical entity correctly described by General Relativity (GR), and on the other the advent of Quantum Mechanics (QM). The equivalence principle, upon which GR is built, leads to the interpretation of gravitational phenomena as pure geometric effects: the trajectories of test particles are geodesics in a curved four-dimensional manifold, space-time, whose geometric properties are encoded in a Lorentzian metric tensor, which is nothing but the gravitational field \cite{einstein_grossmann1913}. Importantly, the identification of the gravitational force to the metric properties of space-time entails the dynamical nature of the latter. Indeed, gravity being sourced by masses and energy, space-time cannot remain as a fixed arena into which physical processes happen, as was the case since Newton. With Einstein, space-time becomes a physical system \textit{per se}, whose precise structure is the result of a subtle interaction with the other physical systems it contains.
At the conceptual level, this is arguably the main message of GR, and the precise interplay between the curved geometry of space-time and matter fields is encoded into Einstein's equations \cite{einstein1915}. The second aspect of the physics revolution which took place in the early XX$^{\rm{th}}$ century 
revealed a wealth of new phenomena in the microscopic world, and the dissolution of most of the classical Newtonian picture at such scales: the disappearance of the notion of trajectory, unpredictable outcomes of experiments, statistical predictions highly dependent on the experimental setup \cite{bohr1983}... At the mathematical level, QM brings along an entirely new arsenal of technical tools: physical states are turned into vectors living in a Hilbert space, which replaces the phase space of classical physics, and observables become Hermitian operators acting on physical states. However, the conception of space-time on which QM relies remains deeply rooted in Newtonian physics: the Schrödinger equation is a partial differential equation with respect to fixed and physical space-time coordinates. For this reason, Special Relativity could be proven compatible with these new rules of the game, thanks to the Quantum Field Theory (QFT) formalism. The main difficulties in going from non-relativistic to relativistic quantum theory boiled down to the incorporation of the Lorentz symmetry, which also acts on time-like directions.  
Achieving the same reconciliation with the lessons of GR is (and has been proven to be) extraordinarily more difficult. The reason is that as soon as one contemplates the idea of making the geometry of space-time both dynamical and quantum, one looses in one stroke the fixed arena onto which the quantum foundations sit, and the Newtonian determinism which allows to unambiguously link space-time dynamics to its content. 
The randomness introduced by quantum measurements seems incompatible with the definition of a single global state for space-time and matter (e.g. a solution of a set of partial differential equations). 
And without a non-dynamical background, there is no unambiguous 'here' where quantum ensembles can be prepared, nor a 'there' where measurements can be performed and their statistical properties checked. 
In a word, by requiring background independence to conform to Einstein's ideas about gravity, one also suppresses the only remaining Newtonian shelter where quantum probabilities can safely be interpreted. This is probably the most puzzling aspect of modern physics, and calls for a resolution. 

\

But, one could ask, do we necessarily need to make gravity quantum? Cannot we live with the fact that matter is described by quantum fields propagating on a dynamical but classical geometry? A short answer would be to reject the dichotomous understanding of the world that would result from such a combination of \textit{a priori} contradictory ideas. On the other hand, one cannot deny that space-time is a very peculiar physical system, which one might argue, could very well keep a singular status as the only fundamentally classical entity. However, very nice and general arguments, put forward by Unruh in \cite{unruh1984}, make this position untenable (at least literally). Let us recapitulate the main ideas of this article here. In order to have the Einstein equations
\beq\label{einstein_eq}
G_{\mu \nu} = 8 \pi G T_{\mu \nu}
\eeq
as a classical limit of the matter sector, one possibility would be to interpret the right-hand side as a quantum average $\langle \hat{T}_{\mu \nu} \rangle$ of some quantum operator representing the energy-momentum tensor of matter fields. The problems with such a theory pointed out in \cite{unruh1984} are two-fold. First, quantum measurements would introduce discontinuities in the expectation value of $\hat{T}_{\mu \nu}$, and in turn spoil its conservation. Second, and as illustrated with a gravitational version of Schrödinger's cat gedanken experiment, such a coupling of gravity to a statistical average of matter states would introduce slow variations of the gravitational field caused by yet unobserved and undetermined matter states. Another idea explored by Unruh to make sense of (\ref{einstein_eq}) in such a way that the left-hand side is classical, and the right-hand side quantum, is through an eigenvalue equation of the type
\beq
8 \pi G \hat{T}_{\mu \nu} \vert \psi \rangle = G_{\mu \nu} \vert \psi \rangle \,.
\eeq
The main issue here is that the definition of the operator $\hat{T}_{\mu \nu}$ would have to depend non-linearly on the classical metric, and hence on the 'eigenvalue' $G_{\mu \nu}$. From the point of view of quantum theory, this of course does not make any sense. 

\

Now that some conceptual motivations for the search for a quantum description of the gravitational field have been recalled (and which are also the author's personal main motivations to work in this field), one should make a bit more precise what one means by 'a quantum theory of gravitation' or 'quantum gravity'. We will adopt the kind of ambitious though minimalistic position promoted in Loop Quantum Gravity (LQG) \cite{ashtekar_book , rovelli_book , thiemann_book}. Minimalistic because the question of the unification of all forces at high energies is recognized as not necessarily connected to quantum gravity, and therefore left unaddressed. But ambitious in the sense that one is not looking for a theory of quantum perturbations of the gravitational degrees of freedom around some background solution of GR, since this would be of little help as far as the conceptual issues aforementioned are concerned. Indeed, and as is for instance very well explained in \cite{perez_review2004, perez_review2012}, from the point of view of GR, there is no canonical way of splitting the metric of space-time into a background (for instance a Minkowski metric, but not necessarily so) plus fluctuations. Therefore giving a proper quantum description of the latter fluctuations, that is finding a renormalizable theory of gravitons on a given background, cannot fulfill the ultimate goal of reconciling GR with QM. On top of that, one would need to show that the specification of the background is a kind of gauge choice, which does not affect physical predictions. Therefore, one would like to insist on the fact that even if such a theory was renormalizable, the challenge of making Einstein's gravity fully quantum and dynamical would remain almost untouched.
This already suggests that introducing the background in the first place is unnecessary. Since it turns out that the quantum theory of perturbative quantum GR around a Minkowski background is not renormalizable \cite{goroff_sagnotti}, we can even go one step further: the presence of a background might not only be unnecessary but also problematic. 
The present thesis is in such a line of thought, which aims at taking the background independence of GR seriously, and use it as a guiding thread towards its quantum version \cite{smolin_background}.
In this perspective, we would call 'quantum theory of gravity' a quantum theory without any space-time background, which would reduce to GR in some (classical) limit. 

\

A second set of ideas which are often invoked to justify the need for a theory of quantum gravity concerns the presence of singularities in GR, and is therefore a bit more linked to phenomenology, be it through cosmology close to the Big Bang or the question of the fate of black holes at the end of Hawking's evaporation. It is indeed tempting to draw a parallel between the question of classical singularities in GR and some of the greatest successes of the quantum formalism, such as for example the explanation of the stability of atoms or the resolution of the UV divergence in the theory of black-body radiation. We do not want to elaborate on these questions, but only point out that even if very suggestive and fascinating proposals exist \cite{bojowald2001 , ashtekar2006 , bojowald2012 }, there is as far as we know no definitive argument claiming that the cumbersome genericity of singularities in GR has to be resolved in quantum gravity. This is for us a secondary motivation to venture into such a quest, though a very important one. While a quantum theory of gravity must by definition make QM and GR compatible, it only \textit{might} explain the nature of singularities in GR. Still, it would be of paramount relevance if this second point were indeed realized, since it would open the door to a handful of new phenomena and possible experimental signatures to look for. 

\

Another set of ideas we consider important but we do not plan to address further in this thesis are related to the non-renormalizability of perturbative quantum gravity. As a quantum field theory on Minkowski space-time, the quantum theory of gravitons based on GR can only be considered as an effective field theory \cite{burgess2003 , donoghue2012}, which breaks down at the Planck scale. Such a picture is therefore necessarily incomplete as a fundamental theory, as it was to be expected, but does not provide any clear clue about how it should be completed. At this point, two attitudes can be adopted. Either assume that one should first look for a renormalizable perturbative theory of quantum gravity, from which the background independent aspects will be addressed in a second stage; or, focus straight away on the background independent features which are so central to the very question of quantum gravity. Since we do not want to assume any \textit{a priori} connection between the UV completion of perturbative quantum general relativity and full-fledged quantum gravity, as is for instance investigated in the asymptotic safety program \cite{reuter2012 , dario2013}, the results of this thesis will be presented in a mindset in line with the second attitude. Of course, any successful fundamental quantum theory of gravity will have to provide a deeper understanding of the two-loops divergences of quantum GR, and certainly any program which would fail to do so could not be considered complete \cite{nicolai2005}. 

\

The purpose of the last two points was to justify to some extent the technical character of this PhD thesis, and its apparent disconnection with many of the modern fundamental theories which are experimentally verified. While it is perfectly legitimate to look for a reconciliation of QM and GR into the details of what we know about matter, space and time, we want to advocate here a hopefully complementary strategy, which aims at finding a general theoretical framework encompassing them both at a general and conceptual level. At this stage, we would for example be highly satisfied with a consistent definition of quantum geometry whose degrees of freedom and dynamics would reduce to that of vacuum GR in some limit; even if such a theory did not resolve classical singularities, nor it would provide us with a renormalizable theory of gravitons.

\section{Quantum gravity and quantization}








Now that we reinstated the necessity of finding a consistent quantum formulation of gravitational physics, we would like to make some comments about the different general strategies which are at our disposal to achieve such a goal. In particular, would a quantization of general relativity (or a modification thereof) provide the answer?

\

The most conservative strategy is the quantization program of classical GR pioneered by Bryce DeWitt \cite{dewitt2011pursuit}, either through Dirac's general canonical quantization procedure \cite{dirac_qm , dewitt_can} or with covariant methods \cite{dewitt_cov}. Modern incarnations of these early ideas can be found in canonical loop quantum gravity and its tentative covariant formulation through spin foam models \cite{rovelli_book , thiemann_book , perez_review2012}. While the Ashtekar formulation of GR \cite{ashtekar1986 , barbero1994} allowed dramatic progress with respect to DeWitt's formal definitions, based on the usual metric formulation of Einstein's theory, very challenging questions remain open as regards the dynamical aspects of the theory. In particular, many ambiguities appear in the definition of the so-called scalar constraint of canonical LQG, and therefore in the implementation of four-dimensional diffeomorphism invariance, which is arguably the core purpose of quantum gravity. There are therefore two key aspects of the canonical quantization program that we would like to keep in mind: first, the formulation of classical GR being used as a starting point (in metric or Ashtekar variables), or equivalently the choice of fundamental degrees of freedom (the metric tensor or a tetrad field), has a great influence on the quantization; and second, the subtleties associated to space-time diffeomorphism invariance have so far plagued such attempts with numerous ambiguities, which prevent the quantization procedure from being completed. The first point speaks in favor of loop variables in quantum gravity, while the second might indicate an intrinsic limitation of the canonical approach. 

\

A second, less conservative but more risky, type of quantization program consists in discarding GR as a classical starting point, and instead postulating radically new degrees of freedom. This is for example the case in string theory, where a classical theory of strings moving in some background space-time is the starting point of the quantization procedure. Such an approach is to some extent supported by the non-renormalizability of perturbative quantum GR, interpreted as a signal of the presence of new degrees of freedom at the Planck scale. Similar interpretations in similar situations already proved successful in the past, for instance with the four-fermion theory of Fermi, whose non-renormalizability was cured by the introduction of new gauge bosons, and gave rise to the renormalizable Weinberg-Salam theory. In the case of gravity, and because of the unease with the perturbative strategy mentioned before, we do not wish to give too much credit to such arguments. However, it is necessary to keep in mind that the degrees of freedom we have access to in the low-energy classical theory (GR) are not necessarily the ones to be quantized.

\

Finally, a third idea which is gaining increasing support in the recent years is to question the very idea of quantizing gravity, at least \textit{stricto sensu}. Rather, one should more generally look for a quantum theory, with possibly non-metric degrees of freedom, from which classical geometry and its dynamics would \textit{emerge}. Such a scenario has been hinted at from within GR itself, through the thermal properties of black holes and space-time in general. For instance in \cite{jacobson1995}, Jacobson suggested to interpret the Einstein equations as equations of states at thermal equilibrium. In this picture, space-time dynamics would only emerge in the thermodynamic limit of a more fundamental theory, with degrees of freedom yet to be discovered. This is even more radical that what is proposed in string theory, but also consistent with background independence in principle: there is no need to assume the existence of a (continuous) background space-time in this picture, and contrarily so, the finiteness of black hole entropy can be interpreted as suggestive of the existence of an underlying discrete structure. Such ideas have close links with condensed matter theory, which explains for example macroscopic properties of solids from the statistical properties of their quantum microscopic building blocks, and in particular with the theory of quantum fluids and Bose-Einstein condensates \cite{lorenzo2009 , barcelo2005}. Of course, the two outstanding issues are that no experiments to directly probe the Planck scale are available in the near future, and emergence has to be implemented in a fully background independent manner.  

\

After this detour, one can come back to the main motivations of this thesis, loop quantum gravity and spin foams, and remark that even there, the notion of emergence seems to have a role to play. Indeed, the key prediction of canonical loop quantum gravity is undoubtedly the discreteness of areas and volumes at the kinematical level \cite{rovelli_smolin1994}, and this already entails some kind of emergence of continuum space-time. In this picture, continuous space-time cannot be defined all the way down to the Planck scale, where the discrete nature of the spectra of geometric operators starts to be relevant. This presents a remarkable qualitative agreement with Jacobson's proposal, and in particular all the thermal aspects of black holes explored in LQG derive from this fundamental result \cite{pranzetti_sigma}. But there are other discrete features in LQG and spin foams, possibly related to emergence, which need to be addressed. Even if canonical LQG is a continuum theory, the Hilbert space it is based upon is constructed in an inductive way, from states (the spin-network functionals) labeled by discrete quantities (graphs with spin labels). We can say that each such state describes a continuous quantum geometry with a finite number of degrees of freedom, and that the infinite number of possible excitations associated to genuine continuous geometries is to be found in large superpositions of these elementary states, in states associated to infinitely large graphs, or both. In practice, only spin-network states on very small graphs can be investigated analytically, the limit of infinitely large graphs being out of reach, and their superpositions even more so. This indicates that in its current state, LQG can also be considered a theory of discrete geometries, despite the fact that it is primarily a quantization of GR. From this point of view, continuous classical space-time would only be recovered through a continuum limit. This is even more supported by the covariant spin foam perspective, where the discrete aspects of spin networks are enhanced rather than tamed. The discrete structure spin foam models are based upon, $2$-complexes, acquire a double interpretation, as Feynman graphs labeling the transitions between spin network states on the one hand, and as discretizations of space-time akin to lattice gauge theory on the other hand. Contrary to the canonical picture, this second interpretation cannot be avoided, at least in practice, since all the current spin foam models for four-dimensional gravity are constructed in a way to enforce a notion of (quantum) discrete geometry in a cellular complex dual to the foam. Therefore, in our opinion, at this stage of the development of the theory, it seems legitimate to view LQG and spin foam models as quantum theories of discrete gravity. And if so, addressing the question of their continuum limit is of primary importance. 

Moreover, we tend to see a connection between: a) the ambiguities appearing in the definition of the dynamics of canonical LQG, b) the fact that the relevance of a quantization of GR can be questioned in a strong way, and c) the problem of the continuum in the covariant version of loop quantum gravity. Altogether, these three points can be taken as a motivation for a strategy where quantization and emergence both have to play their role. It is indeed possible, and probably desirable, that some of the fine details of the dynamics of spin networks are irrelevant to the large scale effects one would like to predict and study. In the best case scenario, the different versions of the scalar constraint of LQG would fall in a same universality class as far as the recovery of continuous space-time and its dynamics is concerned. This would translate, in the covariant picture, as a set of spin foam models with small variations in the way discrete geometry is encoded, but having a same continuum limit. The crucial question to address in this perspective is that of the existence, and in a second stage the universality of such a limit, in the sense of determining exactly which aspects (if any) of the dynamics of spin networks are key to the emergence of space-time as we know it. The fact that these same spin networks were initially thought of as quantum states of \textit{continuous} geometries should not prevent us from exploring other avenues, in which the continuum only emerge in the presence of a very large number of \textit{discrete} building blocks. 

\

This PhD thesis has been prepared with the scenario just hinted at in mind, but we should warn the reader that it is in no way conclusive in this respect. Moreover, we think and we hope that the technical results and tools which are accounted for in this manuscript are general enough to be useful to researchers in the field who do not share such point of views. The reason is that, in order to study universality in quantum gravity, and ultimately find the right balance between strict quantization procedures and emergence, one first needs to develop a theory of renormalization in this background independent setting, which precisely allows to consistently erase information and degrees of freedom. This thesis is a contribution to this last point, in the Group Field Theory (GFT) formulation of spin foam models. 

\section{On scales and renormalization with or without background}





The very idea of extending the theory of renormalization to quantum gravity may look odd at first sight. The absence of any background seems indeed to preclude the existence of any physical scale with respect to which the renormalization group flow should be defined. A few remarks are therefore in order, about the different notions of scales which are available in quantum field theories and general relativity, and the general assumption we will make throughout this thesis in order to extend such notions to background independent theories. 

\

Let us start with relativistic quantum field theories, which support the standard model of particle physics, as well as perturbative quantum gravity around a Minkowski background. The key ingredient entering the definition of these theories is the flat background metric, which provides a notion of locality and global Poincaré invariance. The latter allows in particular to classify all possible interactions once a field content (with its own set of internal symmetries) has been agreed on \cite{weinberg1}. More interesting, this same Poincaré invariance, combined with locality and the idea of renormalization \cite{wilson_nobel , vincent_book , salmhofer}, imposes further restrictions on the number of independent couplings one should work with. When the theory is (perturbatively) non-renormalizable, it is consistent only if an infinite set of interactions is taken into account, and therefore loses any predictive power (at least at some scale). When it is on the contrary renormalizable, one can work with a finite set of interactions, though arbitrarily large in the case of a super-renormalizable theory. For fundamental interactions, the most interesting case is that of a just-renormalizable theory, such as QED or QCD, for which a finite set of interactions is uniquely specified by the renormalizability criterion. In all of these theories, what is meant by 'scale' is of course an energy scale, in the sense of special relativity. However, renormalization and quantum field theory are general enough to accommodate various notions of scales, as for example non-relativistic energy, and apply to a large variety of phenomena for which Poincaré invariance is completely irrelevant. A wealth of examples of this kind can be found in condensed matter physics, and in the study of phase transitions. The common feature of all these models is that they describe regimes in which a huge number of (classical or quantum) degrees of freedom are present, and where their contributions can be efficiently organized according to some order parameter, the 'scale'. As we know well from thermodynamics and statistical mechanics, it is in this case desirable to simplify the problem by assuming instead an infinite set of degrees of freedom, and adopt a coarse-grained description in which degrees of freedom are collectively analyzed. Quantum field theory and renormalization are precisely a general set of techniques allowing to efficiently organize such analyzes. Therefore, what makes renormalizable quantum field theories so useful in fundamental physics is not Poincaré invariance in itself, but the fact that it implies the existence of an infinite reservoir of degrees of freedom in the deep UV. 

\

We now turn to general relativity. The absence of Poincaré symmetry, or any analogous notion of space-time global symmetries prevents the existence of a general notion of energy. Except for special solutions of Einstein's equations, there is no way to assign an unambiguous notion of localized energy to the modes of the gravitational field\footnote{We can for instance quote Straumann \cite{straumann}: \begin{quote} This has been disturbing to many people, but one simply has to get used to this fact. There is no "energy-momentum tensor for the gravitational field". \end{quote}}. The two situations in which special relativistic notions of energy-momentum do generalize are in the presence of a global Killing symmetry, or for asymptotically flat space-times. In the first case, it is possible to translate the fact that the energy-momentum tensor $T^{\mu \nu}$ is divergence free into both local and integral conservation equations for an energy-momentum vector $P^\mu \equiv T^{\mu \nu} K_\nu$, where $K_\nu$ is the Killing field. In the second case, only a partial generalization is available, in the form of integral conservation equations for energy and momentum at spatial infinity. One therefore already loses the possibility of localizing energy and momentum in this second situation, since they are only defined for extended regions with boundaries in the approximately flat asymptotic region. In any case, both generalizations rely on global properties of specific solutions to Einstein's equations which cannot be available in a background independent formulation of quantum gravity. We therefore have to conclude that, since energy scales associated to the gravitational field are at best solution-dependent, and in general not even defined in GR, a renormalization group analysis of background independent quantum gravity cannot be based on space-time related notions of scales.

\

This last point was to be expected on quite general grounds. From the point of view of quantization à la Feynman for example, all the solutions to Einstein's equations (and in principle even more general 'off-shell' geometries) are on the same footing, as they need to be summed over in a path-integral (modulo boundary conditions). We cannot expect to be able to organize such a path-integral according to scales defined internally to each of these geometries. But even if one takes the emergent point of view seriously, GR suggests that the order parameter with respect to which a renormalization group analysis should be launched cannot depend on a space-time notion of energy. This point of view should be taken more and more seriously as we move towards an increasingly background independent notion of emergence, in the sense of looking for a unique mechanism which would be responsible for the emergence of a large class of solutions of GR, if not all of them. In particular, as soon as such a class is not restricted to space-times with global Killing symmetries or with asymptotically flat spatial infinities, there seems to be no room for the usual notion of energy in a renormalization analysis of quantum gravity. 

\

However, it should already be understood at this stage that the absence of any background space-time in quantum gravity, and therefore of any natural physical scales, does not prevent us from using the quantum field theory and renormalization formalisms. As was already mentioned, the notion of scale prevailing in renormalization theory is more the number of degrees of freedom available in a region of the parameter space, rather than a proper notion of energy. Likewise, if quantum fields do need a fixed background structure to live in, this needs not be interpreted as space-time. As we will see, this is precisely how GFTs are constructed, as quantum field theories defined on (internal) symmetry groups rather than space-time manifolds. More generally, the working assumption of this thesis will be that a notion of scale and renormalization group flow can be defined \textit{before}\footnote{Obviously, this 'before' does not refer to time, but rather to the abstract notion of scale which is assumed to take over when no space-time structure is available anymore.} space-time notions become available, and studied with quantum field theory techniques, as for example advocated in \cite{vincent_tt1 , vincent_tt2}. The only background notions one is allowed to use in such a program must also be present in the background of GR. The dimension of space-time, the local Lorentz symmetry, and the diffeomorphism groups are among them, but they do not support any obvious notion of scale. Rather, we will postulate that the 'number of degrees of freedom' continues to be a relevant order parameter in the models we will consider, that is in the absence of space-time. This rather abstract scale will come with canonical definitions of UV and IR sectors. They should by no means be understood as their space-time related counter-parts, and be naively related to respectively small and large distance regimes. Instead, the UV sector will simply be the corner of parameter space responsible for divergences, or equivalently where 'most' of the degrees of freedom sit. A natural renormalization group flow will be defined, which will allow to average out the contributions of the degrees of freedom, from higher to lower scales. The only strong conceptual assumption we will make in this respect is that such an abstract definition of renormalization is physical and can be used to describe the emergence of space-time structures. However, at this general level of discussion, we would like to convey the idea that such a strong assumption is in a sense also minimal. Indeed, if one wants to be able to speak of emergence of space and time, one also needs at least one new parameter which is neither time nor space. We simply call this order parameter 'scale', and identify it with one of the central features of quantum field theory: the renormalization group. It is in our view the most direct route towards new physics in the absence of space and time, as quantum gravity seems to require. 

\section{Purpose and plan of the thesis}
%
%
%
%
%

We are well aware of the fact that the previous motivations cannot be taken for granted. They can be contested in various ways, and also lack a great deal of precision. The reader should see them as a guiding thread towards making full sense of the emergence of space-time from background independent physics, rather than definitive statements embraced by the author. 
From now on, we will refrain from venturing into more conceptual discussions, and mostly leave the specific examples worked out in this thesis speak for themselves, hoping that they will do so in favor of the general ideas outlined before.

\

The rest of the thesis is organized as follows. In chapter \ref{GFT}, we will start by recalling the two main ways of understanding the construction of GFT models. One takes its root in the quantization program for quantum gravity, in the form of loop quantum gravity and spin foam models. In this line of thoughts, GFTs are generating functionals for spin foam amplitudes, in the same way as quantum field theories are generating functionals for Feynman amplitudes. In this sense, they complete the definition of spin foam models by assigning canonical weights to the different foams contributing to a same transition between boundary states (spin networks). Moreover, a quantum field theory formalism is expected to provide easier access to non-perturbative regimes, and hence to the continuum. For example, classical equations of motion can be used as a way to change vacuum \cite{edr}, or to study condensed phases of the theory \cite{gfc}. Of course, this specific completion of the definition of spin foam models relies on a certain confidence in the quantum field theory formalism. Alternative but hopefully complementary approaches exist, such as coarse-graining methods imported from condensed matter physics and quantum information theory \cite{bianca_cyl, bianca_review , bahr2012}. Though, if one decides to stick to quantum field theory weights, it seems natural to also bring renormalization in. From this point of view, perturbative renormalizability of GFTs is a self-consistency check, and is rather independent from the continuum limit and the emergence question. A second set of motivations is given by discrete approaches to quantum gravity. We will first focus on the successful example of matrix models, which allowed to define random two-dimensional surfaces, and henceforth achieve a quantization of two-dimensional quantum gravity. We will then outline natural extensions of matrix models, known as tensor models, which will be crucial to the rest of the thesis, particularly in their modern versions. In this perspective, GFTs appear as enriched tensor models, which allow to define finer notions of discrete quantum geometries, and the question of emergence of the continuum is put to the forefront, merely by construction. We will then comment on the relations between these two historical paths, and advocate for a middle path, in which the hard questions to be faced are renormalization and universality of the continuum limit.  

In Chapter \ref{color_tensor}, we will move to the more recent aspects of GFTs and tensor models, following the introduction of colored models by Gurau in 2009. This will be the occasion to summarize the main results and tools which have been developing fast since then. This will especially include combinatorial and topological properties of colored graphs, which shape all the models we will discuss in the later chapters, and will be at the core of all the original results of this thesis. The important notion of tensor invariance will also be introduced and motivated in this chapter.

Finally, all the concepts introduced in the first chapters will be used in Chapters \ref{largeN} and \ref{renormalization} to present the original results of this PhD thesis. They come in two types, and as we will try to illustrate, are quite complementary. The first set of results concerns the so-called $1/N$ expansion of topological GFT models. It applies to GFTs with cut-off (given by the parameter $N$), in which a particular scaling of the coupling constant allows to reach an asymptotic many-particle regime at large $N$. We will in particular focus on asymptotic bounds on the amplitudes which allow to classify their contributions according to the topology of the underlying cell-complex. This will be argued to be a rough preliminary version of the second set of results, which concerns full-fledge renormalization. Tensorial Group Field Theories (TGFTs), which are refined versions of the cut-off models with new non-trivial propagators, will be introduced. They have a built-in notion of scale, which generates a well-defined renormalization group flow, and will give rise to somewhat dynamical versions of the $1/N$ expansions. We will give several examples, culminating in a detailed study of a just-renormalizable TGFT based on the group $\SU(2)$ in three dimensions, and incorporating the 'closure constraint' of spin foam models. As will be argued at this point, this TGFT can be considered a field theory realization of the original Boulatov model for three-dimensional quantum gravity \cite{boulatov}. 

The results accounted for in Chapters \ref{largeN} to \ref{chap:su2} are mainly based on four publications, the first two in collaboration with Daniele Oriti \cite{vertex, edge}, and the two others with Daniele Oriti and Vincent Rivasseau \cite{u1 , su2}. At the end of Chapter \ref{chap:su2}, we also include unpublished work \cite{beta_su2} about the flow equations of the TGFT introduced in \cite{su2}. Finally, a fifth paper \cite{boulatov_phase}, in collaboration with Aristide Baratin, Daniele Oriti, James Ryan and Matteo Smerlak, will be briefly evoked in the conclusions.

\chapter{Two paths to Group Field Theories}\label{GFT}

\hfill\begin{minipage}{10cm}
{\footnotesize {\it
Men's memories are uncertain and the past that was differs little from the past that was not.}}
\vspace{0.2cm}
\newline {\footnotesize {\bf Cormac McCarthy}, \textit{Blood Meridian}.}
\end{minipage}

\vspace{0.4cm}

In this chapter, we recall two historical and conceptual paths leading to the GFT formalism. The first follows the traditional quantization route, from canonical LQG, to spin foam models, and finally to GFTs, seen as generating functionals for the latter. The second approach relies from the start on the idea of discretization, and takes its root in the successes of matrix models. We will then argue in favor of taking these two strategies equally seriously to investigate the properties of GFTs, and the physical picture emerging from them.

	\section{Group Field Theories and quantum General Relativity}

		\subsection{Loop Quantum Gravity}
		
		Loop Quantum Gravity is a tentative approach to the canonical quantization of general relativity. It is conceptually very much in line with the old Wheeler-DeWitt (WdW) theory, at the notable exception that many previously formal expressions could be turned into rigorous equations thanks to LQG techniques. This breakthrough triggered many developments, from cosmology to black hole physics, all relying on the well-understood kinematical properties of LQG. Unfortunately, to this date the canonical quantization program could not be completed, a proper definition of the dynamics being still challenging, despite some recent progress \cite{Laddha:2011mk, casey_weak}.

\

As in the WdW theory, one starts from a canonical formulation of general relativity, which requires hyperbolicity of the space-time manifold $\cM$. This allows to introduce a global foliation, in terms of a time function $t \in \mathbb{R}$ indexing three-dimensional spacelike hypersurfaces $\Sigma_t$. One then recasts Einstein's equation in the form of a Hamiltonian system, dictating the evolution of spatial degrees of freedoms in this parameter $t$. In the WdW theory, the Hamiltonian formulation is the ADM one \cite{adm}, for which the configuration space is the set of all spatial metrics $g_{ab}$ ($1\leq a , b \leq 3$) on $\Sigma_t$. This is where LQG departs from the WdW theory, with the Ashtekar-Barbero formulation of classical GR \cite{ashtekar1986, barbero1994} as a starting point, hence shifting the emphasis to forms and connection variables.

\
 
Let us first introduce the four-dimensional formalism. Instead of using the space-time metric $g_{\mu \nu}$ ($0\leq \mu , \nu \leq 3$) as configuration variable, one instead introduces a tetrad field
\beq
e^{I}(x) = e^{I}_{\mu} (x) \extd x^\mu\,, 
\eeq 
which is a quadruple of $1$-forms on $\cM$, indexed by a Minkowski index $I \in \llbracket 0, 3\rrbracket$. The metric is only a derived quantity, which simply writes as
\beq
g_{\mu \nu} (x) = e^{I}_{\mu} (x) e^{J}_{\nu} (x) \eta_{IJ}
\eeq  
where $\eta$ is the Minkowski metric (with signature $[-, + , + , +]$). This redundant rewriting of the metric introduces a new local symmetry, corresponding to the action of $\SO( 1 , 3 )$ on the Minkowski index labeling the tetrad:
\beq\label{loc_so13}
e^{I} (x) \mapsto {\Lambda^{I}}_{J}(x)  e^{J} (x) \,, \qquad \Lambda (x) \in \SO( 1 , 3 )\,.
\eeq
Einstein's equations can then be recovered from a first order variational principle, depending on an additional spin connection $\omega$, which is a $1$-form with values in the Lie algebra $\so(1,3)$. The particular action lying at the foundations of current LQG and spin foams is the Holst action (for vacuum gravity without cosmological constant):
\beq\label{holst}
S[ e , \omega ] = \frac{1}{\kappa} \int_{\cM} \Tr \left[ \left( \star e \wedge e + \frac{1}{\gamma} e \wedge e \right) \wedge F(\omega) \right]\,,
\eeq 
where $\kappa$ is a constant containing Newton's $G$, $\gamma$ is the Barbero-Immirzi parameter, and $F(\omega) = \extd \omega + \omega \wedge \omega$ is the curvature of the spin connection. Varying $S$ with respect to $e$ and $\omega$ independently provides two sets of equations: the first forces $\omega$ to be the unique torsion-free connection compatible with $e$, and the second gives Einstein's equations. The term proportional to the Barbero-Immirzi parameter is topological, and therefore plays no role as far as classical equations are concerned. It is however a key ingredient of the loop quantization, and as we will see is a cornerstone of the current interpretation of quantum geometry in LQG. 

\ 

Assuming hyperbolicity of $\cM$, it is possible to perform a canonical analysis of (\ref{holst}) leading to the Ashtekar-Barbero formulation of GR. Details can for example be found in \cite{perez_review2012 , Geiller:2012dd}, and at the end of the day a $3+1$ splitting is obtained for both the tetrad and the spin connection in terms of canonical pairs $(E^a_i , A^j_b)$:
\beq
E^{a}_i \equiv \det(e) e^{a}_i \,, \qquad A^i_a \equiv \omega^i_a + \gamma \omega^{0 i}_a = \frac{1}{2} \epsilon^{i}_{jk} \omega^{jk}_a + \gamma \omega^{0 i}_a \,, \qquad \{E^{a}_i (x) ,  A^j_b (y) \} = \kappa \gamma \delta^{a}_{b} \delta^{i}_{j} \delta(x,y) \,, 
\eeq 
where now $x$ and $y$ refer to coordinates on the spatial hypersurface $\Sigma$. $A$ is an $\SU(2)$ connection encoding parallel transport on $\Sigma$, and the densitized triad $E$ can be used to define geometric quantities such as areas and volumes embedded in $\Sigma$. The most interesting aspect of such a formulation is that the dynamics of GR is encoded in three sets of constraints, bearing some similarities with Yang-Mills theory on a background space-time:
\beq
\cG_i = \cD_a E^{a}_i \,, \qquad \cC_a = E^{b}_k F^{k}_{ba} \,, \qquad \cS = \epsilon^{ijk} E^{a}_i E^{b}_j F_{ab \, k} + 2\frac{\gamma^2 -1}{\gamma^2}  E^{a}_{[ i} E^{b}_{j ]} ( A_{a}^{i} - \omega_{a}^{i}) ( A_{b}^{j} - \omega_{b}^{j} ) \,.
\eeq 
These constraints are first class, which means that the Poisson bracket of two constraints is itself a linear combination of constraints, and therefore weakly vanishes. In addition to defining a submanifold of admissible states in phase space, they therefore also generate gauge transformations, which encode all the symmetries of the triad formulation of GR. More precisely, the Gauss constraint $\cG_i$ generates $\SU(2)$ local transformations induced by the $\SO(1,3)$ symmetry (\ref{loc_so13}), while the vector constraint generates spatial diffeomorphism on $\Sigma$. The third and so-called scalar constraint $\cS$ is responsible for time reparametrization invariance, therefore extending the gauge symmetries to space-time diffeomorphisms.

\

In its canonical formulation, GR is a fully constrained system, which can be quantized following Dirac's program \cite{dirac_qm, henneaux_teitelboim}. The first step consists in finding a representation of the basic phase space variables as operators acting on a kinematical Hilbert space $\cH_{kin}$, and such that Poisson brackets are turned into commutation relations in the standard way. Then, one has to promote the constraints themselves to operators on $\cH_{kin}$. Third, one should perform the equivalent of finding the constraint surface in the classical theory, which means finding the states annihilated by the constraint operators. This step is crucial in the sense that it is only at this stage that the Hilbert space of physical sates $\cH_{phys}$ is defined. Finally, one has to look for a complete set of physical observables, that is a maximal set of operators commuting with the constraints. The main achievements of LQG regarding this quantization program concerns the first three steps. Contrary to the WdW theory, a kinematical Hilbert space could explicitly be constructed. Moreover, a representation of the constraint algebra on this Hilbert space could be specified, without any anomaly, therefore completing the second step. Ambiguities remain, but they mainly concern the scalar constraint. The Gauss and vector constraints, which form a closed subalgebra, could moreover be solved explicitly. The resulting Hilbert space of gauge and diffeomorphism invariant states has also been proved unique under certain additional assumptions \cite{lost}. This provides a solid arena to study the dynamics of LQG further, as encoded by the scalar constraint. However, its quantum definition remains ambiguous, and a complete set of solutions out of reach, hence precluding the completion of the canonical program. It should also be noted that, even if a satisfactory space of solutions to the scalar constraint were to be found, completing Dirac's program would remain incredibly difficult: even at the classical level, finding a complete (and manageable) set of Dirac observables for GR remains an outstanding task, and possibly forever so. 
   
\

The structures of the kinematical Hilbert space of LQG, and of the solutions to the Gauss and vector constraints, are one of the main inputs for the covariant approach, and therefore deserve to be described in some details here. A distinctive feature of the loop approach is the choice of basic variables to be quantized, which are holonomies of the connection rather than the connection itself. To any path $\ell$ in $\Sigma$ is associated the unique holonomy
\beq
h_\ell ( A ) = \cP \exp\left( - \int_\ell A \right) \in \SU(2)\,, 
\eeq
where $\cP$ stands for path ordering. This $\SU(2)$ element encodes parallel transport of the triad along $\ell$, and collectively the set of all the holonomy functionals fully capture the information contained in the connection. There is an important subtlety however, lying in the fact that smooth connections are replaced by a generalized space of connections before the quantization. It can be given an inductive definition in terms of holonomy functionals associated to closed embedded graphs \cite{Ashtekar:1994wa}, which play a central role in the definition of the kinematical Hilbert space. As far as we know, alternative constructions might be proposed at this level, relying on slightly different generalizations of smooth connections. This concerns for example regularity properties of the graphs (usually assumed to be piecewise analytic), as was already explored in \cite{Zapata:2004xq}. But more importantly so, it seems to us that the relevance of the combinatorial properties of the graphs supporting LQG states might have been underestimated. As we will see in the GFT context, combinatorial restrictions can dramatically improve analytic control over the theory, without affecting its conceptual aspects in any clear way. It seems to us that this freedom, so far overlooked in the canonical approach, would deserve to be investigated and taken advantage of. In any case, the standard kinematical Hilbert space of LQG is spanned by so-called \textit{cylindrical} functionals $\Psi_{\cG, f}$, labeled by a closed oriented graph $\cG$ with $L$ links, and a function $f: \SU(2)^{L} \rightarrow \mathbb{C}$. $\Psi_{\cG, f} (A)$ only depends on the holonomies $h_\ell (A)$ along the links of $\cG$, through the following formula:
\beq
\Psi_{\cG, f} (A) = f( h_{\ell_1} (A) , \ldots , h_{\ell_L} (A) )\,.
\eeq  
$f$ is moreover assumed to be square-integrable with respect to the Haar measure on $\SU(2)^{L}$, which allows to define a scalar product between cylindrical functionals associated to a same graph:
\beq
\langle \Psi_{\cG, f_1} \vert \Psi_{\cG, f_2} \rangle \equiv \int [ \extd h_\ell ]^{L} \overline{f_1 ( h_{\ell_1} , \ldots , h_{\ell_L}) } f_2 ( h_{\ell_1} , \ldots , h_{\ell_L}) \,.
\eeq
This scalar product naturally extends to arbitrary couples of cylindrical functionals, via embeddings of two distinct graphs into a common bigger one. The kinematical Hilbert space $\cH_{kin}$ is the completion with respect to this scalar product of the vector space generated by the cylindrical functionals. It can be shown to be nothing but the space $L^2 ( \overline{\cA}, \extd \mu_{AL})$ of square-integrable functions on the space of generalized connections $\overline{\cA}$, with respect to the Ashtekar-Lewandowski measure $\extd \mu_{AL}$ \cite{Ashtekar:1994mh}. The (matrix elements of the) holonomies are turned into operators, acting by multiplication on the kinematical states. As for the momenta $E$, there are also smeared out before quantization. For any two-dimensional surface $S$ in $\Sigma$, and any $\su(2)$-valued function $\alpha$, one defines the \textit{flux}:
\beq
E ( S , \alpha) = \int_S \alpha^i E_i \,.
\eeq
The action of an elementary flux on a kinematical state can be found by formally turning the densitized triad into a derivative operator with respect to the connection. This construction provides an anomaly-free representation of the classical \textit{holonomy-flux algebra}, that is the Poisson algebra formed by the smeared variables just introduced, which is moreover unique under certain assumptions regarding the diffeomorphism symmetry \cite{lost}. Solving the Gauss constraint is relatively straightforward once this representation has been constructed, since its (quantum) action on cylindrical functionals can be simply inferred from its (classical) action on holonomies. A finite (generalized) gauge transformation is labeled by one $\SU(2)$ element $g(x)$ at each point $x \in \Sigma$, acting on holonomies and cylindrical functionals as:
\bes
h_\ell (A) &\mapsto& (g \act h_\ell) (A) \equiv g_{s(\ell)} h_\ell (A) g_{t(\ell)}^\inv \,, \\ 
\Psi_{\cG, f} (A) = f(  h_{\ell} (A)  ) &\mapsto& (g \act \Psi_{\cG, f}) (A) \equiv f(  (g \act h_{\ell}) (A) )  \,.
\ees  
Thanks to group-averaging techniques, any cylindrical functional can be projected down to a gauge invariant one. Combined with the Peter-Weyl theorem, this allows to construct an explicit orthonormal basis for the gauge invariant Hilbert space. It is the subspace of $\cH_{kin}$ spanned by the \textit{spin network functionals}, which are special cylindrical functionals $\Psi_{\cG, \{ j_\ell \} , \{ \iota_n \}}$. The half-integers $\{ j_\ell \}$ label representations of $\SU(2)$ associated to the links $\{ \ell \}$ of $\cG$, while $\iota_n$ is an $\SU(2)$ intertwiner attached to the node $n$, compatible with the different representations meeting at $n$. $\Psi_{\cG, \{ j_\ell \} , \{ \iota_n \}}$ is the contraction of representations matrices $D^{j_\ell}$ with the intertwiners, according to the pattern of the decorated graph $(\cG, \{ j_\ell \} , \{ \iota_n \})$, which can schematically be denoted:
\beq
\Psi_{\cG, \{ j_\ell \} , \{ \iota_n \}} (A) = \prod_n \iota_n \cdot \prod_{\ell} D^{j_\ell} ( h_\ell (A) )\,.
\eeq
Solving the vector constraint is more intricate, because its solutions have to be found in the topological dual of an appropriate dense subspace of $\cH_{kin}$. This aspect of the canonical theory is largely unrelated to the present thesis, therefore it is sufficient to recall the main conceptual idea, and skip all the technical details. When acted upon a spin network, a diffeomorphism simply moves and deforms the graph on which it is supported. One therefore needs to repackage spin network functionals into equivalent classes of decorated graphs which can be deformed into one another. All the information about the embeddings of the graphs is washed out, except for their knot structures, which provide additional quantum numbers characterizing the diffeomorphism invariant classes. As we will see, a different route has been developed in the covariant theory as regards the diffeomorphism symmetry. The embedding information is altogether dispensed with, and the spin networks states are instead labeled by abstract graphs. The topological structure is then recovered from the graph itself, and likewise smooth diffeomorphisms (be them spatial or four-dimensional) are only expected to be (approximately) recovered in a yet to be defined continuum limit.  

\

The essential outcome of this partial completion of Dirac's quantization program is the notion of \textit{quantum geometry}, which provides a physical interpretation for spin network states. From the quantum fluxes, it is possible to construct geometric operators quantizing the area of a $2$-dimensional surface, or the volume of a $3$-dimensional region. Because the spin network functionals are eigenstates of such operators, it was possible to determine their spectra, which turned out to be discrete. More precisely, it was shown that the quanta of area are carried by the links of a spin network: each link with spin $j_\ell$, puncturing a surface $S$, contributes with a term $\pm 8 \pi \gamma \ell_P^2$ to the (oriented) area of $S$, where $\ell_P$ is the Planck length. Similarly, the nodes of a spin network carry infinitesimal volumes, in a rather complicated but still discrete fashion. These results are at the chore of the applications of LQG techniques to cosmology and black hole physics, and also enter crucially into the semi-heuristic constructions on which spin foam models are based. The fact that the Immirzi parameter $\gamma$ parametrizes these spectra is the reason why we claimed earlier it crucially enters the geometric interpretation of spin networks states.


		
%
%
%
%
%
		\subsection{Spin Foams}

Spin Foam Models (SFMs) are a covariant approach to the quantization of GR, initially introduced to circumvent the difficulties with the scalar constraint encountered in canonical LQG. Like any known quantum theory, LQG is expected to have a second equivalent representation, in the form of a path-integral à la Feynman. SFMs provide tentative formulations of this covariant theory, with the idea that LQG techniques can again make old formal theories well-defined, and that the dynamics of quantum GR might be more amenable in a four-dimensional setting. In the gravitational context, Feynman's formulation of quantum mechanics suggests to define transition amplitudes between three-dimensional boundary states of the gravitational field by integrating over histories (i.e. space time geometries interpolating between such boundary states), with a weight given by an action for GR. Schematically, call $h_1$ and $h_2$ two boundary $3$-geometries, and denote by $g$ a space time manifold having $h_1$ and $h_2$ as boundaries. Then one would like to define the transition amplitude between $h_1$ and $h_2$ by:
\beq\label{abstract_transition}
\langle h_1 \vert h_2 \rangle_{phys} = \int \cD g \, \exp\left( \rm{i} S_{GR} (g) \right)\,,
\eeq
where $\cD g$ is a suitable probability measure on the space of interpolating $4$-geometries, and $S_{GR}$ is an action for GR. This goal is mathematically very challenging, essentially because of the dubious existence of probability measures on spaces of continuous geometries. Indeed, if we were to use coordinates in order to define such quantities, in addition to having to handle the usual difficulties associated to measures on infinite dimensional spaces, one would also have to face the even harder question of diffeomorphism invariance. Even the left-hand side of (\ref{abstract_transition}) is problematic, as it assumes the availability of a well controlled space of $3$-geometries. Finally, even at this very abstract level, what exactly should be summed over is not really clear (metric degrees of freedom only or topologies as well?). 
		
		\
		
In order to make (\ref{abstract_transition}) more concrete, one can follow the ideas behind LQG, but this time in a four-dimensional framework. This can be done in several ways, which all seem to point in a same direction. They can be classified according to the amount of inspiration and results which can be directly traced back to the canonical theory. The first, historical route, has been to reproduce Feynman's heuristic construction from canonical LQG. While this allows to deduce the general form of spin foam amplitudes, no definite model can be inferred, since the dynamics of canonical LQG is itself not well understood. Rather, one hopes in reverse to be able to pin-point the right model on the covariant side (right-hand side of equation (\ref{abstract_transition})), and deduce the definition of the physical Hilbert space (scalar product on the left-hand side of equation (\ref{abstract_transition})). In this approach, one takes as many features of the canonical theory as possible for granted, such as the boundary $SU(2)$ spin network states, and relies on covariant quantization techniques for the dynamics only. In the second approach, one starts the quantization from scratch in the covariant setting, but like in canonical LQG one hopes that shifting the emphasis from metric to tetrads and connection variables is profitable. Finally, the third strategy consists in taking seriously the type of degrees of freedom and discrete features of canonical LQG, but being critical towards strict quantization procedures. The latter will be discussed in details in the next section. Here, we only present and comment on the advantages and shortcomings of the first two strategies.

\

The dynamics of canonical LQG is encoded into the vector and scalar constraints, which together must ensure space-time diffeomorphism invariance. The vector constraint can be imposed at the kinematical level, and with the additional combinatorial abstraction previously mentioned, provides an intermediate Hilbert space spanned by (combinatorial) spin networks. The physical Hilbert space should be deduced by projecting down to states annihilated by the quantum scalar constraint $\widehat{\cS}$. The trick lies in the fact that formally
\beq
\widehat{\cS} \vert s \rangle = 0 \; \Leftrightarrow \; \forall t \in \mathbb{R}\,,\; \exp(i t \widehat{\cS}) \vert s \rangle  = \vert s \rangle \,,
\eeq
and therefore the projector on physical states can be given the formal definition:
\beq
P \equiv \int \extd t \exp(i t \widehat{\cS})\,.
\eeq 
The parameter $t$ is not a time variable, but an abstract parameter of gauge transformations generated by $\widehat{\cS}$. However, formally one can reproduce Feynman's original derivation of the path-integral, by cutting the integral on $t$ into infinitesimal pieces and inserting resolutions of the identity in terms of the spin network basis. This heuristic argument \cite{Reisenberger:1996pu, rovelli_book} leads to the spin foam general ansatz replacing (\ref{abstract_transition}):
\beq\label{abstract_sf}
\langle s_1 \vert s_2 \rangle_{phys} = \sum_{\cF : s_1 \to s_2} A_\cF\,,
\eeq  
where $s_1 = (\cG_1 , \{ j_{\ell_1} \} , \{ \iota_{v_1} \} )$ and $s_2 = (\cG_2 , \{ j_{\ell_2} \} , \{ \iota_{v_2} \} )$ are two spin network states and $\cF$ is a spin foam interpolating between them. In addition to $s_1$ and $s_2$, $\cF$ is labeled by a triplet $(\cC , \{ j_f \} , \{ \iota_e \} )$, where $\cC$ is a $2$-complex with boundary $\cG_1 \cup \cG_2$, $\{ j_f \}$ are $\SU(2)$ representations associated to its faces $\{ f \}$, and $\{ \iota_e \}$ are intertwiners associated to its edges $\{ e \}$. Compatibility between $\cF$ and its boundary requires a face $f$ touching a boundary link $\ell$ to be labeled by $j_f = j_\ell$, and an edge $e$ touching a boundary vertex $v$ to be associated to an intertwiner $\iota_e = \iota_v$. 
The main advantages of (\ref{abstract_sf}) over (\ref{abstract_transition}) is that boundary states have well-defined background-independent labels, and the sum-over-path is combinatorial in nature. Apart from that, the conceptual setting is identical: boundary states represent quantum spatial geometries, while the foams are one dimensional higher analogues which can be given the interpretation of quantum space-times. The main shortcoming is that no clear-cut derivation of this ansatz from canonical LQG is available, and therefore viewing SFMs as the covariant realization of the same thing might be misleading, at least in a literal sense. In any case, this general heuristic argument does not provide much clue as to how the amplitudes $\cA_\cF$ should be defined, nor as to which exact combinatorial structures should be summed over. We discard the question of the combinatorics for the moment, since it will be addressed at length in the core chapters of this thesis. At this stage we only point out that in most of the literature on spin foams, the $2$-complexes are assumed to be dual to simplicial decompositions of some topological $4$-manifold. Taking seriously the fact that the scalar constraint of LQG (as defined by Thiemann \cite{Thiemann:1996ay}) only acts at the nodes of LQG states, one can further assume that the amplitudes $\cA_\cF$ can be factorized over elementary contributions $\cA_v$, only depending on the group-theoretic labels related to the vertex $v$. It is therefore on the definition of the so-called \textit{vertex amplitude} that most of the efforts have been concentrated, and we refer to \cite{Bianchi:2012nk} for details or additional references. We would like to see this assumption as a 'locality principle', akin to the fact that $S_{GR}$ is best understood as an integral over the $4$-manifolds entering the formal definition (\ref{abstract_transition}). In this respect, it is worth-mentioning one particular derivation of spin foam dynamics, outlined in the review \cite{Bianchi:2012nk}, which is tightly related to the canonical theory: LQG provides the degrees of freedom and their quantum geometric interpretation, the dynamics of the Engle-Peirera-Rovelli-Livine (EPRL) \cite{eprl} model is then argued for on the basis of Lorentz covariance and some locality principle. This approach is conceptually similar to the original parts of this thesis, and it seems to us could benefit from the new 'locality principle' we will introduce. For more insights into the relation between canonical LQG and SFMs, we refer to \cite{Alexandrov:2011ab}.

\ 

The other approach to SFMs, more independent from canonical LQG, originates from the Ponzano-Regge model for quantum gravity, which is a spin foam model for Euclidean quantum gravity in three dimensions. The basic idea is that, because the classical theory is topological, it can be discretized before quantization without loss of information. Applying a path-integral quantization to the discretized classical theory, which takes the form of an $\SU(2)$ $BF$ theory, provides a SFM with the same structure as inferred from the heuristic reasoning based on LQG. This model has then been generalized, in several ways, to four-dimensional models with Euclidean or Lorentzian signatures (see for instance \cite{Baez:1997zt, bc, eprl, fk, Dupuis:2011fz, Dupuis:2011wy , bo_bc , bo_holst}). The classical starting point is not the Holst action itself, but rather the Holst-Plebanski one, which recasts gravity as an $\SO(1 , 3)$ $BF$ theory with additional constraints. The Holst-Plebanski action is 
\beq\label{plebanski}
S[ B , \omega , \lambda ] = \frac{1}{\kappa} \int_{\cM} \left[ \left( \star B^{IJ} + \frac{1}{\gamma} B^{IJ}  \right) \wedge F_{IJ} (\omega) + \lambda_{IJKL} B^{IJ} \wedge B^{KL} \right]\,,
\eeq 
where $\lambda^{IJKL}$ is a Lagrange multiplier symmetric under the exchange of pairs $(IJ)$ and $(KL)$, and such that $\varepsilon_{IJKL} \lambda^{IJKL} = 0$. $B$ is an $\so(1,3)$-valued $2$-form, the bivector, and the $B \wedge B$ term ensures that on shell:
\beq
B = \pm \star (e \wedge e)\,, \; \rm{or} \; B = \pm e \wedge e\,,
\eeq 
where $e$ is a triad. Solving for the equations of motion the variation of $\lambda$ provides therefore gives back the Holst action (in one sector). The main interest of the Holst-Plebanski action principle is that $BF$ theory can be rigorously quantized by spin foam techniques. This consists in two steps: 1) define a discretization of the $B$-field and the curvature on some adequate cell complex, which thanks to the topological nature of $BF$, captures all its dynamical features; 2) quantize the discrete theory through path-integral methods. In three dimensions gravity and $BF$ theory coincide, hence the initial interest in such a strategy. 
In four dimensions, the idea is to start from the exact quantization of $BF$, and use a discrete version of the Plebanski constraints to reintroduce the tetrad degrees of freedom \textit{at the quantum level}. Note that in such approaches, one works with the full $\SO(1 , 3)$ (or $\SO(4)$) group rather than $\SU(2)$, and therefore slightly generalizes the formalism. This strategy of quantizing first, and only then constraining the degrees of freedom, is at the same time one of the main appeals and the most cumbersome issue of the spin foam quantization. While this allowed to construct interesting quantum gravity models, some of them well-connected to the canonical theory, it is certainly problematic from the conceptual point of view, and to a large extent explains the variety of models one can construct: ill-defined recipes necessary introduce a handful of ambiguities. We refer again to \cite{Alexandrov:2011ab} for a review of quantization and discretization ambiguities in SFMs.  
		
		\
		
We finish this discussion of SFMs by a brief introduction to the Ponzano-Regge model. Classically, three-dimensional Euclidean gravity can be formulated through the first order action:
\beq
S_{3d} [ e , \omega ] = \int_{\cM} \Tr \left( e \wedge F(\omega) \right)\,,
\eeq  		
where the triad $e$ is an $\su(2)$-valued $1$-form (the $B$ field), $\omega$ is an $\su(2)$-connection with curvature $F(\omega)$, and $\Tr$ stands for the trace in $\su(2)$. As in the four-dimensional context, variation with respect to $\omega$ imposes the torsion-free equation, and the variation of $e$ provides Einstein's equation: $F(\omega) = 0$; hence space-time is flat. The only degrees of freedom of vacuum $3d$ Euclidean gravity are global, which is the sense in which the theory is said to be topological, and this flatness condition should therefore be preserved at the quantum level. Indeed, the triad can formally be integrated in the (ill-defined) continuous partition function  
\beq\label{pr_formal}
\cZ_{3d} = \int \cD \omega \int \cD e \exp\left( \rm{i} S_{3d} [ e , \omega ] \right) = \int \cD \omega \, \delta( F(\omega) )\,,
\eeq
suggesting that a rigorous definition should in a way measure the 'volume' of the set of flat connections on $\cM$. This expectation can be made more rigorous in the discrete: let us introduce a cellular decomposition $\Delta$ of $\cM$, and its dual $2$-complex $\Delta^{*}$. For definiteness, one can for example assume that $\Delta$ is a three-dimensional simplicial complex: elementary cells are tetrahedra, glued along their boundary triangles. $\Delta^{*}$ is a set of nodes $n$, lines $\ell$ and faces $f$ such that: inside each tetrahedron of $\Delta$ one finds a unique node, two nodes are connected by a line whenever the two dual tetrahedra share a triangle, and faces are collections of lines closing around the edges of $\Delta$. Similarly to lattice gauge theory, one then discretizes the connection by extracting its holonomy along each line of $\Delta^{*}$, noted $h_\ell \in \SU(2)$. As for the triad, it can be integrated along the edges of $\Delta$, and provides a Lie algebra element $X_e = X_f$ for each edge $e \in \Delta$ dual to the face $f \in \Delta^{*}$. This allow to discretize the action in the following way:
\beq
S_{\Delta} (X , h) = \sum_{f \in \Delta^{*}} \Tr \left( X_f H_f \right)\,,
\eeq
where
\beq
H_f = \underset{\ell \in f}{\overrightarrow{\prod}} h_\ell
\eeq		
is the oriented product of the holonomies around the face $f$. The formal partition function (\ref{pr_formal}) is made concrete by summing holonomies with the Haar measure $\extd h_\ell$ over $\SU(2)$, and triads with the Lebesgue measure $\extd X_f$ on $\su(2) \simeq \mathbb{R}^3$:
\beq\label{pr_holo}
Z_{PR} (\Delta) = \int [ \extd X_f ] \int [ \extd h_\ell ] \exp \left( {\rm{i}} \sum_f {\rm{tr}} (X_f H_f) \right) = \int [ \extd h_\ell ] \prod_{f \in \Delta^{*} } \delta \left( H_f \right)\,.
\eeq		
This formulation of the Ponzano-Regge partition function, in lattice gauge theoretic language making flatness of the geometry apparent, will be primary in the GFT context. Two other formulations are however possible. One could instead integrate the connection degrees of freedom, and express $Z_{PR}$ as an integral over Lie algebra variables \cite{Baratin:2010wi}. The subtlety is that these are non-commutative variables, but we will see how it can be done and put to good use in later chapters. The other rewriting, which we describe now, makes the connection to spin foam models explicit. One simply relies on the Peter-Weyl theorem, and expands the $\delta$-functions in representations:
\beq
\delta = \sum_{j \in \frac{\mathbb{N}}{2}} ( 2 j + 1 ) \chi_j \,,
\eeq  
where $\chi_j$ is the character of the $j^{\rm{th}}$ representation of $\SU(2)$. Each character can be decomposed in products of Wigner matrices with individual line holonomies as variables. Each variable $h_\ell$ will appear exactly three times, one for each edge of the dual triangle of $\ell$. The $h_\ell$ holonomies can therefore be integrated explicitly, yielding one $3$-valent intertwiner per line, which is nothing but a $3 j$ symbol. Finally, these can be contracted together, four by four, following the tetrahedral pattern associated to each vertex of $\Delta^{*}$. The partition function then takes the original form proposed by Ponzano and Regge in their seminal work \cite{ponzano_regge}:
\beq
Z_{PR} (\Delta) = \sum_{\{j_f\}} \prod_{f \in \Delta^* } (- 1)^{2 j_f } (2 j_f + 1 ) \prod_{v \in \Delta^* } \{ 6 j \} (v) \,,
\eeq
where the sum runs over all possible spin attributions to the faces of $\Delta^*$, and $\{ 6 j \} (v)$ denotes the evaluation of the $6j$ symbol on the six spin labels of the faces of $\Delta^*$ running through $v$. This is the first SFM ever proposed, and we see that in the simplicial setting, the vertex amplitude is essentially captured by the $6 j$ symbol.
		
		\
		
		
		
		
		
		
		
		\subsection{Summing over Spin Foams}
		
		From the point of view of the heuristic path towards spin foams from canonical LQG (formula (\ref{abstract_sf})), it is clear that when computing transition amplitudes, one should not only sum over the group-theoretic data, but also on the combinatorial structure of the foam. The first question to come to mind is then: with which measure? Since the precise form of the amplitudes is dictated by a quantization of the discretized theory, where the foam is chosen once and for all at the classical level, this is a difficult question to address. On the other hand, if we do not give too much credit to the heuristic construction à la Feynman, and focus instead on the Ponzano-Regge and related spin foam models for three-dimensional gravity, it is possible to argue for a different strategy. Indeed, an interesting property of the Ponzano-Regge model is that it is formally triangulation independent: that is, performing any possible local move which do not change the topology (Pachner moves) on the triangulation only affects the amplitude through a generically divergent overall factor. The unpleasant formal character of this invariance can even be cured at the price of trading the $\SU(2)$ group for its quantum-deformed version. Is obtained in this way the so-called Turaev-Viro model, which is usually interpreted as 3d quantum gravity with a non-vanishing positive cosmological constant \cite{turaev_viro}. Since 3d quantum gravity turns out to be equivalent in this sense to the definition of a topological invariant for $3$-manifolds, it is tempting to assume that 4d quantum gravity will be likewise equivalent to the definition of a diffeomorphism invariant for $4$-manifolds. Again, in the realm of triangulated manifolds, diffeomorphism invariance can be understood as triangulation invariance, and more precisely invariance under local Pachner moves. This key property is what is generally understood as entailing the definition of a well-behaved \textit{state-sum model}, applying tools from topological quantum field theory to quantum gravity \cite{Barrett:1995mg}. 
		
		\
		
	In this thesis we tend to favor the heuristic argument leading to the idea of SFMs over the very formal idea of a quantum invariant. The outstanding difficulties paving the way to a satisfactory diffeomorphism invariant in 4d, and the rigidity of such a strategy are discouraging to us. As explained in the introduction, we would rather favor an approach in which the notions of scale and renormalization have a role to play, thus allowing for more flexible models, where approximate rather than exact invariance matters. This route is currently explored in details by Bianca Dittrich and collaborators \cite{Bahr:2011uj, Dittrich:2011zh, Dittrich:2011vz, bianca_review , bianca_cyl}. Rather than looking for an exact state-sum model, they are instead developing coarse-graining and renormalization methods allowing to consistently improve the dynamics, and hopefully reach a diffeomorphism-invariant phase in some \textit{infinite refinement limit} of the foams. In addition to providing a constructive method towards diffeomorphism invariance, this framework aims at developing the necessary tools to perform approximate calculations. From the point of view of applications to realistic physical situations, this seems to us a good alternative to the more abstract incarnations of state-sum models. 
	
		\
		
More prosaically, most of the SFMs currently under investigation not only fail to implement topological or diffeomorphism invariance, but even fail to verify the axioms which are at the basis of the state-sum approach \cite{Baez:1997zt}. For instance, the 'projector' defining the intertwiner space of the Lorentzian EPRL model does actually not square to itself. As a result, spin foam amplitudes do not compose well: if $\cF_{12}$ is a spin foam mapping a spin network state $s_1$ to $s_2$, and $\cF_{23}$ maps $s_2$ to $s_3$, one has in general
\beq
A_{\cF_{23} \circ \cF_{12}} \neq A_{\cF_{23}} A_{\cF_{12}}\,,
\eeq
contrary to the formalism advocated in \cite{Baez:1997zt}. Even in the Ponzano-Regge model, such requirements cannot be implemented in full generalities. Because of the formal of nature equation (\ref{pr_holo}), plagued with divergences, one needs to introduce a regulator cutting-off large spin labels. Such a cut-off spoils both the triangulation invariance and the morphism properties of the Ponzano-Regge
amplitudes. It is true that in this case one could resort to the well-behaved Turaev-Viro model. However, from a more physical point of view it is not clear why the cosmological constant should be necessary to the very definition of spin foam models. These pathologies seem on the contrary to suggest that the strict notion of state-sum model, with rigid composition rules, is too narrow to accommodate interesting and physically sound proposals such as the EPRL model. 

\

Let us summarize shortcomings of SFMs such as the Ponzano-Regge model or the EPRL one. First, they do not provide any canonical measure on the space of foams to be summed over, and even this space is not clearly constrained. Second, the presence of divergences (at least when the cosmological constant vanishes) calls for regularization and renormalization procedures, which as in usual quantum field theories are expected to map different combinatorial structures to one another (via coarse-graining or renormalization steps). Again, the quantization procedure leading to specific models does not seem to provide any hint as to how this should be done. Third, and this is certainly the main drawback, the first issue together with the absence of triangulation independence precludes any complete definition of the transition amplitudes (\ref{abstract_sf}), even at the formal level. We would like to add a fourth trouble, which we will address in more details later one: the question of the \textit{continuum limit} of SFMs. If spin networks boundary states are thought of as encoding elementary excitations of the gravitational field, or equivalently atoms of space, one is naturally inclined to address the question of the dynamics of \textit{very many} such states, collectively representing macroscopic and approximately smooth geometries. Indeed, if LQG and SFMs can ever be related to continuous GR in full generality, and hence established as proper quantum theories of gravity, it seems unlikely to be at the level of spin networks with a few links and nodes. For if such states were to describe macroscopic spaces, these would be highly discrete ones (i.e. sampled with very few points), hardly comparable to the smooth structure we experience in everyday life and physics. It is true that they might be appropriate to describe specific physical situations where the number of relevant degrees of freedom are effectively small, such as for instance in cosmology, but they cannot themselves explain the emergence of classical space-time. On the contrary, handling combinatorially large and complicated spin networks can only be made possible if new approximation tools are developed. Individually, and even more altogether, these four challenges seem to point towards essentially two alternatives, both relying on renormalization. The first, inspired by lattice gauge theories, is to find a consistent way of \textit{refining} SFMs, as already explained. The second is to instead look for a way of consistently \textit{summing} spin foam amplitudes. 
		
		

		\subsection{Towards well-defined quantum field theories of Spin Networks}
		
The application of Group Field Theory to spin foam models originates from seminal work by De Pietri, Freidel, Krasnov, Reisenberger and Rovelli \cite{dPFKR, GFT_rovelli_reisenberg}. The central idea is to construct a generating functional for spin foam amplitudes, which allows to sum them with quantum field theory techniques. Let us give a concrete illustration with the first GFT ever proposed, the Boulatov model \cite{boulatov}, which generates Ponzano-Regge amplitudes. Following the general QFT procedure, we want to encode the boundary states of the model into functionals of one or several fields, and for definiteness let us say a single scalar field $\vphi$. Let us moreover aim at the lattice gauge formulation of the Ponzano-Regge amplitudes given in equation (\ref{pr_holo}). It is then natural to assume the field $\vphi$ to have support on several copies of $\SU(2)$. In its simplicial version, the boundary states of the Ponzano-Regge model are labeled by closed graphs with three-valent vertices, whose analogue in the field theory formalism are convolutions of the fields $\vphi$. It is thus necessary to work with three copies of $\SU(2)$, and hence natural to assume $\vphi \in L^2 (\SU(2)^3 )$ (with respect to the Haar measure). Incidentally, one immediately recognizes how to recover the spin network boundary states of the Ponzano-Regge model: through the harmonic expansion of $\vphi$ in terms of Fourier modes labeled by triplets of spins. The partition function of the looked for field theory will have the generic structure:
\beq\label{gft_structure}
\cZ_{GFT} = \int \extd \mu_C (\vphi) \exp( - S_{int} (\vphi))\,,
\eeq 
where $\extd \mu_C$ is a Gaussian measure defining the notion of propagation of boundary data, and $S_{int}$ is the interaction part of an action, encoding the non-Gaussian part of the dynamics. Note that, for reasons which will become clear shortly, the kinetic part of the action is encoded in the Gaussian measure $\extd \mu_C$, together with the (ill-defined) Lebesgue measure on $L^2 (\SU(2)^3 )$. Let us focus on the combinatorics first. A boundary field is associated to a node of a spin network, while its variables label the three links connected to this node. Therefore in the bulk the GFT fields must label spinfoam edges, and a field variable be associated to a couple $w = (e , f)$ where $f$ is a face incident on the edge $e$, called a wedge (see Figure \ref{tetra_wedges}). We want moreover to recover the $4$-valent interaction vertices of the Ponzano-Regge model when constructing the perturbative theory, through an expansion of the exponential term in (\ref{gft_structure}). Therefore $S_{int}$ must consist in a single term, which precisely convolutes four GFT fields according to a tetrahedral pattern. Since in any QFT we are free to encode the non-combinatorial aspects of the dynamics in the propagator rather than the interaction, we can, without loss of generality fix:
\beq\label{Sint_3d1}
S_{int} (\vphi) = \lambda \int [ \extd g_i ]^{6} \, \vphi ( g_1 , g_2 , g_3 ) \vphi ( g_3 , g_5 , g_4 ) \vphi ( g_5 , g_2 , g_6 ) \vphi ( g_4 , g_6 , g_1 )\,,
\eeq
where $\lambda$ is a new coupling constant. This ensures that, whatever the precise form of the propagator $C$, the perturbative expansion of (\ref{gft_structure}) will be labeled by arbitrary $2$-complexes verifying the condition that at each vertex meet exactly $4$ edges and $6$ faces, with a 'tetrahedral' pattern. See Figure \ref{tetra_dual1}.
\begin{figure}[h]
  \centering
  \subfloat[$\omega = 1$]{\label{int_boulatov}\includegraphics[scale=0.6]{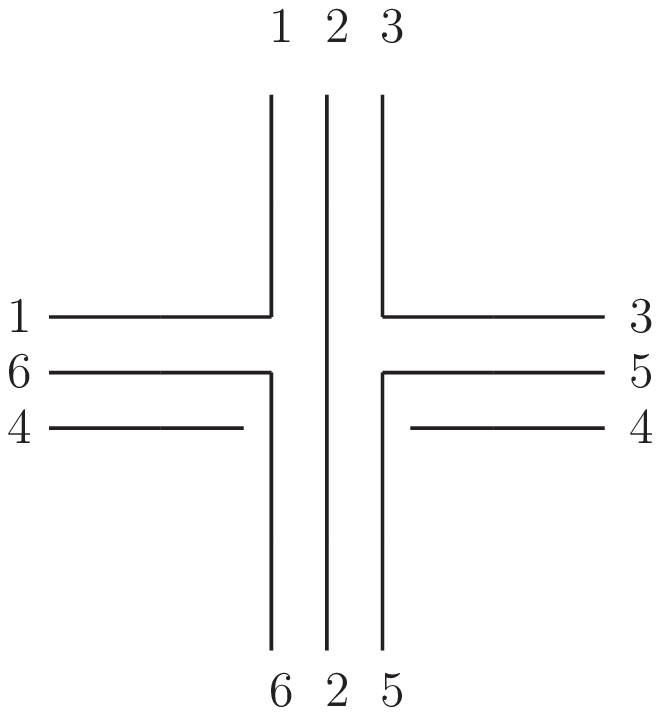}}   	    
  \subfloat[$\omega = 0$]{\label{tetra_wedges}\includegraphics[scale=0.6]{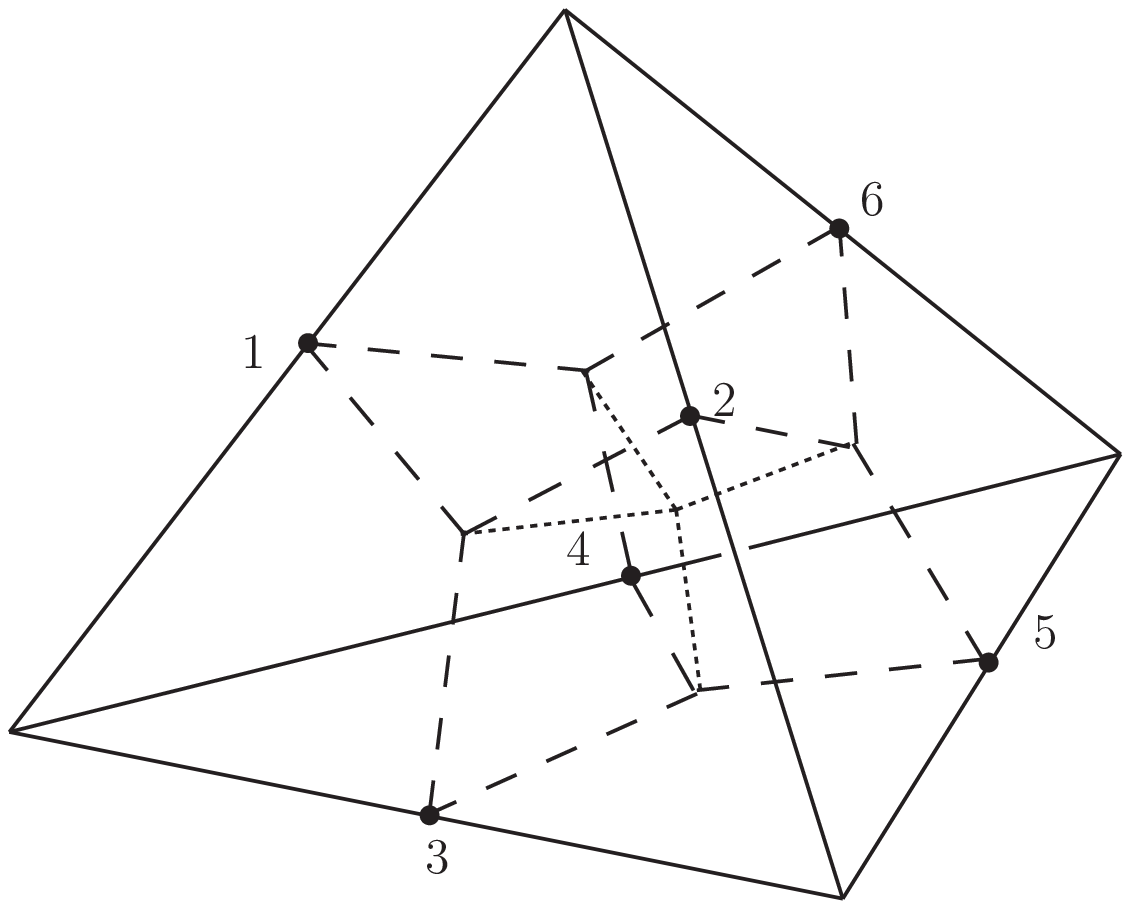}}
  \caption{(\ref{int_boulatov}) Interaction vertex of the Boulatov model: each half-line, dual to a triangle, is composed of three strands, themselves dual to the strands of the triangle; (\ref{tetra_wedges}) elementary tetrahedron, dual to the vertex, with its six wedges represented in dashed lines.}\label{tetra_dual1}
\end{figure}
We now need to find the correct covariance, so that the amplitudes contain the relevant $\delta$-functions ensuring triviality of the holonomies around any face. Recall that the covariance defines a kernel $C( g_1 , g_2 , g_3 ; g_1' , g_2' , g_3')$ (the propagator), through:
\beq
\int \extd \mu_C (\vphi) \, \vphi ( g_1 , g_2 , g_3) \vphi (g_3' , g_2' , g_1') = C( g_1 , g_2 , g_3 ; g_1' , g_2' , g_3')
\eeq
The first data which should be introduced in this kernel are the holonomies $h_\ell$ associated to the lines of the foam. They cannot be the field variables $g_i$ or $g_i'$ themselves, which as we pointed out rather label the wedges of the foam. Therefore $C$ should be expressed as an integral over a single $\SU(2)$ parameter. It should also contain three $\delta$ functions depending on $h$, such that when constructing the amplitudes and integrating out the $g$'s one recovers formula (\ref{pr_holo}). It is then not difficult to convince oneself and check that the solution is:
\beq\label{cov_dege}
C( g_1 , g_2 , g_3 ; g_1' , g_2' , g_3') = \int \extd h \, \delta( g_1 h g_1'^{\inv}) \delta( g_2 h g_2'^{\inv}) \delta( g_3 h g_3'^{\inv})\,.
\eeq
In this case, one can indeed show that:
\beq
\cZ_{GFT} = \sum_{\Delta*} \frac{(- \lambda)^{V(\Delta*)}}{s(\Delta^*)} Z_{PR} (\Delta)\,,
\eeq
where $\Delta$ runs over all the $2$-complexes (without boundaries) verifying the aforementioned combinatorial constraints, $V(\Delta^*)$ is the number of vertices in $\Delta^*$, and $s( \Delta^* )$ is a symmetry factor. It is now clear why we did not want to write the kinetic term of this field theory in an exponentiated form: because of the integration over the holonomy $h$, $C$ does not define an invertible operator, and therefore this representation is strictly speaking not available. A different but equivalent way of understanding this point is sometimes adopted (see for instance the presentation adopted in the classical references \cite{Perez:2003, rovelli_book, freidel_gft, daniele_rev2006}). Instead of integrating over the whole Hilbert space $L^2 (\SU(2)^3)$ with respect to a degenerate Gaussian measure, one can restrict beforehand the space of fields to the \textit{gauge invariant} ones:
\beq\label{gauge1}
\forall h \in \SU(2) \, , \qquad \vphi( g_1 , g_2 , g_3) = \vphi( h g_1 , h g_2 , h g_3)\,, 
\eeq
and use an invertible propagator (the trivial one with no $h$ variable). The constraint (\ref{gauge1}) is nothing but the Gauss one in the GFT setting. It implies that boundary states are really gauge invariant (abstract) spin networks. Using an unconstrained space of fields with the degenerate covariance (\ref{cov_dege}) is morally equivalent, in the sense that only the gauge invariant part of the field is propagated in this framework. The reasons why we prefer to adopt the degenerate covariance formalism are: a) it is mathematically rigorous; b) it allows to use simple notions of locality for the interactions \cite{Krajewski:2010yq}; c) it seems convenient to encode as much as the dynamics as possible in the propagator, this way combinatorial issues and the problem of how gravitational constraints should be imposed get clearly separated. In particular, this thesis explores ways of defining a locality principle, and covariances implementing the Gauss constraint, such that a well-defined theory of renormalization becomes available. 
\

The correspondence between GFTs and SFMs we have just illustrated is very general, as the example of the Boulatov model hopefully makes it clear. GFTs can from this perspective be characterized as quantum field theories for spin networks, generalizing the correspondence between relativistic QFTs and scattering states one finds at the root of particle physics. The only constraints are on the valencies of the spin networks nodes on the boundary, and the structures of the spin foam vertices in the bulk. The former are determined by the field content, while the latter depend on the choice of interactions. More precisely, if we want to have $n$-valent spin networks nodes in the boundary, we need to bring in one new GFT field with $n$ variables; and to any type of vertex in the spin foam amplitudes must be associated a certain convolution of fields in $S_{int}$. Therefore the formalism is in principle general enough to accommodate any finite numbers of nodes in the boundary, and any finite number of vertices in the bulk. Likewise, the type of boundary data and dynamics can be specified at will by changing the group and the covariance. In particular, any constraint arising from the spin foam quantization, such as simplicity constraints, can be included in the covariance. From this point of view, the Gauss constraint is a defining feature of GFTs, since it is responsible for the presence of holonomy degrees of freedom. This explains why we will concentrate on this aspect in the remainder of this thesis.

\

The appeals of the GFT formalism from the point of view of SFMs and LQG are numerous. First and foremost, the QFT formalism allows to fix canonical weights for the sum over foams. While a full justification of these weights from first principles is not available, they have the merit of being well-defined, therefore completing the definition of SFMs. They can moreover be partially justified: with an appropriate definition of the GFT coupling constant, the combinatorial weight of a complex is the order of its automorphism group, which can be argued to be a discrete, purely combinatorial analogue of the diffeomorphism group. Second, the divergences of SFMs are now understood as divergences of the amplitudes of a field theory. Such features have to be expected in any QFT, and tools to control them are well understood both conceptually and mathematically, thanks to renormalization theory. Through the embedding of SFMs into GFTs, the question of the value of the cosmological constant and that of the occurrence of divergences become clearly separated, as it seems to us they should be. A third feature which can be seen as an advantage too, is that in GFT the topology of space-time is itself dynamical. For instance in the version of the Boulatov model introduced above, the foams contributing to the partition function are dual to arbitrary gluings of tetrahedra, which include in particular all types of triangulated topological manifolds, but also more pathological structures.  Therefore GFT can potentially explain the local topology of macroscopic space-time, in addition to its metric properties. This feature is sometimes called third quantization, and we refer to \cite{ds_3} for a more detailed exposition in the context of GFTs. Finally, as quantum field theories on local symmetry groups rather than space-time, GFTs make possible to incorporate all the tools which are so crucial to quantum field theories in a background independent context. Especially, perturbative methods in the QFT sense are available, without having to resort to perturbation in the space-time sense. This is very different from the usual perturbative approach to quantum gravity, which is so to speak doubly perturbative: perturbative QFT methods are used to analyze quantum perturbations over a background metric. This aspect of GFTs has already been taken advantage of, in the limited context of 'graviton propagator' calculations, where the existence and physical relevance of the perturbative expansion in the coupling constants of GFTs is assumed from the start \cite{lqg_propa1, lqg_propa2, lqg_propa3}. 
\

In order to determine whether this list of merits is truly realized in GFTs, a lot of work is needed, both at the conceptual and mathematical level. The key result to achieve in this respect is to establish GFTs as well-defined perturbative quantum field theories. To this effect, one needs to generalize renormalization theory to this new context. It is only equipped with such a new toolbox that we will be able to determine whether a specific GFT model can be taken seriously as a field theory or not, and in which sense. It seems that the key physical questions are to be settled down only then. The most pressing one is to determine in which sector of a given theory (if any) the classical effective dynamics of GR lies. In particular, can we relate the first few orders of the perturbative expansion to continuum macroscopic physics? Or, on the contrary, does it emerge from the interaction of very many degrees of freedom, and can as a result only be captured by non-perturbative techniques, or by perturbation in a different phase of the theory? The somehow mysterious interpretation of the coupling constants of a GFT from the gravity perspective is certainly related. Whatever the answers to these questions, for which in our view renormalization methods applied to phase transitions could play a discriminant role, it must be admitted that if GFTs are used to complete the definition of SFMs, the problem of their renormalizability have to be faced head on and resolved.  
\

This concludes our summary of motivations for studying GFTs, from the point of view of loop quantum gravity and spin foam models. As is certainly clear to the reader, several points of the reasoning are semi-heuristic and formal. Moreover, several key combinatorial aspects of GFTs seem poorly motivated by the quantization procedures, and it seems to us essentially independent of them. It is therefore of paramount importance to take some distance with the quantization, and present the more combinatorial motivations behind GFTs, which we do in the next section. 		
		
		
		
		
	\section{Group Field Theories and random discrete geometries}
	
		\subsection{Matrix models and random surfaces}
	
Since GFTs are in a sense higher dimensional incarnations of matrix models, we start with a brief presentation of basic aspects of the latter \cite{david1985planar , Ginsparg:1991bi , DiFrancesco:1993nw , Ginsparg:1993is }. Matrix models are statistical models for matrix-like degrees of freedom, in the sense that 'locality' is based on a matrix rather than point-wise product. For instance, let $M$ be an hermitian matrix of size $N \times N$. We can construct an action $S(M)$ for $M$ by requiring it to be invariant under conjugation of $M$. This plays the role of 'locality' principle in the same way as Lorentz and gauge invariance are at the root of local quantum field theories, i.e. by providing a set of allowed interactions. It is then possible to show that $S(M)$ has to be a sum of products of invariants of the form: $\tr M^k$, with $k \in \mathbb{N}^*$. If we further restrict to \textit{connected} invariants, then $S(M)$ is simply a sum of such terms. The simplest non-trivial connected invariant action retains the first two terms only:
\beq
S(M) = \frac{1}{2} \tr M^2 - \lambda \tr M^3\,,
\eeq  
where $\lambda$ is a coupling constant. The partition function of the matrix model is then defined by:
\beq\label{matrix_3}
\exp{\cZ} = \int \extd M \, \exp \left(- S(M) \right) = \int \extd M \, \exp \left( - \frac{1}{2} \tr M^2 + \lambda \tr M^3\right)\,,
\eeq
where $\extd M$ is  the invariant measure on the set of $N \times N$ hermitian matrices. This theory can be expanded in perturbations in $\lambda$ and shown to generate Feynman amplitudes labeled by \textit{ribbon graphs}. The propagator, deduced from the quadratic Gaussian term in $S(M)$, can be pictured as a line with two strands, each strand carrying one index of the matrix $M$. More precisely, the covariance of this theory identifies indices as follows:
\beq
C_{ij;kl} = \int [\extd M] M_{ij} M_{kl} \exp \left( - \frac{1}{2} \tr M^2 \right) = \delta_{j , k} \delta_{l , i} \,.
\eeq
The interaction part of the action introduces a single $3$-valent vertex, which identifies $6$ strands pairwise. The propagator and interaction vertex are represented in Figure \ref{int_propa_matrix}. 
\begin{figure}[h]
\begin{center}
\includegraphics[scale=0.7]{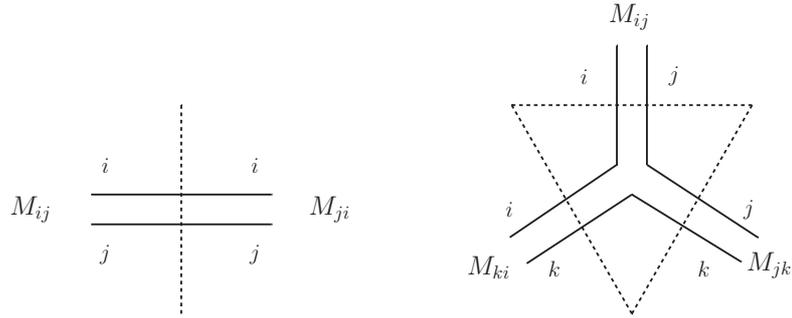}
\caption{Propagator and vertex interaction of the matrix model (\ref{matrix_3}). In dashed lines: the dual edge of the propagator and the dual triangle of the vertex.}
\label{int_propa_matrix}
\end{center}
\end{figure}
The free energy $\cZ$ is then indexed by closed and connected ribbon graphs $\cG$, an example of an open ribbon graph being given in Figure \ref{ribbon}. 
\begin{figure}[h]
\begin{center}
\includegraphics[scale=0.7]{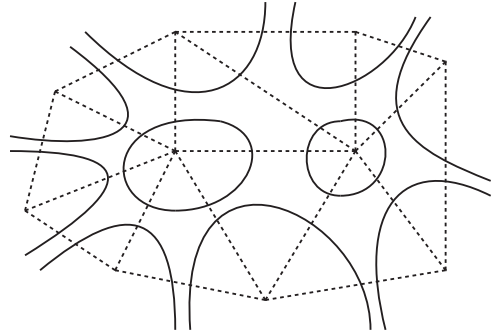}
\caption{An open ribbon graph, with two closed faces, and its dual triangulation.}
\label{ribbon}
\end{center}
\end{figure}
The ribbon structure of the graphs allows to define the notion of face: a face of $\cG$ is a set of strands forming a loop. We note $n(\cG)$ the number of ($3$-valent) vertices of $\cG$, and $F(\cG)$ its number of faces. It is then not difficult to see that: 
\bes\label{partition_matrix}
\cZ &=& \sum_{\cG \; {\rm{connected}}} \frac{1}{s(\cG)} \lambda^{n(\cG)} \cA_\cG \,, \\
\cA_\cG &=& N^{F(\cG)}\,,
\ees   
where $s(\cG)$ is a symmetry factor. The fact that the amplitude $\cA_\cG$ is weighted by the number of faces is easy to understand: since in $\cG$, the indices of the strands are identified by $\delta$-functions across propagators and vertices, one can trivially sum all of them but one per face; we are therefore left with one free index per face, which sums to $N$. The main interest of the occurrence of such ribbon graphs in the perturbative expansion of the matrix models, is that their duals are triangulated surfaces. To see this, it suffices to associate a transverse line to each propagator, and a triangle to each vertex, as represented in Figures \ref{int_propa_matrix} and \ref{ribbon}. In this dual picture, the role of the propagator is to identify pairwise the edges of the $n(\cG)$ triangles generated by $\cG$, yielding a closed triangulated surface. Note that, very importantly, the vertices of the triangulation are dual to the faces of $\cG$, so that a complete and unambiguous correspondence is established between a ribbon graph and its dual triangulation. This interpretation of matrix models as statistical models for discretized surfaces is very general. The statistical ensemble one is spanning depends on which matrix ensemble and which interactions are being used. For instance, in the model we just presented, the hermiticity of $M$ restricts the sum to orientable triangulations, while both orientable and non-orientable triangulations would be included had we used symmetric matrices instead. Also, one could work with quadrangulations upon replacing the third order interaction by a fourth order one, or include arbitrarily wild kinds of discretizations thanks to all the other matrix invariants.  

\

The first technical aspect which makes the link between matrix models and random triangulations so interesting is the existence of a $1/N$ expansion, bringing powerful analytical control over the formal sum (\ref{partition_matrix}). One can use the arbitrary size of the matrix $N$ to unravel universal properties of the statistical model, in the limit of infinitely large matrices. In practice, one looks for a rescaling of the coupling constant
\beq
\lambda \mapsto \frac{\lambda}{N^{\alpha}}\,,
\eeq 
such that, when $N \to + \infty$: a) the most divergent configurations have a uniform divergence degree at each order in $\lambda$; b) these configurations are infinitely many. In such a situation, $N$ can be used as a new perturbative parameter, in such a way that the leading order term in $N$ captures infinitely many orders in $\lambda$. This is in this sense that a $1/N$ expansion captures non-perturbative effects. In our context, large orders in the coupling constants correspond to discrete surfaces with many building blocks, hence the $1/N$ expansion is particularly relevant to the question of the continuum limit of such discrete models. Let us consider again the model (\ref{matrix_3}). The unique value of $\alpha$ verifying the two previous conditions is $1/2$, and in this case the amplitude of a graph $\cG$ becomes:
\beq
\cA_\cG = N^{F(\cG) - \frac{1}{2} n(\cG)}\,.  
\eeq
Let us call $V(\Delta_\cG)$, $E(\Delta_\cG)$ and $T(\Delta_\cG)$ the numbers of vertices, edges, and triangles in the triangulation $\Delta_\cG$ dual to $\cG$. Each edge of $\Delta_\cG$ being shared by exactly two triangles, we have $2 E(\Delta_\cG) = 3 T(\Delta_\cG)$ and therefore:
\beq
F(\cG) - \frac{1}{2} n(\cG) = V(\Delta_\cG) - \frac{1}{2} T(\Delta_\cG) = V(\Delta_\cG) - E(\Delta_\cG) + T(\Delta_\cG) = \chi(\Delta_\cG) = 2 - 2 g(\Delta_\cG)\,, 
\eeq
where $\chi(\Delta_\cG)$ is the Euler characteristic of $\Delta_\cG$, and $g(\Delta_\cG)$ its genus. Recall that closed two-dimensional topological manifolds are fully characterized by two invariants, the Euler characteristic and the orientability, so that in the orientable case we are considering the genus fully determines the topology of the triangulation. Therefore the $1/N$ expansion is actually a topological expansion:
\bes
\cZ &=& \sum_{\cG \; {\rm{connected}}} \frac{1}{s(\cG)} \lambda^{n(\cG)} N^{2 - 2 g( \Delta_\cG )} \\
&=& \sum_{g \in \mathbb{N}^* } N^{2 - 2 g} \cZ_g (\lambda)\,,
\ees
where 
\beq
\cZ_g (\lambda) \equiv \sum_{\cG \; {\rm{connected}}\vert g(\Delta_\cG) = g} \frac{1}{s(\cG)} \lambda^{n(\cG)}
\eeq
sums all the triangulations with genus $g$. As a result, in the limit of infinitely large matrices, spherical topologies ($g=0$) dominate; and the more holes a non-spherical surface has, the more it is suppressed.
  		
Assuming that one can first take the large $N$ limit to restrict oneself to the spherical sector, one can then look for a continuum limit in this sector. Because $\cZ_0 (\lambda)$ contains all the spherical triangulations, with arbitrary number of triangles, it is meaningful to ask whether it is dominated by large triangulations in some regime of $\lambda$. Such large order terms will start dominating the behavior of $\cZ_0$ close to its convergence radius. It can be shown that $\cZ_0$ has a non-zero and finite convergence radius, providing a critical value $\lambda_c$ for the coupling constant, and such that:
\beq
\cZ_0 (\lambda) \underset{\lambda \to \lambda_c}{\sim} \vert \lambda - \lambda_c \vert^{2 - \gamma}\,.
\eeq
$\gamma$ is the critical exponent, and is equal to $-1/2$ in this model, so that the free energy associated to the spherical sector diverges close to the critical point. At least at a formal level, this supports the idea that a continuum phase of the theory is reached by tuning $\lambda$ to its critical value. We can illustrate this in the following way. Suppose that each triangle of the model is assumed equilateral, and attributed some elementary area $a$. Then the mean area of a triangulation in the statistical ensemble is given by 
\beq
\langle A \rangle = a \langle n(\cG) \rangle = a \frac{\extd}{\extd \lambda} \ln( \cZ_0 (\lambda) ) \underset{\lambda \to \lambda_c}{\sim} \frac{a}{\vert \lambda - \lambda_c \vert }\,.
\eeq 
So that if we send the fiducial area to $0$ when approaching criticality, in such a way that the mean value of the area is kept fixed, one obtains a statistical model for infinitely refined spherical triangulations with a given area. This has certainly the flavor of the continuum! 
	
	One can go further and construct statistical ensembles containing all possible (oriented) manifolds. This is achieved in the so-called \textit{double scaling} limit, which consists in taking the large $N$ and critical limits in a correlated manner. It is possible at all thanks to the fact that $\cZ_g$ can be shown to have the same critical point as $\cZ_0$, for any $g \in \mathbb{N}$. More precisely, one has 
\beq
\cZ_g (\lambda) \underset{\lambda \to \lambda_c}{\sim} \vert \lambda - \lambda_c \vert^{\frac{(2 - \gamma)(2 - 2 g)}{2}}\,,
\eeq
which suggests to take the joint limit $N \to + \infty$ and $\lambda \to \lambda_c$, keeping the ratio
\beq
\kappa^{-1} \equiv N \vert \lambda - \lambda_c \vert^{\frac{2 - \gamma}{2}}
\eeq
fixed. In such a limit, the damping effect due to the large $N$ limit and the enhancement of higher genera close to the critical point compensate, in such a way that all topologies contribute to the free energy:
\beq
\cZ \sim \sum_{g  \in \mathbb{N}} \kappa^{2 g - 2 } f_g \,,
\eeq
where $f_g$ are some computable constants. 

The way these results are usually proven is through a rewriting of the free energy in terms of $\U(N)$ invariant quantities, namely the eigenvalues of $M$. Such a tool is not (yet?) available in higher dimensions, so we do not want to provide any details along this line here. The interested reader may refer to \cite{david1985planar , Ginsparg:1991bi , DiFrancesco:1993nw , Ginsparg:1993is }. It should also be noted that these results are only a first step, and that the precise correspondence between the discrete theory and its continuum limit is studied by different means, for instance thanks to Schwinger-Dyson equations for the so-called \textit{loop observables}, which in the large $N$ limit reproduce the Witt algebra \cite{Ginsparg:1991bi , DiFrancesco:1993nw , Ginsparg:1993is }, and can be related to the Wheeler-DeWitt equation for 2d quantum gravity \cite{staudacher_moore_seiberg}. Finally, mathematically rigorous characterizations of the continuum limit of matrix models have been obtained thanks to random planar maps, see \cite{legall} and references therein.

\

As far as quantum gravity is concerned, what makes matrix models interesting is their connection to quantum gravity in two dimensions. We only recall the correspondence at the discrete level, which can be shown to extend to the continuum phase thanks to the Schwinger-Dyson equations. The action for gravity in two dimensions with cosmological constants is 
\beq\label{gr_2d}
S_{2d} = \frac{1}{G} \int_{S} \extd x^2 \, \sqrt{- g} \left( - R(g) + \Lambda \right) = - \frac{4 \pi}{G} \chi(S) + \frac{\Lambda}{G} A_S \,, 
\eeq
where the second equality is a consequence of the Gauss-Bonnet theorem. The Einstein term being topological in two dimensions, only the Euler characteristic and the area $A_S$ can be dynamical in the vacuum. Intuitively, this suggests that a fine enough discretization of the theory is enough to encode the dynamics of these two global degrees of freedom. We can for instance introduce an equilateral triangulation $\Delta$ of $S$, each individual triangle having a same area $a$. The Einstein term of (\ref{gr_2d}) can then be replaced by $\chi(\Delta)$, and the area approximated by $a T(\Delta)$. This provides an action principle for a discrete theory, from which we can deduce a path-integral quantization. Remarkably, the resulting partition function is:
\beq
\cZ_{2d} = \sum_{\Delta} \exp\left( \frac{4 \pi}{G} \chi(\Delta) - \frac{\Lambda a}{G} T(\Delta)\right)\,,
\eeq 
and can therefore be matched to the free energy (\ref{partition_matrix}) of the matrix model (with the large $N$ rescaling). 
The sum over $\Delta$ can include arbitrary topologies, so that one really obtains a third quantization of $2d$ gravity. The correspondence imposes the following identifications:
\beq
\lambda \leftrightarrow \exp\left( - \frac{\Lambda a}{G} \right) \;, \qquad N \leftrightarrow \exp\left( \frac{4 \pi}{G} \right)\,.  
\eeq
Therefore the large $N$ limit of the matrix model corresponds to the weak coupling regime of $2d$ gravity. Thanks to the double scaling, one can moreover extend the sum over topologies to the continuum phase.

\

Let us summarize the situation. Matrix models are at first sight algebraic models, with no particular relation to gravity or geometry in general. However the relation to discrete geometry is readily seen at the perturbative level, since Feynman amplitudes are labeled by discretized surfaces. Still at the discrete level, the amplitudes can be related to discrete gravity path integrals, the two coupling constants of gravity being encoded in the coupling constant of the matrix model and the size parameter $N$. Finally, the matrix model formulation allows to reach a continuum phase when sending $N$ to infinity, dominated by spherical topologies, and which can also be extended to all types of manifolds. In a sense, what matrix models achieve is a definition of a measure on continuous geometries thanks to discrete methods, which reminds the relationship between Riemann sums and the Riemann integral. 

		
		
		
		
				
		
		\subsection{Higher dimensional generalizations}
		
The success of matrix models motivated extensions to higher dimensions already in the nineties, called \textit{tensor models} \cite{Ambjorn_tensors, gross, Sasakura:1990fs}. Viewing a matrix as a $2$-tensor, it is natural to introduce $d$-tensors in $d$ dimensions. The action $S(T)$ of a tensor is inspired by its matrix counter-part: the kinetic term convolutes indices of two tensors pairwise, and is interpreted as the identification of two $(d-1)$-simplices; and the interaction is built in such a way as to represent elementary $d$-cells. It is also natural to work with a single interaction, dual to a $d$-simplex, as it is the interaction with the smallest number of fields which can be given a $d$-dimensional interpretation, and also any discretization of a manifold can be subdivided in such a way that all cells are simplices. For example, in three dimensions, one can define:
\beq\label{tensor_3d}
S(T) = \frac{1}{2} \sum_{i_1 , i_2 , i_3} T_{i_1 i_2 i_3} T_{i_3 i_2 i_1} - \lambda \sum_{i_1 , \ldots , i_6} T_{i_1 i_2 i_3} T_{i_3 i_5 i_4} T_{i_5 i_2 i_6} T_{i_4 i_6 i_1}\,,
\eeq
where all the indices run from $1$ to $N$. As we will see later on, the type of tensors one is working with has important implications. This part of the generalization from matrix models is also not obvious, and the strategy which was adopted in the early nineties was to symmetrize the tensor indices. Since one does not want to give them any physical meaning, it seems at first sight a reasonable assumption, but it is not the only way to fulfill the requirement, and is responsible for key difficulties of this approach. One major challenge is to control the overwhelmingly complicated sum over triangulations generated in perturbative expansion. Contrary to matrix models, not all of them are discretization of topological manifolds: quite differently, mild singular contributions such as pseudo-manifolds are included, but also highly degenerate triangulations with extended singularities (see for instance \cite{Gurau:2010nd}, in the context of GFTs). This comes from the fact that in higher than two dimensions, prescribing simple local gluing rules for $d$-simplices along their $(d-1)$-subsimplices is not restrictive enough to eliminate these pathological structures. In particular, and as we will explore in more details in the next section, the data encoded in these simple combinatorial models is not rich enough to capture the structure of the simplices of dimension strictly less than $(d-2)$. For instance, in the model defined by the action (\ref{tensor_3d}), no data is associated to the vertices of the triangulation, therefore the topological structure around the vertices of the triangulations is essentially arbitrary \cite{Gurau:2010nd}. 

With hindsight, this lack of combinatorial structure is what prevented all the achievements of matrix models from being reproduced in higher dimensions. Without the necessary analytical control over the perturbative series, no $1/N$ expansion could be formulated for these early versions of tensor models, and therefore none of the other appeals of matrix models could be investigated either. The only part of the story which remained true was the interpretation of the amplitudes as discrete gravity path integrals, at least for triangulations of manifolds. Interpreting the elementary $d$-simplices as equilateral, a discrete metric can be assigned to each configuration, in the same way as in two dimensions. The amplitudes themselves can then be matched to the exponential of a discrete version of the Einstein-Hilbert action. Volume terms coming from the cosmological constant are again related to the coupling constant $\lambda$, and the curvature is captured by deficit angles around $(d-2)$-simplices. Such considerations gave birth to the Dynamical Triangulations program, and later on their Causal versions, where ensembles of discrete space-times are generated and summed over numerically rather than by analytical means. We refer to the lecture notes \cite{Ambjorn:2012jv} for details and references.   
		
		This situation changed dramatically thanks to the pioneering work of Gurau, which upon slightly restricting the combinatorial structures of tensor models, could define a $1/N$ expansion. A wealth of results could be gathered after this breakthrough, giving new support in favor of analytical studies of random triangulations.  
		
		
		
		
		\subsection{Bringing discrete geometry in}
		
		We can now explain how GFT comes about in this discrete approach to quantum gravity. Tensor models and dynamical triangulations are the minimalistic backbone of discrete gravity path integrals, in the sense that metric degrees of freedom are encoded in a purely combinatorial way. While this was fine enough in the case of two dimensional gravity, where only topology matters, the strategy can be questioned when it comes to three or four dimensions. In general relativity enter several important background structures. A topological manifold of the appropriate dimension is one such structure, and is arguably exhaustive in two dimensions. However, this topological manifold has to be endowed with a differential structure, which is not uniquely specified by the topology in four dimensions. Most importantly, local Lorentz invariance can also be argued to be a primary ingredient of any quantum theory of gravity. It is experimentally tested with an overwhelming precision, and seems to be rather hard to make emergent from a more fundamental theory. In this respect, getting the continuum of GR out of a fundamentally discrete model seems ambitious enough, so that it seems reasonable to save ourselves the burden of explaining Lorentz invariance as well. 
		
		In GFTs, a notion of local Lorentz invariance (or Euclidean invariance in Euclidean models) is assumed from scratch. To this effect, the purely combinatorial indices of tensor models become instead elements of some subgroup of the Lorentz or rotation group. For example in a 3d Euclidean context, one goes from the interaction part of (\ref{tensor_3d}) to (\ref{Sint_3d1}) by turning the indices into $\SU(2)$ group elements. It is in this combinatorial sense that a GFT field can be considered a tensor. However, that is not all: after this new type of data has been introduce, one needs to provide them with a discrete geometric meaning. We simply follow the matrix/tensor models reasoning: if the GFT field $\vphi$ is assumed to represent an elementary building block of geometry, then the geometric data should refer to this building block. Let us again use the Boulatov model as an example, in which case $\vphi( g_1 , g_2 , g_3 )$ is to be interpreted as a flat triangle, and the variables $g_i$ label its edges. It is the role of the constraint (\ref{gauge1}) to introduce an $\SU(2)$ flat discrete connection at the level of the amplitudes, encoded in the elementary line holonomies $h_\ell$. The natural interpretation of the variable $g_i$ is as the holonomy from a reference point inside the triangle, to the center of the edge $i$. Thanks to the flatness assumption, this holonomy is independent of the path one chooses to compute $g_i$. The meaning of the constraint (\ref{gauge1}) is also clear: it simply encodes the freedom in the choice of reference point. From the discrete geometric perspective, the Boulatov model can therefore naturally be called a \textit{second quantization of a flat triangle}: the GFT field $\vphi$ is the wave-function of a quantized flat triangle, and the path integral provides an interacting theory for such quantum geometric degrees of freedom. This point of view was already present in the early stages of SFMs \cite{Barbieri1997,BaezBarrett_tetra}, and has guided the development of this research field ever since. It was more recently advocated in \cite{Baratin:2010wi , daniele_rev2011}, providing a new look at the construction of four-dimensional models \cite{bo_bc , bo_holst}.
		
		\
		
		The natural question which can arise at this point is: which type of geometric data one should introduce? Or more specifically, why should we work with holonomy variables rather than simply edge vectors, as is done for example in Regge calculus \cite{lr_loll}? As it turns out, there is a general correspondence between these two alternatives, which gives us the opportunity to introduce the Lie algebra formalism for GFTs, initially introduced in \cite{Baratin:2010wi}. 
		To avoid unnecessary complications, we illustrate this dual representation on the Boulatov model restricted to the $\SO(3)$ group. The technical tool allowing to use (non-commuting) Lie algebra variables $x_i \in \su(2) \sim \mathbb{R}^3$, is the group Fourier transform \cite{PR3,majidfreidel,karim}. In our case, it maps $L^{2} (\SO(3)^3)$ to a space $L^{2}_{\star} (\so(3)^3)$, endowed with a non-commutative $\star$-product. The Fourier transform of $\vphi \in L^{2} (\SO(3)^3)$ is defined as: 
\beq \label{fourier}
\vphihat (x_1, x_2, x_3) :=\int  [\extd g_i]^3\, \vphi (g_1, g_2, g_3)  \,\e_{g_1}(x_1) \e_{g_2}(x_2) \e_{g_3}(x_3) ,
\eeq
where $\e_g \maps \su(2) \!\sim\! \mathbb{R}^3 \to \U(1)$ are non-commutative plane-waves, and functions on $\SO(3)$ are now identified with functions on $\SU(2)$ invariant under $g \to - g$. The definition of the plane-waves involves a choice of coordinates on the group. Following \cite{Baratin:2010wi}, we adopt: 
\beq
\forall g \in \SU(2) \,, \qquad \e_g \maps \, x \mapsto e^{\rm{i} \Tr(x |g|)}
\eeq 
where for $g \in \SU(2)$ we denote $|g| \equiv \rm{sign}(\Tr \,g) g$, and $\Tr$ is the trace in the fundamental representation of $\SU(2)$. Note that other choices are possible, and some may be more convenient than others \cite{cdr_duflo}. The Lie algebra variables can be given a simple metric interpretation, as vectors associated to the edges of the triangles \cite{Baratin:2010wi}. One can therefore start from the Lie algebra representation, and provide an independent construction of the theory, following a similar procedure as in group space. The same combinatorial structure of the action can be assumed, entailing the same simplicial interpretation, except that the pointwise product for functions on $\SU(2)$ is replaced by the non-commutative and non-local $\star$-product. The latter is induced by the group structure of $\SU(2)$, as dual to the convolution product for functions on the group. Defined first on plane-waves: 
\beq  \label{star}
(\e_{g} \star \e_{g'})(x) \!:=\! \e_{gg'}(x)\,,
\eeq
it is then extended to the image of the non-commutative Fourier transform, i.e. $L^{2}_{\star} (\so(3)^3)$, by linearity. This formalism becomes particularly interesting when it comes to the geometric constraints. Indeed, if $x_1, x_2, x_3$ have to be interpreted as the edge vectors of a flat triangle, they should close, as for i
\beq
x_1 + x_2 + x_3 = 0 \,.
\eeq
This condition needs to be imposed at the operator level on the GFT field $\vphihat$. A possible version of this projector is constructed out of a non-commutative notion of $\delta$-function. Defining
\beq
\delta_{x}(y) \equiv \int \extd h \, e_{g^{-1}}(x) e_{g}(y) \,,
\eeq
it is easy to verify that
\beq
\int \extd y \, (\delta_{x} \star f)(y) = \int \extd y \, (f \star \delta_{x})(y) = f(x)\,,
\eeq 
and therefore $\delta_{x}$ plays the role of Dirac distribution at point $x$ in $L^2_\star (\so(3))$. We can therefore impose the following closure constraint on the GFT field:
\beq\label{gauge_metric}
\vphihat = \widehat{C} \star \vphihat \, ,
\eeq
with:
\beq
\widehat{C}(x_1, x_2, x_3) \equiv \delta_{0}(x_1 + x_2 + x_3)\,.
\eeq
It is then easy to check that transforming back (\ref{gauge_metric}) to $L^{2} (\SO(3)^3)$ gives back the gauge invariance condition (\ref{gauge1}). This confirms that the Boulatov model can be understood as a second quantization of a flat triangle, and we refer to \cite{Baratin:2010wi} for more details.

In four dimensions, the same correspondence between group and Lie algebra representation has been put to profit \cite{bo_bc, bo_holst}. There, the GFT field represents a quantum tetrahedron, and Lie algebra elements correspond to bivectors associated to its boundary triangles. In addition to the closure constraint (again equivalent to the Gauss constraint in group space), additional geometricity conditions have to be imposed to guarantee that the bivectors are built from edge vectors of a geometric tetrahedron. These additional constraints are nothing by the simplicity constraints, and non-commutative $\delta$-functions can again be used to implement them.

		
		
		
		
	
	\section{A research direction}
	
	Let us summarize the two possible takes on GFTs we have been presenting so far. We first focused on quantization procedures, either in canonical or covariant form, applied to Einstein's field equations. The key steps entering this line of thoughts were: a) adopt connection and flux rather than metric variables at the classical level; b) thanks to canonical LQG techniques, show that spin network functionals are good kinematical states for quantum GR; c) resort to a semi-heuristic covariant formulation, spin foams, to identify the dynamics of these states; d) introduce GFTs as generating functionals for spin foam amplitudes. Already from this point of view, we could see non-trivial combinatorial assumptions entering GFT models, as well as unorthodox quantization rules motivating spin foam models for four-dimensional gravity. Actually, discrete geometric considerations rather than strict quantization procedures are arguably at the root of most of the modern spin foam models. As a result, the combinatorics should play an important role, but seems essentially unconstrained by these procedures: one usually works with simplicial complexes because it seems like a natural starting point. However, there is no strong case for it, even less from the GFT point of view: as in any quantum field theory, it would be preferable to have a large set of interactions at our disposal, determined by a symmetry principle. The second set of works we focused on, matrix and tensor models, put on the contrary most of their emphasis on the combinatorics. GFTs in this perspective are simply enriched tensor models, where tensor indices carry geometrical information. The conceptual gap between 2d and 4d gravity supports the idea that such additional data might be needed in 4d, contrary to 2d where purely combinatorial models are able to capture the very limited metric aspects of gravity. 
	
\

As for example put forward in \cite{daniele_rev2011}, we would like to use these two sets of incomplete motivations for GFTs as	a way to find new research directions. If we allow ourself to make a simple synthesis between the two, we could say that GFTs are quantum field theories of discrete geometries, in which boundary states are LQG like, namely a subset of spin network functionals (for the appropriate group). The key questions to ask can then be split into two classes. The first concerns model building, in the sense of finding the appropriate notion of quantum discrete geometry, both in the boundary and in the bulk. The second concerns more generic features of the formalism, such as the combinatorics, symmetry principles, regularization and renormalization. It should be noted that these two sets of questions are rather independent and should therefore better be explored in parallel. In this thesis, we are only concerned with the generic aspects of GFTs, and therefore we will mostly work in situations where discrete geometric aspects are either irrelevant, or well-controlled and unambiguous. More precisely, we would first like to understand to which extent GFTs can be defined rigorously as perturbative quantum field theories. And second, one would like to develop tools to explore the large triangulations/foams regime of GFTs. Whatever the position one adopts as regards discreteness in GFTs and SFMs, such regimes exist and therefore deserve to be studied. This is especially true in the current situation, in which the connection between the most refined models and continuum GR remain rather elusive. Whether this connection is to show up at the perturbative level already seems rather unlikely in the discrete gravity interpretation of these models, and if any such connection exists one would rather expect an emergence scenario \cite{daniele_hydro}. In such a mindset, the large triangulations regime is the physically relevant one. These two open problems will be addressed by two complementary means: large $N$ methods similar to the $1/N$ expansion of matrix models; and renormalizability studies of enriched GFT models, called Tensorial Group Field Theories (TGFTs). The latter will be established as the first well-behaved perturbative quantum field theories related to spin foam models, which is in our opinion the strongest result of this thesis. Key to these two series of results are new combinatorial tools which recently revived tensor models. In the next section, we therefore give some motivations for introducing them, and a detailed account of the combinatorial backbone provided by these new improved tensor models.  

%
%
%
%
		
%
%

\chapter{Colors and tensor invariance}\label{color_tensor}

\hfill\begin{minipage}{10cm}
{\footnotesize {\it
In any case, one does not have the right today to maintain that the foundation must consist in a \emph{field theory} in the sense of Maxwell. The other possibility, however, leads in my opinion to a 
renunciation of the time-space continuum and to a purely algebraic physics.
}}
\vspace{0.2cm}
\newline {\footnotesize {\bf Albert Einstein}, letter to Paul Langevin, October 1935.\footnote{translated and cited
by John Stachel in \cite{stachel1986einstein}.}}
\end{minipage}

\vspace{0.4cm}


Colors were first introduced by Gurau as a way to restrict the class of complexes generated by GFT models \cite{Gurau:2009tw}. In addition to giving a handle on their topological properties 
\cite{Gurau:2010nd , jimmy}, it was soon shown that colored models support a $1/N$ expansion \cite{RazvanN, RazvanVincentN, Gurau:2011xq}. Aside from these purely combinatorial and topological 
considerations, a careful study of the shift symmetry of BF theory and its imprint on the Boulatov model \cite{diffeos} gave more support to the physical relevance of the colored models. We will 
briefly recall these two sets of initial motivations in the first section of this chapter.

The discovery of the $1/N$ expansion then triggered a lot of activity, mainly focused on colored tensor models \cite{Gurau:2011xp, critical, dr_dually, v_revisiting,v_new, jr_branched, wjd_double}. 
These can be seen as a revival of the purely combinatorial approach to quantum gravity, but also as a necessary step towards the understanding of more involved GFTs. Interestingly, just like matrix 
models, colored tensor models have already found applications outside quantum gravity \cite{vrv_ising, v_dimers, vh_dimers, pspin, vf_packed}. This is because they provide universal probabilistic 
tools \cite{universality}, and as such can be expected to be relevant to a wide range of statistical physics problems. We introduce the main properties of colored tensor models in the second section 
of this chapter.

Finally, a key evolution we would like to comment on in the third section is the introduction of tensor invariance \cite{r_vir, universality, uncoloring}. It was indeed soon realized that the color 
restriction can be understood as a $\U(N)^{\otimes d}$ symmetry, where $d$ is the rank of the tensors. This opened a whole new set of applications, and in particular allowed to launch a 
renormalization program for Tensorial Group Field Theories (TGFTs), that is for GFTs based on tensorial interactions \cite{tensor_4d , vincent_tt2}.  

	\section{Colored Group Field Theories}
		
		\subsection{Combinatorial and topological motivations}
		
		
In order to understand the difficulties with GFTs as introduced so far, let us consider again a generic 3d model, defined by the partition function (\ref{gft_structure}) and the interaction 
(\ref{Sint_3d1}). We do not need to make any assumption about the covariance $C$ at this stage, since we are mainly concerned with the labels of the Feynman amplitudes, not their structure. These 
labels are $3$-stranded graphs, built out from the elementary vertex shown in Figure \ref{tetra_dual1}. A (closed) $3$-stranded graph can be equivalently thought of as a $2$-complex, with faces 
identified by the closed circuits of strands. 

The key objection put forward in \cite{Gurau:2010nd} is that, while the lines and faces of the $2$-complex unambiguously identify the triangles and edges of the dual triangulation, there is no 
combinatorial data associated to its vertices. As a result, the tetrahedra can be glued in arbitrary wild fashions along there common vertices, which leads to the presence of extended topological 
singularities. That is to say that the topological space naturally associated to such a pathological triangulation
contains points with neighborhoods not homeomorphic to a $3$-ball, and that these singular points can form subspaces of dimension higher than $1$. Hence, these triangulations cannot be interpreted as 
topological manifolds. Even worst, they are not homeomorphic to \textit{pseudomanifolds}, which are milder versions of singular topological spaces (see again \cite{Gurau:2010nd}). Examples of such 
configurations can easily be constructed by gluing vertices in such a way that a given face runs several times through the same line. It is therefore natural to look for combinatorial prescriptions which 
prevent the generation of such faces, called \textit{tadfaces}.

These unwelcomed properties of ordinary GFTs were cured by the introduction of new data in the action: the colors. These are purely combinatorial labels which identify the $(d-1)$-simplices constituting 
a given $d$-simplex. One therefore needs $(d+1)$-colors in a rank-$d$ GFT, labeling $(d+1)$ independent fields. It is moreover convenient to work with complex fields, which will ensure nice orientability 
properties. 
For instance, the colored rank-$3$ GFT is defined in terms of fields $\{ \vphi_\ell, \ell =1 ,
\ldots , 4 \}$, 
each distributed according to a (complex) Gaussian measure $\extd \mu_C$, and the interaction convolutes the four fields together:
\bes
\cZ_{col} &=& \int [\prod_{\ell =1}^{4} \extd \mu_C (\vphi_\ell , \vphib_\ell )]\, \exp( - S_{col} (\vphi_\ell))\,, \\
S_{col} (\vphi_\ell) &=& \lambda \int  \vphi_1 \vphi_2 \vphi_3 \vphi_4 + \overline{\lambda} \int \vphib_1 \vphib_2 \vphib_3 \vphib_4 \,, 
\ees
where in the second line we kept the convolution pattern implicit\footnote{It is made explicit in Figure \ref{int_edge_labels}. It is the same as in equation (\ref{Sint_3d1}), up to a reordering of the variables.}. The only 
effect of the coloring is to reduce the combinatorial complexity of the Feynman graphs generated by the GFT: they must be \textit{bipartite} and \textit{colorable}. The first condition comes from the complex
nature of the field: there are two types of vertices, associated to the interaction of $\vphi$ or $\vphib$ fields respectively; and propagator lines can only connect two vertices of different types. The second
condition comes from the colors: each line is now equipped with a color label, and only four lines with distinct colors can interact. But, and this is essential, 
the amplitude of a colorable and bipartite stranded graph in the colored GFT is identical to its amplitude in the non-colored theory. 

The main advantage of the colored model is that it generates pseudomanifolds only \cite{Gurau:2010nd}. In the 3d context we are focusing on, this means that at most pointlike topological singularities can occur, 
located at the vertices of the triangulations. This topological information is fully encoded in the colored structure of the stranded graphs, which can equivalently be represented by \textit{colored graphs}.
\begin{definition}
Let $n \geq 3$. A $n$-colored graph is a bipartite regular graph of valency $n$, whose edges are colored by labels $\ell \in \{1 , \ldots, n\}$, 
and such that at each vertex meet $n$ edges with distinct colors. 
\end{definition}
These graphs have been extensively used in the mathematical literature, as an efficient combinatorial way of representing topological manifolds \cite{FerriGagliardi, Vince_gene}. This goes under the name of \textit{crystallization} theory.
These techniques have been partly rediscovered and partly directly imported into the GFT context, though with a different nomenclature, and one notable difference: crystallizations are a particular subclass of $n$-colored graphs
which represent manifolds, while in GFT we cannot avoid dealing with the singular pseudomanifolds as well. The general formalism is discussed at length in \cite{Gurau:2011xp}, and we should content ourselves 
with the main ideas here, which we again introduce in three dimensions. To begin with, we will represent the vertex interaction associated to $\vphi$ (resp. $\vphib$) fields by a white (resp. black) dot, and a propagator between
fields of color $\ell$ with a line of color $\ell$. The first interesting point to notice is that the stranded substructure of a Feynman graph can be fully inferred from its colored representation. Indeed
a strand can be canonically labeled by a couple $(\ell \ell')$ of colors, where $\ell$ and $\ell'$ are the colors of the two types of lines in which the strand runs. See Figure \ref{int_edge_labels}. Therefore the faces, which 
are connected circuits of strands, are equivalently labeled by connected cycles of edges of color $\ell$ and $\ell'$. And due to the color prescription, tadfaces cannot occur in colored GFTs. This solves
a first pathology of ordinary GFTs. 
\begin{figure}[h]
\begin{center}
\includegraphics[scale=0.6]{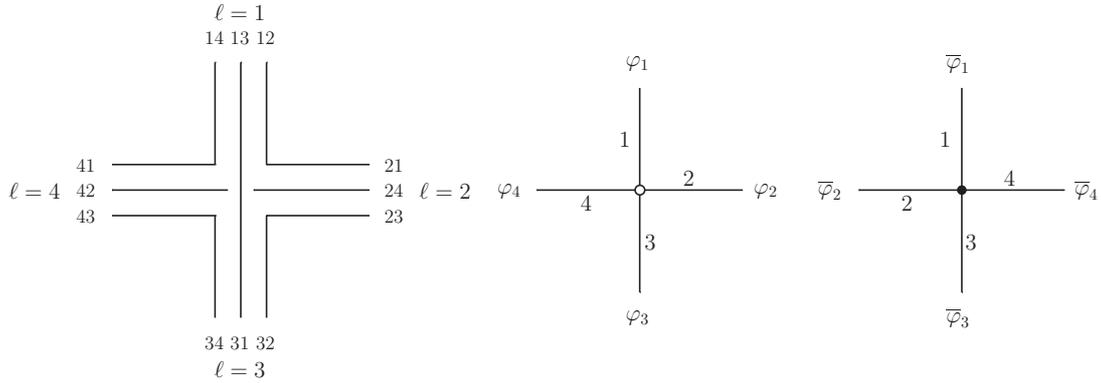}
\caption{Interaction(s) in a stranded representation with labels (left), and as a node of a colored graph (right).}
\label{int_edge_labels}
\end{center}
\end{figure}
The second key aspect is that the colors give direct access to the vertices of the triangulation. First, they can be attributed a color label in the following way: in a tetrahedron, the vertex of color $\ell$ is the one
opposite to the triangle of color $\ell$. Consider a $4$-colored graph $\cG$. We define the $3$-\textit{bubbles} of color $\ell$ as the connected components of edges with colors in $\{1, \ldots , 4\} \setminus \{ \ell \}$. See Figure \ref{cgraph1} for an example.
\begin{figure}[h]
\begin{center}
\includegraphics[scale=0.6]{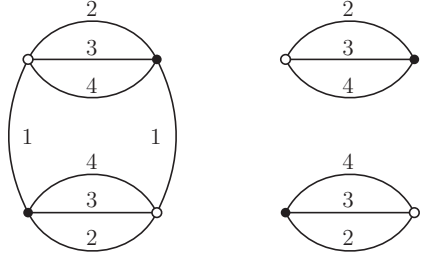}
\caption{A $4$-colored graphs of with $4$ nodes (left), and its two $3$-bubbles of color $1$.}
\label{cgraph1}
\end{center}
\end{figure}
In the same way as the $4$-colored graph $\cG$ represents a simplicial complex of dimension $3$, a $3$-bubble $b$ is a $3$-colored graph and represents a triangulated two dimensional surface. This is nothing
but the the boundary of a 3d cell dual to a particular vertex $v$. And hence the topology of the neighborhood of $v$ can be directly read out from that of the surface encoded by $b$: if it is a sphere (i.e. of genus $0$),
$v$ is regular, otherwise (i.e. of genus $\geq 1$) it is topologically singular. We will use this idea at length in Chapter \ref{largeN}.

The notion of bubble is very general. In arbitrary dimension $d \geq 2$, a $(d+1)$-colored graph $\cG$ comes with nested sets of $k$-bubbles, for $2 \leq k \leq d$, simply defined as the maximally connected
subgraphs made of edges with $k$ fixed colors. They define particular embedded surfaces of dimension $k-1$, dual to the $(d - k)$-simplices of the complex encoded in $\cG$, which in turn provide a
full homology for the topological spaces generated by colored GFTs \cite{Gurau:2009tw , Gurau:2011xp}.

		\subsection{Motivation from discrete diffeomorphisms}
		
		
		

A second argument in favor of colors one can find in the literature relies on the symmetries of discrete gravity models. However, it mainly applies to 3d for now, and therefore is not
as generic as the combinatorial and topological motivations. 

It goes as follows. Because of the strict duality between spin foam models and discrete gravity path-integrals, as for example uncovered in \cite{Baratin:2010wi}, it is tempting to implement discrete gravity symmetries
directly at the level of the GFT. Of particular interest is the diffeomorphism group: if one wants to reach a continuous limit in which the full symmetry of GR is restored, identifying
a residual action of the diffeomorphism group at a discrete level, and subsequently turning it into a symmetry (or at least an approximate one) is a natural strategy. This point of view has long been advocated by Bianca Dittrich and 
collaborators \cite{bianca_review}. Thanks to the metric variables introduced in \cite{Baratin:2010wi}, this could be partially explored at the GFT level by the authors of \cite{diffeos}, who focused on the Boulatov and Ooguri models.
In particular, they identified a translation symmetry in the Boulatov model, a quantum-deformed version of vertex translation invariance interpreted as discrete diffeomorphism invariance. It is then natural to try to
use this symmetry as a defining feature of the Boulatov model, acting directly on the fields and leaving the action invariant. But this could only be achieved in the colored version of the model,
the reason being essentially that in order to translate a given vertex in a tetrahedron, one needs to act on specific edge variables of three of its triangles. Therefore the triangles need to be identified by additional 
labels in the action: the colors. Moreover, they argued that this symmetry would systematically be broken at the quantum level in a non-colored model.

In order to make this argument stronger, one would need to investigate it further in a four-dimensional context. In the Ooguri colored model, a similar translation symmetry could be uncovered \cite{diffeos}, this time associated
to the edges of the 4d complexes. It however has no geometric interpretation, and is broken by the imposition of simplicity constraints. Hence it cannot be used in favor of colored models for four-dimensional quantum gravity.       

We will come back to these symmetries in Chapter \ref{largeN}, to motivate new formulations of the Boulatov and Ooguri models which are well adapted to the computation of scaling bounds.  
		
	\section{Colored tensor models}
	
	Let us now introduce tensor models in some details. They rely on the same structures as colored GFTs, but exact calculations are made a lot easier. Hence, this provides a simplified context in which to introduce 
the main tools we will use again in the next sections.
		
		\subsection{Models and amplitudes}






Colored tensor models turn out to have very specific properties from dimension three onwards, we will therefore exclude the matrix case and assume $d \geq 3$ from now on. The degrees of freedom are $d +1$ colored rank-$d$
tensors $\vphi_{i_1, \ldots , i_d}^{\ell}$, and their conjugates $\vphib_{i_1, \ldots , i_d}^{\ell}$. $\ell \in \{1, \ldots , d+1\}$ is again the color index, and the indices run from $1$ to $N$. From the GFT
perspective, these tensors can equivalently be considered as truncated version of the Fourier transforms of fields defined on $d$ copies of a compact Lie group. For instance, the Fourier dual of $\U(1)$ is
$\mathbb{Z}$, and can be truncated to a set of cardinal $N$. But because of the choice of covariances made in tensor models, they are essentially insensible to the nature of the group. The kernels of the propagators are 
given by simple Kronecker $\delta$'s with respect to the indices and the colors. For any $\ell$, we define the Gaussian measure $\extd \mu_{C^{\ell}}$ by
\beq
\int \extd \mu_{C^\ell} ( \vphi^{\ell} , \vphib^{\ell} ) \vphi^{\ell}_{i_1 , \ldots , i_d }  \vphib^{\ell}_{j_1 , \ldots , j_d} = \prod_{k = 1}^{d} \delta_{i_k , j_k} \,.
\eeq
The interaction is given by a simplicial term and its conjugate. We introduce a notation with two indices\footnote{Additions are understood modulo $d+1$ in the color set, and $i_{\ell \ell'}$ is identified with $i_{\ell' \ell}$.}
\beq
\forall \ell \in \{1 , \ldots, d + 1 \} \,, \qquad {\boldsymbol{i_\ell}} \equiv (i_{\ell \ell-1} , \ldots , i_{\ell 1} ,  i_{\ell d +1 } , \ldots , i_{\ell \ell + 1})\,,
\eeq
which makes its definition compact:
\beq
S( \vphi^\ell , \vphib^\ell ) = \frac{\lambda}{N^{d(d-1)/4}} \sum_{\{ i_{\ell \ell'} , \ell < \ell' \} } \, \prod_{\ell = 1}^{d+1} \vphi^{\ell}_{{\boldsymbol{i_\ell}}} \qquad + \qquad \mathrm{c.c}\,.
\eeq
For simplicity of the discussion, we already introduced the unique rescaling of the coupling constant $\lambda$ which makes the large $N$ expansion possible. The partition function of this tensor model expands as a power
series in $(\lambda \overline{\lambda})$:
\beq
\cZ = \int [\extd \mu_{C^\ell} ( \vphi^{\ell}, \vphib^\ell )] \, \exp\left( S( \vphi^\ell , \vphib^\ell ) \right) = \sum_\cG \frac{(\lambda \overline{\lambda})^{\cN(\cG)/2}}{s(\cG)} \cA_\cG\,,
\eeq
where the sum runs over $(d+1)$-colored graphs, $\cN (\cG)$ is the number of nodes of a graph $\cG$, and $\cA_\cG$ its amplitude. The latter is easily computed. Each node brings a $N^{-d(d-1)/4}$ factor, and identifies indices two by two, according
to the face structure of the graph. Following a given face across propagators and vertices, we are able to trivially sum all the Kronecker $\delta$'s, until we have just one index left in this face. This gives a free sum and hence a factor $N$
per face contributing to the amplitude. Therefore, we simply have:
\beq\label{ampl_tensors}
\cA_\cG = N^{\vert \cF(\cG) \vert - \cN(\cG) \frac{d(d-1)}{4}}\,,
\eeq 
where $\cF(\cG)$ is the set of faces of the graph $\cG$.
		
		\subsection{Degree and existence of the large N expansion}
		



The $1/N$ expansion of tensor models relies on a rewriting of the amplitudes (\ref{ampl_tensors}) in terms of appropriate combinatorial objects. Of particular relevance is the notion of \textit{jacket}, first introduced in 
\cite{lin}, and better understood in \cite{RazvanN, RazvanVincentN, jimmy}. 
\begin{definition}
Let $\cG$ be a $(d+1)$-colored graph. A \textit{jacket} $J$ of $\cG$ is a $2$-subcomplex of the ($2$-complex represented by) $\cG$, labeled by a cycle $\sigma = (\ell_1 \ldots \ell_{d+1})$ in the color set (or equivalently by $\sigma^{\inv}$).  
$J$ consists of the same nodes and edges as $\cG$, while its faces are the faces of colors $(\ell_1 \ell_{2}),\, (\ell_2 \ell_{3}), \ldots, \, (\ell_d \ell_{d+1}),\, (\ell_{d+1} \ell_{1})$. 
\end{definition}
The jackets label particular ribbon subdiagrams in the stranded diagrams encoded by the colored graphs, which are dual to $2$-dimensional discretized surfaces embedded in the dual simplicial complexes \cite{jimmy}. They are also orientable, thanks
to the complex structure of the tensor models, and therefore their topology is fully captured by a single positive integer: the genus. It is moreover possible to count the faces of a graph $\cG$ in terms of the combinatorial data associated
to all its jackets, and prove a simple relation with their genera (see \cite{Gurau:2011xp} and references therein):
\begin{proposition}
 The amplitude of a $(d+1)$-colored graph is equal to:
\beq
\cA_\cG = N^{d - \frac{2}{(d-1)!} \omega(\cG)}\,,
\eeq
where $\omega(\cG)$, called \textit{degree} of $\cG$, is the sum of the genera of its jackets:
\beq
\omega(\cG) \equiv \sum_{J} g_J\,. 
\eeq
\end{proposition}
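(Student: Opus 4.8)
The plan is to start from the amplitude formula (\ref{ampl_tensors}), namely $\cA_\cG = N^{\vert \cF(\cG)\vert - \cN(\cG)\frac{d(d-1)}{4}}$, and to re-express the face count $\vert\cF(\cG)\vert$ in terms of the genera of the jackets, so that no independent evaluation of $\cA_\cG$ is needed. The single structural input I would use is that each jacket $J$, being a closed orientable surface by virtue of the complex structure of the model, obeys Euler's relation $V_J - E_J + F_J = 2 - 2 g_J$. Since a jacket shares its nodes and edges with $\cG$, and $\cG$ is $(d+1)$-regular and bipartite, one has $V_J = \cN(\cG)$ and $E_J = \frac{d+1}{2}\cN(\cG)$, whence the face count of a single jacket is
\beq
F_J = 2 - 2 g_J + \frac{d-1}{2}\,\cN(\cG)\,.
\eeq

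First I would sum this identity over all jackets, which requires two elementary combinatorial counts. On the one hand, the jackets are indexed by cyclic orderings of the $d+1$ colors up to orientation (cycle versus inverse), so there are exactly $d!/2$ of them. On the other hand, a face carrying a fixed pair of colors $(\ell\ell')$ belongs to a jacket precisely when $\ell$ and $\ell'$ are adjacent in the defining cycle; counting cyclic orderings up to orientation in which a prescribed pair is adjacent gives $(d-1)!$. Hence every face of $\cG$ is counted with multiplicity $(d-1)!$, that is $\sum_J F_J = (d-1)!\,\vert\cF(\cG)\vert$. As a consistency check, summing the $d+1$ faces over the $d!/2$ jackets gives $(d+1)!/2$, which matches $(d-1)!$ times the $\binom{d+1}{2}$ color pairs.

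Combining these, the summed Euler relation reads
\beq
(d-1)!\,\vert\cF(\cG)\vert = d! - 2\,\omega(\cG) + \frac{(d-1)\,d!}{4}\,\cN(\cG)\,,
\eeq
where I have used $\sum_J g_J = \omega(\cG)$ and $\sum_J 2 = d!$. Dividing through by $(d-1)!$ isolates
\beq
\vert\cF(\cG)\vert - \frac{d(d-1)}{4}\,\cN(\cG) = d - \frac{2}{(d-1)!}\,\omega(\cG)\,,
\eeq
which is exactly the exponent appearing in (\ref{ampl_tensors}); substituting it back produces the claimed amplitude.

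The main obstacle, and the only place calling for genuine care, is the double-counting step: establishing cleanly that there are $d!/2$ jackets and that each face lies in $(d-1)!$ of them. Everything else is bookkeeping with Euler characteristics. I would handle the jacket count by regarding a jacket as an unoriented Hamiltonian cycle through the color set, and the face multiplicity as the adjacency of a fixed pair in a random cyclic order, both routine once the labelling of faces by color pairs (Figure \ref{int_edge_labels}) is granted. A subtlety worth flagging is the very well-definedness of $g_J$: it presupposes that each jacket is a closed \emph{orientable} surface, and this is precisely where the complex (bipartite) structure of the colored model must be invoked rather than merely assumed.
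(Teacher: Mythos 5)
Your proof is correct and is essentially the argument the paper relies on: the paper itself gives no proof of this proposition, noting only that one can ``count the faces of a graph $\cG$ in terms of the combinatorial data associated to all its jackets'' and citing \cite{Gurau:2011xp}, and your computation is exactly that standard derivation. Summing the Euler relation $F_J = 2 - 2g_J + \frac{d-1}{2}\cN(\cG)$ over the $d!/2$ jackets, with each face of $\cG$ counted with multiplicity $(d-1)!$, and substituting into (\ref{ampl_tensors}) is the intended face-counting argument, and all your combinatorial multiplicities check out.
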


This very important result shows the existence of the $1/N$ expansion: given its expression, $\omega$ is a positive integer and therefore the divergence degree at large $N$ is uniformly bounded by $d$. At each order of the $1/N$ expansion,
one needs to sum all the graphs with a given degree, in the same way as one needed to sum all triangulations with a given genus in matrix models. Notice however that, while $\omega(\cG)$ encodes some topological information about 
the graph $\cG$, it is not a topological invariant (see again \cite{Gurau:2011xp} for a detailed discussion of the topological properties of colored tensor models).
		
		\subsection{The world of melons}
		
		
		

After the discovery of the $1/N$ expansion, the research efforts mainly focused on the leading order sector. The only configurations which survive when $N$ is sent to infinity are the $\omega = 0$ graphs. They were given a simple recursive characterization 
in \cite{critical}, in terms of so-called \textit{melons}. An \textit{elementary melon} is a connected subgraph with two nodes, $d$ internal lines, and $2$ external legs, as represented on the left-side of Figure \ref{melo_contr1}. 
\begin{figure}[h]
\begin{center}
\includegraphics[scale=0.6]{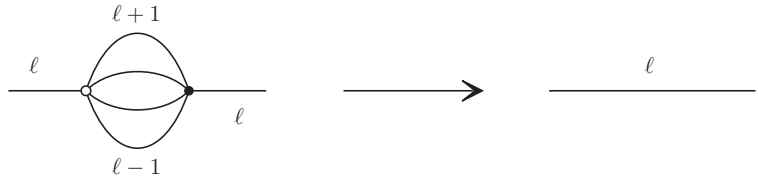}
\caption{An elementary melopole (left) and its contraction (right).}
\label{melo_contr1}
\end{center}
\end{figure}
In \cite{critical}, it
is first demonstrated that an $\omega =0$ graph necessarily contains an elementary melon; and in a second step, that contracting an elementary melon as pictured in Figure \ref{melo_contr1} does not affect the degree of the graph. The second property is
easy to understand from formula (\ref{ampl_tensors}), since $2$ nodes and $d(d-1)$ faces are suppressed in the contraction process. As for the first, it relies on specific combinatorial relations, which we do not want to elaborate on here.
Therefore, any connected leading-order graph can be reduced to the unique graph with two nodes, called the \textit{supermelon}, in a contraction process which involves only deletions of elementary melons. Reciprocally, any such graph can be obtained
from the supermelon and successive dressings of its lines with elementary melons. These graphs are called \textit{melonic}, an example of which is provided in Figure \ref{ex_melon1}.   
\begin{figure}[h]
\begin{center}
\includegraphics[scale=0.7]{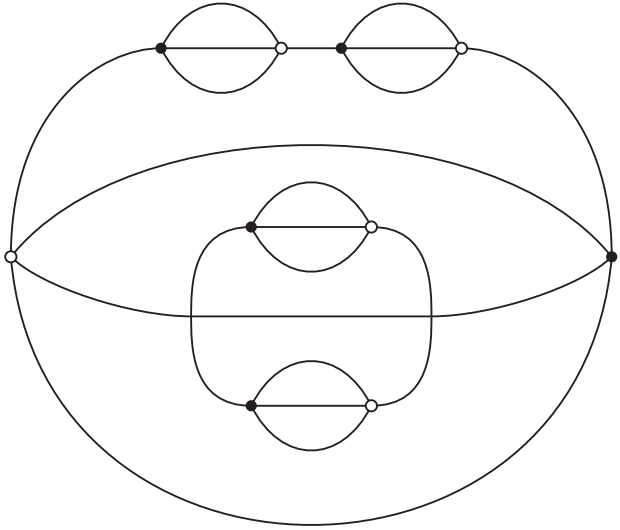}
\caption{A melonic graph in $d = 3$.}
\label{ex_melon1}
\end{center}
\end{figure}

\
The free-energy of the melonic sector was also studied in details in \cite{critical}, and was shown to have a finite radius of analyticity in $g \equiv \lambda \overline{\lambda}$. Close to the critical value $g_c$, graphs with very many melons dominate, which signals the transition to a continuum phase. The melonic free-energy $F$ behaves like 
\beq
F(g) \underset{g \to g_c }{\sim} K \left( \frac{g_c - g}{g_c} \right)^{2 - \gamma_{\rm{melons}}}
\eeq
in the vicinity of the critical point, with a critical exponent $\gamma_{\rm{melons}} = 1/2$ which turns out to be independent of the dimension $d$. The nested structure of melonic graphs suggests that this phase is a "branched polymer" one, describing a very crumpled metric. This last point has been rigorously confirmed in the recent work \cite{jr_branched}: the continuous emergent space at the critical point has Hausdorff dimension $2$ and spectral dimension $4/3$.

\ 
From the quantum gravity perspective, this result can be interpreted in two ways. Either a well-behaved emergent spherical space has to be looked for in a double scaling or a generalization thereof, in a similar fashion as higher genera are incorporated to matrix models; or it can alternatively be understood as one other sign suggesting that tensor models need to be supplemented with additional pre-geometric data. Ongoing efforts aim at exploring this alternative, with first encouraging results towards the realization of a double scaling in tensor models \cite{wjd_double}, and preliminary steps towards the analysis of a melonic phase transition in the Boulatov-Oguri GFTs \cite{boulatov_phase}.   
		
	\section{Tensor invariance}
				
		\subsection{From colored simplices to tensor invariant interactions}
		
A new set of developments of tensor models have been triggered by the introduction of tensor invariant interactions. It was first realized in \cite{r_vir} that, in a rank-$d$ colored tensor model, $d$ fields can be directly integrated, yielding an effective theory for the last field. The new effective interactions are labeled by $d$-bubbles (in the color of the last field), as easily understood from the point of view of the initial perturbative expansion: instead of summing over the initial interactions and propagators, one first integrates the propagators in three of the colors, yielding new effective bubble vertices to which the propagators in the last color can be hooked. These interactions are infinitely many, and the coupling constants in front are functions of the initial $\lambda \overline{\lambda}$. What is remarkable with these simple tensor models is that their kernels are also products of Kronecker $\delta$'s, which identify indices according to which face they belong. If one furthermore assumes the new coupling constants to be independent, one obtains a whole class of tensor models with a single field, but an infinite set of freely adjustable interactions.

\
Let us assume the single effective field to have color $0$, and the bubble vertices to be constructed out of color $1$ to $d$. The only faces which matters are the color-$(0\ell)$ ones, and are identified thanks to the colored structure of the bubbles. With the notations introduced before, in a given bubble interaction $b$, the $(0\ell)$ index of a $\vphi^0$ field is identified to the $(0 \ell)$ index of a $\vphib^0$, and both are in the same position in these tensors. We can therefore interpret the color $\ell$ as labeling the position of an index in the tensors $\vphi^0$ and $\vphib^0$. And the color conservation of the faces in the initial colored models translates into the requirement that: in the bubble $b$, an index in the $\ell^{\rm{th}}$ position of a tensor $\vphi^0$ must be contracted with an index of a $\vphib^0$ in the same position. 

\

These considerations allowed the authors of \cite{uncoloring} to develop an axiomatic formulation of random tensor models, based on this single-field framework. Consider a complex rank-$d$ tensor $T_{i_1 \ldots i_d}$, and assumes that it transforms as a tensor products of fundamental representations of $\U(N)$. That is to say that matrices $U^{(1)}\, \, \ldots , \, U^{(d)} \in \U(N)$ act on $T$ and its conjugate as:
\bes
T_{i_1 \ldots i_d} &\rightarrow& \sum_{j_1 ,\, \ldots ,\, j_d} U^{(1)}_{i_1 j_1} \, \ldots U^{(d)}_{i_d j_d} T_{j_1 \ldots j_d}\,,\\
\overline{T}_{i_1 \ldots i_d} &\rightarrow& \sum_{j_1 ,\, \ldots ,\, j_d} \overline{U}^{(1)}_{i_1 j_1} \, \ldots \overline{U}^{(d)}_{i_d j_d} T_{j_1 \ldots j_d}\,.
\ees
This large symmetry group, $\U(N)^{\otimes d}$, acts independently on each of the indices. The action can then be required to be \textit{tensor invariant}, i.e. invariant under $\U(N)^{\otimes d}$. It turns out \cite{universality} that tensor invariants are generated by monomials, which contract the indices of $p$ tensors $T$ and $p$ tensors $\overline{T}$, in such a way that the $\ell^{\rm{th}}$ index of a $T$ is always contracted with the $\ell^{\rm{th}}$ index of a $\overline{T}$. They are labeled by $d$-colored graphs, with white (resp. black) dots representing $T$ (resp. $\overline{T}$) tensors, and a line of color $\ell$ picturing the contraction of two indices in the $\ell^{\rm{th}}$ position. They are the analogues of the trace invariants of matrix models. If we moreover restrict to \textit{connected} such invariants, we find that the action is a sum of bubble interactions, i.e. a sum of monomials labeled by connected $d$-colored graphs.
A general probability theory for such independent and identically distributed (i.i.d.) tensors has been constructed at the perturbative level \cite{universality}. It relies on a generalization of the $1/N$ expansion for colored tensor models, which provides the tools to prove universality theorem akin to the central limit theorem \cite{universality}. Another avenue towards the same results is provided by Schwinger-Dyson equations \cite{v_revisiting}, which generalize the loop equations of matrix models \cite{r_vir, r_sdgene}. They even allowed to uncover the existence of slightly modified $1/N$ expansions, which retain more than melonic graphs at leading-order \cite{v_new}. Finally, recent non-perturbative studies \cite{r_constructive} of the same models opened a new era, in which constructive field theory methods \cite{rz_loop1, rz_loop2} are expected to deepen our understanding of the large $N$ limit of i.i.d. tensor models.
			
		\subsection{Generalization to GFTs}
			

The concept of tensor invariance unlocked a new understanding of tensor models, which allows to view the colors as a consequence of a symmetry principle, rather than labels of extra fields introduced by hand. With insight, the relative failure of analytical methods in the early incarnations of tensor models \cite{Ambjorn_tensors, gross, Sasakura:1990fs} is tied to an unfortunate symmetrization of the tensor indices. The colored substrate making the topology and combinatorics tractable is only recovered with unsymmetrized tensors.   
The wealth of results gathered in this new framework establish tensor models as a very active field of research, in rapid development. From the point of view of spin foam models and group field theories, it is at the same time a competitor and an important source of inspiration. The more we will learn about tensor models, the more we will be able to understand the quantum field theory properties of group field theories. 

\

In particular, the notion of tensor invariance seems to be of paramount importance, as will be illustrated in Chapters \ref{renormalization}, \ref{chap:u1} and \ref{chap:su2}. It has been first applied to fields defined on group manifolds in \cite{tensor_4d}, which considers a 4d GFT based on the $\U(1)$ group. The main innovation of this article is to replace the sharp cut-off in Fourier space, which would immediately yield a tensor model, with a non-trivial propagator based on the Laplace operator on $\U(1)$. This introduces a non-i.i.d $1/(\sum_{k = 1}^{4} i_k^2)$ weight into the propagator, hence defining a non-trivial field theory. In this context, tensor invariance provides a notion of locality, well-adapted to a renormalizability analysis. The field theory of \cite{tensor_4d} is the first renormalizable instance in a large class of GFTs based on bubble interactions, known as Tensorial Group Field Theories (TGFTs). They have been generalized to various groups and propagators \cite{josephsamary, joseph_etera, u1, fabien_dine, su2, joseph_d2}, slowly bridging the gap with more involved 4d quantum gravity models. It is in our view a very important line of research to pursue, to which a large part of this thesis is dedicated.

\chapter{Large N expansion in topological Group Field Theories}\label{largeN}

In this chapter, we present results about the large $N$ expansion of topological GFTs, initially obtained in \cite{vertex} and \cite{edge}. We first focus on the Boulatov model, and then generalize to its four-dimensional counter-part, the Ooguri model, which can be considered the backbone of any of the current quantum gravity models. Our main purpose will be to illustrate how the geometric data encoded in the group or algebra variables of the fields can be exploited to understand the scalings of the amplitudes with respect to the large $N$ parameter. We will in particular detail how both of these models can be reformulated in variables adapted to the symmetry uncovered in \cite{diffeos}. While we think these formulations can be useful in various situations, we will only present direct applications to the large $N$ expansion. The two main results in this respect will be: scaling bounds in terms of the bubbles of the Feynman diagrams, which prove that topological singularities do not contribute to the first few orders of the large $N$ expansion; scaling bounds in terms of the jackets which slightly generalize the jacket bounds of \cite{RazvanN, RazvanVincentN}, and thereby provide an independent way of proving the existence of the $1/N$ expansion of these models.  

	\section{Colored Boulatov model}
	
		\subsection{Vertex variables}

\subsubsection{Colored Boulatov model and shift symmetry}

We begin with a complete definition of the colored Boulatov model, and the construction of its vertex representation, as introduced in \cite{diffeos} and further developed in \cite{vertex}. 
For clarity of the presentation, we will restrict again to functions on $\SO(3) \sim \SU(2) / \mathbb{Z}_{2}$, identified as functions $f$ on $\SU(2)$ such that $f(g) = f(-g)$ for all $g \in
\SU(2)$. This will allow us to use the group Fourier transform of \cite{majidfreidel}, which is bijective in the case of $\SO(3)$, but not in the case of $\SU(2)$. Note however that we could work with the full $\SU(2)$ group, at the price of using the generalized Fourier transform introduced in \cite{karim}. As this would introduce heavy notations without changing any of the scaling bounds we can deduce from the vertex formulation, we refrain
from doing so and work within the simplified framework. Recall that the group Fourier transform maps functions $f \in L^{2}(\SO(3))$ to functions $\hat{f}$ defined on the Lie algebra $\so(3) \sim \mathbb{R}^{3}$. 
The $\star$-product endows this space of Lie algebra functions with a non-commutative structure, reflecting the group structure of $\SU(2)$. It is indeed the Fourier dual of the convolution of functions in $L^{2}(\SO(3))$, in the sense that:
\beq
\forall f_1\,, f_2 \in L^{2}(\SO(3))\,, \qquad  \left( \widehat{f}_{1} \star \widehat{f}_{2} \right)(x)\,=\, \int \extd g \left[ \int \extd h f_1(g h^{-1})\,f_2(h)\right] e_g(x)\,.
\eeq
This implies in particular that an integral of the point-wise product of two functions on the group manifold is equal to the integral over $\mathbb{R}^3$ of the $\star$-product of their Fourier transforms:
\beq
\int \extd g f_{1}(g^{-1}) f_{2}(g) = \int \extd x \left( \widehat{f}_{1} \star \widehat{f}_{2} \right)(x)\,.
\eeq
\  

It is convenient at this stage to work 'on-shell', i.e. in a space of fields verifying the Gauss constraint. We therefore introduce four complex fields $\vphi_\ell \in L^2 ( \SU(2)^3 )$, labeled by a color index $\ell \in \{ 1, \cdots , 4\}$, verifying the gauge
invariance condition:
\beq\label{gauge_v}
\forall h \in \SU(2),  \qquad \vphi_\ell (hg_1, hg_2, hg_3)  \, = \, \vphi_\ell (g_1, g_2, g_3),
\eeq 
as well as the $g_i \leftrightarrow g_i^\inv$ invariance. 
The colored Boulatov model \cite{cboulatov} can then be defined by the action:
\bes \label{action_boulatov}
S[\vphi , \vphib ]&=& S_{kin} [\vphi , \vphib] + S_{int}[\vphi , \vphib] ,\\
S_{kin} [\vphi, \vphib] &=& \frac{1}{2} \int [\extd g_i]^3 \sum_{\ell=1}^4   \, \vphi_\ell(g_1, g_2, g_3) \overline{\vphi_{\ell}}(g_1, g_2, g_3) ,\\
S_{int}[\vphi, \vphib] &=& \lambda \int [\extd g_{i} ]^6 \, \vphi_1(g_1, g_2, g_3) \vphi_2(g_3, g_5, g_4) \vphi_3(g_5, g_2, g_6) \vphi_4(g_4, g_6, g_1)
\nn \\
&+& \overline{\lambda} \int [\extd g_{i} ]^6 \, \overline{\vphi_1}(g_1, g_2, g_3) \overline{\vphi_2}(g_3, g_5, g_4) \overline{\vphi_3}(g_5, g_2, g_6) \overline{\vphi_4}(g_4, g_6, g_1)\,,
\ees
and the partition function
\beq
\cZ = \int \extd \mu_{inv} (\vphi_\ell , \vphib_\ell ) \exp\left( - S [\vphi, \vphib] \right) \,,
\eeq
where $\mu_{inv}$ is the (formal) Lebesgue measure on the space of gauge invariant fields. Alternatively, one can work with the Fourier transformed fields $\vphihat_\ell \in L^2_{\star} (\mathbb{R}^3)$, and a non-commutative path integral weighted by an action with the same structure as (\ref{action_boulatov}), except that the $\star$-product is used in place of the pointwise product. The geometrical interpretation of the fields and of the interactions is recalled in Figure \ref{edge_rep}. By convention, we will refer to the interaction between $\vphi_\ell$ fields (resp. $\vphib_\ell$ fields) as the clockwise (resp. anti-clockwise) interaction, this nomenclature referring to the graphical representation we adopt to distinguish these two interactions. 

\begin{figure}[h]
  \centering
  \subfloat[Field]{\label{triangle_edge}\includegraphics[scale=0.5]{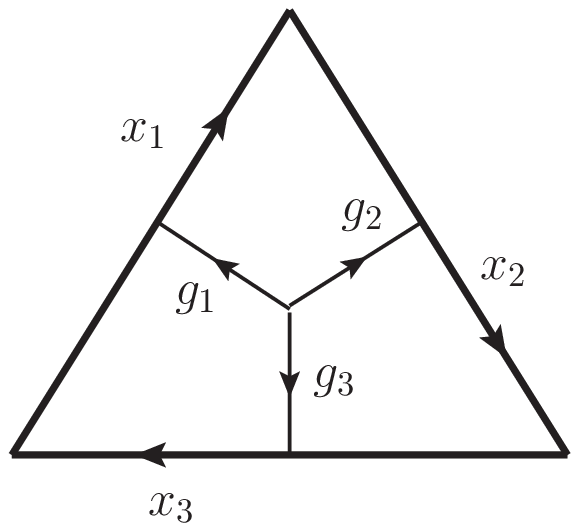}}            
  \subfloat[Interaction vertex (clockwise)]{\label{interaction_edge}\includegraphics[scale=0.5]{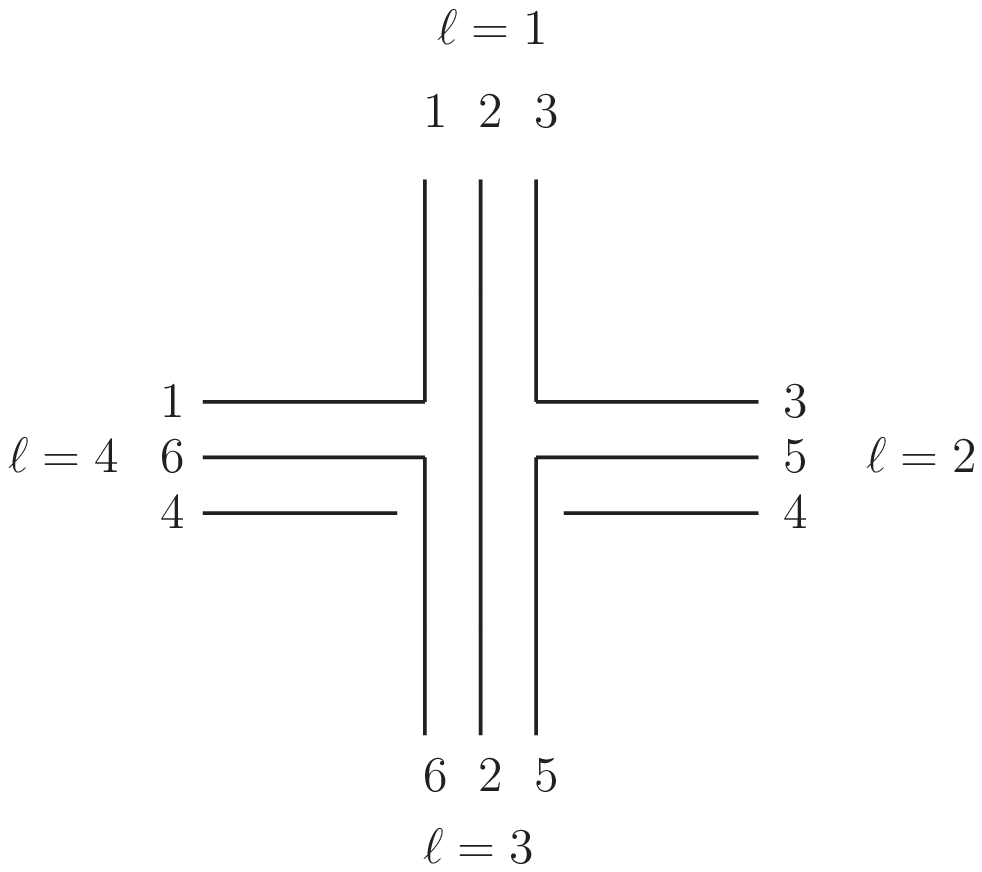}}
  \subfloat[Geometrical interpretation]{\label{tetrahedron_edge}\includegraphics[scale=0.4]{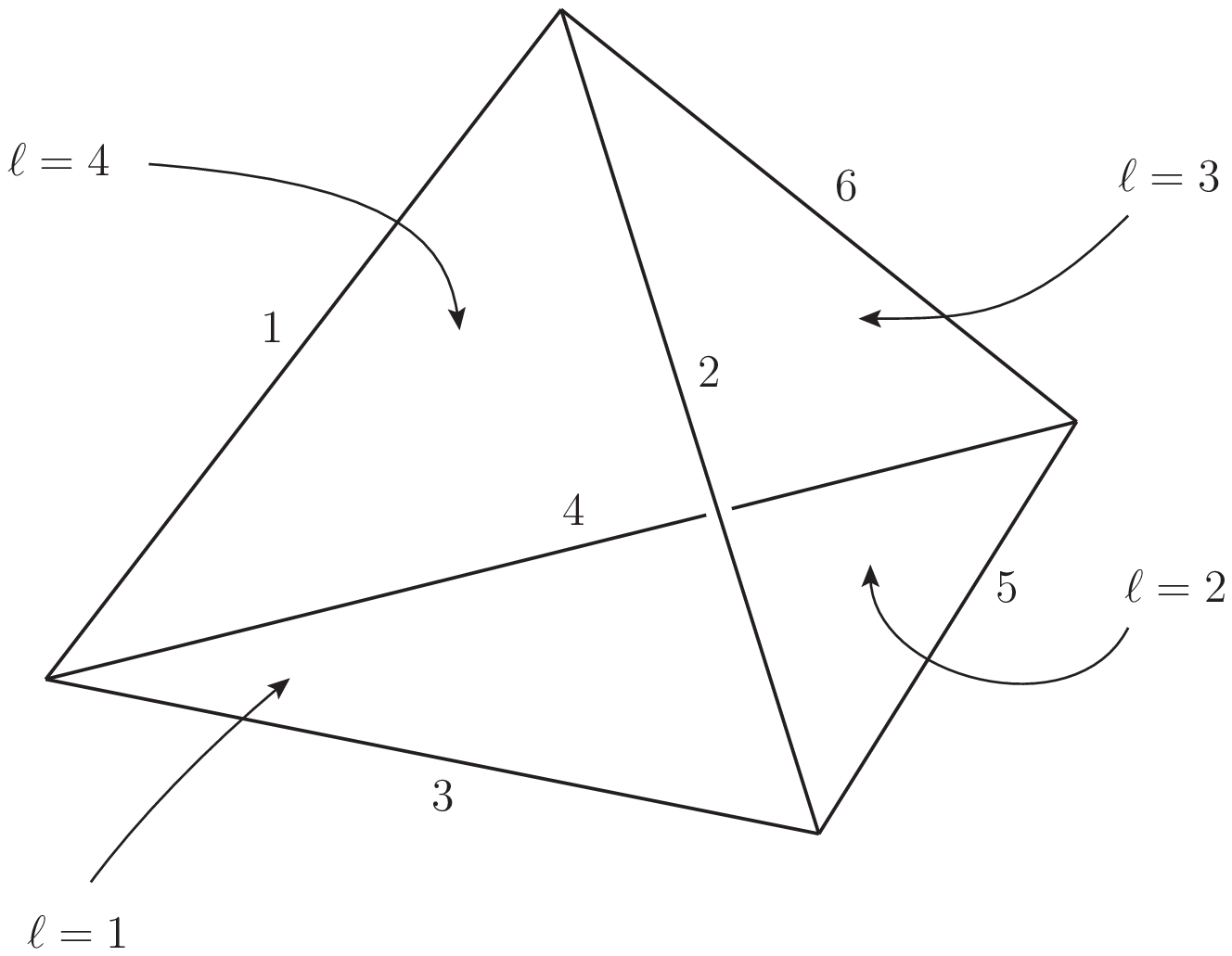}}
  \caption{Graphical representation of a field, and the interaction vertex in usual edge variables.}
  \label{edge_rep}
\end{figure}

\

In \cite{diffeos}, this model was shown to respect (quantum) symmetries, given by actions of the Drinfel'd double $\cD \SO(3)\!=\!\cc(\SO(3))\rtimes \mathbb{C}\SO(3))$ on the ('on-shell') fields. We will focus on the translational part of these actions, interpreted as (discrete) diffeomorphisms \cite{laurentdiffeo,biancadiffeos,biancabenny}.
They have four generators $\{\cT^{\ell'} , \ell'=1\cdots 4\}$, each $\cT^{\ell'}$ acting non-trivially on fields of color $\ell \neq \ell'$. For instance, $\cT^{3}$ acts on $\vphi_1$ as:
\beq
\cT^{3}_{\varepsilon} \act \vphi_1(g_1, g_2, g_3) \equiv (\e_{g_1^\inv} \star \e_{g_3})(\varepsilon) \, \vphi_1(g_1, g_2, g_3) \, = \e_{g_1^\inv g_3}(\varepsilon) \, \vphi_1(g_1, g_2, g_3).
\eeq

This can be interpreted as translations of the edges $1$ and $3$, respectively by $\varepsilon$ and $-\varepsilon$, with a deformation given by the $\star$-product. This is clearer in metric variables, where the previous equation can be (schematically) written as:
\beq
\cT^{3}_{\varepsilon} \act \vphihat_1(x_1, x_2, x_3) = \bigstar_{\varepsilon} \, \vphihat_1(x_1-\varepsilon, x_2, x_3+\varepsilon)\,. 
\eeq
As a result, the action of $\cT^{3}$ on the field of color $1$ can geometrically be interpreted as a deformed translation of one of its vertices, as represented in Figure \ref{trans_vertex}. Furthermore, we can assign colors to the vertices of the tetrahedron defining the interaction term, with the convention that $v_\ell$ should be the vertex opposed to the triangle of color $\ell$ (as introduced in the previous chapter). This induces a color label for the vertices of the different triangles. In this picture, the action of $\cT^{3}$ on $\vphihat_1$ corresponds to a translation of the vertex of color $3$ in the triangle of color $1$.

\begin{figure}[h]
\begin{center}
\includegraphics[scale=0.5]{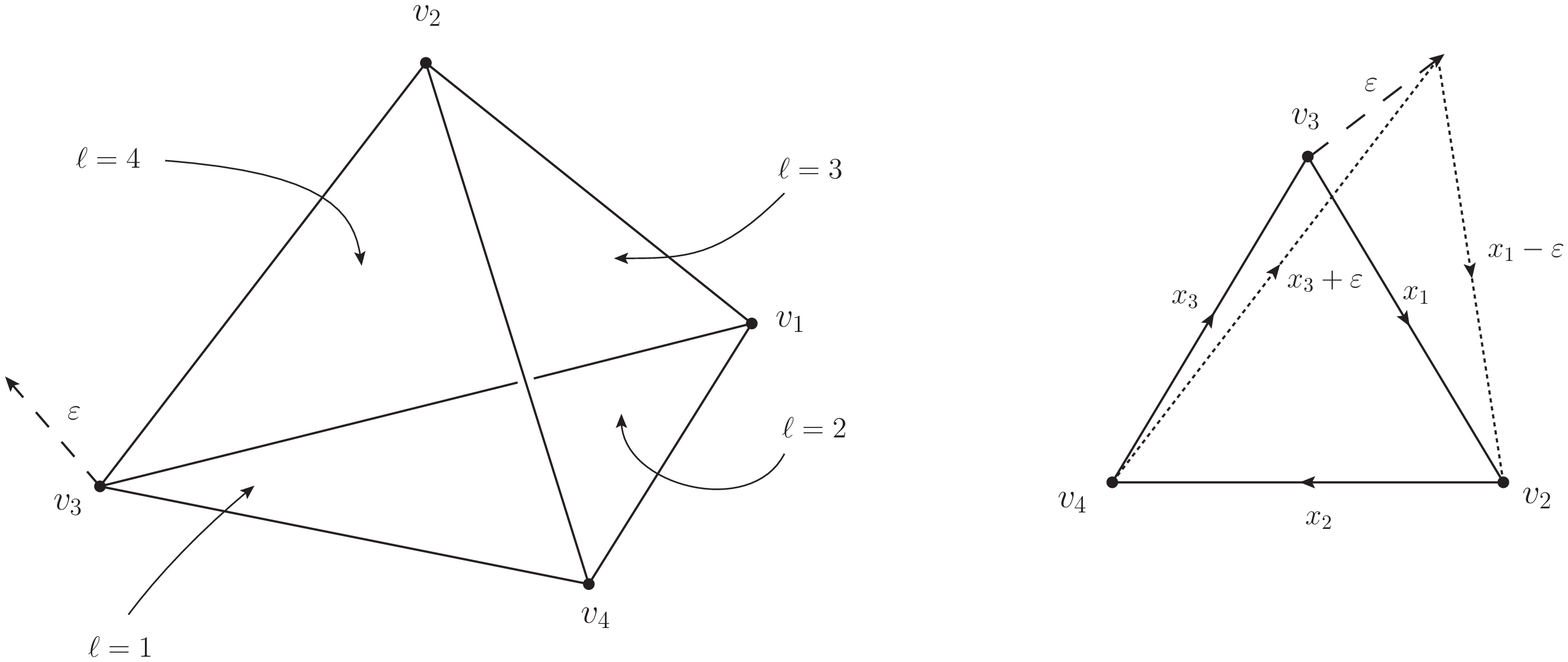} 
\caption{Action of $\cT^{3}_\epsilon$ on the interaction term, and resulting transformation of $\vphihat_1$.} \label{trans_vertex}
\end{center}
\end{figure}

This geometrical interpretation generalizes to any generator and any field: $\cT^{\ell'}_\varepsilon$ translates the vertex of color $\ell'$ in $\vphi_\ell$ (if any) by a quantity $\varepsilon$. With our conventions, the symmetries are therefore given by the following equations:
\bes \label{VertexTranslation1}
\cT^{1}_{\varepsilon} \act \vphi_1(g_1, g_2, g_3) &:=&  \vphi_1(g_1, g_2, g_3) \nn \\
\cT^{1}_{\varepsilon} \act \vphi_2(g_3, g_5, g_4) &:=& \e_{g_4^\inv g_5}(\varepsilon) \, \vphi_2(g_3, g_5, g_4) \nn \\
\cT^{1}_{\varepsilon} \act \vphi_3(g_5, g_2, g_6) &:=& \e_{g_5^\inv g_6}(\varepsilon) \, \vphi_3(g_5, g_2, g_6) \nn \\
\cT^{1}_{\varepsilon} \act \vphi_4(g_4, g_6, g_1) &:=& \e_{g_6^\inv g_4}(\varepsilon) \, \vphi_4(g_4, g_6, g_1) \nn 
\ees
\bes
\cT^{2}_{\varepsilon} \act \vphi_1(g_1, g_2, g_3) &:=& \e_{g_2^\inv g_1}(\varepsilon) \, \vphi_1(g_1, g_2, g_3) \nn \\
\cT^{2}_{\varepsilon} \act \vphi_2(g_3, g_5, g_4) &:=&  \vphi_2(g_3, g_5, g_4) \nn \\
\cT^{2}_{\varepsilon} \act \vphi_3(g_5, g_2, g_6) &:=& \e_{g_6^\inv g_2}(\varepsilon) \, \vphi_3(g_5, g_2, g_6) \nn \\
\cT^{2}_{\varepsilon} \act \vphi_4(g_4, g_6, g_1) &:=& \e_{g_1^\inv g_6}(\varepsilon) \, \vphi_4(g_4, g_6, g_1) \nn 
\ees
\bes
\cT^{3}_{\varepsilon} \act \vphi_1(g_1, g_2, g_3) &:=& \e_{g_1^\inv g_3}(\varepsilon) \, \vphi_1(g_1, g_2, g_3) \nn \\
\cT^{3}_{\varepsilon} \act \vphi_2(g_3, g_5, g_4) &:=& \e_{g_3^\inv g_4}(\varepsilon) \, \vphi_2(g_3, g_5, g_4) \nn \\
\cT^{3}_{\varepsilon} \act \vphi_3(g_5, g_2, g_6) &:=&  \vphi_3(g_5, g_2, g_6) \nn \\
\cT^{3}_{\varepsilon} \act \vphi_4(g_4, g_6, g_1) &:=& \e_{g_4^\inv g_1}(\varepsilon) \, \vphi_4(g_4, g_6, g_1) \nn 
\ees
\bes
\cT^{4}_{\varepsilon} \act \vphi_1(g_1, g_2, g_3) &:=& \e_{g_3^\inv g_2}(\varepsilon) \, \vphi_1(g_1, g_2, g_3) \nn \\
\cT^{4}_{\varepsilon} \act \vphi_2(g_3, g_5, g_4) &:=& \e_{g_5^\inv g_3}(\varepsilon) \, \vphi_2(g_3, g_5, g_4) \nn \\
\cT^{4}_{\varepsilon} \act \vphi_3(g_5, g_2, g_6) &:=& \e_{g_2^\inv g_5}(\varepsilon) \, \vphi_3(g_5, g_2, g_6) \nn \\
\cT^{4}_{\varepsilon} \act \vphi_4(g_4, g_6, g_1) &:=& \vphi_4(g_4, g_6, g_1). \nn 
\ees
Note that the Hopf algebra deformations (i.e. the $\star$-products) are defined such that the plane-waves generating the translations are always of the form $\e_{g_i^\inv g_j}(\varepsilon)$. This feature has also a geometrical meaning: it guarantees that the transformed fields stay invariant under diagonal left action of $\SO(3)$, that is the triangles remain closed after translation of one of their vertices.

To be complete, we would need to specify how these translations act on products of fields. This step, which depends on the $\cD \SO(3)$ co-product, again amounts to a choice of $\star$-product orderings of the plane waves resulting from the actions on individual fields. One result of \cite{diffeos} is that it is possible to define them in such a way that the action, and in particular its interaction term, are left invariant. We postpone this task to the next section, where the use of vertex variables will make the definitions more geometrically transparent.

The interpretation of these symmetries is very nice. As just mentioned, they are interpreted as translations of the vertices of the triangulation, which at the level of simplicial gravity are the discrete counterparts of the diffeomorphisms \cite{biancadiffeos}. At the discrete gauge field theory level, that is in group space, they impose triviality of the holonomy around a loop encircling a vertex of the boundary triangulation (this is apparent in the group representation of the GFT interaction vertex), which is the content of the diffeomorphism constraints of 3d gravity. Finally, in the spin foam formulation they generate the recurrence relations satisfied by $6j$-symbols \cite{BarrettCraneWdW,eterasimonevalentin}, which again encode the diffeomorphism invariance of the theory in algebraic language. We refer to \cite{diffeos} for a detailed discussion of these aspects.

\begin{figure}[h]
\begin{center}
\includegraphics[scale=0.5]{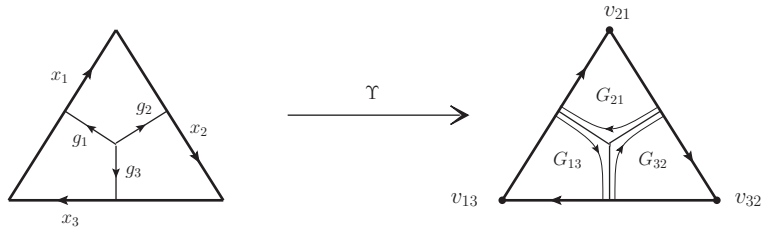} 
\caption{Map from edge to vertex variables.} \label{change_variables_v}
\end{center}
\end{figure}

\subsubsection{Vertex Lie algebra variables}

Following the interpretation of the symmetries as vertex translations, it is tempting to change variables so that the new fields directly depend on the generators of these translations. The idea is simply to write each edge Lie algebra variable as a difference between the positions of its two endpoints (with respect to some arbitrary reference point). Each triangle is now to be described by the three positions of its vertices, as represented in Figure \ref{change_variables_v}. However, due to the non-commutative nature of this symmetry, one would actually like to define a twisted version of this change of variables, schematically:
\beq
\forall \ell \in \left\lbrace 1,..,4\right\rbrace ,\, \psi_\ell(v_{21} , v_{13} , v_{31}) \equiv \bigstar_{v_{21}} \bigstar_{v_{13}} \bigstar_{v_{32}} \vphihat_\ell(v_{21} - v_{13} , v_{32} - v_{21} , v_{13} - v_{32})\,,
\eeq
where the $\bigstar$'s stand for implicit internal $\star$-products. 
To this effect one can introduce the map $\Upsilon$, which to any left-invariant field $\phi \in L^{2}(\SO(3)^{3})$, associates a function of three $\so(3)$ elements, defined as:
\beq
\Upsilon[\phi](v_{21}, v_{13}, v_{32}) \equiv \int \extd g_1 \extd g_2 \extd g_3 \phi(g_1, g_2, g_3) \e_{g_2^\inv g_1}(v_{21}) \e_{g_1^\inv g_3}(v_{13}) \e_{g_3^\inv g_2}(v_{32})\,.
\eeq
The variable $v_{ij}$ is interpreted as a position variable for the vertex shared by the edges $i$ and $j$. The intrinsic geometry of a triangle in $\mathbb{R}^3$ is fully characterized, up to rotations, by specifying three
edge vectors constrained to close. If we are given the three positions of its vertices instead, there is a redundancy: a simultaneous translation of all the vertices simply corresponds to a change of the origin of the coordinates in $\mathbb{R}^3$ and does not affect the geometry.
This redundancy should also manifests itself in the properties of the function $\Upsilon[\phi]$. 
However, because the translation symmetry is encoded in a quantum group acting on products of representations, in order to define it properly we need to specify an ordering of the arguments of the fields. We therefore
interpret $\Upsilon[\phi]$ as a tensor product of representations: 
\beq
\Upsilon[\phi] = \int \extd g_1 \extd g_2 \extd g_3 \, \phi(g_1, g_2, g_3) \, \e_{g_2^\inv g_1} \ot \e_{g_1^\inv g_3} \ot \e_{g_3^\inv g_2}\,.
\eeq
We can then introduce the global translation operator $\cT_{\ve}$, defined on tensor products of plane-waves by:
\bes
\cT_{\ve} \act \left( \e_{g_1}(x_1) \ot \cdots \ot \e_{g_N}(x_N ) \right) &\equiv& \bigstar_{\ve} \left( \e_{g_1}(x_1 + \ve) \ot \cdots \ot \e_{g_N}(x_N + \ve) \right) \nn \\
&\equiv& \e_{g_1 \cdots g_N}(\ve)\,
\left( \e_{g_1}(x_1) \ot \cdots \ot \e_{g_N}(x_N) \right)\,.
\ees
We immediately verify that:
\beq
\cT_{\ve} \act \Upsilon[\phi](v_{21}, v_{13}, v_{32}) = \Upsilon[\phi](v_{21}, v_{13}, v_{32})\,.
\eeq
Therefore $\Upsilon$ maps the space of left-invariant functions in $L^{2}(\SO(3)^{3})$ to a set of functions of three $\so(3)$ elements invariant under
$\cT_{\ve}$. One can moreover prove it to be injective, and hence bijective between $L^{2}(\SO(3)^{3})$ and its image. We will not detail this point here, but we will do it for the slightly more involved Ooguri model, in Proposition \ref{bij_4d}. This property ensures that
the theory can fully be formulated with vertex variables, provided by the new fields:
\beq
\forall \ell \in \left\lbrace 1,..,4\right\rbrace ,\qquad \psihat_\ell = \Upsilon[\vphi_\ell]\,.
\eeq
The key point of this transformation lies in the following: the initial gauge invariance condition, that is diagonal left-translation invariance in group space, is traded for a global (twisted) translation invariance in Lie algebra vertex variables.

\

As already proven in \cite{diffeos}, the original Boulatov action can then be re-written in terms of the new fields $\psihat_\ell$. With the conventions of this thesis, we have:
{\footnotesize
\bes
S_{kin}[\psihat , \overline{\psihat}] &=& \sum_\ell \int [\extd^3 v_i]^2\, \psihat_\ell(v_1, v_2, v_3)  \star {\overline{\psihat}}_\ell(v_1, v_2, v_3) \,, \\
S_{int}[\psihat , \overline{\psihat}] &=& \lambda \int  [\extd^3 v_i]^3\, \psihat_1(- v_2, v_3, - v_4) \star \psihat_2(- v_4, v_3, v_1) \star \psihat_3(- v_4, v_1, - v_2) \star \psihat_4(v_1, v_3, - v_2) \label{int_vertex} \\
&+& \overline{\lambda} \int  [\extd^3 v_i]^3\,{\overline{\psihat}}_1(v_2, - v_3, v_4) \star {\overline{\psihat}}_2(v_4,- v_3,- v_1) \star {\overline{\psihat}}_3(v_4,- v_1, v_2) \star {\overline{\psihat}}_4(- v_1,- v_3, v_2) \, .\nn
\ees}
We notice that in all the integrals we have one free variable which can be fixed to any value without changing the value of the action; this amounts to a choice of origin from which measuring the position of the vertices. This is also reflected in the four translation symmetries not being independent, one of them being automatically verified when the others are imposed; in other words, the model knows about the intrinsic geometry of the triangles and of the tetrahedra they form, and correctly does not depend on their embedding in $\mathbb{R}^3$. 

We remark also that, in the interaction, each vertex variable appears in three different fields, so that we have a $\star$-product of three terms for each $v_\ell$. The extra signs encode orderings of the $\star$-products, which can again be interpreted as defining the orientations of the triangles. Consider for example the triangle of color $1$. From Figure \ref{trans_vertex}, its orientation is given by the cyclic ordering $(x_1, x_2, x_3)$ of its edge variables, which induces a natural cyclic ordering of its vertices: $(v_2, v_3, v_4)$ (notice that by convention, we actually choose the reverse ordering). This induces in turn an ordering of the triangles attached to the vertex $v_1$: $(\ell = 2, \ell =3, \ell = 4)$ (see again the left part of Figure \ref{trans_vertex}). This is the (cyclic) order in which, in the clockwise interaction term, the $\star_{v_1}$-product of fields having $v_1$ in their arguments (that is $\psihat_2$, $\psihat_3$ and $\psihat_4$) has to be computed. In the anticlockwise interaction term, this has to be reversed. That is why the variable $v_1$ appears with a positive sign in the first interaction term, and a minus sign in the second. This discussion generalizes to any color, so that in the end signs in front of variables $v_\ell$ are fully determined by the ordering of variables in the field $\hpsi_\ell$ of the same color.

\subsubsection{Shift symmetry in vertex variables}

Now we have a vertex representation of the classical theory, it is interesting to discuss further the translation symmetries. As expected, we have simpler formulas in this representation. 

Let us first discuss the action of translations on individual fields. The transformations read:
\bes \label{VertexTranslation2}
\cT^{1}_{\varepsilon} \act \hpsi_1(v_2, v_3, v_4) &=&  \hpsi_1(v_2, v_3, v_4) \nn \\
\cT^{1}_{\varepsilon} \act \hpsi_2(v_4, v_3, v_1) &=&  \hpsi_2(v_4, v_3, v_1 + \varepsilon) \nn \\
\cT^{1}_{\varepsilon} \act \hpsi_3(v_4, v_1, v_2) &=&  \hpsi_3(v_4, v_1 + \varepsilon, v_2) \nn \\
\cT^{1}_{\varepsilon} \act \hpsi_4(v_1, v_3, v_2) &=&  \hpsi_4(v_1 + \varepsilon, v_3, v_2) \nn 
\ees
\bes
\cT^{2}_{\varepsilon} \act \hpsi_1(v_2, v_3, v_4) &=&  \hpsi_1(v_2 + \varepsilon, v_3, v_4) \nn \\
\cT^{2}_{\varepsilon} \act \hpsi_2(v_4, v_3, v_1) &=&  \hpsi_2(v_4, v_3, v_1) \nn \\
\cT^{2}_{\varepsilon} \act \hpsi_3(v_4, v_1, v_2) &=&  \hpsi_3(v_4, v_1, v_2 + \varepsilon) \nn \\
\cT^{2}_{\varepsilon} \act \hpsi_4(v_1, v_3, v_2) &=&  \hpsi_4(v_1, v_3, v_2 + \varepsilon) \nn 
\ees
\bes
\cT^{3}_{\varepsilon} \act \hpsi_1(v_2, v_3, v_4) &=&  \hpsi_1(v_2, v_3 + \varepsilon, v_4) \nn \\
\cT^{3}_{\varepsilon} \act \hpsi_2(v_4, v_3, v_1) &=&  \hpsi_2(v_4, v_3 + \varepsilon, v_1) \nn \\
\cT^{3}_{\varepsilon} \act \hpsi_3(v_4, v_1, v_2) &=&  \hpsi_3(v_4, v_1, v_2) \nn \\
\cT^{3}_{\varepsilon} \act \hpsi_4(v_1, v_3, v_2) &=&  \hpsi_4(v_1, v_3 + \varepsilon, v_2) \nn 
\ees
\bes
\cT^{4}_{\varepsilon} \act \hpsi_1(v_2, v_3, v_4) &=&  \hpsi_1(v_2, v_3, v_4 + \varepsilon) \nn \\
\cT^{4}_{\varepsilon} \act \hpsi_2(v_4, v_3, v_1) &=&  \hpsi_2(v_4 + \varepsilon, v_3, v_1) \nn \\
\cT^{4}_{\varepsilon} \act \hpsi_3(v_4, v_1, v_2) &=&  \hpsi_3(v_4 + \varepsilon, v_1, v_2) \nn \\
\cT^{4}_{\varepsilon} \act \hpsi_4(v_1, v_3, v_2) &=&  \hpsi_4(v_1, v_3, v_2) \nn 
\ees
Thus each field $\hpsi_\ell$ can be interpreted as living in the representation space of (the translation part of) three copies of the deformed 3d Poincar\'e group $\cD \SO(3)$. This makes the interpretation of these transformations as vertex translations more explicit, and clarifies the very definition of the GFT.

The deformation of the translations manifests itself when acting on products of fields. This is a question we left open in the previous sections, exactly because it is more easily understood in the vertex formulation. To define the action of the translations on a product of fields, we need to interpret it as a tensor product. There is no canonical choice: for example the integrand $\psi_1^{234} \psi_2^{431} \psi_3^{412} \psi_4^{132}$ in the interaction term (\ref{int_vertex}) can be interpreted as the evaluation of $\psi_1^{234} \ot \psi_2^{431} \ot \psi_3^{412} \ot \psi_4^{132}$, but also of $\psi_2^{431} \ot \psi_1^{234} \ot \psi_3^{412} \ot \psi_4^{132}$, and generally of any permutation of the representation spaces. The Hopf algebra deformation of the translations required to make the interaction invariant will then depend on this additional convention. 
For definiteness let us interpret the term $\psi_1^{234} \psi_2^{431} \psi_3^{412} \psi_4^{132}$ as the evaluation of $\psi_1^{234} \ot \psi_2^{431} \ot \psi_3^{412} \ot \psi_4^{132}$. The Hopf algebra structure of the symmetries then has to be consistent with orderings of $\star$-products (i.e. signs) in equation (\ref{int_vertex}). This requires to distinguish colors $\{1,3\}$ from $\{2,4\}$, since the corresponding variables have opposite signs in (\ref{int_vertex}). 
All this suggests the following definition of translations, on products of fields, which we give in group variables. If $\{\phi_i, \, i =1, \cdots, N\}$ are living in the representation space of $\cT^{\ell}$, then:
\bes
\cT^{\ell}_{\varepsilon} \act (\phi(g_1) \ot \cdots \ot \phi(g_N)) &\equiv& \e_{g_1 \cdots g_N} (\varepsilon) (\phi(g_1) \ot \cdots \ot \phi(g_N))\,, \qquad \rm{if} \; \ell \in \{1, 3 \} \\
\cT^{\ell}_{\varepsilon} \act (\phi(g_1) \ot \cdots \ot \phi(g_N)) &\equiv& \e_{g_N \cdots g_1} (\varepsilon) (\phi(g_1) \ot \cdots \ot \phi(g_N))\,, \qquad \rm{if} \; \ell \in \{2, 4 \}\,.
\ees 
With this definition, and the tensor product interpretation of the interaction term we gave, the action is indeed invariant under translations. For instance, in metric variables, the integrand of the interaction part of the action is simply translated with respect to its variable of color $\ell$ under the transformation $\cT^{\ell}$. As a result, and because it is defined by integrals over the whole space $\su(2)$, the invariance follows.

\subsubsection{Vertex group variables}

 As far as the quantum theory is concerned, and in particular for practical calculations, it is convenient to Fourier transform back the vertex formulation to group space. The dual group variables are $G_{ij} \equiv g_{i}^{\inv}
g_j$, Fourier duals of the $v_{ij}$. Due to the translation invariance of the Lie algebra fields, in group space the configuration fields are distributions $\widetilde{\psi}_\ell$ of the form:
\beq
\widetilde{\psi}_\ell (G_1, G_2, G_3) = \delta(G_1 G_2 G_3) \psi_\ell(G_1, G_2, G_3)\,, 
\eeq
where $\psi_\ell$ are regular functions. It is interesting at this point to notice the double duality between algebra/group and edge/vertex variables: one has to impose closure constraints in algebra edge variables and group vertex variables; the same translate into translation invariance in group edge variables and algebra vertex variables. The precise forms of these different constraints are summarized in Table \ref{constraints}.  

\begin{table}[h]
\centering
{\footnotesize
\begin{tabular}{| c | c | c |}
    \hline
   & Edge variables & Vertex variables  \\ \hline
   & & \\
Group & $\vphi_\ell ( g_1 , g_2 , g_3 ) = \vphi_\ell ( h g_1 , h g_2 , h g_3 )$ & $\widetilde{\psi}_\ell (G_1, G_2, G_3) = \delta(G_1 G_2 G_3) \psi_\ell(G_1, G_2, G_3)$  \\ \hline
& & \\
Algebra & $\vphihat_\ell (x_1 , x_2 , x_3 ) = \delta_0 ( x_1 + x_2 + x_3) \star \vphihat_\ell (x_1 , x_2 , x_3 )$ & $ \psihat_\ell (v_1 , v_2 , v_3 ) = \bigstar_{\varepsilon} \psihat_\ell (v_1 + \varepsilon, v_2 + \varepsilon, v_3 + \varepsilon)$ \\ \hline
  \end{tabular}}
\caption{Fields and constraints in the different representations of the Boulatov model.}
\label{constraints}
\end{table}

 In terms of the newly defined fields, the action takes the form:
\bes
S[\psi , \overline{\psi}] &=&  \frac{1}{2} \sum_{\ell} \int [\extd G^\ell_i] \psi_\ell(G^\ell_1, G^\ell_2, G^\ell_3) \delta(G^\ell_1 G^\ell_2 G^\ell_3) \overline{\psi}_\ell(G^\ell_1, G^\ell_2, G^\ell_3) \nn \\
&+& \lambda \int [\prod_{l \neq l'} \extd G^l_{l'}] \cV(G^l_{l'}) \psi_1^{234} \psi_2^{431} \psi_3^{412} \psi_4^{132} \label{action_vertex} \\
&+& \overline{\lambda} \int [\prod_{\ell \neq \ell'} \extd G^l_{l'}] \cV(G^l_{l'}) {\overline{\psi}}_1^{\lower0.1in \hbox{\footnotesize234}} {\overline{\psi}}_2^{\lower0.1in \hbox{\footnotesize431}}
{\overline{\psi}}_3^{\lower0.1in \hbox{\footnotesize412}} {\overline{\psi}}_4^{\lower0.1in \hbox{\footnotesize 132}}\nn ,
\ees
with the notation convention $\psi_{\ell}^{ijk} \equiv \psi_{\ell}(G^{\ell}_i, G^{\ell}_j, G^{\ell}_k)$, and a vertex function defined by:
\bes\label{kerint_v}
\cV(G^l_{l'}) &=& \delta(G^{1}_{2} G^{1}_{3} G^{1}_{4})\delta(G^{2}_{4} G^{2}_{3} G^{2}_{1})\delta(G^{3}_{4} G^{3}_{1} G^{3}_{2})\delta(G^{4}_{1} G^{4}_{3} G^{4}_{2}) \nn \\
		&& \delta(G^{4}_{2} G^{3}_{2} G^{1}_{2})\delta(G^{4}_{3} G^{1}_{3} G^{2}_{3}) \delta(G^{1}_{4} G^{3}_{4} G^{2}_{4}). 
\ees 
The notations in the vertex function are as follows: upper indices label triangles or equivalently fields, and lower indices are associated to vertices, with the convention that a color $\ell$ labels
the vertex opposite to the triangle of the same color $\ell$ (see Figure \ref{color_conventions_v}). 

Notice the appearance of a kernel of the kinetic term (i.e. the distributional part of the fields $\widetilde{\psi}_\ell$). This is the Fourier dual of the translation invariance of the fields we
described in the Lie algebra
representation. It is interpreted as a consistency constraint on the three group elements associated to a quantized triangle: their ordered product needs by construction to be trivial. 

The vertex
function consists (first line) in the same distributional factors of the fields $\widetilde{\psi}_{\ell}$, imposing the mentioned consistency conditions, while the second lines are flatness conditions
associated to paths around the vertices of the tetrahedra, hence guaranteeing flatness of the connection in the boundary of the tetrahedron (see Figure \ref{tetrahedron_v}). Note that only three of
these flatness constraints appear in the interaction kernel, while a tetrahedron has four vertices. This is
obviously because only three of these flatness constraints are independent. We can therefore choose any triplet of these four constraints\footnote{The fourth one, associated to the vertex of color 1
being
simply $\delta(G^{2}_{1} G^{3}_{1} G^{4}_{1})$.} to express the same distribution, implementing all four constraints. 

\begin{figure}[h]
  \centering
  \subfloat[Coloring of vertices]{\label{color_conventions_v}\includegraphics[scale=0.5]{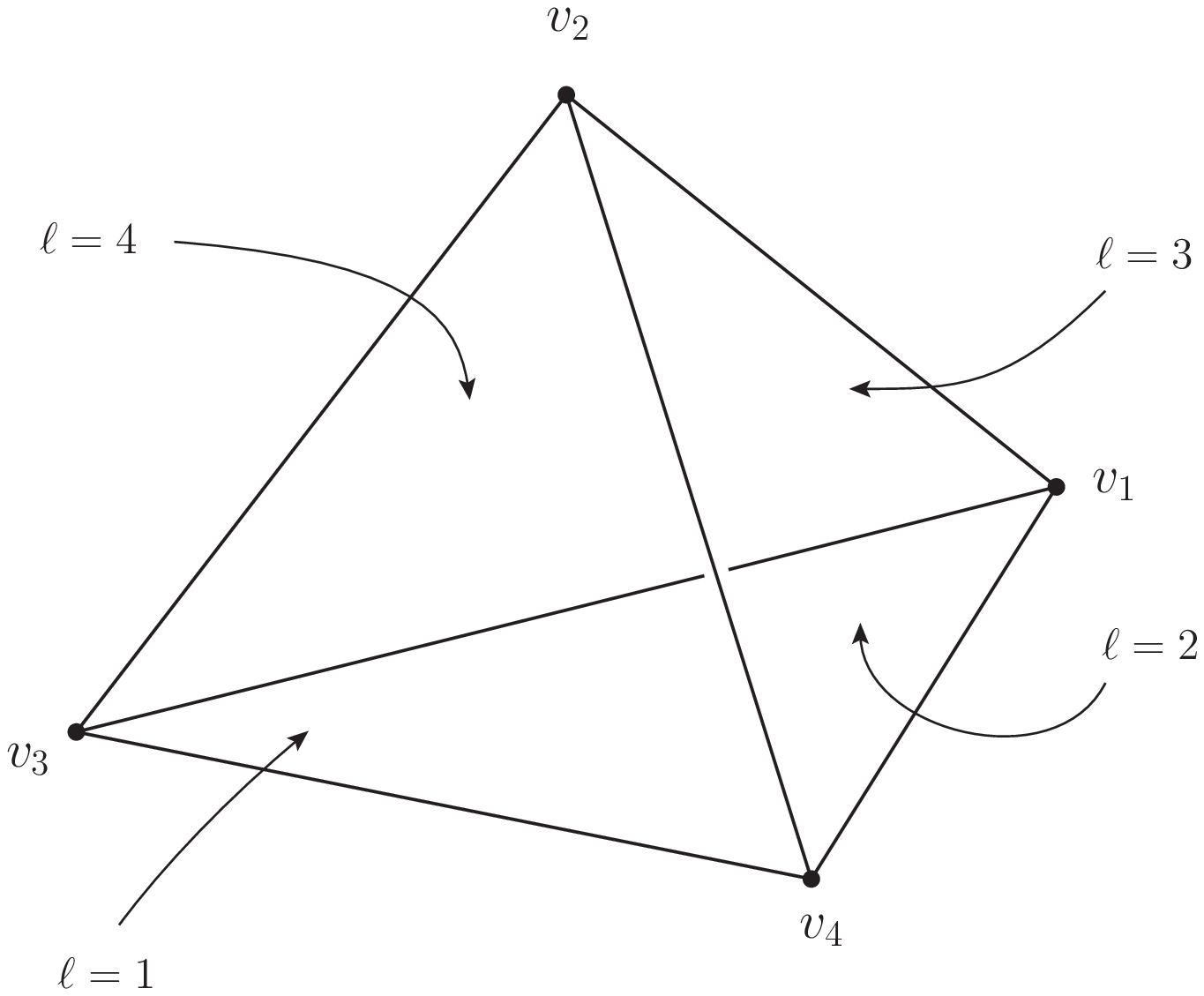}}                
  \subfloat[Holonomy variables]
{\label{variables_v}\includegraphics[scale=0.5]{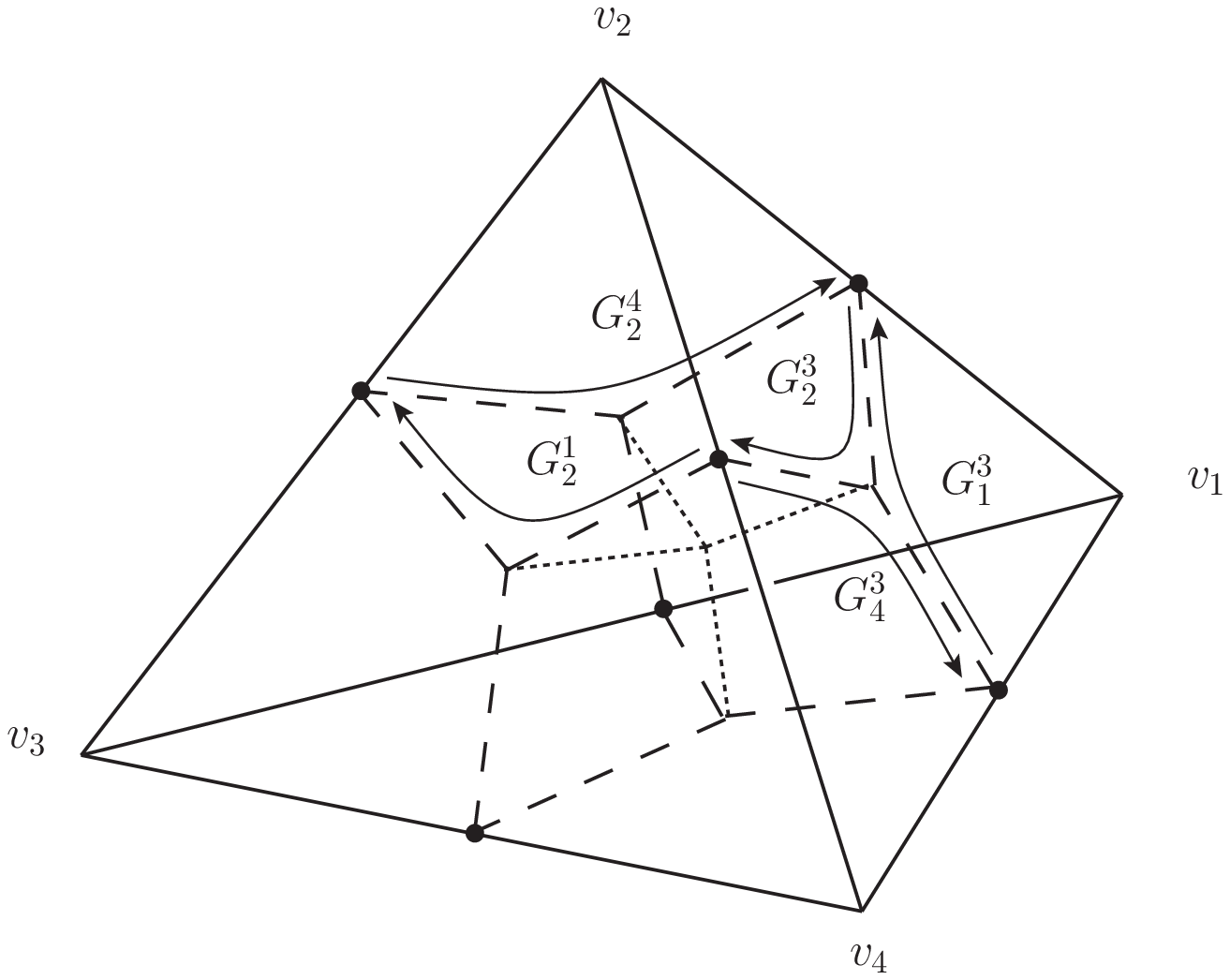}}
  \caption{Coloring conventions and group variables associated to one tetrahedral interaction. The amplitude imposes two kinds of conditions: consistency conditions on triangles, for instance $G^{3}_4 G^{3}_1 G^{3}_2 = \one$ in the triangle of color $3$; and flatness conditions around vertices, for example $G^{4}_2 G^{3}_2 G^{1}_2 = \one$ around the vertex $v_2$.}
  \label{tetrahedron_v}
\end{figure}

\subsubsection{Quantum amplitudes}

Now that we have well-understood the constraints that need to be imposed in vertex variables, we can provide a rigorous definition for the partition function $\cZ$ in these variables. We resort to a non degenerate Gaussian measure $\mu_{\cP}$ over the space of regular fields $\psi_\ell$, which combines the translation invariance constraint with the ill-defined Lebesgue measure into a well-defined Gaussian covariance: 
\beq\label{cov_v}
\int \extd \mu_{\cP}(\overline{\psi}, \psi) \, \overline{\psi_\ell(g_1, g_2, g_3)} \psi_{\ell'}(g_1', g_2' , g_3') \equiv \delta(g_1 g_2 g_3) \, \delta_{\ell, \ell'}  \prod_{i = 1}^{6} \delta(g_i
g_i'^{\inv}) \,.
\eeq
Only the exponential of
the interaction part of the action remains to be integrated, to give a suitable definition of $\cZ$:
\bes\label{partition_vertex}
\cZ &\equiv& \int \extd \mu_{\cP}(\overline{\psi}, \psi) \, \e^{- V[\overline{\psi}, \psi]} \\
V[\overline{\psi}, \psi] &\equiv&  \lambda \int [\extd G] \,\delta(G^{4}_{2} G^{3}_{2} G^{1}_{2})\delta(G^{4}_{3} G^{1}_{3} G^{2}_{3}) \delta(G^{1}_{4} G^{3}_{4} G^{2}_{4})\, \psi_1^{234} \psi_2^{431}
\psi_3^{412} \psi_4^{132}  \; \; + \; \; {\rm{c.c.}}\;
\ees
A couple of remarks are in order. First, only the flatness part of the kernel of the interaction has been used in the definition of $V$. This is because the distributional nature of the configuration
fields $\widetilde{\psi}_{\ell}$ has already been taken care of in the measure. Were we to integrate $S_{int}$ and not $V$, we would pick up products of equal distributions in the amplitudes, hence
further divergences. Second, at this formal level, which
of the four flatness constraints we use to define $V$ does not matter (see the resulting graphical representation in Figure \ref{int_vertex_v}). In the regularized theory, to which we turn in the next
subsection, this is not the case, and we expect different choices to give amplitudes with the same scaling behavior but differing by factors of order $1$ (in the cut-off). That is why, contrary to
the regularization chosen in the paper \cite{vertex}, we will use the symmetric regularization in the color indices of \cite{edge}.
Finally, it is important to stress that the vertex formulation of the path-integral we just introduced is strictly equivalent to the usual Boulatov model, written in terms of edge variables. At the level of gauge invariant fields, our construction amounts to a simple change of variables in terms of which the fields are expressed. Therefore the Jacobian of the transformation evaluates to one. 

\begin{figure}[h]
  \centering
  \subfloat[Full combinatorics (clockwise)]{\label{interaction_vertex}\includegraphics[scale=0.5]{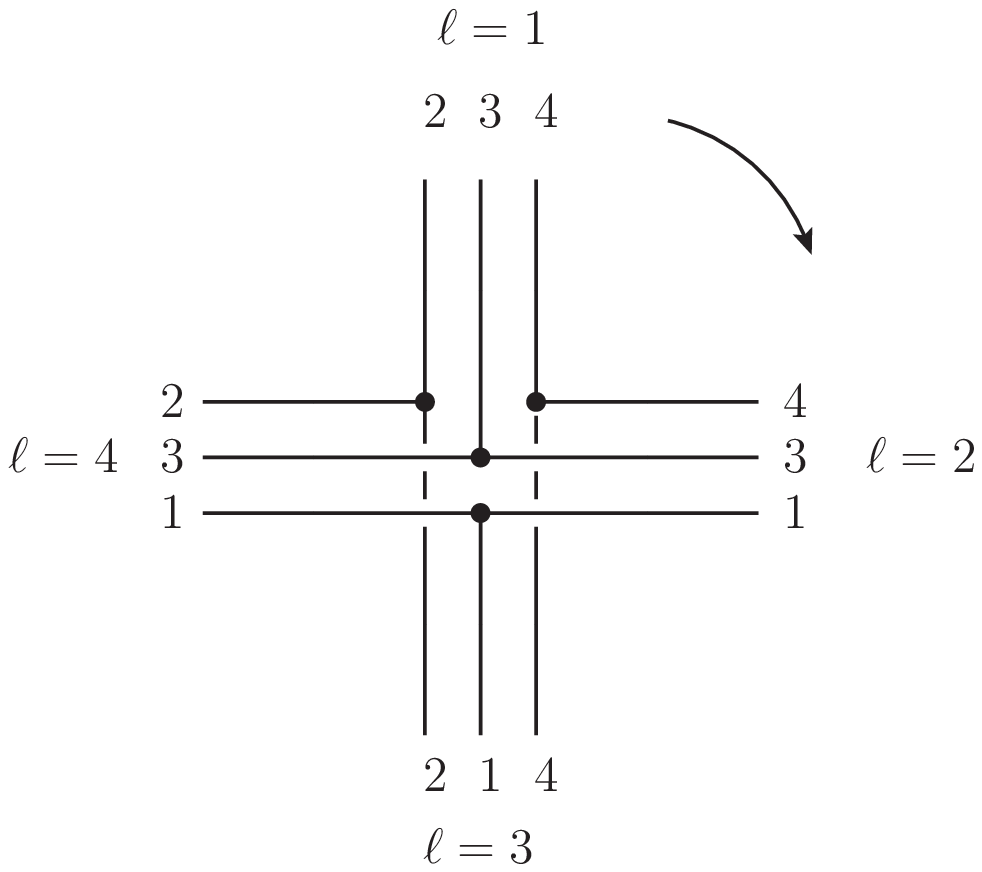}}
  \subfloat[Color 1 implicit]{\label{int_vertex_v2}\includegraphics[scale=0.5]{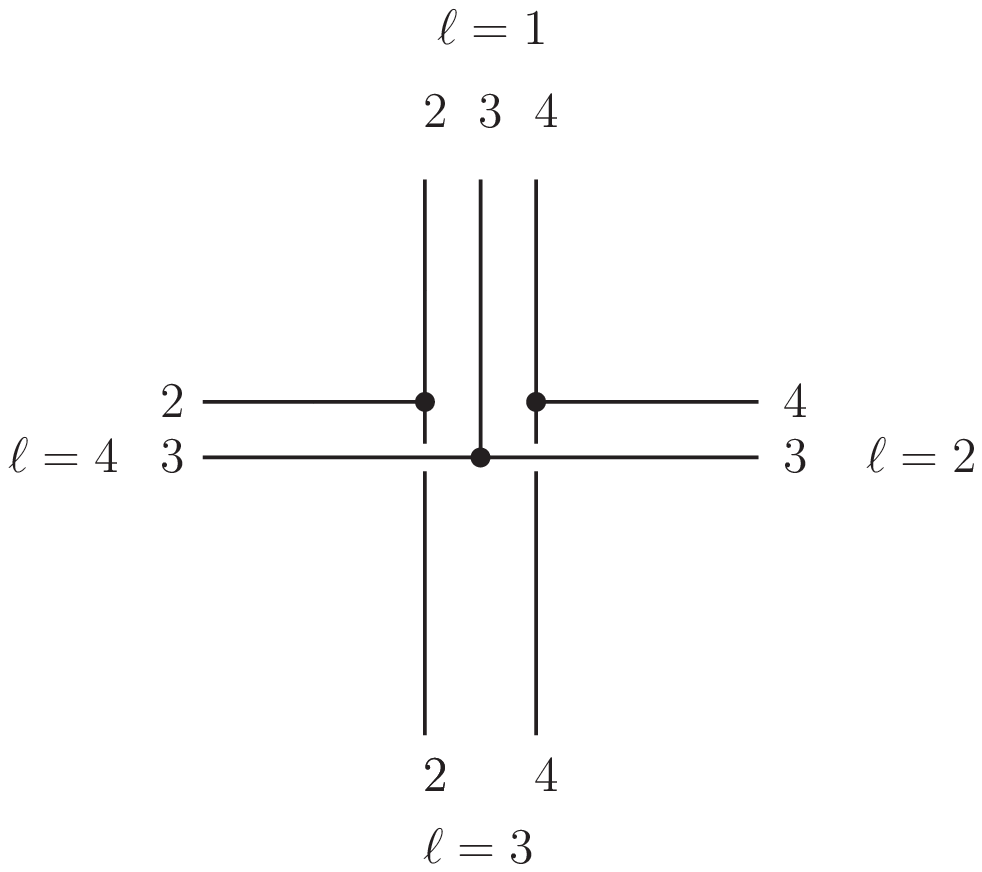}}
  \caption{Combinatorics of the interaction in vertex variables, in a form suitable for factorization of bubbles of color $1$.}
  \label{int_vertex_v}
\end{figure}

\

Formally, we can expand the partition function as a power series in $\lambda \overline{\lambda}$:
\beq
\cZ = \sum_{\cG} \frac{(- \lambda \overline{\lambda})^{\cN(\cG)/2}}{s(\cG)} \cA_{\cG}\,,
\eeq
where the amplitudes $\cA_{\cG}$ are labeled by closed colored graphs. White nodes are associated to anticlockwise vertices, and black nodes to clockwise ones. They contribute with an integrand respectively equal to $\lambda$ or $\overline{\lambda}$ times the kernel (\ref{kerint_v}). Each line of color $\ell$ is associated to a covariance (\ref{cov_v}), with $\ell' = \ell$. An example of the correspondence between colored and stranded representations is provided in Figure \ref{example_graph_v}, for the elementary melon graph. Also illustrated in this figure is the fact that the vertex variables make the $3$-bubbles explicit, as opposed to the faces in the edge variables. Indeed, in the stranded representation, let us consider a connected component of strands of color $\ell$. It is a graph of a non-commutative $\Phi^3$ scalar field theory on a Lie algebra space-time $\su(2)$, with momentum space $\SU(2)$, the interaction being essentially momentum conservation at each node. From the simplicial perspective, it is dual to the bubble around a vertex of color $\ell$, and encodes its topological structure. Precisely, each line of this $3$-graph can be assigned an additional individual color: that of the colored $4$-graph line it is part of. Thus we really have a colored $3$-graph, which is therefore dual to a closed and orientable triangulated surface \cite{Vince_2d,lrd,diffeos}: the bubble.
The overall amplitude associated to a 4-graph is therefore given by $\Phi^3$ graphs encoding the structure of the bubbles, glued to one another through propagators (associated to triangles).

\begin{figure}[h]
\begin{center}
\includegraphics[scale=0.5]{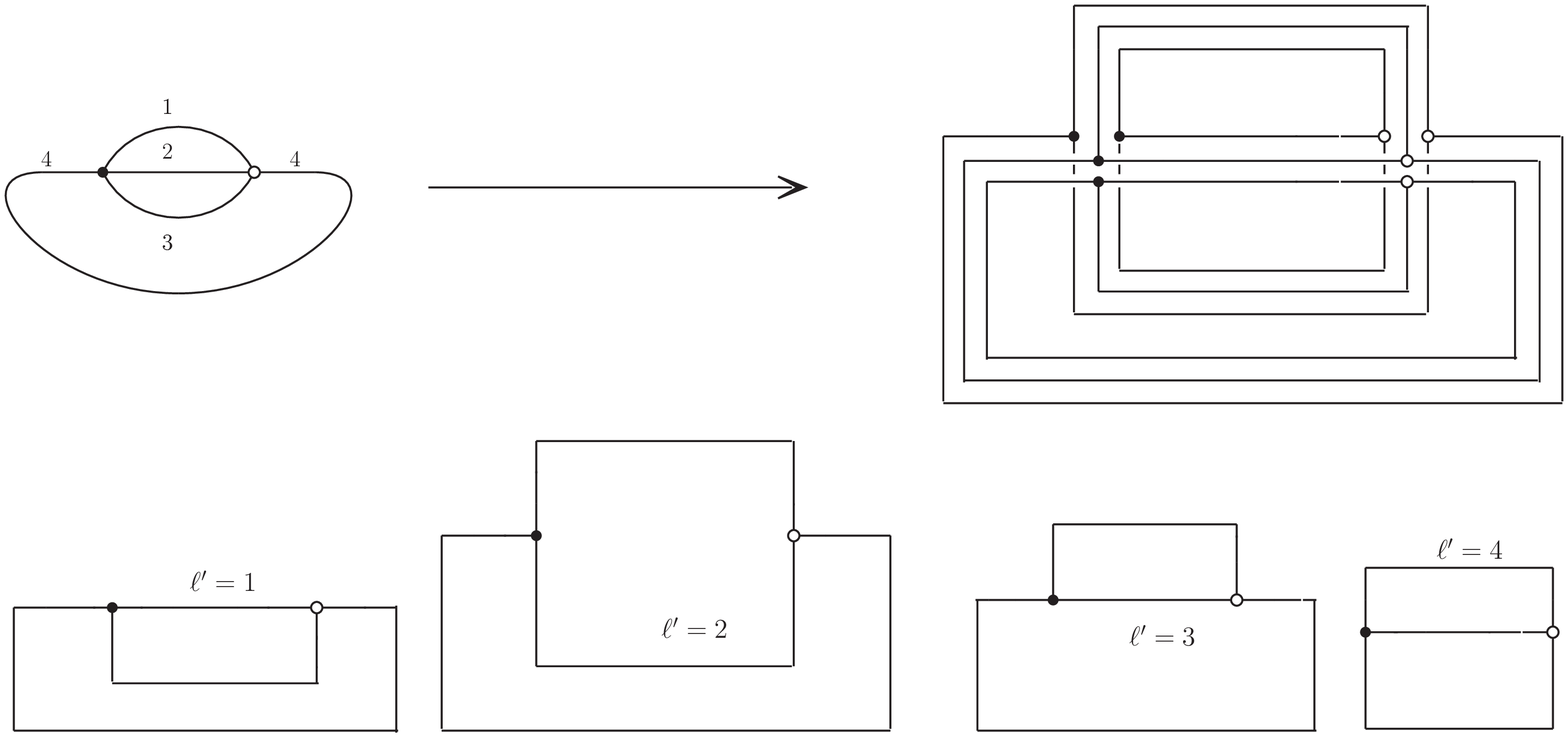} 
\caption{Combinatorial structure of the melon graph in vertex variables, and its four bubble graphs.} \label{example_graph_v}
\end{center}
\end{figure}

\

As far as computations are concerned, the main advantage of the vertex formulation, is that most of the non-trivial constraints encoded in the propagators can be integrated out straight away. This is because one of the four $\delta$-functions associated to stranded nodes is missing in each interaction kernel. A consistent choice throughout a given graph $\cG$ therefore allows to associate all the strands of a given color, say $1$, to integration variables which appear in no more than one propagator each. The corresponding $\delta$-functions can therefore be integrated to $1$, and one is left with reduced amplitudes built out of interaction vertices of the type shown in Figure \ref{int_vertex_v2}. One achieves in this case a factorization of the integrands of the amplitudes in terms of color-$1$ bubble contributions, initially proven in \cite{vertex}. This is of course possible with any color $\ell$, and yields the general formula:
\beq\label{amplitudes_v}
\cA_{\cG} = \int [\extd G]^{\frac{3 \cN}{2}} \left( \prod_{b \in \cB_{\ell}} \prod_{v \in V_{b}} \delta\left( \overrightarrow{\prod_{f \in
\triangle^{b}_{v}}} (G_{v}^{f})^{\epsilon^{f}_{v}}\right)  \right) \left( \prod_{f \in \cF_{\ell}} \delta\left( \overrightarrow{\prod_{v \in f}} G_{v}^{f}\right)\right)\, ,
\eeq
where $\cN$ is the order of $\cG$, $\cB_{\ell}$ is its set of bubbles of color $\ell$, $V_b$ the set of vertices in a bubble $b$, $\triangle^{b}_{v}$ the set of triangles in a bubble $b$ that share one of its vertices $v$, and
finally
$\cF_{\ell}$ is the set of triangles of color $\ell$ in the dual complex. 
The geometrical interpretation of this expression is very natural: the bubble terms in the first parenthesis encode flatness around
each of the vertices of the triangulated surface, and the terms in the second parenthesis encode the consistency conditions in triangles of color $\ell$. 
One thus obtains an effective description
of the model in terms of triangulated 3-cells whose boundaries have color $\ell$, the bubbles. 
In Figure \ref{ex_bubble}, we illustrate this result by showing a portion of a bubble $b$, dual to a vertex $v_b$. The effective interaction term associated to $b$ imposes a trivial holonomy around the
vertex $v \in b$.

\begin{figure}[h]
\centering%
\includegraphics[scale=0.5]{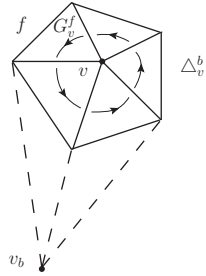}
\caption{Portion of a bubble $b$ dual to $v_b$: in $\triangle^{b}_{v}$, the holonomy around $v$ is imposed to be trivial.}
\label{ex_bubble}
\end{figure}

\

In the regulated theory, this factorization allows to derive nice bounds on the amplitudes: on the one hand, one can
understand the effect of singularities at vertices of color $\ell$; on the other hand, one can derive jacket bounds, which give information about the
global topology of the 4-colored graph, and of its dual simplicial complex.

		
		
		
		\subsection{Regularization and general scaling bounds}

Because the amplitudes of the Boulatov model are generically divergent, we need to regularize it. We adopt here the regularization scheme of \cite{edge}, which is well-adapted to the computation of scaling bounds in vertex variables.
This differs from the regularization used in \cite{vertex} in two respects: a) the cut-off is introduced at the level of the potential $V$; b) instead of a sharp cut-off in the harmonic
expansion of the $\delta$-functions, we will use heat kernels. The first point is in contrast with what has been and will be again advocated for renormalization, where keeping a fixed notion of locality in the interaction is very profitable. In the present situation, locality does not play any role, and on the contrary it is at the level of the covariances that the correspondence between edge and vertex variables has been established. It seems therefore preferable to keep the full Gaussian measures and regularize the interactions, this way the relation between scaling bounds in the two formulations is more transparent. Let us
elaborate a bit on this second point, and introduce the heat kernel. The $\delta$-function on $\SU(2)$ expands as:
\beq
\delta(g) = \sum_{j \in \frac{\mathbb{N}}{2}} (2 j + 1) \chi_{j}(g) \,.
\eeq

On the other hand, the heat kernel at time $\alpha > 0$ on $\SU(2)$ is given by:
\beq
K_{\alpha}(g) = \sum_{j \in \frac{\mathbb{N}}{2}} (2 j + 1) \e^{- \alpha j(j+1)} \chi_{j}(g) \,,
\eeq 
where $\chi_j$ are the characters of $\SU(2)$. We see that $K_{\alpha}$ converges to the $\delta$-distribution when $\alpha \to 0$, and therefore provides us with a regularization of the $\delta$-function.
The three properties of this regularization we will use in this and later chapters are the following. First, $K_{\alpha}$ is a positive function, which will be very convenient for bounding amplitudes. Second, these
functions behave nicely with respect to the convolution product:
\beq
\int \extd g K_{\alpha_1}(g_1 g^{\inv}) K_{\alpha_2}(g g_2) = K_{\alpha_1 + \alpha_2} (g_1 g_2)\,.
\eeq 
Finally, when $\alpha \to 0$, we have the following asymptotic formula\footnote{This formula holds on any compact excluding $-\one$, a simple derivation can be found in the Appendix \ref{app_heat}.}:
\beq
K_{\alpha} (g) \underset{\alpha \to 0}{\sim} {N_{\alpha}}^{3} \,  \exp\left( - \frac{|g|^2}{4 \alpha}\right) \frac{\vert g \vert / 2 }{\sin(\vert g \vert / 2)}\,,
\eeq
where $|g|$ is the Riemannian distance between $g$ and the identity, and ${N_{\alpha}}^{3} \equiv \sqrt{4 \pi} \alpha^{- 3 / 2}$. In particular we will use the fact that $K_{\alpha} (\one) \sim {N_{\alpha}}^{3}$ at
small $\alpha$. $N_\alpha$ is the parameter with respect to which the $1/N$ expansion will be constructed.  

\

We define the regularized theory by the following partition function:
\bes\label{partition_vertex_reg}
\cZ_{\alpha} &\equiv& \int \extd \mu_{\cP}(\overline{\psi}, \psi) \, \e^{- V_{\alpha}[\overline{\psi}, \psi]} \\
V_{\alpha}[\overline{\psi}, \psi] &\equiv&  \lambda \int [\extd G] \,\frac{K_{\alpha}(G^{2}_{1} G^{3}_{1} G^{4}_{1})K_{\alpha}(G^{4}_{2} G^{3}_{2} G^{1}_{2})K_{\alpha}(G^{4}_{3} G^{1}_{3} G^{2}_{3})
K_{\alpha}(G^{1}_{4} G^{3}_{4} G^{2}_{4})}{K_{\alpha}(\one)} \, \nn \\
&& \; \; \times \,\psi_1^{234} \psi_2^{431} \psi_3^{412} \psi_4^{132}  \;  +  \; {\rm{c.c}}.
\ees

\

Before moving on to the analysis of the regularized amplitudes, let us discuss briefly the role of the regularization chosen.
Note, in fact, that we have chosen a symmetric regularization in the colors, hence the re-introduction of the flatness constraint around the vertex of color $1$, together with the
appropriate rescaling. The main advantage of such a symmetric regularization is that discussing bounds from different bubble factorizations will be made easier, as will be detailed shortly. 

On the one hand, this regularization is slightly different from the natural scheme one would use in edge variables (see for instance \cite{ValentinJoseph}, which also largely motivated the heat kernel
regularization used in \cite{edge}). This is because it does not correspond to a regularization of the $\delta$-functions imposing flatness of the wedges dual to edges in the GFT interaction vertex
(thus encoding the piece-wise flatness of the simplicial complexes generated in the GFT expansion). On the other hand, regularizations that make bubble factorizations explicit after transformation to
vertex variables are very natural also when these last ones are used. For example, we could use a non-symmetric regularization of the interaction 
\beq\label{nonsym_reg}
V_{\alpha}^{1}[\overline{\psi}, \psi] \equiv  \lambda \int [\extd G] \, K_{\alpha}(G^{4}_{2} G^{3}_{2} G^{1}_{2})K_{\alpha}(G^{4}_{3} G^{1}_{3} G^{2}_{3})
K_{\alpha}(G^{1}_{4} G^{3}_{4} G^{2}_{4}) \, \psi_1^{234} \psi_2^{431} \psi_3^{412} \psi_4^{132}  \;  +  \; {\rm{c.c}},
\eeq
which allows to write the amplitudes in a factorized form similar to (\ref{amplitudes_v}), with $\ell = 1$. This corresponds exactly, in edge variables, to regularizing only the $\delta$-functions
associated to edges which do not contain the vertex of color $1$:
\bes
S_{int,1}^{\alpha}[\vphib , \vphi] &=& \lambda \int [\extd g ]^9 \, K_{\alpha}(g_4 g_4'^{\inv})
K_{\alpha}(g_5 g_5'^{\inv}) K_{\alpha}(g_6 g_6'^{\inv}) \nn \\
&&\times \,\vphi_1(g_1, g_2, g_3) \vphi_2(g_3, g_5, g_4) \vphi_3(g_5', g_2, g_6) \vphi_4(g_4', g_6', g_1)\;  +  \; {\rm{c.c}}.
\ees
We see, then, that equation (\ref{partition_vertex_reg}) interpolates between the four possible regularizations of this type. 

\

The main fact that makes this form of the regularized amplitude (\ref{partition_vertex_reg}) convenient for our purposes is that, using the positivity of the heat kernel, one can bound any of the four
flatness constraints by $K_{\alpha}(\one)$, and obtain bounds on amplitudes which have the same expression as with a regularization of the non-symmetric type (\ref{nonsym_reg}), but maintaining the
symmetry among colors manifest up to this last step.

One can then show that, for any graph $\cG$ and any choice of color $\ell_1$, the regularized amplitude $\cA_{\cG}^{\alpha}$ admits the
following bound:
\beq\label{amplitudes_v_reg}
|\cA_{\cG}^{\alpha}| \leq \int [\extd G]^{\frac{3 \cN}{2}} \left( \prod_{b \in \cB_{\ell_1}} \prod_{v \in V_{b}} K_{\langle v , b \rangle \alpha}\left(
\overrightarrow{\prod_{f \in \triangle^{b}_{v}}} (G_{v}^{f})^{\epsilon^{f}_{v}}\right)  \right) \left( \prod_{f \in \cF_{\ell_1}} \delta\left( \overrightarrow{\prod_{v \in f}}
G_{v}^{f}\right)\right)\, ,
\eeq
where $\langle v , b \rangle$ denotes the number of triangles in a bubble $b$ that contain the vertex $v$.  

\

Let us now see how this formula allows to derive interesting scaling bounds. 
The general idea is that we would like to remove the remaining propagator constraints, which are in a sense non-local quantities (from the point of view of the bubbles), and
prevent us from trivially integrating the amplitudes. 
A simple way of doing it is to pick up a second color label $\ell_2 \neq \ell_1$, and bound all the flatness constraints associated to vertices of color $\ell_2$  by their value at the identity. 
In the resulting bound, all
the propagator constraints will then have an independent variable of color $\ell_2$, allowing us to trivially integrate them. 
We are finally left with two $\phi^{3}$ graphs, corresponding to the strands in
the two remaining colors $\ell_3$ and $\ell_4$. 
Now, each connected component of such a graph is dual to a vertex (of the same color) of the simplicial complex. Therefore, integrating a tree in each of these components,
and bounding the final expression by its value at the identity, we arrive at:
\beq
|\cA_{\cG}^{\alpha}| \leq 
\left( \prod_{b \in \cB_{\ell_1}} \prod_{v \in V_{b}(\ell_2)} K_{\langle v , b \rangle \alpha}\left( \one \right)  \right)
\left( \prod_{v \in V_{\ell_3} \cup V_{\ell_4}} K_{ |v| \alpha}\left( \one \right) \right)\,,
\eeq
where for any $v \in V_{\ell_3} \cup V_{\ell_4}$
\beq
|v| \equiv \sum_{b \in \cB_{\ell_1}\, , \, b \supset v} \langle v , b \rangle\,.
\eeq
is equal to the number of tetrahedra in the simplicial complex that contain $v$. Finally, remarking that:
\beq
\forall a > 0, \qquad \frac{K_{a\alpha}(\one)}{K_{\alpha}(\one)} \underset{\alpha \to 0}{\longrightarrow} a^{- 3/2}
\eeq 
we can rewrite this bounds using powers of heat kernels with the same parameter, for instance $\alpha$. This allows to show that for any constant $K$ such that
\beq
K > K_0 \equiv \left( \prod_{b \in \cB_{\ell_1}} \prod_{v \in V_{b}(\ell_2)} \langle v , b \rangle^{- 3/2}  \right)
\left( \prod_{v \in V_{\ell_3} \cup V_{\ell_4}}  |v|^{- 3/2} \right)
\eeq
we asymptotically have:
\bes\label{bound_v}
|\cA_{\cG}^{\alpha}| &\leq& K \, [K_{\alpha}(\one)]^{\gamma} \nn \\
\gamma &=& \sum_{b \in \cB_{\ell_1}} |V_{b}(\ell_2)| + |\cB_{\ell_3}| + |\cB_{\ell_4}| \,.
\ees

\

In addition to the bound (\ref{bound_v}), we will also need the following combinatorial lemma from \cite{vertex}:
\begin{lemma}\label{lemma_comb}
Let $\cG$ be a connected vacuum graph. Then
\beq
\forall \ell \neq \ell'\,, \; |\cB_{\ell'}| + |\cB_{\ell}| - {\sum_{b \in \cB_{\ell}}} |V_{b}(\ell')| \leq 1\,.
\eeq
\end{lemma}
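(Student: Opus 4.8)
The plan is to first translate the quantity $\sum_{b \in \cB_{\ell}} |V_b(\ell')|$ into a plain count of bicolored faces of $\cG$, and then to prove the resulting inequality by a connectivity argument on an auxiliary bipartite graph.

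\emph{Step 1 (reinterpretation).} Write $\{\ell, \ell', m, m'\} = \{1,2,3,4\}$. A bubble $b \in \cB_\ell$ is the $3$-colored subgraph spanned by the colors $\{\ell', m, m'\}$ on a connected set of nodes, and $V_b(\ell')$ is the set of its $2$-bubbles of color $\ell'$, i.e. the connected components of edges of colors $\{m, m'\}$ inside $b$. Since $\{m,m'\} \subset \{\ell',m,m'\}$, any connected component of $(m,m')$-colored edges of $\cG$ already lies entirely inside a single $\ell$-bubble, so summing over $b$ just recollects all the $(m,m')$-faces of $\cG$:
\beq
\sum_{b \in \cB_\ell} |V_b(\ell')| = |\cF_{mm'}| \,,
\eeq
where $\cF_{mm'}$ denotes the set of faces (bicolored $2$-bubbles) of colors $(m,m')$. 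Note this expression is symmetric under $\ell \leftrightarrow \ell'$, as it must be. The lemma is thus equivalent to
\beq
|\cF_{mm'}| \geq |\cB_\ell| + |\cB_{\ell'}| - 1 \,.
\eeq

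\emph{Step 2 (incidence multigraph).} I would encode this inequality as the connectivity of a bipartite multigraph $H$. Its left vertices are the $\ell$-bubbles $\cB_\ell$, its right vertices the $\ell'$-bubbles $\cB_{\ell'}$, and for each face $c \in \cF_{mm'}$ one draws a single edge joining the unique $\ell$-bubble containing $c$ to the unique $\ell'$-bubble containing $c$ (uniqueness holds because the color pair $\{m,m'\}$ is contained in both the $\ell$- and the $\ell'$-color sets). Then $|V(H)| = |\cB_\ell| + |\cB_{\ell'}|$ and $|E(H)| = |\cF_{mm'}|$, so the target inequality is exactly $|E(H)| \geq |V(H)| - 1$, which holds as soon as $H$ is connected.

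\emph{Step 3 (connectivity from that of $\cG$).} Each node $n$ of $\cG$ lies in a unique $(m,m')$-face $c(n)$, hence corresponds to a unique edge of $H$. I claim that adjacent nodes of $\cG$ yield incident (or identical) edges of $H$: if $n_1 \sim n_2$ along a color-$m$ or color-$m'$ edge they lie in the same face, so $c(n_1) = c(n_2)$; if they are joined by a color-$\ell$ edge they lie in a common $\ell'$-bubble, so $c(n_1)$ and $c(n_2)$ share a right vertex of $H$; symmetrically a color-$\ell'$ edge forces a shared left vertex. Since $\cG$ is connected, chaining these relations along any path shows all faces lie in one component of $H$, and as every bubble contains at least one node it meets some face, so every vertex of $H$ is covered; hence $H$ is connected and $|E(H)| \geq |V(H)| - 1$, proving the lemma.

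The only genuinely delicate point is Step 1: one must check that the $2$-bubbles internal to a $3$-bubble are \emph{literally} the ambient bicolored faces, with no further identification or splitting, and that each such face is counted exactly once across the sum. Once this is secured, the rest is the standard fact that a connected (multi)graph has at least as many edges as vertices minus one.
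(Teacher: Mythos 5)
Your proof is correct and is essentially the paper's own argument: the paper likewise constructs a bipartite connectivity graph on $\cB_{\ell} \sqcup \cB_{\ell'}$ (edges recording which $\ell'$-bubbles appear as dual vertices inside each $\ell$-bubble), deduces its connectedness from that of $\cG$, and concludes with the tree bound (number of edges) $\geq$ (number of vertices) $-1$. The only cosmetic difference is that you use a multigraph with one edge per $(m,m')$-face, so that $\sum_{b \in \cB_{\ell}} |V_b(\ell')|$ equals the edge count exactly, whereas the paper uses a simple graph together with the inequality $|V_b(\ell')| \geq$ (number of lines ending on $b$); both give the same bound.
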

\begin{proof}
Choose two colors $\ell \neq \ell'$. From $\cG$ we construct a connectivity graph $\cC_{\ell, \ell'}(\cG)$, whose vertices are the bubbles of color $\ell$ and $\ell'$. Then for any $b' \in \cB_{\ell'}$ and $b \in \cB_{\ell}$ we draw a line between them if and only if $b$ has a vertex dual to $b'$ in its triangulation. We call $L$ the number of lines of $\cC_{\ell, \ell'}(\cG)$, and $N$ its number of vertices. 
Now remark that the fact that $\cG$ is connected implies that $\cC_{\ell, \ell'}(\cG)$ is also connected. In fact, the bubbles of color $\ell'$ are all connected in $\cG$ by lines of color $\ell'$. But these lines are themselves part of bubbles of color $\ell$, which means that two bubbles of color $\ell'$ are connected if and only if their dual vertices appear in a same bubble of color $\ell$, that is if and only if they are connected to a same element in the graph $\cC_{\ell, \ell'}(\cG)$. So $\cC_{\ell, \ell'}(\cG)$ is connected. A maximal tree in this graph has $N -1$ lines, which implies the simple inequality: $N- 1 \leq L$.
To conclude, first notice that by construction $N$ is equal to $|\cB_{\ell'}| + |\cB_{\ell}|$. Still by construction, for any $b \in \cB_{\ell}$, $|V_{b}(\ell')|$ has to be greater than the number of lines ending on $b$ in $\cC_{\ell, \ell'}(\cG)$. Therefore:
\beq
|\cB_{\ell'}| + |\cB_{\ell}| - {\sum_{b \in \cB_{\ell}}} |V_{b}(\ell')| \leq N - L \leq 1\,.
\eeq
\end{proof}

\

In the following paragraphs, two kinds of bounds on the divergence degree $\gamma_{\cG}$ of a connected vacuum graph $\cG$ will be provided. The notion of divergence degree we adopt is:
\beq
\gamma_{\cG} = \inf\{ \gamma \in \mathbb{R} \, / \,  \lim (K_{\alpha}(\one)^{- \gamma} \cA_{\cG}^{\alpha}) < + \infty \}\,.
\eeq 
which measures the divergence behavior of a graph $\cG$ in units of $K_{\alpha}(\one) \sim {N_\alpha}^3$.	
	
	
		\subsection{Topological singularities}

\subsubsection{Bubble bounds}
	
Let us first derive the bubble bounds, which allow to analyze the behavior of topologically singular configurations. These were first derive in \cite{vertex}, but we follow here the more direct proof of \cite{edge}. Starting from (\ref{bound_v}), the derivation is straightforward. 
As a first step, we simply apply Lemma \ref{lemma_comb} with $(\ell , \ell') = (\ell_1 , \ell_3)$ and $(\ell , \ell') = (\ell_1 , \ell_4)$. This
gives:
\bes
\gamma_{\cG} &\leq& 2 - 2 |\cB_{\ell_{1}}| + {\sum_{b \in \cB_{\ell_1}}} \left(|V_{b}(\ell_2)| + |V_{b}(\ell_3)| + |V_{b}(\ell_4)|\right)\\
&=& 2 - 2 |\cB_{\ell_{1}}| + {\sum_{b \in \cB_{\ell_1}}} |V_{b}|\,.
\ees
We then use the definition of the genus of a bubble $b \in \cB_{\ell}$
\beq
2 - 2 g_b \equiv |V_{b}| - |E_{b}| + |F_{b}|\,,
\eeq
and the combinatorial relation between the number of edges and faces
\beq
2 |E(b)| = 3 |F(b)|
\eeq
to write:
\beq
{\sum_{b \in \cB_{\ell_1}}} |V_{b}| = 
{\sum_{b \in \cB_{\ell_1}}} \left(2 - 2 g_b + \frac{|F_{b}|}{2}\right) = 
2 |\cB_{\ell_1}| + \frac{\cN}{2} - 2 {\sum_{b \in \cB_{\ell_1}}} g_b \,.
\eeq
Therefore:
\beq
\gamma_{\cG} \leq 2 + \frac{\cN}{2} - 2 {\sum_{b \in \cB_{\ell_1}}} g_b \,.
\eeq
In order to obtain a uniform bound in the number of interaction vertices (dual tetrahedra), one must rescale the GFT coupling constant with appropriate powers of $N_\alpha$, namely:
\beq\label{rescaling_v}
\lambda \to \frac{\lambda}{\sqrt{ K_{\alpha} (\one) }} = \frac{\lambda}{ {N_{\alpha}}^{3/2} } \,. 
\eeq

We summarize this result in the following proposition:
\begin{proposition}\label{propbubble_v}
Using the rescaling (\ref{rescaling_v}), the divergence degree $\gamma_{\cG}$ of any connected vacuum graph $\cG$ of the colored Boulatov model verifies, for any color $\ell$:
\beq
\gamma_{\cG} \leq 2 - 2 {\sum_{b \in \cB_{\ell}}} g_b \,.
\eeq 
\end{proposition}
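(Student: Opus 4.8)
The plan is to read the divergence degree straight off the general scaling bound (\ref{bound_v}) and then reshape its right-hand side into the topological sum $\sum_b g_b$ using only the combinatorics of the bubbles. Fix an arbitrary color $\ell_1$ (playing the role denoted $\ell$ in the statement) and apply (\ref{bound_v}). Since $\gamma_\cG$ is defined as the infimum of exponents $\gamma$ for which $K_\alpha(\one)^{-\gamma}\cA_\cG^\alpha$ stays bounded, the inequality $|\cA_\cG^\alpha|\le K\,[K_\alpha(\one)]^\gamma$ immediately gives, in the unrescaled theory, $\gamma_\cG \le \sum_{b\in\cB_{\ell_1}}|V_b(\ell_2)| + |\cB_{\ell_3}| + |\cB_{\ell_4}|$. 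The whole strategy hinges on turning the two ``global'' bubble counts $|\cB_{\ell_3}|$ and $|\cB_{\ell_4}|$ into per-bubble data attached to the color-$\ell_1$ bubbles.

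First I would invoke Lemma \ref{lemma_comb} twice, with $(\ell,\ell')=(\ell_1,\ell_3)$ and $(\ell_1,\ell_4)$, which for a connected vacuum graph yields $|\cB_{\ell_3}| \le 1 - |\cB_{\ell_1}| + \sum_{b\in\cB_{\ell_1}}|V_b(\ell_3)|$ and the analogous bound for $\ell_4$. Substituting these and recombining the three color contributions via $|V_b(\ell_2)|+|V_b(\ell_3)|+|V_b(\ell_4)| = |V_b|$ (a color-$\ell_1$ bubble carries only vertices of the three remaining colors) gives
\beq
\gamma_\cG \le 2 - 2|\cB_{\ell_1}| + \sum_{b\in\cB_{\ell_1}}|V_b|\,.
\eeq

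Next I would feed in the topology of the bubbles. Each $b\in\cB_{\ell_1}$ is a closed orientable triangulated surface, so its genus is well defined and Euler's relation reads $|V_b|-|E_b|+|F_b| = 2-2g_b$; together with the triangle/edge incidence $2|E_b|=3|F_b|$ this gives $|V_b| = 2 - 2g_b + \tfrac12|F_b|$. Summing over the color-$\ell_1$ bubbles and using that the triangles of these link surfaces are in bijection with the nodes of $\cG$ --- each tetrahedron contains exactly one vertex of color $\ell_1$ and hence contributes a single triangle to a single bubble, so $\sum_{b\in\cB_{\ell_1}}|F_b|=\cN$ --- I would obtain $\sum_b|V_b| = 2|\cB_{\ell_1}| + \tfrac{\cN}{2} - 2\sum_b g_b$, and therefore $\gamma_\cG \le 2 + \tfrac{\cN}{2} - 2\sum_{b\in\cB_{\ell_1}}g_b$.

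The final step is to absorb the non-uniform term $\cN/2$ through the coupling-constant rescaling (\ref{rescaling_v}): each of the $\cN$ interaction nodes then carries a factor $\lambda/\sqrt{K_\alpha(\one)}$, so the amplitude picks up a global factor $K_\alpha(\one)^{-\cN/2}$ and the divergence degree drops by exactly $\cN/2$, leaving $\gamma_\cG \le 2 - 2\sum_{b\in\cB_{\ell_1}}g_b$. As $\ell_1$ was arbitrary, the bound holds for every color $\ell$. I expect the only genuinely delicate inputs to be the two results granted here --- the scaling bound (\ref{bound_v}), resting on positivity of the heat kernel and the bubble factorization (\ref{amplitudes_v}), and the connectivity argument behind Lemma \ref{lemma_comb} --- while everything above is bookkeeping; within that bookkeeping the one identity worth double-checking is $\sum_b|F_b|=\cN$, i.e. that the link surfaces triangulate the neighborhoods of the color-$\ell_1$ vertices with the tetrahedra as their triangles.
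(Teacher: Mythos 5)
Your proof is correct and follows essentially the same route as the paper: starting from the scaling bound (\ref{bound_v}), applying Lemma \ref{lemma_comb} with $(\ell_1,\ell_3)$ and $(\ell_1,\ell_4)$, converting $\sum_b |V_b|$ via the Euler characteristic and $2|E_b|=3|F_b|$, and absorbing the residual $\cN/2$ through the rescaling (\ref{rescaling_v}). The only difference is cosmetic --- you explicitly justify the identity $\sum_{b\in\cB_{\ell_1}}|F_b|=\cN$ (each tetrahedron contributing exactly one triangle to exactly one color-$\ell_1$ bubble), which the paper uses without comment.
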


As an immediate corollary of the previous proposition, one can give a bound in terms of the number of pointlike singularities of a given color:
\begin{corollary}
With the rescaling of the coupling constant (\ref{rescaling_v}), the divergence degree $\gamma_{\cG}$ of any connected vacuum graph $\cG$ verifies, for any color $\ell$:
\beq
\gamma_{\cG} \leq 2 (1 - N^{s}_{\ell}) \,,
\eeq
where $N^{s}_{\ell}$ is the number of singular vertices of color $\ell$. 
\end{corollary}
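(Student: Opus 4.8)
The plan is to derive this directly from Proposition \ref{propbubble_v}, whose bound $\gamma_{\cG} \leq 2 - 2 \sum_{b \in \cB_{\ell}} g_b$ is already available for every color $\ell$. The only thing that needs to be supplied is the translation between the purely combinatorial quantity appearing in that proposition, namely the sum of the genera of the bubbles of color $\ell$, and the geometric count $N^s_\ell$ of topologically singular vertices of that color.

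First I would recall the interpretation of bubbles established earlier in the chapter: each $3$-bubble $b \in \cB_\ell$ is a closed orientable triangulated surface, dual to a vertex $v_b$ of the simplicial complex, and it encodes the topology of the neighborhood of $v_b$. A vertex is regular precisely when the corresponding bubble is a sphere, i.e. $g_b = 0$, and it is topologically singular precisely when $g_b \geq 1$. Hence, by definition, $N^s_\ell$ is exactly the number of bubbles $b \in \cB_\ell$ with $g_b \geq 1$.

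The key step is then the elementary inequality, relying only on the fact that each $g_b$ is a non-negative integer:
\beq
\sum_{b \in \cB_\ell} g_b \;\geq\; \sum_{\substack{b \in \cB_\ell \\ g_b \geq 1}} g_b \;\geq\; \sum_{\substack{b \in \cB_\ell \\ g_b \geq 1}} 1 \;=\; N^s_\ell \,,
\eeq
where the first inequality drops the nonnegative contributions of the spherical bubbles, and the second uses $g_b \geq 1$ on each singular bubble. Substituting this lower bound into Proposition \ref{propbubble_v} yields $\gamma_{\cG} \leq 2 - 2 N^s_\ell = 2(1 - N^s_\ell)$, as claimed, and the statement holds for every color $\ell$ since the proposition did.

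There is no genuine obstacle here: the corollary is a one-line consequence once the dictionary ``singular vertex of color $\ell$'' $\leftrightarrow$ ``bubble of color $\ell$ of genus $\geq 1$'' is made explicit and the integrality/positivity of the genera is invoked. The only point requiring a modicum of care is to confirm that the count $N^s_\ell$ really refers to bubbles of strictly positive genus (not, say, to some finer notion of singularity), which is exactly what the earlier discussion of the homeomorphism type of vertex neighborhoods guarantees.
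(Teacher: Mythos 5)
Your proof is correct and is precisely the argument the paper intends: the paper states this corollary as an immediate consequence of Proposition \ref{propbubble_v}, using exactly the dictionary that a vertex of color $\ell$ is singular if and only if its dual bubble has genus $g_b \geq 1$, so that $\sum_{b \in \cB_\ell} g_b \geq N^s_\ell$. Nothing is missing; your write-up simply makes explicit the one-line estimate the paper leaves implicit.
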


In summary: in the perturbative expansion of the Boulatov GFT model, singular simplicial complexes are generically suppressed, uniformly in the number of tetrahedra. Provided that these bounds are saturated, we have just established the existence of a $1/N$ expansion in which the leading order graphs represent regular manifolds. We therefore now turn to the question of the optimality of the bubble bounds. 

\subsubsection{Optimality of the bounds}

In order to address this question, we need to be able to compute the exact power-counting of a sufficiently rich set of graphs. In this respect, we propose to first design elementary pieces of graphs which have one unique bubble of color $\ell$, and a certain number of external legs. We will then be able to build connected vacuum graphs with any kinds of bubbles out of these elementary graphs. Of course we want to keep the combinatorics of these elementary graphs rather simple, to be able to do exact calculations. 

It is then natural to start from minimal $3$-graphs representing $2$-dimensional orientable surfaces of a given topology. They are called \textit{canonical graphs} in the mathematical literature (see \cite{Vince_2d} and references therein). A canonical graph of genus $g$ has $2(2g+1)$ nodes. Figure \ref{bubbles_v} shows the canonical graphs of genus $0$, $1$ and their generalization to any genus $g$. We refer to \cite{Vince_2d} for proofs of these statements and further comments.

\begin{figure}[h]
\begin{center}
\includegraphics[scale=0.5]{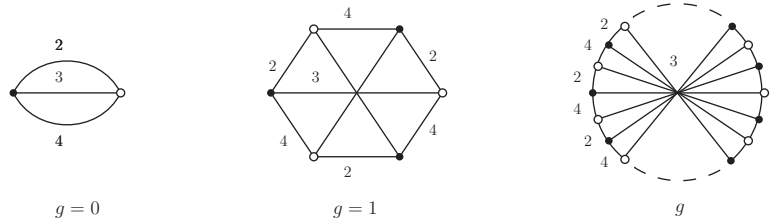} 
\caption{Canonical graphs for orientable surfaces: the sphere ($g=0$), the torus ($g=1$), and the general case of a genus $g$ surface.} \label{bubbles_v}
\end{center}
\end{figure}

We are ready to build our elementary graphs. For a given genus $g \in \mathbb{N}$ we start from the canonical graph shown in Figure \ref{bubbles_v}, and add external legs of color $1$ on every node. This gives a set of canonical bubbles with external legs, from which we can in principle construct any topology generated by the colored Boulatov model. 
To keep the combinatorics of the graphs simple, we will focus on linear chains of such bubbles. To this effect, we can first contract $2g$ pairs of external legs, keeping only two of them free, in each of these canonical bubbles. More precisely, for each genus $g$ we define the graph $\cC_{g}$ as shown in Figure \ref{bubbles_legs}. 

\begin{figure}[h]
\begin{center}
\includegraphics[scale=0.5]{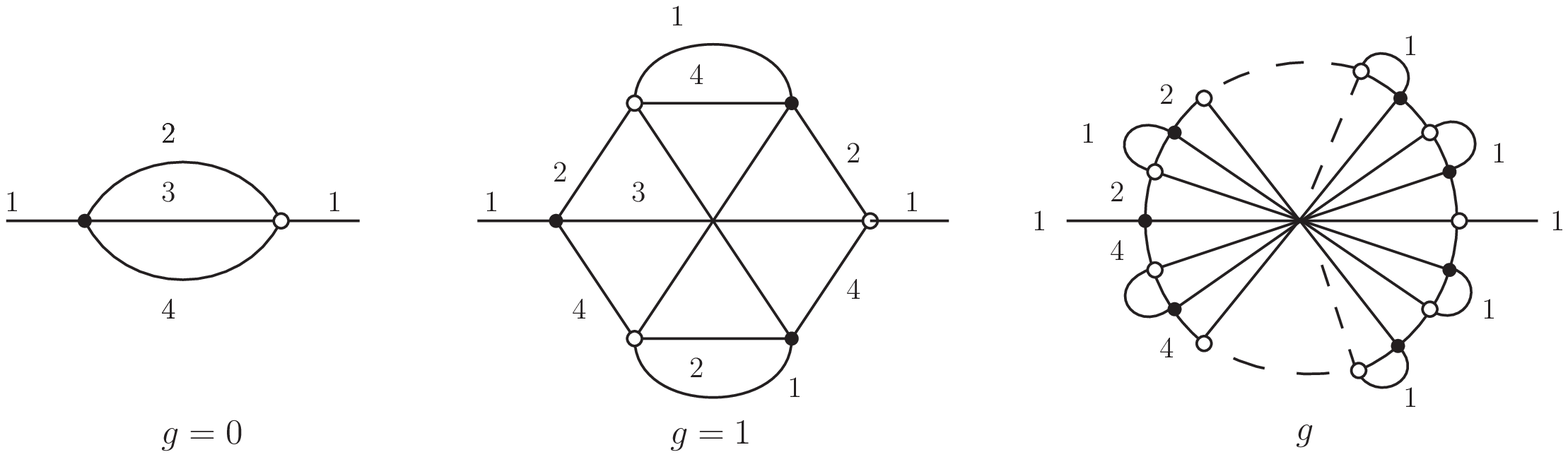} 
\caption{From left to right: $\cC_0$, $\cC_1$ and $\cC_g$.} \label{bubbles_legs}
\end{center}
\end{figure}

The reason why these graphs are useful lies in the following lemma (see Figure \ref{lemma_bubbles}):
\begin{lemma}\label{lemma_cg}
Let $\cG$ be a colored graph, with a subgraph $\cC_{g}$ for some $g \in \mathbb{N}$. Call $\cG_{\cC_{g} \rightarrow \cC_{0}}$ the graph obtained after replacement of $\cC_{g}$ by the graph $\cC_{0}$. Then, with the rescaling of the coupling constant (\ref{rescaling_v}), $\cA_{\cG}^{\alpha}$ scales like:
\beq
[K_{\alpha}(\one)]^{- 2 g} \cA_{\cG_{\cC_{g} \rightarrow \cC_{0}}}^{\alpha}\,.
\eeq
\end{lemma}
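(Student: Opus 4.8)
The plan is to exploit the locality of the replacement. Since $\cC_g$ and $\cC_0$ are glued into $\cG$ through the same interface --- namely two external legs of color $1$ --- I would work with the color-$1$ factorized form (\ref{amplitudes_v}) of the regularized amplitude, in which the integrand splits into the contribution of the color-$1$ bubble carried by $\cC_g$ (a function of the two holonomies flowing along its external legs) times a remainder that, once the bubble is integrated out at fixed external data, is word-for-word identical in $\cG$ and in $\cG_{\cC_g\to\cC_0}$. This reduces the statement to comparing the two effective two-point kernels $\cK_{\cC_g}$ and $\cK_{\cC_0}$ obtained by integrating out all internal holonomies of the canonical bubble at fixed external legs, and to showing $\cK_{\cC_g}=[K_\alpha(\one)]^{-2g}\,\cK_{\cC_0}$ asymptotically, with the same dependence on the external variables.

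I would then separate the comparison into a coupling-constant part and a geometric part. The canonical graph of genus $g$ has $2(2g+1)$ nodes, so $\cC_g$ carries exactly $4g$ more interaction vertices than $\cC_0$; under the rescaling (\ref{rescaling_v}) each vertex contributes a coupling $\lambda/\sqrt{K_\alpha(\one)}$, so the surplus vertices produce precisely $[K_\alpha(\one)]^{-4g/2}=[K_\alpha(\one)]^{-2g}$. It then remains to show that the remaining unrescaled integration over internal holonomies carries the same power of $K_\alpha(\one)$ for $\cC_g$ as for $\cC_0$. The structural reason to expect this is that the leading power of $K_\alpha(\one)$ of a bubble integration is controlled by the free loop sums sitting at the surface vertices in (\ref{amplitudes_v}), and a short Euler-characteristic count shows that every canonical graph, whatever its genus, triangulates its surface with exactly three vertices: with $|F_b|=4g+2$ triangles, $2|E_b|=3|F_b|$ and $|V_b|-|E_b|+|F_b|=2-2g$ one gets $|V_b|=3$, independently of $g$. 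Hence the bare geometric powers of $\cC_g$ and $\cC_0$ should coincide and the whole ratio comes from the coupling rescaling.

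To turn this counting into an actual identity of kernels I would integrate the internal variables explicitly in the heat-kernel regularization, using positivity and the convolution property $\int \extd g\, K_{\alpha_1}(g_1 g^{\inv}) K_{\alpha_2}(g g_2)=K_{\alpha_1+\alpha_2}(g_1 g_2)$ to collapse the chains of flatness kernels running through the bubble. It is cleanest to organize the computation inductively on handles: the canonical genus-$g$ graph (Figure \ref{bubbles_v}) is the genus-$(g-1)$ graph carrying one extra handle block, and I would verify that integrating out the internal holonomies of a single handle reproduces the genus-$(g-1)$ kernel, the residual heat-kernel time factors being reabsorbed into $O(1)$ prefactors through the asymptotic ratios $K_{a\alpha}(\one)/K_\alpha(\one)\to a^{-3/2}$, which do not affect the power. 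Combined with the coupling contribution this gives $\cK_{\cC_g}=[K_\alpha(\one)]^{-2g}\cK_{\cC_0}\,(1+o(1))$, and reinserting this local factor into the identical remainder integral yields $\cA_\cG^\alpha=[K_\alpha(\one)]^{-2g}\,\cA_{\cG_{\cC_g\to\cC_0}}^\alpha$ to leading order.

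The main obstacle is precisely the convolution bookkeeping in the inductive step. One must check that the extra internal cycles created by a handle are exactly saturated by the corresponding flatness heat kernels, so that no uncompensated factor of $K_\alpha(\one)$ survives, and --- crucially --- that the dependence of the kernel on the two external holonomies is left genuinely unchanged. Only if both hold is the ratio $\cK_{\cC_g}/\cK_{\cC_0}$ a pure power of $K_\alpha(\one)$ rather than a true deformation of the effective two-point kernel, which is what makes the substitution a clean rescaling of the full amplitude and what ultimately certifies that the bubble bound of Proposition \ref{propbubble_v} is saturated by these configurations.
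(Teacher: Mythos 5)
Your proposal follows essentially the same route as the paper's proof: both work in the color-$1$ factorized representation (\ref{amplitudes_v}), observe that the canonical genus-$g$ bubble triangulates its surface with exactly three vertices (you via the Euler characteristic, the paper by counting closed chains of strands), and attribute the factor $[K_{\alpha}(\one)]^{-2g}$ to the coupling rescaling of the $4g$ surplus vertices, while the geometric integration --- three convolved heat kernels in the external variables times $2g$ closure kernels --- is genus-independent with unchanged external dependence. The only difference is in the final verification, which the paper performs directly in one step (the $2g$ internal color-$1$ kernels integrate to $O(1)$ by heat-kernel normalization, and the three dual-vertex flatness constraints convolve into $K_{\alpha}(G_{\ell}\tilde{G}_{\ell}^{\inv})$ regardless of $g$), so the handle-by-handle induction you flag as the main obstacle is not actually needed.
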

\begin{proof}
We first adopt a representation of the amplitude adapted to the color of the bubble we want to analyze. Let us assume that $\cC_g$ has external legs of color $1$, we then use inequality (\ref{amplitudes_v}) with $\ell = 1$, which is certainly an equality in the scaling sense. Now remark that a canonical triangulation of a bubble of genus $g$ has only three vertices. Indeed, dual of vertices (of the triangulation) of color $2$ are closed chains of strands alternatively of color $3$ and $4$. It is easy to see that there is only one such closed chain of strands in the canonical graph of genus $g$. So the dual triangulation has only one vertex of color $2$, and similarly for colors $3$ and $4$. This means that in the amplitude of $\cG$, the $\cC_{g}$ subgraph contributes with only three heat kernels associated to its dual vertices, and $2g$ $\delta$-functions associated to pairings of lines of color $1$. Moreover, these pairings are such that the arguments in the heat kernels associated to the vertices simplify upon integration (thanks to the convolution formula), and the contribution of $\cG_g$ to the amplitude of $\cG$ is equal, in the scaling sense\footnote{Since we are only interested in the scaling behavior here, we do not pay attention to the combinatorial factors resulting from the convolutions of heat kernels.}, to: 
\beq  
\int [\extd H]^{6 g} \left( [K_{\alpha} (\one)]^{2-2 g - 3} K_{\alpha}(G_{2} {\tilde{G}_{2}}^\inv) K_{\alpha}(G_{3} {\tilde{G}_{3}}^\inv) K_{\alpha}(G_{4} {\tilde{G}_{4}}^\inv) \right) \left( \prod_{i=1}^{2 g} K_{\alpha}(H^{(i)}_2 H^{(i)}_3  H^{(i)}_4 )\right) \,,
\eeq
where the variables $G_{\ell'}$ and $\tilde{G}_{\ell'}$ are that of the two external legs of $\cG_g$. The $H^{(i)}_{\ell'}$ are associated to the $2 g$ remaining lines of color $1$, and can be integrated. We obtain a term:
\beq
[K_{\alpha}(\one)]^{-2 g - 1} K_{\alpha}(G_{2} {\tilde{G}_{2}}^\inv) K_{\alpha}(G_{3} {\tilde{G}_{3}}^\inv) K_{\alpha}(G_{4} {\tilde{G}_{4}}^\inv)\,,
\eeq
which reduces to
\beq
[K_{\alpha}(\one)]^{- 1} K_{\alpha}(G_{2} {\tilde{G}_{2}}^\inv) K_{\alpha}(G_{3} {\tilde{G}_{3}}^\inv) K_{\alpha}(G_{4} {\tilde{G}_{4}}^\inv)
\eeq
when $g = 0$. These two terms differ by a factor $[K_{\alpha}(\one)]^{-2 g}$, which concludes the proof.
\end{proof}

\begin{figure}[h]
\begin{center}
\includegraphics[scale=0.5]{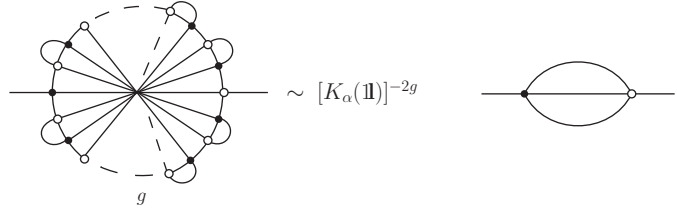} 
\caption{Graphical representation of lemma $1$.} \label{lemma_bubbles}
\end{center}
\end{figure}

This is all we need in order to saturate the bubble bounds. We call $\cC_{g_1,...,g_{n}}$ the chain of $n$ graphs $(\cC_{g_{1}}, \cdots, \cC_{g_n})$ as represented in Figure \ref{chains}. Chains of $\cC_{0}$ graphs being maximally divergent spheres \cite{scaling3d}, this suggests that the chain $\cC_{g_1,...,g_{n}}$ could be a dominant graph in the class of graphs with singularities $(g_{1}, ..., g_{n})$. So let us first compute these amplitudes.

\begin{figure}[h]
\begin{center}
\includegraphics[scale=0.5]{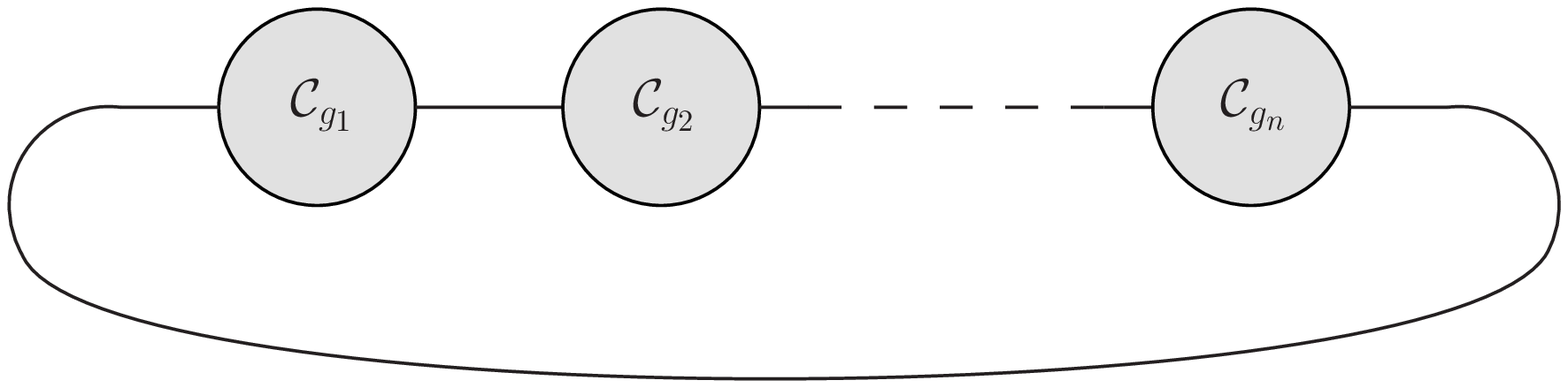} 
\caption{Chain $\cC_{g_1,...,g_{n}}$.} \label{chains}
\end{center}
\end{figure}

\begin{lemma}\label{lemma_chains} Let $n \in \mathbb{N^*}$, and $g_{1}, ..., g_{n} \in \mathbb{N}$. Then, with the rescaling of the coupling constant (\ref{rescaling_v}), $\cA_{\cC_{g_1,...,g_{n}}}^\alpha$ scales as:
\beq
[K_\alpha (\one)]^{2 - 2 {\sum_{i = 1}^{n}} g_i}\,.
\eeq 
\end{lemma}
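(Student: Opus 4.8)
The plan is to prove Lemma \ref{lemma_chains} in two moves: first collapse every genus of the chain down to genus zero using Lemma \ref{lemma_cg}, and then feed in the known scaling of the pure sphere-chain as a base case. Since the two contributions enter multiplicatively, they reproduce exactly the claimed exponent $2 - 2\sum_i g_i$.

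First I would iterate Lemma \ref{lemma_cg}. The graph $\cC_{g_1,\ldots,g_n}$ contains the $n$ pairwise disjoint subgraphs $\cC_{g_1}, \ldots, \cC_{g_n}$, so the lemma can be applied to one block at a time. Replacing $\cC_{g_1}$ by $\cC_0$ yields the chain $\cC_{0,g_2,\ldots,g_n}$ and, with the rescaling (\ref{rescaling_v}), multiplies the amplitude by $[K_{\alpha}(\one)]^{-2g_1}$ in the scaling sense. Because Lemma \ref{lemma_cg} only requires the presence of \emph{some} $\cC_g$ subgraph, and the blocks do not overlap, I can repeat this surgery for $i = 2, \ldots, n$, each step costing a factor $[K_{\alpha}(\one)]^{-2 g_i}$. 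Telescoping (equivalently, an induction on $n$ or on $\sum_i g_i$) then gives
\beq
\cA_{\cC_{g_1,\ldots,g_n}}^{\alpha} \sim [K_{\alpha}(\one)]^{-2 \sum_{i=1}^{n} g_i}\, \cA_{\cC_{0,\ldots,0}}^{\alpha}\,,
\eeq
where $\cC_{0,\ldots,0}$ is the chain of $n$ genus-zero blocks.

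It then remains to establish the base case $\cA_{\cC_{0,\ldots,0}}^{\alpha} \sim [K_{\alpha}(\one)]^{2}$. The upper bound is immediate from Proposition \ref{propbubble_v} with all $g_b = 0$, which gives $\gamma_{\cC_{0,\ldots,0}} \leq 2$. For the matching lower bound (saturation) I would rely on the result of \cite{scaling3d} that chains of $\cC_0$ blocks triangulate a sphere and are maximally divergent of degree exactly $2$. A self-contained derivation is in principle available from the factorized amplitude (\ref{amplitudes_v}) specialized to the external color $\ell = 1$, together with the heat-kernel convolution property $\int \extd g\, K_{\alpha_1}(g_1 g^{\inv}) K_{\alpha_2}(g g_2) = K_{\alpha_1 + \alpha_2}(g_1 g_2)$ and the normalization $K_{a\alpha}(\one)/K_{\alpha}(\one) \to a^{-3/2}$: each $\cC_0$ contributes three vertex heat kernels carried by its two external legs, and integrating out the internal color-$1$ propagators convolves these along the chain. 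Substituting the base case into the displayed reduction produces $[K_{\alpha}(\one)]^{2 - 2\sum_i g_i}$, as claimed.

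The main obstacle is precisely this base-case computation. Unlike the genus reduction, which is a clean local surgery fully handled by Lemma \ref{lemma_cg}, saturating the degree-$2$ bound for arbitrarily long sphere-chains demands careful bookkeeping of which propagator $\delta$-functions integrate trivially and which heat kernels survive as $K_{\alpha}(\one)$ factors, so that the degree stays uniformly equal to $2$ and does not drift with the number $\cN = 2n$ of tetrahedra. This uniformity is exactly the point of the rescaling (\ref{rescaling_v}), and it is the genuine content imported from \cite{scaling3d}; the cleanest route is therefore to cite that reference for the sphere-chain scaling rather than to reproduce the full convolution argument here.
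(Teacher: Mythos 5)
Your proof is correct and follows essentially the same route as the paper: iterate Lemma \ref{lemma_cg} to strip the genera off block by block, producing the factor $[K_\alpha(\one)]^{-2\sum_i g_i}$, and then invoke the sphere-chain scaling of $\cC_{0,\ldots,0}$ from \cite{scaling3d}. The only difference is in how the base case is handled: where you defer to the reference (after noting the upper bound from Proposition \ref{propbubble_v}), the paper also records a shortcut — with the rescaling (\ref{rescaling_v}), a $\cC_0$ block whose external strands are not paired behaves like a propagator, so all but one can be replaced by propagators, leaving the elementary melon which scales as $[K_\alpha(\one)]^2$.
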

\begin{proof}
Lemma \ref{lemma_cg} ensures that $\cC_{g_1,...,g_{n}}$ behaves like $\cC_{0,...,0}$ times $[K_\alpha (\one)]^{-2 {\sum_{i = 1}^{n}} g_i}$. As proved in \cite{scaling3d}, $\cC_{0,...,0}$ is dual to a sphere and maximally divergent. A way to see it here is to remark that with the scaling we chose for the coupling $\lambda$, a $\cC_{0}$ subgraph whose two external strands are not paired behaves like a propagator (straightforward calculation), so that we can replace all subgraphs $C_{0}$ but one by propagators. We are left with the simplest graph corresponding to a sphere, that is again the melon graph which behaves like $[K_\alpha (\one)]^{2}$. 
\end{proof}

\

This concludes the study of bubble bounds for the Boulatov model. We have confirmed that, upon rescaling of the coupling constant as given by (\ref{rescaling_v}), a $1/ N_\alpha$ expansion can be defined. The first term of this expansion, scaling like $[K_\alpha (\one)]^{2}$, only contains regular manifolds, while topological singularities are tamed according to Proposition \ref{propbubble_v}. This establishes the existence of a regime of the GFT in which the local topology of space-time emerges from a richer set of possibilities. To go further, and make contact with the way the existence of the $1/N$ expansion was first established, we focus next on another type of bounds, associated to jackets rather than bubbles.


	
		
		\subsection{Domination of melons}

\subsubsection{Jacket bounds}

The jacket bounds are crucial to the $1/N$ expansion. We will not only show that they can also be deduced from our framework, in a rather straightforward way, but also that the
vertex reformulation of the model allows to derive a stronger bound than the original one \cite{RazvanN}. This will also prove that the next-to-leading orders of the Boulatov model are most probably not populated by the same graphs as in the i.i.d. tensor model.

\

Jackets are two dimensional closed and orientable surfaces
embedded in a simplicial complex \cite{jimmy}. In the simplicial complex dual to a 4-colored graph $\cG$, we have three different jackets, each labeled by a pair $(\sigma , \sigma^{\inv})$ of cyclic
permutations of the color set \cite{FerriGagliardi, Vince_gene}. We can identify them as follows. We assign a color $(\ell \ell')$ to any edge between two vertices of colors $\ell$ and $\ell'$. Then,
to each
tetrahedron, one associates a rectangle, whose edges are so constructed: for any color $\ell$, the middle point of the edge of the tetrahedron of color $(\sigma(\ell) \sigma(\ell + 1))$ is
joined to the middle point of the edge of color $(\sigma(\ell +1) \sigma(\ell + 2))$\footnote{The addition in the color set is of course understood modulo $4$.}. An example is given in Figure \ref{jacket_v}.
Gluings of tetrahedra induce gluings of
these elementary rectangles, providing a quadrangulated surface: this is a jacket. The construction just outlined also makes clear why there are three possible jackets that can be embedded in the
simplicial complex. Also, one can show that these three jackets correspond to three possible reductions of the Boulatov model to a matrix model by reduction with respect to the diagonal gauge
invariance (closure constraint) at the level of the action \cite{jimmy}.

\begin{figure}[h]
\centering
\includegraphics[scale=0.5]{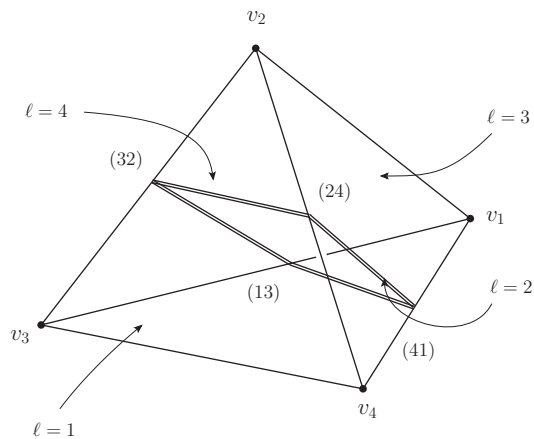}
\caption{Elementary building block of the jacket associated to the cycle $\sigma = (1\,4\,2\,3)$ (double lines).}
\label{jacket_v}
\end{figure}

\
Before moving on to the computation of the bound, we need to express the genus of a jacket $J = (\sigma , \sigma^{\inv})$ in terms of combinatorial quantities of $\cG$. This is the purpose of the
following lemma:
\begin{lemma}
The jacket of color $J = ( \sigma , \sigma^{\inv} )$ of a colored graph $\cG$ has genus:    
\beq\label{g_jacket}
g_J = 1 + \frac{1}{2} \left( |T| - \sum_{\ell} |E(\sigma(\ell) \sigma(\ell + 1))| \right) \,,
\eeq
with $T$ and $E(i)$ respectively the sets of tetrahedra and edges of color $i$ in the simplicial complex dual to $\cG$. 
\end{lemma}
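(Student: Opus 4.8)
The plan is to read off the genus of the jacket directly from its Euler characteristic, exploiting the fact (recalled just above) that a jacket is a closed orientable surface embedded in the simplicial complex. Since $\cG$ is connected and $J$ consists of the same nodes and edges as $\cG$, the surface $J$ is connected; together with orientability (guaranteed by the complex structure of the fields) this gives $\chi(J) = 2 - 2 g_J$. So it suffices to count the cells of the cellular (ribbon) structure on $J$ in terms of combinatorial data of $\cG$ and of its dual triangulation, and then solve for $g_J$.

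First I would identify the three types of cells by viewing $J$ as a ribbon graph. Its vertices are the nodes of $\cG$, its edges are the edges of $\cG$, and — by the very definition of the jacket associated to the cycle $\sigma$ — its faces are exactly the bicolored cycles of colors $(\sigma(\ell)\sigma(\ell+1))$ for $\ell=1,\dots,d+1$ (the consecutive pairs in $\sigma$, the diagonal pairs being excluded). The number of vertices equals the number of nodes $\cN$, which is the number of tetrahedra $|T|$, each node being dual to a tetrahedron. Because every node is $4$-valent and each edge carries two endpoints, the number of edges is $\tfrac{(d+1)\cN}{2}=2|T|$. Finally, invoking the bubble/duality dictionary recalled earlier — a face of $\cG$ of color $(ij)$ is dual to an edge of color $(ij)$ of the triangulation — the number of faces of $J$ is $\sum_\ell |E(\sigma(\ell)\sigma(\ell+1))|$. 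Substituting into $\chi(J)=V_J-E_J+F_J$ gives
\[
\chi(J) = |T| - 2|T| + \sum_\ell |E(\sigma(\ell)\sigma(\ell+1))| = -|T| + \sum_\ell |E(\sigma(\ell)\sigma(\ell+1))|,
\]
and equating this with $2-2g_J$ and solving yields precisely (\ref{g_jacket}).

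The argument is essentially cell-counting bookkeeping, so I expect no genuine obstacle; the only points requiring care are the correct edge count — in particular that the jacket uses \emph{all} edges of $\cG$, so that $E_J = 2|T|$ rather than some smaller number — and the clean identification of the jacket faces with triangulation edges through the duality dictionary, which is exactly where the restriction to the consecutive color pairs $(\sigma(\ell)\sigma(\ell+1))$ enters the formula. I would also make explicit the connectedness of $J$ (inherited from that of $\cG$) so that the single-component relation $\chi = 2-2g_J$ is the one that applies.
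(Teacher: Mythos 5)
Your proof is correct, and it rests on the same basic identity as the paper's — the relation $2 - 2g_J = \chi_J$ for a closed connected orientable surface — but you count the cells of the \emph{dual} decomposition. The paper works with the quadrangulated surface embedded in the triangulation: its faces are the rectangles, one per tetrahedron ($|\cF_J| = |T|$), its edges lie inside triangles ($|\cE_J| = |t| = 2|T|$), and its vertices are the midpoints of the triangulation edges of colors $(\sigma(\ell)\sigma(\ell+1))$, giving $|\cV_J| = \sum_\ell |E(\sigma(\ell)\sigma(\ell+1))|$. You instead use the ribbon-graph structure on $\cG$ itself, as in the general definition of jackets from the colored-tensor chapter: vertices are the nodes, edges are the lines, faces are the bicolored cycles of consecutive colors. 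These are dual cell structures of the same surface, so your $V_J, E_J, F_J$ are exactly the paper's $|\cF_J|, |\cE_J|, |\cV_J|$, and the Euler characteristic, being symmetric under this exchange, comes out the same. Your version has the merit of staying entirely on the graph side and of generalizing verbatim to higher rank (it is essentially the computation the paper redoes in 4d); the paper's version makes the geometric picture of the embedded surface explicit.

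One detail deserves correction. Your dictionary ``a face of $\cG$ of color $(ij)$ is dual to an edge of color $(ij)$ of the triangulation'' does not match the paper's convention: the paper colors a triangulation edge by the colors of its two endpoint \emph{vertices}, and since the triangle of color $k$ is opposite the vertex of color $k$, the edge dual to a graph face with line-colors $(ij)$ is the one shared by the triangles of colors $i$ and $j$, i.e. the edge whose endpoints carry the \emph{complementary} pair of colors. Your final formula survives this slip only because the set of consecutive pairs of a $4$-cycle is closed under complementation — the complement of $(\sigma(\ell)\sigma(\ell+1))$ is $(\sigma(\ell+2)\sigma(\ell+3))$, again a consecutive pair — so the sum $\sum_\ell |E(\sigma(\ell)\sigma(\ell+1))|$ is unchanged. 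You should either state the dictionary with the complementary colors or note this invariance explicitly; as written, the same argument applied to a single color pair (rather than the sum) would give a wrong identification.
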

\begin{proof}
The jacket $J$ is an orientable surface, hence its genus is related to its Euler characteristic by:
\beq
2 - 2 g_J = \chi_J = |\cF_J| - |\cE_J| + |\cV_J| \,,
\eeq 
where $\cF_J$,$\cE_J$ and $\cV_J$ are the sets of faces, edges and vertices of $J$. But, by construction:
\beq
|\cF_J| = |T| \;, \qquad |\cE_J| = |t| \;, \qquad |\cV_J| = \sum_{\ell} |E(\sigma(\ell) \sigma(\ell + 1))| \;,
\eeq
where $t$ is the set of triangles in the simplicial complex. Since each triangle is shared by two tetrahedra, we also have $|t| = 2 |T|$, and the result follows.
\end{proof}

\

Let us now consider a connected vacuum graph $\cG$, and one of its jackets $J = (\sigma , \sigma^{\inv})$. We can use the previous lemma to write $g_J$ with data suitable to bubble factorizations.
Indeed, for any distinct colors $\ell$ and $\ell'$, one immediately has:
\beq
|E(\ell \ell')| = \sum_{b \in \cB_{\ell}} |V_{b}(\ell')|\,, 
\eeq 
since to any bubble $b$ dual to the vertex $v_\ell$, and vertex $v_{\ell'} \in V_{b}(\ell')$, one uniquely associates the edge $(v_{\ell} v_{\ell'}) \in E(\ell \ell')$.
Therefore:
\beq
g_J = 1 + \frac{\cN}{2} - \frac{1}{2} \sum_{\ell} \sum_{b \in \cB_{\sigma(\ell)}} |V_{b}(\sigma(\ell + 1))|  \,.
\eeq

We can now try to make $g_J$ appear in the bounds we computed so far. Applying Lemma \ref{lemma_comb} to (\ref{bound_v}), with $(\ell , \ell') = (\ell_3 , \ell_3)$, we obtain:
\beq\label{ineq_3d}
\gamma_\cG \leq 1 + {\sum_{b \in \cB_{\ell_1}}} |V_{b}(\ell_2)| + {\sum_{b \in \cB_{\ell_3}}} |V_{b}(\ell_4)|\,.
\eeq
Averaging this expression and
\beq
\gamma_\cG \leq 1  + {\sum_{b \in \cB_{\ell_2}}} |V_{b}(\ell_3)| + {\sum_{b \in \cB_{\ell_4}}} |V_{b}(\ell_1)|\,,
\eeq
then yields:
\beq
\gamma_\cG \leq 2 + \frac{\cN}{2} - g_J\,, 
\eeq
with $\sigma = (\ell_1 \, \ell_2 \, \ell_3 \, \ell_4)$. As for the bubble bound, a uniform jacket bound in the number of GFT interaction vertices (tetrahedra of the simplicial complex) is
obtained by a simple rescaling of the coupling constant, which turns out to be identical to (\ref{rescaling_v}). This is not surprising, since from our analysis of the bubble bounds, this rescaling is the unique one supporting a non-trivial $1/N$ expansion. We
summarize our result in the following proposition: 
\begin{proposition}
With the rescaling of the coupling constant (\ref{rescaling_v}), the divergence degree $\gamma_{\cG}$ of any connected vacuum graph $\cG$ verifies, for {\it any} of its jackets $J$:
\beq
\gamma_{\cG} \leq 2 - g_J \,.
\eeq 

In particular, the following bound holds:
\beq\label{decay_jacket_v}
\gamma_{\cG} \leq 2 - \sup_{J} g_J \,.
\eeq 
\end{proposition}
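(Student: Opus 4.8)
The plan is to feed the general scaling bound (\ref{bound_v}) into the two combinatorial facts already established—Lemma \ref{lemma_comb} and the jacket genus formula (\ref{g_jacket})—and then to exploit the freedom in the choice of colors to symmetrize the estimate over the cyclic permutation $\sigma$ labelling the jacket $J$. The governing idea is that a single application of (\ref{bound_v}) is ``off-center'' with respect to $\sigma$, and that averaging two color-shifted copies reconstitutes exactly the combination of bubble data that the jacket genus is built from.

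First I would fix a jacket $J = (\sigma, \sigma^{\inv})$ with $\sigma = (\ell_1\,\ell_2\,\ell_3\,\ell_4)$ and apply (\ref{bound_v}) with the assignment $(\ell_1, \ell_2, \ell_3, \ell_4)$. Using Lemma \ref{lemma_comb} with $(\ell, \ell') = (\ell_3, \ell_4)$ to trade $|\cB_{\ell_3}| + |\cB_{\ell_4}|$ for $1 + \sum_{b \in \cB_{\ell_3}} |V_b(\ell_4)|$ produces inequality (\ref{ineq_3d}),
\beq
\gamma_\cG \leq 1 + \sum_{b \in \cB_{\ell_1}} |V_b(\ell_2)| + \sum_{b \in \cB_{\ell_3}} |V_b(\ell_4)| \, .
\eeq
Since the construction of (\ref{bound_v}) is invariant under relabelling the colors, the very same reasoning applied to the cyclically shifted assignment also gives the valid bound
\beq
\gamma_\cG \leq 1 + \sum_{b \in \cB_{\ell_2}} |V_b(\ell_3)| + \sum_{b \in \cB_{\ell_4}} |V_b(\ell_1)| \, .
\eeq

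Averaging these two upper bounds, the four bubble sums assemble precisely into the cyclic combination $\sum_{\ell} \sum_{b \in \cB_{\sigma(\ell)}} |V_b(\sigma(\ell + 1))|$ that appears in the rewriting of (\ref{g_jacket}). Substituting $\sum_{\ell} \sum_{b \in \cB_{\sigma(\ell)}} |V_b(\sigma(\ell + 1))| = 2 (1 + \cN/2 - g_J)$ collapses the average to
\beq
\gamma_\cG \leq 2 + \frac{\cN}{2} - g_J \, .
\eeq
Finally I would invoke the rescaling (\ref{rescaling_v}): each of the $\cN$ interaction vertices carries one rescaled coupling $\lambda / N_\alpha^{3/2}$, so the contribution of $\cG$ acquires an overall factor $[K_\alpha(\one)]^{-\cN/2}$, shifting the divergence degree by $-\cN/2$ and cancelling the $\cN/2$ above to leave $\gamma_\cG \leq 2 - g_J$. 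As this holds for every jacket, selecting the one of largest genus yields (\ref{decay_jacket_v}).

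The individual steps are short, so the only genuinely delicate point is the bookkeeping in the averaging: one must check that the two color assignments produce \emph{complementary} bubble sums that exactly recompose the cyclic combination dictated by $\sigma$, and that the normalization and sign in (\ref{g_jacket}) match, so that the constant and $\cN$-dependent terms come out as $2 + \cN/2 - g_J$ rather than some shifted version. A secondary care is to confirm that the coupling rescaling contributes the power $[K_\alpha(\one)]^{-\cN/2}$ and no other, which follows from $K_\alpha(\one) \sim N_\alpha^3$ together with the fact that $\cN$ counts exactly the coupling insertions.
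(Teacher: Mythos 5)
Your proof is correct and is essentially the paper's own argument: derive (\ref{ineq_3d}) from the scaling bound (\ref{bound_v}) combined with Lemma \ref{lemma_comb}, average it with the cyclically shifted copy so the four bubble sums recompose the cyclic combination in the rewritten genus formula (\ref{g_jacket}), and let the rescaling (\ref{rescaling_v}) contribute $[K_{\alpha}(\one)]^{-\cN/2}$ to cancel the $\cN/2$. The only divergence is cosmetic: where the paper (by an evident typo) invokes Lemma \ref{lemma_comb} with $(\ell,\ell') = (\ell_3,\ell_3)$, you correctly use $(\ell_3,\ell_4)$.
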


We note that, as anticipated, this bound is stronger than the usual jacket bound, proven in \cite{RazvanN}:
\beq
\gamma_{\cG} \leq 2 - \frac{1}{3} \sum_{J} g_J \,.
\eeq 

\

We know already that, if a 3d complex has a jacket with genus zero, the complex is of spherical topology (trivial fundamental group) \cite{Gurau:2011xp}. Therefore, we conclude that the leading order graphs in the $1/N$ expansion not only index regular manifolds, as could be deduced from the bubble bounds, but also spherical ones. Moreover, since these leading order graphs also have degree $0$, they must be melonic, which proves that the same graphs populate the leading orders of the colored Boulatov model and of the i.i.d. rank-$3$ colored tensor model. We on the contrary expect to find differences at some point in the sub-dominant contributions, since the decay (\ref{decay_jacket_v}) is stronger than a decay in the degree of the graphs. 
 
We conclude by noting also that the same bound could give further insights into the topology of the higher order terms of the same expansion, due to the following fact.
Just as we know that $g = \inf_{\cG , J} g_J$ is a topological invariant, called regular genus, similarly $\tilde{g} = \inf_{\cG} \sup_{J \in \cG} g_J$ is also well-defined, and by definition a topological
invariant (the $\inf$ is taken over the equivalence class of graphs representing a given topology). If $\tilde{g}$ and $g$ are not identical, then our results allow to derive a non-trivial
topological bound in terms of $\tilde{g}$.

\subsubsection{1-dipole contractions}

We conclude this presentation of the $1/N$ expansion of the Boulatov model with a discussion of $1$-dipole contractions, as seen from the vertex variables. Dipole moves are used in crystallization theory \cite{FerriGagliardi} to map colored graphs encoding the same topology to one another. In tensor models, $1$-dipoles are of particular relevance, because they leave the amplitudes invariant (see for instance \cite{Gurau:2011xp}). In the Boulatov model, they only conserve the scalings of the amplitudes with the cut-off, not their exact values, which is the property on which Gurau relied in its derivation of the $1/N$ expansion \cite{RazvanN}. In our approach, it is again thanks to this approximate invariance that we can verify that melonic graphs indeed all saturate the jacket bounds, and are therefore the leading order graphs. A $1$-dipole of color $\ell$ is a line of same color which joins two nodes having no other line in common (see the left side of Figure \ref{dipole_largeN}). It is moreover said to be non-degenerate when it separates two distinct bubbles (of color $\ell$). Its contraction amounts to deleting the color-$\ell$ line and its two boundary nodes, and reconnecting the $6$ open lines thus created according to their colors. See Figure \ref{dipole_largeN}. In this chapter, we are only interested in non-degenerate $1$-dipoles, so we will often omit the qualifier.   

\begin{figure}[h]
\begin{center}
\includegraphics[scale=0.7]{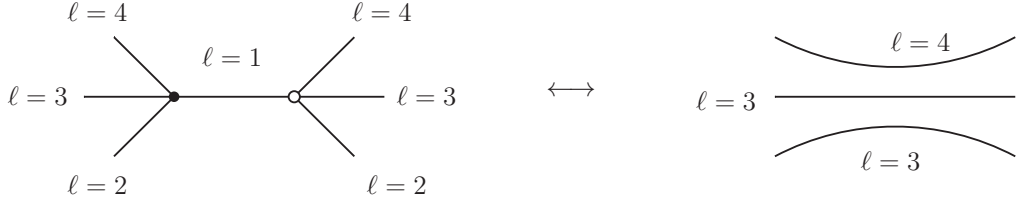} 
\caption{A $1$-dipole of color $1$ (left), and its contraction (right).} \label{dipole_largeN}
\end{center}
\end{figure}

\
Let us consider a graph $\cG$, and two different bubbles $b_1$ and $b_2$ in $\cB_{\ell}$, glued through a triangle $f_{0} \in F_{b_1} \cap F_{b_2}$. We make in addition the assumption that at least one of them, say $b_1$, is a sphere. The contribution of the two bubbles to the amplitude $\cA_\cG^\alpha$ is given by a factor of the form:
{\footnotesize\bes
\left( [K_\alpha (\one)]^{2 - |V_{b_1}|} \prod_{v \in V_{b_1}, v \notin f_0} K_{\alpha} \left( \overrightarrow{\prod_{f \in \triangle^{b_1}_{v}}} (G_{v}^{f})^{\epsilon^{f}_{v}}\right) \right)  \left( [K_{\alpha} (\one)]^{2-2 g_{b_2} - |V_{b_2}|} \prod_{v \in V_{b_2}, v \notin f_0} K_\alpha \left( \overrightarrow{\prod_{f \in \triangle^{b_2}_{v}}} (G_{v}^{f})^{\epsilon^{f}_{v}}\right)\right) \nn \\
\times \int \extd G_{u_1}^{f_0} \extd G_{u_2}^{f_0} \extd G_{u_3}^{f_0}  \, \delta \left( G_{u_{1}}^{f_0} G_{u_2}^{f_0} G_{u_3}^{f_0}\right) \prod_{i = 1}^{3} K_{\alpha}\left( \overrightarrow{\prod_{f \in \triangle^{b_1}_{u_i}}} (G_{u_i}^{f})^{\epsilon^{f}_{v}} \right) K_{\alpha}\left( \overrightarrow{\prod_{f \in \triangle^{b_2}_{u_i}}} (G_{u_i}^{f})^{\epsilon^{f}_{v}} \right) \, , \nn
\ees}
where $u_1$, $u_2$ and $u_3$ are the vertices of $f_0$. Before integrating with respect to $G_{u_{i}}^{f_0}$, we would like to get rid of $\delta\left( G_{u_{1}}^{f_0} G_{u_2}^{f_0} G_{u_3}^{f_0}\right)$, which imposes the closure of the triangle $f_0$. Because we are only interested in the scaling of $\cA_\cG^\alpha$, we can assume that this constraint is also imposed by a heat kernel at time of order $\alpha$. Using the other closure and flatness constraints in $b_1$, we see that it is equivalent to saying that the holonomy along a path circling $f_0$ in $b_1$ has to be flat (see Figure \ref{dipole_contraction_v}), again up to the width of the heat kernels. Iterating the process shows that this path can actually be deformed arbitrarily. But $b_1$ is a sphere, hence simply connected. We can therefore contract the path around another triangle of $b_1$, and write the constraint  $G_{u_{1}}^{f_0} G_{u_2}^{f_0} G_{u_3}^{f_0} = \one$ as the closure condition in this triangle. We thus see that $K_\alpha \left( G_{u_{1}}^{f_0} G_{u_2}^{f_0} G_{u_3}^{f_0}\right)$ is redundant and can be set to $K_\alpha \left( \one\right)$ without changing the scaling of the integral.
\begin{figure}[h]
\begin{center}
\includegraphics[scale=0.5]{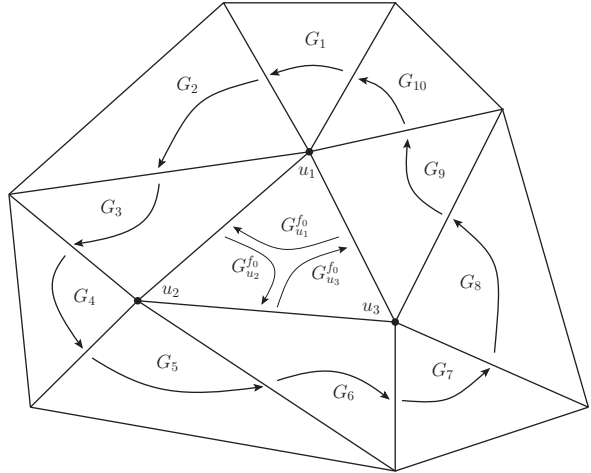} 
\caption{Triangle $f_0$ and its neighbors in $b_1$. Using flatness around $u_1$, $u_2$, $u_3$, and constraints in the three triangles sharing an edge with $f_0$, we show that $G_{u_{1}}^{f_0} G_{u_2}^{f_0} G_{u_3}^{f_0} = G_1 G_2 \cdots G_9 G_{10}$.} \label{dipole_contraction_v}
\end{center}
\end{figure}
We can now safely integrate the $G_{u_{i}}^{f_0}$ variables, which corresponds to removing $f_{0}$ and taking the connected sum of $b_1$ and $b_2$. We denote this connected sum $b_1 \# b_2$ and refer for instance to \cite{Vince_2d, francesco} for more details. This leads to:
\bes
\left( [K_\alpha (\one)]^{2 - |V_{b_1}| + 2 - |V_{b_2}| - 2 g_{b_2} } \prod_{v \in V_{b_1 \# b_2}} K_\alpha \left( \underset{f \in \triangle^{b_1 \# b_2}_{v}}{\overrightarrow{\prod}} (G_{v}^{f})^{\epsilon_{f}}\right) \right)  \times  K_\alpha \left( \one \right) \,.\nn
\ees
But we also have that (the second equality crucially depends on $b_1$ being a sphere):
\bes
|V_{{b_1} \# {b_2}}| = |V_{b_1}| + |V_{b_2}| - 3 \nn \\
g_{{b_1} \# {b_2}} = g_{b_2} \, ,
\ees
so that in the end the contribution of the two bubbles is that of their connected sum. We recover the relation found in \cite{RazvanN}, between the amplitude of the initial graph $\cG$ and the one after absorption of the planar bubble $b_1$ in $b_2$, noted $\cG_{b_{1} \rightarrow b_{2}}$:
\beq
\cA_{\cG}^{\alpha} \approx \cA_{\cG_{b_{1} \rightarrow b_{2}}}^{\alpha}\,,
\eeq
where the $\approx$ sign is meant to denote that the equality holds only in the scaling sense.

This shows that the amplitudes are invariant under $1$-dipole contractions in the power-counting sense. This can be applied to melonic graphs, which can be reduced to the elementary melon after a complete set of $1$-dipoles have been contracted. It is then easy to check that this elementary melon scales like $[K_\alpha (\one)]^2$, and thereby confirm that the leading order contributions of the $1/N$ expansion exactly consist in the melonic sector.

		
		
		
	\section{Colored Ooguri model}
		
		The purpose of this section is to extend the previous results to four dimensions, namely to the colored version of the Ooguri model \cite{Ooguri , cboulatov}. Just like the Boulatov model, it is a GFT quantization of $BF$ theory, this time with the group $\SO(4)$ (or $\SO(3) \times \SO(3)$ for simplicity). Because 4d gravity models are constructed by constraining the data appearing in such
a model, either at the level of the GFT action or directly at the level of its Feynman amplitudes, we see the results presented in this section as a first step towards performing a similar analysis in
4d gravity models.

\
We will first show that the colored Ooguri GFT model, usually formulated in terms of group-theoretic data associated to triangles in $4$-dimensional simplicial complexes \cite{cboulatov}, can equivalently be written with data associated to edges in the same simplicial complexes. 
Similarly to the Boulatov model, such a formulation will allow to factorize the amplitudes in terms of bubbles (here the $4$-bubbles), and to use new computation tools to derive bounds on the
regularized amplitudes. 
The two main results of this construction will be again: a) a bound on topologically singular vertices, resulting in a clear separation between leading order graphs corresponding to regular manifolds
and sub-dominant graphs associated to non-manifold configurations; b) a new proof and an improvement of the so-called jacket bound \cite{RazvanN,RazvanVincentN,Gurau:2011xq}, which moreover does not rely on topological moves (dipole contractions). 
		
		
		\subsection{Edge variables}
		
		\subsubsection{Action and partition function}
		
The colored Ooguri model is a field theory of five complex scalar fields $\{\vphi_\ell\ , \ell=1, \ldots ,5\}$, each of them defined over four copies of $\SO(4)$, which respect the following 
gauge invariance condition:
\beq \label{gaugeN}
\forall h \in \SO(4),  \qquad \vphi_\ell(hg_1, hg_2, hg_3, hg_4)  \, = \, \vphi_\ell(g_1, g_2, g_3, g_4) .
\eeq
Like in three dimensions, they are interpreted as quantized building blocks of spatial geometry, here tetrahedra. The $\SO(4)$ variables are interpreted as parallel transports of an 
$\SO(4)$ connection from the center of the tetrahedra to the centers of their boundary triangles. The action encodes the gluing of five tetrahedra to form 
a four-simplex via the interaction term, while the kinetic parts mimic the identification of two tetrahedra along their boundary triangles: 
\bes \label{action}
S[\vphi ]&=& S_{kin} [\vphi ] + S_{int}[\vphi ] ,\\
S_{kin} [\vphi] &=& \frac{1}{2} \int [\extd g_i]^4 \sum_{\ell=1}^5   \, \vphi_\ell(g_1, g_2, g_3, g_4) \overline{\vphi_{\ell}}(g_1, g_2, g_3, g_4) ,\\
S_{int}[\vphi] &=& \lambda \int [\extd g_{i} ]^{10} \, \vphi_1(g_1, g_2, g_3, g_4) \vphi_2(g_4, g_5, g_6, g_7) 
\vphi_3(g_7, g_3, g_8, g_9) \nn \\
&& \vphi_4(g_9, g_6, g_2, g_{10}) \vphi_5(g_{10}, g_8, g_5, g_1) + \; \; {\rm{c.c}}. 
\ees
A graphical representation of the two terms of this action is given in Figure \ref{vertex_propa_e}.

\begin{figure}[h]
  \centering
 \includegraphics[scale=0.5]{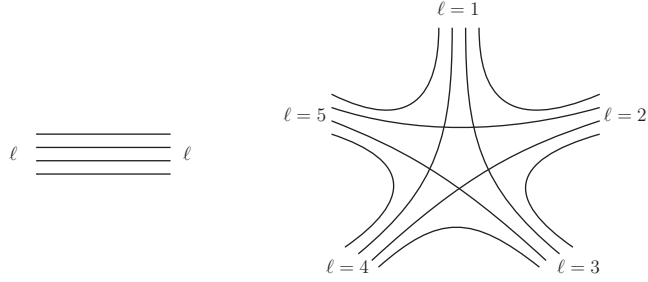}
  \caption{Combinatorics of the kinetic (left) and interaction (right) kernels in the usual (triangle) formulation.}
  \label{vertex_propa_e}
\end{figure}

There is also a metric representation \cite{Baratin:2010wi} of the same model in terms of Lie algebra variables, obtained via a group Fourier transform, this time for the group $\SO(4) \simeq ( \SU(2) \times
\SU(2) ) /
\mathbb{Z}_2$. As in the 3d case, we will however restrict ourselves to $\SO(3) \times \SO(3)$, for which a simple invertible group Fourier transform is available, and the generalized framework
\cite{karim} is not needed. As before, functions on $\SO(3)$ will be identified with functions $f$ on $\SU(2)$ such that $f(g) = f(-g)$. We adopt the notation $g = (g^{+}, g^{-}) \in \SU(2)
\times \SU(2)$,
and similarly for Lie algebra elements, and introduce the plane-waves
\beq
\forall g \in \SU(2) \times \SU(2) \,, \qquad \E_g \maps \, (\su(2) \times \su(2)) \ni x \mapsto \e_{g^{+}}(x^{+}) \, \e_{g^{-}}(x^{-})\,.
\eeq 
The group Fourier transform is given by
\beq
\widehat{f}(x) \equiv \int \extd g f(g) \E_{g}(x)\,, 
\eeq 
and sends the convolution product on $L^{2}(\SO(3) \times \SO(3))$ to a $\star$-product, defined on plane-waves as:
\beq
\forall g_1\,, g_2 \in \SU(2) \times \SU(2) \,, \qquad \E_{g_1} \star \E_{g_2} \equiv \E_{g_{1} g_{2}}\,.
\eeq
Besides these definitions, we will explicitly use the fact that the $\delta$-distributions on $\SO(3) \times \SO(3)$ can be decomposed in plane-waves as:
\beq
\delta(g) = \int \extd x \, \E_{g}(x)\,.
\eeq 
Dually, a non-commutative $\delta$-distribution on $\so(3) \times \so(3)$ can be defined by:
\beq
\delta (x) = \int \extd h \, \E_{h}(x)\,,
\eeq
which verifies, for any algebra function $f$:
\beq
\left( \delta \star f \right) (x) = f(0) \delta(x)\,.
\eeq

\

In analogy with what has been done in three dimensions, we would like to use the gauge invariance condition to re-express the action in terms of fields whose arguments are associated to simplices of
one dimension less: in this case, from triangles to edges. 
The main difference with the 3d situation however, is that the numbers of edges and triangles in a tetrahedron do not match: a tetrahedron consists in four
triangles,  but six edges. 

At the level of a field $\vphi_\ell(g_1, g_2, g_3, g_4)$, whose variables are associated to 4 different triangles, this translates into the fact that one can construct six independent
edge variables $G_{ij}$ from pairs of triangle variables $g_i$. For example: 
\bes
G_{41} = g_4^{\inv} g_1 \;, &\qquad G_{42} = g_4^{\inv} g_2 \;, &\qquad G_{43} = g_4^{\inv} g_3 \;, \nn \\
G_{12} = g_1^{\inv} g_2 \;, &\qquad G_{23} = g_2^{\inv} g_3 \;, &\qquad G_{31} = g_3^{\inv} g_1 \;.
\ees
This means that, in order to match the number of degrees of freedom in the two representations, we will have to use more constraints than in the Boulatov model. These constraints will reflect
geometrical 
conditions on the holonomies in a tetrahedron. Remarking that the variable $G_{ij}$ represents the holonomy from the center of the triangle $i$ to the center of the triangle $j$, we see that
for any distinct indices $i$, $j$ and $k$, we have:
\beq
G_{ij} G_{jk} G_{kl} = \one \,,
\eeq 
where from now on we use the notation: $G_{ij} \equiv G_{ji}^{\inv}$. There are a priori four such equations to impose (one for any triplet $\{ i, j, k\}$, i.e. one for any vertex of the tetrahedron).
However, only three of them are independent, since for example:
\beq
\left\{
    \begin{array}{lll}
        G_{12} G_{24} G_{41} &=& \one \\
        G_{23} G_{34} G_{42} &=& \one \; \Longrightarrow \; G_{12} G_{23} G_{31} = \one \\
	G_{31} G_{14} G_{43} &=& \one
    \end{array}
\right. 
\eeq

\
All this suggests to introduce new fields $\psi_\ell : \SO(4)^{\times 6} \rightarrow \mathbb{C}$, implicitly defined by\footnote{From now on, we will denote by $\delta$ the $\delta$-distribution on
$\SO(3) \times \SO(3)$, and by $\delta^{\SO(3)}$ (resp. $\delta^{\SU(2)}$) its counterpart on $\SO(3)$ (resp. $\SU(2)$).}:
\bes\label{naive}
\vphi_\ell(g_1, g_2, g_3, g_4) &=& \vphi_\ell(G_{41}, G_{42}, G_{43}, \one) \nn \\
 &\equiv& \int \extd G_{12} \extd G_{23} \extd G_{31} 
\delta(G_{12} G_{24} G_{41}) \delta(G_{23} G_{34} G_{42}) \delta(G_{31} G_{14} G_{43}) \\
&\times& \psi_\ell(G_{41}, G_{42}, G_{43}, G_{12}, G_{23}, G_{31})\,. \nn
\ees
This idea can be made precise using the group Fourier transform previously introduced. For any function $\phi \in L^{2}((\SO(3) \times \SO(3))^{\times 4})$, we define a function $\Upsilon[\phi]$ of
six $\so(3)
\times \so(3)$ elements:
\bes
\Upsilon[\phi]( x_{41}, x_{42} , x_{43} , x_{12} , x_{23} , x_{31})
&\equiv& \int [\extd g_i ] \, \phi(g_1 , g_2 , g_3 , g_4) \E_{{g_4}^{\inv} g_1}(x_{41}) \E_{{g_4}^{\inv} g_2}(x_{42}) \E_{{g_4}^{\inv} g_3}(x_{43}) \nn \\ 
&& \qquad \E_{{g_1}^{\inv} g_2}(x_{12}) \E_{{g_2}^{\inv} g_3}(x_{23}) \E_{{g_3}^{\inv} g_1}(x_{31})\,.
\ees
As in the 3d case, such a function is invariant under a simultaneous (deformed) translation of all its arguments; in this case, however, no geometric interpretation of the variables appearing as field
arguments nor of such invariance can be given, due to the fact that we are not dealing with geometric tetrahedra, but simply with combinatorial simplices to which variables from the classical phase
space of discrete BF theory are associated. The variables do not describe a geometric tetrahedron, and the translation symmetry of each field cannot be interpreted as the  translation of the vertices
(or the edges) of the tetrahedra in some embedding into $\mathbb{R}^4$; in fact, it is generated by a Lie algebra element of $\so(4)$ and not by a vector in $\mathbb{R}^4$. It is again the manifestation of the shift symmetry of $BF$ theory, but contrary to what happens in three dimensions, it cannot be traced back to a geometric action on the elements of the triangulations. In the GFT context, it has been studied in details in \cite{diffeos}, both in the triangle and edge formulations.
And as in the 3d case, a proper description of the invariance of the field under such deformed translations (in the Lie algebra) requires that we interpret the products of plane-waves as tensor
products, taken in
the order in which we wrote them. 

With this convention in mind, $\Upsilon[\phi]$ is invariant under the following symmetries (only three of them being independent):
\bes
\Upsilon[\phi] &\mapsto& \cT^{142}_{\ve} \act \Upsilon[\phi](x_{ij}) = \bigstar_{\ve} \Upsilon[\phi](x_{41} - \ve, x_{42} + \ve, x_{43} , x_{12} - \ve, x_{23} , x_{31} )\,, \\
\Upsilon[\phi] &\mapsto& \cT^{243}_{\ve} \act \Upsilon[\phi](x_{ij}) = \bigstar_{\ve}
\Upsilon[\phi](x_{41}, x_{42} - \ve, x_{43} + \ve , x_{12} , x_{23} - \ve, x_{31} )\,, \\
\Upsilon[\phi] &\mapsto& \cT^{143}_{\ve} \act \Upsilon[\phi](x_{ij}) = \bigstar_{\ve}
\Upsilon[\phi](x_{41} - \ve, x_{42}, x_{43} + \ve, x_{12}, x_{23} , x_{31} + \ve)\,, \\
\Upsilon[\phi] &\mapsto& \cT^{123}_{\ve} \act \Upsilon
[\phi](x_{ij}) = \bigstar_{\ve}
\Upsilon[\phi](x_{41}, x_{42}, x_{43} , x_{12} + \ve, x_{23} + \ve , x_{31} + \ve)\,.
\ees
These transformations correspond to a simultaneous translation of the Lie algebra variables associated to three edges sharing a vertex in a quantum tetrahedron by the same Lie algebra variables
associated to such common vertex. For instance, $\cT^{142}$ translates the three edges sharing the vertex
of color $3$.

\
Let us call the space of such invariant fields $\mathbb{T} \equiv {\rm{Im}}(\Upsilon)$, and $\mathbb{D}={\rm{Inv}}((\SO(3) \times \SO(3))^{\times 4})$ the space of fields in $L^{2}((\SO(3) \times
\SO(3))^{\times 4})$ that satisfy the
gauge invariance (\ref{gaugeN}). We now prove that the map from triangle group variables to the Lie algebra edge variables is one-to-one.
\begin{proposition}\label{bij_4d}
$\Upsilon$ is a bijection between $\mathbb{D}$ and $\mathbb{T}$. Its inverse maps any $\widetilde{\phi} \in \mathbb{T}$ to: 
\bes
\Upsilon^{\inv}[\widetilde{\phi}](g_i)
&\equiv& \int [\extd x_{ij}]^{3} 
\left( E_{g_1^{\inv} g_4}(x_{41}) E_{g_2^{\inv} g_4}(x_{42}) E_{g_3^{\inv} g_4}(x_{43}) E_{g_2^{\inv} g_1}(x_{12}) E_{g_3^{\inv} g_2}(x_{23}) E_{g_1^{\inv} g_3}(x_{31})\right) \nn \\
&& \qquad \star
\, \widetilde{\phi}(x_{ij})\,,
\ees
where only three $x_{ij}$ are being integrated, the others being fixed to any value.
\end{proposition}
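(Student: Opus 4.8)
The plan is to verify that the explicit formula for $\Upsilon^{\inv}$ provides a genuine two-sided inverse, relying crucially on the invertibility of the group Fourier transform for $\SO(3)$ and on the gauge invariance condition (\ref{gaugeN}). The overall strategy is dimensional: the gauge constraint cuts the four group arguments down to three independent ones, and the three independent deformed translation symmetries cut the six Lie-algebra arguments of $\Upsilon[\phi]$ down to three, so the two spaces $\mathbb{D}$ and $\mathbb{T}$ carry the same effective data, and $\Upsilon$ can be identified with a group Fourier transform in those three variables.

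First I would use the gauge invariance to reduce the number of effective degrees of freedom. Since any $\phi \in \mathbb{D}$ satisfies $\phi(g_1,g_2,g_3,g_4) = \phi(g_4^{-1}g_1, g_4^{-1}g_2, g_4^{-1}g_3, \one)$, it is entirely determined by a function of the three group elements $G_{41} = g_4^{-1}g_1$, $G_{42} = g_4^{-1}g_2$, $G_{43} = g_4^{-1}g_3$. The remaining three arguments of the plane waves in the definition of $\Upsilon$ are not independent: one checks directly that $g_1^{-1}g_2 = G_{41}^{-1}G_{42}$, $g_2^{-1}g_3 = G_{42}^{-1}G_{43}$, and $g_3^{-1}g_1 = G_{43}^{-1}G_{41}$, which are precisely the closure relations $G_{ij}G_{jk}G_{ki} = \one$ recorded above. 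Thus, modulo these redundant plane waves, $\Upsilon$ acts as the group Fourier transform in the three core variables $(G_{41}, G_{42}, G_{43})$, which is the key to reading off bijectivity.

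Then I would establish bijectivity by a direct computation of $\Upsilon^{\inv} \circ \Upsilon$. Inserting the definition of $\Upsilon[\phi]$ into that of $\Upsilon^{\inv}$ produces an integral over dummy group variables $g_i'$ and over the three retained Lie-algebra variables $x_{ij}$. The non-commutative $\star$-product combines each pair of plane waves via (\ref{star}), $\E_{a} \star \E_{b} = \E_{ab}$; since the kernel of $\Upsilon^{\inv}$ carries the plane waves $\E_{(g_4')^{-1} g_1'}$ inverse in orientation to those of $\Upsilon$, the combined plane waves take the form $\E_{G_{41}^{-1} G_{41}'}(x_{41})$ and so on, with $G_{41}' = (g_4')^{-1} g_1'$. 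Integrating the three $x_{ij}$ against these, using the plane-wave expansion of the $\delta$-distribution $\delta(g) = \int \extd x\, \E_g(x)$, yields three $\delta$-functions identifying the three independent gauge-invariant combinations of the $g_i'$ with those of the $g_i$; the gauge invariance (\ref{gaugeN}) of $\phi$ then absorbs the leftover group freedom and returns $\phi(g_i)$. The converse composition $\Upsilon \circ \Upsilon^{\inv} = \mathrm{id}_{\mathbb{T}}$ follows by the same manipulation performed in the opposite order.

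The main obstacle will be the bookkeeping of the $\star$-product orderings and, above all, checking that the choice of which three $x_{ij}$ are integrated — and the arbitrary values assigned to the other three — is immaterial. This independence is exactly the content of the invariance of elements of $\mathbb{T}$ under the deformed translations $\cT^{142}, \cT^{243}, \cT^{143}$, so the cleanest route is to prove that invariance first, by showing $\cT_{\ve} \act \Upsilon[\phi] = \Upsilon[\phi]$ for each of the three independent generators (mirroring the corresponding Boulatov computation), and only then to verify that the inverse formula is well defined and two-sided on the subspace so characterized.
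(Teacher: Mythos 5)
Your proposal is correct and follows essentially the same route as the paper's proof: the composition $\Upsilon^{\inv}\circ\Upsilon$ is computed directly, the three integrated $x_{ij}$ produce group $\delta$-functions forcing $g_i g_i'^{\inv}$ to be independent of $i$, the leftover plane waves trivialize to $1$, and the gauge invariance (\ref{gaugeN}) absorbs the residual freedom and returns $\phi$. One caveat: the converse composition does \emph{not} follow ``by the same manipulation performed in the opposite order'' --- in the paper, integrating the group variables $g_i$ yields \emph{non-commutative} $\delta$-functions on sums of Lie algebra elements such as $\delta(x_{41}-x_{41}'+x_{12}'-x_{12}+x_{31}-x_{31}')$, and the conclusion rests on the deformed translation invariance of $\widetilde{\phi}$ rather than on gauge invariance. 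Your closing paragraph, which proposes to establish that invariance first and only then verify two-sidedness, is exactly the repair this step needs; alternatively, since $\mathbb{T}$ is defined as $\mathrm{Im}(\Upsilon)$, your left-inverse computation already makes the restriction of the candidate inverse to $\mathbb{T}$ a two-sided inverse, so no fatal gap remains.
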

\begin{proof}
Let us call $\widetilde{\Upsilon}$ the map defined by the previous formula, and show that $\widetilde{\Upsilon} \circ \Upsilon$ and $\Upsilon \circ \widetilde{\Upsilon}$ are the identity.

\
We first choose $\phi \in \mathbb{D}$, and check that $\widetilde{\Upsilon} \circ \Upsilon [\phi] = \phi$. Using the definitions, we immediately have:
\bes
\widetilde{\Upsilon} \circ \Upsilon [\phi](g_i) = \int [\extd g_{i}'] \phi(g_i') 
\int [\extd x_{ij}]^{3} E_{g_1^{\inv} g_4 g_4'^{\inv} g_1'}(x_{41}) E_{g_2^{\inv} g_4 g_4'^{\inv} g_2'}(x_{42}) E_{g_3^{\inv} g_4 g_4'^{\inv} g_3'}(x_{43}) \\
E_{g_2^{\inv} g_1 g_1'^{\inv} g_2'}(x_{12}) E_{g_3^{\inv} g_2 g_2'^{\inv} g_3'}(x_{23}) E_{g_1^{\inv} g_3 g_3'^{\inv} g_1'}(x_{31})
\ees
The integration over the $x_{ij}$ give three $\delta$-functions. For example, if we choose $x_{41}$, $x_{42}$ and $x_{43}$ as integrating variables, we obtain:
\bes
\widetilde{\Upsilon} \circ \Upsilon [\phi](g_i) = \int [\extd g_{i}'] \phi(g_i') 
 \delta(g_1^{\inv} g_4 g_4'^{\inv} g_1') \delta(g_2^{\inv} g_4 g_4'^{\inv} g_2') \delta(g_3^{\inv} g_4 g_4'^{\inv} g_3') \\
E_{g_2^{\inv} g_1 g_1'^{\inv} g_2'}(x_{12}) E_{g_3^{\inv} g_2 g_2'^{\inv} g_3'}(x_{23}) E_{g_1^{\inv} g_3 g_3'^{\inv} g_1'}(x_{31})\,.
\ees
We remark that the three $\delta$-functions impose that $g_i g_i'^{\inv}$ is independent of $i$, therefore the three remaining plane-waves are equal to $1$. We can finally introduce a resolution of
the identity $1 = \int \extd h \delta(h g_4 g_4'^{\inv})$, and obtain:
\bes
\widetilde{\Upsilon} \circ \Upsilon [\phi](g_i) &=& \int [\extd g_{i}']  \phi(g_i') \int \extd h \prod_{i = 1}^{4} \delta(h g_i g_i'^{\inv}) \\
&=& \int \extd h \, \phi(h g_1, h g_2, h g_3, h g_4) \\
&=& \phi(g_1, g_2, g_3, g_4)\,.
\ees
Note that we used the gauge invariance of $\phi$ in the last line.

\
Now, let us take $\widetilde{\phi} \in \mathbb{T}$, and show that $\Upsilon \circ \widetilde{\Upsilon} [\widetilde{\phi}] = \widetilde{\phi}$. We have:
\bes
\Upsilon \circ \widetilde{\Upsilon} [\widetilde{\phi}](x_{ij}) &=& \int [\extd x_{ij}']^{3} \int [\extd g_i]
\left( E_{g_1^{\inv} g_4}(x_{41}') E_{g_2^{\inv} g_4}(x_{42}') E_{g_3^{\inv} g_4}(x_{43}') \right. \nn \\
 &&\left. E_{g_2^{\inv} g_1}(x_{12}') E_{g_3^{\inv} g_2}(x_{23}') E_{g_1^{\inv} g_3}(x_{31}')\right) \star \widetilde{\phi}(x_{ij}')  \\
 &&\E_{{g_4}^{\inv} g_1}(x_{41}) \E_{{g_4}^{\inv} g_2}(x_{42}) \E_{{g_4}^{\inv} g_3}(x_{43}) \E_{{g_1}^{\inv} g_2}(x_{12}) \E_{{g_2}^{\inv} g_3}(x_{23})
\E_{{g_3}^{\inv} g_1}(x_{31})\,. \nn
\ees
Each integral with respect to a variable $g_i$ gives a non-commutative $\delta$-function involving six different Lie algebra elements. For example, the integral over $g_1$ gives a $\delta (
x_{41} - x_{41}' + x_{12}' - x_{12} + x_{31} - x_{31}')$. The three others are $\delta ( x_{42} - x_{42}' + x_{12} - x_{12}' + x_{23}' - x_{23})$, $\delta ( x_{43} - x_{43}' + x_{23} -
x_{23}' + x_{31}' - x_{31})$, and $\delta ( x_{41}' - x_{41} + x_{42}' - x_{42} + x_{43}' - x_{43})$. After integration of variables $x_{4j}'$, one obtains:
\beq
\Upsilon \circ \widetilde{\Upsilon} [\widetilde{\phi}](x_{ij}) = ( \cT^{142}_{x_{12} - x_{12}'} \cT^{243}_{x_{23} - x_{23}'} \cT^{243}_{- x_{31} + x_{31}'} ) \act \widetilde{\phi}(x_{ij})\,,
\eeq
which, thanks to the invariance of the field $\widetilde{\phi}$, ends the proof.
\end{proof}

\

This proposition ensures that an edge formulation is indeed possible. Moreover, the translation invariance of the fields in $\mathbb{T}$ guarantees that, to construct the edge representation in group
space, one just needs to plug (\ref{naive}) in (\ref{action}). In terms of the new fields $\psi_\ell$, the action can be written:
\bes \label{action_edge}
S_{kin} [\psi] &=& \frac{1}{2} \int [\extd G]^6 \sum_{\ell=1}^5   \, \psi_\ell(G^\ell_{41}, G^\ell_{42}, G^\ell_{43}, G^\ell_{12}, G^\ell_{23}, G^\ell_{31}) 
\overline{\psi_{\ell}}(G^\ell_{41}, G^\ell_{42}, G^\ell_{43}, G^\ell_{12}, G^\ell_{23}, G^\ell_{31}) \nn \\
&& \times \delta(G^\ell_{12} G^\ell_{24} G^\ell_{41}) \delta(G^\ell_{23} G^\ell_{34} G^\ell_{42}) \delta(G^\ell_{31} G^\ell_{14} G^\ell_{43}) ,\\
S_{int}[\psi] &=& \lambda \int [\extd G]^{30} \, \psi_1(G_{25}^{1}, G_{24}^{1}, G_{23}^{1}, G_{54}^{1}, G_{43}^{1}, G_{35}^{1}) \psi_2(G_{31}^{2}, G_{35}^{2}, G_{34}^{2}, G_{15}^{2}, G_{54}^{2},
G_{41}^{2}) \nn \\
&& \psi_3(G_{42}^{3}, G_{41}^{3}, G_{45}^{3}, G_{21}^{3}, G_{15}^{3}, G_{52}^{3}) \psi_4(G_{53}^{4}, G_{52}^{4}, G_{51}^{4}, G_{32}^{4}, G_{21}^{4}, G_{13}^{4}) \nn \\  
&& \psi_5(G_{14}^{5}, G_{13}^{5}, 
G_{12}^{5}, G_{43}^{5}, G_{32}^{5}, G_{24}^{5}) \nn \\
&&\times \left( \prod_{\ell = 1}^{5} \nu(G^{\ell}) \right) \delta(H_{345}) \, \delta(H_{514}) \, \delta(H_{125}) \, \delta(H_{123}) \, \delta(H_{234}) \, \delta(H_{253}) \nn \\
&+& \; \; {\rm{c.c}}, \nn
\ees
where of course $G_{ij}^\ell$ is the (group) variable associated to the edge shared by the triangles $i$ and $j$ in the tetrahedron of color $\ell$ (thus opposite to the vertex of the same color).
In this formula, $\nu(G^{\ell})$ is the measure factor associated to the field $\ell$ (i.e. a product of three $\delta$-functions as in the kinetic term), and $H_{ijk}$ is defined by:
$H_{ijk} \equiv G_{jk}^{i} G_{ij}^{k} G_{ki}^{j}$. These $H$'s are holonomies around edges in the boundaries of the 4-simplex corresponding to the GFT interaction vertex, in the very same way as the
3d case was giving flatness conditions around vertices in
boundaries of tetrahedra. As remarked earlier, there are ten edges in a 4-simplex, hence ten flatness conditions to impose (one for each choice of triplet of distinct colors $i$, $j$ and $k$).
However,
only six of them are independent, which is why the same number of $\delta$-functions of this type appear in the interaction. For the same reason, one is free to choose any set of six independent
$H_{ijk}$ to write the distribution encoding flatness. As in 3d, this freedom will prove very useful.

\

The partition function is defined through a path integral, with the propagator encoded in the covariance of a Gaussian measure $\mu_{\cP}$:
\bes
\int \extd \mu_{\cP}(\overline{\psi}, \psi) \, \overline{\psi}_\ell(g_1, \cdots, g_6) \psi_{\ell'}(g_1', \cdots, g_6') &\equiv& \nu(g_1, \cdots , g_6) \, \delta_{\ell, \ell'}  \prod_{i = 1}^{6}
\delta(g_i g_i'^{\inv}) \,, \\
\nu(g_1, \cdots , g_6) &\equiv& \delta(g_4 g_2^{\inv} g_1) \delta(g_5 g_3^{\inv} g_2) \delta(g_6 g_1^{\inv} g_3) \,, \nn
\ees
with respect to which we integrate the exponential of the interaction part of the action\footnote{Note however that the interaction part does not include the constraints $\nu$, since they are already
imposed in the propagator.}:
\bes\label{partition_edge}
\cZ &\equiv& \int \extd \mu_{\cP}(\overline{\psi}, \psi) \, \e^{- V[\overline{\psi}, \psi]} \\
V[\overline{\psi}, \psi] &\equiv&  \lambda \int [\extd G] \, \cV(\{ H \}) \, \psi_1 \psi_2 \psi_3 \psi_4 \psi_5  \; \; + \; \; {\rm{c.c}}.
\ees
We kept variables of integration implicit in $V$, and called $\cV$ the distribution encoding flatness in (\ref{action_edge}). The kernel $\cV$ is represented as a stranded graph in Figure
\ref{int_e_fig}\footnote{In order to limit the number of crossings, we have reorganized the strands of the different fields.}.

\begin{figure}
\centering
\includegraphics[scale=0.5]{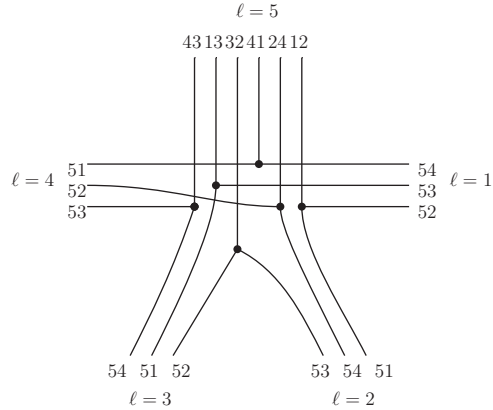}
\caption{Combinatorics of the interaction in edge variables, in a form suitable for factorization of bubbles of color $5$.}\label{int_e_fig}
\end{figure}

\subsubsection{Amplitudes}
In this section, we explore a similar route as in the 3d case, and propose a way to write amplitudes in such a way that the integrand factorizes into bubble contributions. At this stage, everything is
formal (the amplitudes are generically divergent in the absence of regularization), but will be a precious guide in order to derive bounds once a cut-off is added.

\
We first pick up a given color (say $\ell=5$) and choose the six edges of the tetrahedron labeled by $5$ to impose the flatness conditions in any given 4-simplex. This corresponds to working with
holonomies $H_{ijk}$
such that one of the indices is equal to 5. We can for example express the distribution $\cV$ in (\ref{action_edge}) as:
\beq
\cV = \delta(H_{345}) \, \delta(H_{514}) \, \delta(H_{125}) \, \delta(H_{315}) \, \delta(H_{245}) \, \delta(H_{253})\,.
\eeq
This expression involves 18 independent variables $G^{\ell}_{i j}$, 6 of them with $\ell = 5$, and 3 for each $\ell \neq 5$. Therefore, when computing the amplitude of a colored graph $\cG$, all the
propagators with $\ell \neq 5$ will have three strands with free endpoints, that can be integrated straightforwardly. As a result, the closure conditions associated to tetrahedra of color $5$ will
have trivial contributions. Let us verify it for a tetrahedron of color $1$. The integral that we have to compute in this case is of the form:
\beq
\cA_{\cG} = \int [\extd G] \delta(G^{1}_{54} G^{1}_{42} G^{1}_{25}) \delta(G^{1}_{43} G^{1}_{32} G^{1}_{24}) \delta(G^{1}_{35} G^{1}_{52} G^{1}_{23}) R(\{ G \}) \,,
\eeq  
where $R$ does not depend on $G^{1}_{42}$, $G^{1}_{23}$ and $G^{1}_{34}$, since $H_{423}$ does not explicitly appears in the expression we chose for $\cV$. Successively integrating these variables we
have: 
\bes
\cA_{\cG} &=& \int [\extd G]  \delta(G^{1}_{43} G^{1}_{32} G^{1}_{25} G^{1}_{54}) \delta(G^{1}_{35} G^{1}_{52} G^{1}_{23}) R(\{ G \}) \nn \\
&=& \int [\extd G]  \delta(G^{1}_{43} G^{1}_{35} G^{1}_{52} G^{1}_{25} G^{1}_{54})  R(\{ G \})\nn \\
&=& \int [\extd G]  R(\{ G \})\,,
\ees
which shows that the amplitude is unchanged if all the propagators of color $1$ are replaced by trivial ones.
\
Therefore, as in the 3d case, we are lead to simplified expressions for the amplitudes, where all propagators of color $\ell \neq 5$ are trivially integrated. This allows to factorize the integrand of
the amplitude into bubbles (of color $5$) contributions. Since within these bubbles all the propagators are effectively trivial, we can contract each connected component of strands in a bubble to one
node. This is easily understood by looking at Figure \ref{int_e_fig}: in each bubble, all the internal strands are part of lines of colors $\ell \neq 5$, that is those that effectively contain only
three strands; since the constraints associated to these propagators have been integrated with respect to the deleted strands, the remaining strands encode simple convolutions of $\delta$-functions
that can be successively integrated out. The last $\delta$-distribution in a connected component of strands encodes flatness of its dual edge. We call $\cB_5$ the set of bubbles of color $5$, $E_b$ the set
of edges in a bubble $b$, $T_5$
the set of tetrahedra of color $5$. With these notations, the amplitude of $\cG$ (with $\cN$ nodes) can be written as:
\beq\label{amplitude_edge}
\cA_{\cG} = \int [\extd G]^{3 \cN} \left(\prod_{b \in \cB_5}  \prod_{e \in E_b}  \delta( \overrightarrow{\prod_{\tau \supset e}}
(G^{\tau}_{e})^{\epsilon^{\tau}_{e}})\right) 
\left( \prod_{\tau \in T_5} \delta(G^{\tau}_{43} G^{\tau}_{31} G^{\tau}_{14}) \delta(G^{\tau}_{32} G^{\tau}_{21} G^{\tau}_{13}) \delta(G^{\tau}_{24} G^{\tau}_{41} G^{\tau}_{12}) \right) \,,
\eeq
where $\epsilon^{\tau}_{e} = \pm 1$ depending on orientation conventions. 

\

The different types of bounds we will prove in the next paragraphs will rely on two different ways of trivializing the interaction kernels, so that in the above expression, the integration of the last propagators coupling the
variables of different bubbles can be performed. The two strategies give an expression for the amplitudes in which different combinatorial substructures in a
5-colored graph (and in its dual simplicial complex) are singled out. They will respectively be used to obtain bounds referring to the 4-bubbles, or bounds in terms of jackets.  

The first possibility we will describe consists in trivializing three constraints associated to a tree of edges in each tetrahedron, and this will lead
straightforwardly to a bubble bound. The second strategy amounts on the contrary to trivializing the constraints associated to the three edges of a same triangle in each tetrahedron. Only two $\delta$-functions per
propagator will be easily integrable in this case, and the remaining ones will allow to factorize the integrand in terms of Boulatov integrands, henceforth giving a bound involving Boulatov amplitudes
of
$4$-bubbles. This will instead lead to a jacket bound.
		
		
		\subsection{Regularization and general scaling bounds}
	
	The amplitudes as written are of course divergent and need to be regularized to be given rigorous meaning.  

A nice aspect of the latter formulation of the Ooguri model lies in the fact that the constraints associated to edges need not to be regularized in order to make the amplitudes finite, as it will be
shortly proven. In other words, only the dynamics of the theory is affected by the cut-off procedure, and not the kinematical space of fields in terms of which the theory is defined. As in the 3d
case, we will use a heat kernel regularization of the $\delta$-distributions, that with respect to a sharp cut-off will have the main advantage of being positive. The $\SO(3) \times \SO(3)$
$\delta$-distribution splits into a product of two $\SO(3)$ terms: for any $g = (g^{+}, g^{-}) \in \SO(3) \times \SO(3)$, $\delta(g) = \delta^{\SO(3)}(g^{+}) \delta^{\SO(3)}(g^{-})$. Given our
parametrization of the space of functions on $\SO(3) \times \SO(3)$, we can define a regularized distribution for $\SO(3) \times \SO(3)$ using $\SU(2)$ heat kernels:
\beq
\forall g=(g^{+}, g^{-}) \in \SO(3) \times \SO(3), \qquad \delta_{\alpha}(g) \equiv K_{\alpha}(g^{+}) K_{\alpha}(g^{-})\,.
\eeq

We therefore define the regulated partition function as:
\beq\label{partition_edge_reg}
\cZ_{\alpha} \equiv \int \extd \mu_{\cP}(\overline{\psi}, \psi) \, \e^{- V_{\alpha}[\overline{\psi}, \psi]} \,.
\eeq
$V_{\alpha}$ is the regulated interaction, associated to the kernel:
\bes
\cV_{\alpha} \equiv \frac{1}{[\delta_{\alpha}(\one)]^{4}} \prod_{\{i j k \}} \delta_{\alpha}(G_{jk}^{i} G_{ij}^{k} G_{ki}^{j}) 
= \frac{1}{[\delta_{\alpha}(\one)]^{4}} \prod_{\{i j k \}} \delta_{\alpha}(H_{ijk}) \,   
\ees
where the product runs over the 10 possible choices of 3 colors among 5. Note that we again chose a symmetric regularization in the colors, hence the rescaling by $\frac{1}{[\delta_{\alpha}(\one)]^{4}}$. 

The same kind of comments as in 3d apply here. First, this
choice of symmetric regularization is convenient as it will allow to easily average over color attributions. Second, we could have as well chosen a non-symmetric regularization, compatible with the
map $\Upsilon$. For example, a regularized interaction
\bes
S^{\alpha}_{int,5}[\vphi] &=& \lambda \int [\extd g_{i} ]^{10} \, \delta_{\alpha}(g_{2} g_{2}'^{\inv})  \delta_{\alpha}(g_{3} g_{3}'^{\inv})  \delta_{\alpha}(g_{4} g_{4}'^{\inv})  \delta_{\alpha}(g_{6} g_{6}'^{\inv}) 
\delta_{\alpha}(g_{7} g_{7}'^{\inv})  \delta_{\alpha}(g_{9} g_{9}'^{\inv}) \\ 
&\times& \vphi_1(g_1, g_2', g_3', g_4') \vphi_2(g_4, g_5, g_6', g_7') 
\vphi_3(g_7, g_3, g_8, g_9') \vphi_4(g_9, g_6, g_2, g_{10}) \vphi_5(g_{10}, g_8, g_5, g_1) \nn \\
&+& {\rm{c.c.}} \nn
\ees
in triangle variables corresponds to a regularized interaction kernel 
\beq
\cV^{5}_{\alpha} \equiv \prod_{\{i j 5 \}} \delta_{\alpha}(H_{ijk})
\eeq
in edge variables, which in turn implies an explicit factorization of the amplitudes in terms of bubbles of color $5$. This is the exact analogue of (\ref{amplitude_edge}) in the theory with cut-off. 

\
With the symmetric regularization, this type of factorization is recovered as a bound only, but for arbitrary color $\ell$. Moreover this bound will be always saturated in power-counting. It
is obtained by bounding four redundant flatness conditions in all the interactions. For instance, if we use the color $5$ as before, we have:
\beq
| \cV_{\alpha} | \leq \prod_{\{i j 5 \}} \delta_{\alpha}(H_{ij5}) \,,
\eeq 
where now the product runs over the 6 flatness conditions involving the color $5$. Thanks to the positivity of the regularization, and the convolution properties of the heat kernel, this immediately yields:
\bes\label{amplitude_edge_reg}
|\cA_{\cG}^{\alpha}| &\leq& \int [\extd G]^{3 \cN} \left(\prod_{b \in \cB_5}  \prod_{e \in E_b}  \delta_{\langle e,b \rangle \alpha}( \overrightarrow{\prod_{\tau
\supset e}} (G^{\tau}_{e})^{\epsilon^{\tau}_{e}})\right) \nn \\ 
&& \qquad \times \left( \prod_{\tau \in T_5} \delta(G^{\tau}_{43} G^{\tau}_{31} G^{\tau}_{14}) \delta(G^{\tau}_{32} G^{\tau}_{21} G^{\tau}_{13}) \delta(G^{\tau}_{24} G^{\tau}_{41} G^{\tau}_{12}) \right) 
\ees
where for any edge $e$ in a bubble $b$, we denote by $\langle e,b \rangle$ the number of tetrahedra in $b$ that contain $e$. 
\begin{figure}[h]\label{trees}
  \centering
  \subfloat[A tree of edges.]{\label{tree_tetra1}\includegraphics[scale=0.4]{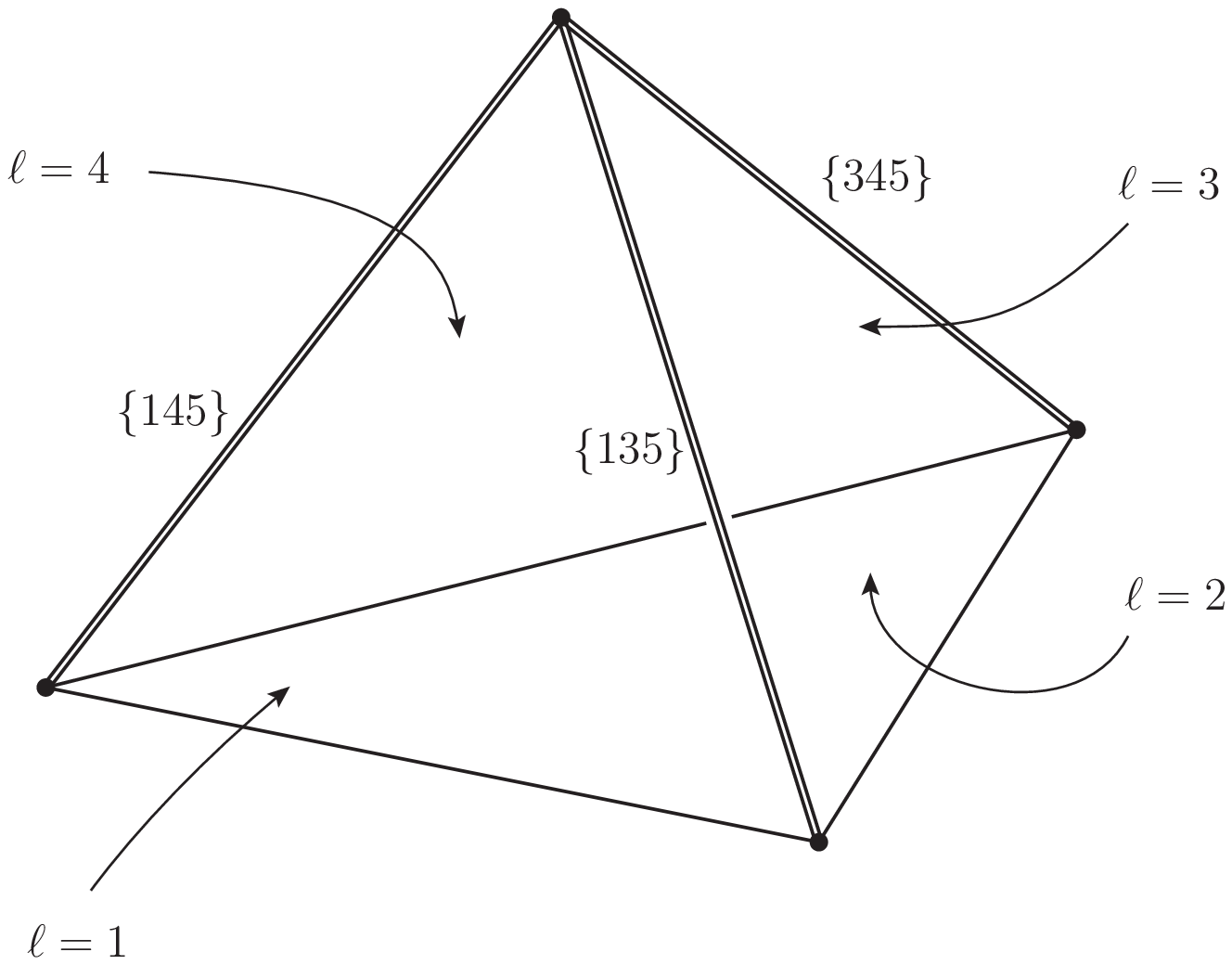}}                
  \subfloat[Another tree.]
{\label{tree_tetra2}\includegraphics[scale=0.4]{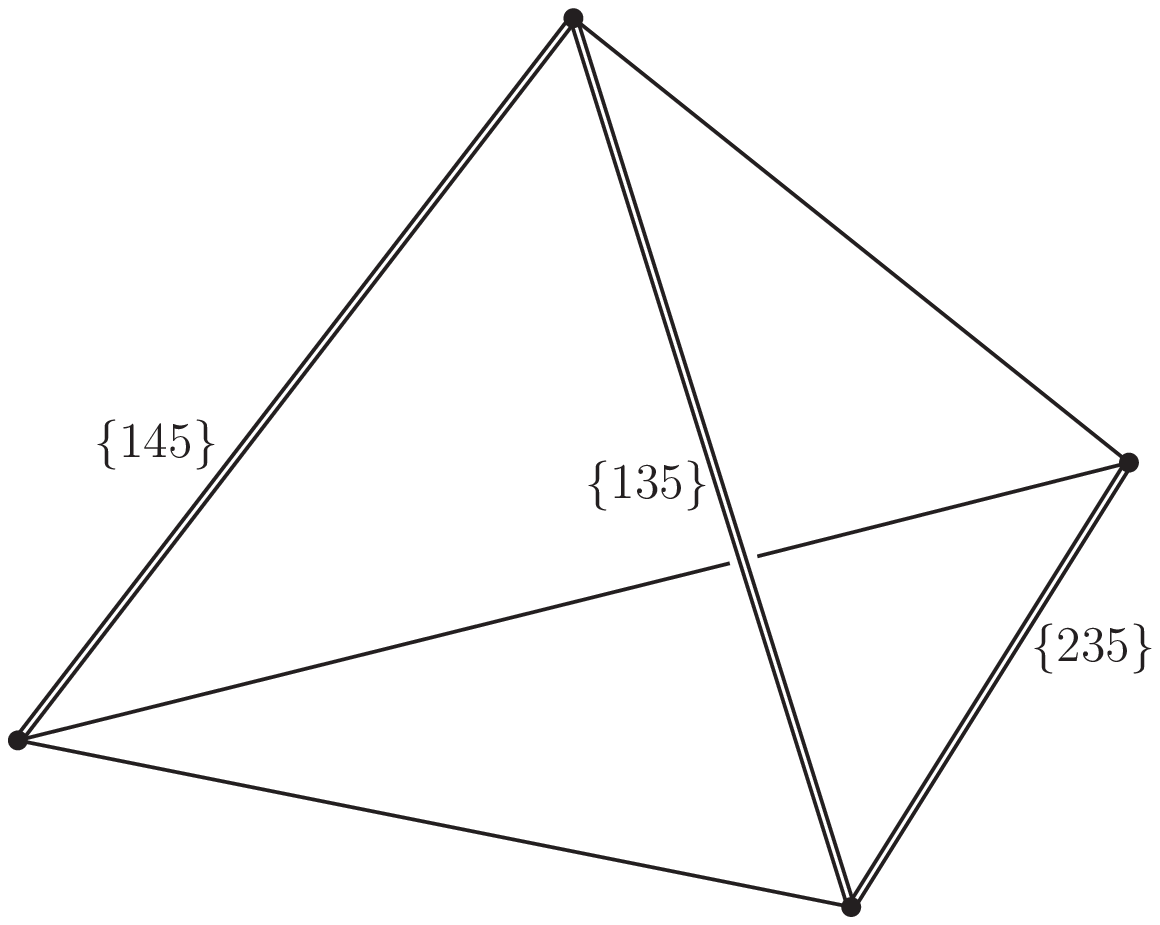}}
\subfloat[Not a tree.]
{\label{not_tree_tetra}\includegraphics[scale=0.4]{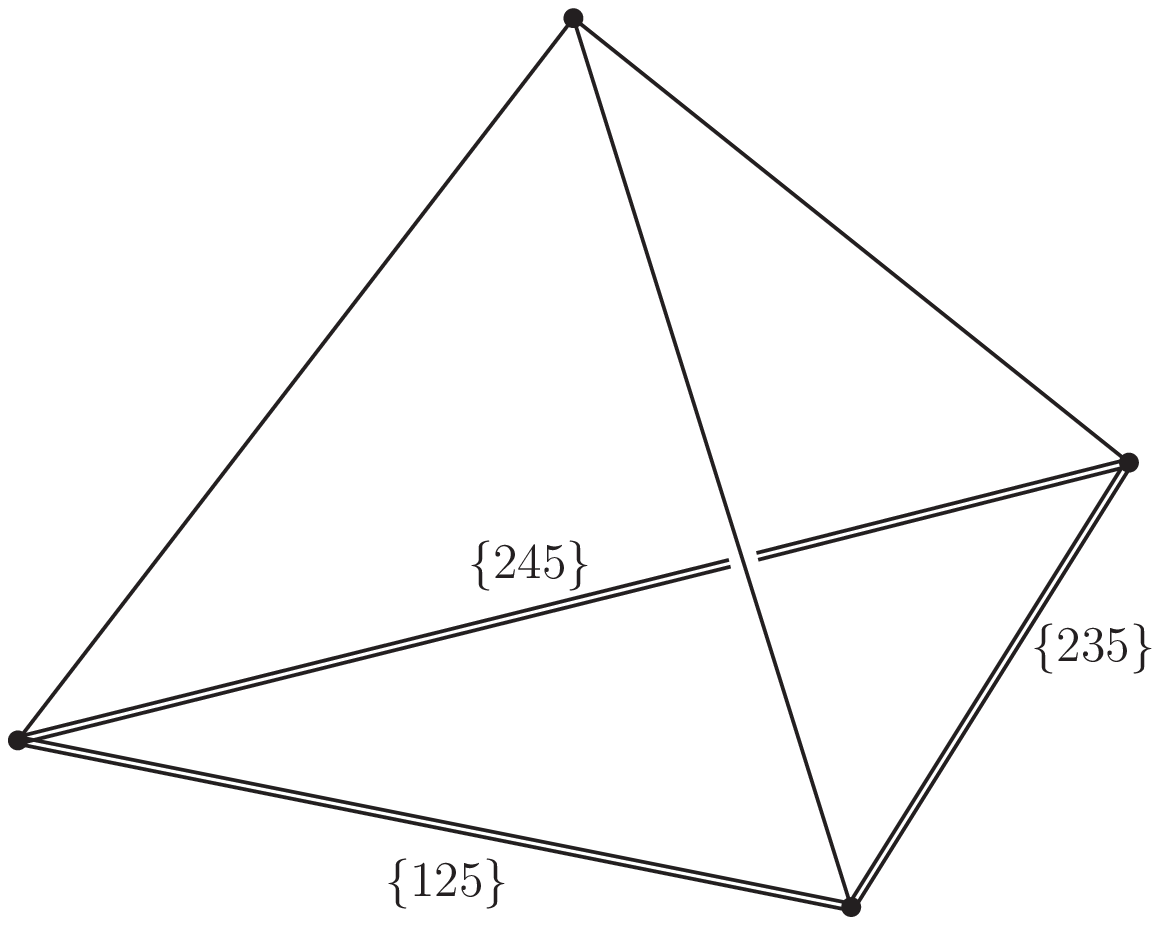}}
  \caption{Different possibilities for trivializing edge interactions in a tetrahedron of color $5$ (double lines): a tree in Figure \ref{tree_tetra1} and Figure \ref{tree_tetra2}; edges associated to a
same triangle in Figure \ref{not_tree_tetra}.}
\end{figure}
We now have to integrate the remaining propagators, using the two possible sets of variables evoked before. 

\
We start with the first strategy and look for integrating variables associated to a tree of edges in each tetrahedron. 
There are two kinds of such trees: three edges sharing a same vertex; or three edges such that the first one shares a vertex with the second, the second with the third, but the third does not share
any vertex with the first one (see Figures \ref{tree_tetra1} and \ref{tree_tetra2}). Of course we want to use the colors to define these trees, so that the same simplification takes place in all the
tetrahedra of a given simplicial complex. The notations are as follows: we will associate the color $\{\ell_1 \ell_2 \ell_3\}$ to an edge involving flatness constraints $H_{\ell_1 \ell_2 \ell_3}$. Such an
edge is therefore dual to a (maximal) connected subgraph of $\cG$, involving only lines of colors $\ell_1$, $\ell_2$ and $\ell_3$.\footnote{Note that this convention differs from the one used in 3d to
label vertices, as here the latter would amount to labeling an edge by the two line colors on which its dual graph does {\it{not}} have support.}
For definiteness, we will use variables involved in flatness conditions around edges of colors $\{345\}$, $\{145\}$ and $\{135\}$, a choice corresponding to a tree of the first kind (i.e. Figure
\ref{tree_tetra1}). 

Since each strand is connected to two interactions, the integrals to
compute are not simple convolutions, therefore difficult. To circumvent this problem we simply bound all the heat kernels implementing flatness constraints of colors $\{345\}$, $\{145\}$ and $\{135\}$
by their value at the identity, and then integrate the propagators. This yields:
\bes
|\cA_{\cG}^{\alpha}| &\leq&  \int [\extd G]^{3 \cN} \left(\prod_{b \in \cB_5}  \prod_{e \in E_{b}(345)\cup E_{b}(145) \cup E_{b}(135)}  \delta_{\langle e,b
\rangle \alpha}(\one) \right) \nn \\
&&\times \left(\prod_{b \in \cB_5}  \prod_{e \in E_{b}(125)\cup E_{b}(235) \cup E_{b}(245)}  \delta_{\langle e,b \rangle \alpha}( \overrightarrow{\prod_{\tau \supset e}}
(G^{\tau}_{e})^{\epsilon^{\tau}_{e}})\right) \nn \\
&&\times \left( \prod_{\tau \in T_5} \delta(G^{\tau}_{43} G^{\tau}_{31} G^{\tau}_{14}) \delta(G^{\tau}_{32} G^{\tau}_{21} G^{\tau}_{13}) \delta(G^{\tau}_{24} G^{\tau}_{41} G^{\tau}_{12}) \right) \nn \\
&\leq& \left(\prod_{b \in \cB_5}  \prod_{e \in E_{b}(345)\cup E_{b}(145) \cup E_{b}(135)}  \delta_{\langle e,b \rangle \alpha}(\one) \right)  \nn \\
&& \times \int [\extd G]^{\frac{3}{2} \cN} \left( \prod_{b \in \cB_5}  \prod_{e \in E_{b}(125)\cup E_{b}(235) \cup E_{b}(245)}  \delta_{\langle e,b \rangle \alpha}( \overrightarrow{\prod_{\tau \supset e}}
(G^{\tau}_{e})^{\epsilon^{\tau}_{e}}) \right)\,.
\ees
The only term left to integrate is a product of integrals associated to connected $\phi^3$ graphs (whose lines are strands of the initial graph). Each of these is dual to an edge of color $\{125\}$,
$\{235\}$ or $\{245\}$. Integrating a maximal tree
of strands in each of these graphs, then bounding the last $\delta$-function by its value at the identity, we obtain the general bound:
\beq
|\cA_{\cG}^{\alpha}| \leq \left(\prod_{b \in \cB_5}  \prod_{e \in E_{b}(345)\cup E_{b}(145) \cup E_{b}(135)}  \delta_{\langle e,b \rangle \alpha}(\one) \right)
\left( \prod_{e \in E(125)\cup E(235) \cup E(245)}  \delta_{|e| \alpha}( \one) \right)\,,
\eeq
where 
\beq
|e| \equiv \sum_{b \in \cB_\ell \, , \, b \supset e}  \langle e,b \rangle.
\eeq 
coincides with the number of $4$-simplices containing the edge $e$. 

Remarking that, when $\alpha$ goes to zero:
\beq
\forall a > 0, \qquad \frac{\delta_{a \alpha}(\one)}{\delta_{\alpha}(\one)} \rightarrow a^{- 3} \,,
\eeq
we can rewrite this bounds using powers of heat kernels with the same parameter, for instance $\alpha$. This allows to show that for any constant $K$ such that
\beq
K > K_0 \equiv \left(\prod_{b \in \cB_5}  \prod_{e \in E_{b}(345)\cup E_{b}(145) \cup E_{b}(135)}  \langle e,b \rangle^{- 3} \right)
\left( \prod_{e \in E(125)\cup E(235) \cup E(245)}  |e|^{- 3} \right)
\eeq
we asymptotically have:
\bes\label{bound1}
|\cA_{\cG}^{\alpha}| &\leq& K  \, [\delta_{\alpha}(\one)]^{\gamma}  \\
\gamma &=& |E(125)| + |E(235)| + |E(245)| + \sum_{b \in \cB_5} \left(|E_{b}(345)| + |E_{b}(145)| + |E_{b}(135)|\right) \,. \nn
\ees
This formula will be the central tool in the bounds on pseudo-manifolds we will derive below. Before moving to a second important formula following from the factorized expression for the amplitudes, we
remark that we can simply choose $K = 1$.

\

The second strategy we suggested to bound formula (\ref{amplitude_edge_reg}) also starts from a splitting of edge colors into two parts:
\bes
|\cA_{\cG}^{\alpha}| &\leq&  \int [\extd G]^{3 \cN} \left(\prod_{b \in \cB_5} 
\prod_{e \in E_{b}(125)\cup E_{b}(235) \cup E_{b}(245)}  \delta_{\langle e,b \rangle \alpha}(\overrightarrow{\prod_{\tau \supset e}} (G^{\tau}_{e})^{\epsilon^{\tau}_{e}}) \right) \nn \\
&&\times  \left(\prod_{b \in \cB_5}  \prod_{e \in E_{b}(345)\cup E_{b}(145) \cup E_{b}(135)}  \delta_{\langle e,b \rangle \alpha}( \overrightarrow{\prod_{\tau \supset e}}
(G^{\tau}_{e})^{\epsilon^{\tau}_{e}})\right) \\
&& \left( \prod_{\tau \in T_5} \delta(G^{\tau}_{43} G^{\tau}_{31} G^{\tau}_{14}) \delta(G^{\tau}_{32} G^{\tau}_{21} G^{\tau}_{13}) \delta(G^{\tau}_{24} G^{\tau}_{41} G^{\tau}_{12}) \right) \nn 
\ees
We can integrate the propagators with respect to variables $G^{\tau}_{23}$ and $G^{\tau}_{24}$, which are only involved in the first factor:
\bes
|\cA_{\cG}^{\alpha}| &\leq& \int [\extd G]^{2 \cN} \left( \prod_{b \in \cB_5} \prod_{e \in E_{b}(125)} 
\delta_{\langle e,b \rangle \alpha}(\overrightarrow{\prod_{\tau \supset e}} (G^{\tau}_{21})^{\epsilon^{\tau}_{e}}) \right) \nn \\
&&  \left( \prod_{b \in \cB_5}  
  \left( \prod_{e \in E_{b}(235)} \delta_{\langle e,b \rangle \alpha}(\overrightarrow{\prod_{\tau \supset e}} (G^{\tau}_{21} G^{\tau}_{13})^{\epsilon^{\tau}_{e}}) \right)
  \left( \prod_{e \in E_{b}(245)} \delta_{\langle e,b \rangle \alpha}(\overrightarrow{\prod_{\tau \supset e}} (G^{\tau}_{21} G^{\tau}_{14})^{\epsilon^{\tau}_{e}}) \right) \right)  \nn \\
&&\times   \left(\prod_{b \in \cB_5}  \prod_{e \in E_{b}(345)\cup E_{b}(145) \cup E_{b}(135)}  \delta_{\langle e,b \rangle \alpha}( \overrightarrow{\prod_{\tau \supset e}}
(G^{\tau}_{e})^{\epsilon^{\tau}_{e}})\right)
\left( \prod_{\tau \in T_5} \delta(G^{\tau}_{43} G^{\tau}_{31} G^{\tau}_{14}) \right) 
\ees
The interesting feature of this formula is the following: the last line is the integrand of the Boulatov amplitude of the 3d colored graph obtained from $\cG$ by deleting the lines of color $2$! 
Since the connected components of this graph are the bubbles in $\cB_{2}$, it is tempting to factorize their 3d amplitudes.
The simplest way do so is to bound the $\delta$-functions appearing in the first two lines that involve variables from the third. We wrote the last inequality in such a way that these are exactly the
terms in the second line. 
Therefore:
\bes
|\cA_{\cG}^{\alpha}| &\leq& \left( \prod_{b \in \cB_5} \prod_{e \in E_{b}(235) \cup E_{b}(245)} \delta_{\langle e,b \rangle \alpha}(\one) \right) 
\int [\extd G]^{\frac{\cN}{2}} \left( \prod_{b \in \cB_5} \prod_{e \in E_{b}(125)}  \delta_{\langle e,b \rangle \alpha}(\overrightarrow{\prod_{\tau \supset e}} (G^{\tau}_{21})^{\epsilon^{\tau}_{e}})
\right) \nn \\
&&  \int [\extd G]^{\frac{3 \cN}{2}} \left(\prod_{b \in \cB_5}  \prod_{e \in E_{b}(345)\cup E_{b}(145) \cup E_{b}(135)}  \delta_{\langle e,b \rangle \alpha}( \overrightarrow{\prod_{\tau \supset e}}
(G^{\tau}_{e})^{\epsilon^{\tau}_{e}})\right)
\left( \prod_{\tau \in T_5} \delta(G^{\tau}_{43} G^{\tau}_{31} G^{\tau}_{14}) \right) \nn
\ees
The last line is now exactly a product of bubble 3d amplitudes: $\prod_{b \in \cB_2} \cA_{b}^{\alpha}$. As for the integral in the first line, it is associated with the graph made of all the strands of
color $(125)$ which, as in the previous case, we bound by:
\beq
\int [\extd G]^{\frac{\cN}{2}} \left( \prod_{b \in \cB_5} \prod_{e \in E_{b}(125)}  \delta_{\langle e,b \rangle \alpha}(\overrightarrow{\prod_{\tau \supset e}} (G^{\tau}_{21})^{\epsilon^{\tau}_{e}})
\right)
\leq \prod_{e \in E(125)}  \delta_{|e| \alpha}( \one) \, .
\eeq

The net result is that, for any constant Q such that:
\beq
Q > Q_0 \equiv \left(\prod_{b \in \cB_5}  \prod_{e \in E_{b}(235) \cup E_{b}(245)}  \langle e,b \rangle^{- 3} \right) \left( \prod_{e \in E(125)} |e|^{-3} \right)
\eeq
we asymptotically have:
\bes\label{bound2}
|\cA_{\cG}^{\alpha}| &\leq& Q \, [\delta_{\alpha}(\one)]^{\eta} \prod_{b \in \cB_2} |\cA_{b}^{\alpha}| \nn \\
\eta &=& |E(125)| +\sum_{b \in \cB_5} \left(|E_{b}(235)| + |E_{b}(245)|\right) \,.
\ees
As before, we remark that $Q = 1$ is a valid choice.

		
		\subsection{Topological singularities}
		
		In this section, we would like to use equation (\ref{bound1}) to give a bound on the amplitudes of simplicial complexes with pointlike singularities. As in 3d, we need to introduce a combinatorial quantity which 
determines whether a bubble is spherical or not. Unlike in 3d however, this cannot be the genus, since the Euler characteristic of a manifold in odd dimensions is 0, irrespectively of its topology. We therefore propose 
to use the notion of degree introduced in tensor models (see previous chapter and \cite{RazvanN,RazvanVincentN,Gurau:2011xq}):  $\omega (b) = \sum_{J} g_J$, where the sum runs over the jackets of the $4$-bubble $b$, and $g_J$ is the genus of the jacket $J$. Indeed, the degree is 0 only for a specific set of simplicial decompositions of the sphere (those associated to melonic
graphs \cite{critical}). As a result, in this 4d case we will prove a bound on simplicial complexes which have non-melonic bubbles, that is on a subclass of manifolds as well as on singular pseudomanifolds.
But, remarking that {\it{all}} the jackets of a non-spherical bubble have genus bigger than $1$, we will refine the bound for singular pseudomanifolds, which will in the end take the same form as in 3d.

\
Let us consider a connected and closed colored graph $\cG$. We propose to use the bound (\ref{bound1}) to deduce a bound on the divergence degree $\gamma_{\cG}$ of $\cG$. Since the pre-factor $K$ can be
chosen to be $1$, only the parameter $\gamma$ in (\ref{bound1}) matters, and $\gamma_{\cG} \leq \gamma$, that is:
\beq\label{bound1'}
\gamma_{\cG} \leq |E(125)| + |E(235)| + |E(245)| + \sum_{b \in \cB_5} \left(|E_{b}(345)| + |E_{b}(145)| + |E_{b}(135)|\right) \,.
\eeq
Since this expression involves only quantities associated to edges and $4$-bubbles of the simplicial complex, let us show that the sum of the degrees of the same bubbles can also be expressed in terms
of similar quantities, namely we first prove:
\begin{lemma}\label{lemmaAdd}
\beq\label{sum_deg}
\sum_{b \in \cB_5} \omega(b) = 3 |\cB_{5}| +  \frac{3 \cN}{2} - \sum_{b \in \cB_5} |E_{b}|
\eeq
\end{lemma}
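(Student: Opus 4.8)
The plan is to compute the degree of each $4$-bubble $b \in \cB_5$ separately from the Euler characteristics of its jackets, exactly in the spirit of the genus lemma proved above, and then to sum over all bubbles of color $5$.

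First I would record the elementary counts for a fixed $b \in \cB_5$. Deleting the lines of color $5$ from $\cG$ leaves every node carrying its four lines of colors $\{1,2,3,4\}$, so $b$ is itself a rank-$3$ colored graph, and the nodes of the various bubbles partition those of $\cG$; writing $\cN_b$ for the number of nodes of $b$ this gives $\sum_{b \in \cB_5} \cN_b = \cN$. Since each node carries four lines and each line is shared by two nodes, $b$ has $2\cN_b$ lines. A rank-$3$ colored graph has exactly $3!/2 = 3$ jackets, each an orientable surface with $|\cV_J| = \cN_b$ and $|\cE_J| = 2\cN_b$, so the Euler relation $2 - 2g_J = |\cF_J| - |\cE_J| + |\cV_J|$ reduces to $2 - 2 g_J = |\cF_J| - \cN_b$.

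Next I would relate $\sum_J |\cF_J|$ to the total number of faces $F_b$ of $b$, the faces being the bicolored cycles labelled by color-pairs. A face of colors $(ij)$ lies in the jacket of the cyclic order $\sigma$ precisely when $i$ and $j$ are adjacent in $\sigma$; counting adjacencies, the three $4$-cycles contribute $3 \times 4 = 12$ adjacent pairs spread over the six color-pairs, so by symmetry each pair is adjacent in exactly two jackets, whence $\sum_J |\cF_J| = 2 F_b$. Summing $2 - 2g_J = |\cF_J| - \cN_b$ over the three jackets then yields $6 - 2\omega(b) = 2 F_b - 3\cN_b$, i.e. $\omega(b) = 3 + \frac{3}{2}\cN_b - F_b$. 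Adding over $b \in \cB_5$ and using $\sum_b \cN_b = \cN$ gives $\sum_b \omega(b) = 3|\cB_5| + \frac{3\cN}{2} - \sum_b F_b$.

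The one genuinely delicate input, which I would treat carefully, is the identity $F_b = |E_b|$. Under the colored-graph/simplicial dictionary the bubble $b$ is dual to a vertex of color $5$ and encodes the triangulated $3$-complex bounding the dual $4$-cell, in which the nodes, lines and faces of $b$ correspond to tetrahedra, triangles and edges respectively; thus the faces of $b$ are exactly the edges of the bubble, so $F_b = |E_b|$ and the previous display becomes (\ref{sum_deg}). As a sanity check I would test the dictionary on the supermelon (two nodes joined by four lines): there $\cN_b = 2$, all six color-pairs give a single face so $F_b = 6$, the three jacket genera vanish, and the formula returns $\omega(b) = 0$ with $|E_b| = 6$, matching the six edges of the two tetrahedra glued along the boundary $S^3$.
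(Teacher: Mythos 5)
Your proof is correct and follows essentially the same route as the paper: the paper obtains $\omega(b) = 3 + \frac{3 |T_b^5|}{2} - |E_b|$ by invoking the genus formula (\ref{g_jacket}) for each of the three jackets of $b$ together with the observation that each of the six edge colors has support in exactly two of the three jackets, and then sums over bubbles using $\sum_{b} |T_b^5| = 2|T^5| = \cN$. Your argument simply re-derives that genus formula directly in graph language (Euler characteristic with jacket vertices and edges given by the nodes and lines of $b$, i.e. the cellulation dual to the one used in the paper's proof of (\ref{g_jacket})), and your identifications $\cN_b = |T_b^5|$ and $F_b = |E_b|$ are precisely the colored-graph/simplicial dictionary the paper uses implicitly, so the two computations coincide step by step.
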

\begin{proof}
Indeed, by definition of the degree of a bubble $b \in \cB_{5}$, and by formula (\ref{g_jacket}), we have:
\beq		
\omega(b) = \sum_{ J= (\sigma, \sigma^{- 1})} \left( 1 + \frac{1}{2} \left( |T_{b}^{5}| - \sum_{\ell = 1}^{4} |E_{b}(\sigma(\ell) \sigma(\ell + 1) 5 )| \right) \right)\,,
\eeq
where $\sigma$ are cycles over $\{ 1, \ldots, 4\}$.\footnote{Note that we also used the fact that labeling jackets by cycles over vertex colors as in (\ref{g_jacket}), is equivalent to labeling them
by cycles over edge colors.} There are three different jackets in $b$, and each of them has support over four edge colors out of six, therefore:
\bes
\omega(b) &=& 3 +  \frac{1}{2} \left( 3 |T_{b}^{5}| - \frac{3 \times 4}{6} |E_b| \right) \nn \\
&=& 3 +  \frac{3 |T_{b}^{5}|}{2} - |E_b|\,.
\ees
Summing over the bubbles, and using
\beq
\sum_{b \in \cB_5} |T_{b}^{5}| = 2 |T^{5}| = \cN
\eeq
we finally obtain the claimed equality. \end{proof}
We therefore just need to make $\sum_{b \in \cB_5} |E_{b}|$ appear on the right-hand-side of inequality (\ref{bound1'}). To this aim, and as in the 3d case, we prove a combinatorial lemma:
\begin{lemma}\label{comb_lemma2}
 For any edge coloring $i = (\ell_1 \ell_2 5)$:
\beq
|\cB_{5}| + |E(i)| - \sum_{b \in \cB_{5}} |E_{b}(i)| \leq 1 \, .
\eeq
\end{lemma}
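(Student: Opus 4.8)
The plan is to mirror the proof of Lemma \ref{lemma_comb} almost verbatim, since the present statement is its exact four-dimensional analogue: both are instances of one general inequality for a connected colored graph, relating the numbers of bubbles attached to two overlapping color sets. Here I would set $S_1 = \{1,2,3,4\}$ (the colors supporting the $4$-bubbles in $\cB_5$) and $S_2 = \{\ell_1,\ell_2,5\}$ (the colors supporting the $3$-bubbles dual to the edges in $E(i)$), and record the two features that drive everything. First, $S_1 \cap S_2 = \{\ell_1,\ell_2\}$, so that the objects counted by $\sum_b |E_b(i)|$ are precisely the connected components of lines of colors $\{\ell_1,\ell_2\}$, i.e. the $2$-bubbles of these colors (this is exactly what the genus formula (\ref{g_jacket}), applied to a bubble $b$, labels as $E_b(\ell_1\ell_2 5)$). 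Second, $S_1 \cup S_2 = \{1,2,3,4,5\}$, meaning every line of $\cG$ carries a color lying in $S_1$ or in $S_2$.

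Next I would build a connectivity multigraph $\cC(\cG)$ whose vertices are the elements of $\cB_5 \sqcup E(i)$: for each $2$-bubble $\beta$ of colors $\{\ell_1,\ell_2\}$ I draw one line joining the unique bubble $b \in \cB_5$ and the unique edge $\epsilon \in E(i)$ that contain $\beta$. Uniqueness holds because the bubbles of $\cB_5$ partition the nodes, while $\{\ell_1,\ell_2\}\subset\{\ell_1,\ell_2,5\}$ forces $\beta$ into a single colour-$i$ edge. With this convention the number of vertices is $N = |\cB_5| + |E(i)|$, and the number of lines equals the number of $\{\ell_1,\ell_2\}$-bubbles, which is exactly $\sum_{b\in\cB_5}|E_b(i)|$.

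The crux is to show that $\cC(\cG)$ is connected whenever $\cG$ is, and this is the step I expect to be the only genuine obstacle, since it is where the global hypothesis on $\cG$ enters and where the two colour regimes must be handled uniformly. Here I would use $S_1 \cup S_2 = \{1,2,3,4,5\}$: to each node $u$ attach the two markers $b_1(u)\in\cB_5$ and $b_2(u)\in E(i)$ passing through it, which are adjacent in $\cC$ because they share the $\{\ell_1,\ell_2\}$-bubble of $u$. Any line crossed along a path in $\cG$ is coloured either in $S_1$ (hence internal to a single bubble of $\cB_5$, so $b_1$ is unchanged) or in $S_2$ (hence internal to a single colour-$i$ edge, so $b_2$ is unchanged); in both cases the markers of the two endpoints lie in a common connected component of $\cC$. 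Connectedness of $\cG$ then drives all markers, hence all vertices of $\cC$, into a single component. This is the direct transposition of the sentence asserting connectedness of $\cC_{\ell,\ell'}(\cG)$ in the proof of Lemma \ref{lemma_comb}.

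Finally, a connected multigraph on $N$ vertices has at least $N-1$ lines, so $L \geq N-1$ with $L = \sum_{b\in\cB_5}|E_b(i)|$. Rearranging gives $|\cB_5| + |E(i)| - \sum_{b\in\cB_5}|E_b(i)| = N - L \leq 1$, which is the claim. If one prefers to keep $\cC$ a simple graph, as in the three-dimensional proof, one instead has only $L \leq \sum_b |E_b(i)|$, but combined with $N-1\leq L$ this still yields $N - \sum_b|E_b(i)| \leq N - L \leq 1$, so the conclusion is unchanged.
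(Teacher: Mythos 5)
Your proof is correct and follows essentially the same route as the paper's: both construct the bipartite connectivity graph on $\cB_5 \cup E(i)$, deduce its connectedness from that of $\cG$, and conclude with the tree bound $N - 1 \leq L$. The only differences are bookkeeping — you index links by the $\{\ell_1,\ell_2\}$-bubbles so that $L$ equals $\sum_{b} |E_b(i)|$ exactly (the paper instead uses a simple graph and the inequality $L \leq \sum_{b} |E_b(i)|$), and your marker-propagation argument for connectedness replaces the paper's chains of bubbles sharing tetrahedra — but these are inessential variants of the same argument.
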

\begin{proof} 
Consider the graph $\cC_{i,5}(\cG)$ whose nodes are the elements of $\cB_5 \cup E(i)$, and links are constructed as follows: there is a link between a bubble $b \in \cB_5$ and $e \in E(i)$ if and only
if $e$ appears in the triangulation
of $b$.
\
We first show that, because $\cG$ itself is assumed to be connected, $\cC_{i,5}(\cG)$ is also connected. Let $b_i$ and $b_f$ be two elements of $\cC_{i,5}(\cG) \cap \cB_5$. The connectivity of $\cG$
ensures that there exists
a sequence of $p$ bubbles $(b_i \equiv b_0 , b_1 , \cdots , b_{p-1} , b_p \equiv b_f)$, such that: for any $k \in \llbracket 0 , p\rrbracket$, $b_k$ and $b_{k + 1}$ share a tetrahedron $\tau_k$. 
Hence: for any $k \in \llbracket 0 , p\rrbracket$, $b_k$ and $b_{k + 1}$ share an edge $e_k$. So, in $\cC_{i, 5}(\cG)$, $(b_i \equiv b_0 , e_0 , b_1 , e_1, \cdots , b_{p-1} , e_{p - 1}, b_p \equiv
b_f)$ is a connected path
from $b_i$ to $b_f$. Finally, remarking that any element of $\cC_{i, 5}(\cG) \cap E(i)$ is by definition connected to at least one element of $\cC_{i, 5}(\cG) \cap \cB_5$, we conclude that any two
nodes of $\cC_{i, 5}(\cG)$
are connected.
\
Call $L$ the number of links of $\cC_{i, 5}(\cG)$. Its number of nodes being equal to $|\cB_5| + |E(i)|$, the connectivity implies that:
\beq
|\cB_5| + |E(i)| \leq 1 + L\,.
\eeq
But for any $b \in \cB_5$, the number of lines $L(b)$ connected to $b$ in $\cC_{i, 5}(\cG)$ verifies: $L(b) \leq |E_{b}(i)|$. This is because $E_{b}(i)$ is the set of available edges of color $i$ in
the bubble (some of which
might be identified when gluing the different bubbles together), and $L(b)$ is the true number of edges of color $i$ in $b$, once all the identifications of edges have been taken into account. Since
by construction
\beq
L= \sum_{b \in \cB_5} L(b)\,,
\eeq 
we conclude that:
\beq
|\cB_5| + |E(i)| \leq 1 + \sum_{b \in \cB_5} L(b) \leq 1 + \sum_{b \in \cB_5} |E_{b}(i)|\,.
\eeq
\end{proof}

This, together with (\ref{bound1'}) and (\ref{sum_deg}), immediately yields the wanted result:
\beq
\gamma_{\cG} \leq 3 + \frac{3 \cN}{2}  - \sum_{b \in \cB_{5}} \omega(b) \, .
\eeq
Of course, we have a similar bound for any color $\ell$. 

This formula is particularly simple when the coupling constant $\lambda$ is appropriately rescaled:
\beq\label{rescaling_4d}
 \lambda \rightarrow \frac{\lambda}{\delta_{\alpha}(\one)^{3/2}}\,,
\eeq
which is also the setting of the $1/N$ expansion of the Ooguri model, as initially found in \cite{Gurau:2011xq}, and as will be confirmed shortly. Thus we have obtained the following:
\begin{proposition}
With the rescaling of the coupling constant (\ref{rescaling_4d}), the divergence degree $\gamma_{\cG}$ of a graph $\cG$ verifies, for any color $\ell$:
\beq
\gamma_{\cG} \leq 3  - \sum_{b \in \cB_{\ell}} \omega(b) \, \label{bound}.
\eeq
\end{proposition}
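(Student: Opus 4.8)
The plan is to combine the scaling bound (\ref{bound1'}), obtained from the color-$5$ bubble factorization, with the two combinatorial lemmas already at hand, and then to absorb the resulting volume factor through the coupling rescaling (\ref{rescaling_4d}). Since the analytic work — the factorization of the regulated amplitude over the $4$-bubbles of a chosen color and the positivity bounds leading to (\ref{bound1'}) — is in place, the remaining task is purely combinatorial bookkeeping, carried out for a fixed color (say $\ell=5$) and then extended to arbitrary $\ell$ by the manifest symmetry of the construction.

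First I would turn the three \emph{global} edge counts in (\ref{bound1'}) into \emph{local}, per-bubble ones. Applying Lemma \ref{comb_lemma2} to each of the three edge-colors $i \in \{(125),(235),(245)\}$ gives $|E(i)| \leq 1 - |\cB_5| + \sum_{b \in \cB_5} |E_b(i)|$, so that
\beq
\gamma_{\cG} \leq 3 - 3|\cB_5| + \sum_{b \in \cB_5}\Big( |E_b(125)| + |E_b(235)| + |E_b(245)| + |E_b(345)| + |E_b(145)| + |E_b(135)| \Big)\,.
\eeq
The one point needing care here is the observation that these six labels are exactly the six ways of appending the color $5$ to a pair drawn from $\{1,2,3,4\}$, hence they exhaust all edge-color classes inside a $4$-bubble of color $5$. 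The parenthesis therefore collapses to the total edge count $|E_b|$ of each bubble, giving $\gamma_{\cG} \leq 3 - 3|\cB_5| + \sum_{b \in \cB_5} |E_b|$.

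Next I would eliminate $\sum_{b} |E_b|$ using Lemma \ref{lemmaAdd}: solving (\ref{sum_deg}) for the edge count yields $\sum_{b \in \cB_5} |E_b| = 3|\cB_5| + \tfrac{3\cN}{2} - \sum_{b \in \cB_5} \omega(b)$, and upon substitution the $3|\cB_5|$ terms cancel, leaving $\gamma_{\cG} \leq 3 + \tfrac{3\cN}{2} - \sum_{b \in \cB_5} \omega(b)$. Finally, the rescaling (\ref{rescaling_4d}) attaches a factor $\delta_{\alpha}(\one)^{-3/2}$ to each of the $\cN$ interaction vertices, shifting the divergence degree by $-\tfrac{3\cN}{2}$ and precisely cancelling the volume term; this matches the expectation that (\ref{rescaling_4d}) is the unique rescaling supporting a nontrivial $1/N$ expansion. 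Because nothing distinguished the color $5$, the identical chain of inequalities holds for every color $\ell$, establishing $\gamma_{\cG} \leq 3 - \sum_{b \in \cB_\ell} \omega(b)$. The only genuinely nontrivial ingredient is the connectivity-of-the-incidence-graph argument behind Lemma \ref{comb_lemma2}, but that is settled beforehand, so the proposition itself reduces to this short assembly.
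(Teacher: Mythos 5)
Your proposal is correct and follows exactly the paper's own route: it combines the bound (\ref{bound1'}) with Lemma \ref{comb_lemma2} applied to the three colors $(125),(235),(245)$, collapses the six per-bubble edge classes into $\sum_{b \in \cB_5}|E_b|$, converts this via Lemma \ref{lemmaAdd} into $3 + \tfrac{3\cN}{2} - \sum_{b}\omega(b)$, and absorbs the $\tfrac{3\cN}{2}$ through the rescaling (\ref{rescaling_4d}), with color symmetry giving the general statement. You have simply spelled out the algebra that the paper compresses into ``immediately yields the wanted result,'' so there is nothing to add.
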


Given that any pointlike singularity implies a degree greater than zero, we therefore see that, like in 3d, the more pointlike singularities of the same color a simplicial complex has, the more its
amplitude is suppressed with respect to the leading order 
($\sim \delta_{\alpha}(\one)^{3}$). 
We can make this point a bit more precise, making use of a lemma, a simple proof of which can be found in the review \cite{Gurau:2011xp}\footnote{This result is actually generalizable to any dimension, since a graph with a planar
jacket has a trivial \textit{regular genus}, a property that in turn characterizes spheres (see \cite{FerriGagliardi, Vince_gene} and references therein).}:
\begin{lemma}
 If a connected $4$-colored graph $\cG$ possesses a spherical jacket, then $\cG$ is dual to a sphere.
\end{lemma}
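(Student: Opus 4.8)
The plan is to reduce the statement to the characterization of the $3$-sphere by its \emph{regular genus}, a topological invariant from crystallization theory which is tailored precisely to jackets. Recall that to each cyclic permutation $\sigma=(\ell_1\ell_2\ell_3\ell_4)$ of the four colors one associates a jacket $J_\sigma$, a closed orientable surface into which $\cG$ regularly embeds: its nodes and edges are those of $\cG$, and its $2$-cells are exactly the faces whose colors are consecutive in $\sigma$. The genus $g_{J_\sigma}$ of this surface is, by definition, the regular genus $\rho_\sigma(\cG)$ of $\cG$ with respect to $\sigma$, and the regular genus of the graph is $\rho(\cG)=\min_\sigma\rho_\sigma(\cG)=\min_J g_J$, the minimum being taken over the three jackets of $\cG$.

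First I would observe that, since all the $g_J$ are non-negative integers and by hypothesis one of them, say $g_{J_0}$, vanishes, the regular genus of $\cG$ is zero: $\rho(\cG)=0$. A priori $\cG$ represents only a closed connected $3$-pseudomanifold $M$; its regular genus is defined as $\mathcal{G}(M)=\min\{\rho(\cG')\,:\,\cG'\text{ represents }M\}$, so that $0\le\mathcal{G}(M)\le\rho(\cG)=0$ and hence $\mathcal{G}(M)=0$.

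It then remains to invoke the classical theorem of Gagliardi and Ferri (see \cite{FerriGagliardi, Vince_gene} and references therein): a closed connected pseudomanifold of regular genus zero is homeomorphic to the sphere of the corresponding dimension. Applied in dimension $3$ this yields $M\cong S^3$, which is the claim. Connectedness of $\cG$ is used twice: it ensures that each jacket is a connected surface, so that $g_{J_0}=0$ really means $J_0$ is the $2$-sphere, and it guarantees that $M$ is a connected pseudomanifold to which the characterization applies.

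The only genuinely hard input is the regular-genus characterization of the sphere, which I would cite rather than reprove. Its nontrivial content is that a genus-zero regular embedding rigidly constrains the combinatorics of $\cG$, and the proof in \cite{FerriGagliardi} reduces $\cG$, through a sequence of dipole moves that preserve the represented topology, to the elementary two-node graph dual to $S^3$. This is entirely consistent with the dipole analysis carried out above for the Boulatov model, but the cleanest route here is simply to appeal to the crystallization-theory statement.
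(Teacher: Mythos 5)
Your proof is correct and takes essentially the same route as the paper: the paper's entire proof is a citation to Proposition 3 of \cite{Gurau:2011xp}, and its accompanying footnote sketches precisely the argument you spell out (a planar jacket forces vanishing regular genus, a property which in turn characterizes spheres, citing \cite{FerriGagliardi, Vince_gene}). The one fine point --- that the genus-zero characterization must be applied to a dual complex that is a priori only a pseudomanifold, so that manifoldness is part of the conclusion rather than a hypothesis --- is handled at the same level of (citation) detail in the paper, so your proposal is faithful to it.
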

\begin{proof}
 See Proposition 3 in \cite{Gurau:2011xp}.
\end{proof}
In particular, this implies that if a bubble $b$ is not spherical, all its jackets have genus at least $1$, and therefore: $\omega(b) \geq 3$. This allows to prove the following corollary:
\begin{corollary}
 With the rescaling of the coupling constant (\ref{rescaling_4d}), the divergence degree $\gamma_{\cG}$ of any connected vacuum graph $\cG$ verifies, for any color $\ell$:
\beq
\gamma_{\cG} \leq 3 (1 - N^{s}_{\ell}) \,,
\eeq
where $N^{s}_{\ell}$ is the number of singular vertices of color $\ell$. 
\end{corollary}
This result is very similar to what we have proven in 3d, showing suppression of singular pseudomanifolds with respect to the leading order ($\gamma_{\cG} = 3$), therefore consisting only of
manifolds. In particular, in both cases singular pseudomanifolds all have convergent amplitudes. 

Finally, as anticipated in the beginning of this section, we notice that equation (\ref{bound}) also constrains the amplitudes of a special class of manifolds: those which have non-melonic bubbles.
\begin{corollary}
With the rescaling of the coupling constant (\ref{rescaling_4d}), the divergence degree $\gamma_{\cG}$ of any connected vacuum graph $\cG$ verifies, for any color $\ell$:
\beq
\gamma_{\cG} \leq 3 - N^{nm}_{\ell} \,,
\eeq
where $N^{nm}_{\ell}$ is the number of non-melonic bubbles of color $\ell$. 
\end{corollary}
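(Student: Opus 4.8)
The plan is to obtain this corollary directly from the previous proposition, using as the sole extra ingredient the characterization of degree-zero bubbles established in the melonic section (following \cite{critical}). Concretely, I would start from the bound (\ref{bound}),
\beq
\gamma_{\cG} \leq 3 - \sum_{b \in \cB_{\ell}} \omega(b)\,,
\eeq
which already holds for any color $\ell$ once the rescaling (\ref{rescaling_4d}) is in place, and then replace the full sum of degrees by a lower bound counting only the non-melonic bubbles.

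First I would recall that for each $4$-bubble $b$ (a connected $4$-colored graph, dual to a closed orientable $3$-dimensional complex), the degree $\omega(b) = \sum_{J} g_J$ is a sum of genera of its three jackets, hence a \emph{non-negative integer}. This guarantees that every melonic bubble contributes a quantity $\geq 0$ to the sum, so those terms can only help the inequality. Second, I would invoke the key fact, proven in \cite{critical} and recalled in the chapter on melons, that $\omega(b) = 0$ holds \emph{if and only if} $b$ is melonic. Contrapositively, any non-melonic bubble satisfies $\omega(b) \geq 1$.

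With these two observations the conclusion is immediate: splitting $\cB_\ell$ into its melonic and non-melonic bubbles, and discarding the (non-negative) melonic contributions, one gets
\beq
\sum_{b \in \cB_\ell} \omega(b) \;\geq\; \sum_{\substack{b \in \cB_\ell \\ b \ \mathrm{non-melonic}}} \omega(b) \;\geq\; \sum_{\substack{b \in \cB_\ell \\ b \ \mathrm{non-melonic}}} 1 \;=\; N^{nm}_{\ell}\,,
\eeq
and substituting this into (\ref{bound}) yields $\gamma_{\cG} \leq 3 - N^{nm}_{\ell}$, as claimed.

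There is no serious obstacle here; the argument is purely bookkeeping once (\ref{bound}) is available. The only point that requires a moment's care, and the step I would state explicitly rather than gloss over, is the justification that the degree-zero characterization of \cite{critical} applies verbatim to the objects $b \in \cB_\ell$. This is legitimate because a color-$\ell$ $4$-bubble of the $5$-colored graph $\cG$ is itself exactly a (closed, connected, orientable) $4$-colored graph of the type studied in the $1/N$ expansion of rank-$3$ colored tensor models, so the melonic-versus-degree dictionary transfers directly. Once that identification is made, the inequality $\omega(b)\geq 1$ for non-melonic $b$ is automatic and the corollary follows.
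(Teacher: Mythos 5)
Your proof is correct and follows exactly the route the paper intends: it combines the bound (\ref{bound}) with the characterization from \cite{critical} that $\omega(b)=0$ precisely for melonic bubbles, so each non-melonic bubble contributes at least $1$ to the sum of degrees. The paper states this corollary without writing out the argument, and your proposal supplies precisely the bookkeeping (including the observation that the $4$-bubbles of color $\ell$ are themselves closed connected $4$-colored graphs, so the melonic/degree dictionary applies) that the paper's surrounding discussion presupposes.
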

		
		
		\subsection{Domination of melons}
		
\subsubsection{Jacket bounds}		
	The notion of jacket of a graph $\cG$ we used in 3d generalizes to any dimension. They are closed surfaces labeled by pairs $(\sigma , \sigma^{\inv})$ of cyclic permutations of $\{ 1 , \ldots , 5
\}$, 
in which the graph $\cG$ can be regularly embedded \cite{FerriGagliardi, Vince_gene}\footnote{In contrast with the 3d case, we construct the jackets in terms of data related to the graph $\cG$ itself,
as opposed to its dual simplicial complex. 
If the two descriptions are of course equivalent, we find more convenient to work with $\cG$ itself in 4d, since representing or even giving an intuitive picture of simplicial complexes is necessarily
more difficult than in 3d.}.
The jacket $J = (\sigma , \sigma^{\inv})$ is the closed surface constituted of all the faces of colors $(\sigma(\ell) \sigma(\ell + 1))$ in $\cG$, glued along there common links. Its genus can be
easily computed
in terms of combinatorial data associated to $\cG$:
\begin{lemma}
 The jacket $J = (\sigma , \sigma^{\inv})$ of a $5$-colored graph $\cG$ has genus:
\beq
g_{J} = 1 + \frac{3 \cN}{4} - \frac{1}{2} \sum_{\ell = 1}^{5} |\cF(\sigma(\ell) \sigma(\ell + 1))|\,,
\eeq
where $\cN$ is the order of $\cG$, and $\cF( i j )$ is the set of faces of color $(i j)$ (dual to the set of triangles of color $(ij)$ in the simplicial complex).
\end{lemma}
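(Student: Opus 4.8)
The plan is to run exactly the same Euler-characteristic computation as in the three-dimensional case, but carried out intrinsically on the colored graph $\cG$ rather than on its dual simplicial complex (as anticipated in the footnote). Since $\cG$ is bipartite, the jacket $J = (\sigma, \sigma^{\inv})$ is a \emph{closed orientable} surface, so its genus is fixed by its Euler characteristic through
\begin{equation}
\chi_J = 2 - 2 g_J = |\cV_J| - |\cE_J| + |\cF_J|\,,
\end{equation}
and the whole task reduces to counting the cells of the canonical CW structure that $J$ inherits from $\cG$.

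First I would read off the three counts directly from the definition of $J$. By construction $J$ has the same nodes and edges as $\cG$, so its vertices are the $\cN$ nodes of $\cG$, giving $|\cV_J| = \cN$; and its edges are the lines of $\cG$. As $\cG$ is $5$-valent and each line joins two nodes, the number of lines equals $\tfrac{5 \cN}{2}$, hence $|\cE_J| = \tfrac{5 \cN}{2}$. Finally, by definition the faces of $J$ are precisely the bicolored faces of colors $(\sigma(\ell)\sigma(\ell+1))$ for $\ell = 1, \ldots, 5$, so that $|\cF_J| = \sum_{\ell = 1}^{5} |\cF(\sigma(\ell)\sigma(\ell+1))|$.

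Substituting these three values into the Euler relation gives
\begin{equation}
2 - 2 g_J = \cN - \frac{5\cN}{2} + \sum_{\ell = 1}^{5} |\cF(\sigma(\ell)\sigma(\ell+1))| = -\frac{3\cN}{2} + \sum_{\ell = 1}^{5} |\cF(\sigma(\ell)\sigma(\ell+1))|\,,
\end{equation}
and solving for $g_J$ yields the claimed formula. The only point requiring genuine care — the analogue of the ``is this really a surface'' step — is to justify that the nodes, lines and selected bicolored cycles of $\cG$ do assemble into a bona fide closed orientable surface carrying exactly this CW decomposition. Here I would invoke the standard theory of regular embeddings and jackets from crystallization theory \cite{FerriGagliardi, Vince_gene}: if the color of a line sits at position $\ell$ in the cycle $\sigma$, then that line lies on exactly the two jacket faces of colors $(\sigma(\ell-1)\sigma(\ell))$ and $(\sigma(\ell)\sigma(\ell+1))$, which is precisely the incidence condition making $J$ a ribbon graph, while orientability follows as usual from the bipartite (complex-field) structure of the model. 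Everything else is bookkeeping, so I do not expect any real obstacle beyond this structural verification.
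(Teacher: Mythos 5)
Your proof is correct and takes essentially the same route as the paper's: the Euler characteristic of the closed orientable surface $J$ combined with the cell counts $|\cV_J| = \cN$, $|\cE_J| = \tfrac{5\cN}{2}$ and $|\cF_J| = \sum_{\ell} |\cF(\sigma(\ell)\sigma(\ell+1))|$. Note that your edge count is the accurate one — the paper's proof misprints the number of lines as $\cL = 2\cN$ even though its final formula is only consistent with $\cL = \tfrac{5\cN}{2}$ — and your closing verification that the nodes, lines and bicolored faces genuinely assemble into a closed orientable surface is exactly the structural point the paper delegates to the crystallization-theory references.
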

 \begin{proof}
By definition, the Euler characteristic of $J$ is:
\beq
\chi_J = 2 - 2 g_J = |\cF_J| - |\cE_J| + |\cV_J|
\eeq
where $\cF_J$, $\cE_J$ and $\cV_J$ are the set of faces, edges and vertices of $J$. But, by construction:
\beq
|\cF_J| = \sum_{\ell} |\cF(\sigma(\ell) \sigma(\ell + 1))| \;, \qquad |\cE_J| = \cL \;, \qquad |\cV_J| = \cN \;,
\eeq
where $\cL$ is the number of lines in $\cG$. Since moreover $\cL = 2 \cN$, we conclude that:
\beq
2 - 2 g_J = - \frac{3 \cN}{2} + \sum_{\ell} |\cF(\sigma(\ell) \sigma(\ell + 1))| \,.
\eeq 
 \end{proof}

Jackets have been used in the Ooguri model to compute bounds on amplitudes \cite{Gurau:2011xq}. Interestingly, our construction allows to slightly strengthen these results, which we demonstrate now.
Discarding pre-factors, formula (\ref{bound2}) immediately implies the following bound on the degree of divergence of a graph $\cG$:
\beq
\gamma_{\cG} \leq |E(125)| + \sum_{b \in \cB_5} \left(|E_{b}(235)| + |E_{b}(245)|\right) + \sum_{b \in \cB_2} \gamma_{3d}(b)\,,
\eeq 
where $\gamma_{3d}(b)$ is the degree of divergence of the $3d$ amplitudes represented by the bubble graph $b$.
Since the choice of colors is arbitrary this generalizes immediately. For any cycle $\sigma = (\ell_1  \ell_2  \ell_3  \ell_4 \ell_5)$ of $(12345)$:
\beq\label{ineq}
\gamma_{\cG} \leq |E(\ell_1 \ell_3 \ell_5)| + \sum_{b \in \cB_{\ell_1}} \left(|E_{b}(\ell_1 \ell_2 \ell_3)| + |E_{b}(\ell_1 \ell_3 \ell_4)|\right) + \sum_{b \in \cB_{\ell_3}} \gamma_{3d}(b)\,.
\eeq 

From this, we would like to deduce an inequality involving the genus of the jacket $J$ associated to the cycle $\sigma$. This can be done in three steps. First remark that for any $i \neq j$, 
and $\ell$ different from both $i$ and $j$, the set of faces $\cF(i j)$ can be partitioned in terms of the faces $\cF_{b}(i j)$ of the bubble graphs $b \in \cB_{\ell}$:
\beq
\cF(ij) = \underset{b \in \cB_{\ell}}{\cup} \cF_{b}(i j)\,. 
\eeq
But since the set of faces $\cF_{b}(ij)$ is dual to the set of edges $\E_{b}(\ell ij)$ in the simplicial complex associated to $b$, we immediately obtain the following equality of cardinals:
\beq
|\cF(i j)| = \sum_{b \in \cB_\ell} |E_{b}(\ell i j)| \,.
\eeq
In particular, the sum over $\cB_{\ell_1}$ in (\ref{ineq}) is equal to $|\cF(\ell_2 \ell_3)| + |\cF(\ell_3 \ell_4)|$. Second, we can use lemma \ref{comb_lemma2} to bound $|E(\ell_1 \ell_3 \ell_5)|$:
\beq
|E(\ell_1 \ell_3 \ell_5)| \leq 1 - |\cB_{\ell_3}| + \sum_{b \in \cB_{\ell_3}} |E_{b}(\ell_1 \ell_3 \ell_5)| = 1 - |\cB_{\ell_3}| + |\cF(\ell_1 \ell_5)|\,.
\eeq
Finally, we can use a bound of the type (\ref{ineq_3d}) to bound the $\gamma_{3d}(b)$ terms:
\beq
\sum_{b \in \cB_{\ell_3}} \gamma_{3d}(b) \leq \sum_{b \in \cB_{\ell_3}} \left( 1 + |E_{b}(\ell_3 \ell_1 \ell_2)| + |E_{b}(\ell_3 \ell_4 \ell_5)| \right) = |\cB_{\ell_3}| + |\cF(\ell_1 \ell_2)| +
|\cF(\ell_4 \ell_5)|\,.
\eeq
All in all we obtain:
\beq
\gamma_{\cG} \leq 1 + \sum_{\ell = 1}^{5} |\cF( \sigma(\ell) \sigma(\ell + 1))| = 3 + \frac{3 \cN}{2} - 2 g_J
\eeq
This is our final result, which as in 3d is particularly nice once the coupling constant has been appropriately rescaled.
\begin{proposition}\label{jacket_4d}
With the rescaling of the coupling constant (\ref{rescaling_4d}), the divergence degree $\gamma_{\cG}$ of a connected vacuum graph $\cG$ verifies, for any jacket $g_J$:
\beq
\gamma_{\cG} \leq 3  - 2 g_J\,.
\eeq
\end{proposition}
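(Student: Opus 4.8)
The plan is to read the statement essentially off the factorized bound (\ref{bound2}), which already expresses the regularized amplitude as a power of $\delta_\alpha(\one)$ times a product of three-dimensional Boulatov amplitudes carried by the bubbles of one color. Discarding the harmless prefactor (one may take $Q=1$) and extracting the exponent $\eta$, I would first record the resulting bound on the divergence degree,
\beq
\gamma_{\cG} \leq |E(125)| + \sum_{b \in \cB_5} \left(|E_{b}(235)| + |E_{b}(245)|\right) + \sum_{b \in \cB_2} \gamma_{3d}(b)\,,
\eeq
where $\gamma_{3d}(b)$ is the divergence degree of the Boulatov amplitude attached to $b$. Since colors $5$ and $2$ were chosen arbitrarily in deriving (\ref{bound2}), I would immediately promote this to the analogous inequality for a general cycle $\sigma=(\ell_1\ell_2\ell_3\ell_4\ell_5)$, which is the jacket datum I want to reach.

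The core of the argument is then to convert the right-hand side into the face counts $|\cF(\sigma(\ell)\sigma(\ell+1))|$ appearing in the genus formula for $J=(\sigma,\sigma^{\inv})$ proved in the preceding lemma. I would do this in three independent moves. First, the duality between faces of $\cG$ and edges of the bubbles, $|\cF(ij)| = \sum_{b \in \cB_\ell} |E_b(\ell i j)|$, rewrites the $\cB_{\ell_1}$-sum as $|\cF(\ell_2\ell_3)| + |\cF(\ell_3\ell_4)|$. Second, the isolated edge term $|E(\ell_1\ell_3\ell_5)|$ is controlled by the combinatorial Lemma \ref{comb_lemma2}, giving $|E(\ell_1\ell_3\ell_5)| \leq 1 - |\cB_{\ell_3}| + |\cF(\ell_1\ell_5)|$. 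Third, each three-dimensional degree $\gamma_{3d}(b)$ for $b\in\cB_{\ell_3}$ is bounded by a $3d$ jacket inequality of the type (\ref{ineq_3d}), summing to $\sum_{b\in\cB_{\ell_3}} \gamma_{3d}(b) \leq |\cB_{\ell_3}| + |\cF(\ell_1\ell_2)| + |\cF(\ell_4\ell_5)|$.

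Adding the three bounds, the $\pm|\cB_{\ell_3}|$ terms cancel and the five consecutive face colors of $\sigma$ assemble, yielding $\gamma_{\cG} \leq 1 + \sum_{\ell=1}^5 |\cF(\sigma(\ell)\sigma(\ell+1))|$. Substituting the jacket genus formula $g_J = 1 + \frac{3\cN}{4} - \frac{1}{2}\sum_\ell |\cF(\sigma(\ell)\sigma(\ell+1))|$ turns this into $\gamma_{\cG} \leq 3 + \frac{3\cN}{2} - 2 g_J$, and the explicit $\cN$-dependence is absorbed once the coupling is rescaled as in (\ref{rescaling_4d}), leaving the claimed $\gamma_{\cG} \leq 3 - 2 g_J$. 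The step I expect to require the most care is the third one: the particular $3d$ inequality must be selected so that exactly the two face colors complementary to those generated in the first move are produced, so that the final sum closes symmetrically over all consecutive pairs of $\sigma$; with a mismatched choice the edge/face colors would not reconstruct $g_J$. Checking that this color bookkeeping closes, and that the Boulatov sub-bound applies uniformly to every bubble irrespective of its own topology, is the real content behind the otherwise short chain of inequalities.
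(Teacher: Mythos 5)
Your proposal is correct and follows essentially the same route as the paper's proof: starting from the factorized bound (\ref{bound2}), generalizing it to an arbitrary cycle, and then applying the face--edge duality $|\cF(ij)| = \sum_{b \in \cB_\ell} |E_b(\ell i j)|$, Lemma \ref{comb_lemma2}, and a $3d$ bound of the type (\ref{ineq_3d}) to assemble $\gamma_{\cG} \leq 1 + \sum_{\ell} |\cF(\sigma(\ell)\sigma(\ell+1))|$, after which the genus formula and the rescaling (\ref{rescaling_4d}) give the claim. The color bookkeeping you flag as delicate is handled in the paper exactly as you anticipate, with the three moves producing precisely the five consecutive face colors of $\sigma$ and the $\pm|\cB_{\ell_3}|$ terms canceling.
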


In particular, we notice that the strongest bound is obtained by considering the {\it maximum} of the genera of all the 12 jackets of the graph $\cG$.
Also, we see that all graphs with a jacket of genus greater than $2$ have convergent amplitudes, just like singular pseudomanifolds. 

\

This has to be compared with the known jacket bound, which is weaker and is a direct corollary of the previous result:
\begin{corollary}
 With the rescaling of the coupling constant (\ref{rescaling_4d}), the divergence degree $\gamma_{\cG}$ of a connected vacuum graph $\cG$ verifies:
\beq
\gamma_{\cG} \leq 3  - \frac{1}{6} \omega(\cG)\,.
\eeq
\end{corollary}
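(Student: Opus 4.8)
The plan is to obtain this weaker bound by simply averaging the sharp per-jacket bound of Proposition \ref{jacket_4d} over the full set of jackets of $\cG$. First I would recall that a $5$-colored graph carries exactly $12$ jackets, indexed by unordered pairs $(\sigma , \sigma^{\inv})$ of cyclic permutations of the color set $\{1,\ldots,5\}$: there are $(5-1)! = 24$ cyclic orderings, identified two-by-two under the inversion $\sigma \leftrightarrow \sigma^{\inv}$. By definition, the degree of $\cG$ is the sum of the genera of these jackets, $\omega(\cG) = \sum_{J} g_J$, the sum running over all $12$ of them.

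The core step is then immediate. Proposition \ref{jacket_4d} asserts that, with the rescaling (\ref{rescaling_4d}), one has $\gamma_{\cG} \leq 3 - 2 g_J$ for every single jacket $J$. Summing this inequality over the $12$ jackets gives $12\,\gamma_{\cG} \leq 36 - 2 \sum_{J} g_J = 36 - 2\,\omega(\cG)$, and dividing by $12$ yields exactly $\gamma_{\cG} \leq 3 - \frac{1}{6}\,\omega(\cG)$. Equivalently, since $\sup_J g_J \geq \frac{1}{12}\sum_J g_J$, the stronger bound $\gamma_{\cG} \leq 3 - 2\sup_J g_J$ of Proposition \ref{jacket_4d} already implies the claim directly.

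There is essentially no obstacle here, which is precisely why the statement is phrased as a corollary: the only point requiring a moment's care is that the sum defining $\omega(\cG)$ ranges over the same collection of $12$ jackets appearing in Proposition \ref{jacket_4d}, so that the averaging is legitimate. This also makes transparent why the present bound is strictly weaker than the one in the proposition: replacing the maximal genus by the average genus discards information whenever the jackets of $\cG$ fail to all share the same genus, so the melonic saturation $\omega(\cG)=0$ is recovered on both sides but the sub-dominant orders are controlled less tightly by this averaged form.
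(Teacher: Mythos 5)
Your proof is correct and is essentially identical to the paper's: both obtain the corollary by averaging the per-jacket bound $\gamma_{\cG} \leq 3 - 2 g_J$ of Proposition \ref{jacket_4d} over the $12$ jackets, using $\omega(\cG) = \sum_J g_J$. Your additional remark that the bound also follows from $\sup_J g_J \geq \frac{1}{12}\sum_J g_J$ is a harmless restatement of the same observation the paper makes when noting the supremum bound is the stronger one.
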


\begin{proof}
 The degree of $\cG$ is defined by:
\beq
\omega(\cG) = \sum_{J} g_J\,,
\eeq
so that averaging the previous bound over the $12$ jackets of $\cG$ immediately gives the standard jacket bound. 
\end{proof}

\subsubsection{$1/N$ expansion}
	
	Similarly to the Boulatov model, we have just seen how focusing on the formal symmetries of the Ooguri model can lead to interesting scaling bounds on the amplitudes. While at first sight the edge formulation in 4d seems to be tailored to the analysis of bubble divergences only, it has been demonstrated that jacket bounds are also contained in the basic scaling bounds (\ref{bound1}) and (\ref{bound2}). Therefore we again recover from this analysis that leading order graphs must be degree $0$, hence melonic. The reciprocal can be proved in the same way as in 3d, by analyzing how the scaling behavior of the amplitudes is left invariant by non-degenerate $1$-dipole moves. Therefore, the edge formulation of the Ooguri model provides an independent proof of the existence of the $1/N$ expansion, and of the fact that the first term in this expansion is provided by the melonic sector of the theory.

\
Methodologically, it is interesting to point out the recursive nature of this strategy: scaling bounds for the Boulatov model enter the proof of scaling bounds for the Ooguri model. It is therefore tempting to conjecture that such a pattern could be used to devise a general proof of the existence of the $1/N$ expansion, for topological GFTs of arbitrary dimensions. We however do not want to explore this path further, first because the existence of such an expansion has already been settled down \cite{Gurau:2011xq}, and more importantly because we are ultimately interested in $4d$ quantum gravity models.

\
The most interesting outcome of the edge formulation of the Ooguri model is a refinement of the jacket bounds, turning the average of the genera of the jackets into a supremum. While this does not play any role at leading order, this certainly affects the type of graphs contributing to lower orders in $N$. We could therefore expect specific signatures of the Ooguri model, different from that of the i.i.d. $4d$ tensor model, at lower orders in $N$. Where such differences exactly lie, and whether they have the potential to affect the critical behavior of such models has yet to be determined, but it is already clear that the edge formulation could be used to advance further in this direction. 
	
%
%
%
%
%
%
%
%
%
%
%
%
%

\chapter{Renormalization of Tensorial Group Field Theories: generalities}\label{renormalization}

\hfill\begin{minipage}{10cm}
{\footnotesize {\it
The "renormalization group" approach is a strategy for dealing with problems involving many length scales. The strategy is to tackle the problem in steps, one step for each length scale. 
}}
\vspace{0.2cm}
\newline {\footnotesize {\bf Kenneth G. Wilson}, Nobel Lecture, December 1982.}
\end{minipage}

\vspace{0.4cm}

In this chapter, we finally move to the renormalization aspects of this manuscript. We first focus on the general formalism, and highlight the new features one needs to introduce in order to handle TGFTs with connection degrees of freedom. Two examples, taken from \cite{u1} and \cite{su2}, will be treated in details in the last two chapters of this thesis.

\section{Preliminaries: renormalization of local field theories}

For the reader's convenience, we start with a brief introduction of standard material about renormalization theory for local quantum field theories \cite{vincent_book , vincent_2002}. This allows us to introduce the main concepts and multiscale tools we will then generalize to TGFTs.

\subsection{Locality, scales and divergences}

Let us, for the sake of simplicity, consider a bosonic scalar field $\phi$ in $D$-dimensional flat (Euclidean) space-time. The free theory is specified by a kinetic action 
\beq
S_{kin}(\phi) = \frac{1}{2} \int \extd^D x \left( m^2 \phi(x)^2 - \phi(x) \Delta \phi(x) \right)\,,  
\eeq
where $m>0$ is the mass and $\Delta$ the Laplacian on $\mathbb{R}^D$. A quantum interacting theory for such a scalar field can be formally specified through the partition function:
\beq
\cZ = \int \cD \phi \, \e^{- S_{kin} (\phi) - S_{int} (\phi)} \,,
\eeq
where $\cD \phi$ is the (ill-defined) Lebesgue measure on the space of fields, and $S_{int}$ is the interaction part of the action. The latter is constrained by the physical symmetries of the model one is considering, among which Poincaré invariance when space-time is flat, or rather Euclidean invariance in the Wick rotated version we are focusing one. The interactions are required to be \textit{local} in space-time, which means that $S_{int}$ should be a linear combinations of monomials of the form 
\beq
\int \extd^D x \, \phi(x)^k \,, \; k \in \mathbb{N}^*.
\eeq  
We consider the simplest situation here, but of course further restrictions coming from different types of symmetries, for instance gauge symmetries, should be implemented in more complicated theories. In addition to that, we will also require $S_{int}$ to be positive, to avoid quantum instabilities of the vacuum. Hence we assume:
\beq
S_{int} (\phi) = \sum_{k \in \mathbb{N}^{*}} \lambda_{2k} \int \extd^D x \, \phi(x)^{2 k} \,,
\eeq  
where $\lambda_{2k}$ are the (real) bare coupling constants. 

\
Note that the derivatives introduced in the kinetic term softly break the locality of the interactions. This is one aspect we will generalize to the tensorial world, as advocated in \cite{vincent_tt1 , vincent_tt2}. 

\
The kinetic part of the action can be combined with the Lebesgue measure into a well-defined Gaussian measure $\extd \mu_C (\phi)$ whose covariance $C$ encodes the propagator of the theory:  
\beq
\int \extd \mu_C (\phi) \phi(x) \phi(y) = C(x ; y)\,. 
\eeq
In our case, $C$ is invariant under translations, we will therefore use the notation $C(x ; y) = C(x - y)$. The Fourier transform of this function is proportional to $(p^2 + m^2)^{\inv}$, where $p$ is the momentum, thereby introducing an energy scale into the theory. By means of Schwinger's trick, which consists in the simple equation
\beq
\frac{1}{p^2 + m^2} = \int_{0}^{+ \infty} \extd \alpha\, \e^{- \alpha (p^2 + m^2)},
\eeq
we can encode the scale in the real parameter $\alpha$ and show by Fourier transforming back to configuration space that $C$ is equal to:
\beq
C(x - y) = \int_{0}^{+ \infty} \extd \alpha \, \e^{- m^2 \alpha } \frac{\e^{-\vert x - y \vert^2 / 4 \alpha}}{(4 \pi \alpha)^{D/2}}\,.
\eeq
The second factor in the integrand is nothing but the heat kernel on $\mathbb{R}^D$ at time $\alpha$. When $D \geq 2$, this covariance is ill-defined at coinciding points ($x=y$), since in this case the integrand is not integrable in the neighborhood of $\alpha = 0$. This is the source of UV divergences, arising from infinitely close points in configuration space, or equivalently modes with arbitrary high momenta. To make sense of the theory, one should therefore first regularize and then renormalize it. Since the divergences have been identified as coming from the neighborhood of $\alpha = 0$, a convenient way of regularizing the covariance is provided by a cut-off $\Lambda > 0$ on the Schwinger parameter $\alpha$:
\beq
C^\Lambda (x) = \int_{\Lambda}^{+ \infty} \extd \alpha \, \e^{- m^2 \alpha } \frac{\e^{- x^2 / 4 \alpha}}{(4 \pi \alpha)^{D/2}}\,. 
\eeq

\
One can then expand perturbatively this theory with cut-off around the free theory ($\lambda_{2k} = 0$), and for definiteness we can assume that only a finite number of coupling constants $\lambda_4 , \ldots , \lambda_{v_{max}}$ are turned on. By Wick's theorem, the partition function takes the form
\beq
\cZ_\Lambda = \sum_{\cG} \frac{1}{s(\cG)} \prod_{k = 2}^{v_{max}/2} (-\lambda_{2k})^{n_{2k}(\cG)} \, \cA_\cG \,,
\eeq
where $\cG$ are (vacuum) Feynman graphs, $s(\cG)$ a symmetry factor associated to a graph $\cG$, $n_{2k} (\cG)$ its number of vertices of valency $2 k$, and $\cA_\cG$ its amplitude. The amplitude $\cA_\cG$ is computed as usual, by associating (cut-off) propagators to lines, interaction kernels to vertices, and convoluting them according to the pattern of $\cG$. The purpose of perturbative renormalization theory is to give a meaning to this perturbative expansion (as a formal multi-series) in the limit $\Lambda \to 0$, that is when the divergences contained in the propagators start manifesting themselves in the amplitudes. This procedure will in particular imply further constraints on the parameters of the theory, such as the dimension $D$ or the coupling constants.




\subsection{Perturbative renormalization through a multiscale decomposition}

The modern understanding of renormalization theory dates back to Wilson, who provided a deep physical understanding of the mathematical procedures underlying it. We introduce here one such procedure, the multiscale analysis, together with the Wilsonian perspective. 

The key idea of renormalization is to organize physical processes according to the scales at which they happen. The first assumption one can make in this respect, is that a theory with cut-off $\Lambda$ can be trusted when it comes to computing scattering amplitudes between boundary states with small enough energies i.e. such that $(p^2 + m^2) \Lambda \ll 1$. The key questions Wilson asked were: how is this approximation affected when one moves the cut-off up? is it self-reproducing?

\
The multi-scale representation of field theories investigates these questions in a discrete setting. Rather than changing the cut-off in a continuous fashion, one instead uses slices of scales with non-zero width. To this effect, let us fix an arbitrary constant $M>1$, and assume the cut-off to be of the form $\Lambda = M^{-2 \rho}$, with $\rho$ an integer. The half-line $\left[ M^{-2 \rho} , +\infty \right[$ is then partitioned into slices $\left[ M^{-2 i} , M^{-2 (i - 1)} \right[$, with a geometric progression, except for the last $i = 0$ slice which is simply $\left[ 1 , + \infty \right[$. This allows to split the propagation of the field into contributions from these different $i$-labeled scales. Using the simpler notation $C^\rho \equiv C^{\Lambda}$, we can define the propagator at scale $i$ by
\bes\label{ci1}
C_0 (x) &=& \int_{1}^{+ \infty} \extd \alpha \, \e^{- m^2 \alpha } \frac{\e^{- x^2 / 4 \alpha}}{(4 \pi \alpha)^{D/2}}\,, \\
\forall i \geq 1\,, \quad C_i (x) &=& \int_{M^{-2i}}^{M^{-2(i-1)}} \extd \alpha \, \e^{- m^2 \alpha } \frac{\e^{- x^2 / 4 \alpha}}{(4 \pi \alpha)^{D/2}}\,, \\ 
\ees
in such a way that\footnote{Notice the convention adopted from now on: lower indices label the slices, while upper ones indicate the cut-off.}:
\beq\label{cov_scale1}
C^{\rho} = \sum_{i = 0}^{\rho} C_i\,.
\eeq
The random field can then be decomposed into independent contributions $\phi_i$ from each scale $i$, distributed according to the Gaussian measures $\extd \mu_{C_{i}} (\phi_i )$:
\bes
\phi &=& \sum_{i = 0}^{\rho} \phi_i\,,\\
\extd \mu_{C^{\rho}} (\phi) &=& \prod_{i = 0}^{\rho} \extd \mu_{C_i} (\phi_i)\,.
\ees
The renormalization group flow allows to integrate out the highest energy scales, slice by slice. As far as low energy processes are concerned, the fine UV effects can be captured by new effective coupling constants $\lambda_{2k, i}$ and mass $m_i$, which are the new parameters of the theory once the cut-off $\rho$ has been lowered down to $i$. More precisely, discrete flow equations for these coupling constants can be deduced from the ansatz:
\bes\label{flow_gene}
\e^{- S_{int}^{\{ \lambda_{2k , i} \}} (\Phi) } 
&\approx& \int 
\extd \mu_{C_{i+1 
}} (\phi_{i+1})\, \e^{- S_{int}^{\{ \lambda_{2k , i + 1} \} } (\Phi + \phi_{i+1})}\,.
\ees
In this formula, $\Phi = \sum_{j \leq i} \phi_j$ contains the slow moving modes of the field, and new indices have been introduced to take the fact that the coupling constants 
must be redefined at each step into account. To compute the effective action $S_{int}^{^{\{ \lambda_{2k , i} \}}}$ at scale $i$ from its counter-part at scale $i+1$, one needs to first integrate the high energy shell $i+1$, and then compute a logarithm. These two steps can be best understood at the level of the Feynman expansion, where taking the logarithm is equivalent to restricting to connected graphs. This is what makes the latter so useful to renormalization. The approximation procedure itself consists in expanding the leading-order terms in this expansion, the divergent graphs, around their local contributions. These can be reabsorbed into the coupling constants at scale $i$, while the finite corrections are irrelevant in the UV and are therefore discarded. This defines a discrete renormalization group flow, which allows to deduce renormalizable coupling constants (at scale $i = 0$) from the bare ones (at scale $i = \rho$ or $+ \infty$ when this makes sense) to $i = 0$. A theory is called renormalizable if solutions to the general ansatz (\ref{flow_gene}) which involve only a finite number of non-zero coupling constants (at a given scale) can be found. This is nothing but asking the theory to be predictive: with a renormalizable theory, once the coupling constants at a given scale have been measured all (lower energy) scattering processes can be computed in principle. 

\
We already see that renormalizability depends on a subtle interplay between scales and locality. In particular, only the high energy pieces of the divergent graphs will need to be renormalized. It is explicit from the renormalization group ansatz (\ref{flow_gene}), in which by definition only the high energy slices contribute to the corrections to the coupling constants. Once these have been taken care of, no UV divergence remains in a renormalizable theory. 
To see this at the level of the Feynman amplitude of a connected graph $\cG$, one can expand each propagator as in (\ref{cov_scale1}), which gives rise to the multiscale expansion of the amplitude $\cA_\cG$:
\beq
\cA_\cG = \sum_{\mu} \cA_{\cG, \mu}\,,
\eeq 
where $\mu \equiv \{ i_l \,,\; l \in L(\cG)\}$ is an attribution of one scale index per line of $\cG$, and in $\cA_{\cG , \mu}$ the $\alpha$ integrals have been restricted to the appropriate slices. To give full justice to the renormalization group perspective, it is then preferable to bound $\cA_{\cG , \mu}$ in a $\mu$-dependent and optimal way, rather than relying on crude bounds on $\cA_\cG$. It is particularly important if one wants to prove renormalizability at all orders in perturbations. Technically, such bounds are obtained in two steps. One first focuses on the covariance, and proves that there exist constants $K > 0$ and $\delta > 0$ such that
\beq
\forall i \in \mathbb{N}\,, \; \vert C_i (x, y) \vert \leq K M^{(D - 2) i}\, \e^{- \delta M^i \vert x - y \vert}\,.
\eeq 
This bound is a simple consequence of the definitions (\ref{ci1}), and captures the peakedness properties of the propagators at high energy in an optimal way \cite{vincent_book}. One then plugs this bound in the amplitude $\cA_{\cG, \mu}$, and optimize it according to the scale attribution $\mu$. To this effect, one introduces the notion of \textit{high subgraph}, constructed as follows: call $\cG_i$ the set of lines of $(\cG , \mu)$ with scales higher or equal to $i$; we label its connected components $\cG_i^{(k)}$, with $k$ running from $1$ to some integer $k(i)$. The $\cG_i^{(k)}$'s are the high subgraphs of $(\cG , \mu)$, that is its connected subgraphs with internal scales strictly higher than the external scales (i.e. the scales labeling the external legs). Seen from their external legs, such subgraphs 'look' local, because of the high energy carried by the internal propagators. From the general argument above, they are therefore expected to be responsible for the divergences, and for the flow of the coupling constants. It can indeed be shown \cite{vincent_book} that:
\beq\label{pc1}
\vert \cA_{\cG , \mu } \vert \leq K^{L(\cG)}  \prod_{i \in \mathbb{N}} \prod_{ k \in \llbracket 1 , k(i) \rrbracket } M^{\omega [  \cG_{i}^{(k)}]}\,,
\eeq
where $\omega(\cH)$ is the degree of divergence of a subgraph $\cH \subset \cG$. In this particular case:
\beq
\omega ( \cH ) = D F(\cH) - 2 L(\cH)\,, 
\eeq 
where $F$ is the number of loops, and $L$ still denotes the number of internal lines. Equation (\ref{pc1}) is a multiscale generalization of the usual superficial power-counting argument, which allows to systematically investigate renormalizability at all orders in perturbation theory. The \textit{divergent subgraphs} are those with $\omega \geq 0$, and can generate divergences when they are high, while the contributions from \textit{convergent subgraphs} ($\omega < 0$) are always finite. Therefore only the former need to be renormalized. 

We will introduce the full machinery of perturbative renormalization directly at the level of the TGFTs. We only recall here how renormalizability can be inferred from the fundamental bound (\ref{pc1}). The key point is to recast the divergent degree in an appropriate form. In the present situation, we can use the combinatorial relations:
\beq
F = L - n + 1\,, \qquad L = \sum_{k = 2}^{v_{max} / 2 } k n_{2k} - \frac{N}{2} \,,
\eeq   
where $N$ is the number of external legs. This allows to show that
\beq
\omega = D - \frac{D-2}{2} N + \sum_{k = 2}^{v_{max} / 2 } \left[ (D-2) k - D \right] n_{2k}\,.
\eeq
To ensure renormalizability, $\omega$ must first be bounded from above, otherwise arbitrarily fast divergences would appear in the UV. Second, the number of external legs of a divergent graph must also be bounded, otherwise the divergences would turn on local couplings $\lambda_{2 k}$ with arbitrarily high $k$. These two conditions require the coefficient in front of $N$ to be positive, and the coefficients in front of each $n_k$ to be negative, hence:
\beq
D \geq 2 \,, \qquad v_{max} \leq \frac{2 D}{D - 2}\,.
\eeq
When $D = 2$, $\omega = D (1 - n)$, therefore only single-vertex graphs (called \textit{tadpoles}) contribute to the divergences. There is only a finite number of divergent graphs in this case, irrespectively of the value of $v_{max}$. Such a theory is called \textit{super-renormalizable}, and can be renormalized via a Wick ordering procedure, for any polynomial interaction \cite{Simon}. The models presented in Chapter \ref{chap:u1} are tensorial generalizations of these $P(\Phi)_2$ theories. When $D \geq 5$, $v_{max} < 4$, and therefore no renormalizable interacting theory can exist. The only two remaining possibilities are: $D=3$ and $v_{max} = 6$, or $D = 4$ and $v_{max} = 4$. They are called \textit{just-renormalizable} \cite{salmhofer , vincent_book}, because they generate an infinite number of divergent graphs. As far as fundamental interactions are concerned, just-renormalizablity turns out to be the norm, and we will therefore pay them special attention in the tensorial world as well. On general grounds, we can also expect a close parallel between models supporting a $1/N$ expansion and just-renormalizable field theories. The first condition for the existence of a $1/N$ expansion, an upper bound on the divergence degree, is the analogue of the renormalizability condition; and the second, the existence of an infinite number of leading-order graphs, is similar to the additional just-renormalizability condition. Field theories are only more refined in the sense that the specific scaling of the coupling constants one chooses by hand in a $1/N$ expansion is instead implemented through a non-trivial propagator kernel. The large $N$ phase is so to speak dynamically encoded in the renormalization group flow.  

\
A last point we would like to comment on is the question of \textit{asymptotic freedom}. The $\phi^4$ theory in four dimensions we have just introduced suffers from a \textit{Landau pole} effect: having the renormalized coupling constant ($\lambda_{4,0}$) fixed to a finite value requires the effective coupling constants $\lambda_{4, i}$ to blow up when $i \to + \infty$. Therefore the perturbative expansion cannot be trusted all the way down to the UV. Such theories cannot easily be expected to describe fundamental interactions, in particular gravity. QCD on the contrary is an example of asymptotically free theory, which means that its coupling constant flows to $0$ in the UV. This property is responsible for the phase transition from quarks to hadrons, and is in this sense also desirable in TGFT: not only it could make the perturbative treatment of a quantum theory of gravity well-defined all the way down to the UV, but it could also dynamically generate phase transitions such as discrete-to-continuum ones \cite{vincent_tt1 , vincent_tt2}.  









\section{Locality and propagation in GFT}

In order to extend the perturbative techniques outlined in the previous section to GFTs, one needs to equip oneself with appropriate generalizations of the key structures entering renormalization theory. Of particular importance is the splitting of the action into kinetic and interaction terms, which is the guiding thread we decide to follow here. If we look at space-time based quantum field theories from a more abstract perspective, the notion of scale encoded in the spectrum of the propagator is to be seen as primitive, irrespectively of its physical interpretation. It is nothing more than an abstract parameter allowing to decimate the degrees of freedom, going from regions with many degrees of freedom to regions with few. Such a definition of scale survives in our background independent context, even in the absence of any clear physical interpretation. As explained in the introduction of this thesis, our attitude at this stage is to regard the specific notion of scale we will use as a fundamental postulate of the models we will consider. In a sense we are reversing the usual perspective: instead of 
relying on a predefined and physically transparent notion of scale to construct and interpret renormalization theory, we introduce an abstract renormalization group as a primary structure and look for its consequences. It remains to be seen how fruitful such a point of view can be on the physical side, but what will already be clear in this thesis is that it allows to construct mathematically consistent perturbative GFTs. In addition to the propagator, entailing the definition of scale, we will need to adopt a locality principle, understood as a prescription for the possible non-Gaussian perturbations introduced in the interaction. This will again have nothing to do with a space-time based notion of locality, but will play the same role in the subtraction of infinities, hence keeping the same nomenclature seems justified. In this section, we review possible choices of propagators and locality principles, which also gives us the opportunity to place the results of this thesis in the context of past and future works.

 
\subsection{Simplicial and tensorial interactions}

As already explained, essentially two types of interactions have been explored so far in the GFT literature: simplicial interactions and tensor invariants. While the first choice seems natural in the discrete gravity context of spin foam models, it seems rather awkward when it comes to GFT. Indeed, when interpreted as Feynman amplitudes of a quantum field theory, spin foam amplitudes have no reason to be built out of a single vertex interaction. The type of interactions one allows in a field theory should be specified by a locality principle, as opposed to (heuristically motivated but) essentially arbitrary combinatorial restrictions. On top of clarifying the assumptions made in the construction of a model, a locality principle is essential to renormalization, providing the necessary flexibility to reabsorb divergences in effective couplings. Complementarily, it is in this context that renormalization proves its full power, leading in the best case scenario to a finite set of relevant couplings. A last important point which can be achieved with tensorial interactions but seems out of reach with simplicial ones is positivity, which is at the core of the stability of the vacuum. 

\
Since the aim of this thesis is to explore how GFTs can be turned into rigorous QFTs, adopting tensorial invariance as a locality principle seems obviously preferred to the rigid framework of simplicial interactions. Motivations from the gravity side are so far missing, but given the dramatic progress entailed by tensor invariance in tensor models, and the arbitrariness of simplicial restrictions, it is at the very least one first interesting prescription to explore further. Moreover, there is a limited relation between color models and tensor invariant ones: i.i.d tensor invariant models are the single field effective versions of colored i.i.d
models. This exact correspondence does not survive when spin foam constraints are added, as is for example illustrated by the vertex (resp. edge) formulation of the Boulatov (resp. Ooguri) model. Although a bubble factorization is achieved with this method, thus providing an effective single-field model where interactions are labeled by the bubbles, these interactions are not tensor invariants. This is due to the presence of $\delta$-functions enforcing flatness conditions inside each bubble. These enriched bubble interactions can provide an alternative locality principle, however presumably leading to a more difficult renormalizability analysis. On top of that, it is not clear so far whether such complications are relevant, we therefore decide to stick to tensor invariance for now. Moreover, if one modifies the colored Boulatov model in such a way that only one field is subject to the Gauss constraint, the other three being i.i.d, one can integrate out the i.i.d fields and obtain a single-field tensor invariant version of the Boulatov model. Therefore the correspondence between colored and tensor invariant models is not completely lost when leaving the i.i.d world.

\
Let us summarize the arguments invoked to use tensor invariant rather than simplicial interactions:
\begin{itemize}
\item tensor invariant interactions are specified by a locality principle;
\item they are infinitely many;
\item positivity of the interactions can be achieved with tensor invariant interactions;
\item a tensor invariant GFT can be obtained as a single-field effective model of a colored GFT. 
\end{itemize}
Group field theories based on such a locality principle have been called Tensorial Group Field Theories (TGFTs). A large variety of such models can be considered, depending on the rank of the tensors, the space in which the indices leave, and the propagator. 
%
%
%
%
%

\subsection{Constraints and propagation}

The second element entering the definition of a GFT we would like to restrict is the propagator, or equivalently the Gaussian measure around which the local interactions introduce perturbations. In all the GFTs encountered so far in this thesis, the propagator is a projector, hence ultra-local. For instance, the propagator of i.i.d tensor models reduces to the identity operator, while that of the Boulatov and Ooguri models is the projector on gauge invariant fields. Both have spectra contained in  $\{ 0 , 1 \}$, and therefore cannot provide any non-trivial notion of scale. A natural way of changing this situation, originally advocated in \cite{ValentinJoseph, tensor_4d}, is to simply add non-trivial differential operators on top of such structures. The specific operator chosen in \cite{tensor_4d}, concerned with a $\U(1)$ rank-$4$ TGFT without constraints was the Laplacian. While this conservative choice is certainly worth-exploring, it is not clear yet whether it is unique in any sense. Motivations from the quantum gravity side are inexistent, but the Laplacian is to some extent called for by quantum field theory. First, the analysis of \cite{ValentinJoseph} suggests that the radiative divergences of the $2$-point functions of the Boulatov and Ooguri models generate an additional Laplacian term, which should therefore be included in the bare theory to start with. Second, it was advocated in \cite{vincent_tt2} that some generalization of Osterwalder-Schrader positivity, which is one of the axioms of space-time based Euclidean quantum field theories, should also hold in the context of TGFTs. With this assumption, Laplace-type propagators are singled out as the positive propagators with the strongest decay in the UV. Hence they are the 'best' operators one can consider from the point of view of renormalization\footnote{One can also very well consider models with lower order differential operators, as for example in \cite{josephsamary}, but using second order operators leads to a maximal set of renormalizable couplings.}. In our opinion, these partial arguments give enough support to motivate studies of TGFTs with Laplacian propagators, which will be the focus of the remainder of this thesis. However, a sense of necessity is certainly lacking, and complementary motivations coming from the gravity side would be most welcomed. 

\

The work on TGFTs we are about to introduce was motivated by one key breakthrough: a four-dimensional model with group $\U(1)$ of the purely tensor type (i.e. without any constraint) could be proven perturbatively renormalizable at all orders \cite{tensor_4d}, and even asymptotically free \cite{josephaf} soon after! This model has moreover up to $\vphi^6$ terms, a property which together with the absence of Landau pole in the UV, contrasts with the ordinary $\vphi^4_4$ local quantum field theory. It was already clear at this stage that introducing tensor invariance and Laplace terms in GFTs was a good move, with the first example of perturbatively well-defined GFT all the way down to the UV. This theory is however purely combinatorial in nature, with no discrete geometric data carried by the group elements at the level of the amplitudes. The $\U(1)$ group arises on the contrary merely as the Fourier dual of $\mathbb{Z}$, providing continuous labels for infinite size tensors. The regularization of such a model in group space is therefore akin to a large $N$ cut-off on the size of the tensors, and the renormalization group flows from larger to smaller values of $N$. 

A program aiming at bridging the gap with spin foam 4d quantum gravity models was initiated in \cite{u1}. The first step in this program consists in reintroducing connection degrees of freedom in the amplitudes, thanks to a Gauss constraint. This step is now essentially completed, and the subject of the end of this manuscript. In \cite{u1} the general class of field theories with gauge invariance and Laplace terms both implemented in the propagator was introduced. For simplicity a $\U(1)$ four-dimensional model, with no geometric meaning but same structure as more involved models, was investigated in details, and proven super-renormalizable at all orders, for any finite number of tensorial interactions. On top of introducing the main techniques allowing to handle the holonomy contributions in the amplitudes, this example confirmed that gauge invariance models are generically 'more' renormalizable than their purely tensorial counter-parts. Such results were extended to similar Abelian models in \cite{fabien_dine}, which were proven just-renormalizable. Meanwhile, a more systematic study of just-renormalizablity was explored in \cite{su2}, and the first non-Abelian model could be proven just-renormalizable. This latter model possesses the additional (very welcomed) following properties: a) it is three-dimensional and therefore the symmetry group $\SU(2)$ matches the dimensionality of quantum space-time; b) it is the only potentially just-renormalizable model for which this geometric interpretation holds. From this point of view, the first result of \cite{tensor_4d} generalizes very well to GFTs related to non-trivial spin foam models, and in particular to a variant of the Boulatov model for Euclidean gravity in three dimensions. The second aspect, asymptotic freedom, was first settled for two Abelian models in \cite{dineaf}, and is currently under investigation for the rank-$3$ $\SU(2)$ model of \cite{su2}. We will present partial calculations at the very end of the last chapter, suggesting that the latter is again asymptotically free \cite{beta_su2}. This is very encouraging, and supports the idea that asymptotic freedom might be generic in TGFTs \cite{joseph_d2}. All in all, it seems that TGFTs are now mature enough to be applied to four-dimensional quantum gravity models. The next step in the near future will therefore be to introduce some version of the simplicity constraints entailing geometricity in 4d. First calculations have been already done for the EPRL model \cite{aldo}, of the kind which in our view should be more systematically performed in a TGFT context. 

\
The remainder of this chapter is concerned with the general structure of TGFTs with gauge invariance, and their renormalization. We will in particular conclude with the general classification of potentially just-renormalizable models established in \cite{su2}. In the next two chapters we will move to concrete examples, first the Abelian models from \cite{u1}, and finally the more physically relevant TGFT of \cite{su2}. 
%
%
%
%
%

\section{A class of models with closure constraint}

\subsection{Definition}

A generic TGFT is a quantum field theory of a single tensorial field, with entries in a Lie group. We assume $G$ to be a compact Lie group of dimension $D$, and the field to be a rank-$d$ complex function $\vphi( g_1 ,
\dots , g_d )$, with $d \geq 3$. The statistics is then defined by a partition function
\beq
\cZ = \int \extd \mu_C (\vphi , \vphib) \, \e^{- S(\vphi , \vphib )}\,,
\eeq
where $\extd \mu_C (\vphi , \vphib)$ is a Gaussian measure characterized by its covariance $C$, and $S$ is the interaction part of the action. 

\
The first ingredient, locality as tensor invariance, can be thought of as a limit of a $U(N)^{\otimes d}$ invariance,
where $N$ is a cut-off on representation labels (e.g. spins) in the harmonic expansion of the field. To avoid mathematical complications, we will not try to characterize this symmetry group in the $N \to + \infty$ limit, but simply define tensor invariants as convolutions of a certain number of fields $\vphi$ and $\vphib$ such that any $k$-th index of a field $\vphi$ is
contracted with a $k$-th index of a conjugate field $\overline{\vphi}$. As explained in Chapter \ref{color_tensor}, they are dual to $d$-colored graphs, built from two types of nodes and $d$ types of colored edges: each white (resp. black) dot
represents a field $\vphi$ (resp. $\vphib$), while a contraction of two indices in position $k$ is associated to an edge with color label $k$. Connected such graphs, the $d$-bubbles, generate the set of 
connected tensor invariants. 
See Figure \ref{tensorgraphs} for examples in dimension $d = 4$. 
We assume that the interaction part of the action is a sum of such connected invariants
\beq
S(\vphi , \vphib) = \sum_{b \in \cB} t_b I_b (\vphi , \vphib)\,, 
\eeq
where $\cB$ is a finite set of $d$-bubbles, and $I_b$ is the connected invariant encoded by the bubble $b$.
\begin{figure}[h]
\begin{center}
\includegraphics[scale=1]{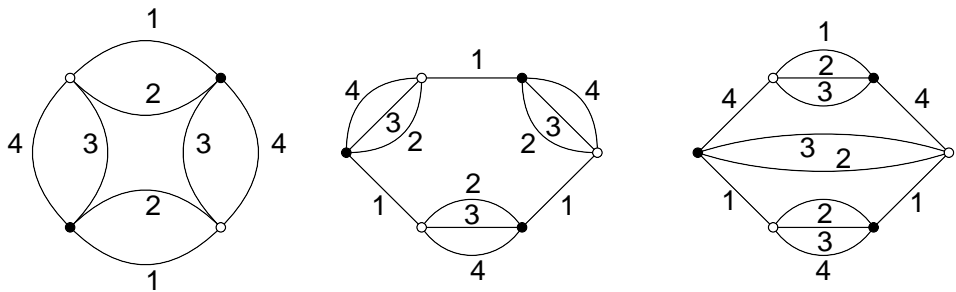}
\caption{Three connected tensor invariants in $d=4$.}
\label{tensorgraphs}
\end{center}
\end{figure}

\
The Gaussian measure $\extd \mu_C$ must first implement a non-trivial dynamics, through a propagator 
\beq\label{propa_nogauge}
\left( m^2 - \sum_{\ell = 1}^{d} \Delta_\ell \right)^{-1} \,,
\eeq
where $\Delta_\ell$ is the Laplace-Beltrami operator on $G$, acting on color-$\ell$ indices. On top of that it must take the gauge invariance condition
\beq \label{gauge}
\forall h \in G \,, \qquad \vphi(h g_1, \dots , h g_d ) = \vphi(g_1, \dots , g_d )\,
\eeq
into account.
The resulting covariance can be expressed as an integral over a Schwinger parameter $\alpha$ of a product of heat kernels on $G$ at time $\alpha$:
\bes
\int \extd \mu_C (\vphi , \vphib) \, \vphi(g_1 , \dots , g_d ) \vphib(g_1' , \dots , g_d' ) &=& C(g_1, \dots , g_d ; g_1' , \dots , g_d' ) \\
&\equiv& \int_{0}^{+ \infty} \extd \alpha \, \e^{- \alpha m^2} \int \extd h \prod_{\ell = 1}^{d} K_{\alpha} (g_\ell h g_\ell'^{\inv})\,.
\ees
This is nothing but a gauge-averaged version of the Schwinger representation of (\ref{propa_nogauge}). This decomposition of the propagator provides an intrinsic notion of scale, parametrized by $\alpha$. Divergences result from the UV region (i.e. $\alpha \to 0$), hence the need to introduce a cut-off ($\alpha \geq \Lambda$), and subsequently to remove
it via renormalization. Note also that the definition of high energy region is canonical once the propagator is specified: it simply corresponds to this corner of $\alpha$-space from which divergences originate, and should not be endowed with any a priori geometric meaning. 
\begin{figure}[b]
\begin{center}
\includegraphics[scale=0.5]{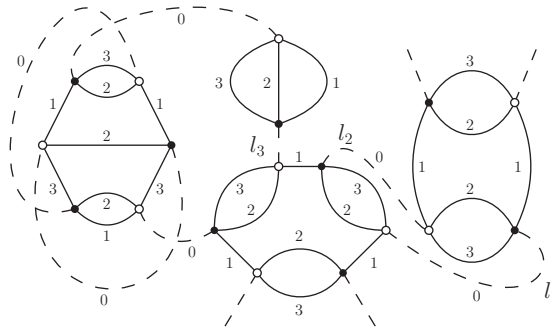}
\caption{A graph with $4$ vertices, $6$ lines and $4$ external legs in $d = 3$.}
\label{example_graph}
\end{center}
\end{figure}
\begin{figure}[b]
\begin{center}
\includegraphics[scale=0.7]{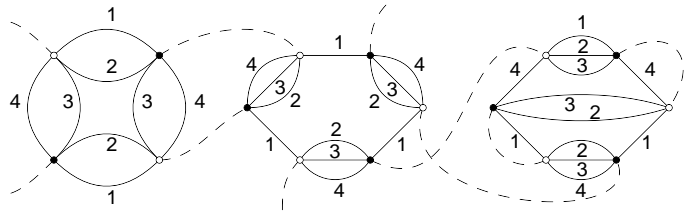}
\caption{A graph with $3$ vertices, $6$ lines and $4$ external legs in $d = 4$.}
\label{coloredgraph}
\end{center}
\end{figure}
Like uncolored tensor models, the perturbative expansion of such a theory is captured by Feynman graphs whose vertices are $d$-bubbles, and whose propagators are associated to an additional type of colored edges, of color $\ell = 0$, represented as dashed lines. When seen on the same footing,
these $d+1$ types of colored edges form $(d+1)$-colored graphs. 
To a Feynman \textit{graph} $\cG$, whose elements are $d$-bubble vertices ($V(\cG)$) and color-$0$ lines ($L(\cG)$), is therefore uniquely associated a $(d+1)$-colored graph $\cG_c$, called the
\textit{colored extension} of $\cG$. See Figures \ref{example_graph} and \ref{coloredgraph} for examples of Feynman graphs in $d = 3$ and $4$.
The connected Schwinger functions are given by a sum over line-connected Feynman graphs:
\beq
\cS_N = \sum_{\cG \; \mathrm{connected}, N(\cG)= N} \frac{1}{s(\cG)} \left(\prod_{b \in \cB} (- t_b)^{n_b (\cG)}\right) \cA_\cG \,,
\eeq
where $N(\cG)$ is the number of external legs of a graph $\cG$, $n_b (\cG)$ its number of vertices of type $b$, and $s(\cG)$ a symmetry factor. The amplitude $\cA_\cG$ of $\cG$ is expressed in terms of holonomies along its faces,
which can be easily defined in the colored extension $\cG_c$: a \textit{face} $f$ of color $\ell$ is a maximal connected subset of edges of color $0$ and $\ell$. In $\cG$, $f$ is a set of color-$0$ lines, from which the holonomies are constructed.
We finally use the following additional notations: $\alpha(f) \equiv \underset{e \in f}{\sum} \alpha_e$ is the sum of the Schwinger parameters appearing in the face $f$; $\epsilon_{ef} = \pm 1$ or $0$ is the adjacency or incidence matrix, encoding the line content of faces
and their relative orientations; the faces are split into closed ($F$) and opened ones ($F_{ext}$); $g_{s(f)}$ and $g_{t(f)}$ denote boundary variables in open faces, with functions $s$ and $t$ mapping open faces to their ``source" and "target" boundary variables. 
The amplitude $\cA_\cG$ takes the form:
\begin{eqnarray}\label{ampl_ab}
\cA_\cG &=& \left[ \prod_{e \in L(\cG)} \int \extd \alpha_{e} \, e^{- m^2 \alpha_e} \int \extd h_e \right] 
\left( \prod_{f \in F (\cG)} K_{\alpha(f)}\left( \overrightarrow{\prod_{e \in f}} {h_e}^{\epsilon_{ef}} \right) \right) \nn\\
&&\left( \prod_{f \in F_{ext}(\cG)} K_{\alpha(f)} \left( g_{s(f)}
\left[\overrightarrow{\prod_{e \in f}} {h_e}^{\epsilon_{ef}}\right] g_{t(f)}^{\inv} \right) \right) \,.
\end{eqnarray}

\
An important feature of the amplitude of $\cG$ is a $G^{V(\cG)}$ gauge symmetry:
\beq
h_e \mapsto g_{t(e)} h_e g_{s(e)}^{\inv}\,,
\eeq
where $t(e)$ (resp. $s(e)$) is the target (resp. source) vertex of an (oriented) edge $e$, and one of the two group elements is trivial for open lines. It is the gauge invariance \eqref{gauge} imposed on the TGFT field that is responsible of this gauge invariance at the level of the Feynman amplitudes, and for their expression \eqref{ampl} as a lattice gauge theory on $\cG$. It is completely analogous to the gauge symmetry of the Boulatov and Ooguri models, except that it is associated to bubbles rather than individual nodes of the colored graphs. When $\cG$ is connected, it is convenient to gauge fix the $h$ variables along a spanning
tree $\cT$ of the graph:
$$
h_e = \one
$$
in the integrand of (\ref{ampl_ab}), for every line $e \in \cT$. We will use such gauge fixings in the following.



\subsection{Graph-theoretic and combinatorial tools}

We collect here a number of definitions and results, first introduced in \cite{u1} and subsequently refined in \cite{su2}, which are key to the analysis we will perform in the following. 
The shift from the usual QFT notion of locality to tensor invariance requires non-trivial generalizations of the ordinary graph-theoretic notions underlying renormalization theory. Probably
the most important one in this respect is that of \textit{quasi-local subgraphs}, that is \textit{connected subgraphs} which, from the point of view of their external legs, \textit{look} local. 

A major role
will be played by the faces of the graph, which is where the curvature of the discrete connection introduced by the new gauge invariance condition is assigned. To be clear, the faces are followed easily by drawing the 
colored extension $\cG_c$ of the graph $\cG$. The color-$\ell$ faces of $\cG$ are the alternating circuits of lines of color 0 and $\ell$ in $\cG_c$, and can be either closed 
(internal) or open (external). Rather than the usual incidence matrix $\epsilon_{ev}$ between lines and vertices of ordinary graph theory, it is the incidence matrix of lines and closed faces $\epsilon_{ef}$ in $\cG$ which plays the leading role in TGFTs \cite{lrd, vincent_renGFT, lin, tensor_4d}, as shown by formula (\ref{ampl_ab}). To precisely define this matrix one needs an orientation of both the lines and the faces. Then $\epsilon_{ef}$ is $+1$ if the face $f$ goes through the line $e$ with the same orientation, $-1$ if the face $f$ goes through the line $e$ with 
opposite orientation, and finally $0$ otherwise. The colored structure ensures the absence of tadfaces, i.e. faces which pass several times through the same line, hence the $\epsilon_{ef}$ is well-defined.

\
We start with the notion of subgraph. In ordinary graph theory a subgraph of a graph $\cG$ is most conveniently defined as a {\emph{subset} $\cH$ \emph{of lines}} of $\cG$, so that a graph with $L$ lines has exactly $2^L$
subgraphs. Such a subset of lines is then completed canonically by adding the vertices attached to the lines and the external lines, also called legs. The latter are defined by first cutting in the middle all
lines of $\cG \setminus \cH$. Legs of $\cH$ then correspond 
either to true legs of $\cG$ attached to vertices of $\cH$ or to half-lines of $G \setminus \cH$ attached to the vertices of $\cH$. Finally, ordinary connectedness of $\cH$ can be defined
in terms of the ordinary incidence matrix $\epsilon_{ev}$ of $\cH$: the connected components of $\cH$ correspond to the maximal factorized rectangular blocks of this matrix. Hence 
elementary connections between lines come from their common attached vertices.

Recalling that a tensorial graph $\cG$ has color-$0$ internal lines and external legs, $d$-bubbles as vertices, and faces, 
the definition of a subgraph for TGFTs is a natural generalization of the ordinary definition.

\begin{definition}
A \textit{subgraph} $\cH$ of a graph $\cG$ is a subset of  lines of $\cG$, hence $\cG$ has exactly $2^{L(\cG)}$ subgraphs. 
$\cH$ is then completed by first adding the vertices that touch its lines. 
The faces closed in $\cG$ which  pass only through lines of $\cH$ form the set of \textit{internal faces} of $\cH$. The 
external faces of $\cH$ are the maximal open connected pieces of either open or closed faces of $\cG$ that pass through lines of $\cH$. 
Finally all the external legs or half-lines of $\cG \setminus \cH$ touching the vertices of $\cH$ are considered \textit{external legs} of $\cH$.
\end{definition}

We denote $L(\cH)$ and $F(\cH)$ the set of lines and internal faces of $\cH$, and $N(\cH)$ and $F_{ext}(\cH)$ the set of external legs and external faces. When no confusion is possible we also write $L$, $F$ etc for the cardinality 
of the corresponding sets. Moreover, the subgraph made of the lines $l_1, \ldots, l_k$ will simply be denoted $\{l_1 , \ldots , l_k \}$.

\

\noindent {\bf{Example.}} In Figure \ref{example_graph}, $\cH_{12} = \{ l_1 , l_2 \}$ has two lines ($L(\cH_{12})=2$) which touch two vertices, giving $V(\cH_{12})=2$. Six additional half-lines are hooked up to these two bubbles, giving a total of $N( \cH_{12} )=6$ external legs. Finally, $\cH_{12}$ has four faces in total: two of them are internal, of color $2$ and $3$ respectively, hence $F(\cH_{12}) = 2$; the two others are external faces of color $1$, hence $F_{ext}(\cH_{12}) = 2$. Note that the connected pieces of (the colored extension of) $\cH_{12}$ which consist of two external legs and a single colored line should not be considered as external faces.   

\

On top of the usual notion of connectedness of subgraphs, to which we will refer as \textit{vertex-connectedness} in order to avoid any confusion, we will heavily rely on the similar concept of \textit{face-connectedness}. While the former focuses on incidence relations between lines and vertices, the latter puts the emphasis on incidence relations between lines and faces.

\begin{definition}
\begin{enumerate}[(i)]
\item The \textit{face-connected components} of a subgraph $\cH$ are defined as the subsets of lines of the maximal factorized rectangular blocks of its $\epsilon_{ef}$ incidence matrix (with entries in $L(\cH) \times F(\cH)$). 
\item A subgraph $\cH$ is called \emph{face-connected} if it has a single face-connected component.
\item Let $\cG$ be a graph. The face-connected subgraphs $\cH_1 , \ldots , \cH_k \subset \cG$ are said to be \textit{face-disjoint} if they form exactly $k$ face-connected components in their union $\cH_1 \cup \dots \cup \cH_k$. 
\end{enumerate}
\end{definition}
The notion of face-connectedness is finer than vertex-connectedness, in the sense that any face-connected subgraph is also vertex-connected. It should also be noted that with the previous definition, the face-disjoint subgraphs $\cH_1, \ldots , \cH_k \subset \cG$ can consist of strictly less than $k$ face-connected components in $\cG$ itself. What really matters is that there \textit{exists} a subgraph of $\cG$ into which $\cH_1 , \ldots , \cH_k$ form $k$ face-connected components. 

\

\noindent {\bf{Examples.}} In Figure \ref{example_graph}, $\cH_{12} = \{ l_1 , l_2 \}$ and $\cH_{123} = \{ l_1 , l_2 , l_3 \}$ are both vertex-connected, while only $\cH_{12}$ is face-connected. $\cH_{123}$ has two face-connected components: $\{ l_3 \}$ and $\{ l_1 , l_2 \}$. In Figure \ref{ex_con_comp}, $\cH_{1} = \{l_1\}$ and $\cH_2 = \{ l_2 \}$ are face-disjoint because they are their own face-connected components in $\cH_1 \cup \cH_2 = \{ l_1 , l_2\}$. On the other hand, they are not face-connected components of $\cH_{123}$, which is itself face-connected. This illustrates the subtelty in the definition of face-disjointness we just pointed out.

\begin{figure}[ht]
\begin{center}
\includegraphics[scale=0.5]{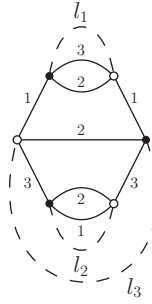}
\caption{$\cH_{1} = \{l_1\}$ and $\cH_2 = \{ l_2 \}$ are face-disjoint.}
\label{ex_con_comp}
\end{center}
\end{figure}

\

It is convenient to define elementary operations on TGFT graphs at the level of their underlying colored graphs, where dipole moves play a prominent role. In the renormalization context, dipole moves will be used as a way to consistently erase faces with high scales. Therefore topological considerations will be essentially irrelevant, and we adopt from now on a new definition of dipoles, which does not distinguish degenerate from non-degenerate ones. Moreover, since only color-$0$ lines carry propagators, we also only consider those dipoles having an internal line of color $0$.
\begin{definition}
Let $\cG$ be a graph, and $\cG_c$ its colored extension. For any integer $k$ such that $1 \leq k \leq d+1$, a $k$-dipole is a line of $\cG$ whose image in $\cG_c$ links two nodes $n$ and
$\overline{n}$ which are connected by exactly $k - 1$ additional colored lines.
\end{definition}

\begin{definition}
Let $\cG$ be a graph, and $\cG_c$ its colored extension. The contraction of a $k$-dipole $d_k$ is an operation in $\cG_c$ that consists in:
\begin{enumerate}[(i)]
 \item deleting the two nodes $n$ and $\overline{n}$ linked by $d_k$, together with the $k$ lines that connect them;
 \item reconnecting the resulting $d - k + 1$ pairs of open legs according to their colors.
\end{enumerate}
We call $\cG_c / d_k$ the resulting colored graph, and $\cG / d_k$ its pre-image. See Figure \ref{k_dipole}.
\end{definition}

\begin{figure}[h]
\begin{center}
\includegraphics[scale=0.6]{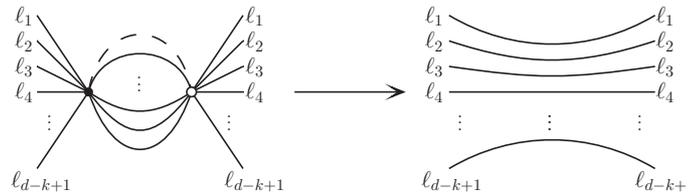}
\caption{Contraction of a $k$-dipole line.}
\label{k_dipole}
\end{center}
\end{figure}

This purely combinatorial notion of contraction of lines can be extended to arbitrary subgraphs. 
\begin{definition}
We call contraction of a subgraph $\cH \subset \cG$ the successive contractions of all the lines of $\cH$. The resulting graph is independent of the order in which the lines of $\cH$ are contracted,
and is noted $\cG / \cH$.
\end{definition}
\begin{proof}
To confirm that this definition is consistent, we need to prove that dipole contractions are commuting operations. Consider two distinct lines $e_1$ and $e_2$ in a graph $\cG$, and call $\cH$ the
subgraph made of 
$e_1$ and $e_2$. We distinguish three cases.
\begin{enumerate}[(i)]
 \item $\cH$ is disconnected. This means that $e_1$ and $e_2$ are part of two independent dipoles, with no colored line in common, and the two contraction operations obviously commute.
 \item $\cH$ is connected, and none of its internal faces contain both $e_1$ and $e_2$. This means that $e_1$ and $e_2$ are contained in two dipoles $d_1$ and $d_2$, such that for each
color $i$, at most one line of color $i$ connects $d_1$ to $d_2$. Hence contracting $d_1$ (resp. $d_2$) does not change the nature of the dipole in which $e_2$ (resp. $e_1$) is contained. 
So here again, $d_1$ and $d_2$ are local objects which can be contracted independently. 
 \item $\cH$ has $q \geq 1$ internal faces containing both $e_1$ and $e_2$. In this case, the contraction of $e_1$ (resp. $e_2$) changes the nature of the dipole in which $e_2$ (resp. $e_1$) is
contained: $q$ internal faces are added to it. However, when contracting the second tadpole, these faces are deleted, so for any order in which the contractions are performed, 
all the internal faces are deleted. As for the external faces, the situation is the same as in the previous case, and the two contractions commute.
\end{enumerate}
\end{proof}
We can also give a more global characterization of the contraction operation.
\begin{proposition}
Let $\cH$ be a subgraph of $\cG$, and $\cH_c$ its colored extension. The contracted graph $\cG / \cH$ is obtained by:
\begin{enumerate}[(a)]
 \item deleting all the internal faces of $\cH$;
 \item replacing all the external faces of $\cH_c$ by single lines of the appropriate color.
\end{enumerate}
\end{proposition}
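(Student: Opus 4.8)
\emph{The plan} is to proceed by induction on the number $L(\cH)$ of (color-$0$) lines of $\cH$, using the fact — established in the previous proof — that dipole contractions commute, so that $\cG/\cH$ is well defined and may be computed by contracting the lines of $\cH$ one at a time in any order. The whole content of the statement already shows up in the single-line case, so I would settle that first.

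\emph{Base case $L(\cH)=1$.} Write $\cH=\{e\}$, where $e$ is a color-$0$ line which, in $\cG_c$, realizes a $k$-dipole between two nodes $n$ and $\overline{n}$ joined by exactly $k-1$ further colored lines $c_1,\dots,c_{k-1}$, of respective colors $\ell_1,\dots,\ell_{k-1}$. Since every node of $\cG_c$ carries exactly one edge of each color, the color-$\ell_i$ face running through $e$ is forced to be the bigon formed by $e$ and $c_i$; these $k-1$ bigons are precisely the internal faces of $\{e\}$. For each of the remaining $d-k+1$ colors $\ell$, the unique color-$\ell$ face through $e$ enters $n$ along an edge $a_\ell$ and leaves $\overline{n}$ along an edge $b_\ell$, both pointing outside $\{e\}$; the open piece through $e$ is exactly the external face of $\{e\}$ of color $\ell$. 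By definition, contracting the $k$-dipole deletes $n,\overline{n},e,c_1,\dots,c_{k-1}$, which erases exactly the $k-1$ internal bigons (this is (a)), and reconnects, for each remaining color $\ell$, the pair $(a_\ell,b_\ell)$ into a single color-$\ell$ line (this is (b)).

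\emph{Inductive step.} Assume the claim for subgraphs with at most $n$ lines, and let $L(\cH)=n+1$. Pick $e\in\cH$, set $\cG_1=\cG/e$ and $\cH_1=\cH\setminus\{e\}$, so that $\cG/\cH=\cG_1/\cH_1$ by commutativity. The main point — and the genuine obstacle — is to track precisely how contracting the single line $e$ turns the faces of $\cH$ into those of $\cH_1$, without spuriously creating, destroying, merging, or interchanging internal and external faces. The key observation is that every color-$\ell$ face of $\cG$ meets $\cH$ in a disjoint union of maximal open alternating paths (its external pieces), together with the whole face itself when it is internal; and that contracting a color-$0$ line $e$ sitting inside one such piece merely removes $e$ and fuses the two colored edges flanking it, shortening that piece by one color-$0$ step while keeping it a single piece of unchanged color.

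I would then check, exactly as in the dipole analysis, the four mutually exclusive cases: (i) an internal bigon through $e$ is deleted outright; (ii) a longer internal face through $e$ remains internal to $\cH_1$, only shortened; (iii) a transverse external face through $e$ remains external to $\cH_1$, with its $e$-segment replaced by a fused colored edge; (iv) any face avoiding $e$ is untouched. It follows that no external face merges with another and that no face changes type, so that the internal faces of $\cH$ surviving the contraction of $e$ are exactly the internal faces of $\cH_1$, while the external faces of $\cH_c$ are in color-preserving bijection with those of $(\cH_1)_c$. Applying the induction hypothesis to $\cH_1\subset\cG_1$ then deletes all remaining internal faces and collapses each external face of $(\cH_1)_c$ to a single line of the matching color. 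Composing with the first contraction yields precisely (a) and (b), which closes the induction.
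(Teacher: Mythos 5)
Your proof is correct and takes essentially the same route as the paper's: induction on the number of lines, with the dipole-contraction definition as base case and commutativity of contractions justifying the inductive step — the only cosmetic difference being that the paper peels off the single line $e$ \emph{last} (contracting the $(n-1)$-line subgraph $\cH_0$ first, so that faces common to $\cH_0$ and $e$ become internal dipole faces of $e$), whereas you contract $e$ first and then invoke the hypothesis on $\cH_1 = \cH \setminus \{e\}$ inside $\cG / e$. One minor imprecision: an external face of $\cH_c$ whose only line in $\cH$ is $e$ does not ``remain external to $\cH_1$'' but is already collapsed to a single colored line when $e$ is contracted, so your claimed color-preserving bijection between external faces of $\cH_c$ and of $(\cH_1)_c$ should exclude such faces; since that case satisfies (b) directly, the conclusion is unaffected.
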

\begin{proof}
We prove this by induction on the number of lines in $\cH$. If $\cH$ contains one single line, then it is a dipole, and the proposition is true according to the very definition of a dipole
contraction. Now, suppose that $\cH$
is made of $n>1$ lines, $n-1$ of them being contained in the subgraph $\cH_0 \subset \cH$, and call the last one $e$. The set of internal faces in $\cH$ decomposes into several subsets. The faces
which are internal to $\cH_0$ are deleted 
by hypothesis when contracting $\cH_0$. Those common to $\cH_0$ and $e$ become internal dipole faces once $\cH_0$ is contracted, so they are deleted when $e$ is contracted, and the same is of course
true for the remaining internal faces which have $e$ as single line. The same distinction of cases applied to external faces of $\cH_c$ allows to prove that they are replaced by single lines of the
appropriate color, which achieves
the proof. 
\end{proof}

Contracting a subgraph $\cH \subset \cG$ can heavily modify the connectivity properties of $\cG$, depending on the nature of the dipoles this operation involves. It is indeed easy to check that:
\begin{proposition}\label{disconnected}
\begin{enumerate}[(i)]
 \item For any vertex-connected graph $\cG$, if $e$ is a line of $\cG$ contained in a $d$-dipole, then $\cG / e$ is vertex-connected.
 \item For any $1 \leq q \leq d - k + 1$, there exists a connected graph $\cG$ and a $k$-dipole $e$ such that $\cG / e$ has exactly $q$ connected components. 
\end{enumerate}
\end{proposition}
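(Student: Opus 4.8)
The plan is to reduce everything to the connectivity of the colored extension $\cG_c$. Since every $d$-bubble is by definition a connected $d$-colored graph, vertex-connectedness of the tensorial graph $\cG$ is equivalent to connectedness of $\cG_c$: the bubbles are exactly the connected components of $\cG_c$ under the colors $\{1,\dots,d\}$, and the color-$0$ lines glue them together, so $\cG$ is vertex-connected if and only if $\cG_c$ is connected. I would also record the basic description of a $k$-dipole contraction at the level of $\cG_c$: if the dipole line $e$ joins the nodes $n$ and $\overline{n}$, then $n$ and $\overline{n}$ each carry exactly $d+1-k = d-k+1$ external legs (one for each color not appearing in the dipole), and $\cG_c/d_k$ is obtained from $\cG_c\setminus\{n,\overline{n}\}$ by adding, for each such remaining color $c$, a single line $\ell_c$ joining the color-$c$ neighbour $u_c$ of $n$ to the color-$c$ neighbour $v_c$ of $\overline{n}$. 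Thus the number of connected components of $\cG/e$ equals the number of connected components of $(\cG_c\setminus\{n,\overline{n}\})\cup\{\ell_c\}$.

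For (i) I specialize to $k=d$, so there is a single remaining color $c_0$ and a single reconnection line $\ell_{c_0}=(u,v)$. Apart from the degenerate case $\cG_c=\{n,\overline{n}\}$ (where $\cG/e$ is a single bubble, hence connected), the only edge incident to $n$ other than the $d$ dipole lines is its color-$c_0$ leg to $u$, and similarly for $\overline{n}$ and $v$. I would then argue by rerouting paths: given a path in the connected graph $\cG_c$ between two nodes surviving the contraction, if it meets the dipole it must enter and leave the pair $\{n,\overline{n}\}$ through the color-$c_0$ legs at $u$ and $v$, since these are the only non-dipole incidences, and this traversal is exactly replaced by the new line $\ell_{c_0}$ joining $u$ and $v$. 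Hence every pair of surviving nodes stays connected and $\cG_c/d_d$ is connected, i.e. $\cG/e$ is vertex-connected.

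For (ii) I would exhibit an explicit family. Fix $k\le d$, set $m=d-k+1$, and choose a target $q$ with $1\le q\le m$. Partition the $m$ colors $\{k,\dots,d\}$ into $q$ nonempty groups $G_1,\dots,G_q$ of sizes $s_j$. Let $\cG_c$ have nodes $n,\overline{n}$ together with a pair $x_j,y_j$ for each group: join $n$ to $\overline{n}$ by the $k$ lines of colors $0,1,\dots,k-1$ (so $e$ is a genuine $k$-dipole), join $n$ to $x_j$ by the colors of $G_j$ and $\overline{n}$ to $y_j$ by the colors of $G_j$, and join $x_j$ to $y_j$ by all colors \emph{not} in $G_j$. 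A short degree count shows every node is $(d+1)$-valent and carries each color once (this uses $1\le s_j\le m\le d$, which holds because $k\ge1$), and $\cG_c$ is connected through $n$ and $\overline{n}$. Contracting $e$ deletes $n,\overline{n}$ and, for each $c\in G_j$, adds the reconnection line $\ell_c=(x_j,y_j)$; since after deletion $x_j$ and $y_j$ are incident only to one another, each pair $\{x_j,y_j\}$ becomes an isolated component, so $\cG/e$ has exactly $q$ components. The case $q=1$ also reproves (i) in the boundary case $k=d$.

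The routine verifications are the degree/color bookkeeping for the family in (ii) and the handling of the degenerate configuration $\cG_c=\{n,\overline{n}\}$ in (i). The only genuinely delicate point I anticipate is ensuring that the construction in (ii) realizes \emph{exactly} $q$ components: neither more, because each $\{x_j,y_j\}$ is forced to close up into a single component by its reconnection lines, nor fewer, because distinct groups share no node and, after deletion of $n,\overline{n}$, no surviving line. The grouping design is precisely tailored to guarantee this, and checking it is where I would spend the care.
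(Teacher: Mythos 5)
Your proof is correct, but there is nothing in the paper to compare it against: the paper states this proposition with the words ``It is indeed easy to check that'' and gives no argument whatsoever, so your write-up is a genuine filling-in rather than a rederivation of a printed proof. The skeleton is sound. The dictionary ``$\cG$ vertex-connected $\Leftrightarrow$ $\cG_c$ connected'' is legitimate because bubbles are, by definition, \emph{connected} tensor invariants, and the paper defines $\cG/e$ precisely as the pre-image of $\cG_c/d_k$, so counting components of $\cG/e$ through $\cG_c$ is exact. For (i), the rerouting argument works, with one normalization worth making explicit: take the path between two surviving nodes to be \emph{simple}; then, since the only non-dipole incidences at $n$ and $\overline{n}$ are the color-$c_0$ legs, a simple path meeting $\{n,\overline{n}\}$ must cross it exactly once, entering at $u$ and leaving at $v$ (or vice versa), and this crossing is exactly replaced by $\ell_{c_0}$. (For non-simple paths one must also allow same-side excursions such as $u\to n\to\overline{n}\to n\to u$, which are replaced by the empty path.) For (ii), the valency, color and bipartiteness bookkeeping of your family checks out, as does the component count: after contraction each pair $\{x_j,y_j\}$ is joined by all $d+1$ colors and touches nothing else, giving exactly $q$ supermelon components, and the dipole $e$ is a genuine $k$-dipole since no color in $\{k,\dots,d\}$ joins $n$ to $\overline{n}$.

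Two quibbles, neither a gap. First, the degenerate case $\cG_c=\{n,\overline{n}\}$ cannot occur for a $d$-dipole: colored legs are always saturated inside bubbles (tensor invariants have no open colored legs), so a two-node graph has all $d$ colored lines joining $n$ to $\overline{n}$, making $e$ a $(d+1)$-dipole, which the ``exactly $k-1$ additional colored lines'' clause of the paper's definition excludes; moreover, in that case the contraction would produce the empty graph, not ``a single bubble'' as you state. Second, your closing remark that the $q=1$ instance ``also reproves (i) in the boundary case $k=d$'' is not right as logic: a single example cannot establish the universal statement (i). Fortunately (i) rests entirely on the rerouting argument, which stands on its own.
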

This point is to be contrasted with usual graph theory, where an elementary contraction moves simply amounts to shrinking a line until its two end vertices get identified, and therefore conserves (vertex-)connectedness.  

\

Let us now understand how these combinatorial moves affect the amplitudes.
$1$-dipole contractions play a particularly important role in colored tensor models and GFTs, because they implement the topological notion of connected sum of $d$-bubbles. This remains true in our context, the
relevant move being the contraction of a $1$-dipole which is not a tadpole (i.e. a non-degenerate $1$-dipole in the usual nomenclature). Interestingly, the contraction of a full set of such
$1$-dipoles is intimately related to the gauge-fixing procedure sketched before. In a connected graph $\cG$, a maximal set of $1$-dipoles can be successively contracted, by picking up a maximal tree of lines $\cT \subset \cG$.
But we also know that in the Feynman amplitude of $\cG$, the group elements associated to this tree can be set to $\one$. Incidentally, the purely combinatorial notion of contraction of lines of $\cT$
is nothing but the result of trivial convolutions in the amplitude $\cA_\cG$. The only difference between $\cA_\cG$ and $\cA_{\cG/\cT}$, where $\cG / \cT$ denotes the fully contracted graph, is a set
of simple integrals with respect to Schwinger parameters, while their integrands have exactly the same structure. This observation is crucial to the definition of quasi-locality in TGFTs, which must be based on the properties of the \textit{reduced graphs} $\cG / \cT$ rather than $\cG$ itself.

\
In usual field theories, contraction moves are used to approximate high energy pieces of the amplitudes by local vertices. This is easily understood since propagators tend to bring neighboring vertices together, and increasingly so the higher the energy. In TGFTs, a similar picture can be proposed by focusing on faces rather than lines. A closed face $f$ will tend to be associated to a trivial holonomy when the Schwinger parameter $\alpha (f)$ gets small, that is when all the propagators along $f$ have high scales. Therefore, in high subgraphs one will always be able to approximate the closed face holonomies by $\one$, which can be pictured as shrinking these faces to points. However, this does not mean that this approximation can be recast as the amplitude of a simplified graphs, for elementary holonomies along individual lines are not necessarily themselves close to $\one$. 
Contrary to usual quantum field theories, in which high subgraphs automatically look local (i.e. point-like), there is an additional tension between global and local properties in TGFTs, which precludes the automatic understanding of high subgraphs as quasi-local (i.e. tensor invariant like) objects. A simple way of understanding this point is that a subgraph whose closed faces all have trivial holonomies can have non-trivial holonomies associated to its external faces: therefore, from the point of view of its external legs, such a subgraph introduces non-trivial parallel transports which distinguish it from an elementary bubble vertex. We shall therefore introduce combinatorial conditions which, when satisfied, allow to interpret high subgraphs as quasi-local objects. In order to disentangle the loss of tensorial invariance from the loss of connectedness, we define two classes of subgraphs, the \textit{contractible} and the \textit{tracial} ones \cite{u1}.    
\begin{definition}
Let $\cG$ be a vertex-connected graph, and $\cH$ be one of its face-connected subgraphs. 
\begin{enumerate}[(i)]
\item If $\cH$ is a tadpole, $\cH$ is \textit{contractible} if, for any group elements assignment $(h_e)_{e \in L(\cH)}$:
\beq
\left( \forall f \in F(\cH)\,, \;  \overrightarrow{\prod_{e \in f}} {h_e}^{\epsilon_{ef}} = \one \right) \Rightarrow \left( \forall e \in L(\cH) \, , \; h_e = \one \right)\,.
\eeq
\item In general, $\cH$ is contractible if it admits a spanning tree $\cT$ such that 
$\cH / \cT$ is a contractible tadpole.
\item $\cH$ is \textit{tracial} if it is contractible and the contracted graph $\cG / \cH$ is vertex-connected.
\end{enumerate}
\end{definition}

A contractible graph is therefore a subgraph on which any flat connection is trivial up to a gauge transformation. Note that this gauge freedom is what makes the contraction with respect to a spanning tree an essential feature of the definition. On the other hand, the notion of traciality is independent of the choice of tree, as it is a statement about $\cG / \cH$, in which all internal lines of $\cH$ have been contracted.

\
In the multiscale effective expansion, high divergent subgraphs will give rise to effective couplings. To apply this procedure in our context, such subgraphs need to be
tracial, or at least contractible. Traciality ensures that the divergence of a high subgraph can be factorized into a divergent coefficient times a \emph{connected}
invariant. For  high divergent subgraphs which are contractible but not tracial, a factorization of the divergences is still possible, but 
in terms of disconnected invariants; these have been called anomalous terms in \cite{tensor_4d}. It is not clear yet whether this is a major issue and how these
anomalies should be interpreted physically, but in the models considered below all the divergent high subgraphs are tracial. Indeed
we already noticed that any $k$-dipole with $k >1$ is contractible, and that any $d$-dipole is tracial, as its contraction also preserves connectedness. 
Combined, these two facts provide us already with an interesting class of tracial subgraphs. They are called {\it melopoles} \cite{u1} because they combine the idea of melonic graphs
and tadpoles. The high divergent subgraphs appearing in the models studied in \cite{u1}, and to which Chapter \ref{chap:u1} will be devoted, are melopoles.

\begin{definition}
In a graph $\cG$, a \emph{melopole}
is a single-vertex subgraph $\cH$ (hence $\cH$ is made of tadpole lines attached to a single vertex in the ordinary sense), 
such that there is at least one ordering (or ``Hepp's sector") of its $k$ lines as $l_1, \cdots , l_k$ such that $ \{l_1 ,  \dots , l_{i} \} / \{l_1 , \dots , l_{i-1} \} $ is a $d$-dipole for $1 \le i \le k$.
\end{definition}

\begin{figure}
\begin{center}
\includegraphics[scale=0.6]{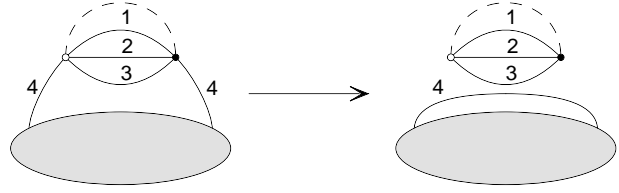}
\caption{A single-line melopole (left), and the result of its contraction}
\label{melo1}
\end{center}
\end{figure}

The simplest melopole has just one line and is shown in Figure \ref{melo1} (for $d=4$). Its contraction within a connected graph (grey blob) results in a connected graph times a coefficient (of which a graphical representation is given).

\begin{proposition}
Any face-connected melopole is tracial.
\end{proposition}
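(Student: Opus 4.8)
The plan is to verify the two defining properties of a tracial subgraph separately: that the face-connected melopole $\cH$ is contractible, and that its contraction $\cG / \cH$ is vertex-connected. Both will be obtained by peeling off the lines of $\cH$ one at a time along a Hepp sector. I would fix an ordering $l_1, \ldots, l_k$ of $L(\cH)$ witnessing the melopole property and set $\cH_i \equiv \{ l_1 , \ldots , l_i \}$, so that by definition each $\cH_i / \cH_{i-1}$ is a $d$-dipole. Since $\cH$ is a single-vertex subgraph, each $\cH_i$ is a tadpole, so contractibility is the direct implication ``all internal face holonomies trivial $\Rightarrow$ all $h_e = \one$'', with no spanning tree to choose.

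For vertex-connectedness I would argue by induction that contracting the lines in the order $l_1, \ldots, l_k$ realizes $\cG / \cH$ as a sequence of $d$-dipole contractions performed inside the ambient graphs $\cG / \cH_{i-1}$. The point to check is that the $d$-dipole property, which the melopole definition only asserts inside the subgraph $\cH_i / \cH_{i-1}$, survives in the ambient contraction $\cG / \cH_{i-1}$, i.e. that the number of colored lines joining the two endpoints of $l_i$ is still exactly $d-1$ there. Granting this, Proposition \ref{disconnected}(i) applies at each step, and since $\cG$ is vertex-connected by hypothesis, so is $\cG / \cH$.

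For contractibility I would run an induction on $i$, proving that triviality of all internal faces of $\cH_i$ forces $h_{l_1} = \cdots = h_{l_i} = \one$. The base case $i = 1$ is the statement that a single $d$-dipole tadpole carries $d-1$ internal bifaces, each of the form $h_{l_1}^{\pm 1} = \one$, already forcing $h_{l_1} = \one$. For the inductive step I would first note that every internal face of $\cH_{i-1}$ is also an internal face of $\cH_i$, since its color-$0$ lines lie in $\cH_{i-1} \subset \cH_i$; the inductive hypothesis then gives $h_{l_j} = \one$ for all $j < i$. It remains to constrain $h_{l_i}$: the internal bifaces of the $d$-dipole $l_i$ in $\cG / \cH_{i-1}$ lift to internal faces of $\cH_i$ whose color-$0$ content is $l_i$ together with lines of $\cH_{i-1}$, and evaluating such a holonomy after substituting $h_{l_j} = \one$ ($j<i$) collapses it to $h_{l_i}^{\epsilon} = \one$ with $\epsilon = \pm 1$ (absence of tadfaces guarantees $l_i$ occurs exactly once), hence $h_{l_i} = \one$.

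The main obstacle, and the step where face-connectedness genuinely enters, is this transfer of the $d$-dipole condition from the subgraph $\cH_i / \cH_{i-1}$ to the ambient $\cG / \cH_{i-1}$, together with the dual identification of the subgraph's bifaces with honest internal faces of $\cH$. Face-connectedness is what guarantees that each newly peeled line $l_i$ shares its bifaces with the already-contracted part $\cH_{i-1}$ consistently with the ambient graph, so that the local dipole combinatorics driving both inductions is not spoiled by colored lines or faces coming from $\cG \setminus \cH$. I would make this precise using the global characterization of contraction proved just above (internal faces of $\cH$ are deleted, external faces are replaced by single colored lines of the appropriate color), checking that under face-connectedness the colored adjacencies created by contracting $\cH_{i-1}$ agree in the subgraph and in $\cG$. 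Once this bookkeeping is settled, both inductions close, so $\cH$ is contractible and $\cG / \cH$ is vertex-connected, i.e. $\cH$ is tracial.
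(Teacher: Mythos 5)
Your proof is correct and takes essentially the same route as the paper's own one-line proof ("obviously it is contractible; and vertex-connectedness cannot be lost at any stage if one contracts in the order of the correct Hepp's sector"): you peel off the lines along the Hepp sector, using that each step is a $d$-dipole contraction to force $h_{l_i}=\one$ face by face and to invoke Proposition \ref{disconnected}(i) for connectedness, thereby merely making explicit what the paper deems obvious. One small correction: the transfer of the $d$-dipole property from $\cH_i/\cH_{i-1}$ to $\cG/\cH_{i-1}$ is guaranteed by the locality of the contraction operation together with the fact that a subgraph's vertices are \emph{complete} bubbles of $\cG$ (so the colored lines between the endpoints of $l_i$ are the same in both graphs), not by face-connectedness, whose role here is only to place $\cH$ in the class of subgraphs for which contractibility and traciality are defined.
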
 
\begin{proof}
Obviously it is contractible; and vertex-connectedness cannot be lost at any stage if one contracts in the order of the correct Hepp's sector.
\end{proof}

In just-renormalizable models \cite{su2}, a larger class of tracial subgraphs will dominate, which extend the notion of melopole to an arbitrary number of vertices. They are called \textit{melonic}, and generalize the melonic graphs encountered so far.

\begin{definition}
In a graph $\cG$, a \emph{melonic subgraph}
is a face-connected subgraph $\cH$ containing at least one maximal tree $\cT$ such that $\cH / \cT$ is a melopole.\footnote{This definition is chosen so that at least one internal face of $\cG$ runs through any line of any melonic subgraph. $\cG$ itself is considered melonic if it is melonic as a subgraph of itself. This definition will ensure Lemma \ref{1PI}.}
\end{definition}

\begin{proposition}
Any melonic subgraph is tracial.
\end{proposition}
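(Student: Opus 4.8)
The plan is to verify the two defining properties of traciality separately: that a melonic subgraph $\cH$ is contractible, and that the contracted graph $\cG / \cH$ remains vertex-connected. Both will be reduced to the already-established fact that any face-connected melopole is tracial (hence, in particular, contractible and connectedness-preserving when contracted in a correct Hepp's sector), together with Proposition \ref{disconnected}(i).

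First I would unfold the definition of melonic: $\cH$ comes equipped with a spanning tree $\cT$ such that $\cH / \cT$ is a melopole. Since a melopole is by definition a single-vertex subgraph, $\cH / \cT$ is a tadpole, and the first task is to see that it is a \emph{contractible} tadpole, so that $\cH$ qualifies as contractible. The cleanest route is to check that $\cH / \cT$ is in fact a \emph{face-connected} melopole — this is precisely what the footnote accompanying the definition of melonic is designed to guarantee, since it ensures that every line of a melonic subgraph carries at least one internal face — and then to invoke the proposition that any face-connected melopole is tracial, in particular contractible. This settles the contractibility half.

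For the vertex-connectedness of $\cG / \cH$, I would exploit that the contraction of a subgraph is order-independent and contract $\cH$ in two stages. I first contract the tree lines of $\cT$. A tree line joins two \emph{distinct} bubbles and therefore carries no parallel colored line, i.e.\ it is a non-degenerate $1$-dipole; its contraction merely merges two distinct vertices, and so preserves vertex-connectedness exactly as an ordinary graph contraction does. Hence $\cG / \cT$ is still vertex-connected and contains $\cH / \cT$ as a melopole. In the second stage I contract the lines of $\cH / \cT$ in its Hepp's sector: each such contraction is by definition a $d$-dipole, and Proposition \ref{disconnected}(i) guarantees that contracting a $d$-dipole preserves vertex-connectedness. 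Iterating, $\cG / \cH = (\cG / \cT)/(\cH / \cT)$ is vertex-connected, which is the second half of traciality.

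The main obstacle I anticipate is the bookkeeping of faces under the two types of contraction, and in particular justifying that $\cH / \cT$ is genuinely a \emph{face-connected} melopole rather than merely a melopole: I must check that contracting the tree $\cT$ neither destroys the face-connectedness of $\cH$ nor removes the internal faces threading its remaining lines. For this I would use the global characterization of contraction (deletion of internal faces, replacement of external faces by single colored lines of the appropriate color) together with the observation that a tree line, being a bridge, does not close any internal face on its own, so tree contraction only shortens faces rather than deleting them. Once this point is secured, the statement reduces entirely to the melopole case and Proposition \ref{disconnected}(i), and no genuinely new estimate is required.
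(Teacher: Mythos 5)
Your second half (vertex-connectedness of $\cG/\cH$) is essentially sound and is, in fact, the argument the paper intends: tree lines of $\cH$ join two \emph{distinct} bubbles, hence are non-degenerate $1$-dipoles, and their contraction realizes the connected sum of the two bubbles into a single bubble (this is the point the paper makes right after Proposition \ref{disconnected}, and it is what protects you from the disconnection phenomenon of Proposition \ref{disconnected}(ii), which only occurs for degenerate, i.e.\ tadpole, dipoles — so ``exactly as an ordinary graph contraction'' deserves this one extra justification); the Hepp's sector of $\cH/\cT$ is then handled by Proposition \ref{disconnected}(i). The genuine gap is in your contractibility half, at the step where you claim the footnote — every line of a melonic subgraph carries an internal face — guarantees that $\cH/\cT$ is a \emph{face-connected} melopole. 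That inference is false. A melopole all of whose lines carry internal faces can perfectly well be face-disconnected: close two face-disjoint pairs of external legs of a $(6,1)$-type bubble into two single-line $d$-dipole tadpoles; their union is a melopole, every line carries $d-1$ internal faces, yet there are two face-connected components (this is the graph $G_{6,1}^{\ell}$ used in Chapter \ref{chap:su2}). Nor does your anticipated repair close the gap: it is true that tree contraction deletes no faces (tree lines bound no length-one face), but preservation of \emph{faces} does not preserve face-\emph{connectivity} — two faces whose only common line is a tree line $\ell$ need not share any line once $\ell$ is contracted. This is precisely what the paper warns about when it says that $\cH/\cT$ ``is not necessarily face-connected, since the contraction of tree lines affects how faces are connected to one another.'' The statement you want is Proposition \ref{face_rosette}, which the paper proves only much later, only for non-vacuum melonic subgraphs, and by a substantially more involved argument; it cannot be read off the footnote. (Incidentally, a spanning-tree line is not in general a bridge.)

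Fortunately, the detour through face-connectedness of the rosette is unnecessary, which is presumably why the paper states the proposition without proof. Contractibility of the tadpole $\cH/\cT$ only requires the holonomy condition, and that follows directly from the melopole structure: the first line of the Hepp's sector is a $d$-dipole, so flatness of its $d-1$ length-one internal faces forces its holonomy to be $\one$; contract it and induct along the sector to trivialize all holonomies. No face-connectedness of $\cH/\cT$ enters anywhere — face-connectedness of $\cH$ itself, which is part of the definition of melonic, is all that is needed for the notion of traciality to apply. With that substitution, your proof is correct and coincides with the intended streamlined argument: melopole contractibility via the Hepp's sector, plus tree-line connected sums and Proposition \ref{disconnected}(i) for connectedness of $\cG/\cH$.
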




\section{Multiscale expansion and power-counting}

We are now ready to introduce the multiscale expansion, and the power-counting theorem that result. The latter was first proved for an Abelian group $G = \U(1)^{D}$ in \cite{u1}, and provides an optimal bound in this case. It was then extended to non-Abelian compact groups in \cite{su2}, with full details provided for $G = \SU(2)$ only. When the group is non-Abelian, the Abelian power-counting theorem remains valid, though it does not provide an optimal bound in general. As we will see, it is however optimal for contractible subgraphs, and hence for all the subgraphs we will need to renormalize. For clarity of the presentation, we focus on the
Abelian case in this chapter, and even restrict to $G = \U(1)$ in the proofs. The non-Abelian case will be reported on in more details in Chapter \ref{chap:su2}.

 
\subsection{Multiscale decomposition}

As in usual local field theories, the multi-scale expansion relies on a slicing of the propagator in the Schwinger parameter $\alpha$, according to a geometric progression. We therefore fix an arbitrary constant $M > 1$ and for any integer $i \geq 0$, we define the slice of covariance $C_i$ as:
\bes
C_{0}(g_1, \ldots , g_d ; g_1' , \ldots , g_d' ) &=& \int_{1}^{+ \infty} \extd \alpha \, \e^{- \alpha m^2} \int \extd h \prod_{\ell = 1}^{d} K_{\alpha} (g_\ell h g_\ell'^{\inv})\,, \\
\forall i \geq 1\,, \quad C_{i}(g_1, \ldots , g_d ; g_1' , \ldots , g_d' ) &=& \int_{M^{ - 2 i}}^{M^{ - 2(i-1)}} \extd \alpha \, \e^{- \alpha m^2} \int \extd h \prod_{\ell = 1}^{d} K_{\alpha} (g_\ell h
g_\ell'^{\inv})\,.
\ees  
In order to be compatible with the slicing, we choose a UV regulator of the form $\Lambda = M^{-2 \rho}$, and denote the cut-off covariance by:
\beq  \label{decomposi}
C^{\rho} = \sum_{0 \le i \leq \rho} C_i \,.
\eeq  
We can then decompose the amplitudes themselves, according to scale attributions $\mu = \{ i_e \}$ where $i_e$ are integers associated to each line, determining the slice attribution of its propagator. The full amplitude $\cA_\cG$ of $\cG$ is then reconstructed from the sliced amplitudes $\cA_{\cG, \mu}$ by simply summing over the scale attribution $\mu$:
\beq
\cA_\cG = \sum_{\mu} \cA_{\cG , \mu}\,.
\eeq

\
We can then introduce high subgraphs, and their inclusion tree: the Gallavotti-Nicol\`{o} tree. The only difference with usual quantum field theories, though a very important one, is that we will require high subgraphs to be face-connected rather than only vertex-connected. This is again in agreement with the structure of the amplitudes, whose integrands factorize over face-connected components. While not absolutely necessary, this choice is physically meaningful, and as will be clear in the following will technically ease the analysis of the renormalized series. It would however be possible to stick to the usual notion of vertex-connected high subgraph, as is done in \cite{fabien_dine}. 
\begin{definition}
Let $\cG$ be a vertex-connected graph, with scale attribution $\mu$.
\begin{enumerate}[(i)]
\item Given a subgraph $\cH \in \cG$, one defines internal and external scales:
\beq
i_{\cH}(\mu) = \inf_{e \in L(\cH)} i_e (\mu)\,, \qquad e_{\cH}(\mu) = \sup_{e \in N_{ext}(\cH)} i_e (\mu)\,,
\eeq
where $N_{ext}(\cH)$ are the external legs of $\cH$ which are hooked to external faces.
\item A \textit{high subgraph} of $(\cG , \mu)$ is a face-connected subgraph $\cH \subset \cG$ such that
$e_{\cH}(\mu) < i_{\cH}(\mu)$. We label them as follows. For any $i$, $\cG_i$ is defined as the set of lines of $\cG$ with scales higher or equal to $i$. We call $k(i)$ its number of face-connected components, and $\{ \cG_{i}^{(k)} | 1 \leq k \leq k(i) \}$ its face-connected components. The subgraphs $\cG_{i}^{(k)}$ are exactly the high subgraphs.
\item Two high subgraphs are either included into another or line-disjoint, therefore the inclusion relations of the subgraphs $\cG_{i}^{(k)}$ can be represented as an abstract graph, whose root is the whole graph $\cG$. This is the \textit{Gallavotti-Nicol\`o tree} or simply \textit{GN tree}.
\end{enumerate}
\end{definition}
 



\subsection{Propagator bounds}

The idea of the multiscale analysis is to bound sliced propagators, and deduce an optimized bound for each $\cA_{\cG , \mu}$ separately. To this effect, we first need to capture the peakedness properties of the propagators into Gaussian bounds.
They can be deduced from a general fact about heat kernels on curved manifolds: at small times, they look just the same as their flat counterparts, and can therefore be bounded by suitable Gaussian functions. In particular, the heat kernel on $\U(1)$ can be parametrized by one angle $\theta \in \left[ 0 , 2 \pi \right[$, and expanded as
\beq
 K_{\alpha}(\theta) = \frac{\e^{- \frac{1}{4 \alpha} \theta^2 }}{\sqrt{\alpha}} \left( 1 + 2 \sum_{n = 1}^{\infty} \e^{- \frac{\pi^2 n^2}{\alpha}} \cosh\left( \frac{n \pi}{\alpha} \theta \right) \right)\,,
 \eeq
 so that the propagator can explicitly be written as:
 \bes
 C(\theta_1, \dots , \theta_d ; \theta_1' , \dots , \theta_d' ) &=& \int_{0}^{+ \infty} \extd \alpha \, \frac{\e^{- \alpha m^2}}{\alpha^{d/2}}  \int_{0}^{2 \pi} \extd \lambda \, \e^{- \frac{1}{4 \alpha} \sum_\ell (\theta_\ell - \theta_\ell' + \lambda)^{2} } \nn\\
 && \; \times \; T(\alpha ; \theta_1 - \theta_1' + \lambda , \dots , \theta_d - \theta_d' + \lambda)\,,
 \ees
 with 
 \beq
 T(\alpha ; \theta_1 , \dots , \theta_d ) \equiv \prod_{\ell = 1}^{d} \left( 
 1 + 2 \sum_{n = 1}^{\infty} \e^{- \frac{\pi^2 n^2}{\alpha}} \cosh\left( \frac{n \pi}{\alpha} \theta_\ell \right) \right)\,.
 \eeq

One can then prove generic bounds for the sliced propagators $C_i$, on which the whole power-counting will rely. Such bounds have been computed in the model of \cite{tensor_4d}, and immediately imply the following:
 \begin{proposition}
 There exist constants $K > 0$ and $\delta > 0$, such that for all $i \in \mathbb{N}$:
 \bes\label{propa_bound_ab} 
 C_i (\theta_1 , \dots , \theta_d ; \theta_1' , \dots , \theta_d') &\leq& K M^{(d-2) i } \int \extd \lambda \,
 \e^{- \delta M^{i} \sum_\ell |\theta_\ell - \theta_\ell' + \lambda| }\,, \\
\forall \ell \in \llbracket 1 , d \rrbracket \,, \quad\frac{\partial}{\partial \theta_\ell} C_i (\theta_1 , \dots , \theta_d ; \theta_1' , \dots , \theta_d') &\leq& K M^{(d-1) i } \int \extd \lambda \,
 \e^{- \delta M^{i} \sum_\ell |\theta_\ell - \theta_\ell' + \lambda| }\,. \label{deriv_bound_ab}
 \ees
 \end{proposition}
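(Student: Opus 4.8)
The plan is to reduce these two inequalities to the sliced-propagator bounds already established for the ungauged rank-$d$ model of \cite{tensor_4d}, the only new ingredient being the Haar integration over the gauge group element. Writing $h = \e^{\mathrm{i}\lambda}$ with $\lambda \in [0,2\pi)$ and using that on $\U(1)$ the combination $g_\ell h g_\ell'^{\inv}$ corresponds to the angle $\theta_\ell - \theta_\ell' + \lambda$, the sliced covariance factorizes as
\beq
C_i(\theta ; \theta') = \int_0^{2\pi} \extd \lambda \left[ \int_{M^{-2i}}^{M^{-2(i-1)}} \extd \alpha \, \e^{-\alpha m^2} \prod_{\ell=1}^d K_\alpha(\theta_\ell - \theta_\ell' + \lambda) \right]\,.
\eeq
The bracketed quantity is exactly the slice-$i$ propagator of the ungauged theory, evaluated at the shifted arguments $\theta_\ell - \theta_\ell' + \lambda$. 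Hence it suffices to prove the Gaussian bound for a fixed $\lambda$ and then integrate over $\lambda$, which directly produces the $\int \extd \lambda$ on the right-hand sides of \eqref{propa_bound_ab} and \eqref{deriv_bound_ab}.

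For the fixed-$\lambda$ estimate I would start from the explicit expansion of $K_\alpha$ and split it into its leading Gaussian part $\alpha^{-1/2}\e^{-\theta^2/4\alpha}$ and the winding factor $T(\alpha;\,\cdot\,)$. Two standard manipulations then yield the result. First, on the slice one has $\alpha \in [M^{-2i}, M^{-2(i-1)}]$, so the prefactor $\alpha^{-d/2}$ is of order $M^{di}$ and the width of the $\alpha$-integration is of order $M^{-2i}$; their product gives the announced scaling $M^{(d-2)i}$. Second, to trade the Gaussian decay for the exponential decay appearing in the bound I would complete the square: for $a = 1/(4\alpha)$ and $b = \delta M^i$ one has $\e^{-a x^2} \le \e^{b^2/(4a)} \, \e^{-b|x|}$, and since $b^2/(4a) = \delta^2 M^{2i}\alpha \le \delta^2 M^2$ is bounded uniformly on the slice, applying this to each factor $x = \theta_\ell - \theta_\ell' + \lambda$ costs only a global constant. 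The remaining point is to bound the winding factor $T(\alpha;\,\cdot\,)$ uniformly: restricting each argument to the fundamental domain $|\theta_\ell - \theta_\ell' + \lambda|\le \pi$ and writing $\cosh$ as a sum of two exponentials, each term $\e^{-\pi^2 n^2/\alpha}\cosh(n\pi\theta/\alpha)$ is dominated by $\e^{-n\pi^2(n-1)/\alpha}$, so the series is summable and $T$ stays bounded as $\alpha \to 0$.

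The derivative bound follows along the same lines, the only modification being that each $\partial/\partial\theta_\ell$ acting on the Gaussian produces a factor $-(\theta_\ell - \theta_\ell' + \lambda)/(2\alpha)$; using that $|x|\e^{-x^2/(8\alpha)} \le C\sqrt{\alpha}$ one converts this into an extra factor of order $\alpha^{-1/2} \sim M^i$, while retaining a Gaussian (with a slightly worse constant in the exponent) that is again turned into the exponential factor $\e^{-\delta M^i\sum_\ell|\theta_\ell-\theta_\ell'+\lambda|}$ by the same completion-of-square step. This upgrades the prefactor from $M^{(d-2)i}$ to $M^{(d-1)i}$, as claimed. One must of course check that differentiating the winding factor $T$ does not spoil its uniform boundedness, which follows from the same summability argument since each $\theta$-derivative brings down at most a polynomially growing factor in $n/\alpha$ against the Gaussian weight $\e^{-\pi^2 n^2/\alpha}$.

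The main obstacle, and the only genuinely $\U(1)$-specific work, is the uniform control of the winding/image sum $T$ near $\alpha = 0$: away from $-\one$ the series converges, but one has to make the bound uniform over the whole slice and over the shift $\lambda$, which is precisely what restricting to the fundamental domain achieves. Everything else is a repackaging of the flat-space heat-kernel estimates, and indeed once the bracketed factor is recognized as the ungauged slice propagator the two inequalities are immediate consequences of the estimates of \cite{tensor_4d}, as stated in the proposition.
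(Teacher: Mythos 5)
Your strategy coincides with the paper's own treatment: the paper gives no independent derivation, but observes that these bounds ``have been computed in the model of \cite{tensor_4d}'' and immediately imply the proposition, which is precisely your reduction --- the gauged slice propagator is the $\lambda$-integral of the ungauged slice propagator of \cite{tensor_4d} evaluated at the shifted angles $\theta_\ell - \theta_\ell' + \lambda$, so the ungauged Gaussian bounds pass through the $\lambda$-integration untouched. Your fixed-$\lambda$ ingredients (prefactor $\alpha^{-d/2}$ times slice width giving $M^{(d-2)i}$, square completion $\e^{-x^2/4\alpha} \leq \e^{b^2 \alpha}\, \e^{-b|x|}$ with $b = \delta M^i$, control of the winding sum on the fundamental domain) are exactly those underlying the cited estimates.

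There is, however, one step that fails as literally written, in the derivative bound. You claim that $\partial_\theta T$ remains uniformly bounded because ``each $\theta$-derivative brings down at most a polynomially growing factor in $n/\alpha$ against the Gaussian weight $\e^{-\pi^2 n^2/\alpha}$''. Differentiating the $n$-th winding term produces $\frac{2n\pi}{\alpha}\, \e^{-\pi^2 n^2/\alpha}\sinh(n\pi\theta/\alpha)$, which on the fundamental domain is only bounded by $\frac{2n\pi}{\alpha}\, \e^{-\pi^2 n(n-1)/\alpha}$; for $n=1$ and $|\theta|$ near $\pi$ this is of order $1/\alpha$ and blows up as $\alpha \to 0$, because the $\sinh$ exactly compensates the weight $\e^{-\pi^2/\alpha}$. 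Thus $\partial_\theta T$ alone is \emph{not} uniformly bounded; only its product with the external Gaussian is. The repair stays within your own toolkit: reserving half of the prefactor, $\e^{-\theta^2/4\alpha} = \e^{-\theta^2/8\alpha}\, \e^{-\theta^2/8\alpha}$, and completing the square with the other half gives, for $|\theta|\leq\pi$,
\beq
\e^{-\theta^2/8\alpha}\; \frac{n\pi}{\alpha}\, \e^{-\pi^2 n^2/\alpha}\sinh\Big(\frac{n\pi|\theta|}{\alpha}\Big) \;\leq\; \frac{n\pi}{\alpha}\, \e^{-\frac{\pi^2}{\alpha}\left( n^2 - n + \frac{1}{8} \right)}\,,
\eeq
which is summable over $n \geq 1$ uniformly in $\alpha \in \left] 0 , 1 \right]$ (split the exponential once more and use $\sup_{\alpha > 0}\, \alpha^{-1} \e^{-c/2\alpha} < \infty$), while the reserved factor $\e^{-\theta^2/8\alpha}$ still carries the Gaussian decay to be converted into $\e^{-\delta M^i |\theta|}$. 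Equivalently, one can differentiate the method-of-images form $K_\alpha(\theta) \propto \alpha^{-1/2} \sum_{n \in \Z} \e^{-(\theta - 2\pi n)^2/4\alpha}$ term by term and apply $|x|\, \e^{-x^2/8\alpha} \leq C \sqrt{\alpha}$ to each image, which is the cleaner route. A last, minor point: your uniformity claim $\delta^2 M^{2i}\alpha \leq \delta^2 M^2$ holds only for $i \geq 1$; on the slice $i = 0$, where $\alpha \in \left[ 1 , +\infty \right[$, the factor $\e^{d\, \delta^2 \alpha}$ produced by the square completion must instead be absorbed by the mass factor $\e^{-m^2 \alpha}$, which requires choosing $\delta$ small compared to $m/\sqrt{d}$.
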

The bound on the derivatives of $C_i$ is not necessary to the proof of the power-counting theorem, but are crucial to establish the finite character of the renormalized amplitudes of the models presented in Chapter \ref{chap:u1}. They are also generalizable to higher numbers of derivatives, as will be necessary in Chapter \ref{chap:su2}. For Abelian compact Lie groups of dimension $D$, $\theta$'s and $\lambda$ are $D$-dimensional and, for example, the first bound becomes
 \beq\label{propa_boundD} 
 C_i (\vec \theta_1 , \dots , \vec\theta_d ; \vec\theta_1' , \dots , \vec\theta_d') \leq K M^{(dD-2) i } \int \extd \vec\lambda \,
 \e^{- \delta M^{i} \sum_e |\vec\theta_e - \vec\theta_e' + \vec\lambda| }\,.
 \eeq 


\subsection{Abelian power-counting}

\subsubsection{Power-counting in a slice}
  
The divergence degree of Abelian TGFT subgraphs in a single slice and
with a heat kernel regularization has been established and analyzed in \cite{lin}. For an Abelian compact group of dimension $D$ it gives\footnote{The degree of a colored graph will not make any apparition in the last three chapters of this thesis, we therefore allow ourselves to use the same notation $\omega$ for the divergence degree of TGFTs.}
 \beq  
\omega (\cH)  = -2 L (\cH) + D (F(\cH) - R(\cH)) 
 \eeq
where $R( \cH )$ is the rank of the $\epsilon_{ef}$ incidence matrix of $\cH$ (which we recall takes only internal lines and faces into account). As we will see, the $\theta$ integrations transform the $(dD-2)L$ term coming from the propagator decays into a $-2 L + D F$ exponent. The gauge invariance, which manifests itself in the $\lambda$ integrals (and is absent in \cite{tensor_4d}), is responsible for the additional rank contribution. 

If the subgraph $\cH$ is the union of several face-connected components $\cH_k$, the divergence degree factorizes as the sum of the divergence degrees of the face-connected components. Indeed, from the very definition of face-connectedness, the face-connected components are the smallest pieces of $\cH$ on which the $\epsilon_{ef}$ incidence matrix is block-rectangular. Therefore:
\beq  
\omega (\cH)  = \sum_k \omega (\cH_k) \,,
\eeq
which provides the finest understanding of the divergences. 

In the case of non-commutative TGFTs the ordering of faces results in a more subtle 
single-slice power-counting, established in \cite{vm1, vm2, vm3}. The divergences are exactly captured by a \emph{twisted} divergence degree $\omega_t$, which is bounded by the Abelian one. Importantly for us, Abelian and twisted divergence degree coincide for contractible subgraphs. 

\subsubsection{Multiscale power-counting}
  
Consider a graph $\cG$, and the multi-scale decomposition of its amplitude $\cA_\cG = \underset{\mu}{\sum}  \cA_{\cG, \mu}$.
The multiscale power-counting is a bound which, at fixed momentum attribution $\mu$, factorizes over all the $ \cG_{i}^{(k)}$ nodes of the Gallavotti-Nicol\'o tree (hereafter GN tree).

 \begin{proposition}\label{prop:abpc}[Abelian power-counting]   
There exists a constant $K > 0$ such that the following bound holds:
\beq\label{fund_ab}
 \vert \cA_{\cG, \mu}  \vert   \leq K^{L(\cG)}  \prod_{i \in \mathbb{N}} \prod_{ k \in \llbracket 1 , k(i) \rrbracket } M^{\omega [  \cG_{i}^{(k)}]}
\eeq
\end{proposition}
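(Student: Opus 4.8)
The plan is to establish the fundamental bound \eqref{fund_ab} by the standard multiscale strategy, adapted to the face-connected structure of TGFTs. First I would start from the sliced amplitude $\cA_{\cG,\mu}$, which is obtained from \eqref{ampl_ab} by restricting each line $e$ to its slice $i_e$, and insert the propagator bound \eqref{propa_bound_ab} (in its $D$-dimensional form \eqref{propa_boundD}) for every internal line. This produces a factor $M^{(dD-2)i_e}$ per line, together with a Gaussian integrand controlled by the decay exponent $\delta M^{i_e}$ in the face variables, and one auxiliary integration variable $\vec\lambda_e$ per line encoding the gauge averaging. After gauge-fixing the $h_e$ along a spanning tree $\cT$ (setting $h_e=\one$ for $e\in\cT$, as permitted by the $G^{V(\cG)}$ gauge symmetry), the remaining group integrations are over the loop lines only.

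The heart of the argument is to perform the group integrations optimally. For each closed face $f$, the heat kernel $K_{\alpha(f)}$ is peaked with width $\sim M^{-i(f)}$, where $i(f)=\inf_{e\in f} i_e$ is the smallest scale along the face; each such face integration over its base-point variable yields a factor $M^{-D\, i(f)}$ from the Gaussian, i.e. effectively a volume factor. The key bookkeeping step is then to reorganize the product $\prod_e M^{(dD-2)i_e}$ and the face gains $\prod_f M^{-D\,i(f)}$ into a telescoping sum over the slices. Writing $i_e=\sum_{i=1}^{i_e}1$ and $i(f)=\sum_{i=1}^{i(f)}1$, and using that a line $e$ belongs to $\cG_i^{(k)}$ precisely when $i\le i_e$ and similarly an internal face $f$ is internal to $\cG_i^{(k)}$ when $i\le i(f)$, one converts the per-line and per-face powers into $\prod_{i,k} M^{(dD-2)L(\cG_i^{(k)}) - D F(\cG_i^{(k)})}$. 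The subtlety specific to the gauge-invariant case is the $\vec\lambda$ integrations: after integrating the loop-line group variables, a number of the $\delta$-function-like constraints are redundant, and the net effect is that $D$ times the \emph{rank} $R(\cG_i^{(k)})$ of the incidence matrix $\epsilon_{ef}$ is subtracted rather than $D$ times a naive loop count. This is exactly what upgrades the exponent to $\omega[\cG_i^{(k)}]=-2L+D(F-R)$, matching the single-slice degree of \cite{lin}; I would lean on that single-slice result to identify the per-node exponent rather than rederiving the linear-algebra count.

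Concretely, I would carry out the steps in this order: (1) insert \eqref{propa_boundD} and extract the prefactor $K^{L(\cG)}\prod_e M^{(dD-2)i_e}$; (2) gauge-fix along $\cT$ and perform the loop-line group integrations, bounding each internal-face Gaussian to gain $M^{-D\,i(f)}$; (3) handle the $\vec\lambda$ integrations, showing they contribute $M^{+D(\text{independent-constraint count})}$ and that the independent count is governed by the rank $R$ of $\epsilon_{ef}$; (4) telescope over slices using the GN-tree structure, invoking that two high subgraphs are nested or line-disjoint so the scale sums factorize over the nodes $\cG_i^{(k)}$; and (5) invoke the factorization $\omega(\cH)=\sum_k\omega(\cH_k)$ over face-connected components to assemble the final product \eqref{fund_ab}.

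I expect the main obstacle to be step (3), the precise accounting of the $\vec\lambda$ integrations and their interplay with the loop-line integrations. Unlike the non-gauge model of \cite{tensor_4d}, here the gauge averaging introduces correlations among faces sharing lines, and one must argue carefully that, slice by slice, exactly $D\,R(\cG_i^{(k)})$ worth of face volume is \emph{not} gained back — equivalently, that the rank of $\epsilon_{ef}$ correctly counts the independent holonomy constraints after gauge-fixing. The cleanest route is to reduce the per-slice estimate to the already-established single-slice power-counting for the Abelian case, so that the multiscale proof becomes essentially the telescoping of step (4) applied to the known degree $\omega(\cH)=-2L+D(F-R)$; the genuinely new content is then only the optimization over $\mu$ and the organization along the GN tree, both of which follow the classical pattern once the propagator bound and the face-connected GN tree are in place.
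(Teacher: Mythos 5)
Your overall strategy (propagator bounds, telescoping over the GN tree, rank correction from the $\lambda$ integrations) is the same as the paper's, but the exponent bookkeeping in your steps (1)--(2) is inconsistent and cannot produce $\omega$. The prefactor $K^{L(\cG)}\prod_e M^{(dD-2)i_e}$ extracted from \eqref{propa_boundD} is the sup over the $dL$ \emph{strand} variables; to compensate it you must integrate those strand variables, gaining $M^{-D}$ per slice \emph{per strand}, except for one strand per closed face whose decay is saved for the $\lambda$'s. This is the paper's step: in each face a maximal tree of lines, compatible with the GN tree, is used for the strand integrations, converting $(dD-2)L$ into $-2L+DF$ per node. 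You instead credit each face with a single volume gain $M^{-D\,i(f)}$ (a holonomy-representation count) while keeping the strand-representation prefactor, arriving at $(dD-2)L - DF$ per node. No rank correction fixes this: for $d=D=3$ and a single-line melopole ($L=1$, $F=2$, $R=1$) the true degree is $\omega=1$, while your count gives $7-6\pm 3$, i.e.\ $4$ or $-2$. (If you want the one-gain-per-face picture, you must abandon \eqref{propa_boundD} and instead bound each \emph{face} kernel $K_{\alpha(f)}$ by its sup $M^{D\,i(f)}$ times a Gaussian, keeping only the $-2L$ from the $\alpha$-slice widths as line prefactor; either representation works, but you have mixed the two.)

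The second, more serious gap is step (3), which you yourself flag as the main obstacle but then propose to close by "reducing to the already-established single-slice power-counting." That reduction does not work: the single-slice degree applied node by node is not enough, because each $\lambda_e$ is integrated exactly once yet must yield the gain $M^{-DR(\cG_{i}^{(k)})}$ at \emph{every} nested GN node containing $e$ simultaneously. This simultaneity is precisely the nontrivial content of the multiscale statement, and it is what the paper constructs explicitly: a subset of faces $F_\mu$ selected from the leaves of the GN tree towards the root (using that the discarded redundant incidence columns of a node vanish outside that node's block, so their removal cannot lower the ranks of deeper nodes), then a set $L_\mu$ of $|F_\mu|$ lines with a nonsingular $F_\mu\times L_\mu$ minor of $\epsilon_{ef}$ whose rank restricted to each node is still $R(\cG_i^{(k)})$, a change of variables with Jacobian controlled by Hadamard's bound, and compactness for the leftover $\lambda$'s. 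Without some version of this leaf-to-root construction, invoking the single-slice result per node amounts to restating the desired inequality rather than proving it.
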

\begin{proof}
The idea is to combine the single-scale bound and the GN tree to obtain a new optimized bound over scales. 
First we collect all the powers of $M$ in front of the propagator bounds, and rewrite them as
$$\prod_i \prod_{k  \in \llbracket 1 , k(i) \rrbracket } M^{(d D -2) L [  \cG_{i}^{(k)}]}$$
through the usual trivial identity $M^i = \prod_{j=1}^i M$.
We then integrate the $\theta$ variables in an optimal way, as was done in \cite{tensor_4d}. In each face $f$, a maximal tree of lines $T_f$ is chosen to perform the $\theta$ integrations. Optimality is ensured by requiring the trees $T_f$ to be compatible with the abstract GN tree, in the sense that $T_f \cap G_{i}^{(k)}$ has to be a tree itself, for any $f$ and $G_{i}^{(k)}$.
This results in a factor 
$$ K^{L(\cG)} \prod_i \prod_{ k \in \llbracket 1 , k(i) \rrbracket } M^{D \left( -d  L ( \cG_{i}^{(k)})  + F( \cG_{i}^{(k)}) \right) } \,,$$
that is one decay per elementary strand ($- d L$ term), except for one strand in each closed face ($+ F$ term). 
Combined with the previous term, this gives a bound 
$$ K^{L(\cG)}   \prod_i \prod_{ k \in \llbracket 1 , k(i) \rrbracket } M^{-2 L ( \cG_{i}^{(k)})  +  DF( \cG_{i}^{(k)})  } \,.$$

It remains to perform the $\lambda$ integrals, using the remaining decay, that is
\beq\label{facetree} 
\prod_f \e^{- \delta M^{i(f)} \vert \sum_e \epsilon_{ef} \lambda_e \vert}, 
\eeq
where $i(f)$ is the lowest scale in the face $f$. These integrals should give the rank contribution to $\omega$. 
In order to optimize this effect, we should select a restricted set of faces $F_\mu$ 
such that the submatrix $\epsilon_{ef}$ with $f$ 
restricted to $F_\mu$ has rank $R (\cG_{i}^{(k)})$ in each $\cG_{i}^{(k)}$ node, and forget the (redundant) decay factors from the other faces in \eqref{facetree}. 
This is analogue to selecting a spanning tree $T_\mu$ and neglecting the loop lines decays in ordinary field theories.

To select $F_\mu$, we start from the leaves of the GN tree and proceed towards its root $\cG$.
In a leaf $\cH$ we select a first subset of faces such that the restricted submatrix $\epsilon_{ef}$ with $f$
and $e$ in $\cH$ has maximal rank; then we \emph{contract} $\cH$ and continue the procedure for the reduced
graph and the reduced GN tree, until the root is reached. At the end we obtain a particular set of faces $F_\mu$.

At each node $\cG_{i}^{(k)}$ we have discarded the full incidence columns for internal faces 
which were combinations of other columns of that node. But because such faces were internal, these full columns have zeros outside the $ \cG_{i}^{(k)}$ block. Hence removing them cannot have any effect on the lower GN nodes' ranks. The conclusion is that the incidence matrix reduced to $F_\mu$, that is for which all internal faces not contained in $F_\mu$ have been discarded,
has still rank $R ( \cG_{i}^{(k)} )$ in each $\cG_{i}^{(k)}$ node.

Discarding the decay factors for faces not in $F_\mu$, we now need to analyze the result of the integral 
\begin{equation} 
\int \prod_{e \in L(\cG)} \extd^D \lambda_e   \prod_{f \in F_\mu} \e^{- \delta M^{i(f)} \vert \sum_e \epsilon_{ef} \lambda_e \vert}, 
\end{equation}
and prove that it gives $\prod_i \prod_{ k \in \llbracket 1 , k(i) \rrbracket } M^{-D R ( \cG_{i}^{(k)} )}$. 
To this effect, in the graph $\cG$ we can pick up a set $L_\mu$ of  exactly $\vert F_\mu \vert$ \emph{lines} such that the (square) $F_\mu \times L_\mu$ minor of $\epsilon_{ef}$ has non-zero determinant. The $L_\mu \times F_\mu$ square incidence matrix $\epsilon_{ef}$ must still have exactly $R ( \cG_{i}^{(k)} )$ rank in each $\cG_{i}^{(k)}$ node (otherwise the $R ( \cG_{i}^{(k)} )$ columns $\epsilon_{ef}$ for $f \in F_\mu \cap \cG_{i}^{(k)}$
would not generate a space of dimension $R ( \cG_{i}^{(k)} )$, and the rank of the selected $F_\mu \times L_\mu$ square matrix would be strictly smaller than $F_\mu$).

We can now fix all values of the $\lambda_e$ parameters of the lines not in $L_\mu$ and consider the integrals 
\beq \int \prod_{e \in L_\mu} \extd^D \lambda_e \prod_{f \in F_\mu} \e^{- \delta M^{i(f)} \vert \sum_e \epsilon_{ef} \lambda_e \vert}, 
\eeq
We change variables so that the integral becomes
\beq \int J \prod_{f \in F_\mu} d^D x_f  \e^{- \delta M^{i(f)} \vert  x_f - y_f \vert}, 
\eeq
where the $y_f$ variables are functions of the fixed $\lambda_e$ parameters of the lines not in $L_\mu$ and $J$ is a Jacobian.
This integral gives $\prod_{f \in F_\mu}  M^{-D i(f)}$, which by the condition on $F_\mu$ turns into $\prod_i \prod_{ k \in \llbracket 1 , k(i) \rrbracket } M^{-D r_{i,k}}$ as expected.

Remark finally that by Hadamard's bound, since each column of this determinant is made of at most $d$ factors $\pm 1$ (a line containing at most $d$ internal faces), the Jacobian $J$ of the corresponding change of variables 
is at most $\sqrt{d} ^{F_\mu}$, hence can be absorbed in the $K^{L(\cG)}$ factor. 

Finally we can integrate the fixed $\lambda_e$ parameters for $e \not \in L_\mu$ at a cost bounded by $K^{L (\cG)}$ since we assumed the group to be compact.
\end{proof}

It is important to note that, when the group is commutative (as we assumed here), the multiscale Abelian bound is optimal, in the sense that a lower bound of the same form could be derived, with only a different constant $K > 0$. This can be checked step by step in the derivation, starting from the propagator bounds on which we already commented.
We will come back to this power-counting theorem in Chapter \ref{chap:su2}, where we will explain why it still holds in the non-Abelian case, and why it is optimal for contractible subgraphs. 



\section{Classification of just-renormalizable models}

In this section, we discuss the classification of possible just-renormalizable models derived in \cite{su2}. On top of being very general and constraining, therefore interesting, it requires a detailed analysis of the Abelian divergence degree and of the properties of melonic subgraphs. Hence it is a good way to get more familiar with these structures and collect mathematical facts along the way. 

\subsection{Analysis of the Abelian divergence degree}\label{sec:degree}

We come back to the general situation of a compact Lie group $G$ of dimension $D$, not necessarily commutative, and a rank-$d$ field with $d \geq 3$. We assume for the moment that the Abelian power-counting theorem holds in general, and for simplicity we call \textit{divergent} (resp. \textit{convergent}) a subgraph with $\omega \geq 0$ (resp. $\omega < 0$). Because the divergent subgraphs in the generalized non-Abelian sense will turn out to be contractible, this nomenclature is perfectly consistent. We also denote by $v_{max}$ the maximal valency of $d$-bubble interactions appearing in the action (i.e. in the set $\cB$). 
The question we would like to address is the following: 

\

{\emph{Which values of $d$, $D$ and $v_{max}$ are likely to support just-renormalizable theories?}}

\

The first step we need to take is to write $\omega$ in a form which makes some key numbers associated to the subgraphs explicit. Let us consider a face-connected subgraph $\cH \subset \cG$ with $V$ vertices, $L$ lines, $F$ faces, and $N$ external legs. $R$ is the rank of the $\epsilon_{lf}$ incidence matrix of $\cH$. When $F = 0$, face-connectedness imposes $L=1$, and one trivially has $\omega(\cH) = -2$, which makes $\cH$ convergent. From now on, we therefore assume $F \geq 1$. Face-connectedness imposes that each line of $\cH$ appears in at least one of its internal faces. For $1 \leq k \leq v_{max} / 2$, $n_{2 k}$ is the number of bubbles with valency $2 k$ in $\cG$. 

Remember that the incidence matrix has entries $0, +1$ or $-1$ since the graphs we consider have no tadfaces.

Since we are going to make extensive use of contractions of graphs along trees, as a way to gauge fix the amplitudes, we first establish the change in divergence degree under such a contraction. It should come as no surprise that contracting a tree in $\cH$ only affects $\omega$ through its $- 2 L$ term, and this can be proven very concretely, in a similar way as one would proceed to fully justify the gauge fixing procedure \cite{pr1}.

\begin{lemma}
Under contraction of a tree $\cT$, $F$ and $R$ each do not change so that 
$[F-R] (\cH) =[ F-R] (\cH / \cT)$.
\end{lemma}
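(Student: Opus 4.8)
The plan is to reduce the statement to a single fact in linear algebra about the cycle space of $\cH$. Since $\cH$ is face-connected it is in particular vertex-connected, so I may choose $\cT$ to be a spanning tree of $\cH$ and write $\cL = L(\cH) \setminus \cT$ for the complementary set of loop lines. The first step is to record what the contraction does at the level of the incidence matrix: by the global characterization of contraction established above, contracting $\cT$ deletes the internal faces of $\cT$ and replaces the external faces of $\cT_c$ by single colored lines; since no reconnection changes orientations or which \emph{loop} lines a face runs through, the internal faces of $\cH / \cT$ are obtained from those of $\cH$ simply by forgetting their tree-line content. In matrix terms this means that the incidence matrix $\epsilon'$ of $\cH / \cT$ is exactly the submatrix of $\epsilon_{ef}$ obtained by keeping only the rows $e \in \cL$, provided no internal face disappears in the process. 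Thus both $F(\cH)=F(\cH / \cT)$ and $R(\cH)=R(\cH / \cT)$ will follow at once if I can control how the projection onto the loop-line coordinates acts on the faces.

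The key step is to exploit that the incidence vector $\epsilon_{\cdot f}\in \mathbb{R}^{L(\cH)}$ of any internal face $f$ is a \emph{cycle}: this is precisely the linearized content of the gauge invariance $h_e \mapsto g_{t(e)} h_e g_{s(e)}^{\inv}$ of the face holonomy $\overrightarrow{\prod_{e\in f}} h_e^{\epsilon_{ef}}$, so that every face vector lies in the cycle space $Z$ of $\cH$. I would then show that the projection $\pi_\cL : Z \to \mathbb{R}^{\cL}$ onto the loop-line coordinates is a linear isomorphism. This is the fundamental-cycle argument: for each loop line $e' \in \cL$ the unique cycle $C_{e'}$ supported on $\cT \cup \{ e' \}$ satisfies $\pi_\cL(C_{e'}) = \pm \mathbf{1}_{e'}$, so the $C_{e'}$ form a basis of $Z$ whose images form the standard basis of $\mathbb{R}^{\cL}$; since $\dim Z = |\cL|$ for the connected graph $\cH$, the map $\pi_\cL$ is bijective.

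With this isomorphism in hand the conclusion is immediate. Injectivity of $\pi_\cL$ on $Z$ forces $\pi_\cL(\epsilon_{\cdot f}) \neq 0$ for every internal face $f$ (no internal face is supported purely on the tree), so no face is destroyed by the contraction, and distinct faces stay distinct; hence $f \mapsto \pi_\cL(\epsilon_{\cdot f})$ is a bijection between the internal faces of $\cH$ and those of $\cH / \cT$, giving $F(\cH)=F(\cH / \cT)$. For the rank, $R(\cH)$ is the dimension of the span of the face vectors inside $Z$, and since $\pi_\cL|_Z$ is an isomorphism it preserves the dimension of this span, which is exactly $R(\cH / \cT)$; therefore $R(\cH)=R(\cH / \cT)$ and $[F-R](\cH)=[F-R](\cH / \cT)$. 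The only genuine obstacle is the bookkeeping that identifies $\epsilon'$ with the loop-line rows of $\epsilon$, i.e. checking that contraction along the tree neither deletes nor merges internal faces; but this is exactly what injectivity of $\pi_\cL$ guarantees, so the whole lemma rests on the single isomorphism $\pi_\cL|_Z$.
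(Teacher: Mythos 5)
Your proof is correct in substance and is essentially the paper's own argument in different clothing: the paper's relation $\epsilon_{\ell f} = -\sum_{l \neq \ell} \epsilon_{lf}\,\eta_{l,\ell}$, derived from contractibility of the tree loop induced by a closed face, is exactly your statement that face incidence vectors lie in the cycle space $Z$ and are determined by their loop-line coordinates. Your fundamental-cycle isomorphism $\pi_\cL|_Z$ and the paper's $\eta$-matrix encode the same piece of linear algebra, and both give $R(\cH) = R(\cH/\cT)$ by the same mechanism: the tree rows of $\epsilon_{ef}$ are linear combinations of the loop rows, so deleting them preserves the rank.

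One step of your write-up is, however, justified incorrectly, even though the conclusion it serves is true. You claim that injectivity of $\pi_\cL$ on $Z$ implies that ``distinct faces stay distinct,'' and that $f \mapsto \pi_\cL(\epsilon_{\cdot f})$ is a bijection between internal faces of $\cH$ and those of $\cH/\cT$. But faces are not faithfully encoded by their incidence vectors: distinct faces can have identical columns in $\epsilon_{ef}$. For instance, take two $2$-valent (mass-type) bubbles joined by two color-$0$ lines; the $d$ closed faces, one per color, all run through the same two lines with the same orientations, so their incidence vectors coincide, and they still coincide after contraction. Injectivity of $\pi_\cL$ therefore says nothing about whether such faces merge. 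What your isomorphism does correctly establish is the part that genuinely needs an argument, namely that no closed face is supported entirely on $\cT$, so no face disappears. The remaining fact, that contraction never merges two faces, is a purely combinatorial property of dipole contraction (contracting a line merely shortens the $d$ faces running through it, one per color, without joining them), which is precisely how the paper disposes of $F$-invariance (``faces can only get shorter but cannot disappear''). With that substitution your argument is complete, and the rank part needs no change.
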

\begin{proof} That $F$ does not change is easy to show: existing faces can only get shorter under contraction of a tree line but cannot disappear (this is true also for \emph{open} faces). 

$R$ does not change because of the gauge invariance. Given a tree $\cT$ with $\vert\cT \vert = T= V-1$ lines, we can define the $L \times T$ matrix $\eta_{l,\ell}$ which has entries $0, +1$ or $-1$ in the following way: for any oriented line $l=(v,v')$ we consider the \emph{unique path} $P_\cT(l)$ in the tree $\cT$ going from vertex $v$ to $v'$ and define $\eta_{l,\ell}$ to be zero if this path does not contain $\ell$ and $\pm 1$ if it does, the sign taking into account the orientations of the path and of the line $\ell$. Remark that $\eta_{\ell\ell}=1$ for all $\ell$.

Then for each (closed) face $f$, made of $l_1,\ldots , l_p$, it is easy to check that the induced loop on $\cT$, made by gluing the paths $P_\cT(l_1), \dots ,  P_\cT(l_p)$, which is contractible, must take each tree line $\ell$ an equal number of times and with opposite signs. Therefore 
\beq 
E(f, \ell) =\sum_l \epsilon_{lf}\eta_{l,\ell} = 0, \quad \to  \epsilon_{\ell f} = -  \sum_{l \not = \ell} \epsilon_{lf}\eta_{l,\ell}\,.
\eeq 
Hence the line $\epsilon_{\ell f}$ is a combination of the other lines, and the incidence matrix after contracting $\ell$ maintains the same rank.
\end{proof}

We shall consider now a \textit{tensorial rosette} \cite{addendum}, namely the subgraph $\cH / \cT$ obtained after contraction of a \textit{spanning tree}; it has $L- (V-1)$ lines and a single vertex. The goal is to gain a better control over its degree of divergence and the various contributions to it. The key procedure to achieve the goal is to apply $k$-dipole contractions to the tensorial rosette, and establish how they affect the divergence degree. 

Note that $\cH / \cT$ is not necessarily face-connected, since the contraction of tree lines affects how faces are connected to one another. 
Recall also that a line is a $k$-dipole if it belongs to exactly $(k-1)$ faces of length $1$.

Under a $k$-dipole contraction we know that a single line and possibly several faces disappear, hence the rank
of the incidence matrix can either remain the same or go down by 1 unit. Moreover
only the faces of length $1$ can eventually disappear. And if there exist such faces
the rank must go down by exactly $1$, since we delete a column which is not a combination of the others.
\begin{itemize}  
\item $F \to F- (k - 1)$ and $R \to R-1$, hence $F-R \to F- R- (k-2)$  if $k \ge 2$,

\item $F \to F$, and $R \to R$ or $R \to R-1$,  hence $F-R \to F- R$ or $F-R \to F- R+1$ 
if $k=1$.
\end{itemize}

By definition, a rosette (with external legs) is a melopole if and only if there is an ordering of its lines such that they can all be successively contracted as $d$-dipoles. In that case, we find that $F-R =  (d-2)[L- (V-1)]$. If the rosette is not a melopole, there is at least one step where $F-R$ decreases by less than $(d-2)$, so we expect such a subgraph to be suppressed with respect to a melopole. However, $k$-dipole contractions with $k < d$ need not conserve vertex-connectedness, so we need to refine this argument. To do so, we write the divergence degree of any rosette in terms of the quantity
\beq
\rho \equiv F - R - \left( d-2 \right) \widetilde{L} \,,
\eeq
where $\widetilde{L}$ is the number of lines of the rosette. It will be convenient in the following to consider (vertex)-disjoint unions of rosettes, to which $\rho$ is extended by linearity. These disjoint unions of rosettes will simply be called rosettes from now on, and their single-vertex components will be said to be \textit{connected}.

Since $\widetilde{L} = L - V + 1$ is the number of lines of any rosette of the graph $\cH$, and $F-R$ does not depend on $\cT$ either, we know that $\rho(\cH / \cT)$ is independent of $\cT$. It is therefore a function of equivalent classes of rosettes. This way we obtain a nice splitting of $\omega$, between a rosette dependent contribution and additional combinatorial terms capturing the characteristics of the initial graph:
\beq\label{rholv}
\omega(\cH) = D \left(d - 2 \right) + \left[ D \left(d - 2 \right) - 2 \right] L - D \left(d - 2 \right) V + D \rho( \cH / \cT ) \,.
\eeq
The first three terms do not depend on the rank $R$, and provided that $\rho$ can be understood, will give a simple classification of divergences. To establish this central result about the values of $\rho$, one first needs to prove a technical lemma, about $1$-dipole contractions. 
\begin{lemma}\label{d1}
Let $\cG$ be a face-connected rosette (with $F(\cG) \geq 1$), and $\ell$ a $1$-dipole line in $\cG$. If $\cG / \ell$ has more vacuum connected components than $\cG$, then
\beq
R( \cG / \ell ) = R ( \cG ) - 1 \,.
\eeq 
\end{lemma}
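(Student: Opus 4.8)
The plan is to translate the statement into linear algebra on the incidence matrix $\epsilon_{ef}$ and then feed in the connectivity hypothesis. First I would record the effect of the $1$-dipole contraction on $\epsilon$. Since $\ell$ is a $1$-dipole it lies in no internal face of length $1$, so by the general facts already established no internal face disappears ($F$ is unchanged) and each face through $\ell$ merely becomes shorter by dropping its $\ell$-entry. Consequently the incidence matrix of $\cG/\ell$ is exactly that of $\cG$ with the row indexed by $\ell$ deleted, and $R$ can only stay put or drop by one. Deleting a single row lowers the rank precisely when that row is linearly independent of the others; dually, writing $u_\ell$ for the line-space basis vector of $\ell$, the drop occurs iff $u_\ell$ lies in the column span of $\epsilon$, i.e. iff some combination of faces is supported on $\ell$ alone, equivalently iff every ``flat'' assignment $x \in \ker \epsilon^{T}$ (those with $\sum_e \epsilon_{ef} x_e = 0$ for all internal faces $f$) satisfies $x_\ell = 0$. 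The target is thus to prove that disconnection forces every flat assignment to vanish on $\ell$.

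Next I would exploit the hypothesis that $\cG/\ell$ gains a component to put $\epsilon$ in block form. Let $A$ and $B$ be the line sets of two components separated by the contraction (the disjoint-union case reduces to this by the linearity built into the definition of the invariants). Because no tadfaces occur, every face visits $\ell$ at most once; a face visiting $\ell$ exactly once cannot also connect $A$ to $B$ through any other line, so I would first check that the non-$\ell$ support of every face lies entirely within $A$ or entirely within $B$. Faces then split into $A$-faces and $B$-faces, and since $\cG$ is itself face-connected while $A$ and $B$ are separated only after removing $\ell$, both families $P_A, P_B$ of faces through $\ell$ landing in $A$, resp. $B$, are non-empty. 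In the ordering (rows: $A$-lines, then $B$-lines, then $\ell$; columns: $A$-faces, then $B$-faces) the matrix $\epsilon$ becomes block-diagonal in its first two row groups, with the single extra $\ell$-row carrying non-zero entries in both column blocks. Deleting that row leaves a genuinely block-diagonal matrix, so $R(\cG/\ell)$ is the sum of the two block ranks, and the claim reduces to showing that the restricted $\ell$-row is not contained in the combined row span of the two blocks.

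The crux, and the step I expect to be the main obstacle, is to rule out the $\ell$-row being absorbed simultaneously by both blocks. Equivalently, I would show that one cannot glue a flat assignment $x^{A}$ on $A\cup\{\ell\}$ and a flat assignment $x^{B}$ on $B\cup\{\ell\}$, each non-zero on $\ell$, into a global flat assignment with $x_\ell\neq 0$; such a global flat assignment would, by the first paragraph, witness that the rank does \emph{not} drop. My approach is to read $b(\cH):=L(\cH)-R(\cH)=\dim\ker\epsilon^{T}$ as a first-Betti-type invariant of the face $2$-complex and to establish the analogue of the elementary graph fact that deleting a \emph{bridge} leaves the cycle rank unchanged: here a contraction that disconnects plays the role of removing a bridge, forcing $\Delta b=0$ and hence $\Delta R=\Delta L=-1$. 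Making this rigorous is where the colored-graph combinatorics bites, since $\epsilon_{ef}$ is only the color-$0$ part of a boundary operator and one must control the orientations, the $\pm 1$/$0$ entries guaranteed by the absence of tadfaces, and the external faces and legs so that the bridge argument genuinely applies. Once $\Delta b=0$ is secured under disconnection, $R(\cG/\ell)=R(\cG)-1$ follows, and the disjoint-union version is immediate by additivity of $L$ and $R$ over vertex-disjoint pieces.
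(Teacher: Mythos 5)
Your first two paragraphs are sound and essentially reproduce the paper's own setup: the reduction ``the rank drops by one under deletion of the $\ell$-row iff every flat assignment vanishes on $\ell$, iff the unit vector $e_\ell$ lies in the column span of $\epsilon$'', and the block decomposition with blocks $M_1$, $M_2$ and a single $\ell$-row straddling both column blocks, are both correct and are exactly how the paper frames the problem. The genuine gap is in your third paragraph: the step you yourself flag as the main obstacle is never carried out, and the strategy you announce for it cannot work as stated. You propose to show that ``a contraction that disconnects plays the role of removing a bridge, forcing $\Delta b = 0$'', i.e.\ that disconnection \emph{alone} forces every flat assignment to vanish on $\ell$. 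But that statement is false, and the lemma is carefully worded to avoid it: the hypothesis is not that $\cG/\ell$ is disconnected, it is that the contraction creates a new \emph{vacuum} connected component. This is essential: in the proof of Proposition \ref{rho} the alternative $R(\cG/\ell) = R(\cG)$ for a $1$-dipole contraction is explicitly allowed to occur, and it corresponds precisely to disconnections whose split-off pieces are non-vacuum, where the presence of external (open) faces on the $B$ side leaves room for a flat assignment with $x_\ell \neq 0$. Your plan never invokes vacuumness anywhere --- external structure appears only in the passing remark that one ``must control \ldots the external faces and legs'' --- so no amount of care with orientations and tadfaces will make the bridge analogy go through.

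What closes the gap in the paper is the colored structure of the vacuum block $M_2$. Orient lines and faces so that all entries of $\epsilon$ are $0$ or $+1$; in a vacuum component every line lies in exactly one \emph{closed} face of each color, so for each color the corresponding columns of $M_2$ sum to the all-ones vector on the $M_2$-rows. Since $\cG$ is face-connected, at least one face through $\ell$ sits in the $M_1$ block (or is external), so there is a color $i_d$ having no face through $\ell$ among the $M_2$-columns, while the color-$i_1$ face $f_1$ through $\ell$ does sit there. The combination $\sum_{f\ \mathrm{of\ color}\ i_1} C_f - \sum_{f\ \mathrm{of\ color}\ i_d} C_f$, taken over $M_2$-columns, then vanishes on all $M_1$- and $M_2$-rows and equals $1$ on the $\ell$-row: it is exactly $e_\ell$, exhibiting $e_\ell$ in the column span and forcing the rank to drop. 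Equivalently, its coefficient vector lies in $\ker M_2$ but pairs non-trivially with the restriction $r_B$ of the $\ell$-row, so $r_B$ is not in the row span of $M_2$ --- which is precisely the statement your bridge argument was supposed to deliver. Any correct completion of your proof must use the vacuum hypothesis in some such way.
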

\begin{proof}
As stated before, such a move either lowers $R$ by $1$ or leaves it unchanged. We just have to show that given our hypothesis, we are in the first situation. 
We first remark that lines and faces can be oriented in such a way that $\epsilon_{lf} = +1$ or $0$. We can for instance positively orient lines from white to black nodes, and faces accordingly. 
With this convention, we can exploit the colored structure of the graphs in the following way: for any color $1 \leq i \leq d$, each line appears in exactly one face of color $i$. For vacuum graphs, all these faces are closed
and correspond to entries in the $\epsilon_{lf}$ matrix, implying
\beq
\sum_{f \, {\rm{of} \, \rm{color}} \, i } \epsilon_{l f} = 1
\eeq
for any $i$ and any $l$.
Given the hypothesis on $\cG / \ell$, we know that up to permutations of lines and columns, $\epsilon_{lf}$ takes the form:
\[
\left(
\begin{array}{c|c}
  \raisebox{-10pt}{{\huge\mbox{{$M_1$}}}} & \raisebox{-10pt}{{\huge\mbox{{$0$}}}} \\ \hline
  \ast \, \cdots \, \ast \, 1 & 1 \, \varepsilon_2 \, \cdots \, \varepsilon_{d - 1}\,  0\,  \cdots\,  0 \\ \hline
  \raisebox{-10pt}{{\huge\mbox{{$0$}}}} & \raisebox{-10pt}{{\huge\mbox{{$M_2$}}}} \\
\end{array}
\right)
\]
where $M_2$ is the $\epsilon_{lf}$ matrix of a vacuum graph, one of the additional vacuum components created by the contraction of $\ell$. $M_1$ is the $\epsilon_{lf}$ matrix associated to the complement (possibly several connected components) in $\cG / \ell$.
The additional line corresponds to $\ell$, and because $\cG$ is face-connected, it must contain at least a $1$ under $M_1$, and a $1$ above $M_2$. This leaves up to $d-2$ additional non-trivial entries in this line above $M_2$, denoted by the variables $\varepsilon_i = 0$ or $1$. Let us call
$i_1$ the color of the face associated to the first column of $M_2$. Non-zero $\varepsilon$'s are necessarily associated to different colors: call them $i_2$ up to $i_{d-1}$. This implies that the remaining color, $i_d$, only appears in faces
of $M_2$ that do not intersect with $\ell$. Calling $C_f$ the columns of $M_2$, and $C_{f_1}$ its first column, one has:
$$
C_{f_1} + \sum_{f \, {\rm{of} \, \rm{color}} \, i_1 \,; \,f \neq f_1} C_f = \sum_{f \, {\rm{of} \, \rm{color}} \, i_d \,} C_f\,.
$$     
The operation $$C_{f_1} \to C_{f_1} + \sum_{f \, {\rm{of} \, \rm{color}} \, i_1 \,; \,f \neq f_1} C_f - \sum_{f \, {\rm{of} \, \rm{color}} \, i_d \,} C_f$$
cancels the first column of $M_2$, and when operated on the whole matrix does not change the line $\ell$. 
We conclude that $R(\cG) = {\rm{rank}}(M_1) + {\rm{rank}}(M_2) + 1 = R( \cG / \ell) + 1$. 
\end{proof}

The essential property of the quantity $\rho$ is that it is bounded from above, and is extremal for melopoles. More precisely we have:
\begin{proposition}\label{rho}
Let $\cG$ be a connected rosette. 
\begin{enumerate}[(i)]
\item If $\cG$ is a vacuum graph, then 
$$ \rho(\cG) \leq 1$$
and
$$ \rho (\cG) = 1 \Leftrightarrow \cG \; \mathrm{is} \; \mathrm{a} \; \mathrm{melopole}\,.$$
\item If $\cG$ is not a vacuum graph, i.e. has external legs, then 
$$ \rho(\cG) \leq 0$$
and
$$ \rho (\cG) = 0 \Leftrightarrow \cG \; \mathrm{is} \; \mathrm{a} \; \mathrm{melopole}\,.$$
\end{enumerate}
\end{proposition}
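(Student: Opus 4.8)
The plan is to prove both statements simultaneously by induction on the number of lines $\widetilde{L}$ of the connected rosette $\cG$, using $k$-dipole contractions as the inductive step. The key tool is the behaviour of $\rho = F - R - (d-2)\widetilde{L}$ under a single dipole contraction, which I can read off from the bookkeeping already established in the excerpt: contracting a $k$-dipole removes one line ($\widetilde{L} \to \widetilde{L}-1$), and changes $F-R$ according to the dichotomy $k \geq 2$ versus $k=1$. For $k \geq 2$, we have $F-R \to (F-R) - (k-2)$, so $\rho \to \rho - (k-2) + (d-2) = \rho + (d-k)$; since $k \leq d$ this means $\rho$ strictly increases unless $k = d$, in which case it is preserved. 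The $d$-dipole contractions are precisely the moves defining melopoles, so a melopole is exactly a rosette reducible entirely by $d$-dipoles, along which $\rho$ stays constant from top to bottom.

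\textbf{Base case and the $k=1$ obstacle.}
First I would handle the base case: a connected rosette with a single line. If that line is a $d$-dipole the graph is a melopole and one computes $F - R = (d-2)\cdot 1$ directly, giving $\rho = 0$ (non-vacuum) or $\rho = 1$ (vacuum, where the extra face pushes $R$ down or $F$ up by one); otherwise $\rho$ is strictly smaller. The main obstacle is the $k=1$ case, because contracting a $1$-dipole need not preserve vertex-connectedness, and when $F-R \to (F-R)+1$ (i.e. $R$ unchanged) we would get $\rho \to \rho + (d-1) > \rho$, which threatens to break monotonicity control. This is exactly where Lemma~\ref{d1} is essential: it guarantees that whenever a $1$-dipole contraction disconnects the graph into more vacuum components, the rank \emph{does} drop by one, so that $F-R \to F-R$ and hence $\rho \to \rho + (d-2)$, i.e. $\rho$ strictly increases (for $d\geq 3$) while the total $\rho$ splits additively over the resulting components by linearity.

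\textbf{Assembling the induction.}
With these dipole estimates in hand, I would argue as follows. Given a connected rosette $\cG$ with $\widetilde{L}\geq 1$, pick any line $\ell$; it is a $k$-dipole for some $k$. Contracting it yields $\cG/\ell$, a (possibly disconnected) rosette with one fewer line, to which the inductive hypothesis applies component by component via the additivity of $\rho$. In every case the contraction is \emph{non-decreasing} on $\rho$ when passing from $\cG/\ell$ back up to $\cG$ — equivalently, $\rho(\cG) \leq \rho(\cG/\ell)$ with equality iff $\ell$ is a $d$-dipole and its contraction introduces no rank anomaly. Combining with the inductive bound $\rho(\cG/\ell)\leq 1$ (vacuum) or $\leq 0$ (non-vacuum) gives the stated upper bounds on $\rho(\cG)$. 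The number of external legs is unchanged by contractions, so the vacuum/non-vacuum distinction is preserved throughout, which keeps the two cases cleanly separated. For the saturation statements, equality $\rho(\cG) = 1$ (resp. $0$) forces equality at every step of the reduction, meaning every line must have been contracted as a $d$-dipole with $\rho$ preserved; unwinding this shows $\cG$ admits a complete ordering of $d$-dipole contractions, i.e. $\cG$ is a melopole. Conversely, if $\cG$ is a melopole then all contractions are $d$-dipoles, $\rho$ is constant along the whole reduction, and it equals its value on the terminal single-vertex melopole, namely $1$ or $0$. The one point demanding care in writing up is the disconnecting $1$-dipole scenario, where I must invoke Lemma~\ref{d1} to certify the rank drop and then track the additive contributions to $\rho$ across the newly created components so that the global bound is maintained.
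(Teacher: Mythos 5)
Your strategy---induction on the number of lines, contracting one dipole at a time, with Lemma \ref{d1} reserved for the $1$-dipole case---is the same as the paper's, but the step where you assemble the induction fails. You combine the monotonicity $\rho(\cG)\leq\rho(\cG/\ell)$ with ``the inductive bound $\rho(\cG/\ell)\leq 1$ (vacuum) or $\leq 0$ (non-vacuum)''. The inductive hypothesis, however, only bounds each \emph{connected} rosette separately, and by Proposition \ref{disconnected} a $k$-dipole contraction with $k<d$ can split $\cG/\ell$ into as many as $q=d-k+1$ vertex-connected components. When $\cG$ is vacuum every one of these components is vacuum, so the additivity you invoke gives only $\rho(\cG/\ell)\leq q$, which can be as large as $d$; your claimed bound $\rho(\cG/\ell)\leq 1$ is then simply unavailable, and monotonicity alone yields the useless estimate $\rho(\cG)\leq d$. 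The whole content of the paper's proof is the exact compensation between the number of created components and the gain of the contraction: $\rho(\cG)\leq\rho(\cG/\ell)-(d-k)\leq q-(d-k)\leq 1$, with the analogous count (at most $q-1$ vacuum components) giving $\leq 0$ in the non-vacuum case. The same gap destroys your saturation analysis: since $\rho(\cG/\ell)$ may exceed $1$, the equality $\rho(\cG)=1$ does \emph{not} force $\rho(\cG)=\rho(\cG/\ell)$, hence does not force $\ell$ to be a $d$-dipole; the correct characterization, as in the paper, is that $\cG/\ell$ consists of exactly $d-k+1$ vacuum melopoles, from which one still has to argue that $\cG$ is itself a melopole.

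Your bookkeeping for $1$-dipoles is also inverted, and the inversion hides which case is genuinely dangerous. Since no face of length one runs through a $1$-dipole, a rank drop means $F-R\to F-R+1$, i.e. $\rho\to\rho+(d-1)$, while an unchanged rank means $F-R$ unchanged, i.e. $\rho\to\rho+(d-2)$; you attribute $F-R\to F-R+1$ to ``$R$ unchanged'' and a rank drop to ``$F-R\to F-R$''. With the correct algebra, the case where Lemma \ref{d1} applies directly (new vacuum components, hence rank drop) has the larger gain $d-1$ and is exactly analogous to $k\geq 2$. The dangerous case is the opposite one---rank unchanged, gain only $d-2$---and there the lemma must be used in its \emph{contrapositive} form: no rank drop implies no new vacuum component, so either $\cG/\ell$ is a single vacuum rosette ($\rho\leq 1$) or all of its components are non-vacuum ($\rho\leq 0$), and the gain $d-2\geq 1$ then suffices. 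Your write-up never treats this case. Finally, Lemma \ref{d1} requires a face-connected rosette with at least one internal face, so the induction must first dispose of non-face-connected rosettes by additivity of $\rho$ over face-connected components---another step absent from your outline.
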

\begin{proof}
It is easy to see that $\rho$ is conserved under $d$-dipole contractions. In particular a simple computation shows that $\rho(\cG) = 1$ when $\cG$ is a vacuum melopole, and $\rho(\cG) = 0$ when $\cG$ is a non-vacuum melopole. 
We can prove the general bounds and the two remaining implications in (i) and (ii) by induction on the number of lines $L$ of the rosette $\cG$. 
\begin{itemize}
 \item If $L = 1$, $\cG$ can be both vacuum or non-vacuum. In the first situation, $\cG$ cannot be anything else than the
fundamental melon with 2 nodes. It has exactly $d$ faces, a rank $R = 1$, so that $\rho (\cG) = 1$. In the second situation,
namely when $\cG$ is non-vacuum, the number of faces is strictly smaller than $d$, as at least one strand running through the single line of $\cG$ must correspond to an external face. Since on the other hand 
the rank is $0$ when $F(\cG) = 0$ and $1$ otherwise, we see that $\rho (\cG) \leq 0$, and $\rho (\cG) = 0$ whenever the number of faces is exactly $(d-1)$. In this case, the unique line of $\cG$ is a $d$-dipole, therefore $\cG$ is a melopole.
 \item Let us now assume that $L \geq 2$ and that properties (i) and (ii) hold for a number of lines $L' \leq L -1$. If $\cG$ is not face-connected (and therefore non-vacuum), we can decompose it into face-connected components $\cG_1 , \ldots , \cG_k$ with $k \geq 2$. Each of these components has a number of lines strictly smaller than $L$, so by the induction hypothesis $\rho(\cG) = \underset{i}{\sum} \rho( \cG_i ) \leq 0$. Moreover, $\rho( \cG ) = 0$ if and only if $\rho ( \cG_i ) = 0$ for any $i$, in which case $\cG$ is a melopole since all the $\cG_i$'s are themselves melopoles. This being said, we assume from now on that $\cG$ is face-connected, and pick up a $k$-dipole line $\ell$ in $\cG$ ($1 \leq k \leq d$). 

Let us first suppose that $k \geq 2$. 
$\cG / \ell$ has $\widetilde{L} \leq L - 1$ lines in its rosettes, and $(F-R) (\cG / \ell ) = (F - R) (\cG) - (k - 2)$, which implies $\rho(\cG / \ell ) \geq \rho( \cG ) + (d - k)$ (with equality if and only if $\cG / \ell$ is itself a rosette). Moreover, $\cG / \ell$ is possibly disconnected and consists in $q$ vertex-connected components with $1 \leq q \leq d-k + 1$, yielding $q$ connected rosettes (after possible contractions of tree lines).
By the induction hypothesis, we therefore have $\rho( \cG ) \leq q - (d -k) \leq 1$, and $\rho(\cG) = 1$ if and only
if $\cG / \ell$ consists of $d - k + 1$ connected vacuum melopoles, in which case $\cG$ itself is a vacuum melopole. Similarly, $\rho(\cG) = 0$ if and only if $\cG / \ell$ consists of $d-k$ vacuum melopoles
and $1$ non-vacuum melopole, in which case $\cG$ is a non-vacuum melopole. 

If $k =1$, we either have $R (\cG / \ell) = R ( \cG ) - 1$ or $R (\cG / \ell) = R ( \cG )$, which respectively imply $\rho( \cG ) \leq \rho( \cG / \ell) - (d - 1)$ or $\rho( \cG ) \leq \rho( \cG / \ell) - (d - 2)$. 
The first situation is strictly analogous to the $k \geq 2$ case, therefore the same conclusions follow. In the second
situation, we resort to lemma \ref{d1}. Since $\cG$ has been assumed face-connected, and $L \geq 2$ implies $F(\cG) \geq 1$, the lemma is applicable: $\cG / \ell$ cannot have more vacuum connected components than $\cG$. In particular, if $\cG$ is non-vacuum, $\rho( \cG / \ell ) \leq 0$, 
therefore $\rho( \cG ) \leq - (d - 2) < 0$. Likewise, $\rho(\cG) \leq 0$ when $\cG$ is vacuum. 

We conclude that the two properties (i) and (ii) are true at rank $L$.
\end{itemize}
\end{proof}

\begin{corollary}
Let $\cH$ be a vertex-connected subgraph. If $\cH$ admits a melopole rosette (in particular, if $\cH$ is melonic), then all its rosettes are melopoles.
\end{corollary}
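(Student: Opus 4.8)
The plan is to reduce everything to the invariant $\rho$ and Proposition~\ref{rho}. The first observation I would make is that, since $\cH$ is vertex-connected, every spanning tree $\cT$ has exactly $V-1$ lines and contracts $\cH$ to a \emph{single} vertex; hence each rosette $\cH/\cT$ is a connected rosette in the sense of Proposition~\ref{rho}, whatever tree is chosen. This places all rosettes of $\cH$ on the same footing and lets me apply the proposition directly to each of them.

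The crucial step is to note that $\rho(\cH/\cT)$ does not depend on the choice of $\cT$. Indeed, the number of lines of any rosette of $\cH$ is $\widetilde{L}=L-V+1$, which is manifestly tree-independent, and the quantity $F-R$ is tree-independent by the lemma stating $[F-R](\cH)=[F-R](\cH/\cT)$. Since $\rho=F-R-(d-2)\widetilde{L}$, the value $\rho(\cH/\cT)$ is a function of $\cH$ alone; call it $\rho_\cH$.

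Now I would split into the two cases of Proposition~\ref{rho}. Suppose $\cH$ admits one melopole rosette $\cH/\cT_0$. If $\cH$ is a vacuum subgraph, then $\cH/\cT_0$ is a vacuum connected rosette which is a melopole, so $\rho_\cH=\rho(\cH/\cT_0)=1$ by the proposition; by tree-independence $\rho(\cH/\cT)=1$ for every spanning tree $\cT$, and the converse implication of the same proposition then forces every such rosette to be a melopole. If $\cH$ is non-vacuum the identical argument applies with the extremal value $0$ in place of $1$. In either case all rosettes of $\cH$ are melopoles, which is the claim. Finally, the parenthetical ``in particular'' is immediate: by definition a melonic subgraph is face-connected (hence vertex-connected) and possesses at least one maximal tree $\cT$ with $\cH/\cT$ a melopole, so it satisfies the hypothesis.

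I do not expect a genuine obstacle here, as the statement is a direct corollary of Proposition~\ref{rho} once the tree-independence of $\rho$ is in place; the only points requiring care are checking that $\cH/\cT$ is always a single-vertex (connected) rosette, so that Proposition~\ref{rho} is literally applicable, and handling the vacuum versus non-vacuum dichotomy uniformly through the single invariant $\rho_\cH$.
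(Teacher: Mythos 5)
Your proof is correct and follows essentially the same route as the paper: the paper's own (one-line) argument also rests on the tree-independence of $\rho(\cH/\cT)$ — established just before Proposition \ref{rho} from $\widetilde{L}=L-V+1$ and the invariance of $F-R$ under tree contraction — combined with the extremality characterization of melopoles in that proposition. Your version merely spells out the vacuum/non-vacuum dichotomy and the connectedness of the rosettes, which the paper leaves implicit.
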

\begin{proof}
The quantity $\rho(\cH / \cT)$ is independent of the particular spanning tree $\cT$ one is considering. Therefore, if $\cH / \cT$ is a melopole then this holds for any other spanning tree $\cT'$. 
\end{proof}

\subsection{Just-renormalizable models}

We are now in good position to establish a list of potentially just-renormalizable theories. Indeed, by simply rewriting $L$ and $V$ as
\beq
L = \sum_{k = 1}^{v_{max} / 2} k \, n_{2k} - \frac{N}{2} \;, \qquad V = \sum_{k = 1}^{v_{max} / 2} n_{2 k}\,,
\eeq
one obtains the following bound on the degree of non-vacuum face-connected subgraphs:

\beq
\omega \leq D \left( d - 2 \right) - \frac{ D(d - 2) - 2}{2} N 
+ \sum_{k = 1}^{v_{max} / 2} \left[ \left( D(d-2) - 2 \right) k - D \left( d - 2 \right) \right] n_{2 k}\,.
\eeq
Since we also know this inequality to be saturated (by melonic graphs), it yields a necessary condition for just-renormalizable theories:
\beq\label{condition}
v_{max} = \frac{2 D (d-2)}{D(d-2) - 2}\,,
\eeq
and in such cases
\beq\label{vertex_true}
\omega = \frac{D (d - 2) - 2}{2}\left( v_{max} - N \right) - \sum_{k = 1}^{v_{max}/2 - 1} \left[ D \left( d - 2 \right) - \left( D(d-2) - 2 \right) k\right] n_{2 k} + D \rho \,.
\eeq
We immediately deduce that only $n$-point functions with $n \leq v_{max}$ can diverge, which is a necessary condition for renormalization. Equation (\ref{condition}) has exactly five non-trivial solutions (i.e. $v_{max} > 2$), which yields five classes of potentially
just-renormalizable interacting theories. Two of them are $\vphi^6$ models, the three others being of the $\vphi^4$ type. A particularly interesting model from a quantum gravity perspective is the $\vphi^6$ theory with $d=3$ and $D=3$,
which can incorporate the essential structures of 3d quantum gravity (model A)
. We will focus on this case in Chapter \ref{chap:su2}, but we already notice that the same methods could as well be applied to any of the four other types of
candidate theories. Table \ref{theories} summarizes the essential properties of these would-be just-renormalizable theories, called of type A up to E.

\begin{table}[h]
\centering
\begin{tabular}{| c || c | c | c | c |}
    \hline
  Type & $d$ & $D$ & $v_{max}$ & $\omega$  \\ \hline\hline
A & 3 & 3 & 6 & $3 - N/ 2 - 2 n_2 - n_4 + 3 \rho$ \\ \hline
B & 3 & 4 & 4 & $4 - N - 2 n_2 + 4 \rho$ \\ \hline
C & 4 & 2 & 4 & $4 - N - 2 n_2 + 2 \rho$\\ \hline
D & 5 & 1 & 6 & $3 - N/ 2 - 2 n_2 - n_4 + \rho$ \\ \hline
E & 6 & 1 & 4 & $4 - N - 2 n_2 + \rho$\\ 
    \hline
  \end{tabular}
\caption{Classification of potentially just-renormalizable models.}
\label{theories}
\end{table}

Models D and E have actually been studied and shown renormalizable in \cite{fabien_dine}, in the Abelian case, and even asymptotically free in the UV \cite{dineaf}.
Non-vacuum divergences of models A and B will only have melonic contributions, while models C, D and E can also include non-melonic terms. From our analysis, there could be: up to $\rho = -1$ divergent $2$-point graphs in model C; up to
$\rho = -2$ divergent $2$-point graphs and $\rho = -1$ divergent $4$-point graphs in model D; up to $\rho = -2$ divergent $2$-point graphs in model C. These require a (presumably simple) refinement of proposition \ref{rho}.
As for models of type A and B, we would not need any further understanding of $\rho$ to explore them further.

\

Finally, one also remarks that face-connectedness did not play any role in the derivation of expression (\ref{vertex_true}). Indeed, it is as well valid for vertex-connected unions of non-trivial face-connected subgraphs, which as we will see in Chapter \ref{chap:su2}, is also relevant to renormalizability.

\subsection{Properties of melonic subgraphs}

Since they will play a central role in the remainder of this thesis, we conclude this section by a set of properties verified by melonic subgraphs, especially non-vacuum ones.

\

The first thing one can notice is that by mere definition, any line in a melonic subgraph $\cH$ is part of an internal face in $F( \cH )$. This means in particular that $\cH$ cannot be split in two vertex-connected parts connected by a single $1$-dipole line $e$, since the three faces running through $e$ would then necessarily be external to $\cH$. In other words:
\begin{lemma}\label{1PI}
Any melonic subgraph $\cH \subset \cG$ is $1$-particle irreducible.
\end{lemma}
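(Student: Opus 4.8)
The plan is to prove that any melonic subgraph $\cH \subset \cG$ is $1$-particle irreducible (1PI), meaning it cannot be disconnected into two vertex-connected pieces by cutting a single internal line. The key structural fact I would exploit is the defining property of melonic subgraphs stated just before the lemma: by the footnote accompanying the definition, a melonic subgraph is built precisely so that \emph{at least one internal face of $\cG$ runs through every line} of $\cH$. This is the heart of the matter, and the whole argument reduces to showing that this property is incompatible with $1$-particle reducibility.

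\textbf{Approach.} I would argue by contradiction. Suppose $\cH$ were $1$-particle reducible, so that there exists an internal line $e \in L(\cH)$ whose removal splits $\cH$ into two vertex-connected components $\cH_1$ and $\cH_2$ with no other line joining them. Such a line $e$ is, in the colored-graph language, a $1$-dipole (it is the unique line linking its two endpoints across the cut). The crucial observation is then to track the faces that pass through $e$. In the colored extension $\cH_c$, the faces running through $e$ are alternating circuits of color $0$ and some color $\ell$; since $e$ is the \emph{only} line bridging $\cH_1$ and $\cH_2$, any face passing through $e$ must immediately return through $e$ as well to close up within a single component, or else it cannot close at all inside $\cH$. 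I would show that a closed internal face cannot traverse the bridge line $e$ exactly once and still close within $\cH$, because closing the circuit would require a second crossing of the cut, which only $e$ provides, and a $d$-colored graph has no tadfaces (so a face cannot run twice through the same line). Hence every one of the $d$ faces incident on $e$ must be \emph{external} to $\cH$, not internal.

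\textbf{Deriving the contradiction.} This directly contradicts the defining melonic property: the line $e$ would then belong to no internal face of $\cH$, i.e. $e \notin \bigcup_{f \in F(\cH)} f$. But the definition of a melonic subgraph (via the footnote) guarantees that at least one internal face runs through every line of $\cH$. Therefore no such bridge line $e$ can exist, and $\cH$ is $1$-particle irreducible. To make the face-tracking step fully rigorous, I would phrase it combinatorially using the incidence matrix $\epsilon_{ef}$: for any closed internal face $f$ and the bridge line $e$, the requirement that $f$ be a closed circuit entirely within $L(\cH)$, combined with $e$ being the unique line between the two vertex-connected blocks, forces $f$ to stay on one side of the cut, whence $\epsilon_{ef} = 0$ for all $f \in F(\cH)$.

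\textbf{Main obstacle.} The subtle point — and the one I expect to require the most care — is the precise combinatorial argument that a closed face cannot cross the single bridge line exactly once. One must carefully use both the bipartite/colored structure (a face alternates between color $0$ lines and a fixed color $\ell$, so it is a well-defined circuit) and the absence of tadfaces (guaranteeing $\epsilon_{ef} \in \{0, \pm 1\}$ with no line counted twice). I would want to verify that the definition of melonic subgraph via contraction to a melopole does indeed enforce the "every line lies in an internal face" property in full generality, rather than only on the contracted rosette; the footnote asserts this was the design intent, so I would lean on it, but a clean proof should trace how internal faces of $\cH$ correspond to faces surviving in the melopole $\cH / \cT$ and argue that the melopole structure (successive $d$-dipole contractions) leaves no line uncovered. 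Once that property is secured, the 1PI conclusion is immediate from the contradiction above.
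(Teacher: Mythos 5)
Your proposal is correct and follows essentially the same route as the paper: you invoke the defining property (via the footnote) that every line of a melonic subgraph lies in at least one internal face, and then observe that a bridge line of a $1$-particle reducible subgraph could only carry external faces, since any face through it would have to recross the cut through the same line, which the absence of tadfaces forbids. The paper's proof is just a terser statement of exactly this contradiction, so your additional face-tracking detail is a faithful (and slightly more explicit) rendering of the intended argument.
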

From the point of view of renormalization theory, this is already interesting, as $2$-point divergences in particular will not require any further decomposition into $1$-particle irreducible components.

We now turn to specific properties of non-vacuum melonic subgraphs. In order to understand further their possible structures, it is natural to first focus on their rosettes. The following proposition shows that they cannot be arbitrary melopoles.

\begin{proposition}\label{face_rosette}
Let $\cH \subset \cG$ be a non-vacuum melonic subgraph. For any spanning tree $\cT$ in $\cH$, the rosette $\cH / \cT$ is face-connected. 
\end{proposition}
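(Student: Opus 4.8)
The plan is to reduce the statement to ruling out a splitting of the rosette into several pieces, and then to handle that by induction along the melopole contraction order. First I would record the two structural inputs. By the corollary to Proposition~\ref{rho}, since $\cH$ is melonic it admits a melopole rosette, hence \emph{every} rosette $\cH / \cT$ is a melopole; and since $\cH$ is non-vacuum so is $\cH / \cT$, so Proposition~\ref{rho}(ii) gives $\rho(\cH / \cT) = 0$. Second, I would observe that contracting a tree only shortens faces (as in the lemma preceding Proposition~\ref{rho}), that it preserves the internal/external status of faces, and that no internal face of $\cH$ can be supported on tree lines alone, since a closed bicolored cycle living on a tree would have to repeat a line, i.e. be a tadface, which the colored structure forbids. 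Consequently the internal faces of $\cH / \cT$ are exactly those of $\cH$ (each retaining at least one rosette line), and the face-connected components of $\cH / \cT$ are precisely the classes of rosette lines under the relation ``sharing an internal face of $\cH$''.

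Next I would argue that if $\cH / \cT$ split into face-connected components $R_1, \dots, R_m$ with $m \geq 2$, then each $R_i$ is a non-vacuum melopole. Indeed a $d$-dipole contraction only deletes the length-one faces through the contracted line and shortens the single remaining face through it, so it acts inside one face-connected component; the melopole order of $\cH / \cT$ therefore restricts to a valid melopole order on each $R_i$, making every $R_i$ a melopole. Since $F$, $R$ and $\widetilde{L}$ are additive over face-connected components, so is $\rho$, and Proposition~\ref{rho} forces $\rho(R_i) \in \{0,1\}$, the value $1$ occurring exactly for vacuum components; as $\sum_i \rho(R_i) = \rho(\cH / \cT) = 0$ this excludes vacuum components, so each $R_i$ is non-vacuum. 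Thus the only way face-connectedness can fail is a splitting into several non-vacuum melopoles, a configuration that $\rho$ alone cannot forbid.

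To close the argument I would induct on the rosette size $\widetilde{L} = L - V + 1$, the case $\widetilde{L}=1$ being immediate. For $\widetilde{L} \geq 2$, I contract the innermost line $l_1$ of the melopole order, a $d$-dipole. A $d$-dipole contraction preserves face-connectedness (its bigon faces touch no other line, so every face-connection through $l_1$ passes through its unique non-bigon face, which survives the contraction), and by Proposition~\ref{disconnected}(i) it preserves vertex-connectedness; hence $\cH / l_1$ is again a non-vacuum melonic subgraph with rosette $(\cH/\cT)/l_1$, which by the induction hypothesis is face-connected. Then $\cH/\cT$ is face-connected precisely when the non-bigon face of $l_1$ is internal, for in that case it attaches $l_1$ to the face-connected remainder, whereas an external non-bigon face would leave $l_1$ as an isolated component. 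The hard part will be exactly this local claim: proving that the non-bigon face of the innermost $d$-dipole is closed in $\cH$, thereby excluding the residual splitting into several non-vacuum melopoles. This cannot follow from face-connectedness of $\cH$ alone, since that connectedness could be mediated entirely by tree lines; it must instead be extracted from the full recursive structure of melonic subgraphs — the defining property that an internal face of $\cG$ runs through every line (footnote to the definition of melonic subgraph), sharpened by an analysis of how the external faces of a non-vacuum melopole are forced to recombine across a would-be splitting. Securing this local statement is where I expect the real work to lie, after which the induction closes and $\cH/\cT$ is face-connected.
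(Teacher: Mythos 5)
Your preparatory reductions are essentially sound: every rosette of a melonic subgraph is a melopole with $\rho = 0$; the restriction of a Hepp order to a face-connected component is again a Hepp order (under face-disjointness, a $d$-dipole contraction in one component cannot create a colored line between the endpoints of a line in another component, since that would force a shared closed face); and additivity of $\rho$ together with Proposition~\ref{rho} then shows that a hypothetical splitting of $\cH / \cT$ could only be into non-vacuum melopoles. But at that point you stop exactly where the content of the proposition lies. Your induction hinges on the ``local claim'' that the non-bigon face of the innermost $d$-dipole is internal to $\cH$, and you explicitly defer its proof (``this is where I expect the real work to lie''). Nothing you have established can rule out a splitting into several non-vacuum melopoles: a single bubble can perfectly well carry two face-disjoint single-line $d$-dipole tadpoles, so no information about the rosette alone ($\rho$, the melopole property, dipole counting) suffices. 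What excludes it is the fact that the rosette arises from contracting a spanning tree of a \emph{face-connected} graph, i.e.\ one must analyze how tree lines mediate the face-connections between the would-be components. This is precisely what the paper's proof does: it picks a tree line $l$ through which two faces $f_1, f_2$ belonging to distinct components run, contracts $\cT \setminus \{ l \}$ to obtain a two-vertex graph joined by $l$ and $n \geq 2$ rosette lines, and then derives a contradiction by counting colored lines in the colored extension — a melopole would require at least $n(d-2)+1$ colored lines between the two groups of nodes while only $d$ are available, which fails for $d \geq 3$, the residual case $n=2$, $d=3$ being killed by a color argument. Without an argument of this kind, your proposal is a (useful) reduction, not a proof.

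A secondary problem is the scaffolding of the induction itself. The line $l_1$ is a $d$-dipole only in the rosette $\cH / \cT$, not in $\cH$: in $\cH$ its faces may be long and run through tree lines, so your justification (``its bigon faces touch no other line'') does not apply to the contraction $\cH / l_1$, which may well be a $1$-dipole contraction and hence need not preserve face-connectedness (cf.\ Proposition~\ref{disconnected}). Moreover, if $l_1$ joins two distinct vertices of $\cH$, then $\cT$ is no longer a spanning forest of $\cH / l_1$, so even exhibiting $\cH / l_1$ as a melonic subgraph — which your induction hypothesis requires — runs into essentially the same difficulty as the statement being proved.
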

\begin{proof}
Let $\cT$ be a spanning tree in $\cH$ and let us suppose that $\cH / \cT$ has $k \geq 2$ face-connected components. In order to find a contradiction, one first remarks that the contraction of a tree conserves the number of faces, and even elementary face connections. That is to say: if $l_1 , l_2 \in L( \cH ) \setminus \cT$ share a face in $\cH$, they also share a face in $\cH / \cT$. Therefore, the lines of $L ( \cH ) \setminus \cT$ can be split into $k$ subsets, such that each one of them does not share any face of $\cH$ with any other. We give a pictorial representation of what we mean on the left side of Figure \ref{propo6_1}, with $k = 4$. The internal structure of the vertices is ommited, the $4$ subsets of lines are marked with different symbols, while the tree lines are left unmarked.    
\begin{figure}[h]
\begin{center}
\includegraphics[scale=0.6]{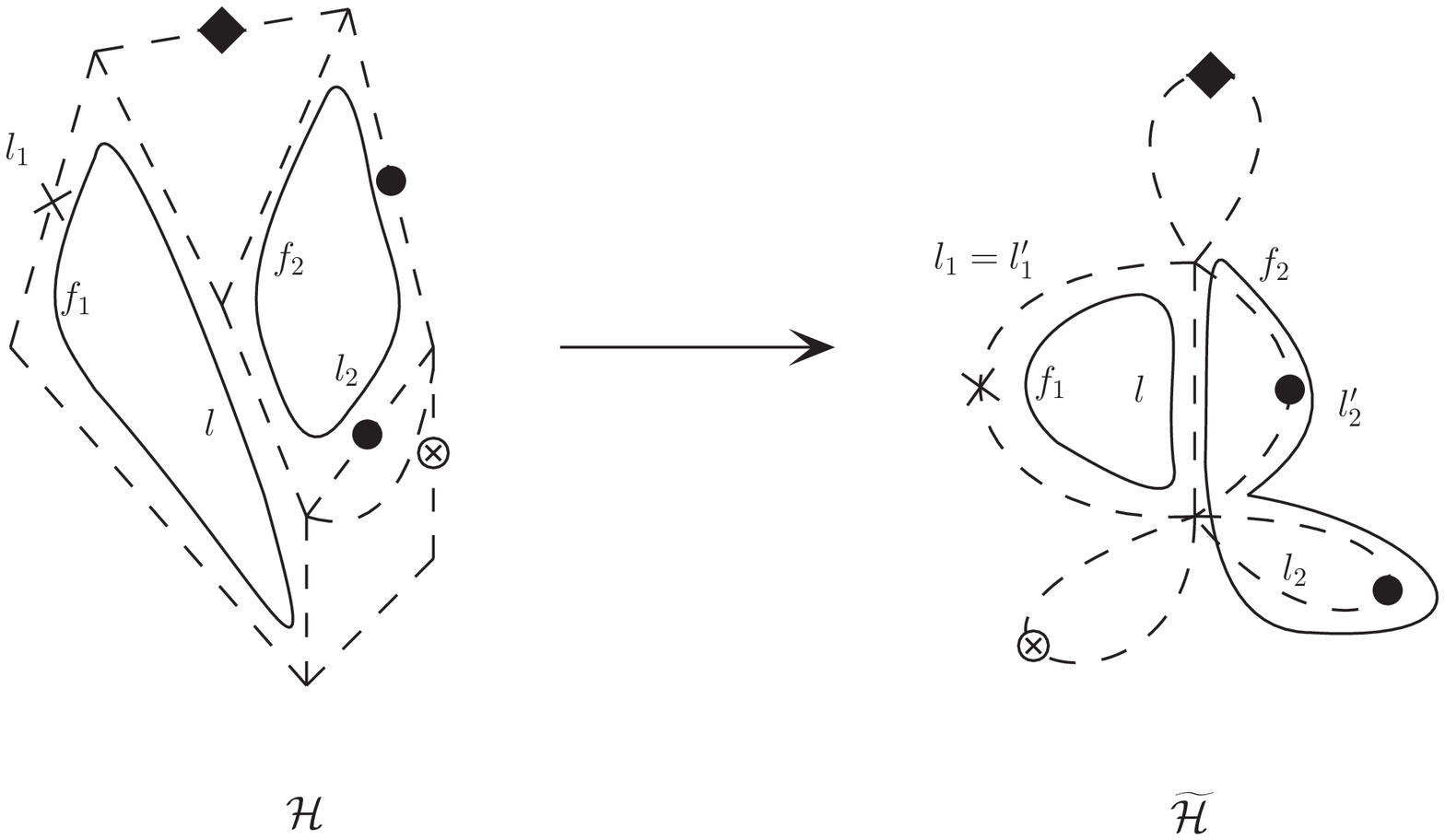}
\caption{Simplified representation of a melonic graph $\cH$ and its contraction $\widetilde{\cH}$.}
\label{propo6_1}
\end{center}
\end{figure}
The face-connectedness of $\cH$ is ensured by the tree lines, which must connect together these $k$ subsets. Incidentally, there must be at least one line $l \in \cT$ which is face-connected to two or more of these subsets. In particular\footnote{At this point we rely on $F(\cT) = \emptyset$, which holds because $\cT$ is a tree.}, we can find two faces $f_1$ and $f_2$ which are face-disconnected in $\cH / \cT$, and two lines $l_1 , l_2 \in \cH \setminus \cT$ such that: $l \in f_1 \cap f_2$, $l_1 \in f_1$ and $l_2 \in f_2$. See again the left side of Figure \ref{propo6_1}, where $f_1$ and $f_2$ are explictly represented as undashed closed loops.  

\begin{figure}[h]
\begin{center}
\includegraphics[scale=0.6]{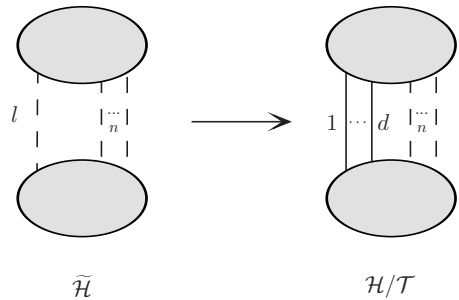}
\caption{Last step of a contraction of a spanning tree in a melonic subgraph.}
\label{melonic_last_step}
\end{center}
\end{figure}

Now, call $\widetilde{\cH}$ the subgraph obtained after contraction of all the tree lines but $l$, i.e. $\widetilde{\cH} \equiv \cH / (\cT \setminus \{ l \})$. $\widetilde{\cH}$ consists of two vertices, connected by $l$ and a certain number $n$ of lines from $L(\cH) \setminus \cT$ (see the right part of Figure \ref{propo6_1} and the left part of Figure \ref{melonic_last_step}). Through $l$ run at least two faces, $f_1$ and $f_2$. $l$ is their single connection, since they are disconnected in $\widetilde{\cH} / \{ l \} = \cH / \cT$. This requires the existence of two $1$-dipole lines $l_1'$ and $l_2'$ in $\widetilde{\cH} \setminus \{ l \}$, through which $f_1$ and $f_2$ respectively run. In Figure \ref{propo6_1} we see that $l_1 ' = l_1$, but because $l_2$ is a tadpole line in $\widetilde{\cH}$, we must choose $l_2 ' \neq l_2$. Otherwise, $f_1$ and $f_2$ could not close without being connected in $\cH  / \cT$. The colored extension of $\cH / \cT$ can thus be split into two groups of nodes, connected by $n \geq 2$ lines and $d$ colored lines (created by the contraction of $l$, see the right part of Figure \ref{melonic_last_step}). It is easy to understand that such a drawing cannot correspond to a melopole. Indeed, the number of colored lines connecting the two groups of nodes would need to be at least $n (d - 2) + 1$.\footnote{A simple way to understand this last point is the following. Suppose there are $p$ colored lines between the two groups of nodes. If none of the $n$ lines between the two groups of nodes are elementary melons, an elementary melon can be contracted in one of them, without affecting the $n$ lines nor the $p$ colored lines between them. If on the contrary one of the $n$ lines is an elementary melon, it can be contracted. This cancels $d-1$ colored lines connecting the two groups of nodes, and replaces it by a single one. Hence $n \to n-1$ and $p \to p - (d - 2)$. By induction, one must therefore have $p - n (d - 2) \geq 1$, where the $1$ on the right side is due to the last step $n = 1 \to n = 0$.} Hence $d \geq n (d - 2) + 1$, from which we deduce: 
\beq
d \leq \frac{2n - 1}{n -1}\,.
\eeq
When $n \geq 3$, this is incompatible with $d \geq 3$, and $n=2$ is also incompatible with $d \geq 4$. If $n = 2$ and $d = 3$, a contradiction also arises, thanks to the colors. In the process of elementary melon contractions, the first of the two lines to become elementary will delete $2$ colored lines, say with colors $1$ and $2$, and replace it by a color-$3$ line. One therefore obtains two groups of nodes connected by two color-$3$ lines and a single color-$0$ line, which cannot form an elementary melon. See Figure \ref{propo6_2}.
\begin{figure}[h]
\begin{center}
\includegraphics[scale=0.6]{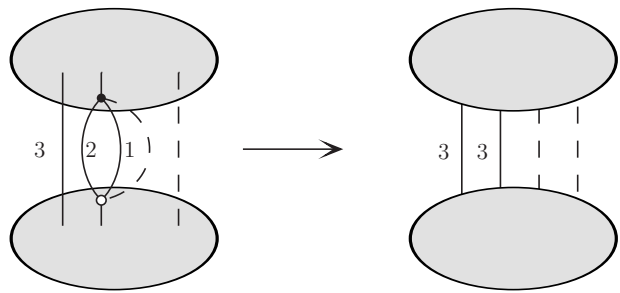}
\caption{Contraction of an elementary melon in a $3$-colored rosette with $n = 2$.}
\label{propo6_2}
\end{center}
\end{figure} 
\end{proof}

One immediately notices that this proposition also holds for any forest $\cF \subset \cH$, that is any set of lines without loops, be it a spanning tree or not. Indeed, any such $\cF$ is included in a spanning tree $\cT$. The contraction of $\cF$ on the one hand can only increase the number of face-connected components, and on the other hand the full contraction of $\cT$ leads to a single face-connected components, hence the contraction of $\cF$ also leads to a single face-connected component. 

\
We provide an illustration of this result in Figure \ref{rosette_propo6}, representing a melonic graph and one of its rosettes, which is face-connected.
\begin{figure}[h]
\begin{center}
\includegraphics[scale=0.6]{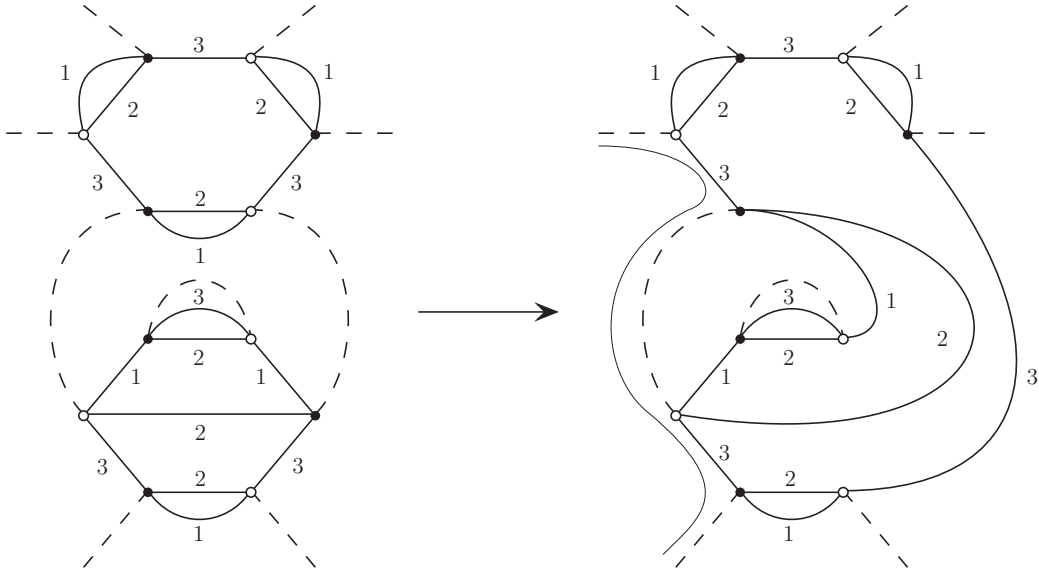}
\caption{A melonic graph (left) and one of its rosettes (right). The latter is face-connected (Proposition \ref{face_rosette}) and has a single external face (Corollary \ref{coro2}), represented as a thin line.}
\label{rosette_propo6}
\end{center}
\end{figure} 
\

A more important consequence of this statement is a restriction on the number of external faces of the rosettes:
\begin{corollary}\label{coro2}
Let $\cH \subset \cG$ be a non-vacuum melonic subgraph. For any spanning tree $\cT$ in $\cH$, $F_{ext}( \cH / \cT ) = 1$.
\end{corollary}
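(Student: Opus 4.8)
The plan is to reduce the statement to a purely combinatorial fact about melopoles and then prove it by induction on the number of lines. First I would observe that since $\cH$ is melonic it admits, by definition, at least one spanning tree whose associated rosette is a melopole; by the corollary to Proposition \ref{rho} this forces \emph{every} rosette $\cH/\cT$ to be a melopole, and since $\cH$ is non-vacuum so is $\cH/\cT$. Moreover a melopole is by definition a single-vertex, hence vertex-connected, object, and Proposition \ref{face_rosette} tells us it is even face-connected. Thus it suffices to establish the following: \emph{any non-vacuum melopole has exactly one external face.}

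I would prove this by induction on the number $k$ of lines of the melopole. For the base case $k=1$, the single line is, by the melopole definition, a $d$-dipole; hence exactly $d-1$ of the $d$ strands running through it close into internal faces of length $1$, while the remaining strand enters and leaves through external legs and therefore constitutes a single external face, giving $F_{ext}=1$. For the inductive step, let $l_1$ be the first line in a Hepp ordering witnessing the melopole structure, so that $l_1$ is a $d$-dipole of $\cH/\cT$. I would then contract $l_1$: the contraction deletes $l_1$ together with its $d-1$ length-$1$ internal faces, leaves the single-vertex structure and vertex-connectedness intact (since any $d$-dipole contraction is tracial), and produces a non-vacuum melopole $(\cH/\cT)/l_1$ with $k-1$ lines and Hepp ordering $l_2,\dots,l_k$.

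The key point is to track the remaining strand through $l_1$. Because $l_1$ is a $d$-dipole, it belongs to \emph{exactly} $d-1$ internal faces of length $1$ (this is the very definition of a $k$-dipole), so exactly one of the $d$ strands through $l_1$ lies in a face which is not a length-$1$ internal face. This face therefore has length $\ge 2$ if it is internal, or is open if it is external. In either case a dipole contraction merely shortens it, never deletes it, and never changes its open/closed status. Consequently the external faces of $\cH/\cT$ and of $(\cH/\cT)/l_1$ are in canonical bijection, so $F_{ext}(\cH/\cT)=F_{ext}\big((\cH/\cT)/l_1\big)=1$ by the induction hypothesis, which closes the argument.

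The main obstacle I anticipate is the careful bookkeeping of faces under the $d$-dipole contraction: one must argue precisely that exactly the $d-1$ length-$1$ faces disappear while the single surviving strand retains its status, and in particular that a non-vacuum melopole can never have its last strand close up (which would spuriously give $F_{ext}=0$). This is exactly where the $d$-dipole hypothesis is indispensable and where the bijection between external faces before and after contraction has to be made rigorous; once that is settled, the remainder is a routine induction.
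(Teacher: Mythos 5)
Your overall strategy is the paper's: reduce to a statement about melopoles via the corollary to Proposition \ref{rho} and Proposition \ref{face_rosette}, then induct on the number of lines by contracting a $d$-dipole. But there is a genuine gap, and it sits exactly where you dropped a hypothesis: after correctly noting that $\cH/\cT$ is face-connected, you state the induction lemma as ``any \emph{non-vacuum melopole} has exactly one external face'' --- with face-connectedness discarded --- and that statement is false. A melopole need not be face-connected (the paper's own graph $G_{6,1}^{\ell}$ of Figure \ref{g61l} has two face-connected components), and one can arrange the two components to carry \emph{separate} external faces. Concretely, for $d=3$ take the connected order-$8$ bubble on white nodes $n,p,q,m$ and black nodes $\bar n,\bar p,\bar q,\bar m$, with color-$1$ and color-$2$ lines $n\bar n$, $m\bar m$, $p\bar q$, $q\bar p$, and color-$3$ lines $n\bar p$, $p\bar n$, $m\bar q$, $q\bar m$; attach propagator lines $l_1=(n,\bar n)$ and $l_2=(m,\bar m)$, leaving external legs at $p,\bar p,q,\bar q$. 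Both lines are $3$-dipoles, the ordering $(l_1,l_2)$ exhibits $\{l_1,l_2\}$ as a non-vacuum melopole, yet it has \emph{two} external faces of color $3$, one supported on $l_1$ alone and one on $l_2$ alone.

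The same example shows precisely where your inductive step fails: your claimed ``canonical bijection'' of external faces under contraction of $l_1$ breaks down when the surviving strand of $l_1$ is, by itself, an entire external face of the melopole. After contraction that face traverses no line of the subgraph and, by the paper's convention (a piece consisting of two external legs and a single colored line is not an external face), it is no longer counted, so $F_{ext}$ drops by one --- your assertion that contraction ``never deletes it'' is exactly what is wrong. Nothing in your argument excludes this case, and it cannot be excluded without face-connectedness. The repair is the paper's move: keep face-connectedness in the induction hypothesis. Then, for a melopole with at least two lines, the surviving face of $l_1$ must be \emph{internal} (otherwise $\{l_1\}$ would be a face-connected component on its own), hence no external face of the melopole meets $l_1$ at all and the count is trivially preserved; one must also note that the contracted graph is still a face-connected melopole, so the induction hypothesis applies. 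Your closing paragraph anticipates a different danger (the last strand ``closing up'' and giving $F_{ext}=0$), but that is not the failure mode; the failure mode is the external face supported on $l_1$ alone.
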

\begin{proof}
Let us prove that any face-connected melopole $\widetilde{H}$ has a single external face. $\cH / \cT$ being itself face-connected thanks to the previous proposition, the result will immediately follow. We proceed by induction on $L(\widetilde{\cH})$. The elementary melon has $d-1$ internal faces and $1$ external face, hence the property holds when $L(\widetilde{\cH}) = 1$. If $L(\widetilde{\cH}) \geq 2$, we can contract an elementary $d$-dipole line $l$ in $\widetilde{\cH}$. The subgraph $\{ l \}$ has $1$ external face, but it is internal in $\widetilde{\cH}$, otherwise the latter would not be face-connected. Hence $\widetilde{\cH}$ and $\widetilde{\cH} / \{ l \}$ have the same number of external faces. By the induction hypothesis, $\widetilde{\cH} / \{ l \}$ (which is a face-connected melopole) has a single external face, and so do $\widetilde{\cH}$.
\end{proof}

We illustrate again this result in Figure \ref{rosette_propo6}. Such restrictions on the rosettes constrain the face structure of the initial melonic graphs themselves.

\begin{proposition}\label{propo_color}
Let $\cH \subset \cG$ be a non-vacuum melonic subgraph. All the external faces of $\cH$ have the same color.
\end{proposition}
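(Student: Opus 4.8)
The plan is to reduce the statement about external faces of a melonic subgraph $\cH$ to the corresponding statement about its rosette, using the machinery already established. By Corollary \ref{coro2}, for any spanning tree $\cT$ in $\cH$, the rosette $\cH / \cT$ has exactly one external face, i.e. $F_{ext}(\cH / \cT) = 1$. The key observation is that contracting a tree line does not merge or create external faces; it only shortens them, preserving their color and their count as connected open strands. Therefore an external face of $\cH$ maps to an external face of $\cH / \cT$ (possibly after shortening), and this map respects colors since contracting a color-$0$ line never changes the color label of the strand of color $\ell$ passing through it.

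First I would make precise the relationship between the external faces of $\cH$ and those of $\cH / \cT$. Each external face of $\cH$ is a maximal open connected strand of a fixed color $\ell$ running through the lines of $\cH$; under contraction of the tree $\cT$, such a strand can only get shorter (as noted in the proof of the Lemma preceding the tensorial rosette discussion, where it was shown that existing faces cannot disappear under tree contraction, this being true for open faces as well). Crucially, distinct external faces of $\cH$ cannot be identified into a single external face of $\cH / \cT$, nor can one external face split into several: contraction of a tree line identifies its two endpoints but follows the colored strand structure canonically, so the number of external faces and their colors are preserved exactly. Hence the external faces of $\cH$ and of $\cH / \cT$ are in color-preserving bijection.

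Combining these two facts gives the result immediately. Since by Corollary \ref{coro2} the rosette $\cH / \cT$ has a single external face, it carries a single color, and by the color-preserving bijection just described, $\cH$ itself can have external faces of only that one color. I would therefore structure the proof as: pick any spanning tree $\cT$ of $\cH$; invoke Corollary \ref{coro2} to get $F_{ext}(\cH/\cT)=1$; establish the color-preserving bijection between $F_{ext}(\cH)$ and $F_{ext}(\cH/\cT)$ via the observation that tree contraction preserves external faces and their colors; conclude that all external faces of $\cH$ share the single color of the unique external face of $\cH/\cT$.

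The main obstacle I anticipate is the careful verification that tree contraction induces a genuine color-preserving bijection on external faces, rather than merely a surjection or an injection. In particular one must rule out that two external strands of $\cH$, disconnected as open faces in $\cH$, become connected into one external face after contraction — this is precisely the kind of subtlety that the face-connectedness results were designed to control, and I would lean on the fact that contraction acts strand-by-strand along colors (formalized in the global characterization of contraction, Proposition stating that external faces of $\cH_c$ are replaced by single lines of the appropriate color) to guarantee that the color label and the open/closed status of each strand are intrinsic and unchanged. Once this bookkeeping is nailed down, the argument is essentially immediate from Corollary \ref{coro2}.
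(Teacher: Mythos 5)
Your reduction to the rosette rests on a claim that is false, and in fact is explicitly contradicted by the paper itself: there is \emph{no} color-preserving bijection between the external faces of $\cH$ and those of $\cH / \cT$. What is true is that contraction of tree lines can neither merge nor split external faces (the portions of a face of $\cG$ lying outside $\cH$ separate them and are untouched); but an external face of $\cH$ whose support lies \emph{entirely inside the spanning tree} $\cT$ simply vanishes from the rosette, since after contraction it passes through no line of $\cH / \cT$ at all and hence is no longer an external face of it. So the map from $F_{ext}(\cH)$ to $F_{ext}(\cH/\cT)$ is only defined on the faces meeting $\cH \setminus \cT$, and Corollary \ref{coro2} tells you merely that \emph{exactly one} external face of $\cH$ has support outside $\cT$ --- it says nothing about the colors of the remaining ones. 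Note that if your bijection held, then combined with Corollary \ref{coro2} it would prove that every non-vacuum melonic subgraph has exactly one external face; this contradicts Corollary \ref{faces_legs} (melonic subgraphs can have up to $N/2$ external faces), the divergent $4$- and $6$-point examples of Figures \ref{div_4} and \ref{div_6} with $2$ and $3$ external faces, and the explicit example of Figure \ref{rosette_propo6}, where a melonic graph has two external faces while its rosette has one.

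The faces living entirely on the tree are precisely the hard case, and the paper's proof is built around it: one picks a line $l_1$ in an external face $f_1$ and a spanning tree $\cT$ avoiding $l_1$ (possible by Lemma \ref{1PI}), so that $f_1$ is the unique surviving face of $\cH/\cT$; any other external face $f_2$ then runs only through $\cT$, and one contracts $\cT \setminus \{l_2\}$ for some $l_2 \in f_2 \cap \cT$ to obtain a two-vertex graph $\widetilde{\cH}$. The melopole condition forces the two lines of $\widetilde{\cH}$ to share exactly $d-1$ internal faces, so both lie in external faces of the same color, and a counting of external faces identifies the second one as $f_1$, giving the color equality. Your proposal never engages with this mechanism, so the gap is not a matter of ``nailing down bookkeeping'': the bookkeeping, done correctly, destroys the argument as stated.
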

\begin{proof}
Suppose $F_{ext} ( \cH ) \geq 2$. Let us choose two distinct external faces $f_1$ and $f_2$, and show that they are of the same color. We furthermore select a line $l_1 \in f_1$, and a spanning tree $\cT$ in $\cH$ such that $l_1 \notin \cT$. This is possible thanks to lemma $\ref{1PI}$, and this guarantees that the unique external face of $\cH / \cT$ is $f_1$. This also means that in $\cH$, $f_2$ only runs through $\cT$, otherwise it would constitute a second face in $\cH / \cT$. We can in particular pick a line $l_2 \in f_2 \cap \cT$. See Figure \ref{propo7} for an example, in which we use the same simplified representation as before, except that the external faces we are interested in have open ends. 
\begin{figure}[h]
\begin{center}
\includegraphics[scale=0.6]{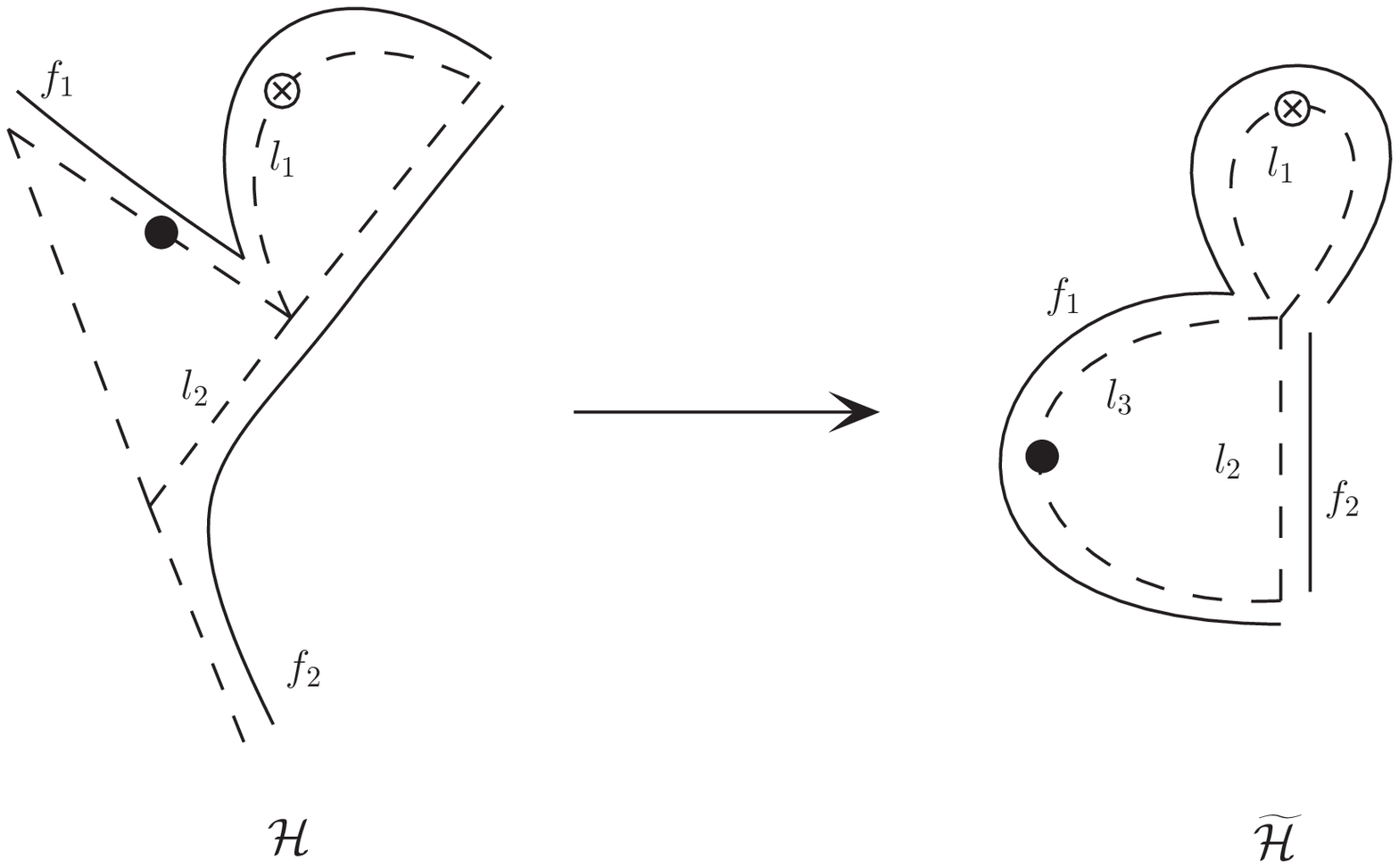}
\caption{A melonic graph $\cH$ and its contraction $\widetilde{\cH}$.}
\label{propo7}
\end{center}
\end{figure} 
 Similarly to the strategy followed in the proof of Proposition \ref{face_rosette}, define $\widetilde{\cH} \equiv \cH / ( \cT \setminus \{ l_2 \})$. As was already explained, $\widetilde{\cH}$ consists of two vertices, connected by $l_2$ and at most one extra line (see Figure \ref{melonic_last_step}, with $n = 1$). There cannot be just $l_2$ connecting these two vertices, because $\widetilde{\cH}$ is $1$-particle irreducible, hence there are exactly two such lines. Call $l_3$ the second of these lines (it is not necessarily possible to choose $l_3 = l_1$, see Figure \ref{propo7}). They must have at least $(d-1)$ internal faces in common, otherwise $\widetilde{\cH} / \{ l_2 \} = \cH / \cT$ would not be a melopole. They moreoever cannot have $d$ internal faces in common, otherwise $\cH / \cT$ would be vacuum. This means that both appear in external faces of the same color. One of them is of course $f_2$ (which goes through $l_2$), and the second (which goes through $l_3$) is either $f_1$, again $f_2$, or yet another external face. $f_2$ is excluded because by construction it had no support on $\cH \setminus \cT$. Moreover, $\widetilde{\cH}$ must have exactly two external faces, since only one is deleted when contracting $l_2$ and the resulting rosette $\cH / \cT$ has itself a single external face (by Corollary \ref{coro2}). Hence the external face running through $l_3$ can only be $f_1$, and we conclude that it has the same color than $f_2$.
\end{proof}
This property is quite useful in practice because it implies a restrictive bound on the number of external faces of a melonic subgraph in terms of its number of external legs.
\begin{corollary}\label{faces_legs}
A melonic subgraph with $N$ external legs has at most $\frac{N}{2}$ external faces.
\end{corollary}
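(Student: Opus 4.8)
The plan is to read the bound directly off Proposition \ref{propo_color} by a simple bookkeeping of strands at the external legs. First I would record the two structural facts I need. Each external leg of $\cH$ is a half-line of color $0$, and in a rank-$d$ theory every color-$0$ line carries exactly $d$ strands, one of each color $1,\dots,d$; hence an external leg supplies exactly one strand of each color, and in particular exactly one strand of the distinguished color $\ell$ provided by Proposition \ref{propo_color}. Thus the $N$ external legs carry altogether exactly $N$ strands of color $\ell$.

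Next I would characterize external faces as open paths. An external face of $\cH$ is a maximal open connected piece of a face of $\cG$ running through the color-$0$ lines of $\cH$; being an open (non-closed) circuit it has exactly two endpoints, namely the two places where the face leaves $\cH$. Each such endpoint is a strand sitting at an external leg of $\cH$ — these are precisely the boundary data $g_{s(f)}$ and $g_{t(f)}$ appearing in the amplitude (\ref{ampl_ab}). Since any strand belongs to a single face, two distinct external faces can never share an endpoint.

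With these two facts the counting closes at once. By Proposition \ref{propo_color} every external face of $\cH$ has color $\ell$, so each of its two endpoints is an $\ell$-colored strand located at some external leg; by the first step there are only $N$ such $\ell$-colored strands available (one per external leg); and by the second step the endpoints attached to distinct external faces are pairwise distinct. Therefore the total number of used endpoints, which is $2\,F_{ext}(\cH)$, cannot exceed $N$, giving the claimed inequality $F_{ext}(\cH) \leq N/2$.

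The main obstacle, and the only point requiring genuine care, is to make the identification ``endpoint of an external face $=$ a color-$\ell$ strand at an external leg'' completely rigorous and to exclude degenerate configurations. Two situations must be checked: an external face could a priori have both of its endpoints at the same external leg, which is ruled out because a leg carries a single $\ell$-strand; and one must confirm that the ``trivial'' open pieces consisting of a single colored edge joining two external legs — explicitly excluded from the notion of external face in the text — are indeed not counted, which can only strengthen the inequality rather than weaken it. Once this strand-to-leg identification is secured, the remainder is purely combinatorial, with Proposition \ref{propo_color} as the essential ingredient that collapses the naive bound $F_{ext}\le dN/2$ (summed over all colors) down to $F_{ext}\le N/2$.
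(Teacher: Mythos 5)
Your proof is correct and follows essentially the same route as the paper: the paper's own proof is the one-line observation that, by Proposition \ref{propo_color}, all external faces share a single color $\ell$, and in any graph the number of external faces of a fixed color is bounded by $N/2$ — which is exactly the strand-counting argument (one color-$\ell$ strand per external leg, two distinct endpoints per external face) that you spell out in detail. Your extra care about degenerate configurations simply makes explicit what the paper leaves implicit.
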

\begin{proof}
In any vertex-connected graph with $N$ external legs, the number of external faces of a given color is bounded by $\frac{N}{2}$.
\end{proof}
Figure \ref{rosette_propo6} provides a good example of a melonic graph having more external faces that its rosettes: while the rosette on the right side has a single external face (in agreement with Corollary \ref{coro2}), the graph on the left side hase two external faces, and they both have the same color $3$ (in agreement with Proposition \ref{propo_color}). 

\

Finally one would like to understand the inclusion and connectivity relations between all divergent subgraphs of a given non-vacuum graph. This is a very important point to address in view of the perturbative renormalization of such models, in which divergent subgraphs are inductively integrated out. As usual, the central notion in this respect is that of a "Zimmermann" forest, which we will generalize to our situation (where face-connectedness replaces vertex-connectedness) in the next chapters. At this stage, we just elaborate on some properties of melonic subgraphs which will later on help simplifying the analysis of "Zimmermann" forests of divergent subgraphs. 
%



\begin{proposition}\label{curiosity}
Let $\cG$ be a non-vacuum vertex-connected graph. If $\cH_1 , \cH_2 \subset \cG$ are two melonic subgraphs, then: 
\begin{enumerate}[(i)]
\item $\cH_1$ and $\cH_2$ are line-disjoint, or one is included into the other.
\item If $\cH_1 \cup \cH_2$ is melonic, then: $\cH_1 \subset \cH_2$ or $\cH_2 \subset \cH_1$.
\end{enumerate} 
Moreover, any $\cH_1 , \ldots , \cH_k  \subset \cG$ melonic are necessarily face-disjoint if their union $\cH_1 \cup \ldots \cup \cH_k$ is also melonic.
\end{proposition}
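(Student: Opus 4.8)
The plan is to reduce the $k$-fold statement to items (i) and (ii) by induction on $k$, so that the whole difficulty is concentrated in understanding how the internal faces of a melonic graph can run through its melonic constituents. First I would use item (i): any two of the $\cH_i$ are either line-disjoint or nested, so discarding every $\cH_i$ that is contained in another (an operation that changes neither $\cH_1 \cup \ldots \cup \cH_k$ nor its melonicity) I may assume the family is \emph{pairwise line-disjoint}. It then suffices to prove that a pairwise line-disjoint family of melonic subgraphs whose union is melonic is face-disjoint, i.e. that the $\cH_i$ are exactly the face-connected components of $\cH_1 \cup \ldots \cup \cH_k$; equivalently, that no internal face of the union runs through lines belonging to two distinct $\cH_i$.

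The core of the argument is the case $k=2$. Suppose, for contradiction, that an internal face $f$ of color $\ell$ of $\cH_1 \cup \cH_2$ passes through lines of both $\cH_1$ and $\cH_2$. Since the two subgraphs are line-disjoint, $f$ decomposes into arcs supported alternately on $\cH_1$ and $\cH_2$, and each such arc is a segment of an \emph{external} face of the corresponding subgraph. By Proposition \ref{propo_color} all external faces of $\cH_1$ (resp. $\cH_2$) carry a single color, which is therefore forced to equal $\ell$ for both. I would then contract spanning trees $\cT_1 \subset \cH_1$ and $\cT_2 \subset \cH_2$: by Proposition \ref{face_rosette} each reduced rosette $\cH_i / \cT_i$ is face-connected, and by Corollary \ref{coro2} it carries exactly one external face. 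Re-examining $f$ at the level of the contracted union $(\cH_1 \cup \cH_2)/(\cT_1 \cup \cT_2)$ then exhibits the two rosettes joined along at least two strands compatible with a single external colour, and I would derive a contradiction exactly as in the colored-line count used in the proof of Proposition \ref{face_rosette}, namely the inequality $p \geq n(d-2) + 1$ bounding the number $p$ of colored lines joining two groups of nodes connected by $n$ lines in a melopole. This shows the contracted union cannot be a melopole, contradicting melonicity of $\cH_1 \cup \cH_2$, and proves the $k=2$ case.

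For the inductive step I would peel off $\cH_k$, observe that $\cH_1 \cup \ldots \cup \cH_{k-1}$ is a face-connected melonic subgraph of the melonic total union (hence itself melonic after the same reduction), apply the induction hypothesis to conclude that $\cH_1 , \ldots , \cH_{k-1}$ are face-disjoint, and finally treat the pair consisting of this sub-union and $\cH_k$ by the $k=2$ analysis. The main obstacle I anticipate is that a single internal face of the full union may thread through three or more pieces at once, so that pairwise face-disjointness does not immediately yield face-disjointness of the whole family. I would overcome this by running the arc-decomposition argument globally rather than pairwise: a multi-piece face still splits into external-face segments of the individual melonic pieces, each monochromatic of a fixed colour by Proposition \ref{propo_color} and bounded in number through Corollary \ref{faces_legs} and Lemma \ref{1PI}, so the same melopole counting obstructs any cross-threading. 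Once this counting is phrased uniformly for an arbitrary number of pieces, the contradiction—and hence face-disjointness—follows, which is precisely the delicate bookkeeping step I expect to be the crux of the proof.
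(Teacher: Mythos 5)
Your proposal has a fundamental gap: nowhere do you use the hypothesis that $\cG$ is non-vacuum, yet the statement is false without it, so the purely combinatorial contradiction you aim for cannot exist. Concretely, take the vacuum melopole $S_{123}$ of Section \ref{sec:example} (Figure \ref{melop6}): $\cH_1 = S_{12} = \{l_1,l_2\}$ and $\cH_2 = S_3 = \{l_3\}$ are line-disjoint melonic subgraphs which are \emph{not} face-disjoint, and their union $S_{123}$ \emph{is} melonic. In that configuration your "contracted union" is $S_{123}$ itself, which is a melopole, so the colored-line count $p \geq n(d-2)+1$ from the proof of Proposition \ref{face_rosette} cannot yield the contradiction you claim: that inequality obstructs splitting a melopole into two groups of nodes joined by $n \geq 2$ lines of color $0$, and no such splitting is forced here, since line-disjoint pieces may share vertices (here they are tadpoles on a single common bubble). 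The paper's proof works by an entirely different mechanism: contract a spanning tree $\cT_1$ of \emph{one} piece only, chosen to avoid a line of an arbitrary external face $f_1$ of $\cH_1$; face-connectedness of $(\cH_1 \cup \cH_2)/\cT_1$ (Proposition \ref{face_rosette} extended to forests in case the union is melonic, or a direct decomposition argument in the overlapping case) together with Corollary \ref{coro2} forces $f_1$ to be \emph{internal} to the union. Since $f_1$ was arbitrary, and the same holds for $\cH_2$, the union has no external faces at all, hence is vacuum --- and \emph{this} is what contradicts the non-vacuum hypothesis on $\cG$. That hypothesis is the engine of the whole proof, and your argument never engages it.

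Two further problems. First, your contraction of $\cT_1 \cup \cT_2$ is not legitimate in general: when the two line-disjoint pieces share two or more bubble vertices (a situation your face-connectedness assumption allows, cf.\ Figure \ref{overlap}), the union of their spanning trees contains loops, so it is not a forest, and neither Proposition \ref{face_rosette}, nor Corollary \ref{coro2}, nor the preservation of melonicity under contraction applies to $(\cH_1 \cup \cH_2)/(\cT_1 \cup \cT_2)$; the paper avoids this precisely by never contracting trees of both pieces simultaneously. Second, your proposal takes items (i) and (ii) as given and only addresses the "Moreover" clause; but (i) and (ii) are part of the proposition to be proven, they carry essentially all of the work in the paper (the "Moreover" part is then a near-verbatim repetition of the argument for (ii)), and they require the same non-vacuum mechanism. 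As it stands, the proposal establishes none of the three assertions.
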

\begin{proof}
Let us first focus on (i) and (ii). To this effect, we assume that: (i) $\cH_1 \cap \cH_2 \neq \emptyset$ (and in particular $\cH_1$ and $\cH_2$ are face-connected in their union); (ii) $\cH_1$ and $\cH_2$ are face-connected in $\cH_1 \cup \cH_2$, and the latter is also melonic. We need to prove that in these two situations, $\cH_1 \subset \cH_2$ or $\cH_2 \subset \cH_1$. In order to achieve this, we suppose that both $\widetilde{\cH}_1 \equiv \cH_1 \setminus ( \cH_1 \cap \cH_2 )$ and $\widetilde{\cH}_2 \equiv \cH_2 \setminus ( \cH_1 \cap \cH_2 )$ are non-empty, and look for a contradiction. 

Let $f_1$ be an arbitrary external face of $\cH_1$. Choose a line $l_1 \in f_1$, and a spanning tree $\cT_1$ in $\cH_1$, such that $l_1 \notin \cT_1$. Then the unique face of $\cH_1 / \cT_1$ is $f_1$. We want to argue that $( \cH_1 \cup \cH_2 ) / \cT_1 = ( \cH_1 / \cT_1 ) \cup \widetilde{\cH}_2$ is face-connected. In situation (ii), this is guaranteed by Proposition \ref{face_rosette} (applied to $\cH_1 \cup \cH_2$). In situation (i) on the other hand, one can decompose it as a disjoint union of subgraphs as follows:
\beq
( \cH_1 \cup \cH_2 ) / \cT_1 = \widetilde{\cH}_1 / (\cT_1 \cap \widetilde{\cH}_1) \sqcup (\cH_1 \cap \cH_2) / (\cT_1 \cap \cH_2) \sqcup \widetilde{\cH}_2 \,. 
\eeq 
The key thing to remark is that through each line of $\cH_1 \cap \cH_2$ run at least $d-1$ faces from $F(\cH_1)$, and at least $d-1$ from $F(\cH_2)$. Since at most a total of $d$ faces run through each line (and $d \geq 3$), we conclude that each line of $\cH_1 \cap \cH_2$ appears in at least one face of $F(\cH_1) \cap F(\cH_2)$. Therefore $(\cH_1 \cap \cH_2) / (\cT_1 \cap \cH_2)$ has at least one face, and is in particular non-empty. We also know that $\widetilde{\cH}_1 / (\cT_1 \cap \widetilde{\cH}_1) \sqcup (\cH_1 \cap \cH_2) / (\cT_1 \cap \cH_2) = \cH_1 / \cT_1$ is face-connected, as well as $(\cH_1 \cap \cH_2) / (\cT_1 \cap \cH_2) \sqcup \widetilde{\cH}_2 = \cH_2 / ( \cT_1 \cap \cH_2 )$. Therefore $( \cH_1 \cup \cH_2 ) / \cT_1$ is itself face-connected.
Finally, since $\widetilde{\cH}_2 \neq \emptyset$, this is only possible if an external face of $\cH_1 / \cT_1$ is internal in $( \cH_1 \cup \cH_2 ) / \cT_1$. We conclude that $f_1$ is internal to $( \cH_1 \cup \cH_2 ) / \cT_1$, hence to $\cH_1 \cup \cH_2$.

We have just shown that all the external faces of $\cH_1$ are internal to $\cH_1 \cup \cH_2$. Likewise, all the external faces of $\cH_2$ are internal to $\cH_1 \cup \cH_2$. Therefore $F_{ext} ( \cH_1 \cup \cH_2 ) = \emptyset$, which implies that $\cH_1 \cup \cH_2 = \cG$ is vacuum, and contradicts our hypotheses.

\

We can proceed in a similar way than for (ii) to prove the last statement. Assume $\cH_1 , \ldots , \cH_k$ to be melonic, line-disjoint, and face-connected in their union. The connectedness of $\cH_1 \cup \dots \cup \cH_k$ and any of its reduction by a forest implies that all the external faces of $\cH_i$ are internal in $\cH_1 \cup \dots \cup \cH_k$, for any $1 \leq i \leq k$. Therefore the latter is vacuum, and this again contradicts the fact that $\cG$ is not.  
\end{proof} 

\

\noindent{\bf{Example.}} Figure \ref{overlap} represents two non-trivial melonic graphs $\cH_1$ and $\cH_2$ which are line-disjoint but face-connected in their union. Accordingly, their union is not melonic, as can be checked explicitly. 






\chapter{Super-renormalizable $\U(1)$ models in four dimensions}\label{chap:u1}

In this chapter, we illustrate the general TGFT formalism introduced previously, specializing to $d = 4$ and $G = \U(1)$. We will use angle coordinates $\theta_\ell \in \left[ 0 , 2 \pi \right[$ for the group elements $g_\ell = \e^{\rm{i} \theta_\ell}$, parameterizing the field $\vphi(\theta_1 , \dots , \theta_4)$. We do not make any additional hypothesis on the set of bubble interactions $\cB$ other than assuming it finite.

This specific set of models was introduced in \cite{u1} in order to explore how gauge invariance affects renormalizability in TGFT. It should be stressed that it is not expected to have physical relevance, but it allows to understand the general structures gauge invariance requires. The situation is simplified in two respects as compared to the more interesting model we will report on in the next chapter: the gauge group is Abelian, which eases the understanding of the divergences; and as we will see, this model is super-renormalizable, therefore the renormalization procedure itself is more straightforward. Finally, on the mathematical side, we will explain how the Wick ordering procedure can be generalized to the tensorial case, which is non-trivial and interesting in itself.

\section{Divergent subgraphs and Wick ordering}

Since $D = 1$, the divergence degree of a face-connected subgraph $\cH \subset \cG$ is given by
\bes\label{omega_u1su2}
\omega (\cH) &=& - 2 L(\cH) + F(\cH) - R(\cH) \\
&=& 2 \left( 1 - V (\cH) \right) + \rho( \cH / \cT )\,.
\ees
We see in particular that the $L$ contribution from formula (\ref{rholv}) vanishes. Since $\rho$ is bounded by $1$, it immediately follows that only tadpoles can be divergent, and therefore we are in presence of a super-renormalizable model (if renormalizable at all). If we were to focus on non-vacuum divergences only, we could already conclude that they all come from melopoles, thanks to proposition \ref{rho}. However, we will see that vacuum divergences are not all melonic, and therefore classifying them from such a perspective would require a refined understanding of $\rho$. Moreover, when this model was studied in \cite{u1}, the expression of the divergence degree in terms of $\rho$ was not available yet, and a different route was therefore explored, relying on a bound on $\omega$ rather than an identity. Since we will dispense ourselves from analyzing the vacuum divergences of the $\SU(2)$ model \cite{su2} in the next chapter, we feel that those of the present Abelian case provide an interesting example. For this reason we need in any case the bound on $\omega$ derived in \cite{u1}, and therefore decide to follow the original classification of divergences, which does not rely on the second line of equation (\ref{omega_u1su2}).

\subsection{A bound on the divergence degree}

We need to determine the set of divergent subgraphs, that is those $\cH$ such that $\omega (\cH) \geq 0$. In order to prove the model to be renormalizable, it will also be necessary to find a uniform decay of the amplitude associated to convergent graphs ($\omega < 0$), with respect to their external legs. In this respect, a suitable bound on $\omega$ in terms of simple combinatorial quantities is sufficient. We can for instance decompose the number of faces with respect to the number of lines they consist of. We call $F_k$ the number of internal faces with $k$ lines, and $F_{ext,k}$ the number of external faces with $k$ lines, so that:
\beq
F = \sum_{k \geq 1} F_k\,, \qquad F_{ext} = \sum_{k \geq 1} F_{ext, k}\,.
\eeq
We can also express the number of lines in terms of these quantities. Since $4$ different faces run through each line of $\cH$, we have:
\beq
4 L = \sum_{k \geq 1} k F_k + \sum_{k \geq 1} k F_{ext, k}\,,
\eeq
where in this formula both sums start with $k = 1$. We can therefore rewrite $\omega$ as
\beq
\omega = \sum_{k \geq 1} \left( 1 - \frac{k}{2} \right) F_k - \sum_{k \geq 1} \frac{k}{2} F_{ext, k} - R \,.
\eeq

We remark that the only positive contribution in this sum is given by $F_1$, to which only $p$-dipoles with $p \geq 2$ contribute. More precisely, 
\beq
F_1 = D_2 + 2 D_3 + 3 D_4 + 4 D_5\,,
\eeq
where $D_p$ is the number of $p$-dipole lines in $\cH$. We are thus lead to find a bound on $R$ in terms of these numbers of dipoles, which is the purpose of the following lemma.
\begin{lemma}
The rank of the incidence matrix associated to a face-connected graph $\cH$ verifies:
\beq
R \geq D_2 + D_3 + D_4 + D_5\,.
\eeq
\end{lemma}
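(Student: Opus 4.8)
The plan is to exploit the fact that a length-$1$ internal face contributes a particularly simple column to the incidence matrix $\epsilon_{ef}$, and that these columns alone already generate a subspace of dimension at least $D_2 + D_3 + D_4 + D_5$. Recall that a line $\ell$ is a $p$-dipole precisely when it belongs to exactly $p-1$ internal faces of length $1$; hence the lines carrying at least one length-$1$ face are exactly the dipole lines counted by $D_2, \ldots , D_5$, and their total number is $D_2 + D_3 + D_4 + D_5$. A face $f$ of length $1$ runs through a single line $\ell$ (the colored structure excludes tadfaces, so no face visits a line more than once), so the corresponding column of $\epsilon_{ef}$ has exactly one nonzero entry, equal to $\pm 1$, sitting in the row indexed by $\ell$. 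In other words, each length-$1$ face yields a column of the form $\pm \mathbf{e}_{\ell}$, a signed standard basis vector in line-space.

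First I would observe that all the length-$1$ faces running through a fixed dipole line $\ell$ produce columns proportional to the same vector $\mathbf{e}_{\ell}$, so together they span a one-dimensional subspace. Next, since distinct dipole lines index distinct coordinates of line-space, the family of all columns associated with length-$1$ faces spans a subspace whose dimension equals the number of lines carrying at least one such face, namely $D_2 + D_3 + D_4 + D_5$. As this subspace is contained in the column space of $\epsilon_{ef}$, whose dimension is $R$ by definition, the desired inequality
\[
R \;\geq\; D_2 + D_3 + D_4 + D_5
\]
follows immediately.

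There is essentially no serious obstacle here: the argument only uses that length-$1$ faces are standard basis columns and that the dipole lines are in bijection with distinct coordinates. The one point deserving care is the well-definedness of $\epsilon_{ef}$, that is the absence of tadfaces guaranteed by the colored structure, which ensures that a length-$1$ face genuinely contributes a single $\pm 1$ entry rather than an ambiguous one. This is the \emph{easy} half of the power-counting analysis, providing the lower bound on $R$ that will feed into the upper bound on $\omega$ obtained from the face decomposition, thereby restricting the divergent subgraphs.
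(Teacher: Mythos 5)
Your proof is correct and takes essentially the same approach as the paper: the paper's one-line argument is that each $p$-dipole with $p \geq 2$ carries a length-$1$ internal face whose incidence column is independent of all faces on other lines, which is exactly your observation that these columns are signed standard basis vectors $\pm \mathbf{e}_{\ell}$ indexed by distinct dipole lines. You have merely spelled out the linear algebra, including the (correct) remark that the absence of tadfaces makes the $\pm 1$ entries well-defined.
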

\begin{proof}
Each $p$-dipole with $p \geq 2$ contains at least one internal face, which is independent of all the faces appearing in other lines. 
\end{proof}

Plugging this inequality into the expression of $\omega$ yields the following bound:
\beq\label{bound_om}
\omega \leq D_5 + \frac{D_4}{2} - \frac{D_2}{2} - \sum_{k \geq 3} \left( \frac{k}{2} - 1 \right) F_k - \sum_{k \geq 1} \frac{k}{2} F_{ext, k}\,.
\eeq
Note also that $D_5$ is always $0$, unless $\cH$ is the unique vacuum graph with a single line (sometimes called \textit{supermelon}). So the only non-trivial positive contribution comes from the $4$-dipoles. This way we already guess that melopoles are responsible for most of the divergences, which we confirm below.

\subsection{Classification of divergences}
To control the contribution of $D_4$ in (\ref{bound_om}), we take a step back and analyze the (exact) effect on $\omega$ of a $4$-dipole contraction in a face-connected graph $\cH$. 

\begin{proposition}
Let $\cH$ be a face-connected subgraph, and $l$ a $4$-dipole line. Then
\beq
\omega(\cH) = \omega(\cH / l)\,.
\eeq
\end{proposition}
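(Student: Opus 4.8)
The plan is to track the three quantities $L$, $F$ and $R$ entering $\omega(\cH) = -2 L(\cH) + F(\cH) - R(\cH)$ separately under the contraction of the $4$-dipole $l$, and to check that their changes cancel. The effect on $L$ is immediate: contracting $l$ removes exactly one color-$0$ line, so $L(\cH/l) = L(\cH) - 1$. The effect on $F$ follows from the dipole structure: since $l$ is a $4$-dipole, its endpoints $n$ and $\overline{n}$ are joined in $\cG_c$ by $3$ additional colored lines, and each of these, together with $l$, closes an internal face of length $1$ running only through $l$. These $3$ faces disappear upon contraction, whereas the single remaining color (for which only one colored line reaches each of $n$ and $\overline{n}$, consistently with the $d - k + 1 = 1$ pair of legs being reconnected) carries a face through $l$ of length at least $2$, which survives but simply loses $l$. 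As every other face is disjoint from $l$, I expect $F(\cH/l) = F(\cH) - 3$.

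The crux is then to prove $R(\cH/l) = R(\cH) - 1$; granting this, one computes directly
\begin{equation}
\omega(\cH/l) = -2(L-1) + (F-3) - (R-1) = -2L + F - R = \omega(\cH)\,.
\end{equation}
For the rank I would work at the level of the incidence matrix $\epsilon_{ef}$. Each of the $3$ length-$1$ faces has a column supported on the single row $l$, hence equal to $\pm$ the unit vector on $l$, so these three columns are all proportional. Retaining one of them, I would perform column operations to clear the row-$l$ entry of every other column — in particular zeroing the two redundant short-face columns against the retained one, and removing the row-$l$ entry of the surviving long face. After this reduction the unit vector on $l$ is linearly independent of the span of the remaining columns, which all vanish on row $l$; and these remaining columns, read off the rows other than $l$, are precisely the columns of the incidence matrix of $\cH/l$ (the surviving face shortened by one line, the untouched faces unchanged, the $3$ short faces deleted). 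This yields $R(\cH) = 1 + R(\cH/l)$, which is exactly the $-1$ needed.

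The points requiring care — and where I expect the real content of the argument to sit — are: (a) checking that the $3$ short faces are genuinely closed (internal) faces, so that they count in $F$ and appear as columns of $\epsilon_{ef}$, which is guaranteed by the absence of tadfaces in colored graphs; (b) treating the case where the surviving long face is instead \emph{external}, in which case it is not a column of $\epsilon_{ef}$ at all and the rank argument simplifies to clearing the two redundant short-face columns against the retained one, while $F$ still drops by the same $3$ internal faces; and (c) making sure that the reorientations possibly induced by the reconnection in step (ii) of the contraction only flip signs of columns, leaving the rank unchanged. The linear-algebra step identifying the reduced incidence matrix with that of $\cH/l$ is the only substantive one; the remainder is bookkeeping.
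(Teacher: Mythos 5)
Your proof is correct and follows essentially the same route as the paper's: the paper likewise records $L \to L-1$, $F \to F-3$, $R \to R-1$ and adds the three contributions, citing for the rank drop an earlier remark that deleting the line removes a length-one face column "which is not a combination of the others." Your explicit column-operation argument (clearing row $l$ against the retained unit-vector column and identifying the reduced matrix with that of $\cH/l$) simply fills in the detail the paper leaves implicit.
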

\begin{proof}
We have immediately $L (\cH / l) = L(\cH) - 1$ and $F (\cH / l) = F(\cH) - 3$. As for the rank of the incidence matrix, it was already pointed out that: $R(\cH / l) = R(\cH) - 1$. Therefore:
\beq
\omega(\cH / l) = \omega(\cH) + 2 - 3 + 1 = \omega( \cH )\,.
\eeq
\end{proof}

This property can be used to recursively reduce the analysis to that of graphs with a few melonic lines. For such graphs, (\ref{bound_om}) is constraining enough, and we can obtain the following classification.
\begin{proposition}\label{fund_u1}
Let $\cH \subset \cG$ be a face-connected subgraph.
\begin{itemize}
\item If $\omega(\cH) = 1$, then $\cH$ is a vacuum melopole.
\item If $\omega(\cH) = 0$, then $\cH$ is either a non-vacuum melopole, or a \textit{submelonic vacuum graph} (see Figure \ref{submelonic}). 
\item Otherwise, $\omega(\cH) \leq -1$ and $\omega(\cH) \leq - \frac{N(\cH)}{4}$.
\end{itemize}
\end{proposition}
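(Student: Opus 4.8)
The plan is to exploit the two structural facts established just above: the invariance of the divergence degree under $4$-dipole contractions, $\omega(\cH) = \omega(\cH/l)$ for $l$ a $4$-dipole, and the single-slice bound \eqref{bound_om}. First I would reduce a given face-connected $\cH$ to a \emph{core} $\cH_{\mathrm{red}}$ free of $4$-dipoles by contracting $4$-dipole lines one at a time. Two preliminary checks are needed: that contracting a $4$-dipole preserves face-connectedness (the three length-$1$ faces that disappear involve only the contracted line, so the face-connections of the remaining lines, all carried by longer faces, survive), and that it leaves $N(\cH)$ unchanged (external legs are never touched). Iterating then produces a face-connected $\cH_{\mathrm{red}}$ with $D_4(\cH_{\mathrm{red}}) = 0$, $N(\cH_{\mathrm{red}}) = N(\cH)$ and $\omega(\cH) = \omega(\cH_{\mathrm{red}})$; recall moreover that $D_5 = 1$ precisely for the supermelon and vanishes otherwise.

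Next I would feed $D_4(\cH_{\mathrm{red}}) = 0$ into \eqref{bound_om}. Every term but $D_5$ is non-positive, so $\omega(\cH) = \omega(\cH_{\mathrm{red}}) \leq D_5(\cH_{\mathrm{red}}) \leq 1$, and the three regimes are read off from the equality conditions. If $\omega = 1$ then $D_5(\cH_{\mathrm{red}}) = 1$ with all remaining terms zero, forcing $\cH_{\mathrm{red}}$ to be the supermelon; the contraction sequence read backwards is a Hepp ordering, so $\cH$ is a vacuum melopole. If $\omega = 0$ then $D_5 = 0$ and all non-positive contributions vanish: $D_2(\cH_{\mathrm{red}}) = 0$, $F_k(\cH_{\mathrm{red}}) = 0$ for $k \geq 3$, and $F_{ext,k}(\cH_{\mathrm{red}}) = 0$ for every $k$, except in the degenerate terminal $L(\cH_{\mathrm{red}}) = 0$, where \eqref{bound_om} does not apply and $\omega = 0$ holds trivially. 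The $L = 0$ terminal is a bare vertex with external legs, so $\cH$ is a non-vacuum melopole; the remaining ($L \geq 1$) terminal is vacuum with internal faces of length at most $2$ and no $2$-dipole, i.e. the submelonic core of Figure \ref{submelonic}, so that $\cH$ lies in the submelonic vacuum class. In all other cases $\omega \leq -1$. As a consistency check one may note that rewriting $\omega = 2(1 - V(\cH)) + \rho(\cH / \cT)$ with $\rho \leq 1$ forces $V(\cH) = 1$ in the first two regimes, confirming that these divergent subgraphs are genuine single-vertex melopoles.

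It then remains to upgrade $\omega \leq -1$ to $\omega \leq -N(\cH)/4$ in the convergent case. Here $\cH_{\mathrm{red}}$ has $D_4 = D_5 = 0$ and $L \geq 1$, so \eqref{bound_om} retains the external-face decay, $\omega \leq -\sum_{k \geq 1} \frac{k}{2} F_{ext,k}(\cH_{\mathrm{red}})$. The final step is a count of the $d = 4$ strands attached to each of the $N$ color-$0$ external half-lines: organizing these strands into (non-trivial) external faces gives $\sum_{k \geq 1} \frac{k}{2} F_{ext,k} \geq N/4$, whence $\omega \leq -N/4$; combined with $\omega \leq -1$ this yields the third bullet.

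The main obstacles are twofold. The first is the combinatorial bookkeeping needed to lift statements about $\cH_{\mathrm{red}}$ back to $\cH$: one must check that the $d$-dipole reduction stays face-connected and, crucially, that reducing to the supermelon or to a bare vertex genuinely certifies the \emph{single-vertex} melopole structure rather than a multi-vertex melonic one, a point which ultimately reflects that $V(\cH) = 1$ is forced in the divergent regimes. The second is the strand count giving $-N/4$: the subtlety is that trivial external faces (single colored strands between two external legs) are not counted in $F_{ext}$, so the estimate must track how many external strands can evade non-trivial external faces; handling this carefully is where the concrete factor $1/4$ is pinned down.
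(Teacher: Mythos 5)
Your proof is correct and follows essentially the same route as the paper's: reduce by $4$-dipole contractions using the invariance $\omega(\cH)=\omega(\cH/l)$, feed $D_4=0$ into the bound (\ref{bound_om}), identify the terminal structures (supermelon, bare vertex, submelonic core of Figure \ref{2cases}), and extract the $-N/4$ decay from the counting $\sum_{k\geq 1}F_{ext,k}\geq N/2$. The only deviations are cosmetic---organizing the cases by the value of $\omega$ rather than by vacuum/non-vacuum, contracting non-vacuum melopoles all the way to $L=0$ instead of stopping at a single $4$-dipole line, and invoking $\omega = 2(1-V)+\rho$ to force $V(\cH)=1$ (a shortcut the paper explicitly chooses not to use)---while the two steps you flag as delicate (the structural identification behind Figure \ref{2cases} and the fact that every external leg of the reduced graph meets a non-trivial external face) are asserted at exactly the same level of rigor in the paper's own proof.
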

\begin{proof}
Let us first assume that $\cH$ is a vacuum graph. We can perform a maximal set of successive $4$-dipole contractions, so as to obtain a graph $\widetilde{\cH}$ with $D_4 = 0$ and same power-counting as $\cH$. If $D_5 (\widetilde{\cH}) = 1$, then $\widetilde{\cH}$ is the supermelon graph, which means that $\cH$ is a melopole, and $\omega(\cH) = \omega(\widetilde{\cH}) = - 2 + 4 - 1 = 1$. On the other hand, when $D_5 (\widetilde{\cH}) = 0$, equation (\ref{bound_om}) gives
\beq
\omega( \widetilde{\cH} ) \leq  - \frac{D_2 (\widetilde{\cH}) }{2} - \sum_{k \geq 3} \left( \frac{k}{2} - 1 \right) F_k ( \widetilde{\cH} )\,,
\eeq
from which we infer that $\omega( \widetilde{\cH} ) \leq -1$ unless perhaps when $D_2 (\widetilde{\cH}) = F_k (\widetilde{\cH} ) = 0$ for any $k \geq 3$. But it is easy to see that these conditions immediately imply that $\widetilde{\cH}$ has one of the structures shown in Figure \ref{2cases}. A direct calculation then confirms that $\omega = 0$ for the left drawing, but $\omega = 1$ for the drawing on the right. This finally shows that $\omega = 0$ graphs are exactly the minimal graph on the left side of Figure \ref{2cases} dressed with additional melopoles, as shown in Figure \ref{submelonic}. We propose to call them \textit{submelonic vacuum graphs}. 

\begin{figure}
\begin{center}
\includegraphics[scale=0.5]{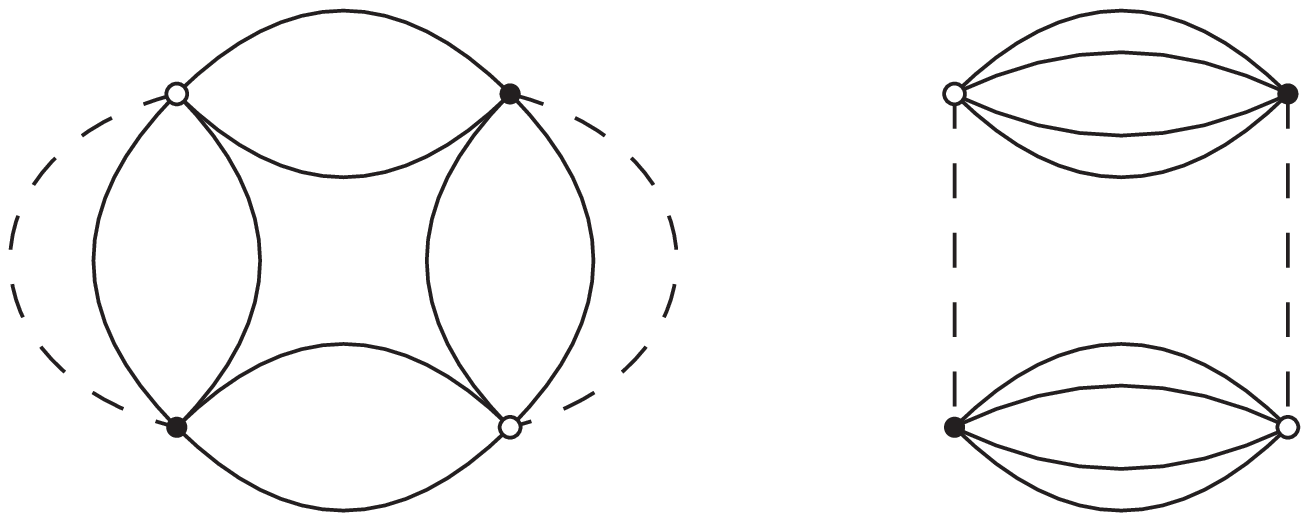}
\caption{Two vacuum graphs with $D_2 (\widetilde{\cH}) = F_k (\widetilde{\cH} ) = 0$ for any $k \geq 3$.}
\label{2cases}
\end{center}
\end{figure}

\
Let us now consider the case of a non-vacuum graph $\cH$. We can again perform a maximal set of $4$-dipole contractions 
and construct a new graph $\widetilde{\cH}$ verifying either: a) $L(\widetilde{\cH}) = D_4 (\widetilde{\cH}) = 1$; or b) $ D_4 (\widetilde{\cH}) = 0 $. In situation a), $\widetilde{\cH}$ reduces to a single $4$-dipole line, $\cH$ itself is a melopole, and $\omega(\cH) = \omega(\widetilde{\cH}) = - 2 + 3 - 1 = 0$. In situation b), the bound on $\omega$ gives
\beq
\omega( \widetilde{\cH} ) \leq  - \frac{D_2 (\widetilde{\cH}) }{2} - \sum_{k \geq 3} \left( \frac{k}{2} - 1 \right) F_k ( \widetilde{\cH} ) - \sum_{k \geq 1} \frac{k}{2} F_{ext , k} (\widetilde{\cH}) < 0\,,
\eeq
which shows that $\omega ( \widetilde{\cH} ) = \omega ( \cH ) \leq - 1$. We can finally prove a decay in terms of the number of external lines. For instance, we remark that the connectedness of $\widetilde{\cH}$ implies that at least one face going through a given external leg is of the type $F_{ext , k}$ with $k \geq 1$. And because each of these faces contains two external legs, we have $\sum_{k \geq 1} F_{ext , k} \geq \frac{N}{2}$. So we finally obtain
\beq
\omega(\cH) = \omega( \widetilde{\cH} ) \leq - \sum_{k \geq 1} \frac{k}{2} F_{ext , k} (\widetilde{\cH}) 
\leq - \frac{1}{2} \sum_{k \geq 1} F_{ext , k} \leq - \frac{N}{4}\,.
\eeq
 
\
All possible situations have been scanned, which ends the proof. 
\end{proof}

This classification allows to identify melopoles as the only source of divergences in the scale decomposition of non-vacuum (vertex-)connected amplitudes. Any model with a finite set of $4$-bubble interactions comes with a finite number of melopoles, and is therefore expected to be super-renormalizable. The purpose of the next sections is to prove that this is indeed the case.

\begin{figure}
\begin{center}
\includegraphics[scale=0.5]{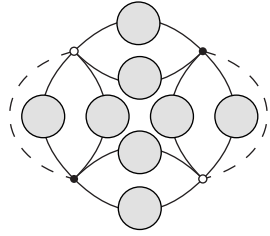}
\caption{The class of submelonic vacuum graphs: grey blobs represent melopole insertions.}
\label{submelonic}
\end{center}
\end{figure}

\subsection{Localization operators}

Combinatorial contractions of graphs can be represented as localization operators acting on the amplitudes. They are the technical tools allowing to erase high energy information and reabsorb it into effective local counter-terms. To define them, let us consider a graph $\cG$, and a face-connected subgraph $\cH \subset \cG$. We define an operator $\tau_\cH$ by its action on the integrand of $\cG$. The amplitude $\cA_\cG$ is of the form:
\bes\label{ampl}
\cA_\cG &=& \left[ \prod_{e \in L(\cH)} \int \extd \alpha_{e} \, e^{- m^2 \alpha_e} \int \extd \lambda_e \right] 
\left( \prod_{f \in F(\cH)} K_{\alpha(f)}\left( \sum_{e \in \partial f} \epsilon_{ef} \lambda_e \right) \right) \\ 
 && \left[ \prod_{e \in N(\cH)} \int \extd \theta_e \right] 
 \left( \prod_{f \in F_{ext}(\cH)} K_{\alpha(f)} \left( \theta_{s(f)}
+ \sum_{e \in \partial f} \epsilon_{ef} \lambda_e - \theta_{t(f)}\right) \right)
\, \cR_{\cG \setminus \cH}\left(\{\theta_{s(f)}, \theta_{t(f)}\}\right)\,, \nn
\ees
where $\cR_{\cG \setminus \cH}$ only depends on $\theta$ variables appearing in the external faces of $\cH$. We then define $\tau_\cH$ as:
\beq
\tau_\cH \cR_{\cG \setminus \cH}\left(\{\theta_{s(f)}, \theta_{t(f)}\}\right) \equiv \cR_{\cG \setminus \cH}\left(\{\theta_{s(f)}, \theta_{s(f)}\}\right)\,,
\eeq 
that is by moving all target variables of the external faces of $\cH$ to the sources. This definition is motivated by the fact that when the parallel transports inside $\cH$ are negligible, holonomies along external faces can be well approximated by directly connecting the two points at the boundary of $\cH$.

\begin{proposition} 
 Let $\cH \subset \cG$ be a face-connected subgraph. The action of $\tau_{\cH}$ on $\cA_\cG$ factorizes as:
\beq
\tau_\cH \cA_\cG = \nu_{\rho}(\cH) \cA_{\cG / \cH} \,,
\eeq
where $\nu_{\rho}(\cH)$ is a numerical coefficient depending on the cut-off $\rho$, and given by the following integral:
\beq
\nu_{\rho}(\cH) \equiv  \left[ \prod_{e \in L(\cH)} \int_{M^{-2 \rho}}^{+ \infty} \extd \alpha_{e} \, e^{- m^2 \alpha_e} \int \extd \lambda_e \right] 
\left( \prod_{f \in F(\cH)} K_{\alpha(f)}\left( \sum_{e \in \partial f} \epsilon_{ef} \lambda_e \right) \right)\,.
\eeq
\end{proposition}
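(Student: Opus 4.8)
The plan is to establish the factorization by directly substituting the definition of the localization operator $\tau_{\cH}$ into the amplitude formula (\ref{ampl}) and using the structure of the integrand. The key observation is that the amplitude $\cA_\cG$ splits cleanly into a piece living entirely inside $\cH$ (the integrals over the internal Schwinger parameters $\alpha_e$ and connection variables $\lambda_e$ for $e \in L(\cH)$, together with the internal-face heat kernels) and a piece that couples $\cH$ to the rest of the graph only through the boundary variables $\theta_{s(f)}, \theta_{t(f)}$ of its external faces. This is exactly the structure displayed in (\ref{ampl}), where $\cR_{\cG \setminus \cH}$ absorbs everything external.

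First I would apply $\tau_\cH$, which by definition replaces every $\theta_{t(f)}$ by $\theta_{s(f)}$ in $\cR_{\cG \setminus \cH}$. The crucial step is then to examine what happens to the external-face heat kernels $K_{\alpha(f)}\left( \theta_{s(f)} + \sum_{e \in \partial f} \epsilon_{ef} \lambda_e - \theta_{t(f)}\right)$. Here one must be careful: the operator $\tau_\cH$ as defined acts only on $\cR_{\cG\setminus\cH}$, i.e. on the source/target variables as they feed into the rest of the graph. I would argue that once all target variables are identified with sources at the level of the external faces, the role of the external faces of $\cH$ becomes that of simple propagators connecting the identified boundary points in the contracted graph $\cG/\cH$. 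Concretely, using the characterization of contraction from the proposition following the definition of $\cG/\cH$ (internal faces of $\cH$ deleted, external faces replaced by single lines of the appropriate color), the external-face kernels reorganize precisely into the corresponding propagator structure of $\cA_{\cG/\cH}$.

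The second step is to recognize that, after this identification, the integrations over the internal data of $\cH$ — namely $\{\alpha_e, \lambda_e : e \in L(\cH)\}$ together with the internal-face heat kernels $\prod_{f \in F(\cH)} K_{\alpha(f)}(\sum_{e\in\partial f}\epsilon_{ef}\lambda_e)$ — decouple entirely from the remaining external variables. This is because, after localization, these internal integrals no longer share any integration variable with $\cR_{\cG\setminus\cH}$ or with the reorganized external-face propagators. They therefore factor out as an overall multiplicative constant, which is exactly the integral defining $\nu_\rho(\cH)$ (with the cut-off $M^{-2\rho}$ reinstated as the lower bound on each $\alpha_e$, consistent with the regularization $\Lambda = M^{-2\rho}$). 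What remains is, term by term, the integrand of the amplitude $\cA_{\cG/\cH}$, since the contraction replaces external faces by single colored lines and deletes internal faces precisely as needed.

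The main obstacle I anticipate is bookkeeping the external faces correctly: a single external face of $\cG$ may pass through $\cH$ and re-emerge, so that its full heat kernel (with argument $\theta_{s} + (\text{internal holonomy}) - \theta_{t}$) must be shown to split, upon setting $\theta_t = \theta_s$ inside $\cR$, into the internal contribution (which joins $\nu_\rho(\cH)$) and the external remainder (which belongs to $\cA_{\cG/\cH}$). Making this split rigorous requires invoking the convolution property of heat kernels, $\int \extd g\, K_{\alpha_1}(g_1 g^{\inv}) K_{\alpha_2}(g g_2) = K_{\alpha_1+\alpha_2}(g_1 g_2)$, to absorb the internal portion of each external face into the corresponding internal Schwinger parameters; I expect this is where care is needed to verify that the resulting parameters reassemble correctly into $\nu_\rho(\cH)$ and that no spurious coupling survives. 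Once this face analysis is handled, the factorization $\tau_\cH \cA_\cG = \nu_\rho(\cH)\, \cA_{\cG/\cH}$ follows directly.
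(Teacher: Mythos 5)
Your overall skeleton (substitute the definition of $\tau_\cH$ into (\ref{ampl}), separate the internal data of $\cH$ from the rest, identify the remainder with $\cA_{\cG/\cH}$) matches the paper's, but the step where you handle the external-face heat kernels is wrong, and it is precisely the step that makes the proposition true. After $\tau_\cH$ replaces $\theta_{t(f)}$ by $\theta_{s(f)}$ inside $\cR_{\cG\setminus\cH}$, each target variable $\theta_{t(f)}$ appears in exactly one place: the kernel $K_{\alpha(f)}\left(\theta_{s(f)}+\sum_{e\in\partial f}\epsilon_{ef}\lambda_e-\theta_{t(f)}\right)$ of the corresponding external face. The paper's proof then uses the \emph{normalization} of the heat kernel, $\int \extd\theta\, K_{\alpha}(\,\cdot\,-\theta)=1$, so that integrating over the now-decoupled $\theta_{t(f)}$ makes every external-face kernel disappear entirely. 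This is what leaves a coefficient $\nu_\rho(\cH)$ built only from the \emph{internal} faces $F(\cH)$, times $\int[\extd\theta_{s}]\,\cR_{\cG\setminus\cH}(\{\theta_{s(f)},\theta_{s(f)}\})$, which is identified with $\cA_{\cG/\cH}$. You never invoke this normalization; instead you claim the external-face kernels ``reorganize into the propagator structure of $\cA_{\cG/\cH}$'' and propose the convolution identity to ``absorb the internal portion of each external face into the internal Schwinger parameters.''

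Neither of those can work. In $\cG/\cH$ the external faces of $\cH$ are replaced by bare colored strand identifications, which carry no heat kernel at all (only color-$0$ lines carry propagators), so there is no propagator structure in $\cA_{\cG/\cH}$ for these kernels to become; and the convolution identity requires a second kernel sharing the integration variable, whereas after $\tau_\cH$ acts nothing else depends on $\theta_{t(f)}$. Worse, if you did convolve the internal pieces against the only variables they still share with $\cR_{\cG\setminus\cH}$ (the sources $\theta_{s(f)}$), you would produce kernels with combined parameters $\alpha_{int}+\alpha_{ext}$: the integrals over $\alpha_e$, $\lambda_e$ for $e\in L(\cH)$ would then not decouple, and the result would be neither $\nu_\rho(\cH)$ (which by the statement involves only $F(\cH)$) nor $\cA_{\cG/\cH}$ (whose faces carry only external Schwinger parameters). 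So the ``obstacle'' you anticipate is not a bookkeeping subtlety to be handled by convolution --- it signals that your mechanism is the wrong one. The fix is the one-line observation above: once the targets are decoupled, each external-face kernel integrates to unity.
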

\begin{proof}
 Applying $\tau_\cH$ in equation (\ref{ampl}), one remarks that heat kernels associated to external amplitudes can readily be integrated with respect to the variables $\theta_{t(f)}$. These integrals give trivial contributions, thanks to the normalization of the heat kernel. We are therefore left with an integral over the internal faces of $\cH$, giving $\nu_{\rho}(\cH)$, times an amplitude which is immediately identified to be that of the contracted graph $\cG / \cH$.
\end{proof}
{\bf Remark. } One can also use the same kind of factorization for the action of $\tau_\cH$ on $\cA_{\cG , \mu}$, in which case we will use the notation $\nu_\mu (\cH)$: 
\beq
\tau_\cH \cA_{\cG , \mu} = \nu_{\mu}(\cH) \cA_{\cG / \cH , \mu}\,.
\eeq

An illustration of the factorizability property is given in Figure \ref{melo1}. This definition of the localization operators will be sufficient for the renormalization of logarithmic divergences, hence for all the cases we will have to tackle in this chapter. As usual, the renormalization of power-like divergences in more complicated models will require to push to a higher order the Taylor expansion around localized terms. An example will be provided by the model of Chapter \ref{chap:su2}, where not only the contraction operators will take higher order terms into account, but also in a non-commutative context. 

\subsection{Melordering}

In the usual super-renormalizable $P(\phi)_2$ field theory \cite{Simon}, the finite set of counter-terms that are needed to tame divergences is simply provided by Wick ordering. It consists in a simple change of basis of interaction invariants, the coupling constants in this new basis being the renormalized ones. The net effect of Wick ordering at the level of the Feynman expansion is to simply cancel the contributions of graphs with tadpoles. This suggests a similar strategy to remove the special kind of tadpoles that are responsible for the divergences of our tensorial models, that is the melopoles. We will call this particular version of Wick ordering the \textit{melordering}.

\
Before going to the details of melordering, a few preliminary remarks are in order, as we have to face a few subtleties introduced by the refined notion of connectedness on which TGFT relies.
In scalar theories, tadpole lines are exactly local objects, in the sense that their contributions can be factorized exactly. This is the reason why Wick ordering can be defined as a choice of a family of orthogonal polynomials with
respect to the regularized covariance \cite{salmhofer}. When such invariants are used as a basis to express the interaction part of the action, their expectation values in the vacuum is zero, and more generally all
tadpole contributions cancel out exactly. In tensorial theories however, we have seen that tadpoles can only be approximately local, at the condition of them being tracial (which melopoles are).
We therefore cannot hope to cancel them exactly, but only to eliminate their local divergent part. An important consequence is for example that melordered invariants will not necessarily have a zero expectation value in the vacuum, but only a finite one (at the additional condition that submelonic vacuum counter-terms are added when needed, see section \ref{sec:sub}).

\
We now proceed with the definition of melordering.
Let us call $\Inv$ the vector space of connected tensor invariants, generated by the $4$-bubbles. Associated to the regularized covariance $C^{\rho}$, we want to define a linear and bijective map
$\Omega_{\rho}: \Inv \mapsto \Inv$ 
that maps any $4$-bubble to a suitably weighted sum of lower order $4$-bubbles. Getting inspiration from the scalar case, one should define $\Omega_{\rho}(I_{b})$ as a sum over pairings of the
external legs of $b$. The relevant pairings will be those resulting in melopoles. As we will see in an explicit example (see section \ref{sec:example}), a single connected invariant can give rise to several face-disjoint melopoles. For this reason, and despite the super-renormalizable nature of the model, the counter-terms have already a rich structure, only captured by the full machinery of Zimmermann forests. In such an approach, the renormalized amplitudes are given by sums over inclusion forests of divergent subgraphs $\cF$, of contractions of the bare amplitudes
\beq
\cA_\cG^R = \sum_\cF \prod_{\cH \in \cF} (- \tau_\cH ) \cA_\cG \,.
\eeq
In our case, the relevant structure is given by inclusion forests of face-connected melopoles, which we call \textit{meloforests} and define with respect to both subgraphs and bubble invariants.
\begin{definition}
\begin{enumerate}[(i)]
\item Let $\cH \subset \cG$ be a subgraph. A meloforest $\cM$ of $\cH$ is a set of non-empty and face-connected melopoles of $\cH$, such that: for any $m , m' \in \cM$, either $m$ and $m'$ are line-disjoint and face-disjoint, or $m \subset m'$ or $m' \subset m$. We note $\cM(\cH)$ the set of meloforests of $\cH$.
\item Let $b$ be a $4$-bubble. A meloforest $\cM$ of $b$ is a meloforest for a graph made of a single vertex $b$. We call $I_{b, \rho}^{\cM}$ the observable associated to the smallest such graph, namely $\underset{m \in \cM}{\cup} m$. We note $\cM(b)$ the set of meloforests of $b$.
\end{enumerate}
\end{definition}
Meloforests have a relatively simple structure, due to a uniqueness property \cite{uncoloring,universality}. 
\begin{lemma}
Let $b$ be a $4$-bubble. There exists a unique graph $\cG$ such that any meloforest of $b$ is a meloforest of $\cG$.
\end{lemma}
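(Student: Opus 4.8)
The plan is to unfold the definition and reduce the whole statement to a single consistency property of melopole lines. A meloforest $\cM$ of $b$ is, by definition, a meloforest of a graph consisting of the single vertex $b$: concretely, each melopole $m\in\cM$ is built by adding color-$0$ tadpole lines that pair nodes (external legs) of $b$, subject to the $d$-dipole Hepp-sector condition, and the members of $\cM$ are nested or line- and face-disjoint. The smallest graph realizing $\cM$ is $\bigcup_{m\in\cM} m$, i.e. $b$ equipped with exactly the color-$0$ lines occurring in the melopoles of $\cM$. I would therefore define
$$
\cG \;:=\; b \;\cup\; \{\, e \;:\; e \text{ is a color-}0\text{ tadpole line occurring in some melopole of } b \,\}\,,
$$
and the entire lemma will follow once $\cG$ is shown to be a well-defined graph, since the union of all the minimal realizations $\bigcup_{m\in\cM}m$ is then exactly $\cG$.

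The heart of the argument, and the step I expect to be the main obstacle, is the \emph{consistency} (partner-uniqueness) claim: each node $a$ of $b$ is an endpoint of \emph{at most one} color-$0$ line occurring in any melopole of $b$, so that the union above is a legitimate graph (no node carrying two color-$0$ lines). For a line appearing first in its Hepp sector this is immediate from the colored structure: such a line is a $d$-dipole of $b$, so its endpoints are joined by $d-1=3$ colored edges; if a node $a$ admitted two distinct such partners it would need $3+3=6$ colored edges, contradicting that each node carries exactly $d=4$ colored edges. The difficulty is lines contracted at later stages, where the relevant $d$-dipole appears only after earlier $d$-dipoles have been contracted and colored edges rerouted. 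Here I would invoke the structure theory of melonic contractions from \cite{critical, universality, uncoloring}: elementary-melon (equivalently $d$-dipole) contraction of colored graphs is confluent, so the reduction of $b$ towards its core is canonical and the collection of pairs identified as dipoles along the way is independent of the order of contraction. This confluence upgrades the elementary $3+3>4$ count to a global statement: whenever $a$ gets paired inside a melopole, its partner is the one dictated by this canonical reduction, hence unique.

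With consistency established, the embedding step is essentially formal, because every condition defining a meloforest is intrinsic to the subgraphs involved and therefore insensitive to the ambient graph. Being a melopole depends only on the existence of a Hepp sector of $d$-dipoles among the lines of $m$; the internal faces of $m$ close within $m$ and are the same in any ambient graph containing it, so face-connectedness of $m$ is intrinsic; and the nesting relation $m\subset m'$ together with line- and face-disjointness (a statement about $m\cup m'$) are likewise intrinsic. Consequently, for any $\cM\in\cM(b)$, all its lines lie in $\cG$ by construction, and $\cM$ is a meloforest of $\cG$ exactly because it is a meloforest of $\bigcup_{m\in\cM}m\subset\cG$.

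Finally, uniqueness is forced. If $\cG'$ is any graph such that every meloforest of $b$ is a meloforest of $\cG'$, then every color-$0$ line of every meloforest of $b$ must be a line of $\cG'$; taking the union over $\cM(b)$ gives $\cG'\supseteq\cG$. Since $\cG$ itself has the required property and its color-$0$ lines are precisely the melopole lines of $b$ (by construction, nothing more), it is the smallest such graph and hence the distinguished one. Thus $\cG=\bigcup_{\cM\in\cM(b)}\bigcup_{m\in\cM}m$ is the unique graph attached to $b$, and one may henceforth identify $\cM(b)$ with the meloforests of $\cG$, which is the use to which the lemma will be put in the definition of $\Omega_\rho$.
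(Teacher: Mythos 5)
Your construction is correct and lands on the same graph as the paper: $\cG$ is the union of all melopoles of $b$, equivalently the closure of its maximal melonic $2$-point subgraphs. The difference lies in how well-definedness is justified, and the paper's route is more direct. The paper's proof is two sentences: it cites \cite{uncoloring,universality} for the fact that only melonic $2$-point subgraphs of $b$ (in the colored-graph sense) can be closed into face-connected melopoles, \emph{and that the closure is unique}; this canonical pairing immediately makes the union of all melopoles a legitimate graph. You instead isolate a partner-uniqueness statement, prove it for lines which are $d$-dipoles of $b$ itself by the $3+3>4$ counting argument, and appeal to confluence of elementary-melon contraction for the nested lines. Your route buys the formal scaffolding the paper leaves implicit: the observation that all meloforest conditions (melopole structure, face-connectedness, face-disjointness, nesting) are intrinsic to the subgraphs involved, and the clarification that ``unique'' must be read as minimal, since any graph containing $\cG$ trivially inherits the property. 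What it does not quite buy is the key fact itself: confluence (uniqueness of the end result of contraction) does not by itself imply that different reduction sequences --- let alone different melopoles, which are color-$0$ decorations of $b$ by \emph{different} line sets --- identify the same pairs; that stronger order-independence of the pairing is exactly the statement the paper cites. In this setting it can be recovered by upgrading your counting argument to a diamond-property argument (contractible pairs at any stage are node-disjoint, and contracting one preserves the others, up to the vacuum boundary case), but as written this step is asserted rather than derived, so the weight of your proof ultimately rests on the same literature fact as the paper's, reached less directly.
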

\begin{proof}
As remarked in \cite{uncoloring,universality}, only melonic $2$-point subgraphs (in the sense of colored graphs) of $b$ can be closed in face-connected melopoles, and there is a unique way of doing so. Closing the maximal $2$-point subgraphs of $b$ in such a way results therefore in the unique graph $\cG$.
\end{proof}

We can now proceed with the definition of the melordering map.
\begin{definition}
For any $4$-bubble $b$, associated to the invariant $I_b$, and a cut-off $\rho$, we define the \textit{melordered invariant} $\Omega_{\rho}(I_{b})$ as
\beq
\Omega_{\rho}(I_{b}) \equiv \sum_{\cM \in \cM(b)} \prod_{m \in \cM} \left( - \tau_{m} \right) I_{b, \rho}^{\cM}\,.
\eeq
\end{definition}
By convention, the sum over meloforests includes the empty one, so that $\Omega_{\rho}(I_{b})$ has same order as $I_{b}$. Products of contraction operators are commutative, the definition is therefore unambiguous. These (non-trivial) products of contractions ensure that each term in the sum is a weighted $4$-bubble invariant, making $\Omega_{\rho}$ a well-defined linear map from $\Inv$ to itself. An example is worked out explicitly in section \ref{sec:example}.

\
Consider now the theory defined in terms of melordered interactions at cut-off $\rho$, with partition function:
\bes
\cZ_{\Omega_\rho} &=& \int \extd \mu_{C_\rho} (\vphi , \vphib) \, \e^{- S_{\Omega_\rho}(\vphi , \vphib )} \,, \\
S_{\Omega_\rho}(\vphi , \vphib ) &=& \sum_{b \in \cB} t_b^R \, \Omega_{\rho}(I_b )(\vphi , \vphib).
\ees
We shall then consider the perturbative expansion in the renormalized couplings $t^R_b$ and prove that the corresponding Feynman amplitudes are finite. Let us call $\cS_N^{\Omega_\rho}$ the $N$-point Schwinger function of the melordered model. The next proposition shows that renormalized amplitudes have the expected form.
\begin{proposition}
The $N$-point Schwinger function $\cS_N^{\Omega_\rho}$ expands as:
\beq
\cS_N^{\Omega_\rho} = \sum_{\cG \; \mathrm{connected}, N(\cG)= N} \frac{1}{s(\cG)} \left(\prod_{b \in \cB} (- t_b^R )^{n_b (\cG)}\right) \cA_\cG^{R} \,,
\eeq
where the renormalized amplitudes can be expressed in terms of the bare ones as
\beq
\cA_\cG^R = \left( \sum_{\cM \in \cM(\cG)} \prod_{m \in \cM} \left( - \tau_{m} \right) 
\right) \cA_\cG \,.
\eeq
\end{proposition}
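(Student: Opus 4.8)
The plan is to unfold the definition of the melordered partition function $\cZ_{\Omega_\rho}$ and show that the combinatorial reorganization induced by the melordering map $\Omega_\rho$ reassembles, graph by graph, into the Zimmermann-type forest formula claimed for $\cA_\cG^R$. First I would expand $\e^{-S_{\Omega_\rho}}$ as a formal power series in the renormalized couplings $t_b^R$ and apply Wick's theorem with respect to the Gaussian measure $\extd\mu_{C_\rho}$, exactly as in the bare case. The only novelty is that each vertex $b$ now carries, instead of the bare invariant $I_b$, the melordered combination $\Omega_\rho(I_b) = \sum_{\cM\in\cM(b)}\prod_{m\in\cM}(-\tau_m) I_{b,\rho}^{\cM}$. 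Crucially, by the preceding uniqueness lemma, every meloforest of a bubble $b$ is a meloforest of a single canonical graph associated to $b$, so the contractions $\tau_m$ appearing in $\Omega_\rho(I_b)$ are genuinely the same localization operators that act on Feynman amplitudes; the invariant $I_{b,\rho}^{\cM}$ is precisely the observable attached to the partially contracted graph $\bigcup_{m\in\cM}m$.

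Next I would organize the resulting sum by the underlying connected Feynman graph $\cG$. Since $\Omega_\rho$ preserves the order of each invariant (the empty meloforest is included), a term in the expansion built from vertices $b_1,\ldots,b_n$ with chosen meloforests $\cM_1,\ldots,\cM_n$ contributes to graphs $\cG$ obtained by Wick-contracting the external legs of these (partially contracted) vertices. The key step is then to re-sum over all ways the vertex-level meloforests can be completed by the Gaussian contractions into a \emph{graph-level} meloforest $\cM\in\cM(\cG)$. Here the factorization property $\tau_m\cA_{\cG,\mu}=\nu_\mu(m)\cA_{\cG/m,\mu}$ (and its unsliced version $\tau_\cH\cA_\cG=\nu_\rho(\cH)\cA_{\cG/\cH}$) guarantees that applying the operators $\prod_{m\in\cM}(-\tau_m)$ at the graph level produces exactly the amplitude of the appropriate contraction, matching the contribution of the vertex-level melorderings. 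The commutativity of the $\tau$ operators, already noted after the definition of the localization operators, ensures this rearrangement is well-defined and independent of ordering.

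The combinatorial heart of the argument is a bijection between, on one hand, the data $(\cG;\cM_1,\ldots,\cM_n)$ of a Feynman graph together with one meloforest per vertex, and on the other hand, the data $(\cG,\cM)$ of the same graph equipped with a single graph-level meloforest $\cM\in\cM(\cG)$. I would establish this by invoking Proposition \ref{curiosity}: any two melopoles in $\cG$ are either line- and face-disjoint or nested, so a meloforest of $\cG$ decomposes uniquely into the melopoles "internal" to individual vertices plus those created across Wick contractions, and conversely. This nesting/disjointness dichotomy is exactly what makes $\cM(\cG)$ a forest structure compatible with the per-vertex forests. Once the bijection is set up, the symmetry factors $s(\cG)$ are handled identically to the bare expansion, and collecting terms yields
\beq
\cA_\cG^R = \left( \sum_{\cM \in \cM(\cG)} \prod_{m \in \cM} \left( - \tau_{m} \right) \right) \cA_\cG \,.
\eeq

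\textbf{Main obstacle.} The delicate point I expect is the bijection in the last paragraph: one must verify that face-connectedness of the melopoles is correctly tracked across Wick contractions, i.e.\ that a melopole appearing in a graph-level forest is always assignable either to a unique vertex or to a well-defined closure of a melonic $2$-point subgraph, with no double counting and no missing terms. The uniqueness lemma for meloforests of a bubble and the strong nesting property of Proposition \ref{curiosity} are the two facts that make this work, but matching them precisely against the Gaussian contraction structure — and checking that the empty-meloforest convention reproduces the bare amplitude as the leading term — is where the real care is needed.
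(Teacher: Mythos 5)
Your overall route is the paper's own: expand the melordered partition function by Wick's theorem, reorganize the sum by the underlying Feynman graph $\cG$, and use the factorization property $\tau_{\cH}\,\cA_\cG = \nu_\rho(\cH)\,\cA_{\cG/\cH}$ (together with commutativity of the $\tau$'s) to identify each term of $\Omega_\rho(I_b)$ with the action of $\prod_{m\in\cM_b}(-\tau_m)$ on the full graph amplitude. The paper condenses exactly this into the single remark
\beq
\cM (\cG) = \Big\{ \bigcup_{b \in \cB(\cG)} \cM_b \,\Big|\, \cM_b \;\mathrm{meloforest}\;\mathrm{of}\; b \in \cB(\cG) \Big\}\,,
\eeq
which lets the forest sum factorize into a product over bubbles, each factor matching the Wick contractions generated by the melordered interaction $\Omega_\rho(I_b)$ that are compatible with the rest of the graph.

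The one weak link is your justification of this decomposition (what you call the combinatorial heart) by Proposition \ref{curiosity}. That proposition does not deliver what you claim: it requires the ambient graph to be non-vacuum, whereas per-vertex meloforests live inside the (generally vacuum) closures of the bubbles; and even where it applies, part (i) yields only ``line-disjoint or nested'', never face-disjointness. Indeed your stated dichotomy --- any two melopoles are either line- and face-disjoint or nested --- is false: in the $\vphi^6$ example of Section \ref{sec:example}, the melopoles $S_{12}=\{l_1,l_2\}$ and $S_{23}=\{l_2,l_3\}$ share the line $l_2$ yet neither contains the other, which is precisely why $\{S_{12},S_{23}\}$ is not a meloforest and why meloforests must be defined as \emph{selected} subsets of melopoles rather than arbitrary collections. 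Fortunately, the decomposition you need holds for a much more elementary reason, which is presumably why the paper states it without proof: a melopole is by definition a single-vertex (tadpole) subgraph, so every melopole of $\cG$ sits on a unique bubble and a graph-level forest restricts vertex by vertex; conversely, melopoles attached to distinct bubbles are automatically line-disjoint \emph{and} face-disjoint, since a closed face visiting two different vertices would have to run through a propagator line joining them, which belongs to neither melopole; finally, the internal faces of a tadpole subgraph are intrinsic to its vertex, so the per-vertex forest conditions coincide with the graph-level ones. Replacing your appeal to Proposition \ref{curiosity} by this observation, your argument coincides with the paper's proof.
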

\begin{proof}
We first remark that the set $\cM (\cG)$ of meloforests of $\cG$ can be described according to meloforests of bubble vertices $b \in \cB(\cG)$:
\beq
\cM (\cG) = \left\{ \underset{b \in \cB}{\bigcup} \cM_b | \cM_b \; { \rm meloforest} \; {\rm of} \; b \in \cB(\cG) \right\}\,.
\eeq 
$\cA_\cG^R$ as defined above can therefore be written
\bes
\cA_\cG^R &=& \left( \sum_{ (\cM_b)_{b \in \cB(\cG)} } \prod_{b \in \cB(\cG)} \prod_{m \in \cM_b} \left( - \tau_{m} \right) 
\right) \cA_\cG \\
&=&  \prod_{b \in \cB(\cG)} \left( \sum_{\cM_b} \prod_{m \in \cM_b} \left( - \tau_{m} \right) 
\right) \cA_\cG \,.
\ees
Each element of the product over $b \in \cB(\cG)$ is a contraction operator taking all melopoles associated to $b$ into account. Let us fix a graph $\cG$ and a bubble $b$. Among the set of Wick contractions appearing in $\cS_N^{\Omega_\rho}$, the operator $\underset{\cM_b}{\sum} \underset{m \in \cM_b}{\prod} \left( - \tau_{m} \right)$ encodes all the terms due to the interaction $\Omega_\rho (I_b)$ that are compatible with the combinatorics of the external legs of $b$ in $\cG$ and the structure of the rest of the graph. We therefore understand that $\cS_N^{\Omega_\rho}$ as written above is a valid repackaging of all the Wick contractions generated by the melordered interaction.
\end{proof}
We will devote the whole section \ref{sec:finiteness} to proving that the renormalized amplitudes are indeed finite. Before that, we return to submelonic vacuum divergences.

  \subsection{Vacuum submelonic counter-terms}\label{sec:sub}

The melordering we just introduced is designed to remove melopole divergences, including logarithmic divergences of non-vacuum graphs and linear divergences resulting from vacuum melopoles. However, we have seen that a third source of divergences is given by submelonic vacuum graphs. They again concern tadpole graphs, so they can also be removed by adding extra counter-terms to the melordering of some of the bubbles. As long as we are concerned with computations of transition amplitudes, they are irrelevant since they will only affect $\cZ$ and none of the connected Schwinger functions.  But we include them here for completeness.  
 
\
We can define an \textit{extended melordering} $\overline{\Omega}_\rho$ that coincides with $\Omega_\rho$ for bubbles which cannot be closed in a submelonic vacuum graph, and adds additional counter-terms to those which can. We can call the latter \textit{submelonic bubbles}. They are exactly the bubbles that reduce to a four-point graph as in Figure \ref{sub_open} once all the melonic parts have been closed into melopoles and contracted. Such bubbles generate additional divergent forests, which we can call \textit{submelonic forests}:
\begin{definition}
Let $b$ be a submelonic bubble. A submelonic forest of $b$ is a forest $\cS = \cM \cup \{ \cG \}$, where $\cM$ is a meloforest and $\cG$ is a vacuum graph with a single vertex $b$. We call $I_{b, \rho}^{\cS}$ the amplitude associated to the graph $\cG$. We call $\cS(b)$ the set of submelonic forests of $b$.
\end{definition}   
{\bf Remark.} Given a submelonic bubble, there are exactly two possible choices for $\cG$, which correspond to the two possible ways of closing the melopole-free graph of Figure \ref{sub_open}.

\begin{figure}
\begin{center}
\includegraphics[scale=0.5]{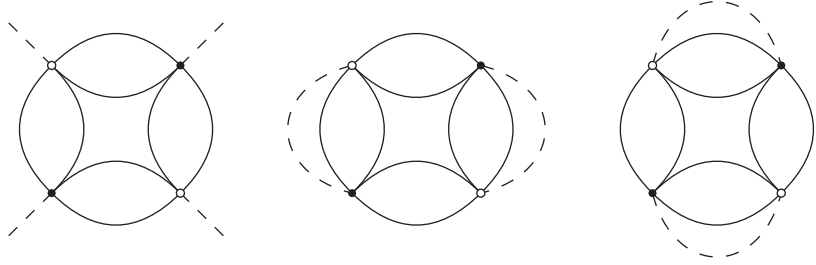}
\caption{On the left: structure of a submelonic bubble once all melonic parts have been closed into melopoles and contracted. On the right: the two ways of obtaining a submelonic vacuum graph.}
\label{sub_open}
\end{center}
\end{figure}

The extended melordering is finally defined by
\beq
\overline{\Omega}_{\rho}(I_{b}) \equiv \sum_{\cM \in \cM(b)} \prod_{m \in \cM} \left( - \tau_{m} \right) I_{b, \rho}^{\cM} + \sum_{\cS \in \cS(b)} \prod_{s \in \cS} \left( - \tau_{s} \right) I_{b, \rho}^{\cS}
\eeq
when $b$ is submelonic. This implies similar formulas for renormalized amplitudes in the extended melordered model, which in particular do not affect the expression for melordered connected Schwinger functions. The only difference will be that the partition function of the extended melordered model will be well-defined as a formal series, contrary to the simple melordering for which $\cZ$ will have some logarithmically divergent coefficients. 

\section{Finiteness of the renormalized series}\label{sec:finiteness}

In this section, we prove that melordered models with maximal interaction order $p < + \infty$ are perturbatively finite at any order. To avoid dealing with submelonic vacuum graphs, we will only focus on the connected Schwinger functions, which are the physically meaningful quantities after all. They are well-defined formal series in the renormalized couplings if all non-vacuum and connected renormalized amplitudes $\cA_\cG^{R}$ are finite.

\
We will again rely on the multi-scale analysis, following the usual procedure of \cite{vincent_book}, which consists in two steps. We first need to show that renormalized amplitudes associated to bare divergent graphs verify multi-scale convergent bounds. This is most conveniently done through a classification of divergent forests (in our case meloforests), which for a given scale attribution $\mu$, splits these in two families: the \textit{dangerous} ones, associated to high subgraphs, that cancel genuine divergences; and on the other hand \textit{inoffensive} divergent forests that do not have any quasi-locality property, henceforth do not serve any purpose. The inoffensive forests bring finite contributions that do not ruin the power-counting, and can rather be interpreted as a drawback of the renormalized series: they have no physically meaningful consequence, and in addition (in just renormalizable models, but not for super-renormalizable models like the ones treated here) results in "renormalon effects" that typically prevents from constructing a convergent series. In a second step, we will prove that the sum over scale attributions can be performed, and the cut-off $\rho$ sent to infinity while keeping the amplitudes finite. 

  \subsection{Classification of forests}

We follow the general classification procedure of \cite{vincent_book}, which at each scale attribution allows to factorize the contraction operators defining the renormalized amplitude. Let $\cG$ be a vertex-connected (non-vacuum) graph. We can decompose the renormalized amplitude $\cA_\cG^R$ in terms of its scale attributions:
\beq\label{2sums}
\cA_\cG^R = \sum_\mu \sum_{\cM \in \cM(\cG)} \prod_{m \in \cM}(- \tau_m) \cA_{\cG , \mu} \,. 
\eeq
The classification of forests is a reshuffling of the sum over meloforests that allows to permute the two sums. We know that for a given scale attribution $\mu$, the forests that contribute to the divergences are those containing high melopoles. We therefore need to define the notion of high meloforest, and reorganize the sum in terms of these quantities. We follow the standard procedure \cite{vincent_book}, and start with the following set of definitions.

\begin{definition}
Let $\cG$ be a vertex-connected graph, $\mu$ a scale attribution, and $\cM$ a meloforest of $\cG$.
\begin{enumerate}[(i)]
\item We say that a subgraph $g \subset \cG$ is \textit{compatible} with a meloforest $\cM$ if $\cM \cup \{ g \}$ is a forest.
\item If $g$ is compatible with a meloforest $\cM$, we note $B_{\cM} (m)$ the \textit{ancestor} of $g$ in $\cM \cup \{ g \}$, and we similarly call $A_\cM (g) \equiv \{m \subset g | m \in \cM \}$ the \textit{descendants}.
\item \textit{Internal and external scales} of a compatible graph $g$ in a meloforest $\cM$ are defined by:
\beq
i_{g , \cM}(\mu) = \inf_{e \in L( g \backslash A_\cM (g) )} i_e(\mu)\,, 
\qquad e_{g , \cM}(\mu) = \sup_{e \in N(g) \cap B_\cM (g)} i_e(\mu).
\eeq
\item The \textit{dangerous} part of a meloforest $\cM$ with respect to $\mu$ is:
\beq
D_\mu (\cM) = \{ m \in \cM | i_{m , \cM}(\mu) > e_{m , \cM}(\mu) \}\,,
\eeq 
and the \textit{inoffensive} part is the complement $I_\mu (\cM) = \cM \backslash D_\mu (\cM)$. Finally $I(\mu)$ is the set of all inoffensive forests in $\cG$.
\end{enumerate}
\end{definition} 
{\bf{Remarks.}} The notions of internal and external scales with respect to a meloforest are consistent with the previous definitions, since $i_{m , \emptyset} = i_m$ and $e_{m, \emptyset} = e_m$. Moreover, (non-vacuum) melopoles have exactly two external legs, which makes the situation relatively simple.

The following important lemma leads to the partition of forests.
\begin{lemma}
Given a meloforest $\cM$,
\beq
I_\mu ( I_\mu (\cM) ) =  I_\mu (\cM)\,. 
\eeq
\end{lemma}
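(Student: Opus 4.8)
The plan is to set $\cM' \equiv I_\mu(\cM)$ and to prove the slightly stronger statement that $D_\mu(\cM') = \emptyset$, from which the claim follows at once: $I_\mu(\cM') = \cM' \setminus D_\mu(\cM') = \cM'$, the inclusion $I_\mu(\cM') \subseteq \cM'$ being automatic since $I_\mu$ only removes elements. Concretely, I would fix an arbitrary $m \in \cM'$ and show that it stays inoffensive after the dangerous melopoles have been deleted, i.e. that $i_{m, \cM'}(\mu) \leq e_{m, \cM'}(\mu)$. This is the tensorial analogue of the idempotence used in the standard multiscale forest classification, and the whole content is a monotonicity principle for the two quantities governing inoffensiveness.

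The key step I would establish is that, for the passage from $\cM$ to any subforest $\cM' \subseteq \cM$, the internal scale can only decrease and the external scale can only increase. For the internal scale, since $\cM' \subseteq \cM$ the descendants can only shrink, $A_{\cM'}(m) = \{m'' \subset m \mid m'' \in \cM'\} \subseteq A_\cM(m)$, so the lines carried by descendants decrease and $L(m \setminus A_{\cM'}(m)) \supseteq L(m \setminus A_\cM(m))$; taking the infimum of $i_e(\mu)$ over a larger set of lines gives $i_{m, \cM'}(\mu) \leq i_{m, \cM}(\mu)$. For the external scale, the elements of $\cM$ strictly containing $m$ form a chain by the forest property, so deleting its smallest members can only push the minimal remaining ancestor upward: $B_{\cM'}(m) \supseteq B_\cM(m)$. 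Consequently more external legs of $m$ become internal to the ancestor, $N(m) \cap B_{\cM'}(m) \supseteq N(m) \cap B_\cM(m)$, and taking the supremum over a larger set yields $e_{m, \cM'}(\mu) \geq e_{m, \cM}(\mu)$. I would also remark that removing a melopole incomparable to $m$ (neither an ancestor nor a descendant) changes neither $A_\cM(m)$ nor $B_\cM(m)$, so these two monotone effects are the only ones that can occur.

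These two inequalities then chain with the inoffensiveness of $m$ in $\cM$. Since $m \in \cM' = I_\mu(\cM)$ means precisely $m \notin D_\mu(\cM)$, that is $i_{m, \cM}(\mu) \leq e_{m, \cM}(\mu)$, I obtain
\beq
i_{m, \cM'}(\mu) \leq i_{m, \cM}(\mu) \leq e_{m, \cM}(\mu) \leq e_{m, \cM'}(\mu)\,,
\eeq
so $m \notin D_\mu(\cM')$. As $m \in \cM'$ was arbitrary, $D_\mu(\cM') = \emptyset$, which proves the lemma.

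The only genuine care required, and hence the main potential obstacle, is the bookkeeping of the sets $L(m \setminus A_\cM(m))$ and $N(m) \cap B_\cM(m)$ in the face-connected setting: one must check that the classical forest identities (union of descendant lines equals union of maximal descendant lines, nesting of ancestors into a single chain) still hold when vertex-connectedness is replaced by face-connectedness. Because the melopoles in a meloforest are single-vertex tadpoles and non-vacuum melopoles carry exactly two external legs, their mutual relations are still governed by ordinary nesting and line-disjointness, so these identities hold verbatim and the statement reduces to the elementary monotonicity computation above. I would also record in passing that the finiteness of $e_{m,\cM}(\mu)$ for $m \in I_\mu(\cM)$ guarantees $N(m) \cap B_\cM(m) \neq \emptyset$, so no empty-supremum pathology arises.
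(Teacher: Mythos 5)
Your proof is correct and is essentially the argument the paper itself defers to: the paper's proof consists of the single line ``similar to Rivasseau's book, but simpler,'' and what you have written out --- the monotonicity $i_{m,\cM'}(\mu) \leq i_{m,\cM}(\mu)$ and $e_{m,\cM'}(\mu) \geq e_{m,\cM}(\mu)$ for any subforest $\cM' \subseteq \cM$ (descendants can only shrink, the ancestor can only grow), chained through the inoffensiveness of $m$ in $\cM$ to give $D_\mu(I_\mu(\cM)) = \emptyset$ --- is precisely that standard multiscale forest-classification argument, correctly specialized to meloforests. Your side remarks (the chain structure of ancestors, the non-emptiness of the sets over which the infimum and supremum are taken) are also sound, so there is nothing to correct.
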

\begin{proof}
Similar to \cite{vincent_book}, but simpler. 
\end{proof}

This implies that the set of meloforests $\cM( \cG )$ of a vertex-connected graph $\cG$ can
be partitioned, according to the inoffensive forests associated to any scale attribution $\mu$:
\beq
\cM( \cG ) = \underset{\cM | I_\mu (\cM) = \cM }{\bigcup} \{ \cM' | I_\mu ( \cM' ) = \cM \}\,.
\eeq
We can finally characterize the equivalence class of a meloforest $\cM$ by introducing its maximal forest $\cM \cup H_\mu (\cM)$, where
\beq
H_\mu (\cM) = \{ m \; {\rm compatible} \; {\rm with} \; \cM \, |  \; m \in D_\mu ( \cM \cup \{ m \})\}\,.
\eeq
We can indeed show that:
\begin{proposition}
For any $\cM \in I(\mu)$, $\cM \cup H_\mu (\cM)$ is a meloforest, and moreover:
\beq
\forall \cM' \in \cM(\cG), \, I_\mu(\cM') = \cM \Longleftrightarrow \cM \subset \cM' \subset \cM \cup H_\mu (\cM)\,.
\eeq
\end{proposition}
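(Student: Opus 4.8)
The statement is a standard Gallavotti--Nicol\`o-type forest classification result, adapted to the tensorial setting where face-connectedness replaces vertex-connectedness and melopoles play the role of the divergent subgraphs. The plan is to follow the general strategy of \cite{vincent_book}, verifying at each step that the refined notions introduced in this chapter (meloforests, face-connected melopoles, and the internal/external scales $i_{m,\cM}$, $e_{m,\cM}$ relative to a meloforest) behave exactly as their scalar-theory counterparts. Two things must be established: first, that $\cM \cup H_\mu(\cM)$ is itself a meloforest; and second, the equivalence $I_\mu(\cM') = \cM \Leftrightarrow \cM \subset \cM' \subset \cM \cup H_\mu(\cM)$.

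First I would prove that $\cM \cup H_\mu(\cM)$ is a meloforest. By construction every element of $H_\mu(\cM)$ is a face-connected melopole compatible with $\cM$, so what remains is to check the forest property among elements of $H_\mu(\cM)$ themselves. Here I would rely on the fact that melopoles inherit the strong inclusion/disjointness dichotomy: by Proposition \ref{curiosity}, any two melonic (in particular melopole) subgraphs are line-disjoint or nested, and if their union is melonic they must be nested. The dangerous condition $m \in D_\mu(\cM \cup \{m\})$ means $i_{m,\cM}(\mu) > e_{m,\cM}(\mu)$, i.e. all internal lines of $m$ (outside its descendants in $\cM$) have scales strictly higher than its external legs hooked to external faces. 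Given two such dangerous compatible melopoles $m_1, m_2$, I would argue that if they shared a line without being nested, the scale comparison at their overlap would be contradictory --- precisely the argument that high subgraphs form a tree under inclusion. This uses the key fact that melopoles have exactly two external legs (noted in the Remarks), which makes the external-scale $e_{m,\cM}$ well-defined and keeps the combinatorics simple.

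Next I would establish the equivalence. For the direction $I_\mu(\cM') = \cM \Rightarrow \cM \subset \cM' \subset \cM \cup H_\mu(\cM)$: the inclusion $\cM = I_\mu(\cM') \subset \cM'$ is immediate since $I_\mu$ only removes elements. For the upper bound, I would take any $m \in \cM' \setminus \cM = D_\mu(\cM')$ and show $m \in H_\mu(\cM)$, i.e. that $m$ is compatible with $\cM$ and dangerous with respect to $\cM$; compatibility follows because $\cM \cup \{m\} \subset \cM'$ is a forest, and the dangerousness should transfer from $D_\mu(\cM')$ to $D_\mu(\cM \cup \{m\})$ by comparing internal and external scales, using that removing inoffensive intermediate ancestors can only lower $e_{m,\cM}$ relative to $e_{m,\cM'}$ while leaving $i_{m,\cM}$ controlled. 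For the converse $\cM \subset \cM' \subset \cM \cup H_\mu(\cM) \Rightarrow I_\mu(\cM') = \cM$: I would show that every element of $\cM$ stays dangerous in $\cM'$ (so $\cM \subset D_\mu(\cM')$) and every element of $\cM' \setminus \cM \subset H_\mu(\cM)$ is inoffensive in $\cM'$ (so it is removed by $I_\mu$), again by careful bookkeeping of ancestors and descendants and the monotonicity of the scale infima/suprema under enlarging the forest.

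The main obstacle will be the scale-comparison bookkeeping in the equivalence, specifically showing that the dangerous/inoffensive status of a melopole $m$ is correctly preserved when one passes between the reference forests $\cM$, $\cM \cup \{m\}$, and $\cM'$. The subtlety is that $i_{m,\cM}$ and $e_{m,\cM}$ depend on which descendants $A_\cM(m)$ have already been contracted and which ancestor $B_\cM(m)$ supplies the external legs, so enlarging the forest changes both the set of internal lines counted and the external scale. The idempotency lemma $I_\mu(I_\mu(\cM)) = I_\mu(\cM)$ is the linchpin that makes the partition well-defined, and I would invoke it together with the nesting structure of melopoles to guarantee these scale quantities shift monotonically and consistently. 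Once this monotonicity is pinned down, the two implications follow by the same combinatorial argument as in the local case, and the proof closes.
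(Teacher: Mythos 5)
Your overall route is the same as the paper's: the paper's proof of this proposition is literally a one-line deferral to the standard forest classification of \cite{vincent_book} (``Similar to \cite{vincent_book}, but simpler''), and your plan --- idempotency lemma, compatibility, nesting of high subgraphs, scale bookkeeping --- is the correct adaptation of that argument, with the first half (that $\cM \cup H_\mu(\cM)$ is a meloforest) and the forward implication structured correctly. However, your backward implication is stated with the roles of \emph{dangerous} and \emph{inoffensive} systematically exchanged, and as written it would establish the negation of what is needed. You propose to show that ``every element of $\cM$ stays dangerous in $\cM'$ (so $\cM \subset D_\mu(\cM')$)'' and that ``every element of $\cM' \setminus \cM \subset H_\mu(\cM)$ is inoffensive in $\cM'$ (so it is removed by $I_\mu$)''. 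By definition $I_\mu(\cM') = \cM' \setminus D_\mu(\cM')$, so $I_\mu$ \emph{keeps} the inoffensive elements and \emph{removes} the dangerous ones; your parenthetical already contradicts this. Worse, if your two claims held, you would get $\cM \subset D_\mu(\cM')$ and $\cM'\setminus\cM \subset I_\mu(\cM')$, i.e. $I_\mu(\cM') = \cM'\setminus\cM$, which is incompatible with the target $I_\mu(\cM') = \cM$ except in trivial cases. What must actually be shown is the opposite: elements of $\cM$ (inoffensive relative to $\cM$, since $\cM \in I(\mu)$) \emph{remain inoffensive} relative to $\cM'$, and elements of $\cM'\setminus\cM \subset H_\mu(\cM)$ (dangerous relative to $\cM \cup \{m\}$, by definition of $H_\mu$) \emph{remain dangerous} relative to $\cM'$ --- in other words, the dangerous/inoffensive status of a subgraph is insensitive to adding or removing \emph{dangerous} elements of the reference forest.

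The same swap appears, more mildly, in your forward direction: the intermediate elements removed in passing from $\cM'$ to $\cM \cup \{m\}$ are the elements of $D_\mu(\cM')\setminus\{m\}$, i.e. \emph{dangerous} ones, not ``inoffensive intermediate ancestors'' as you write. This is not cosmetic: the monotonicity you invoke ($e_{m,\cM\cup\{m\}}$ not exceeding $e_{m,\cM'}$, and $i_{m,\cM\cup\{m\}}$ staying above it) is not automatic --- naively, removing an ancestor enlarges $B(m)$ and hence enlarges the set of legs over which the supremum $e$ is taken, while removing descendants enlarges the set of lines over which the infimum $i$ is taken. It is precisely the dangerousness of the removed elements (all their internal lines carry scales strictly above their external scales) that forces the extra legs and extra lines to come with the right scale hierarchy; for inoffensive removed elements the inequalities you assert can fail. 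Once the two labels are interchanged throughout, your scale-comparison plan closes along the standard lines of \cite{vincent_book}, matching the paper's purely citational proof.
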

\begin{proof}
Similar to \cite{vincent_book}, but simpler. 
\end{proof}

This finally allows to reorganize the operator defining the renormalized amplitude as
\beq
\sum_{\cM \in \cM(\cG)} \prod_{m \in \cM}(- \tau_m) = \sum_{\cM \in I(\mu)} \prod_{m \in \cM} (- \tau_m) \prod_{h \in H_\mu (\cM)} (1 - \tau_h)\,,
\eeq
which decomposes the product of contraction operators into inoffensive parts and high parts. And since it holds for any $\mu$, we can use this formula to invert the two sums in (\ref{2sums}) and obtain:
\bes\label{classified}
\cA_\cG^R &=& \sum_{\cM \in \cM(\cG)} \cA_{\cG , \cM}^R\,,\\
\cA_{\cG , \cM}^R &=& \sum_{\mu | \cM \in I(\mu)} \prod_{m \in \cM} (- \tau_m) \prod_{h \in H_\mu (\cM)} (1 - \tau_h) \cA_{\cG , \mu}\,.
\ees
The factorization (\ref{classified}) is key to the proof of finiteness. We shall first show that, with respect to the bare theory, the power-counting of $\cA_{\cG , \cM}^{R}$ for a given scale attribution is improved, and is always convergent. We will then explain why the sum over scales is finite given such convergent multiscale bounds. The final sum over meloforests will not bring more divergences, since their cardinal is finite (and even bounded by $K^{n(\cG)}$
for some $K>0$).
	
  \subsection{Power-counting of renormalized amplitudes}	

Let us fix a meloforest $\cM$ and a scale attribution $\mu$ such that $\cM \in I(\mu)$. The product of operators acting on $\cA_{\cG , \mu}$ in (\ref{classified}) can be computed explicitly. We can for example first act with $\underset{m \in \cM}{\prod} \tau_m$ which evaluates as
\beq
\prod_{m \in \cM} \tau_m \, \cA_{\cG , \mu} = \left( \prod_{m \in \cM} \nu_\mu (m / {A_\cM(m)}) \right) \cA_{\cG / \cM , \mu}\,,
\eeq
where $\cG / \cM$ is the graph obtained from $\cG$ once all the subgraphs of $\cM$ have been contracted. This graph is nothing but $\cG / {A_{\cM}(\cG)}$. 
$\nu_\mu$ is a generalized notion of amplitude associated to subgraphs $\cH \subset \cG$, that just discards the contributions of external faces. In this sense, it is analogue to an amputated amplitude in usual field theories. In particular, we can assume that $\cA_{\cG / \cM , \mu}$ is an amputated amplitude and write:
\beq\label{intermediate}
\prod_{m \in \cM} \tau_m \, \cA_{\cG , \mu} = \prod_{ g \in \cM \cup \{ \cG \}} \nu_\mu (g / {A_{\cM }(g)})\,.
\eeq
The power-counting, which only depends on internal faces, is unaffected by the fact that we are working with such amputated amplitudes, and we conclude that
\beq
\vert \prod_{m \in \cM} (- \tau_m) \cA_{\cG , \mu} \vert \leq K^{L(\cG)} \prod_{ g \in \cM \cup \{ \cG \}} \prod_{(i , k)} M^{\omega[ ( g / {A_{\cM}(g)} )_i^{(k)} ]}\,.
\eeq 
This is a generalization of the power-counting (\ref{fund_ab}), and reduces to it when $\cM = \emptyset$. This proves that the sum over inoffensive forests does not improve nor worsen the power-counting, as was expected. Finiteness is entirely implemented by the useful part of the contraction operators, namely $\underset{h \in H_\mu (\cM)}{\prod} (1 - \tau_h)$. To make this apparent, we first write it as
\beq
\prod_{h \in H_\mu (\cM)} (1 - \tau_h) = \prod_{g \in \cM \cup \{ \cG \}} \prod_{h \in H_\mu (\cM) | B_\cM(h) = g} (1 - \tau_h)
\eeq
and act on (\ref{intermediate}) to get
\beq\label{intermediate2}
\vert \prod_{h \in H_\mu (\cM)} (1 - \tau_h) \prod_{m \in \cM} (- \tau_m) \cA_{\cG , \mu}  \vert
=  \prod_{ g \in \cM \cup \{ \cG \}} \prod_{h \in H_\mu (\cM) | B_\cM(h) = g} \vert (1 - \tau_h) \,  \nu_\mu (g / {A_{\cM}(g)}) \vert\,.
\eeq
Now, the effect of each $(1 - \tau_h)$ is to interpolate one of the variables of (at most two) external propagators in $N (h) \cap ( g / {A_{\cM}(g)} )$. For example, assuming the fourth variable is concerned (that is $h$ is a melopole that has been inserted on an colored line of color $4$), we have something of the form
\beq
C_{i}( \theta_1 , \dots , \theta_4 ; \theta_\ell') - C_i( \theta_1 , \dots , \tilde{\theta}_4 ; \theta_\ell')
= \int_{0}^{1} \extd t \left( \theta_4 - \tilde{\theta}_4  \right) \frac{\partial}{ \partial \theta_4} C_i( \theta_1 , \dots , \tilde{\theta}_4 + t (\theta_4 - \tilde{\theta}_4) ; \theta_\ell')\,,
\eeq
with $i \leq e_{\cM , h} (\mu)$. Moreover, since $h$ is high in $g / {A_{\cM}(g)}$, $| \theta_4 - \tilde{\theta}_4 |$ is at most of order $M^{- i_{\cM , h} (\mu)} $. So using the bound (\ref{deriv_bound_ab}) on derivatives of the propagator, we conclude that $(1 - \tau_h)$ improves the bare power-counting by a factor:
\beq
M^{i} |\theta_4 - \tilde{\theta}_4| \leq K M^{ e_{\cM , h} (\mu) - i_{\cM , h} (\mu) }\,.
\eeq

This additional decay allows to prove the following proposition.
\begin{proposition}\label{pc_r}
There exists a constant $K$ such that for any graph $\cG$ and meloforest $\cM$:
\beq
| \cA_{\cG, \cM}^{R} | \leq  K^{L(\cG)}  \sum_{\mu | \cM \in I(\mu)} \prod_{ g \in \cM \cup \{ \cG \}} \prod_{ (i , k) } M^{\omega'[ ( g / {A_{\cM}(g)} )_i^{(k)} ]}\,,
\eeq
where
\beq
\omega'[ ( g / {A_{\cM}(g)} )_i^{(k)} ] = \min \{ - 1 , \omega[ ( g / {A_{\cM}(g)} )_i^{(k)} ] \}\,,
\eeq
except if $g \in \cM$ and $( g / {A_{\cM}(g)} )_i^{(k)} = g / {A_{\cM}(g)}$, in which case $\omega'(( g / {A_{\cM}(g)} )_i^{(k)}) = 0$.
\end{proposition}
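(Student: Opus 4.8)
The plan is to prove this renormalized power-counting bound by combining the factorization established in equation (\ref{intermediate2}) with the improvement from the $(1-\tau_h)$ operators, and then recombining these improvements with the bare power-counting of Proposition \ref{prop:abpc}. First I would take the explicit factorized expression (\ref{intermediate2}), in which the renormalized amplitude at fixed scale attribution $\mu$ (with $\cM \in I(\mu)$) is written as a product over $g \in \cM \cup \{ \cG \}$ of amputated amplitudes $\nu_\mu( g / A_\cM(g))$ acted upon by the high contraction operators $(1-\tau_h)$ with $B_\cM(h) = g$. The strategy is to bound each factor separately.

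Next I would apply the key estimate derived just before the statement: for each high melopole $h \in H_\mu(\cM)$, the operator $(1-\tau_h)$ introduces a Taylor remainder which, using the derivative bound (\ref{deriv_bound_ab}) on the propagator together with the fact that $h$ is high in $g/A_\cM(g)$ (so $|\theta_4 - \tilde\theta_4| \leq K M^{-i_{\cM,h}(\mu)}$), yields an improvement factor bounded by $K M^{e_{\cM,h}(\mu) - i_{\cM,h}(\mu)}$. Since $h$ is dangerous relative to $\cM \cup \{h\}$, this exponent is strictly negative. The combinatorial heart of the argument is then to show that distributing these negative exponents $M^{e_{\cM,h}(\mu) - i_{\cM,h}(\mu)}$ along the Gallavotti-Nicol\`o tree exactly converts the non-negative degrees $\omega[(g/A_\cM(g))_i^{(k)}]$ appearing in the bare bound into the strictly improved degrees $\omega'$. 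This works through the usual telescoping: each unit of $M^{e_h - i_h}$ can be distributed among the slices $i$ between the external and internal scales of $h$, lowering by one the effective degree of each node $(g/A_\cM(g))_i^{(k)}$ that contains $h$ as a genuine divergent high subgraph. Because the only divergent degrees are $\omega = 0$ (non-vacuum melopoles give $\omega = 0$ by Proposition \ref{fund_u1}), a single improvement suffices to bring each dangerous node down to $\omega' \leq -1$.

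The subtlety to handle carefully is the exceptional case in the statement: when $g \in \cM$ and the high component $(g/A_\cM(g))_i^{(k)}$ is all of $g/A_\cM(g)$, one sets $\omega' = 0$ rather than $-1$. This reflects the fact that the contraction $\tau_g$ has already been applied to such a $g$ (it belongs to $\cM$, not to $H_\mu(\cM)$), so no further $(1-\tau_h)$ improvement is available for the node consisting of $g$ itself seen from its own root; its divergence has been factored out into the coefficient $\nu_\mu(g/A_\cM(g))$, which contributes the marginal $M^0$ scaling rather than a genuine divergence. I would verify that the classification (\ref{classified}) correctly assigns every genuinely divergent high subgraph either to $\cM$ (already contracted, contributing the $\nu_\mu$ coefficients) or to $H_\mu(\cM)$ (improved by $(1-\tau_h)$), so that no uncompensated $\omega \geq 0$ node survives except in the explicitly excepted marginal case.

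The main obstacle I anticipate is the bookkeeping of the distribution of improvement factors across the GN tree in the presence of the nested contractions: one must ensure that when a node $(g/A_\cM(g))_i^{(k)}$ contains several high melopoles, the improvements combine additively and that assigning the $M^{e_h - i_h}$ factor to the appropriate slices never double-counts or leaves a node unimproved. This is the standard but delicate step where one tracks, for each $h$, which GN nodes strictly between $e_{\cM,h}(\mu)$ and $i_{\cM,h}(\mu)$ receive the decay; since here all relevant melopoles are two-point and the divergent degree is only logarithmic ($\omega = 0$), the accounting is simpler than in a just-renormalizable model, and I expect the argument of \cite{vincent_book} to transpose with only notational changes, the replacement of vertex-connectedness by face-connectedness having already been absorbed into the definitions of $i_{g,\cM}$, $e_{g,\cM}$, and the dangerous/inoffensive partition.
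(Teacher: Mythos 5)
Your proposal follows essentially the same route as the paper's own proof: it starts from the factorized expression (\ref{intermediate2}), invokes the derivative bound (\ref{deriv_bound_ab}) to convert each $(1-\tau_h)$ into an improvement factor $K M^{e_{\cM,h}(\mu)-i_{\cM,h}(\mu)}$, telescopes this decay over the slices between the external and internal scales of $h$ to lower each divergent ($\omega=0$, melopole) GN node of $g/A_\cM(g)$ to $\omega'=-1$, and correctly isolates the roots of $g\in\cM$ as the marginal $M^0$ exception since their divergence sits in the amputated coefficient $\nu_\mu(g/A_\cM(g))$ and receives no $(1-\tau)$ operator. The paper's proof is in fact terser than your write-up, asserting the same improvement "for any high subgraph that has external legs in a contraction $g/A_\cM(g)$, that is any high subgraph different from a root," so your account is a faithful (and slightly more explicit) rendering of the same argument.
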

\begin{proof}
From (\ref{intermediate2}), and using the additional decays from operators $(1 - \tau_h)$, one improves the degree by a factor $-1$ for most of the high subgraphs. More precisely, this is possible for any high subgraph that has external legs in a contraction $g / {A_{\cM}(g)}$, that is any high subgraph 
$( g / {A_{\cM}(g)} )_i^{(k)}$ different from a root  $g / {A_{\cM}(g)}$.
\end{proof}

  \subsection{Sum over scale attributions}

Equipped with this improved power-counting, we can finally prove that the renormalized amplitudes are finite. For clarity of the presentation, let us first show it for a fully convergent graph $\cG$, that is a graph with no melopole.   
In this case, we know that:
\beq
| \cA_{\cG , \mu} | \leq K^{L(\cG)}  \prod_{ (i , k ) } M^{- N (\cG_i^{(k)}) / 4}\,,
\eeq
from which we need to extract enough decay in $\mu$ to sum over the scale attributions. Let $\cB(\cG)$ be the set of vertices (i.e. $4$-bubbles) of $\cG$, and for $b \in \cB(\cG)$ let us call $L_b (\cG)$ the set of lines that are hooked to it. We can define notions of internal and external scales associated to a bubble $b$:
\beq
i_b (\mu) = \sup_{l \in L_b (\cG)} i_l (\mu)  \,, \qquad  e_b (\mu) = \inf_{l \in L_b (\cG)} i_l (\mu)\,.
\eeq
We then remark that for any $i \in \mathbb{N}$ and $b \in \cB(\cG)$, $b$ touches a high subgraph $\cG_i^{(k)}$ if and only if $i \leq i_b (\mu)$. Moreover when it does, the number of high subgraphs $\cG_i^{(k)}$ that touch $b$ is certainly bounded by its number of external legs, and therefore by $p$. Hence we can assign a fraction $1 / p$ of the decay of a bubble to every high subgraph with respect to which it is external. This yields  
\beq
 \prod_{ (i , k ) } M^{- N (\cG_i^{(k)}) / 4} \leq  \prod_{ (i , k ) } \prod_{b \in \cB(\cG_i^{(k)})| e_b (\mu) < i \leq i_b (\mu)}
 M^{-  \frac{1}{4p}}\,,
\eeq 
by using the fact that $b$ is an external vertex of $\cG_i^{(k)}$ exactly when $e_b (\mu) < i \leq i_b (\mu)$. We can then invert the two products and obtain
\beq
| \cA_{\cG , \mu} | \leq K^{L(\cG)}  \prod_{ b \in \cB(\cG) } \prod_{ (i , k) | e_b (\mu) < i \leq i_b (\mu)} M^{-  \frac{1}{4p}} = K^{L(\cG)}  \prod_{ b \in \cB(\cG) } M^{- \frac{i_b (\mu) - e_b (\mu)}{3 p}}\,.
\eeq
Finally, since the number of pairs of legs hooked to a given vertex $b$ is bounded by $p (p -1) / 2$, we can finally conclude that
\beq
| \cA_{\cG , \mu} | \leq K^{L(\cG)}  \prod_{ b \in \cB(\cG) } \prod_{(l , l') \in L_b (\cG) \times L_b (\cG)} M^{- \frac{ 2 | i_{l'} (\mu) - i_l (\mu) | }{3 p^2 (p - 1)}}\,.
\eeq
With this decay at hand, the sum over scales can be performed by picking a 'tree of scales', very similarly to the choice of a tree adapted to the GN tree that establishes the power-counting. We refer to \cite{vincent_book} or to Section \ref{su2_cvpc} for more details about this procedure, and just state the resulting proposition.

\begin{proposition}
There exists a constant $K > 0$ such that the amplitude of any fully convergent graph $\cG$ is absolutely convergent with respect to $\mu$, and moreover
\beq
\sum_{\mu} | \cA_{\cG , \mu} | \leq K^{L (\cG)} \,.
\eeq
\end{proposition}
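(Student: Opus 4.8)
The statement to establish is the summability over scale attributions for a fully convergent graph $\cG$, given the per-$\mu$ bound on $|\cA_{\cG,\mu}|$ derived just before. The plan is to start from the inequality
\beq\label{startingpoint}
| \cA_{\cG , \mu} | \leq K^{L(\cG)}  \prod_{ b \in \cB(\cG) } \prod_{(l , l') \in L_b (\cG) \times L_b (\cG)} M^{- \frac{ 2 | i_{l'} (\mu) - i_l (\mu) | }{3 p^2 (p - 1)}}\,,
\eeq
which already exhibits an exponential decay in the pairwise differences of scales for every pair of lines hooked to a common vertex. The goal is to convert these pairwise decays into enough independent decay to sum each $i_e$ freely over $\mathbb{N}$. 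The standard device, as in \cite{vincent_book}, is to trade the vertex-based decays for a \emph{tree of scales}: for each bubble $b$ one keeps only a spanning set of pairs $(l,l')$ sufficient to connect its legs, and then assembles these into a global structure that lets one integrate scales from the leaves of the Gallavotti-Nicol\`o tree towards the root.

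\textbf{Key steps, in order.} First I would fix a rooted spanning tree $\cT_s$ of $\cG$ in the line-vertex sense and orient it away from the root; every non-root line $e$ then has a well-defined parent line, and the pairwise decay \eqref{startingpoint} at the vertex joining $e$ to its parent supplies a factor $M^{-c|i_e - i_{\mathrm{par}(e)}|}$ with $c \equiv \frac{2}{3 p^2 (p-1)}$. Retaining only these tree-adjacent factors (discarding the others, which are $\le 1$) gives
\beq
| \cA_{\cG , \mu} | \leq K^{L(\cG)} \prod_{e \neq \mathrm{root}} M^{- c | i_e - i_{\mathrm{par}(e)} |}\,.
\eeq
Second, I would perform the sum over $\mu = \{i_e\}$ recursively from the leaves inward. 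For a leaf line $e$ with parent scale $i_{\mathrm{par}(e)}$ fixed, the geometric sum
\beq
\sum_{i_e \in \mathbb{N}} M^{- c | i_e - i_{\mathrm{par}(e)} |} \leq \frac{1 + M^{-c}}{1 - M^{-c}} \equiv K'
\eeq
is bounded uniformly in $i_{\mathrm{par}(e)}$, since $M>1$ implies $M^{-c}<1$. Having summed the leaf, one contracts it and repeats: the remaining line closest to a leaf now becomes summable with a uniform bound, and so on up the tree. Each of the $L(\cG)-1$ non-root lines contributes at most a factor $K'$, and finally the root scale $i_{\mathrm{root}}$ is summed against the overall mass decay $\e^{-\alpha m^2}$ (which guarantees absolute convergence of the single remaining free scale), or absorbed since the global amplitude already carries the $i=0$ infrared cutoff. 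Collecting constants,
\beq
\sum_{\mu} | \cA_{\cG , \mu} | \leq K^{L(\cG)} (K')^{L(\cG)} \leq \widetilde{K}^{L(\cG)}\,,
\eeq
which is the claimed form after renaming $\widetilde{K} \to K$.

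\textbf{Main obstacle.} The one genuinely delicate point is verifying that the pairwise factors in \eqref{startingpoint} are rich enough to cover a \emph{connected} tree of scales: one must check that for a connected graph $\cG$ the tree-adjacent pairs $(e, \mathrm{par}(e))$ can all be chosen so that both lines are hooked to a common vertex, so that each such pair genuinely appears in the product over $L_b(\cG) \times L_b(\cG)$. This is exactly where connectedness of $\cG$ is used, and it mirrors the choice, in the power-counting proof, of a face-tree $T_f$ compatible with the GN structure. I would make this precise by noting that any spanning tree $\cT_s$ of a vertex-connected graph has each of its lines sharing an endpoint vertex with its parent line, so the required pairs are available. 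Everything else is routine geometric summation; the combinatorial factor from the number of spanning trees and from the finitely many meloforests is controlled by a further $K^{n(\cG)}$, consistent with the bound $\sum_\mu |\cA_{\cG,\mu}| \leq K^{L(\cG)}$ absorbing all such constants.
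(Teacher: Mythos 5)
Your overall strategy, namely converting the pairwise vertex decays of the bound
\beq
| \cA_{\cG , \mu} | \leq K^{L(\cG)}  \prod_{ b \in \cB(\cG) } \prod_{(l , l') \in L_b (\cG) \times L_b (\cG)} M^{- \frac{ 2 | i_{l'} (\mu) - i_l (\mu) | }{3 p^2 (p - 1)}}
\eeq
into a tree of scales and summing geometrically from the leaves inward, is exactly the mechanism the paper uses (it defers the details to Section \ref{su2_cvpc} and to \cite{vincent_book}, where a connected total ordering of the lines with a parent map $j'$ plays the role of your rooted tree). However, your treatment of the last remaining scale is a genuine gap. After all non-root scales are summed, $i_{\mathrm{root}}$ still runs over all of $\mathbb{N}$, and you claim it is controlled either by "the overall mass decay $\e^{-\alpha m^2}$" or by "the $i=0$ infrared cutoff". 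Neither works. The mass factor suppresses the infrared slice only: in slice $i$ one has $\alpha \le M^{-2(i-1)}$, hence $\e^{-\alpha m^2} \ge \e^{- m^2 M^{-2(i-1)}} \rightarrow 1$ as $i \rightarrow \infty$, so it provides no decay in the ultraviolet direction, which is precisely where the infinite sum lives; and the infrared cutoff bounds scales from below, not from above. With no decay attached to the root line, the sum over $i_{\mathrm{root}}$ grows linearly with the UV cutoff, and the claimed cutoff-independent bound $K^{L(\cG)}$ fails.

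The correct anchor — and the reason the paper's ordering demands that $l_1$ be hooked to an \emph{external} vertex, together with the convention $i_{l_{j'(1)}} = -1$ — is the set of external legs of $\cG$. A bubble carrying a true external leg is an external vertex of \emph{every} high subgraph containing it, so in the pairwise bound its external legs enter $L_b(\cG)$ at the conventional scale $-1$; pairing the root line with such a leg yields the factor $M^{-c\,(i_{\mathrm{root}}+1)}$ that makes the final geometric sum converge. Note that your argument never uses that $\cG$ is non-vacuum, which is a warning sign: the decay closing the root sum exists only because $\cG$ itself, seen from its external legs, is a convergent high subgraph. A secondary, easily fixable imprecision: a spanning tree in the bubble--line sense has only $V(\cG)-1$ lines, so it does not by itself assign parents to the loop lines, whereas your final product must run over all $L(\cG)-1$ non-root lines; what is needed is the paper's connected ordering of \emph{all} lines (each new line sharing a bubble with an earlier one), not a spanning tree.
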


We now explain why similarly, when $\cG$ contains melopoles, the sum over $\mu$ in (\ref{classified}) can be performed without cut-off. From the power-counting of Proposition \ref{pc_r}, and given that melopoles have at most two external legs, one notices 
that
\beq
\omega[( g / {A_{\cM}(g)} )_i^{(k)}] \leq - \frac{N (( g / {A_{\cM}(g)} )_i^{(k)}) }{2}\,.
\eeq
So the decay that was proven for convergent graphs (\ref{fund_u1}) generalizes to
\beq
| \cA_{\cG, \cM}^{R} | \leq  K^{L(\cG)}  \sum_{\mu | \cM \in I(\mu)} \prod_{ g \in \cM \cup \{ \cG \}} \prod_{ (i , k) } M^{- \frac{N ( ( g / {A_{\cM}(g)} )_i^{(k)} )}{4} }\,.
\eeq  
The strategy used for proving convergence of fully convergent graphs is therefore applicable. We conclude that $\cA_{\cG , \cM }$ is an absolutely convergent series in $\mu$, and even bounded by $K^{L(\cG)}$ for some constant $K$. The final sum over meloforests is not problematic, as the number of melopoles associated to a bubble $b$ is clearly bounded by a constant (for example $2^{p/2}$). This means that the number of meloforests associated to a vertex $b$ is also bounded by a constant $K_1 > 0$, and since meloforests of graphs are by definition unions of meloforests associated to single vertices, the number of meloforests of $\cG$ is itself bounded by $K_1 ^{n (\cG)}$. Overall, we conclude that:
\begin{proposition}
There exists a constant $K > 0$, such that the renormalized amplitude of any (non-vacuum) graph $\cG$ verifies:
\beq
| \cA_{\cG}^{R} | \leq K^{L( \cG )} \,.
\eeq
\end{proposition}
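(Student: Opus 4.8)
The plan is to assemble the finiteness estimate from the three ingredients already prepared: the classified forest decomposition, the improved renormalized power-counting of Proposition \ref{pc_r}, and the combinatorial count of meloforests. First I would start from
\beq
\cA_\cG^R = \sum_{\cM \in \cM(\cG)} \cA_{\cG , \cM}^R
\eeq
and bound each term $\cA_{\cG , \cM}^R$ separately and uniformly in $\cM$, leaving the sum over meloforests to be handled at the very end by a crude counting argument.

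For a fixed meloforest $\cM$, I would combine Proposition \ref{pc_r} with the fact that (non-vacuum) melopoles carry exactly two external legs, so that $\omega'[(g/A_\cM(g))_i^{(k)}] \le -N((g/A_\cM(g))_i^{(k)})/4$ for every non-root Gallavotti-Nicol\`o node, giving
\beq
| \cA_{\cG, \cM}^{R} | \leq  K^{L(\cG)}  \sum_{\mu | \cM \in I(\mu)} \prod_{ g \in \cM \cup \{ \cG \}} \prod_{ (i , k) } M^{- \frac{N ( ( g / {A_{\cM}(g)} )_i^{(k)} )}{4} }\,.
\eeq
The decisive step is then the summation over scale attributions $\mu$, for which I would mimic the tree-of-scales argument used for fully convergent graphs. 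For each reduced amplitude $g/A_\cM(g)$ and each of its $4$-bubble vertices $b$, the number of high subgraphs $(g/A_\cM(g))_i^{(k)}$ of which $b$ is an external vertex is bounded by its valency, hence by $p$; assigning a fraction $1/p$ of each node's decay $M^{-N/4}$ to every bubble external to it, inverting the products, and using that $b$ is external to $(g/A_\cM(g))_i^{(k)}$ precisely when $e_b(\mu) < i \le i_b(\mu)$, one converts the product into a per-bubble decay of the form $M^{-(i_b(\mu)-e_b(\mu))/(c\,p)}$, and finally into an exponential decay in the scale differences $|i_l(\mu) - i_{l'}(\mu)|$ between pairs of lines $l,l'$ hooked to a common bubble. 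Since the group is compact and these differences control a spanning set of scale gaps, the resulting geometric sum over $\mu$ converges and is bounded, uniformly in $\cM$, by $K^{L(\cG)}$.

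It then remains to perform the sum over $\cM \in \cM(\cG)$. The number of face-connected melopoles that can be formed at a single $4$-bubble $b$ of valency at most $p$ is bounded by a constant (at most $2^{p/2}$ admissible pairings of its external legs), so the number of meloforests supported on $b$ is bounded by some $K_1 > 0$; since a meloforest of $\cG$ is a union of meloforests attached to the individual vertices, $|\cM(\cG)| \le K_1^{n(\cG)}$. Putting the pieces together,
\beq
| \cA_{\cG}^{R} | \le \sum_{\cM \in \cM(\cG)} | \cA_{\cG , \cM}^R | \le K_1^{n(\cG)} K^{L(\cG)}\,,
\eeq
and since $n(\cG) \le L(\cG) + 1$ for a non-vacuum connected graph, the right-hand side is absorbed into a single $K'^{L(\cG)}$, which is the claim.

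The genuinely delicate point is the $\mu$-summation, and specifically the verification that the tree-of-scales bookkeeping, stated as a proposition only for a single convergent graph, persists for the reduced and amputated amplitudes $g/A_\cM(g)$ populating the forest. One must check that the decay $M^{-N/4}$ attached to each node of every Gallavotti-Nicol\`o tree in the forest can be distributed consistently over the external bubbles without double-counting across nested elements of $\cM$. This is exactly what the factorized form (\ref{intermediate2}) secures, since the operators $(1-\tau_h)$ act independently within each cell $g/A_\cM(g)$ of the forest and the amputation leaves the internal-face power-counting untouched; once this uniformity in $\cM$ is in place, the remaining estimates reduce to routine geometric sums.
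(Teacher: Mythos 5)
Your proposal is correct and takes essentially the same route as the paper: the decomposition over meloforests, the improved power-counting of Proposition \ref{pc_r} combined with the two-external-leg property of melopoles to obtain the uniform $M^{-N/4}$ decay per Gallavotti-Nicol\`o node, the tree-of-scales summation over scale attributions borrowed from the fully convergent case, and the bound $K_1^{n(\cG)}$ on the number of meloforests. The only cosmetic addition is that you spell out the absorption of $K_1^{n(\cG)}$ into $K'^{L(\cG)}$ via $n(\cG) \leq L(\cG)+1$, which the paper leaves implicit.
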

This not only proves renormalizability of the model, but also that there is no renormalon effect. The latter is a specific feature of our super-renormalizable model, that would not hold for more complicated just-renormalizable models, as will be confirmed in the next chapter. In such situations, it will be preferable to resort to the \textit{effective series}, because it is the unphysical sum over inoffensive forests automatically generated in the renormalized series that is responsible for this undesirable effect (see \cite{vincent_book}). 

\
We finally state the main theorem of this chapter.
\begin{theorem}
The melordered $U(1)$ model in $d = 4$, with an arbitrary finite set of $4$-bubble interactions, is perturbatively finite at any order.
\end{theorem}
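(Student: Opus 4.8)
The plan is to treat this theorem as the bookkeeping corollary that assembles all the results of Section~\ref{sec:finiteness}, organizing the argument so that ``perturbative finiteness at any order'' is reduced to the uniform‑in‑$\rho$ finiteness of each connected renormalized amplitude. First I would recall that the melordered connected $N$‑point function $\cS_N^{\Omega_\rho}$ is a formal power series in the renormalized couplings $t_b^R$, whose coefficient at a fixed monomial $\prod_b (t_b^R)^{n_b}$ is the finite sum $\sum_{\cG} \frac{1}{s(\cG)}\cA_\cG^R$ over vertex‑connected non‑vacuum graphs $\cG$ with the prescribed vertex content. Since the interaction set $\cB$ is finite and the $n_b$ are fixed, only finitely many $\cG$ contribute at each order, so it suffices to bound each $\cA_\cG^R$ uniformly as $\rho \to \infty$. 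I would stress here that the submelonic vacuum divergences of Section~\ref{sec:sub} never enter, as they only affect $\cZ$ and not the connected functions; this is what lets me work with the plain melordering $\Omega_\rho$ rather than the extended one.

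Next I would invoke the structural inputs already established. Proposition~\ref{fund_u1} guarantees that in the multiscale decomposition of a non‑vacuum connected amplitude the only divergent face‑connected high subgraphs are melopoles, every other high subgraph carrying a strictly negative degree controlled by its number of external legs. The melordering counter‑terms are built to subtract precisely these melopoles, yielding the forest expansion $\cA_\cG^R = \sum_{\cM \in \cM(\cG)} \prod_{m\in\cM}(-\tau_m)\,\cA_\cG$. I would then apply the forest classification to commute the sums over scale attributions $\mu$ and over meloforests $\cM$, arriving at the factorized form (\ref{classified}), in which each forest splits into an inoffensive part carried by $\prod(-\tau_m)$ and a high part carried by the useful subtractions $\prod(1-\tau_h)$.

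The analytic heart, which I would simply cite, is Proposition~\ref{pc_r}: the operators $(1-\tau_h)$ exploit the derivative bound (\ref{deriv_bound_ab}) on the sliced propagator to upgrade the degree of each nonroot high subgraph from $\omega$ to $\omega'=\min\{-1,\omega\}$, so that every potentially divergent high subgraph becomes convergent uniformly in $\mu$. From there I would run the scale summation exactly as for a fully convergent graph: distribute the external‑face decays over the bubbles, convert them into a per‑bubble decay controlling the gap $i_b-e_b$ between internal and external bubble scales, and use a tree‑of‑scales argument to obtain $\sum_\mu |\cA_{\cG,\cM}^R| \le K^{L(\cG)}$. Because the number of meloforests of $\cG$ is bounded by $K_1^{n(\cG)}$ (each bubble admits boundedly many melopole closures), the residual sum over $\cM$ preserves the estimate, giving the uniform bound $|\cA_\cG^R|\le K^{L(\cG)}$ that constitutes the last proposition of Section~\ref{sec:finiteness}.

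The hard part is therefore not in this final assembly but has already been discharged in the preceding propositions; what remains delicate at the level of the theorem is purely organizational. One must apply the forest/scale reshuffling \emph{before} removing the cutoff, so that the bound $K^{L(\cG)}$ holds at finite $\rho$ and survives $\rho\to\infty$, and one must note that it is exactly the super‑renormalizable feature of the model---finitely many melopoles per bubble, hence no proliferation of nested counter‑terms---that makes the meloforest sum trivially summable and rules out any renormalon effect. Since the divergence classification, the melordered forest formula, the renormalized power‑counting, and the convergent scale sum are all in place, concluding perturbative finiteness of $\cS_N^{\Omega_\rho}$ at all orders is then immediate, and this is precisely the statement of the theorem.
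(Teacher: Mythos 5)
Your proposal is correct and follows essentially the same route as the paper: the theorem is indeed proved there as the assembly of the forest classification (equation (\ref{classified})), the improved renormalized power-counting of Proposition \ref{pc_r}, the scale summation carried out as for fully convergent graphs, and the exponential bound $K_1^{n(\cG)}$ on the number of meloforests, yielding the uniform bound $|\cA_\cG^R| \leq K^{L(\cG)}$. Your observations about the connected functions not seeing the submelonic vacuum counter-terms and about the absence of renormalons also match the paper's own remarks.
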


\section{Example: Wick-ordering of a $\vphi^6$ interaction}\label{sec:example}

To illustrate the results of this chapter, we provide here some more details about a particular simple model, with a single $\vphi^6$ interaction:
\bes
S(\vphi , \vphib) &=& \int [\extd g_i]^{12} \vphi( g_{1} , g_{2} , g_{3} , g_{4}) \vphib( g_{1} , g_{2} , g_{3} , g_{5}) \vphi( g_{8} , g_{7} , g_{6} , g_{5}) \\ 
&& \vphib( g_{8} , g_{9} , g_{10} , g_{11}) \vphi( g_{12} , g_{9} , g_{10} , g_{11}) \vphib( g_{12} , g_{7} , g_{6} , g_{4})\,.
\ees

It is an invariant, represented by the $4$-colored graph of Figure \ref{int6_ab}.
It is moreover melonic, and its external legs can be paired so as to form the vacuum melopole shown in Figure \ref{melop6}. This melopole strictly contains four
non-empty melopoles: $S_1 = \{ l_1\}$, $S_3 = \{ l_3\}$, $S_{12} = \{ l_1 , l_2\}$, $S_{23} = \{ l_2 , l_3 \}$. On the other hand, $\{ l_1 , l_3 \}$ and $\{ l_2 \}$ are not melopoles. 

\begin{figure}[h]
  \centering
  \subfloat[Interaction]{\label{int6_ab}\includegraphics[scale=0.5]{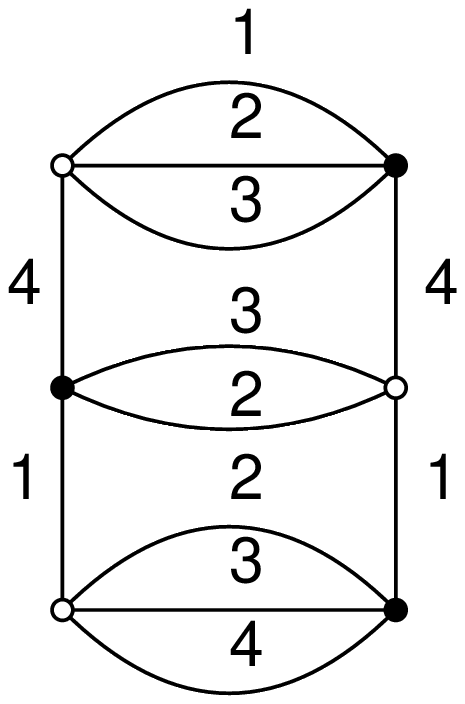}}                
  \subfloat[Vacuum melopole]
{\label{melop6}\includegraphics[scale=0.5]{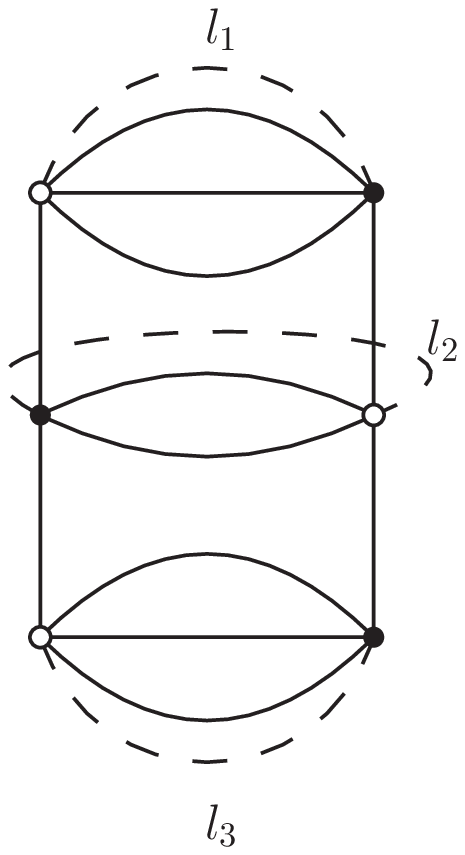}}
  \caption{Bubble interaction with color labels, and unique vacuum melopole that can be obtained from it.}
\end{figure}

We can construct $16$ meloforests out of these melopoles. Half of them, hence 8 do not contain the full graph $S_{123}$.
They are listed below according to the number of subgraphs:
\begin{itemize}
\item the empty forest $\emptyset$;
\item 4 forests with $1$ subgraph: $\{ S_{1} \}$, $\{ S_{3} \}$, $\{ S_{12} \}$, $\{ S_{23} \}$;
\item 3 forests with $2$ subgraphs: $\{ S_{1} , S_{12} \}$, $\{ S_{3} , S_{23} \}$, $\{ S_{1} , S_{3} \}$.
\end{itemize} 
The other half is simply obtained by adding $S_{123}$ to all of these forests. 

\
The melordering generates three kinds of counter-terms: vacuum terms, $2$-point function terms, and two types of $4$-point function terms. We call $b_2$ the $2$-point effective bubble, $b_{4,1}$ and $b_{4,4}$ the two $4$-point effective bubbles, as shown in Figure \ref{eff6}. The melordered interaction will take the form
\beq
\Omega_\rho (S) = S + t_{4,1} (\rho) \, b_{4,1} + t_{4,4} (\rho) \, b_{4,4} + t_{2} (\rho) \, b_{2} + t_\emptyset (\rho) \,,
\eeq
where $t$ are sums of products of coefficients $\nu$. To determine them, we need to analyze the contraction operators they correspond to.

\begin{figure}
\begin{center}
\includegraphics[scale=0.5]{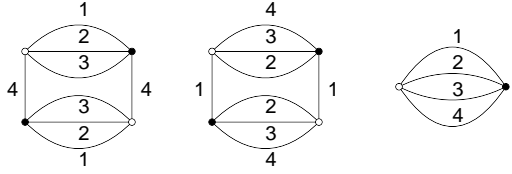}
\caption{Effective interactions generated by melordering. From left to right: $b_{4,4}$, $b_{4,1}$ and $b_2$.}
\label{eff6}
\end{center}
\end{figure}

\
The $4$-point interaction terms are simple, since they are generated by forests $\{ S_1 \}$ and $\{ S_3 \}$. $t_{4,1} (\rho)$ and $t_{4,4} (\rho)$ are therefore both given by the evaluation of $\nu_\rho$ on a single-line melopole (noted $\nu_\rho (1)$) 
\beq
t_{4,1} (\rho) = t_{4,4} (\rho) = - \nu_\rho (1) = - \int_{M^{-2 \rho}}^{+ \infty} \extd \alpha e^{-\alpha m^2} \int \extd \lambda \left( K_\alpha (\lambda) \right)^{3}\,,
\eeq
which is proportional to $\rho$ in the large $\rho$ limit. As expected, these are log-divergent terms.

\
The $2$-point interaction is generated by $\{ S_{12} \}$, $\{ S_{23} \}$, $\{ S_{1} , S_{12} \}$, $\{ S_{3} , S_{23} \}$ and
$\{ S_{1} , S_{3} \}$. $\{ S_{12} \}$ and $\{ S_{23} \}$ contribute with a minus sign, and with an absolute value given by the evaluation of a two-line melopole, that is
\beq
- \nu_\rho (2) = \int_{M^{-2 \rho}}^{+ \infty} \extd \alpha_1 \extd \alpha_2 e^{-(\alpha_1 + \alpha_2 ) m^2} 
\int \extd \lambda_1 \int \extd \lambda_2 \left( K_{\alpha_1} (\lambda_1) \right)^{2} \left( K_{\alpha_2} (\lambda_2) \right)^{3} K_{\alpha_1 + \alpha_2} (\lambda_1 + \lambda_2)
\eeq
each. The three other terms come with a plus sign, and factorize as the square of a single line melopole. Therefore:
\beq
t_2 (\rho) = - 2 \nu_\rho (2) + 3 ( \nu_\rho (1) )^2 \,.
\eeq

\
All the other forests contribute to the vacuum counter-term. There are eight of them. It is then easy to see that:
\beq
t_\emptyset (\rho) = - \mu_\rho (3) + 2 \nu_\rho (1) \mu_\rho (2)  + 2 \nu_\rho (2) \mu_\rho (1)  - 3 ( \nu_\rho (1) )^2 \mu_\rho (1)  \,.
\eeq
where the $\nu_\rho$ are the logarithmically divergent previous integrals and the 
$\mu_\rho(1,2,3)$ are full vacuum melopoles amplitudes (with respectively 1 2 and 3 lines), each diverging 
linearly in $M^{ \rho}$. One can check that the integral over the Gaussian measure
of the full melordered combination is then finite as all divergent contributions cancel out.     

\chapter{Just-renormalizable $\SU(2)$ model in three dimensions}\label{chap:su2}


In this chapter, we focus on the $\vphi^6$ rank-$3$ model based on the group $\SU(2)$, of type A in Table \ref{theories}. A detailed proof of its renormalizability, based on \cite{su2}, is provided. We then present preliminary calculations of the renormalization group flow \cite{beta_su2}.   

\section{The model and its divergences}
\label{su2_model}

\subsection{Regularization and counter-terms}

From now on, $G = \SU(2)$ and $K_\alpha$ is the corresponding heat kernel at time $\alpha$, which we recall explicitly writes
\beq
K_{\alpha} = \sum_{j \in \mathbb{N}/2} (2 j + 1)\e^{- \alpha j (j+1) } \chi_{j}
\eeq 
in terms of the characters $\chi_j$. The cut-off covariance is given by
\beq \label{paracut}
C^{\Lambda}(g_1, g_2 , g_3 ; g_1' , g_2' , g_3' )
\equiv \int_{\Lambda}^{+ \infty} \extd \alpha \, \e^{- \alpha m_{phys,i}^2} \int \extd h \prod_{\ell = 1}^{3} K_{\alpha} (g_\ell h g_\ell'^{\inv})\,,
\eeq
defined for any $\Lambda > 0$. This allows to define a UV regularized theory, with partition function
\beq
\cZ_{\Lambda} = \int \extd \mu_{C^\Lambda} (\vphi , \vphib) \, \e^{- S_{\Lambda} (\vphi , \vphib )}\,.
\eeq
According to our analysis of the Abelian divergence degree, $S_{\Lambda}$ can contain only up to $\vphi^6$ $3$-bubbles. This gives exactly $5$ possible patterns of contractions (up to color permutations): one $\vphi^2$ interaction, one $\vphi^{4}$ interaction, and three $\vphi^{6}$ interactions. They are represented in Figure \ref{int}. 

Among the three types of interactions of order $6$, only the first two can constitute melonic subgraphs. Indeed, an interaction of the type $(6,3)$ cannot be part of a melonic subgraph, therefore cannot give any contribution to the renormalization of coupling constants. Reciprocally, the contraction of a melonic subgraph in a graph built from vertices of the type $(2)$, $(4)$, $(6,1)$ and $(6,2)$ cannot create an effective $(6,3)$-vertex. This is due to the fact that a $(6,3)$-bubble  is dual to the triangulation of a torus, while the other four interactions represent spheres, and the topology of $3$-bubbles is conserved under contraction of melonic subgraphs \cite{FerriGagliardi, Vince_gene}. 

Therefore, we can and we shall exclude interactions of the type $(6,3)$ from $S_{\Lambda}$ from now on. This is a very nice feature of the model, for essentially two reasons. First, from a discrete geometric perspective, $(6,3)$ interactions would introduce topological singularities that would be difficult to interpret in a quantum gravity context, so it is good that they are not needed for renormalization. Second, contrary to the other interactions, they are not positive and could therefore induce non-perturbative quantum instabilities.
\begin{figure}[h]
\begin{center}
\includegraphics[scale=0.5]{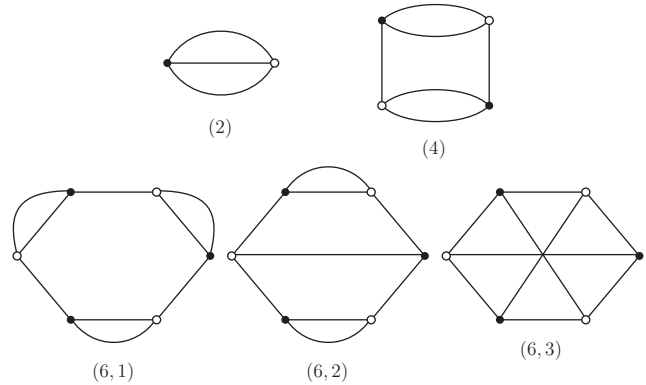}
\caption{Possible $d$-bubble interactions.}
\label{int}
\end{center}
\end{figure}
\
The $2$-point interaction is identical to a mass term, and will therefore be used to implement the mass renormalization counter-terms. Since the model will also generate quadratically divergent $2$-point functions, we also need to include wave function counter-terms in $S_\Lambda$. Finally, we require color permutation invariance of the $4$- and $6$-point interactions. All in all, this gives
\beq\label{drawing_sym}
S_\Lambda  = \frac{t_4^{\Lambda}}{2} S_{4} + \frac{t_{6,1}^{\Lambda}}{3} S_{6,1} + t_{6,2}^{\Lambda} S_{6,2} + CT_{m}^{\Lambda} S_{m} + CT_{\vphi}^{\Lambda} S_{\vphi}\,,
\eeq
where:
\bes
S_{4} (\vphi , \vphib) &=& \int [\extd g ]^6 \, \vphi(g_1 , g_2 , g_3 ) \vphib(g_1 , g_2 , g_4 ) \vphi(g_5 , g_6 , g_3 ) \vphib(g_5 , g_6 , g_4 ) \nn  \\
&& + \; {\rm color} \; {\rm permutations}  \,, \label{color_sym1} \\
S_{6,1} (\vphi , \vphib) &=&  \int [\extd g ]^9 \, \vphi(g_1 , g_2 , g_7 ) \vphib(g_1 , g_2 , g_9 ) \vphi(g_3 , g_4 , g_9 ) \vphib(g_3 , g_4 , g_8 ) \vphi(g_5 , g_6 , g_8 ) \vphib(g_5 , g_6 , g_7 ) \nn \\
&& + \; {\rm color} \; {\rm permutations}  \,, \label{color_sym2} \\
S_{6, 2} (\vphi , \vphib) &=& \int [\extd g ]^9 \, \vphi(g_1 , g_2 , g_3 ) \vphib(g_1 , g_2 , g_4 ) \vphi(g_8 , g_9 , g_4 ) \vphib(g_7 , g_9 , g_3 ) \vphi(g_7 , g_5 , g_6 ) \vphib(g_8 , g_5 , g_6 ) \nn \\
&& + \; {\rm color} \; {\rm permutations} \,,  \label{color_sym3} \\
S_{m} (\vphi , \vphib) &=& \int [\extd g ]^3 \, \vphi(g_1 , g_2 , g_3 ) \vphib(g_1 , g_2 , g_3 )\,, \label{color_sym4}\\
S_{\vphi} (\vphi , \vphib) &=& \int [\extd g ]^3 \, \vphi(g_1 , g_2 , g_3 ) \left( - \sum_{l = 1}^{3} \Delta_\ell \right) \vphib(g_1 , g_2 , g_3 )\,. \label{color_sym5}
\ees

Two types of symmetries have to be kept in mind. 
In equations (\ref{color_sym1}-\ref{color_sym5}), we just averaged over color permutations. This gives a priori $6$ terms for each bubble type, but some of them are identical. It turns out that for each type of interaction, we have exactly $3$ distinct bubbles. Similarly, $S_\vphi$ is a sum of three term, which we can consider as new bubbles. With the mass term, we therefore have a total number of $13$ different bubbles in the theory. From now on, $\cB$ has to be understood in this extended sense. We could as well work with independent couplings for each bubble $b \in \cB$, but we decide to consider the symmetric model only, which seems to us the most relevant situation. However, it is convenient to work with notations adapted to the more general situations, because this allows to write most of the equations in a more condensed fashion. In the following, we will work with coupling constants $t_b^\Lambda$ for any $b \in \cB$, which has to be understood as $t_{4}^{\Lambda}$, $t_{6,1}^{\Lambda}$, $t_{6,2}^{\Lambda}$, $CT_m^{\Lambda}$ or $CT_\vphi^{\Lambda}$ depending on the nature of $b$.

\

In (\ref{drawing_sym}), we divided each coupling constant by a certain number of permutations of labels on the external legs of a bubble associated to this coupling. More precisely, it is the order of the subgroup of the permutations of these labels leaving the labeled colored graph invariant. Note that a first look at $(6,2)$ interactions suggests an order $2$ symmetry, but it is incompatible with any coloring. The role of such rescalings of the coupling constants is, as usual, to make the symmetry factors appearing in the perturbative expansions more transparent. The symmetry factor $s(\cG)$ associated to a Feynman graph $\cG$ becomes the number of its automorphisms. All these conventions will be useful when discussing in details how divergences can be absorbed into new effective coupling constants.

\

Finally, the reader might wonder whether it is appropriate to include the $2$-point function counter-terms in the interaction part of the action, rather than associating flowing parameters to the covariance itself. This question is particularly pressing for wave-function counter-terms, since they break the tensorial invariance of the interaction action. One might worry that the degenerate nature of the covariance could prevent a Laplacian interaction with no projector from being reabsorbed in a modification of the wave-function parameter of the covariance. However, it is not difficult to understand that the situation is identical to that of a non-degenerate covariance. At fixed cut-off, modifying the covariance is not exactly the same as adding $2$-point function counter-terms in the action, but the two prescriptions coincide in the $\Lambda \to 0$ limit, as will be commented on in Section \ref{sec:rg_flow}. Thus, it is perfectly safe to work in the second setting. Moreover, this has the main advantage of being compatible with a fixed slicing of the covariance according to scales, which is the central technical tool of the work presented in this thesis. 

\subsection{List of divergent subgraphs}

From the previous sections, and as we will confirm later on, the Abelian divergence degree of a subgraph $\cH$ will allow to classify the divergences. When $\cH$ does not contain any wave-function counter-terms, one has\footnote{One also assumes $F(\cH) \geq 1$, as in \ref{sec:degree}.}:
\beq\label{def_omega1}
\omega (\cH) = 3 - \frac{N}{2} - 2 n_2 - n_4 + 3 \rho(\cH / \cT) \,.
\eeq
We will moreover see in the next section that wave-function counter-terms are neutral with respect to power-counting arguments. We can therefore extend the definition (\ref{def_omega1}) of $\omega$ to arbitrary subgraphs if $n_2$ is understood as the number of $2$-valent bubbles 
\textit{which are not of the wave-function counter-term type}, and the contraction of a tree is also understood in a general sense: $\cH / \cT$ is the subgraph obtained by \textit{first collapsing all chains of wave-function counter-terms}, and then contracting a tree $\cT$ in the collapsed graph. Alternatively, $\omega$ takes the generalized form:
\beq
\omega(\cH) = -2 ( L - W ) + 3 ( F - R ) \,,
\eeq
where $W$ is the number of wave-function counter-terms in $\cH$. \textit{This formula holds also when $F(\cH) = 0$}.

\
   
Let us focus on non-vacuum connected subgraphs with $F \geq 1$, which are the physically relevant ones. In this case $\rho = 0$ for melonic subgraphs and $\rho \leq -1$ otherwise. Therefore 
$$\omega (\cH) \leq - \frac{N}{2}$$
if $\cH$ is not melonic. As a result, divergences are entirely due to melonic subgraphs. They are in particular tracial, which means their Abelian power-counting is optimal. We therefore obtain an exact classification of divergent subgraphs, provided in table \ref{div}. It tells us that $6$-point functions have logarithmic divergences, $4$-point functions linear divergences as well as possible logarithmic ones, that will have to be absorbed in the constants $t_{4}^{\Lambda}$, $t_{6,1}^{\Lambda}$ and $t_{6,2}^{\Lambda}$. The full $2$-point function will be quadratically divergent, generating the constants $CT_{m}^{\Lambda}$ and $CT_{\vphi}^{\Lambda}$.

\begin{table}[h]
\centering
\begin{tabular}{| l | c | c | c || r |}
    \hline
    $N$ & $n_2$ & $n_4$ & $\rho$ & $\omega$  \\ \hline\hline
 6 & 0 & 0 & 0 & 0 \\ \hline
 4 & 0 & 0 & 0 & 1 \\ 
 4 & 0 & 1 & 0 & 0 \\ \hline
 2 & 0 & 0 & 0 & 2 \\ 
 2 & 0 & 1 & 0 & 1 \\
 2 & 0 & 2 & 0 & 0 \\
 2 & 1 & 0 & 0 & 0 \\ 
    \hline
  \end{tabular}
\caption{Classification of non-vacuum divergent graphs for $d=D=3$. All of them are melonic.}
\label{div}
\end{table}

{\bf Remark.}
There are a lot more cases to consider for vacuum divergences, including non-melonic contributions. However, they are irrelevant 
to perturbative renormalization, and since the previous chapter provides an example of how these can be analyzed, we discard them here.

\
In light of Corollary \ref{faces_legs}, we also notice that $2$-point divergent subgraphs, hence all degree $2$ subgraphs, have a single external face. This is a useful point to keep in mind as far as wave-function renormalization is concerned. As for $4$- and $6$-point divergent subgraphs, they have at most $2$ and $3$ external faces respectively. It is also not difficult to find examples saturating these two bounds, as shown in Figures \ref{div_4} and \ref{div_6}. 

\begin{figure}[h]
  \centering
  \subfloat[$\omega = 1$]{\label{div_4}\includegraphics[scale=0.6]{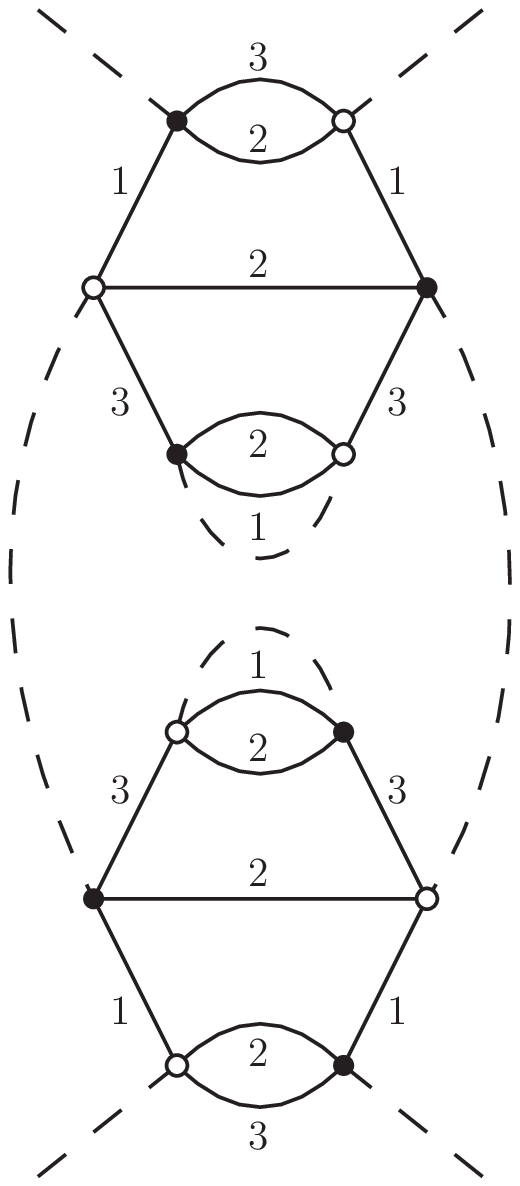}}   	       
  \subfloat[$\omega = 0$]{\label{div_6}\includegraphics[scale=0.6]{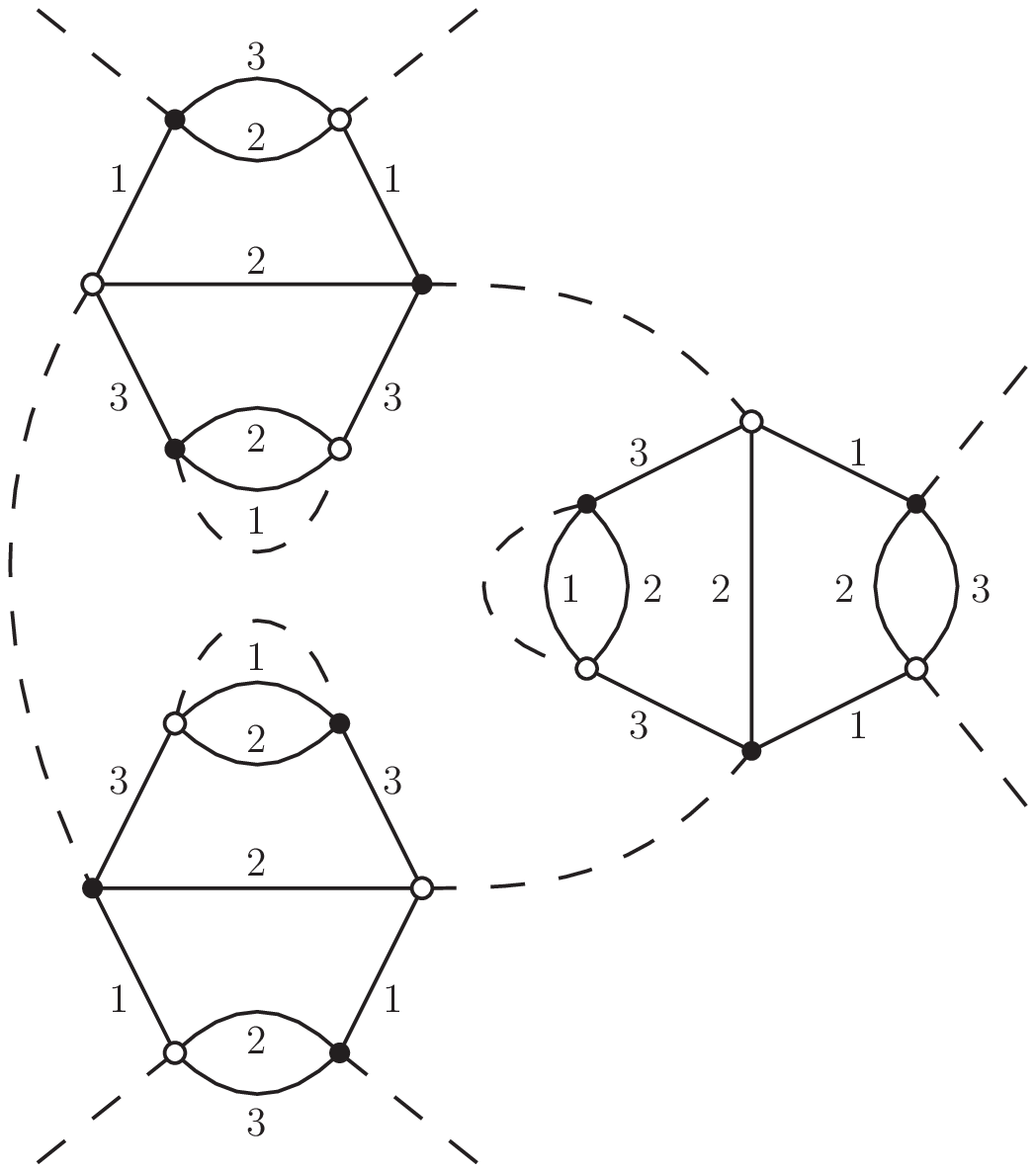}}
  \caption{Divergent subgraphs with respectively $2$ and $3$ external faces.}
\end{figure}

\section{Non-Abelian multiscale expansion}

In this section, we extend the multiscale tools already encountered to the present non-commutative and just-renormalizable model. This includes a proof of the Abelian bound (\ref{fund_ab}), and a definition of suitable localization operators.
We again set the cut-off to $\Lambda = M^{- 2 \rho}$ and slice the propagators in such a way as to express the amplitudes as in equation (\ref{decomposi}).

\subsection{Power-counting theorem}

The results we need to extend to our $\SU(2)$ context are the propagator bounds, which themselves rely on peakedness properties of the heat kernel. Let us denote $\vert X \vert$ the norm of a Lie algebra element $X \in \su(2)$, and $|g|$ the geodesic distance between a Lie group element $g \in \SU(2)$ and the identity $\one$. We can prove the following bounds on $K_\alpha$ and its Lie derivatives\footnote{We define the Lie derivative of a function $f$ as:
\beq
\cL_X f(g) \equiv \frac{\extd}{\extd t} f(g \e^{t X}) |_{t = 0} \,.
\eeq}.
\begin{lemma}\label{heat}
There exists a set of constants $\delta > 0$ and $K_n > 0$ , such that for any $n \in \mathbb{N}$ the following holds:
\beq
\forall \alpha \in \left] 0 , 1 \right] , \quad \forall g \in \SU(2), \quad \forall X \in \su(2),\, |X| = 1, \qquad
\vert ( \cL_X )^n K_\alpha (g) \vert \leq K_n \alpha^{-\frac{3 + n}{2}} \e^{- \delta \frac{|g|^2}{\alpha}} 
\eeq
\end{lemma}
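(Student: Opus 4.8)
The plan is to deduce everything from the exact form of the heat kernel, obtained by Poisson-resumming the character expansion $K_\alpha = \sum_j (2j+1)\e^{-\alpha j(j+1)}\chi_j$ exactly as sketched earlier. Since $K_\alpha$ is a class function depending only on the geodesic distance $|g|$, this resummation takes the schematic form
\[
K_\alpha(g) = \frac{C}{\alpha^{3/2}}\,\frac{|g|/2}{\sin(|g|/2)}\,\sum_{m \in \mathbb{Z}} P_m(|g|,\alpha)\,\e^{-(|g| + 4\pi m)^2/(4\alpha)}\,,
\]
with a fixed constant $C>0$ and smooth prefactors $P_m$, the $m=0$ term reproducing the leading asymptotics already quoted earlier. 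First I would reduce the statement about $\cL_X$-derivatives to one about radial derivatives $k_\alpha^{(j)}(|g|)$, where $K_\alpha(g) = k_\alpha(|g|)$. Writing $(\cL_X)^n K_\alpha(g) = \sum_{j=1}^n k_\alpha^{(j)}(|g|)\,Q_{n,j}(g)$, the coefficients $Q_{n,j}$ are polynomials in the iterated derivatives of $g \mapsto |g|$ along the left-invariant vector field generated by $X$. Because the metric is bi-invariant this field has constant norm $|X|=1$, and $|\nabla |g|| = 1$ off the cut locus, so $|Q_{n,1}| \le 1$ and each $Q_{n,j}$ is bounded on every set $\{|g| \le 2\pi - \epsilon\}$.

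The radial estimate is then routine: differentiating the explicit formula, each radial derivative either hits a Gaussian, producing a factor $\tfrac{|g|+4\pi m}{2\alpha}$, or the slowly varying Jacobian and prefactors, whence
\[
|k_\alpha^{(j)}(|g|)| \le C_j\,\alpha^{-(3+j)/2}\,\Bigl(1 + \tfrac{|g|}{\sqrt\alpha}\Bigr)^{N_j}\,\e^{-|g|^2/(4\alpha)}
\]
on $\{|g| \le 2\pi - \epsilon\}$, the images $m \neq 0$ being absorbed into the $m=0$ term since their Gaussians are uniformly smaller there. The polynomial prefactor is then traded against a fraction of the Gaussian decay: for any $\delta < 1/4$ there is a constant with $(1 + |g|/\sqrt\alpha)^{N_j}\e^{-|g|^2/(4\alpha)} \le \mathrm{const}\cdot\e^{-\delta |g|^2/\alpha}$. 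Combined with $|Q_{n,j}| \le \mathrm{const}$, this yields the claim on $\{|g| \le 2\pi - \epsilon\}$ with a uniform $\delta$ and an $n$-dependent constant $K_n$.

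The only delicate region, and the main obstacle, is the neighbourhood of the antipode $-\one$ at $|g|=2\pi$, which is the cut locus: there both the Jacobian $\tfrac{|g|/2}{\sin(|g|/2)}$ and the geometric coefficients $Q_{n,j}$ diverge (the latter through the second fundamental form of the geodesic spheres). The key observation is that near the antipode every image Gaussian carries an exponent $\gtrsim \pi^2/\alpha$, so $K_\alpha$ and all its derivatives are bounded there by $\alpha^{-M_n}\e^{-(\pi^2-\eta)/\alpha}$ for a fixed $M_n$ and arbitrarily small $\eta$, whereas the target bound only demands decay $\e^{-\delta|g|^2/\alpha} \ge \e^{-4\pi^2\delta/\alpha}$; since $\delta<1/4$ gives $4\pi^2\delta < \pi^2$, the genuine decay beats the required one with an exponential margin $\e^{-(\pi^2-4\pi^2\delta)/\alpha}$ that swallows both the algebraic blow-ups and the factor $\alpha^{-M_n}$. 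The cleanest way to make this rigorous is to write $K_\alpha$ near the antipode as a smooth even function of a local coordinate vanishing at $-\one$ and estimate it and its derivatives directly. Patching the two regions with a fixed partition of unity in $|g|$ and taking the larger of the constants then produces the uniform $\delta$ and the family $K_n$, completing the proof.
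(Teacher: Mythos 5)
Your strategy away from the antipode is essentially the paper's first half (Poisson resummation, then trading the polynomial prefactors against a fraction of the Gaussian, with $\delta < 1/4$), but the way you organize the derivative bounds contains a genuine gap at the \emph{identity}, not just at the antipode. First, the image prefactors in your resummed formula are not smooth there: once the Jacobian $\frac{|g|/2}{\sin(|g|/2)}$ is factored out, the $m \neq 0$ prefactors are $P_m = 1 + 4\pi m/|g|$, so each image term has a pole at $|g| = 0$ which cancels only when the $m$ and $-m$ terms are paired. This is exactly why the paper assembles its correction function $F_\alpha$ out of $\cosh$ and $\sinh(\cdot)/\psi$ combinations and expands it in powers $\psi^{2p}$: it is then manifestly even and analytic, and can be bounded term by term. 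Second, and more seriously, your Fa\`a di Bruno reduction $(\cL_X)^n K_\alpha = \sum_j k_\alpha^{(j)}(|g|)\, Q_{n,j}(g)$ with ``$Q_{n,j}$ bounded on $\{|g| \le 2\pi - \epsilon\}$'' is false: only the top coefficient $(\cL_X |g|)^n$ is bounded by $1$; for $j < n$ the coefficients involve iterated derivatives of the distance function, which blow up like $|g|^{-(n-j)}$ as $g \to \one$ (e.g.\ $(\cL_X)^2 |g| \sim (1 - (\cL_X |g|)^2)/|g|$). Your radial bound $|k_\alpha^{(j)}| \lesssim \alpha^{-(3+j)/2}$ does not vanish at $|g| = 0$, so the term-by-term product diverges there at fixed $\alpha$, while the left-hand side is a smooth, finite function; finiteness rests on cancellations \emph{between} the Fa\`a di Bruno terms, equivalently on the fact that $k_\alpha$ is a smooth function of $|g|^2$, which your argument never exploits. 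To repair this you must either prove refined radial estimates encoding the parity-enforced vanishing of odd derivatives at $0$, or phrase the chain rule throughout in terms of the smooth function $|g|^2$ — which is what the paper's even power-series treatment of $F_\alpha$ implicitly accomplishes.

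At the antipode your key quantitative observation is correct and is indeed what makes any such argument close ($4\pi^2 \delta < \pi^2$ for $\delta < 1/4$, so the genuine decay beats the required one with an exponential margin), but the actual control of the derivatives of the singular-looking image sum near the cut locus is deferred to ``write $K_\alpha$ as a smooth even function of a local coordinate'', which is again precisely the cancellation-between-paired-images work. Note that the paper takes a genuinely different, cleaner route here: it never uses the resummed formula near $-\one$. Instead it invokes the Markov property together with the Dirichlet heat kernel on the northern hemisphere, giving $K_\alpha(g) \le \int_0^\alpha \extd \alpha' \int_{H_0} \extd g' \, K_{\alpha'}(g') \, K_{\alpha - \alpha'}(g'^{\inv} g)$, where both kernels — and, after passing $(\cL_X)^n$ under the integral, all derivatives — are evaluated at points lying in $H_{3\pi/4}$ and at distance bounded below from $\one$, i.e.\ exactly where the bounds are already established; splitting the $\alpha'$ integral at $\alpha/2$ then yields a factor $\e^{-\delta'/\alpha}$, which dominates $\e^{-\delta |g|^2/\alpha}$ since $|g|$ is bounded. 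Adopting that convolution argument would remove your antipode section entirely and localize the only real work to the region where your resummation argument, once fixed at the identity, applies.
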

\begin{proof}
See the Appendix.
\end{proof}

As a consequence, the divergences associated to the propagators and their derivatives can be captured in the following bounds.
 \begin{proposition}
 There exist constants $K > 0$ and $\delta > 0$, such that for all $i \in \mathbb{N}$:
 \beq\label{propa_bound} 
 C_i (g_1, g_2 , g_3 ; g_1' , g_2' , g_3' ) \leq K M^{7 i } \int \extd h \,
 \e^{- \delta M^{i} \sum_{\ell =1}^{3} |g_\ell h g_\ell'^{\inv}| }\,\,.
 \eeq
 Moreover, for any integer $k \geq 1$, there exists a constant $K_k$, such that for any $i \in \mathbb{N}$, any choices of colors $\ell_p$ and Lie algebra elements $X_p \in \su(2)$ of unit norms ($1 \leq p \leq k$):
 \beq\label{deriv_bound}
\left( \prod_{p = 1}^{k} \cL_{X_{p} , g_{\ell_p}} \right) C_i (g_1, g_2 , g_3 ; g_1' , g_2' , g_3' ) \leq K M^{ (7 + k) i } \int \extd h \,
 \e^{- \delta M^{i} \sum_{\ell =1}^{3} |g_\ell h g_\ell'^{\inv}| }\,,
\eeq
where $\cL_{X_p , g_{\ell_p}}$ is the Lie derivative with respect to the variable $g_{\ell_p}$ in direction $X_p$.
\end{proposition}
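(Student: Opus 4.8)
The plan is to derive both bounds directly from the heat-kernel estimates of Lemma \ref{heat}, the only genuinely new ingredient being the conversion of the Gaussian decay in $|g|^2/\alpha$ into the slice-adapted exponential decay $M^i|g|$. First I would insert the $n=0$ case of Lemma \ref{heat}, namely $K_\alpha(g) \leq K_0\,\alpha^{-3/2}\e^{-\delta|g|^2/\alpha}$, into the definition of $C_i$. Writing $u_\ell \equiv |g_\ell h g_\ell'^{\inv}|$, the product of the three heat kernels then contributes a factor $\alpha^{-9/2}$ together with the Gaussian $\e^{-\delta\sum_\ell u_\ell^2/\alpha}$, while the mass damping factor $\e^{-\alpha m^2}$ is bounded by $1$.

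The key lemma-free step is the elementary estimate that, for $\alpha$ in the slice $[M^{-2i},M^{-2(i-1)}]$ and any $\delta'>0$, one has $\e^{-\delta u^2/\alpha}\leq C\,\e^{-\delta' M^i u}$ with $C$ \emph{independent} of $i$. Indeed, maximizing $-\delta u^2/\alpha+\delta' M^i u$ over $u\geq 0$ yields the value $(\delta' M^i)^2\alpha/(4\delta)$, which, using $\alpha\leq M^2 M^{-2i}$, is bounded by $(\delta')^2 M^2/(4\delta)$ uniformly in $i$; hence $C=\exp\bigl((\delta')^2 M^2/(4\delta)\bigr)$ works. Applying this colour by colour replaces the Gaussian by the $\alpha$-independent factor $C^3\e^{-\delta' M^i\sum_\ell u_\ell}$, so that only the pure radial integral $\int_{M^{-2i}}^{M^{-2(i-1)}}\alpha^{-9/2}\,\extd\alpha=\tfrac{2}{7}\bigl(M^{7i}-M^{7(i-1)}\bigr)\leq\tfrac{2}{7}M^{7i}$ remains. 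Collecting the constants gives $C_i\leq KM^{7i}\int\extd h\,\e^{-\delta' M^i\sum_\ell u_\ell}$, and relabelling $\delta'$ as $\delta$ establishes (\ref{propa_bound}).

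For the derivative bound I would proceed identically, the only new point being to track how the Lie derivatives pass through the translated arguments. Setting $g=g_\ell h g_\ell'^{\inv}$, a derivative $\cL_{X,g_\ell}$ equals $\tfrac{\extd}{\extd t}K_\alpha(g_\ell\,\e^{tX}h g_\ell'^{\inv})\big|_{t=0}$, i.e.\ a derivative of $K_\alpha$ at $g$ along the direction $\mathrm{Ad}_{g_\ell}X$; since $\mathrm{Ad}$ is an isometry of $\su(2)$ and $K_\alpha$ is a class function, Lemma \ref{heat} applies with the same constants and with $|X|$ unchanged. If $k_\ell$ derivatives act on the colour-$\ell$ factor, Lemma \ref{heat} with $n=k_\ell$ produces the weight $\alpha^{-(3+k_\ell)/2}$ while leaving the Gaussian intact; as $\sum_\ell k_\ell=k$, the total radial power becomes $\alpha^{-(9+k)/2}$, the Gaussian-to-exponential step is verbatim the same, and the radial integral now gives $\int_{M^{-2i}}^{M^{-2(i-1)}}\alpha^{-(9+k)/2}\,\extd\alpha\leq K_k' M^{(7+k)i}$, which yields (\ref{deriv_bound}).

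I expect the only real subtlety to lie in this last bookkeeping: checking that the norms of the differentiation directions are preserved under the $\mathrm{Ad}$-conjugation (so the unit-norm hypothesis of Lemma \ref{heat} is respected), together with the mild extension of Lemma \ref{heat} from a single repeated direction $(\cL_X)^n$ to mixed directions $\cL_{X_1}\cdots\cL_{X_n}$ with distinct unit vectors $X_p$, which follows by expanding in a fixed orthonormal basis of $\su(2)$. Everything else is the standard single-slice Gaussian power-counting, so no serious obstacle is anticipated once Lemma \ref{heat} is granted.
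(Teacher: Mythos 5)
Your argument for $i \geq 1$ is essentially the paper's own proof: insert the bounds of Lemma \ref{heat} into the sliced covariance, convert the Gaussian decay $\e^{-\delta u^2/\alpha}$ into the slice-adapted exponential $\e^{-\delta' M^i u}$ (your maximization step is the standard inequality $\delta' M^i u - \delta u^2/\alpha \leq (\delta')^2 M^2/(4\delta)$ for $\alpha \leq M^2 M^{-2i}$; the paper performs the same conversion after first freezing $\alpha$ at the slice endpoints instead of doing the exact radial integral), and pay $\alpha^{-(9+k)/2}$ in the $\alpha$-integral to produce $M^{(7+k)i}$. Your treatment of the Lie derivatives --- conjugating the direction by $\mathrm{Ad}$, noting that it is an isometry of $\su(2)$ so the unit-norm hypothesis of Lemma \ref{heat} is preserved, and extending to mixed directions by basis expansion --- is if anything more explicit than the paper, which dismisses the derivative bounds with ``and similarly for Lie derivatives of $C_i$''.

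There is, however, a genuine gap at $i = 0$, and the proposition claims the bound for all $i \in \mathbb{N}$. The zeroth slice is $C_0 = \int_1^{+\infty}\extd\alpha\,\e^{-\alpha m^2}\int\extd h\prod_\ell K_\alpha(g_\ell h g_\ell'^{\inv})$: its integration domain is $[1,+\infty[$, not of the form $[M^{-2i},M^{-2(i-1)}]$, and Lemma \ref{heat} is only valid for $0 < \alpha \leq 1$ --- indeed the estimate $K_\alpha(g) \leq K\,\alpha^{-3/2}\e^{-\delta|g|^2/\alpha}$ is false for large $\alpha$, since $K_\alpha \to 1$ (the constant mode) while the right-hand side tends to $0$. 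Worse, your step ``the mass damping factor $\e^{-\alpha m^2}$ is bounded by $1$'' is precisely what must not be done in this slice: without the mass factor, $\int_1^{\infty}\extd\alpha\,K_\alpha(\one)^3$ diverges linearly because of the $j=0$ mode, so $m \neq 0$ is essential to the very convergence of $C_0$. The paper handles $i=0$ separately: for $\alpha \geq 1$ the heat kernels are bounded by constants (e.g. $K_\alpha(g) \leq K_\alpha(\one) \leq K_1(\one)$), the factor $\e^{-\alpha m^2}$ makes the $\alpha$-integral finite, and compactness of $\SU(2)$, which gives $\sum_\ell |g_\ell h g_\ell'^{\inv}| \leq 3\pi$, allows one to reinsert the exponential factor $\int \extd h\, \e^{-\delta\sum_\ell |g_\ell h g_\ell'^{\inv}|}$ at the cost of a constant. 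Once this easy infrared case is added, your proof is complete.
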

\begin{proof}
For $i \geq 1$, the previous lemma immediately shows that:
\bes
C_i (g_1, g_2 , g_3 ; g_1' , g_2' , g_3' ) &\leq& K_1 \int_{M^{- 2 i}}^{M^{- 2 (i- 1)}} \int \extd h \, \frac{\e^{- \frac{\delta_1}{\alpha} \sum_{\ell =1}^{3} |g_\ell h g_\ell'^{\inv}|^2 }}{\alpha^{9/2}} \\
&\leq& K_1 M^{- 2 (i- 1)} (M^{2 i})^{9/2} \int \extd h \, \e^{- \delta_1 M^{- 2 i} \sum_{\ell =1}^{3} |g_\ell h g_\ell'^{\inv}|^2 } \\
&\leq & K M^{7 i } \int \extd h \, \e^{- \delta M^{i} \sum_{\ell =1}^{3} |g_\ell h g_\ell'^{\inv}| }\,,
\ees
for some strictly positive constants $K_1$, $\delta_1$, $K$ and $\delta$. And similarly for Lie derivatives of $C_i$.

\
When $i = 0$, equations (\ref{nice_form}), (\ref{bound_F}) and (\ref{ext_bound}), together with the fact that $m \neq 0$ allow to bound the integrand of $C_0$ by an integrable function of $\alpha \in [ 1 , + \infty [$. $C_0$ is therefore bounded from above by a constant, and due to the compact nature of $\SU(2)$ we can immediately deduce a bound of the form
\beq   
C_0 (g_1, g_2 , g_3 ; g_1' , g_2' , g_3' ) \leq K \int \extd h \,
 \e^{- \delta \sum_{\ell =1}^{3} |g_\ell h g_\ell'^{\inv}| }\,.
\eeq
Again, the same idea applies to the bound on the Lie derivatives of $C_0$, which concludes the proof.
\end{proof}

We can now extend the multiscale power-counting of Proposition \ref{prop:abpc} to our non-Abelian model.

\begin{proposition}
There exists a constant $K > 0$, such that for any connected graph $\cG$ with scale attribution $\mu$, the following bound holds: 
\beq\label{fund}
 \vert \cA_{\cG, \mu}  \vert   \leq K^{L(\cG)}  \prod_{i \in \mathbb{N}} \prod_{ k \in \llbracket 1 , k(i) \rrbracket } M^{\omega [  \cG_{i}^{(k)}]}\,,
\eeq
where $\omega$ is the Abelian degree of divergence
\beq
\omega(\cH) = - 2 ( L(\cH) - W(\cH) ) + 3 ( F(\cH) - R(\cH) ) \,.
\eeq
\end{proposition}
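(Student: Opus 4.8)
The plan is to replay the Abelian multiscale argument of Proposition \ref{prop:abpc} almost verbatim, since the non-Abelian propagator bound (\ref{propa_bound}) has exactly the same shape as its flat counterpart (\ref{propa_boundD}): a prefactor $M^{7i}$ (note $dD-2=7$ here), a Gaussian-type decay $\e^{-\delta M^i |g_\ell h g_\ell'^{\inv}|}$ in which the geodesic distance $|\cdot|$ replaces the flat one, and the gauge average $\int \extd h$ over $\SU(2)$ now playing the role of the $\lambda$-integral. The combinatorial skeleton — organising everything along the Gallavotti-Nicol\`{o} tree — is therefore unchanged, and I only need to control the three genuinely new features: the Laplacian wave-function counter-terms, the group ``position'' integrations, and the non-commutative gauge integrations.

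First I would dispose of the wave-function counter-terms. Each insertion of $S_\vphi$ is a two-valent vertex carrying $-\sum_\ell \Delta_\ell$, which, being a second-order operator (Casimir $=\sum_a \cL_{X_a}^2$), acts on an adjacent slice-$i$ propagator as an extra factor of order $M^{2i}$ by the derivative bound (\ref{deriv_bound}) with $k=2$. Collapsing each chain of such counter-terms into a single effective line and attributing these $M^{2i}$ factors to the GN nodes through which the line runs produces precisely the replacement of the $-2L$ term by $-2(L-W)$; after this step the surviving lines may be treated as ordinary propagators and one works with the collapsed graph in which $\cH/\cT$ and $\omega$ are defined.

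Next come the prefactor and position integrations, which are purely combinatorial and identical to the Abelian case. Using $M^{i}=\prod_{j\le i}M$ together with the nesting of high subgraphs, the line prefactors give $\prod_i\prod_k M^{7 L(\cG_i^{(k)})}$. Then, choosing in each closed face a tree of lines $T_f$ compatible with the GN tree (so that $T_f\cap \cG_i^{(k)}$ is itself a tree), I would integrate the boundary group variables along these trees against the decay (\ref{propa_bound}); each integrated strand costs $M^{-3i}$ and exactly one strand per closed face escapes, yielding $\prod_i\prod_k M^{-9 L(\cG_i^{(k)})+3 F(\cG_i^{(k)})}$. Combined, these two contributions already give $\prod_i\prod_k M^{-2L+3F}$ per GN node, the compactness of $\SU(2)$ ensuring that all auxiliary integrations cost only $O(1)$.

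The crux, and where I expect the main difficulty, is the gauge integration. The remaining decay is $\prod_f \e^{-\delta M^{i(f)}\,|\,\overrightarrow{\prod_{e\in f}} {h_e}^{\epsilon_{ef}}\,|}$, in which the face holonomies are now ordered products of group elements rather than linear combinations. Following the Abelian selection procedure — starting from the leaves of the GN tree, choosing at each node a maximal-rank family of internal faces, contracting and proceeding to the root — I would extract a set $F_\mu$ of faces realising the rank $R(\cG_i^{(k)})$ in every node, discard the redundant decays, and pick a set $L_\mu$ of loop lines with $|L_\mu|=|F_\mu|$ whose $F_\mu\times L_\mu$ minor of $\epsilon_{ef}$ is invertible. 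The key estimate is a change of variables from $(h_e)_{e\in L_\mu}$ to the selected face holonomies $(x_f)_{f\in F_\mu}$: because the differential of the ordered-product map at the identity coincides, through the identification with $\su(2)$, with the linear map encoded by $\epsilon_{ef}$, invertibility of the minor makes this map a local diffeomorphism near $\one$ with Jacobian bounded below and above (Hadamard gives $\le \sqrt{d}^{\,|F_\mu|}$, absorbed into $K^{L(\cG)}$). Since the Gaussian decay confines each $x_f$, hence the relevant $h_e$, to a neighbourhood of $\one$, this local argument suffices, the tail being harmless by compactness; integrating each $x_f$ against $\e^{-\delta M^{i(f)}|x_f|}$ produces $M^{-3 i(f)}$ and, by the defining property of $F_\mu$, altogether $\prod_i\prod_k M^{-3 R(\cG_i^{(k)})}$. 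The hard point is precisely that the non-linear corrections to the ordered products must be shown not to spoil this linearisation; they only ever tighten the constraints, so the linear model yields a valid upper bound (and this is also why the estimate is not optimal in general, while remaining tight for contractible subgraphs, whose selected faces have length one). Assembling the three contributions gives $\prod_i\prod_k M^{-2(L-W)+3(F-R)}=\prod_i\prod_k M^{\omega[\cG_i^{(k)}]}$, which is (\ref{fund}).
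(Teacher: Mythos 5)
Your treatment of the slicing prefactors, of the position integrations along face-trees adapted to the Gallavotti-Nicol\`{o} tree, and of the wave-function counter-terms (neutrality via the $k=2$ derivative bound) coincides with the paper's proof and is correct. The gap is in the gauge integration, which is the one genuinely non-Abelian step. Your argument linearizes the ordered-product map around the trivial configuration $h_e = \one$ and asserts that the decay in the face holonomies confines the relevant $h_e$ to a neighbourhood of $\one$, the rest of the domain being ``harmless by compactness''. This is false whenever the $2$-complex formed by the internal faces of a node $\cG_i^{(k)}$ is not simply connected, i.e.\ whenever the subgraph is not contractible in the paper's sense: the equations $\overrightarrow{\prod_{e\in f}} h_e^{\epsilon_{ef}} = \one$ then admit solutions (non-trivial flat connections) arbitrarily far from $h_e = \one$, even after gauge fixing along a tree, and on a neighbourhood of each such solution the integrand is $O(1)$. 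These regions are not exponentially suppressed; their contribution is a power $M^{-{\rm rk}\,\delta^{1}_{\phi}\, i}$ governed by the \emph{twisted} boundary map $\delta^{1}_{\phi}$ at the flat connection $\phi$, in which the incidence coefficients are composed with adjoint parallel transports. Your claim that the non-linear corrections ``only ever tighten the constraints'' is exactly the statement that needs proof, and it follows neither from your local diffeomorphism argument nor from Hadamard's bound.

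The paper closes this gap by invoking the exact power-counting for non-Abelian $BF$-type integrals of \cite{vm1, vm2, vm3}: the integral over a node localizes on its moduli space of flat connections, each component contributing according to ${\rm rk}\,\delta^{1}_{\phi}$, and the non-trivial inequality ${\rm rk}\,\delta^{1}_{\phi} \geq 3\, {\rm rk}\,\epsilon_{ef}$ guarantees that every such contribution is bounded by $M^{-3 R(\cG_i^{(k)})\, i}$; this bound is then applied recursively from the leaves to the root of the GN tree. Incidentally, this also corrects your closing remark: the Abelian bound is tight for contractible subgraphs because those support only the trivial flat connection --- not because the selected faces have length one --- and it fails to be optimal in general because ${\rm rk}\,\delta^{1}_{\phi}$ can exceed $3\,{\rm rk}\,\epsilon_{ef}$ at non-trivial flat connections. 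Without this input, or an equivalent analysis of the neighbourhoods of \emph{all} flat connections, your argument only bounds the contribution of the region near $h_e = \one$ and hence does not establish the proposition.
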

\begin{proof}
Let us first assume $W(\cG) = 0$. In this case, we follow and adapt the proof of Abelian power-counting of Proposition \ref{prop:abpc}. We first integrate the $g$ variables in an optimal way, as was done with the $\theta$ variables for Abelian models. In each face $f$, a maximal tree of lines $T_f$ is chosen to perform the $g$ integrations. Optimality is ensured by requiring the trees $T_f$ to be compatible with the abstract GN tree. This yields:
\bes\label{step1}
\vert \cA_{\cG, \mu} \vert &\leq& K^{L(\cG)}  \prod_{i \in \mathbb{N}} \prod_{ k \in \llbracket 1 , k(i) \rrbracket } M^{- 2 L( \cG_{i}^{(k)} ) + 3 F( \cG_{i}^{(k)} )} \\
&& \times \int [\extd h]^{L(\cG)} \prod_{f} \e^{- \delta M^{i(f)} \vert \overrightarrow{\prod_e} h_e^{\epsilon_{ef}} \vert}\,,
\ees
where $i(f) = \min \{ i_e \vert e \in f \}$.

\
The main difference with Proposition \ref{prop:abpc} is that now the holonomy variables are non-commuting, which prevents us from easily integrating them out. We can however rely on the methods developed in \cite{vm1, vm2, vm3}, which provide an exact power-counting theorem for $BF$ spin foam models. In particular, one can show that for any $2$-complex with $E$ edges and $F$ faces, the expression
\beq
\int [d g_e ]^E \exp \left( - \Lambda \sum_f | \prod_{e \in f} g_e^{\epsilon_{ef}}| \right)
\eeq
scales as $\Lambda^{- \rm{rk} \, \delta^{1}_{\phi}}$ when $\Lambda \to 0$. $\delta^{1}_{\phi}$ is the twisted boundary map associated to a (non-singular) flat connection $\phi$\footnote{The explicit construction of this map can be found in \cite{vm3}. With the notations of the present thesis, it is defined as 
\bes
\delta^{1}_\phi : \quad E \otimes su(2) &\rightarrow& F \otimes su(2) \\
												 e \otimes X &\mapsto & \sum_f \epsilon_{ef} f \otimes Ad_{P \phi (v_e, v_f)} (X)
\ees
where $P \phi (v_e, v_f)$ is a path from a reference vertex $v_e$ in the edge $e$ to a reference vertex $v_f$ in the face $f$. The adjoint action encodes parallel transport with respect to $\phi$, and is full rank.}, which takes the non-commutativity of the group into account. Remarkably, this boundary map verifies:
\beq
{\rm{rk}} \, \delta^1_\phi \geq 3 {\rm{rk}} \, \epsilon_{ef}\,.
\eeq
As a result, the contribution of the closed faces of a $\cG_i^{(k)}$ can be bounded by
\beq
\int [\extd g_e ]^{L(\cG_i^{(k)})} \exp \left( - M^{i(f)} \underset{f \in F(\cG_i^{(k)})}{\sum} | \prod_{e \in f} g_e^{\epsilon_{ef}}| \right) \leq K_1^{L(\cG_i^{(k)})} M^{- 3 R(\cG_i^{(k)}) i}\,.
\eeq
The power-counting (\ref{fund}) is recovered by recursively applying this bound, from the leaves to the root of the GN tree. 

\

The $W(\cG) \neq 0$ case is an immediate consequence of the $W(\cG) = 0$ one. Indeed, one just needs to understand how the insertion of a wave-function counter-term in a graph $\cG$ affects its amplitude $\cA_\cG$. While it adds one line to $\cG$, it does not change its number of faces, nor their connectivity structure, hence the rank $R$ is not modified either. The line being created is responsible for an additional $M^{-2 i}$ factor in the power-counting, with $i$ its scale. On the other hand, it is acted upon by a Laplace operator, that is two derivatives, which according to (\ref{deriv_bound}) generate an extra $M^{2 i}$. The two contributions cancel out, which shows that wave-function counter-terms are neutral to power-counting. The $L$ contribution to $\omega$ has therefore to be compensated by a $W$ term with the opposite sign.
\end{proof}

Notice that all the steps in the derivation of the bound are optimal, except for the last integrations of face contributions. In this last step we discarded the fine effects of the non-commutative nature of $\SU(2)$, encoded in the rank of $\delta^{1}_{\phi}$. Remark however that no such effect is present for a contractible $\cG_{i}^{(k)}$, since the $2$-complex formed by its internal faces is simply connected \cite{vm3}. Indeed, such a subgraph supports a unique flat connection (the trivial one), which means that the integrand in equation (\ref{step1}) can be linearized around $h_e = \one$, showing the equivalence between Abelian and non-Abelian power-countings in this case.   
Since melonic subgraphs are contractible, this confirms our previous claim: the Abelian power-counting exactly captures the divergences of the model presented in this chapter.

\subsection{Contraction of high melonic subgraphs}\label{sec:contraction}

We close this section with a discussion of the key ingredients entering the renormalization of this model, by explaining how local approximations to high melonic subgraphs are extracted from high slices to lower slices of the amplitudes. A full account of the renormalization procedure, including rigorous finiteness results, will be detailed in the next and final section. 
 
\

\begin{figure}[h]
\begin{center}
\includegraphics[scale=0.6]{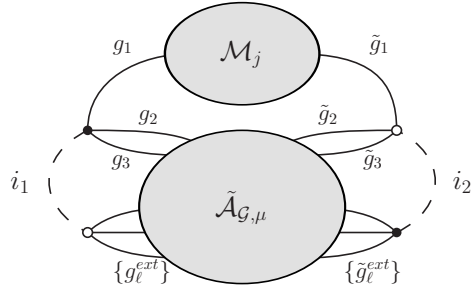}
\caption{A graph with high melonic subgraph $\cM_j$.}
\label{contract_melon}
\end{center}
\end{figure}
Let us consider a non-vacuum graph with scale attribution $( \cG , \mu )$, containing a melonic high subgraph $\cM_j \subset \cG$ at scale $j$. For the convenience of the reader, we first focus on the case $F_{ext}(\cM_j) = 1$, which encompasses all the $2$-point divergent subgraphs, therefore all the degree $2$ subgraphs. We also first assume that no wave-function counter-term is present in $\cM_j$.

\subsubsection{Divergent subgraphs with a single external face, and no wave-function counter-terms}

Since $F_{ext} (\cM_j) = 1$, $N_{ext} ( \cM_j )$ contains two external propagators, labeled by external variables $\{ g_\ell^{ext} \,, \ell = 1 , \ldots , 3 \}$ and $\{ \tilde{g}_\ell^{ext} \,, \ell = 1 , \ldots , 3 \}$, and scales $i_1 < j$ and $i_2 < j$ respectively. We can assume (without loss of generality) that the melonic subgraph $\cM_j$ is inserted on a color line of color $\ell = 1$. The amplitude of $\cG$, pictured in Figure \ref{contract_melon}, takes the form:
\bes
\cA_{\cG , \mu} &=& \int [\extd g_\ell ]^3 [\extd \tilde{g}_\ell ]^3 [\extd g_\ell^{ext} ]^3 [\extd \tilde{g}_\ell^{ext} ]^3 \widetilde{\cA}_{\cG , \mu}  ( g_2 , g_3 ; \tilde{g}_2 , \tilde{g}_3 ; \{ g_\ell^{ext} \} ; \{ \tilde{g}_\ell^{ext} \} )  \\
&& \times C_{i_1} ( g_1^{ext} , g_2^{ext} , g_3^{ext} ; g_1 , g_2 , g_3 ) \cM_j ( g_1 , \tilde{g}_1 ) C_{i_2} ( \tilde{g}_1 , \tilde{g}_2 , \tilde{g}_3 ; g_1^{ext} , g_2^{ext} , g_3^{ext} ) \, . \nn
\ees
The idea is then to approximate the value of $\cA_{\cG , \mu}$ by an amplitude associated to the contracted graph $\cG / \cM_j$. This can be realized by "moving" one of the two external propagators towards the other. In practice, we can use the interpolation\footnote{$X_g$ denotes the Lie algebra element with the smallest norm such that $\e^{X_g} = g$.}
\bes
g_1 (t) = \tilde{g}_1 e^{t X_{\tilde{g}_1^{\inv} g_1}} \,, \qquad t \in [0 , 1]\, 
\ees 
and define:
\bes\label{sep_scales}
\cA_{\cG , \mu} (t) &=& \int [\extd g_\ell ]^3 [\extd \tilde{g}_\ell ]^3 [\extd g_\ell^{ext} ]^3 [\extd \tilde{g}_\ell^{ext} ]^3 \widetilde{\cA}_{\cG , \mu}  ( g_2, g_3 ; \tilde{g}_2 , \tilde{g}_3 ; \{ g_\ell^{ext} \} ; \{ \tilde{g}_\ell^{ext} \} ) \\
&& \times C_{i_1} ( g_1^{ext} , g_2^{ext} , g_3^{ext} ; g_1 (t) , g_2 , g_3 ) \cM_j ( g_1 , \tilde{g}_1 ) C_{i_2} ( \tilde{g}_1 , \tilde{g}_2 , \tilde{g}_3 , \tilde{g}_1^{ext} , \tilde{g}_2^{ext} , \tilde{g}_3^{ext} ) \,. \nn 
\ees
This formula together with a Taylor expansion allows to approximate $\cA_\cG = \cA_\cG (1)$ by $\cA_\cG (0)$ and its derivatives. The order at which the approximation should be pushed is determined by the degree of divergence $\omega(\cM_j )$ of $\cM_j$ and the power-counting theorem: we should use the lowest order ensuring that the remainder in the Taylor expansion has a convergent power-counting. Roughly speaking each derivative in $t$ decreases the degree of divergence by $1$, therefore the Taylor expansion needs to be performed up to order $\omega(\cM_j )$:
\beq\label{taylor}
\cA_{\cG , \mu} = \cA_{\cG , \mu} (1) = \cA_{\cG , \mu} (0) + \sum_{k = 1}^{\omega(\cM_j )} \frac{1}{k !} \cA_{\cG , \mu}^{(k)} (0) + \int_{0}^{1} \extd t \, \frac{(1 - t)^{\omega(\cM_j )}}{\omega(\cM_j ) !} \cA_{\cG , \mu}^{(\omega(\cM_j ) + 1)} (t) .
\eeq 

\

Before analyzing further the form of each of these terms, we point out a few interesting properties verified by the function $\cM_j$. First, since by definition the variables $g_1$ and $\tilde{g}_1$ are boundary variables for a same face (and because the heat kernel is a central function), $\cM_j( g_1 , \tilde{g}_1)$ can only depend on $\tilde{g}_1^{\inv} g_1$. From now on, we therefore use the notation:
\beq
\cM_j( g_1 , \tilde{g}_1) = \cM_j ( \tilde{g}_1^{\inv} g_1 )\,.
\eeq
We can then prove the following lemma.
\begin{lemma}
\begin{enumerate}[(i)]
\item $\cM_j$ is invariant under inversion:
\beq
\forall g \in \SU(2) \, , \qquad \cM_j (g^{\inv}) = \cM_j (g) \,. 
\eeq 
\item $\cM_j$ is central:
\beq
\forall g \, , h \in \SU(2) \, , \qquad \cM_j ( h g h^{\inv}) = \cM_j (g) \,.
\eeq
\end{enumerate}
\end{lemma}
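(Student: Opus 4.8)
The plan is to prove both symmetry properties directly from the explicit integral representation of $\cM_j$, which is the amplitude of a melonic $2$-point subgraph with a single external face. The key observation is that $\cM_j$ is built entirely out of heat kernels $K_\alpha$ (one per closed internal face) together with integrations over holonomy variables $h_e$ along the lines of $\cM_j$, all integrated against Schwinger parameters restricted to the appropriate high slices. Crucially, $\cM_j$ depends on $\tilde{g}_1^{\inv} g_1$ only through the single external face, whose holonomy is of the schematic form $\tilde{g}_1^{\inv} g_1 \cdot H$, where $H$ is an ordered product of internal holonomies along the external face. I would first write this out concretely: there exists a function obtained by integrating out all internal holonomies and Schwinger parameters, so that $\cM_j(g) = \int [\extd h_e] \, P(\{h_e\}) \, K_{\alpha(f_{ext})}\!\left( g \, \overrightarrow{\prod_{e}} h_e^{\epsilon_e} \right)$ where $P$ collects the closed-face heat kernels.

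For part (i), invariance under inversion, I would exploit the fact that the heat kernel is itself inversion-invariant, $K_\alpha(g^{\inv}) = K_\alpha(g)$, since $K_\alpha$ is a class function and $g$ and $g^{\inv}$ are conjugate in $\SU(2)$ (equivalently, $\chi_j(g^{\inv}) = \chi_j(g)$ because the characters are real). The remaining step is a change of variables $h_e \mapsto h_e^{\inv}$ together with an orientation reversal of the external face, which maps the ordered product $\overrightarrow{\prod_e} h_e^{\epsilon_e}$ to the inverse of a product in the opposite order. Because every internal face also gets its argument inverted under this substitution, and the Haar measure is invariant under $h_e \mapsto h_e^{\inv}$, the whole integrand is left invariant up to sending $g \mapsto g^{\inv}$. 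Matching the two expressions yields $\cM_j(g^{\inv}) = \cM_j(g)$.

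For part (ii), centrality, the cleanest route is to use the conjugation-covariance of heat kernels: $K_\alpha(hgh^{\inv})$ can be absorbed by the left-right shift $h_e \mapsto h\, h_e\, h^{\inv}$ simultaneously on all holonomy variables, which is again a Haar-invariance argument. Alternatively, and perhaps more transparently, I would appeal to the gauge symmetry of the Feynman amplitudes already established in the excerpt, namely $h_e \mapsto g_{t(e)} h_e g_{s(e)}^{\inv}$: conjugating the external variable $g_1$ by $h$ is equivalent to a gauge transformation at the boundary vertex, under which the amplitude is invariant. Since $\cM_j$ is a closed function of a single external face, such a boundary conjugation can be reabsorbed, giving $\cM_j(hgh^{\inv}) = \cM_j(g)$.

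The main obstacle I anticipate is bookkeeping the orientation and ordering of the external face correctly in part (i): one must verify that reversing the orientation of the face under $h_e \mapsto h_e^{\inv}$ precisely produces the inverse of $g \cdot \overrightarrow{\prod_e} h_e^{\epsilon_e}$ and not some conjugate thereof, and that the internal (closed) faces are genuinely left invariant by the substitution. This is purely combinatorial and relies on the colored structure ensuring no tadfaces, but it requires care to state cleanly. Once the integral representation is fixed and the inversion/conjugation invariance of $K_\alpha$ is invoked, both identities follow from invariance of the Haar measure, so the conceptual content is light and the effort is mostly in setting up notation unambiguously.
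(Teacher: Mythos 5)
Your part (ii) is essentially correct: conjugating \emph{all} internal holonomies simultaneously, $h_e \mapsto h\, h_e\, h^{\inv}$, maps every closed-face word $W_f$ to $h W_f h^{\inv}$ and the external-face word $g H$ to $h(gH)h^{\inv}$ (after writing $\cM_j(hgh^{\inv})$ and absorbing the conjugation), so centrality of $K_\alpha$ and Haar invariance finish the job. This is close in spirit to what the paper does, though the paper proves both statements at once by a different mechanism (see below).

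Part (i), however, has a genuine gap, and it sits exactly at the spot you flagged as ``purely combinatorial''. Under the substitution $h_e \mapsto h_e^{\inv}$, a closed-face word $W_f = h_{e_1}^{\epsilon_1} h_{e_2}^{\epsilon_2} \cdots h_{e_k}^{\epsilon_k}$ does \emph{not} become $W_f^{\inv}$; it becomes $h_{e_1}^{-\epsilon_1}\cdots h_{e_k}^{-\epsilon_k} = \bigl( h_{e_k}^{\epsilon_k}\cdots h_{e_1}^{\epsilon_1}\bigr)^{\inv}$, i.e.\ the inverse of the \emph{reversed} word. Inversion invariance of $K_\alpha$ then leaves you with the heat kernel of the reversed word, and in a non-Abelian group the reversed word is in general \emph{not} conjugate to the original one, so $K_\alpha(\mathrm{rev}(W_f)) \neq K_\alpha(W_f)$. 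Concretely, in $\SU(2)$ take $A = {\rm i}\sigma_x$, $B = {\rm i}\sigma_y$, $C = {\rm i}\sigma_z$: then $ABC = \one$ while $CBA = -\one$, so $\Tr(ABC) = 2 \neq -2 = \Tr(CBA)$, and the two words lie in different conjugacy classes. Since each internal holonomy can appear in several faces, you cannot fix this face by face, and the same reversal problem hits the external face ($K(g^{\inv}H)$ turns into $K(g\,\mathrm{rev}(H))$, not $K(gH)$). The paper avoids this entirely by proving both (i) and (ii) by induction on the number of lines of a rosette of $\cM_j$: the melonic structure lets one write $\cM_j$ as an elementary melon dressed by two smaller melonic insertions $m^{(2)}, m^{(3)}$, so that at each stage there is a \emph{single} new holonomy $h$, appearing exactly once in each of the three face words $\tilde{g}_\ell^{\inv} g_\ell h$. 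For such length-two words, inversion plus centrality of $K_\alpha$ and Haar invariance of $\extd h$ do work (reversal of a two-letter word \emph{is} a conjugation), and the invariances of the insertions are supplied by the induction hypothesis. If you want to salvage a direct, non-inductive argument, you would need an additional global input (e.g.\ relating the reversed-word integrand to the amplitude of the mirror graph and arguing that mirror amplitudes coincide with the original real amplitudes), which your proposal does not supply.
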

\begin{proof}
We can proceed by induction on the number of lines $\widetilde{L}$ of a rosette of $\cM_j$. 

\
When $\widetilde{L} = 1$, $\cM_j$ can be cast as an integral over a single Schwinger parameter $\alpha$ of an integrand of the form:
\beq
\int \extd h \,  K_\alpha ( \tilde{g}_1^{\inv}  g_1 h) \, (K_\alpha ( h ))^2 \,.
\eeq
By invariance of the heat kernels and the Haar measure under inversion and conjugation, the invariance of $\cM_j$ immediately follows. 

\
Suppose now that $\widetilde{L} \geq 2$. A rosette of $\cM_j$ can be thought of as an elementary melon decorated with two melonic insertions of size strictly smaller than $\widetilde{L}$ (at least one of them being non-empty). We therefore have:
\bes
\cM_j ( \tilde{g}_1^{\inv} g_1 ) &=& \int [ \extd g_2 \extd \tilde{g}_2 \extd g_3 \extd \tilde{g}_3 ] \, m^{(2)} ( g_2^{\inv} g_2 ) \, m^{(3)} ( g_3^{\inv} g_3 ) \\
&&\times \int \extd \alpha \int \extd h \, K_\alpha ( \tilde{g}_1^{\inv}  g_1 h) K_\alpha ( \tilde{g}_2^{\inv}  g_2 h ) K_\alpha ( \tilde{g}_3^{\inv}  g_3 h) \,, \nn
\ees
in which we did not specify the integration domain of $\alpha$, since it does not play any role here. $m^{(2)}$ and $m^{(3)}$ are associated to melonic subgraphs of size strictly smaller than $L$, we can therefore assume that they are invariant under conjugation and inversion\footnote{If one $m^{(i)}$ is an empty melon, then $m^{(i)} (g) = \delta(g)$, and is trivially invariant.}. Using again the invariance of the heat kernels and the Haar measure, we immediately conclude that $\cM_j$ itself is invariant. 
   
\end{proof}

\

We now come back to (\ref{taylor}). The degree of divergence being bounded by $2$, it contains terms $\cA_{\cG , \mu}^{(k)} (0)$ with $k \leq 2$. We now show that $\cA_{\cG , \mu} (0)$ gives mass counter-terms, $\cA_{\cG , \mu}^{(1)} (0)$ is identically zero, and $\cA_{\cG , \mu}^{(2)} (0)$ implies wave-function counter-terms. This is stated in the following proposition.

\begin{proposition}
\begin{enumerate}[(i)]
\item $\cA_{\cG , \mu} (0)$ is proportional to the amplitude of the contracted graph $\cG / \cM_j$, with the same scale attribution:
\beq
\cA_{\cG , \mu} (0) = \left( \int \extd g \cM_j (g) \right) \cA_{\cG / \cM_j , \mu} \,.
\eeq
\item Due to the symmetries of $\cM_j$, $\cA_{\cG , \mu}^{(1)} (0)$ vanishes: 
\beq
\cA_{\cG , \mu}^{(1)} (0) = 0\,.
\eeq
\item $\cA_{\cG , \mu}^{(2)}$ is proportional to an amplitude in which a Laplace operator has been inserted in place of $\cM_j$:
\bes\label{wf_ct}
\cA_{\cG , \mu}^{(2)} (0) &=& \left( \frac{1}{3} \int \extd g \cM_j (g) | X_g | ^{2} \right)  \\ 
&& \times \int [\extd g_\ell ]^3 [\extd \tilde{g}_\ell ]^3 \int [\extd g_\ell^{ext} ]^3 [\extd \tilde{g}_\ell^{ext} ]^3 \widetilde{\cA}_{\cG , \mu}  ( g_2 , g_3 ; \tilde{g}_2 , \tilde{g}_3 ; \{ g_\ell^{ext} \} ; \{ \tilde{g}_\ell^{ext} \} ) \nn \\
&& \times \left( \Delta_{\tilde{g}_1} C_{i_1} ( g_1^{ext} , g_2^{ext} , g_3^{ext} ; \tilde{g}_1 , g_2 , g_3 ) \right) C_{i_2} ( \tilde{g}_1 , \tilde{g}_2 , \tilde{g}_3 ; \tilde{g}_1^{ext} , \tilde{g}_2^{ext} , \tilde{g}_3^{ext} )  \,. \nn
\ees
\end{enumerate}
\end{proposition}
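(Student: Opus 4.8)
The plan is to prove the three claims by analyzing the Taylor expansion \eqref{taylor} term by term, using the interpolation $g_1(t) = \tilde g_1 \e^{t X_{\tilde g_1^{\inv} g_1}}$ that moves the external variable $g_1$ towards $\tilde g_1$. The guiding principle is that each term is obtained by differentiating the amplitude $\cA_{\cG,\mu}(t)$ in $t$ and evaluating at $t=0$, where $g_1(0) = \tilde g_1$. Since the only $t$-dependence sits in the first external propagator $C_{i_1}(\ldots ; g_1(t), g_2, g_3)$, all derivatives act solely on this factor, and the relation $\frac{\extd}{\extd t} f(\tilde g_1 \e^{t X}) = \cL_X f(\tilde g_1 \e^{t X})$ turns them into Lie derivatives in direction $X = X_{\tilde g_1^{\inv} g_1}$.

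For (i), I would set $t=0$ directly. Then $g_1(0) = \tilde g_1$, so the factor $\cM_j(\tilde g_1^{\inv} g_1)$ becomes $\cM_j(\tilde g_1^{\inv} \tilde g_1) = \cM_j(\one)$ \emph{after} the substitution $g_1 \to \tilde g_1$ in the first propagator decouples the melon block from the rest. More precisely, once the first propagator no longer depends on $g_1$, the $g_1$ integration only hits $\cM_j(\tilde g_1^{\inv} g_1)$, and changing variables $g \equiv \tilde g_1^{\inv} g_1$ factorizes $\int \extd g\, \cM_j(g)$ out, leaving exactly the integrand of $\cA_{\cG/\cM_j,\mu}$ (the melon contracted to a point, its two external propagators glued at $\tilde g_1$). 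This is the same factorization mechanism as in Proposition on melopole contraction from the previous chapter, now realized at $t=0$.

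For (ii) and (iii), I would expand $\cA_{\cG,\mu}(t)$ to second order. The first derivative brings down a single Lie derivative $\cL_{X}$ acting on $C_{i_1}$, integrated against $\cM_j(\tilde g_1^{\inv} g_1)$ where $X = X_{\tilde g_1^{\inv} g_1}$ depends on the integration variable. The key is that after changing variables to $g = \tilde g_1^{\inv} g_1$ the angular integral $\int \extd g\, \cM_j(g)\, X_g$ appears. Because $\cM_j$ is central and inversion-invariant (the Lemma above), the measure $\cM_j(g)\extd g$ is invariant under $g \mapsto g^{\inv}$, which sends $X_g \mapsto -X_g$, and also invariant under conjugation $g \mapsto hgh^{\inv}$, which rotates $X_g$; hence $\int \extd g\, \cM_j(g)\, X_g = 0$ as a $\su(2)$-valued integral, giving $\cA^{(1)}_{\cG,\mu}(0)=0$. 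For the second derivative I would produce the term $\cL_X \cL_X C_{i_1}$ contracted with $\cM_j(g)$; by the same conjugation invariance the tensor $\int \extd g\, \cM_j(g)\, X_g \otimes X_g$ must be proportional to the identity on $\su(2)$, with proportionality constant $\frac{1}{3}\int \extd g\, \cM_j(g)\,|X_g|^2$ (tracing over the three generators). Contracting the identity with the two Lie derivatives reconstitutes the Laplacian $\Delta_{\tilde g_1}$ acting on $C_{i_1}$, yielding precisely \eqref{wf_ct}.

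The main obstacle I anticipate is (iii): one must be careful that the second-order Lie-derivative term genuinely reassembles the Laplace--Beltrami operator and not merely a sum of second directional derivatives along a fixed frame. The resolution is exactly the conjugation-invariance argument — the central and inversion symmetries of $\cM_j$ force the second moment tensor to be isotropic, i.e. $\int \extd g\, \cM_j(g)\, X_g^a X_g^b = \tfrac{1}{3}\delta^{ab}\int \extd g\, \cM_j(g)\, |X_g|^2$ in any orthonormal basis of $\su(2)$ — so that summing over an orthonormal frame of directions $X$ reproduces $\sum_a \cL_{X_a}\cL_{X_a} = \Delta$. A secondary subtlety is tracking that the second-order coefficient of the interpolation $g_1(t)$ itself contributes no extra term at $t=0$: since $\frac{\extd^2}{\extd t^2} f(\tilde g_1 \e^{tX}) = \cL_X \cL_X f$ exactly (the one-parameter subgroup $\e^{tX}$ has vanishing acceleration relative to the Lie derivative), no connection correction appears, and the identification with $\Delta$ is clean.
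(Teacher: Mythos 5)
Your proposal is correct and follows essentially the same route as the paper's proof: evaluation at $t=0$ combined with the change of variables $g_1 \to \tilde{g}_1 g_1$ to factorize $\int \extd g \, \cM_j(g)$ for (i), inversion invariance of $\cM_j$ killing the first moment $\int \extd g\, \cM_j(g) X_g$ for (ii), and conjugation invariance forcing isotropy of the second moment tensor $\int \extd g\, \cM_j(g) X_g^a X_g^b = \tfrac{1}{3}\delta^{ab}\int \extd g\, \cM_j(g) |X_g|^2$ so that the two Lie derivatives reassemble $\Delta_{\tilde{g}_1}$ for (iii), which is exactly the paper's diagonal/off-diagonal decomposition. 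The only blemish is the passing claim in (i) that the factor "becomes $\cM_j(\one)$" --- the correct factorized coefficient is $\int \extd g\, \cM_j(g)$, as your own "more precisely" sentence then states --- so this is a slip of exposition rather than a gap.
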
 
\begin{proof}

\begin{enumerate}[(i)]
\item One immediately has:
\bes
\cA_{\cG , \mu} (0) &=& \int [\extd g_\ell ]^3 [\extd \tilde{g}_\ell ]^3 [\extd g_\ell^{ext} ]^3 [\extd \tilde{g}_\ell^{ext} ]^3 \widetilde{\cA}_{\cG , \mu}  ( g_2 , g_3 ; \tilde{g}_2 , \tilde{g}_3 ; \{ g_\ell^{ext} \} ; \{ \tilde{g}_\ell^{ext} \} )\nn \\
&& \times C_{i_1} ( g_1^{ext} , g_2^{ext} , g_3^{ext} ; \tilde{g}_1 , g_2 , g_3 ) \cM_j ( \tilde{g}_1^{\inv} g_1  ) C_{i_2} ( \tilde{g}_1 , \tilde{g}_2 , \tilde{g}_3 , \tilde{g}_1^{ext} , \tilde{g}_2^{ext} , \tilde{g}_3^{ext} ) \nn \\
&=& \int [\extd g_\ell ]^3 [\extd \tilde{g}_\ell ]^3  [\extd g_\ell^{ext} ]^3 [\extd \tilde{g}_\ell^{ext} ]^3 \widetilde{\cA}_{\cG , \mu}  ( g_2 , g_3 ; \tilde{g}_2 , \tilde{g}_3 ; \{ g_\ell^{ext} \} ; \{ \tilde{g}_\ell^{ext} \} ) \nn \\
&& \times C_{i_1} ( g_1^{ext} , g_2^{ext} , g_3^{ext} ; \tilde{g}_1 , g_2 , g_3 ) \cM_j ( g_1 ) C_{i_2} ( \tilde{g}_1 , \tilde{g}_2 , \tilde{g}_3 , \tilde{g}_1^{ext} , \tilde{g}_2^{ext} , \tilde{g}_3^{ext} ) \nn \\
&=& \left( \int \extd g \cM_j (g) \right) \cA_{\cG / \cM_j , \mu} \,,
\ees
where from the first to the second line we made the change of variable $g_1 \to \tilde{g}_1 g_1$.
\item For $\cA_{\cG , \mu}^{(1)} (0)$, a similar change of variables yields:
\bes
\cA_{\cG , \mu}^{(1)} (0) &=& \int [\extd g_\ell ]^2 [\extd \tilde{g}_\ell ]^3 [\extd g_\ell^{ext} ]^3 [\extd \tilde{g}_\ell^{ext} ]^3 \widetilde{\cA}_{\cG , \mu}  ( g_2 , g_3 ; \tilde{g}_2 , \tilde{g}_3 ; \{ g_\ell^{ext} \}; \{ \tilde{g}_\ell^{ext} \} ) \nn \\
&\times& \left( \int \extd g  \, \cM_j (g) \, \cL_{X_g , \tilde{g}_1} \, C_{i_1} ( g_1^{ext} , g_2^{ext} , g_3^{ext} ; \tilde{g}_1 , g_2 , g_3 )  \right) \nn \\
&\times& C_{i_2} ( \tilde{g}_1 , \tilde{g}_2 , \tilde{g}_3 ; \tilde{g}_1^{ext} , \tilde{g}_2^{ext} , \tilde{g}_3^{ext} ) \,. 
\ees
But by invariance of $\cM_j$ under inversion, one also has
\beq
\int \extd g  \, \cM_j (g) \, \cL_{X_g} = - \int \extd g  \, \cM_j (g) \, \cL_{X_g} \; \Rightarrow \; \int \extd g  \, \cM_j (g) \, \cL_{X_g} = 0\,,
\eeq 
hence $\cA_{\cG , \mu}^{(1)} (0) = 0$. 

\item Finally, $\cA_{\cG , \mu}^{(2)} (0)$ can be expressed as:
\bes
\cA_{\cG , \mu}^{(2)} (0) &=& \int [\extd g_\ell ]^2 [\extd \tilde{g}_\ell ]^3 [\extd g_\ell^{ext} ]^3 [\extd \tilde{g}_\ell^{ext} ]^3 \widetilde{\cA}_{\cG , \mu}  ( g_2 , g_3 ; \tilde{g}_2 , \tilde{g}_3 ; \{ g_\ell^{ext} \} ; \{ \tilde{g}_\ell^{ext} \} ) \nn \\
& \times & \left( \int \extd g  \, \cM_j (g) \, (\cL_{X_g , \tilde{g}_1})^{2} \, C_{i_1} ( g_1^{ext} , g_2^{ext} , g_3^{ext} ; \tilde{g}_1 , g_2 , g_3 )  \right) \nn \\
&\times& C_{i_2} ( \tilde{g}_1 , \tilde{g}_2 , \tilde{g}_3 ; \tilde{g}_1^{ext} , \tilde{g}_2^{ext} , \tilde{g}_3^{ext} ) \,. 
\ees
We can decompose the operator $\int \extd g  \, \cM_j (g) \, (\cL_{X_g})^{2}$ into its diagonal and off-diagonal parts with respect to an orthonormal basis $\{ \tau_k \,, k = 1\,, \ldots , 3\}$ in $\su(2)$. The off-diagonal part writes
\beq
\sum_{k \neq l} \int \extd g \, \cM_j (g ) \, X^{k}_g X^{l}_g \, \cL_{\tau_k} \cL_{\tau_l}
\eeq
and can be shown to vanish. Indeed, let us fix $k \neq l$, and $h \in \SU(2)$ such that:
\beq
X^{k}_{h g h^{\inv}} = X^{l}_{g} \; \; ; \qquad X^{l}_{h g h^{\inv}} = - X^{k}_{g} \,.  
\eeq
It follows from the invariance of $\cM_j$ under conjugation that 
\beq
\int \extd g \, \cM_j (g ) \, X^{k}_g X^{l}_g = - \int \extd g \, \cM_j (g ) \, X^{k}_g X^{l}_g
\; \Rightarrow \; \int \extd g \, \cM_j (g ) \, X^{k}_g X^{l}_g = 0\,. 
\eeq
Hence all off-diagonal terms vanish. One is therefore left with the diagonal ones, which contribute in the following way:
\beq
\int \extd g  \, \cM_j (g) \, (\cL_{X_g})^{2} = \sum_{k=1}^{3} \int \extd g  \, \cM_j (g) \, ( X^{k}_g )^{2} (\cL_{\tau_k})^{2} \,.
\eeq
Again, by invariance under conjugation, $\int \extd g  \, \cM_j (g) \, ( X^{k}_g )^{2}$ does not depend on $k$. This implies:
\bes
\int \extd g  \, \cM_j (g) \, (\cL_{X_g})^{2} &=& \int \extd g  \, \cM_j (g) \, ( X^{1}_g )^{2} \sum_{k=1}^{3} (\cL_{\tau_k})^{2} \\
&=& \int \extd g  \, \cM_j (g) \, ( X^{1}_g )^{2} \sum_{k=1}^{3} (\cL_{\tau_k})^{2} \\
&=& \left( \frac{1}{3} \int \extd g  \, \cM_j (g) \, ( X^{k}_g )^{2} \right) \Delta \,. 
\ees
\end{enumerate}
\end{proof}

\subsubsection{Additional external faces and wave-function counter-terms}

Let us first say a word about how the previous results generalize to more external faces, still assuming the absence of wave-function counter-terms. According to Corollary \ref{faces_legs}, the only two possibilities are $F_{ext}(\cM_j) = 2$ or $F_{ext}(\cM_j) = 3$, and in both cases $N \geq 4$. Incidentally, $\omega( \cM_j ) = 0$ or $1$. Moreover, since all the faces have the same color, we always have $N_{ext} (\cM_j) = 2 F_{ext}(\cM_j)$. One defines $\cA_{\cG , \mu} (t)$ by interpolating between the end variables of the external faces, which consist of $F_{ext}(\cM_j)$ pairs of variables, with one variable per propagator in $N_{ext} (\cM_j)$. Assuming their color to be $1$, for instance, the amplitude $\cA_{\cG , \mu} (t)$ can be written as:
\bes
\cA_{\cG , \mu} (t) &=& \int [\extd g_\ell^{k} \extd \tilde{g}_\ell^{k} ] [\extd g_\ell^{ext, k} \extd \tilde{g}_\ell^{ext, k} ]  \widetilde{\cA}_{\cG , \mu}  ( g_2^{k} , g_3^{k} ; \tilde{g}_2^{k} , \tilde{g}_2^{k} ; \{ g_\ell^{ext,k} \} ; \{ \tilde{g}_\ell^{ext,k} \} ) \\
&&\prod_{k = 1}^{F_{ext}(\cM_j)} C_{i_k} ( g_1^{ext, k} , g_2^{ext, k} , g_3^{ext , k} ; g_1^{k} (t) , g_2^{k} , g_3^{k} ) \, \cM_j ( \{ g_1^{k} , \tilde{g}_1^{k} \} ) \nn \\
&& \qquad C_{i_k '} ( \tilde{g}_1^{k} , \tilde{g}_2^{k} , \tilde{g}_3^{k} ; \tilde{g}_1^{ext, k} , \tilde{g}_2^{ext, k} , \tilde{g}_3^{ext, k} ) \,, \nn
\ees
with 
\beq
g_1^{k} (t) = \tilde{g}_1^k e^{t X_{(\tilde{g}_1^{k})^\inv g_1^{k}}} \,, \qquad t \in [0 , 1]\,. 
\eeq
Moreover, we know that under a spanning tree contraction, the external faces of $\cM_j$ get disconnected. This means that the function $\cM_j$ can be factorized as a product
\beq
\cM_j ( \{ g_1^{k} , \tilde{g}_1^{k} \} ) = \prod_{k = 1}^{F_{ext} (\cM_j)} \cM_j^{(k)} ( g_1^{k} , \tilde{g}_1^{k} ) \,,
\eeq
such that each $\cM_j^{(k)}$ verifies all the invariances discussed in the previous paragraph. Thus, the part of the integrand of $\cA_{\cG , \mu} (t)$ relevant to $\cM_j$ is factorized into $k$ terms similar to the integrand appearing in the $F_{ext} = 1$ case. It is then immediate to conclude that all the properties which were proven in the previous paragraph hold in general. Indeed, the Taylor expansions to check are up to order $0$ or $1$ at most. The zeroth order of a product is trivially the product of the zeroth orders. As for the first order, it cancels out since the derivative of each one of the $k$ terms is $0$ at $t=0$. 

\

The effect of wave-function counter-terms is even easier to understand. Indeed, they essentially amount to insertions of Laplace operators. But the heat kernel at time $\alpha$ verifies
\beq
\Delta K_\alpha = \frac{\extd K_\alpha}{\extd \alpha}\,,
\eeq
therefore all the invariances of $K_\alpha$ on which the previous demonstrations rely also apply to $\Delta K_\alpha$. 

\

All in all, the conclusions drawn in the previous paragraph hold for all non-vacuum high divergent subgraphs $\cM_j$.

\subsubsection{Notations and finiteness of the remainders}
\label{sec:remainders}

In the remainder of this chapter, it will be convenient to use the following notations for the local part of the Taylor expansions above:
\beq
\tau_{\cM_j} \cA_{\cG , \mu} = \sum_{k = 0}^{\omega(\cM_j )} \frac{1}{k !} \cA_{\cG , \mu}^{(k)} (0) \,.
\eeq
$\tau_{\cM_j}$ projects the full amplitude $\cA_{\cG , \mu}$ onto effectively local contributions which take into account the relevant contributions of the subgraph $\cM_j \subset \cG$. To confirm that this is indeed the case, one needs to prove that in the remainder 
\beq
R_{\cM_j} \cA_{\cG , \mu} \equiv \int_{0}^{1} \extd t \, \frac{(1 - t)^{\omega(\cM_j )}}{\omega(\cM_j ) !} \cA_{\cG , \mu}^{(\omega(\cM_j ) + 1)} (t) \, ,
\eeq
the (non-local) part associated to $\cM_j$ is power-counting convergent. According to (\ref{sep_scales}), we have:
\bes
R_{\cM_j} \cA_{\cG , \mu} &=& \int_{0}^{1} \extd t \, \frac{(1 - t)^{\omega(\cM_j )}}{\omega(\cM_j ) !} \int [\extd g_\ell ]^3 [\extd \tilde{g}_\ell ]^3 [\extd g_\ell^{ext} ]^3 [\extd \tilde{g}_\ell^{ext} ]^3 \widetilde{\cA}_{\cG , \mu}  ( g_2 , g_3 ; \tilde{g}_2 , \tilde{g}_3 ; \{ g_\ell^{ext} \} ; \{ \tilde{g}_\ell^{ext} \} ) \nn \\
&&\times \, (\cL_{X_{\tilde{g_1}^{\inv} g_1 } , g_1 (t) })^{\omega(\cM_j) + 1} \, C_{i_1} ( g_1^{ext} , g_2^{ext} , g_3^{ext} ; g_1 (t) , g_2 , g_3 ) \nn \\
&& \times \, \cM_j ( g_1 , \tilde{g}_1 ) \,C_{i_2} ( \tilde{g}_1 , \tilde{g}_2 , \tilde{g}_3 ; \tilde{g}_1^{ext} , \tilde{g}_2^{ext} , \tilde{g}_3^{ext} ) \,,
\ees
and therefore:
\bes\label{remainder_scales}
\vert R_{\cM_j} \cA_{\cG , \mu} \vert &\leq& \int_{0}^{1} \extd t \, \frac{\vert 1 - t \vert^{\omega(\cM_j )}}{\omega(\cM_j ) !} \int [\extd g_\ell ]^3 [\extd \tilde{g}_\ell ]^3 [\extd g_\ell^{ext} ]^3 [\extd \tilde{g}_\ell^{ext} ]^3 \nn \\
&&\times \, \widetilde{\cA}_{\cG , \mu}  ( g_2 , g_3 ; \tilde{g}_2 , \tilde{g}_3 ; \{ g_\ell^{ext} \} ; \{ \tilde{g}_\ell^{ext} \} ) \nn \\
&& \times \vert X_{\tilde{g_1}^{\inv} g_1} \vert^{\omega(\cM_j ) + 1} (\cL_{ \tilde{X}_{\tilde{g_1}^{\inv} g_1 } , g_1 (t) })^{\omega(\cM_j) + 1} C_{i_1} ( g_1^{ext} , g_2^{ext} , g_3^{ext} ; g_1 (t) , g_2 , g_3 ) \nn \\
&& \times \cM_j ( g_1 , \tilde{g}_1 ) C_{i_2} ( \tilde{g}_1 , \tilde{g}_2 , \tilde{g}_3 ; \tilde{g}_1^{ext} , \tilde{g}_2^{ext} , \tilde{g}_3^{ext} ) \,,
\ees
where $\tilde{X}_{\tilde{g_1}^{\inv} g_1 }$ is the unit vector of direction $X_{\tilde{g_1}^{\inv} g_1 }$. We can now analyze how the power-counting of expression (\ref{remainder_scales}) differs from that of the amplitude $( \cG , \mu )$. There are two competing effects. The first is a loss of convergence due to the $\omega(\cM_j) + 1$ derivatives acting on $C_{i_1}$. According to (\ref{deriv_bound}), these contributions can be bounded by an additional $M^{( \omega( \cM_{j}) + 1 ) i_1 }$ term. This competes with the second effect, according to which the non-zero contributions of the integrand are concentrated in the region in which $\tilde{g_1}^{\inv} g_1$ is close to the identity. More precisely, the fact that $\cM_j$ contains only scales higher than $j$ imposes that 
\beq
\vert X_{\tilde{g_1}^{\inv} g_1} \vert \leq K M^{- j}
\eeq
where the integrand is relevant. The first line of \eqref{remainder_scales} therefore contributes to the power-counting with a term bounded by $M^{- (\omega( \cM_{j}) + 1 )) j}$. And since by definition $j > i_1$, one concludes that the degree of divergence of the remainder is bounded by:
\beq
\omega ( \cM_j ) + (\omega ( \cM_j ) + 1 ) ( i_1 - j ) \leq - 1 \,.  
\eeq

\section{Perturbative renormalizability}

We now establish a BPHZ theorem for the renormalized series. As in other kinds of field theories, this proof relies on forest formulas, and a careful separation between its high, divergent, and quasi-local parts from additional useless finite contributions.

\

We begin with a (standard) discussion about the compared merits of the renormalized expansion on the one hand, and the effective expansion on the other hand. 

So far we have discussed the renormalization of our model in the spirit of the latter, where each renormalization step (one for each slice) generates effective local couplings at lower scales. It perfectly fits Wilson's conception of renormalization:
in this setting, one starts with a theory with UV cut-off $\Lambda = M^{- 2 \rho}$, and tries to understand the physics in the IR, whose independence from UV physics is ensured by the separation of scales with respect to the cut-off. In order to compute physical processes involving external scales $i_{IR} < \rho$, one can integrate out all the fluctuations in the shell $i_{IR} < i \leq \rho$, resulting in an effective theory at scale $i_{IR}$. 

Because our model is renormalizable, we know that the main contributions in this integration are associated to quasi-local divergent subgraphs, therefore the effective theory can be approximated by a local theory of the same form as the bare one. According to Wilson's renormalization group treatment, in order to better handle the fact that only the high parts of divergent subgraphs contribute to this approximation, one should proceed in individual and iterated steps $i \to i - 1$ instead of integrating out the whole shell $i_{IR} < i \leq \rho$ at once. At each step one can absorb the ultimately divergent contributions (when the cut-off will be subsequently removed) into new effective coupling constants. This procedure, then, naturally generates one effective coupling constant per renormalizable interaction and per scale, as opposed to a single renormalized coupling per interaction in the renormalized expansion. This might look like a severe drawback, but on the other hand a main advantage is that the finiteness of the effective amplitudes becomes clear: the Taylor expansions of the previous section together with the finiteness of the remainders guarantee that all divergences are tamed. In particular, there is no problem of overlapping divergences since high subgraphs at a given scale cannot overlap.   

\

If one wants to be able to work with single renormalized couplings in the Lagrangian, one has to resort to a cruder picture in which the whole renormalization trajectory is approximated by a unique integration step from $\rho$ to $i_{IR}$. The price to pay is that one has no way anymore to isolate  the high (truly divergent) contributions of divergent subgraphs, which will result in additional finite contributions to the renormalized amplitudes. These contributions can build up over scales, explaining the appearance of renormalons, i.e. amplitudes which grow super exponentially, that is to say as a factorial of the number of vertices. This should be contrasted with the effective approach, in which amplitudes grow at most exponentially in the number of vertices. 

While they are not a big issue in perturbative expansions at low orders, renormalons are very problematic in non-perturbative approaches to quantum field theory such as the constructive program \cite{vincent_book}. 
They may be all the more problematic in TGFTs, if one expects continuum space-time physics to show up in a regime dominated by large graphs, and thus to depend on non-perturbative effects\footnote{Here, we only mean non-perturbative in the sense of the perturbative expansion for small coupling constants that we considered in this thesis.}. And this seems in turn unavoidable if one interprets the perturbative expansion we deal with here as an expansion around the 'no space-time' vacuum. 

A second related drawback of the renormalized expansion is the problem of overlapping divergences. Here again, the effective expansion appears to be very helpful. Not only overlapping divergences do not show up in this framework, but this also elucidates their treatment in the renormalized expansion. Indeed, at each step in the trajectory of the renormalization group, divergences are indexed by disconnected subgraphs. When one iterates the process, from high to lower scales, one finds that the divergent subgraphs of a given amplitude $\cA_{\cG , \mu}$ which contribute organize themselves into a forest. This is obvious once we understood that these graphs are high, and therefore correspond to nodes of the Gallavotti-Nicol\`o tree of $(\cG , \mu)$.
In order to pack all these contributions into renormalized couplings for the whole trajectory of the renormalized group, it is therefore necessary to index the counter-terms by all the possible forests of divergent subgraphs (irrespectively of them being high or not), called Zimmermann's forests. Seen from this perspective, it is only when unpacking the renormalized amplitudes by appropriately decomposing them over scale attributions that one makes transparent why and how the Zimmermann's forest formula cures all divergences. Here again, the situation in TGFTs with respect to usual QFTs would suggest to resort to the effective expansion: due to the finer notion of connectedness which indexes the divergent subgraphs (face-connectedness), overlapping divergences are enhanced, the internal structure of the vertices being an additional source of difficulties. This is for instance manifest in super-renormalizable examples of the type studied in the previous chapter, in which overlapping contributions already enter the renormalization of tadpoles. 

\

Despite the two generic drawbacks of the renormalized series, we choose a conservative approach in the following, and decide to outline in some details the proof of finiteness of the usual renormalized amplitudes. We will however start with a sketch of the recursive definition of the effective coupling constants. Since vertex-connectedness lies at the core of the Wilsonian effective expansion, we cannot take full advantage of face-connectedness in this context. This is similar to ordinary quantum field theories, where $1$-particle reducible graphs need to be taken into account in the effective expansion but can be dispensed with in the renormalized expansion. This is the main motivation for resorting to the renormalized expansion, where counter-terms are indexed by forests of divergent subgraphs. We will then decompose the amplitudes over scales and check that all contributions from high divergent subgraphs are correctly cured by the appropriate counter-terms. We will finally perform the sum over scale attributions, showing why the result is finite, and how useless counter-terms can build up to form renormalons.

\subsection{Effective and renormalized expansions}

As briefly explained before, the effective expansion is a reshuffling of the bare theory (with cut-off $\Lambda = M^{-2 \rho}$), in terms of recursively defined effective coupling constants. We therefore start from the connected Schwinger functions decomposed over scale attributions compatible with the cut-off:
\beq\label{bare_s}
\cS_N^\rho = \sum_{\cG , \mu | \mu \leq \rho} \frac{1}{s(\cG)} \left( \prod_{b \in \cB} ( - t_{b}^{\rho} )^{n_b (\cG)} \right) \cA_{\cG , \mu } \,.
\eeq 
In this formula, the sum runs over connected graphs, and $b$ spans all possible interactions, including mass and wave-function counter-terms. Starting from the highest scale $\rho$, we want to construct a set of $\rho + 1$ effective coupling constants per interaction $b$, called $t_{b , i}^{\rho}$ with $0 \leq i \leq \rho$. They will be formal power series in the bare coupling constants $t_{b , \rho}^{\rho} \equiv t_{b}^{\rho}$, such that $t_{b , i}^{\rho}$ is obtained from $t_{b , i + 1}^{\rho}$ by adding to it all the counter-terms associated to high subgraphs at scale $i + 1$. In order to make this statement more precise, it is useful to define $i_b (\cG , \mu)$ as the scale of a vertex $b$ in a graph $(\cG , \mu)$ as:
\beq\label{scale_coupling}
i_b ( \cG , \mu ) \equiv \max \{ i_l (\mu) \vert l \in L_b (\cG) \} \,,
\eeq
where $L_b (\cG)$ is the set of lines of $\cG$ which are hooked to $b$. We aim at a re-writing of \eqref{bare_s} of the form:
\beq\label{s_eff}
\cS_N^\rho = \sum_{\cG , \mu | \mu \leq \rho} \frac{1}{s(\cG)} \left( \prod_{b \in \cB(\cG) } ( - t_{b , i_b ( \cG , \mu )}^{\rho} ) \right) \cA_{\cG , \mu}^{eff} \,,
\eeq
in which the bare coupling constants have been substituted by effective ones at the scale of the bubbles making a graph $(\cG , \mu)$, and the new effective amplitudes are free of divergences. Thanks to the multiscale analysis, we know exactly which face-connected subgraphs are responsible for the divergences of a bare amplitude $\cA_{\cG , \mu}$: they are the high divergent subgraphs, which is a subset of all the quasi-local subgraphs. Unfortunately, they cannot play the leading role in the effective expansion: the divergences in a slice $i+1$ must be packaged into vertex-connected components, and \emph{reabsorbed in effective vertices with external propagators at scales lower or equal to $i$}. This condition on the external scales makes it impossible to act on a face-connected divergent subgraph independently of what it is vertex-connected to. Our language is therefore not adapted to the effective expansion. In order to make this point clearer, \emph{let us assume for the moment that the divergent subgraphs, the GN tree and the $\tau$ contraction operators are defined on the basis of vertex-connectedness}. In this provisional acceptation of the terms, let us moreover call $D_\mu (\cG)$, the \textit{forest of high divergent subgraphs} of $(\cG , \mu)$. The effective amplitudes are then deduced from the bare ones by subtracting the local part of each high divergent subgraph \cite{vincent_book}:
\beq\label{a_eff}
\cA_{\cG , \mu}^{eff} = \prod_{m \in D_\mu (\cG)} ( 1 - \tau_m ) \cA_{\cG , \mu}\,.
\eeq
Finiteness of $\cA_{\cG , \mu}^{eff}$ in the limit of infinite cut-off is then guaranteed. 
In order to make this prescription consistent, we need to reabsorb contributions of the form
\beq
\tau_m \cA_{\cG , \mu}
\eeq
into the effective coupling constants. 
This can be made more precise by defining an inductive version of (\ref{s_eff}):
\beq\label{s_eff_i}
\cS_N^\rho = \sum_{\cG , \mu | \mu \leq \rho} \frac{1}{s(\cG)} \left( \prod_{b \in \cB(\cG)} ( - t_{b , \sup( i , i_b ( \cG , \mu ))}^{\rho} ) \right) \cA_{\cG , \mu}^{eff , i} \,,
\eeq
with
\beq\label{a_eff_i}
\cA_{\cG , \mu}^{eff , i} \equiv \prod_{m \in D_\mu^i (\cG)} ( 1 - \tau_m ) \cA_{\cG , \mu}
\eeq
and
\beq
D_\mu^i (\cG) \equiv \{ m \in D_{\mu} (\cG) \vert i_m > i \} \,.
\eeq
We now proceed to prove (\ref{s_eff_i}), by induction on $i$, which at the same time will provide the recursive relation for the effective coupling constants. For $i = \rho$, (\ref{s_eff_i}) coincides with the bare expansion (\ref{bare_s}), and therefore holds true. Assuming that it holds at rank $i + 1$, let us then see how to prove it at rank $i$. The difference between $\cA_{\cG , \mu}^{eff , i}$ and $\cA_{\cG , \mu}^{eff , i + 1}$ amounts to counter-terms in $D_{\mu}^{i} (\cG) \setminus D_{\mu}^{i + 1} (\cG) = \{ M \in D_{\mu} (\cG) \vert i_M = i + 1 \}$, hence:
\beq
\cA_{\cG , \mu}^{eff , i} - \cA_{\cG , \mu}^{eff , i + 1} = \sum_{S \subset D_{\mu}^{i} (\cG) \setminus D_{\mu}^{i + 1} (\cG) \atop S \neq \emptyset} \prod_{M \in S} (- \tau_M ) \prod_{m \in D_{\mu}^{i + 1} (\cG)} ( 1 - \tau_m ) \cA_{\cG , \mu} .
\eeq
Adding and subtracting this quantity to $\cA_{\cG , \mu}^{eff , i + 1}$ in the equation (\ref{s_eff_i}) at rank $i+1$, one obtains a new equation which now involves $\cA_{\cG , \mu}^{eff , i}$ (thanks to the term added), together with a sum over subsets $S$ (due to the term subtracted). In condensed notations, this can be written as:
\beq\label{s_eff_intermediate}
\cS_N^\rho = \sum_{ (\cG , \mu , S) , \mu \leq \rho \atop S \subset D_{\mu}^{i} (\cG) \setminus D_{\mu}^{i + 1} (\cG) }  \frac{1}{s(\cG)} \left( \prod_{b \in \cB(\cG)} ( - t_{b , \sup( i + 1 , i_b ( \cG , \mu ))}^{\rho} ) \right) \cA_{\cG , \mu , S}^{eff , i} \,,
\eeq
where
\beq
\cA_{\cG , \mu , S}^{eff , i} \equiv - \prod_{M \in S} (- \tau_M ) \prod_{m \in D_{\mu}^{i + 1} (\cG)} ( 1 - \tau_m ) \cA_{\cG , \mu} 
\eeq
when $S \neq \emptyset$ and $\cA_{\cG , \mu , \emptyset}^{eff , i} \equiv \cA_{\cG , \mu}^{eff , i}$. The elements in a set $S$ being vertex-disjoint, we can contract them independently, and absorb the terms associated to $S \neq \emptyset$ into effective coupling constants at scale $i$. 

However, in order to correctly take wave function counter-terms into account, one needs to slightly generalize the notion of contraction previously defined for strictly tensorial interactions. While $\tau_M$ extracts amplitudes of contracted graphs times a pre-factor when $\omega(M) = 0$ or $1$, it is rather a sum of two terms when $\omega(M) = 2$: a zeroth order term proportional to a contracted amplitude, and a second order term proportional to a contracted amplitude supplemented with a Laplacian insertion as in (\ref{wf_ct}). In the latter case, one shall therefore decompose the operators as sums of two operators
\beq
\tau_M = \tau_M^{(0)} + \tau_M^{(2)}
\eeq
corresponding to the two types of counter-terms\footnote{Similarly, one defines $\tau_M \equiv \tau_M^{(0)}$ if $\omega(M) = 0$ or $1$.}. Developing these products, one ends up with a formula akin to (\ref{s_eff_intermediate}), provided that sets $S$ are generalized to
\beq
\hat{S} \equiv \{ (M , k_M) \vert M \in S, \, k_M \in \{ 0 , 2\}, \, k_M \leq \omega(M) \} \,,
\eeq
and that $\tau$ operators are replaced by $\tau_{\hat{M}} \equiv \tau_M^{(k_M)}$ for $\hat{M} = (M , k_M)$.
 Taking the scale attributions into account, we are lead to define the \textit{collapse} $\phi_i$, which sends triplets $( \cG , \mu , \hat{S} )$ with $S \subset D_{\mu}^{i} (\cG) \setminus D_{\mu}^{i + 1} (\cG)$ to its contracted version $(\cG' , \mu' , \emptyset)$. $\cG' \equiv \cG / \hat{S}$ is the graph obtained after the elements of $\hat{S}$ have been contracted, understood in a generalized sense: $\cG / (M , k_M)$ is equivalent to $\cG / M$ when $k_M = 0$ or $1$, and is a graph in which the $2$-point divergent subgraph $M$ has been replace by a Laplace operator if $k_M = 2$. As for $\mu'$, it is simply the restriction of $\mu$ to lines of $\cG'$. The bubbles of $\cG'$ are thought of as new effective interactions, obtained from contractions of vertex-connected graphs 
 . We can therefore factorize the sum in (\ref{s_eff_intermediate}) as:
\beq\label{s_eff_intermediate-2}
\cS_N^\rho = \sum_{\cG' , \mu' } \sum_{ (\cG , \mu , \hat{S}) , \mu \leq \rho \atop \phi_i (\cG , \mu , \hat{S}) = (\cG', \mu', \emptyset) }  \frac{1}{s(\cG)} \left( \prod_{b \in \cB(\cG)} ( - t_{b , \sup( i + 1 , i_b ( \cG , \mu ))}^{\rho} ) \right) \cA_{\cG , \mu , \hat{S}}^{eff , i} \,,
\eeq
with
\beq
\cA_{\cG , \mu , \hat{S}}^{eff , i} \equiv - \prod_{\hat{M} \in \hat{S}} (- \tau_{\hat{M}}) \prod_{m \in D_{\mu}^{i + 1} (\cG)} ( 1 - \tau_m ) \cA_{\cG , \mu} \,
\eeq
when $S \neq \emptyset$. In this last equation, one can act first with $\underset{\hat{M} \in \hat{S}}{\prod} (- \tau_{\hat{M}})$ on $\cA_{\cG , \mu}$. 
After reorganizing all the terms in (\ref{s_eff_intermediate-2}), it is easy to understand how the coupling constants at scale $i$ must be defined. For instance, assuming that $\cG$ has no quadratic divergences to avoid overloaded notations, one notices that (\ref{s_eff_intermediate-2}) reduces to (\ref{s_eff_i}) provided that:
\bes\label{rec}
\frac{1}{s(\cG')} \prod_{b' \in \cB(\cG')} ( - t_{b' , \sup( i , i_{b'} ( \cG' , \mu ))}^{\rho} ) &=& 
\sum_{ (\cG , \mu , S) , \mu \leq \rho \atop \phi_i (\cG , \mu , S) = (\cG', \mu', \emptyset) }  \frac{1}{s(\cG)} 
\prod_{b \in \cB(\cG)} ( - t_{b , \sup( i + 1 , i_b ( \cG , \mu ))}^{\rho} ) \nn \\
&& \prod_{m \in D_\mu^{i + 1} (\cG)} (1 - \tau_m ) \prod_{M \in S } (- \tau_{M}) \cA_{M , \mu} .
\ees
Thanks to usual properties of symmetry factors in quantum field theory, which allow to factorize the symmetry factors of vertex-connected subgraphs, we have 
\beq
s(\cG) = s(\cG') \prod_{M \in D_\mu^{i} (\cG) \setminus D_\mu^{i + 1} (\cG)} s(M)\,.
\eeq
We can therefore readily extract a solution for \eqref{rec}, in the form of a definition of the effective coupling constants at rank $i$:
\bes\label{scale_induction}
- t_{b , i}^{\rho} &=& - t_{b , i + 1}^{\rho} - \sum_{ (\cH , \mu , \{M\}) , \mu \leq \rho \atop  \phi_i (\cH , \mu , \{M\}) = ( b , \mu , \emptyset)} \frac{1}{s(\cH)} \left( \prod_{b' \in \cB(\cH)} ( - t_{b' , i_{b'} ( \cH , \mu )}^{\rho} )  \right) \nn \\
&\times& \; \left( \prod_{m \in D_{\mu} (\cH) \setminus \{M\}} ( 1 - \tau_m ) \right) \prod_{M \in S} (- \tau_{M} ) \, \cA_{M , \mu}\,.
\ees
This concludes the proof of the existence of the effective expansion when vertex-connectedness is used to organize the counter-terms. Had we relied on face-connectedness instead, equation (\ref{rec}) would have had coupling constants at scale $i+1$ also on the left-hand side, which would have made the whole scheme inconsistent with definition (\ref{scale_coupling}). 

\
By construction, $\{ t_{b , \rho}^{\rho} \, , \, b \in \cB \}$ are interpreted as the bare coupling constants. Accordingly, the renormalized constants are to be found at the other end of the scale ladder, namely in the last infrared slice, which corresponds to external legs. This is compatible with a renormalized coupling being defined as the full amputated function corresponding to the type of interaction considered. It can be checked that the latter amounts to set 
\beq
t_{b , ren}^{\rho} \equiv t_{b , -1}^{\rho} \,,
\eeq
and we could look for yet another reshuffling of the Schwinger functions, this time as multi-series in $\{ t_{b , ren}^{\rho} \}$. 

However, we follow a different strategy for the renormalized expansion, and close the vertex-connected parenthesis. Divergent graphs and contraction operators are now again understood in the face-connected sense we advocate in this thesis. 
The natural induction with respect to scales (\ref{scale_induction})
is not available anymore, but can be partially encapsulated into the definition of counter-terms according to an induction with respect to the number of vertices in a diagram. This is nothing but the well-known Bogoliubov induction, which provides the infinite set of counter-terms to be added to the bare Lagrangian. In our case, the induction takes the form:
\beq\label{bogo}
c_\cG = \sum_{\{ g_1, \ldots , g_k \} } \prod_{m \in S} ( - \tau_m ) \cA_{m / \{ g \}} \prod_{i = 1}^{k} c_{g_i} \,, 
\eeq
where $\cG$ is a vertex-connected graph with all its face-connected components $m \in S$ divergent, $c_\cG$ its associated counter-term, and $\{ g_1 , \ldots , g_k \}$ runs over all possible families of disjoint vertex-connected divergent subgraphs of $\cG$, for which counter-terms $\{ c_{g_i} \}$ have been defined at an earlier stage of the induction. Note also that $\cA_{m / \{ g \}}$ is a short-hand notation for the part of the amplitude associated to $m$, once the $g_i$'s it contains have been contracted. Each of these counter-terms will contribute to the renormalization of a coupling constant (or several when quadratically divergent subgraphs are present). A key point to notice is that, because our classification of divergent subgraphs (Table \ref{div}) also applies to vertex-connected components, we are ensured that any vertex-connected union of face-connected divergent subgraphs will have the same boundary as one of the bare interactions. More precisely, one has:
\beq
t_b^\rho = t_{b , ren}^{\rho} + \sum_{n = 1}^{+ \infty} c_n^{b} (t_{b , ren}^{\rho})^n \,,
\eeq
where $c_n^b$ is the sum of all the counter-terms $c_\cG$ at order $n$ of the type $b$ \footnote{The same subtlety as in the previous discussion occurs for quadratically divergent contributions: one has to split the counter-terms $c_\cG$ into mass and wave-function contributions. We kept this step implicit here in order to lighten the notations.}.

It is then a well-known fact that a (formal) perturbative expansion in these new variables generates renormalized amplitudes expressed by Zimmermann's forest formula. The forests appearing in this formula can be called \textit{inclusion forests}, since they are sets of subgraphs $\cF$ with specific inclusion properties: for any $h_1 , h_2 \in \cF$, either $h_1$ and $h_2$ are line-disjoint, or one is included into the other. In this model, the relevant forests are inclusion forests of vertex-connected subgraphs with all their face-connected components divergent. Since each of the graphs in the forests is acted upon by a product of contraction operators $( - \tau_{m} )$, one for each face-connected component, and since in addition face-connectedness is a finer notion than vertex-connectedness, one can actually work with inclusion forests of face-connected subgraphs. Moreover, one needs to strengthen their definition by emphasizing face-disjointness rather than line-disjointness. To avoid any terminology confusion with the usual notion of inclusion forest, we call this new type of forests \textit{strong inclusion forests}.
 
\begin{definition}
Let $\cH \subset \cG$ be a subgraph.
A \textit{strong inclusion forest} $\cF$ of $\cH$ is a set of non-empty and face-connected subgraphs of $\cH$, such that: 
\begin{enumerate}[(i)]
\item for any $h_1 , h_2 \in \cF$, either $h_1$ and $h_2$ are line-disjoint, or one is included into the other; 
\item any line-disjoint $h_1 , \ldots , h_k \in \cF$ are also face-disjoint.
\end{enumerate} 
\end{definition}

A few remarks are in order. First, a strong inclusion forest $\cF$ is always an inclusion forest (condition (i)), hence the nomenclature. 
Second, it is important to understand that the Zimmermann forests relevant to our model are \textit{strong} inclusion forests. To this effect, notice for instance that if $g_1, \ldots ,  g_k \subset \cG$ appear in a same term of the Bogoliubov recursion (\ref{bogo}) for some intermediate subgraph $\cH \subset \cG$, then they form $k$ distinct face-connected components in $\cH$. The existence of such a subgraph is equivalent to the face-disjointness of $g_1 , \ldots ,  g_k$. Third, we point out that the meloforests introduced before are strong inclusion forests. In the wider context of the present chapter, we modify slightly this terminology, and call \textit{meloforest} any strong inclusion forest of \textit{melonic} subgraphs. 
Finally, we simply call \textit{divergent forest} a strong inclusion forest of divergent subgraphs, and note $\cF_{D} (\cG)$ the set of divergent forests of a graph $\cG$ (including the empty forest). In the $\SU(2)$, $d=3$ model, divergent forests are also meloforests, but the converse is not true.

\

In this language, the renormalized amplitudes are related to the bare ones through:
\beq\label{a_ren}
\cA_\cG^{ren} = \left( \sum_{\cF \in \cF_D (\cG)} \prod_{m \in \cF} \left( - \tau_{m} \right) 
\right) \cA_\cG \,.
\eeq 

\

In order to prove the finiteness of the renormalized amplitudes, one should rely on the refined understanding of the divergences provided by the multiscale expansion. To this effect, we will expand equation (\ref{a_ren}) over scales. For fixed scale attribution, contraction operators acting on high divergent subgraphs will provide a convergent power-counting. The sum over scales will finally be achieved thanks to an adapted classification of divergent forests, which is the purpose of the next section.

%

\subsection{Classification of forests}

Before discussing the classification in details, we point out an intriguing property of this model. In light of Proposition \ref{curiosity}, we notice that the melonic subgraphs of a given non-vacuum graph $\cG$ organize themselves into an inclusion forest. It would be therefore tempting to conjecture that they also form a strong inclusion forest (i.e. a meloforest). However, we can actually find examples of overlapping melonic subgraphs, showing that this is incorrect (see Figure \ref{overlap}). Still, and again by Proposition \ref{curiosity}, we notice that the union of two melonic subgraphs cannot be itself melonic, hence cannot be divergent. Therefore, if we restrict our attention to divergent forests, we can actually prove that the previous conjecture hold.

\begin{figure}[h]
\begin{center}
\includegraphics[scale=0.5]{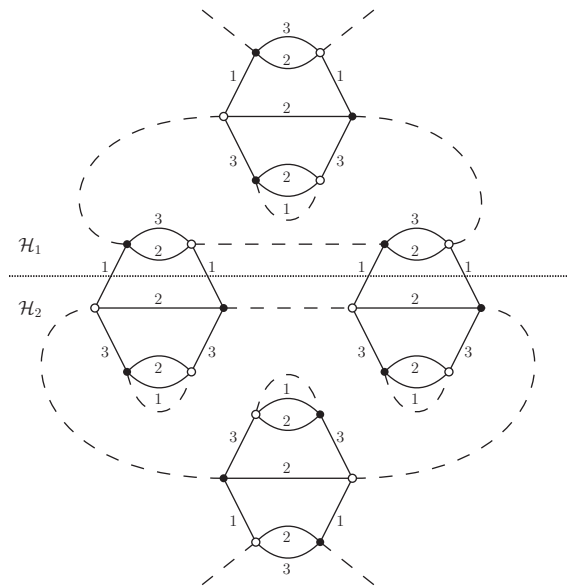}
\caption{Two melonic subgraphs $\cH_1$ and $\cH_2$ ($\cH_2$ being even divergent) which are face-connected in their union.}
\label{overlap}
\end{center}
\end{figure}

\begin{proposition}
Let $\cG$ be a non-vacuum graph. The set of divergent subgraphs of $\cG$ is a strong inclusion forest. We denote it $D(\cG)$.
\end{proposition}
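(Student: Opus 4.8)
The plan is to show that any two divergent subgraphs $\cH_1, \cH_2 \subset \cG$ satisfy the two conditions defining a strong inclusion forest, namely that they are either line-disjoint or nested (condition (i)), and that any collection of pairwise line-disjoint divergent subgraphs is in fact face-disjoint (condition (ii)). Since all divergent subgraphs are melonic by the classification in Table \ref{div}, I would phrase everything in terms of melonic subgraphs and quote Proposition \ref{curiosity} as the central tool.

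For condition (i), I would take two divergent (hence melonic) subgraphs $\cH_1$ and $\cH_2$ and apply Proposition \ref{curiosity}(i) directly, which states that two melonic subgraphs are either line-disjoint or one is contained in the other. This immediately gives the inclusion-forest property. The only subtlety is that Proposition \ref{curiosity} is stated for \emph{non-vacuum} vertex-connected $\cG$, so I would note at the outset that $\cG$ is assumed non-vacuum, which is exactly the hypothesis of the present proposition.

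For condition (ii), suppose $\cH_1, \ldots, \cH_k$ are pairwise line-disjoint divergent subgraphs. I want to show they are face-disjoint. The key observation is that if they were \emph{not} face-disjoint, then (recalling the definition of face-disjointness) their union $\cH_1 \cup \cdots \cup \cH_k$ would consist of strictly fewer than $k$ face-connected components, so at least two of them, say $\cH_i$ and $\cH_j$, would be face-connected in the union. Now the final statement of Proposition \ref{curiosity} asserts precisely that line-disjoint melonic subgraphs which are face-connected in their union cannot have that union be melonic; more usefully, I would argue via the contrapositive built into that proposition: if $\cH_i \cup \cH_j$ is melonic then $\cH_i \subset \cH_j$ or $\cH_j \subset \cH_i$ (part (ii)), contradicting line-disjointness. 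Hence $\cH_i \cup \cH_j$ is \emph{not} melonic, therefore not divergent by Table \ref{div}, but this does not immediately finish the argument since we only need the $\cH$'s themselves divergent, not their union. The clean route is instead to invoke the very last sentence of Proposition \ref{curiosity}: any line-disjoint melonic $\cH_1, \ldots, \cH_k$ whose union is melonic are face-disjoint; and to handle the case where the union is not melonic, I would observe that if two of them were face-connected in the union while being line-disjoint, one reaches the vacuum contradiction embedded in the proof of that proposition (all external faces become internal, forcing $F_{ext} = \emptyset$).

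I expect the main obstacle to be \textbf{condition (ii)}, because face-disjointness is the genuinely new and delicate notion, and the statement of Proposition \ref{curiosity} packages several cases together. I would need to be careful that the proposition's final clause applies even when the union $\cH_1 \cup \cdots \cup \cH_k$ fails to be melonic: in that situation I must rule out any residual face-connection between two line-disjoint members. The argument for this is the vacuum-contradiction mechanism already used in the proof of Proposition \ref{curiosity}, relying on the fact that $\cG$ is non-vacuum, so no proper subgraph can have all its external faces becoming internal. Once this point is secured, conditions (i) and (ii) together establish that $D(\cG)$ is a strong inclusion forest, completing the proof.
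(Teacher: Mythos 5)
Your handling of condition (i) coincides with the paper's: both reduce it to Proposition \ref{curiosity}(i). The genuine gap is in condition (ii), exactly at the point you flag as the main obstacle, and the fix you propose does not close it. The ``vacuum contradiction embedded in the proof of Proposition \ref{curiosity}'' is not available when the union $\cH_1 \cup \dots \cup \cH_k$ fails to be melonic. That mechanism needs \emph{every} external face of every $\cH_i$ to become internal in the union, and the way this is established there is by contracting a spanning tree $\cT_i \subset \cH_i$ avoiding a chosen line of a given external face, and then using that the reduction of the union by this forest is \emph{still face-connected}; this last property is supplied by Proposition \ref{face_rosette} (extended to forests), which applies only because the union is assumed melonic. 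Face-connectedness of the union itself --- your hypothesis for contradiction --- only guarantees that \emph{some} external faces of the $\cH_i$ recombine into internal faces of the union, not all of them, so no vacuum conclusion follows. Worse, the paper exhibits a counterexample to the statement your argument would actually establish: Figure \ref{overlap} shows two line-disjoint melonic subgraphs of a non-vacuum graph (one of them even divergent) which are face-connected in their union. Since your proof uses only melonicity of the $\cH_i$ (all its inputs come from Proposition \ref{curiosity}), it would prove that melonic subgraphs always form a strong inclusion forest, which is false. Any correct proof must use the divergence conditions $\omega(\cH_i) \geq 0$ for \emph{all} $i$ quantitatively.

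This is precisely what the paper's proof supplies and yours omits. Line-disjointness plus face-connection in the union forces the $\cH_i$ to share vertices; denoting by $p_{2i}^{s}$ the number of $2i$-valent vertices shared by exactly $s$ of the subgraphs, one has $N(\cH_1 \cup \dots \cup \cH_k) = \sum_j N(\cH_j) - \sum_{i,s} (2i)(s-1)\,p_{2i}^{s}$, and summing the divergence bounds $N(\cH_j) \leq 6 - 4 n_2(\cH_j) - 2 n_4(\cH_j)$ (Table \ref{div}, with $\rho = 0$ for non-vacuum melonic subgraphs) gives an upper bound on $N(\cH_1 \cup \dots \cup \cH_k)$ that leaves a trichotomy, every branch of which is contradictory: either the union is vacuum (impossible since $\cG$ is not); or the bipartite incidence graph between the subgraphs and their shared vertices is disconnected (impossible since the $\cH_i$ are face-connected in their union); or that incidence graph is a tree, in which case spanning trees $\cT_j \subset \cH_j$ assemble into a spanning tree $\cT$ of the union such that $(\cH_1 \cup \dots \cup \cH_k)/\cT$ is a melopole, so the union \emph{is} melonic --- and only this last branch is the one your appeal to Proposition \ref{curiosity} rules out. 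Without reproducing this counting step, or some equivalent quantitative use of the divergence of all the $\cH_i$, the proof cannot be completed.
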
 
\begin{proof}
Thanks to proposition \ref{curiosity}, we already know that $D(\cG)$ is an inclusion forest. 
To conclude, we need to show that there exists no subset of line-disjoint subgraphs in $D(\cG)$ which are not also face-disjoint. If this would not be the case, we could certainly find line-disjoint subgraphs $\cH_1 , \ldots , \cH_k \in D(\cG)$ which are face-connected in their union. 
This face-connectedness is necessarily ensured by external faces of the $\cH_1 , \ldots , \cH_k$ which arrange together into internal faces of $\cH_1 \cup \dots \cup \cH_k$. Because their intersection is empty, this can only be achieved if some vertices of $\cH_1 \cup \dots \cup \cH_k$ are shared by several subgraphs $\cH_1 , \ldots , \cH_k$. Let us call $p_{2i}^s$ ($1 \leq i \leq 3$, $2 \leq s \leq 3$) the number of vertices of valency $2i$ which are shared by exactly $s$ subgraphs $\cH_k$. They can be related to the valency of $\cH_1, \ldots , \cH_k$ and $\cH_1 \cup \dots \cup \cH_k$ by the formula:
\beq
N (\cH_1 \cup \dots \cup \cH_k) = \sum_{j = 1}^{k} N (\cH_j) - \sum_{i =1}^3 \sum_{s = 2}^{3} (2 i) (s - 1) p_{2 i}^s \,.
\eeq
This just says that when summing all the individual valencies, one needs to subtract all the contributions of external legs of the connecting vertices, which have been over counted, in order to find the valency of the full subgraph. If a connecting vertex $v$ is connected to exactly $s$ subgraphs $\cH_k$, its external legs have been counted exactly $s - 1$ too many times. Furthermore, the conditions $\omega(\cH_j) \geq 0$ can be summed to yield\footnote{The $\rho$ contributions are all $0$ since $\cH_1, \ldots , \cH_k$ are non-vacuum and melonic}:
\beq
\sum_{j = 1}^k N(\cH_j) \leq 6 k - 4 \sum_{j = 1}^k n_2 (\cH_j) - 2 \sum_{j = 1}^k n_4 (\cH_j) \,. 
\eeq
Remarking that $\sum_{s = 2}^3 s p_{2i}^s \leq \sum_{j = 1}^k n_{2i} (\cH_j)$ for all $i$, we can finally deduce from the two previous inequalities that:
\beq
N(\cH_1 \cup \dots \cup \cH_k) \leq 6 k  - 6 \sum_{i = 1}^{3} \sum_{s = 2}^3 (s -1) p_{2 i}^{s} - 2 \sum_{s = 2}^3 p_{4}^{s} - 4 \sum_{s = 2}^3 p_{2}^{s}\,.
\eeq
We immediately notice that whenever
\beq
\sum_{i = 1}^{3} \sum_{s = 2}^3 (s -1) p_{2 i}^{s} \geq k \,,
\eeq
$\cH_1 \cup \dots \cup \cH_k$ is vacuum, which contradicts the hypothesis that $\cG$ is not. If the previous inequality is not verified, one has instead
\beq\label{tree_ineq}
\sum_{i = 1}^{3} \sum_{s = 2}^3 s p_{2 i}^{s} \leq k + \sum_{i = 1}^{3} \sum_{s = 2}^3 p_{2 i}^s - 1 \,.
\eeq
In order to understand the meaning of this inequality, let us introduced an abstract graph $G$: its nodes are the subgraphs $\cH_1 , \ldots , \cH_k$ and all the vertices shared by more than one subgraph; two nodes are linked by one line in $G$ if and only if one of them is a subgraph, and the other a vertex contained in this subgraph. In equation (\ref{tree_ineq}), on the left-hand side one finds the number of links in $G$, and on the right side its number of nodes minus $1$. Therefore, when the inequality is saturated $G$ is a tree, and when the inequality is strict it is not connected. The latter case is contradictory with our hypotheses. As for when $G$ is a tree, one can find spanning trees $\cT_1 \subset \cH_1 , \ldots , \cT_k \subset \cH_k$ such that there union $\cT \equiv \cT_1 \cup \dots \cup \cT_k$ is a spanning tree of $\cH_1 \cup \dots \cup \cH_k$. But in such a situation, $( \cH_1 \cup \dots \cup \cH_k ) / \cT = (\cH_1 / \cT_1 ) \cup \dots \cup ( \cH_k / \cT_k)$ would be a melopole, contradicting the fact that $\cH_1 \cup \dots \cup \cH_k$ cannot be melonic (see proposition \ref{curiosity}).
\end{proof}

At this stage, we tend to see this result as a curiosity of the specific model we are considering, and only a detailed study of other just-renormalizable models of this type could determine whether it has a wider validity. What is sure is that it is by no means essential to the classification of forests. Still, it allows significant simplifications, which we will take advantage of in the following, in the notations and proofs, since the divergent forests of $\cG$ are exactly the subsets of $D(\cG)$.

\

Recall that $D_\mu (\cG)$ denotes the set of truly divergent subgraphs in $\cG$ for the scale attribution $\mu$. It is a divergent forest, as we already knew from the fact that, modulo $\cG$ itself, it consists exactly in the subgraphs appearing in the GN tree of $(\cG , \mu)$. Furthermore, we now know it to be a subforest of $D(\cG)$. We call its complementary part
\beq
I_\mu (\cG) \equiv D(\cG) \setminus D_\mu (\cG)
\eeq
the \textit{innofensive part} of $D (\cG)$ at scale $\mu$, since it is the set of divergent subgraphs of $\cG$ which do not appear in the GN tree of $(\cG , \mu)$, and therefore do not contribute to divergences at this scale. 

In this model, where the disjoint decomposition $D(\cG) = I_\mu (\cG) \cup D_\mu (\cG)$ involves three sets which are themselves divergent forests, the classification of forest is as trivial as saying that choosing a forest in $D(\cG)$ amounts to choosing a forest in $I_\mu (\cG)$ and a forest in $D_\mu (\cG)$, namely:
\beq
\cF_D (\cG) = \{ \cF_1 \cup \cF_2 \vert \cF_1 \subset I_\mu (\cG) \, , \cF_2 \subset D_\mu (\cG)\}\,.
\eeq
We can use this simple fact in the decomposition of equation (\ref{a_ren}) over scale attributions
\bes
\cA_\cG^{ren} &=& \sum_{\mu} \sum_{\cF \in \cF_D (\cG)} \prod_{m \in \cF} ( - \tau_{m} ) \cA_{\cG, \mu } \\
&=& \sum_{\mu} \sum_{\cF_1 \subset I_\mu (\cG)} \sum_{\cF_2 \subset D_\mu (\cG)} \prod_{m \in \cF_1 \cup \cF_2} ( - \tau_{m} ) \cA_{\cG, \mu }\,.
\ees
We then exchange the first two sums:
\beq
\cA_\cG^{ren} = \sum_{\cF_1 \subset D(\cG)} \, \sum_{\mu \vert \cF_1 \subset I_\mu (\cG)} \prod_{m \in \cF_1} ( - \tau_{m} ) \sum_{\cF_2 \subset D_\mu (\cG)} \prod_{h \in \cF_2} ( - \tau_{m} ) \cA_{\cG, \mu }\,.
\eeq
We can finally reorganize the contraction operators associated to graphs of $D_\mu (\cG)$ to obtain:
\bes
\cA_\cG^{ren} &=& \sum_{\cF \subset D(\cG)} \cA_{\cG , \cF}^{ren}\,,\\
\cA_{\cG , \cF}^{ren} &\equiv& \sum_{\mu | \cF \subset I_\mu (\cG)} \prod_{m \in \cF} (- \tau_m ) \prod_{h \in D_\mu (\cG)} (1 - \tau_h ) \cA_{\cG , \mu}\,.
\ees

This way of splitting the contributions of the different forests according to the scales is in phase with the multi-scale analysis. We shall explain in the next two paragraphs why $\cA_{\cG , \cF}^{ren}$ is convergent. To this effect, we first use the contraction operators indexed by elements of $D_\mu (\cG)$ to show that the renormalized power-counting is improved with respect to the bare one, in such a way that all divergent subgraphs become power-counting convergent. In a second step, we will explain how these decays can actually be used to perform the sum over scale attributions.

\subsection{Convergent power-counting for renormalized amplitudes}\label{su2_cvpc}

We fix a divergent forest $\cF \in D(\cG)$ and a scale attribution $\mu$ such that $\cF \subset I_\mu (\cG)$. We want to find a multi-scale power-counting bound for 
\beq
\prod_{m \in \cF} (- \tau_m ) \prod_{h \in D_\mu (\cG)} (1 - \tau_h ) \cA_{\cG , \mu}\,.
\eeq
Since contraction operators commute, we are free to first act on $\cA_{\cG , \mu}$. In order to properly encode the two possible Taylor orders in $2$-point divergences, we should reintroduce the generalized notations $\hat{m}$ and $\tau_{\hat{m}}$, together with a generalized notion of divergent forest $\hat{\cF}$. Since the argument we are about to make is insensitive to such subtleties, and its clarity would be somewhat affected by the heavy notations, we decide instead to assume that $\cF$ does not contain any quadratically divergent subgraph. It is easily understood that the action of the product of contraction operators disconnects parts of the amplitudes, yielding a product of pieces of the integrand integrated on their internal variables. The exact formula is
\beq
\prod_{m \in \cF} \tau_m \cA_{\cG , \mu} = \cA_{\cG / A_\cF (\cG) } \prod_{m \in \cF} \nu_\mu ( m / A_\cF (m) ) \,,
\eeq
where $A_\cF (m) \equiv \{ g \subset m \vert g \in \cF \}$ is the set of \textit{descendants} of $m$ in $\cF$, and $\nu_\mu ( m / A_\cF (m) )$ is the \textit{amputated amplitude}\footnote{We mean by that that the contributions of external faces are discarded.}
of $m$ contracted by its descendants. The power-counting of each subgraph appearing on the right-hand side of this formula is known, yielding:
\beq
\vert \prod_{m \in \cF} (- \tau_m ) \cA_{\cG , \mu} \vert
\leq K^{L(\cG)} \prod_{m \in \cF \cup \{ \cG \}} \prod_{(i , k)} M^{\omega[ (m / A_\cF (m))_i^{(k)}]} \,. 
\eeq
As expected, we see that the contraction operators associated to inoffensive forests does not improve the power-counting, and are in a sense useless.

On the other hand, we have also seen in section \ref{sec:remainders}, that $(1 - \tau_h )$ operators effectively render subgraphs $h \subset D_\mu (\cG)$ power-counting convergent. We can use this improved power-counting in each $m / A_\cF (m)$ to prove the following proposition:
\begin{proposition}
There exists a constant $K$, such that for any divergent forest $\cF \in D( \cG )$:
\beq\label{improved}
\vert \cA_{\cG , \cF}^{ren} \vert \leq K^{L(\cG)} \sum_{\mu \vert \cF \subset I_\mu (\cG)} \prod_{m \in \cF \cup \{ \cG \}} \prod_{(i , k)} M^{\omega'[ (m / A_\cF (m))_i^{(k)} ]} \, ,
\eeq
where
\beq 
\omega'[ ( m / A_\cF (m) )_i^{(k)} ] = \min \{ -1 , \, \omega[ (m / A_\cF (m))_i^{(k)} ] \}
\eeq
except when $m \in \cF$ and $(m / A_\cF (m))_i^{(k)} = m / A_\cF (m)$, in which case $\omega'[ m / A_\cF (m) ] = 0$.
\end{proposition}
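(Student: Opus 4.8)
The plan is to prove (\ref{improved}) by combining the two facts established just before the statement — the factorisation of $\prod_{m \in \cF}(-\tau_m)\cA_{\cG,\mu}$ into amputated amplitudes together with its accompanying $\omega$-bound, and the convergence-improving effect of the operators $(1-\tau_h)$ for $h \in D_\mu(\cG)$ analysed in Section \ref{sec:remainders} — following the template set by Proposition \ref{pc_r} in the $\U(1)$ case. Since contraction operators commute, I would first let $\prod_{m \in \cF}(-\tau_m)$ act on $\cA_{\cG,\mu}$; by the factorisation $\prod_{m \in \cF}\tau_m\,\cA_{\cG,\mu}=\cA_{\cG/A_\cF(\cG)}\prod_{m\in\cF}\nu_\mu(m/A_\cF(m))$ and the fact that the Abelian power-counting (\ref{fund}) depends only on internal faces (so amputation leaves it untouched), one already has
\[
\Bigl| \prod_{m \in \cF}(-\tau_m)\,\cA_{\cG,\mu}\Bigr| \leq K^{L(\cG)} \prod_{m \in \cF \cup \{\cG\}} \prod_{(i,k)} M^{\omega[(m/A_\cF(m))_i^{(k)}]}\,.
\]
This confirms that the inoffensive forest $\cF$ contributes no decay; the content of the proposition is to upgrade every exponent $\omega$ to $\omega'$ using the remaining product $\prod_{h \in D_\mu(\cG)}(1-\tau_h)$.

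The key step is then to distribute the operators $(1-\tau_h)$, one for each genuinely high divergent subgraph $h \in D_\mu(\cG)$, over the factorised pieces. Each such $h$ sits inside a unique ancestor piece $m/A_\cF(m)$ and, from the point of view of that piece, appears as a high subgraph $(m/A_\cF(m))_i^{(k)}$ with $e_h(\mu)<i_h(\mu)$. The remainder analysis of Section \ref{sec:remainders} applies verbatim: the Taylor operator $(1-\tau_h)$ interpolates the external-face holonomies, and the derivative bounds (\ref{deriv_bound}) combined with the localisation $|X_{\tilde g_1^{\inv} g_1}| \leq K M^{-i_h(\mu)}$ enforced by the internal scales of $h$ supply the extra factor $M^{(\omega(h)+1)(e_h(\mu)-i_h(\mu))} \leq M^{-(\omega(h)+1)}$, lowering the effective degree of $h$ to $\min\{-1,\omega(h)\}$. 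The only nodes escaping this improvement are the roots $m/A_\cF(m)$ with $m \in \cF$, which are acted on by the inoffensive $-\tau_m$ rather than by a convergence-improving $(1-\tau)$ and so retain the degree $\omega=0$ of a contracted divergent melonic subgraph (Table \ref{div}); this is exactly the exception in the definition of $\omega'$. For the outermost root $\cG/A_\cF(\cG)$ one notes that either $\cG \in D_\mu(\cG)$, in which case $(1-\tau_\cG)$ improves it to $-1$, or $\cG$ is already convergent, so that in both cases $\omega'=\min\{-1,\omega\}$. Collecting the improved exponents over all nodes of the nested Gallavotti-Nicol\`o structure and keeping the sum over admissible $\mu$ yields (\ref{improved}).

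The main obstacle I expect is not the power-counting bookkeeping, which is a transcription of the $\U(1)$ argument, but making this rigorous in the presence of quadratic divergences. One must carry the refined operators $\tau_{\hat m}=\tau_m^{(k_m)}$ with $k_m \in \{0,2\}$ and the generalised divergent forests $\hat\cF$ through the factorisation, checking that the Laplacian insertions produced by the $k_m=2$ terms neither spoil the amputated-amplitude factorisation nor the derivative bounds; they do not, because $\Delta K_\alpha = \partial_\alpha K_\alpha$ obeys the same heat-kernel estimates as Lemma \ref{heat}, so an inserted $\Delta$ simply shifts the order of the Taylor expansion by the amount already accounted for in $\omega$ (wave-function counter-terms being power-counting neutral). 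A second point needing care is the compatibility of the nestings of $\cF$ and $D_\mu(\cG)$: since both are subforests of the single strong inclusion forest $D(\cG)$, any $h \in D_\mu$ and $m \in \cF$ are either line-disjoint (hence face-disjoint) or nested, so each $h$ has a well-defined ancestor piece $m/A_\cF(m)$ and the two products of operators factorise cleanly. Once these structural points are settled, the estimate follows by the usual inductive descent from the leaves to the root of the Gallavotti-Nicol\`o tree.
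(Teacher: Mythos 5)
Your strategy coincides with the paper's own proof: act first with $\prod_{m \in \cF}(-\tau_m)$, using the factorisation into amputated pieces $\nu_\mu(m/A_\cF(m))$ and the fact that the power-counting (\ref{fund}) only sees internal faces; then split the useful operators as $\prod_{h \in D_\mu(\cG)}(1-\tau_h) = \prod_{m \in \cF \cup \{\cG\}} \prod_{h \,:\, B_\cF(h)=m}(1-\tau_h)$ (legitimate, as you note, because $\cF$ and $D_\mu(\cG)$ are both subforests of the strong inclusion forest $D(\cG)$), and let the remainder analysis of Section~\ref{sec:remainders} convert $\omega$ into $\min\{-1,\omega\}$ on every non-root node. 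Your treatment of quadratic divergences even goes beyond the paper, which explicitly assumes $\cF$ contains no quadratically divergent subgraph in order to avoid the $\tau^{(0)}/\tau^{(2)}$ bookkeeping; your remark that $\Delta K_\alpha = \partial_\alpha K_\alpha$ obeys the same heat-kernel estimates is the right reason that assumption is harmless.

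There is, however, a genuine gap at the exceptional clause, which is the only nontrivial part of the statement once the remainder analysis is granted. You justify $\omega'[m/A_\cF(m)]=0$ for roots $m \in \cF$ by saying they ``retain the degree $\omega=0$ of a contracted divergent melonic subgraph (Table~\ref{div})''. Table~\ref{div} says no such thing: divergent melonic subgraphs have $\omega = 0$, $1$ or $2$, and $m/A_\cF(m)$ inherits this whole range (take $m$ a two-point divergent subgraph with $A_\cF(m)=\emptyset$, so that $m/A_\cF(m)=m$ and $\omega$ can equal $2$). Concretely, the amputated coefficient $\nu_\mu(m/A_\cF(m))$ extracted by the zeroth Taylor order of $\tau_m$ scales like $M^{\omega\, i_m}$, with $i_m$ the lowest internal scale of $m$, not like $M^0$; so for $\omega \geq 1$ the assignment of exponent $0$ does not follow from the reason you give, and your argument as written would fail precisely for the linearly and quadratically divergent roots. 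The paper's proof rests on a different mechanism at this point: the root contributions are \emph{amputated} amplitudes, their external-face factors being discarded, and it is this amputation — not any intrinsic vanishing of the degree — that is invoked to count each root node with net exponent $0$ in the product over Gallavotti-Nicol\`o nodes. This is exactly where the present proposition departs from its $\U(1)$ ancestor (Proposition~\ref{pc_r}), in which all melopole roots genuinely have $\omega = 0$ so the exception costs nothing; in the $\SU(2)$ model it is a claim that has to be argued, not read off Table~\ref{div}.
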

\begin{proof}
If $m$ is compatible with $\cF$ (i.e. $\cF \cup \{ m \}$ is also a strong inclusion forest), we denote by $B_\cF (m)$ the ancestor of $m$ in $\cF \cup \{ m \}$. This notion allows to decompose the product of useful contraction operators as
\beq
\prod_{h \in D_\mu (\cG)} (1 - \tau_h ) = \prod_{m \in \cF \cup \{ \cG\}} \prod_{h \in D_\mu (\cG) \atop B_\cF (h ) = m} (1 - \tau_h )\,.
\eeq
When multiplying this expression by $\underset{m \in \cF}{\prod} (- \tau_m )$, one obtains
\bes
\vert \prod_{m \in \cF} (- \tau_m ) \prod_{h \in D_\mu (\cG)} (1 - \tau_h ) \cA_{\cG , \mu} \vert &=& \left( \prod_{h \in D_\mu (\cG) \atop B_\cF ( h ) = \cG} (1 - \tau_{h / A_\cF (\cG)} ) \vert \cA_{\cG / A_\cF (\cG) , \mu} \vert \right) \\
&& \times \left( \prod_{m \in \cF } \prod_{h \in D_\mu (\cG) \atop B_\cF (h ) = m} (1 - \tau_{h / A_\cF (m)} ) \, \vert \nu_\mu ( m / A_\cF (m) ) \vert \right) \nn
\ees
We recognize in this formula all the useful contractions associated to high divergent subgraphs in each $m / A_\cF (m)$, for which the new degree is at most $-1$, except possibly for the roots $m = m / A_\cM (m)$\footnote{This root can indeed itself be divergent.} when $m \neq \cG$. But because the corresponding amplitudes are amputated, they contribute to the power-counting with a degree $0$. 
\end{proof}

\subsection{Sum over scale attributions}

The improved power-counting (\ref{improved}) allows to decompose renormalized amplitudes into fully convergent\footnote{Recall that a fully convergent graph is a graphs whose face-connected subgraphs all have convergent power-counting.} pieces associated to the contracted subgraphs $m / A_\cF (m)$. We therefore decompose the task of summing over scale attributions into two steps: as in the super-renormalizable example of the previous chapter, we will first recall how this can be performed maintaining a bound in $K^n$ for a fully convergent graph $\cG$; we will then explain how this generalizes to arbitrary renormalized amplitudes, the price to pay being possible factorial growths in $n$ due to contraction operators associated to the inoffensive forests $I_\mu (\cG)$.

\

Let $\cG$ be a fully convergent, vertex-connected, and non-vacuum graph. For any face-connected subgraph $\cH \subset \cG$ such that $\cF(\cH) \neq 0$, we have seen that
\beq
\omega (\cH ) \leq -\frac{N(\cH)}{2}\,.
\eeq
Moreover, $\omega(\cH) = -2$ and $N(\cH) \leq 10$ when $F(\cH) = 0$, therefore one can use a slower decay in $-N (\cH) / 5$ and write
\beq\label{cv_pc}
\cA_{\cG , \mu} \leq K^{L (\cG)} \prod_{(i , k)} M^{- N (\cG_i^{(k)}) / 5}
\eeq
for any scale attribution $\mu$. In order to extract a sufficient decay in $\mu$ from (\ref{cv_pc}), it is crucial to focus on the scales associated to the vertices of $\cG$. Let us therefore introduce $L_b (\cG)$ the set of external lines of a bubble $b \in \cB (\cG)$, and define:
\beq
i_b (\mu) = \sup_{l \in L_b (\cG)} i_l (\mu) \, , \qquad e_b (\mu) = \inf_{l \in L_b (\cG)} i_l (\mu)\,.
\eeq  
The main interest of these two scales lies in the two following facts: a) $b$ touches a high subgraph $\cG_i^{(k)}$ if and only if $i \leq i_b (\mu)$; b) moreover, $b$ is an external vertex of $\cG_i^{(k)}$ if and only if $e_b (\mu) < i \leq i_b (\mu)$. Accordingly, and because $b$ touches at most $6$ high subgraphs, one can distribute a fraction of the decay in the number of lines of high subgraphs to the vertices of $\cG$:
\beq
\prod_{(i , k)} M^{- N (\cG_i^{(k)}) / 5} \leq \prod_{(i , k)} \prod_{b \in \cB(\cG_i^{(k)}) \vert e_b (\mu) < i \leq i_b (\mu)} M^{- 1 / 30} \,.
\eeq
Exchanging the two products yields the interesting bound:
\beq
\cA_{\cG , \mu} \leq K^{L (\cG)} \prod_{b \in \cB (\cG)} \prod_{(i , k) \vert e_b (\mu) < i \leq i_b (\mu)} M^{- \frac{i_b (\mu) - e_b (\mu)}{30}}  \,.
\eeq
Finally, we can distribute the decays among all possible pairs of external legs of each vertex. Since there are at most $6 \times 5 / 2 = 15$ such pairs, we get:
\beq
\cA_{\cG , \mu} \leq K^{L (\cG)} \prod_{b \in \cB (\cG)} \prod_{(l, l') \in L_b (\cG)  \times L_b (\cG) } M^{- \frac{\vert i_l (\mu) - i_l' (\mu) \vert }{450}}  \,.
\eeq
This bound implies the finiteness of $\cA_\cG$. To see this, we can choose a total ordering of the lines $L(\cG) = \{ l_1 , \ldots , l_{L(\cG)} \}$ such that $l_1$ is hooked to an external vertex of $\cG$, and $\{ l_1 , \ldots , l_m \}$  is connected for any $m \leq L(\cG)$. This allows to construct a map $j'$ on the indices $2 \leq j \leq L(\cG)$, such that $1 \leq j'(j) < j$, and\footnote{By convention, one also defines $i_{l_{j'(1)}} = - 1$.}:
\beq
\prod_{b \in \cB (\cG)} \prod_{(l, l') \in L_b (\cG)  \times L_b (\cG) } M^{- \frac{\vert i_l (\mu) - i_l' (\mu) \vert }{450}} \leq \prod_{j = 1}^{L(\cG)} M^{- \vert i_{l_j} (\mu) - i_{l_{j'(j)}} (\mu) \vert / 450 } \,.
\eeq
The sum over $\mu = \{ i_{l_1}, \ldots , i_{l_{L(\cG)}} \}$ of such a product is uniformly bounded by a constant to the power $L(\cG)$, which proves the following theorem:
\begin{theorem}
There exists a constant $K>0$ such that, for any fully convergent, vertex-connected, and non-vacuum graph $\cG$:
\beq
\cA_\cG \leq K^{L(\cG)}\,.
\eeq
\end{theorem}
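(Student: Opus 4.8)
The plan is to deduce the bound directly from the convergent single-scale power-counting already in hand, the only genuine task being the summation over scale attributions $\mu$. First I would record that for a fully convergent graph every face-connected subgraph $\cH$ obeys $\omega(\cH)\le -N(\cH)/2$ when $F(\cH)\ge 1$, whereas $\omega(\cH)=-2$ and $N(\cH)\le 10$ when $F(\cH)=0$; combining these gives the single uniform estimate $\omega(\cH)\le -N(\cH)/5$. Feeding this into the multiscale power-counting (\ref{fund}) yields the starting inequality (\ref{cv_pc}),
\beq
\cA_{\cG , \mu} \leq K^{L (\cG)} \prod_{(i , k)} M^{- N (\cG_i^{(k)}) / 5}\,,
\eeq
valid for every $\mu$. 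From here everything is combinatorial: I must convert this product of per-high-subgraph decays into a decay in the line scales $\{i_l(\mu)\}$ strong enough to be summed geometrically at a cost bounded by $K^{L(\cG)}$.

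Second, I would localize the decay onto the vertices. For each bubble $b\in\cB(\cG)$ I introduce $i_b(\mu)=\sup_{l\in L_b(\cG)} i_l(\mu)$ and $e_b(\mu)=\inf_{l\in L_b(\cG)} i_l(\mu)$, and use the two structural facts that $b$ touches $\cG_i^{(k)}$ iff $i\le i_b(\mu)$, and that $b$ is an \emph{external} vertex of $\cG_i^{(k)}$ iff $e_b(\mu)<i\le i_b(\mu)$. Since a bubble has valency at most $6$, a fixed fraction of the $-N/5$ decay (a factor $M^{-1/30}$ per relevant scale) can be charged to each vertex external to the subgraph. Exchanging the two products then produces
\beq
\cA_{\cG , \mu} \leq K^{L (\cG)} \prod_{b \in \cB (\cG)} M^{- (i_b(\mu) - e_b(\mu))/30}\,.
\eeq

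Third, I would redistribute this vertex decay over the (at most $15$) pairs of external legs of each bubble, using $i_b(\mu)-e_b(\mu)\ge |i_l(\mu)-i_{l'}(\mu)|$ for every such pair, to reach
\beq
\cA_{\cG , \mu} \leq K^{L (\cG)} \prod_{b \in \cB (\cG)} \prod_{(l,l')\in L_b(\cG)\times L_b(\cG)} M^{- |i_l(\mu)-i_{l'}(\mu)|/450}\,.
\eeq
To perform the sum over $\mu$, I would use the vertex-connectedness of $\cG$ to choose a total ordering $\{l_1,\dots,l_{L(\cG)}\}$ of its lines in which $l_1$ is attached to an external vertex and each initial segment $\{l_1,\dots,l_m\}$ is connected. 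This provides a map $j'$ with $1\le j'(j)<j$ such that the displayed product dominates $\prod_j M^{-|i_{l_j}(\mu)-i_{l_{j'(j)}}(\mu)|/450}$, and summing the scales from the last line down to the first (each tied to a strictly earlier one, with $i_{l_1}$ pinned to the fixed external scale) turns each sum into a convergent geometric series, giving the bound $K^{L(\cG)}$.

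The step I expect to be the real obstacle is the last one: constructing the line ordering and the accompanying map $j'$ so that the pairwise decays genuinely control \emph{all} the free scale sums rather than only a subset of them. This is the ``tree of scales'' counterpart of the optimality argument underlying the power-counting theorem, and one must verify carefully that vertex-connectedness guarantees a connected initial-segment ordering in which every line past the first shares a vertex with a strictly earlier one. Once this combinatorial tree is in place, the geometric summation over $\mu$ is routine and yields the uniform bound $\cA_\cG \leq K^{L(\cG)}$.
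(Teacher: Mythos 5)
Your proposal is correct and follows essentially the same route as the paper's own proof: the same uniform bound $\omega(\cH)\le -N(\cH)/5$ obtained by combining the $F\geq 1$ and $F=0$ cases, the same vertex scales $i_b(\mu)$, $e_b(\mu)$ with the $M^{-1/30}$ and $M^{-1/450}$ redistributions, and the same connected line ordering (``tree of scales'') with the map $j'$ to perform the sum over $\mu$. The step you flag as the main obstacle is resolved in the paper exactly as you describe: vertex-connectedness yields an ordering with connected initial segments, each line past the first shares a bubble with a strictly earlier one, and summing the scales from the last line down to the first (with $i_{l_{j'(1)}}$ set to the external value $-1$) gives convergent geometric series bounded by $K^{L(\cG)}$.
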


\

We can apply the same reasoning to the general power-counting (\ref{improved}). Let us fix $\cF$ a divergent forest. The only difference is that graphs $g / A_\cF (g)$ do not have any decay associated to their external legs. One therefore gets one additional scale index to sum over per element of $\cF$. But we can bound them by the maximal scale $i_{max} (\mu)$ in $\mu$ and write:
\bes
\vert \cA_{\cG , \cF}^{ren} \vert &\leq& K^{L(\cG)} \sum_{\mu \vert \cF \subset I_\mu (\cG)} \prod_{m \in \cF \cup \{ \cG \}} \prod_{(i , k)} M^{\omega'[ (m / A_\cF (m))_i^{(k)} ]} \\
&\leq& {K_1}^{L(\cG)} \sum_{i_{max} (\mu) } (i_{max} (\mu))^{\vert \cF \vert} M^{\delta i_{max} (\mu)} \, , 
\ees
where $\delta > 0$ and $K_1 > 0$ are some constants, and $\vert \cF \vert$ is the cardinal of $\cF$. The last sum over $i_{max} (\mu)$ can finally be bounded by $|\cF|! K^{|\cF|}$ for some constant $K > 0$. The final sum on $\cF \subset D(\cG)$ can be absorbed into a redefinition of the constants, since the number of divergent forests is simply bounded by $2^{| D(\cG) |}$. This concludes the proof of the BPHZ theorem.

\begin{theorem}
For any vertex-connected and non-vacuum graph $\cG$, the renormalized amplitude $\cA_\cG^{ren}$ has a finite limit when the cut-off $\Lambda$ is sent to $0$. More precisely, there exists a constant $K>0$ such that the following uniform bound holds:
\beq
\vert \cA_\cG^{ren} \vert \leq K^{L(\cG)} \vert D(\cG) \vert ! 
\eeq
\end{theorem}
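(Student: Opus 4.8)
The plan is to assemble the finiteness result from the multiscale machinery already in place, treating it as the endpoint of the scale decomposition of the forest formula. First I would start from the renormalized amplitude written through the strong inclusion forest formula \eqref{a_ren}, expand each sliced propagator, and decompose $\cA_\cG^{ren}$ over scale attributions $\mu$. Using the fact established just above that the set of divergent subgraphs $D(\cG)$ is itself a strong inclusion forest, every divergent forest is simply a subset of $D(\cG)$, and the disjoint splitting $D(\cG) = I_\mu(\cG) \cup D_\mu(\cG)$ lets me reorganise the double sum so that the contraction operators indexed by the \emph{inoffensive} part $I_\mu(\cG)$ are pulled outside the sum over $D_\mu(\cG)$. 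This yields the decomposition $\cA_\cG^{ren} = \sum_{\cF \subset D(\cG)} \cA_{\cG,\cF}^{ren}$ with $\cA_{\cG,\cF}^{ren}$ as displayed, i.e. a sum over attributions for which $\cF$ stays inoffensive, of $\prod_{m\in\cF}(-\tau_m)\prod_{h\in D_\mu(\cG)}(1-\tau_h)\cA_{\cG,\mu}$.

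The central estimate is the convergent power-counting \eqref{improved}. Here I would invoke two inputs already proved: on the one hand, acting first with $\prod_{m\in\cF}(-\tau_m)$ factorises the integrand into amputated amplitudes $\nu_\mu(m/A_\cF(m))$ of the contracted subgraphs and an outer amplitude $\cA_{\cG/A_\cF(\cG)}$, without modifying the power-counting (the inoffensive contractions are useless); on the other hand, the remainder analysis of Section \ref{sec:remainders} shows that each useful operator $(1-\tau_h)$, with $h\in D_\mu(\cG)$ high in its host $m/A_\cF(m)$, improves the Abelian degree by a strictly negative amount, so that every high subgraph that is not a root $m/A_\cF(m)$ acquires a degree at most $-1$. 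The marginal roots themselves, being amputated, contribute with effective degree $0$. This gives \eqref{improved} uniformly in the cut-off $\rho$, which is what makes the $\Lambda \to 0$ limit exist.

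It then remains to perform the sum over $\mu$ for fixed $\cF$. For the fully convergent contracted pieces I would reuse the argument of the fully convergent theorem: distribute the decay $M^{-N(\cdot)/5}$ of each high subgraph first onto the external vertices it touches and then onto pairs of their external legs, select a line ordering along which the graph stays connected, and carry out the resulting geometric sum to obtain a bound $K^{L(\cG)}$. The only obstruction to a clean $K^{L(\cG)}$ bound is that each amputated root $m/A_\cF(m)$ with $m\in\cF$ carries a free scale index with no associated external decay; bounding each such index by the maximal scale $i_{max}(\mu)$ produces a factor $(i_{max}(\mu))^{\vert\cF\vert}$, and summing against the leftover exponential decay $M^{\delta i_{max}(\mu)}$ yields at worst $\vert\cF\vert!\,K^{\vert\cF\vert}$. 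Finally I would sum over $\cF \subset D(\cG)$: grouping forests by cardinality, $\sum_{\cF}\vert\cF\vert!\,K^{\vert\cF\vert} \le \sum_{j}\binom{\vert D(\cG)\vert}{j}j!\,K^{j} \le \vert D(\cG)\vert!\,K'^{\,\vert D(\cG)\vert}$, since $\binom{n}{j}j!\le n!$ and the remaining series converges, and $K'^{\,\vert D(\cG)\vert}$ is reabsorbed into $K^{L(\cG)}$ because $\vert D(\cG)\vert = O(L(\cG))$.

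The main obstacle is precisely this last mechanism, the \emph{renormalon} growth. The free scale indices of the amputated marginal roots are the genuine source of the factorial $\vert D(\cG)\vert!$, and one must check carefully that they can only grow polynomially before the exponential decay from \eqref{improved} takes over, so that the per-forest bound stays $\vert\cF\vert!$ and does not degenerate into a worse divergence. Everything else reduces to bookkeeping built on the already-established uniform power-counting \eqref{improved} and the remainder estimates of Section \ref{sec:remainders}, so the conceptual weight of the argument sits entirely in controlling these inoffensive-forest contributions and in confirming that the resulting bound is genuinely uniform in the cut-off, which is what secures both the stated inequality and the existence of the $\Lambda\to 0$ limit.
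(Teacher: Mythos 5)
Your proposal follows essentially the same route as the paper's proof: the same decomposition $\cA_\cG^{ren} = \sum_{\cF \subset D(\cG)} \cA_{\cG,\cF}^{ren}$ exploiting that $D(\cG)$ is a strong inclusion forest, the same convergent power-counting \eqref{improved} with amputated marginal roots at degree $0$, the same identification of the free scale indices of inoffensive-forest roots as the source of the factor $(i_{max}(\mu))^{|\cF|}$ and hence of the $|\cF|!\,K^{|\cF|}$ per-forest bound, and the same final absorption of the sum over $\cF \subset D(\cG)$ into $K^{L(\cG)}\,|D(\cG)|!$. Your binomial bookkeeping $\sum_j \binom{|D(\cG)|}{j} j! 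K^j \leq |D(\cG)|!\,K'^{|D(\cG)|}$ is a slightly sharper packaging of the paper's cruder bound by $2^{|D(\cG)|}$ forests, but this is a cosmetic difference only.
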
 
While this theorem proves the renormalizability of the model, it does not preclude the existence of renormalons, since the uniform bound we could find is only factorial. However, we notice that such an unreasonable growth can only exist because of the contraction operators associated to subforest of $I_\mu (\cG)$. On the contrary, if we were to focus on the effective expansion, in which only counter-terms associated to high graphs contribute, one would find instead a uniform bound like the one for fully convergent graphs.






\section{Renormalization group flow}\label{sec:rg_flow}

We conclude this chapter with a preliminary analysis of the renormalization group flow of this TGFT in the deep UV. When it comes to concrete calculations, face-connectedness would bring important practical simplifications. Quite a few graphs which need to be computed in a renormalization scheme based on vertex-connectedness would be absent, for only vertex-connected unions of face-connected divergent graphs would contribute to the flow of the coupling constants in this case. Having localization operators acting on face-connected components would also be advantageous, because their combinatorics is relatively simple.

\
However, the renormalization group flow relies primarily on vertex-connectedness. In order to determine its properties, we therefore need to compute counter-terms associated to vertex-connected divergent graphs, which significantly complicates the task. On the other hand, from the point of view of the renormalization scheme based on face-connectedness developed in this thesis, we might expect the vertex-connected unions of face-connected divergent graphs to generate the most relevant contributions. We therefore outline the general formalism, but only compute the terms associated to this subclass of graphs. The full analysis is in progress

\subsection{Approximation scheme}

Being irrelevant to the question of renormalizability, normalization factors were discarded so far. On the contrary, they are of primary importance in concrete computations of physical coupling constants, where the wave-function renormalization needs to be taken into account. 

\
In order to correctly incorporate these factors into our scheme, we come back to the general Wilsonian perspective outlined in equation (\ref{flow_gene}). Let us call $S_i$ the effective action at scale $i$, with coupling constants $t_{6,2,i}^{phys}$, $t_{6,1,i}^{phys}$ and $t_{4,i}^{phys}$. No mass nor wave-function counter-terms are incorporated in $S_i$, i.e. $CT_{m, i}^{phys} = CT_{\vphi, i}^{phys} = 0$. The covariance at this scale is therefore parametrized by the physical mass $m_{phys,i}$, and we denote it $C^i_{m_{phys,i}}$. In order to determine the effective action at scale $S_{i-1}$, we proceed in two steps. We first define an auxiliary effective action $\widetilde{S}_{i-1}$ by integrating out the slice $i$, with respect to the measure of covariance $C_{i, m_{phys,i}}$:
\beq
\e^{- \widetilde{S}_{i - 1} (\Phi, \overline{\Phi}) } = \int \extd \mu_{C_{i, m_{phys,i}}} (\vphi_{i} , \vphib_{i})\, \e^{- \widetilde{S}_{i}  (\Phi + \vphi_{i}, \overline{\Phi} + \vphib_i )}\,,
\eeq
where $\Phi = \underset{j \leq i - 1}{\sum} \vphi_j$. Following the previous section, the auxiliary effective action can be approximated by:
\beq
\widetilde{S}_{i - 1} \approx \frac{t_{4,i-1}}{2} S_{4} + \frac{t_{6,1, i-1}}{3} S_{6,1} + t_{6,2, i-1} S_{6,2} + CT_{m, i-1} S_{m} + CT_{\vphi, i-1} S_{\vphi} + CT_{0,i-1}\,,
\eeq
where $t_{4,i-1}$, $t_{6,1, i-1}$, $t_{6,2, i-1}$, $CT_{m, i-1}$ and $CT_{\vphi, i-1}$ can be deduced from $t_{6,2,i}^{phys}$, $t_{6,1,i}^{phys}$, and $t_{4,i}^{phys}$ thanks to an induction formula similar to (\ref{scale_induction}), only simpler. Indeed, since $i$ is the highest scale, there is no effective coupling constant at higher scales to be taken into account, and no nested contraction operators to be incorporated, which yields:
\beq\label{scale_induction1}
 t_{b , i - 1} = t_{b , i }^{phys} + \sum_{ (\cH , \mu , \{M\}) , \mu \leq \rho \atop  \phi_i (\cH , \mu , \{M\}) = ( b , \mu , \emptyset)} \frac{1}{s(\cH)} \left( \prod_{b' \in \cB(\cH)} ( - t_{b' , i }^{phys} )  \right) (- \tau_{M} ) \, \cA_{M , \mu}\,.
\eeq 
This is in a sense a Markovian truncation of the general equation (\ref{scale_induction}). $CT_{0,i-1}$ contains the contributions of the vacuum divergent graphs, which we did not analyze. But since it will only add a constant factor $\e^{- CT_{0,i-1}}$ in front of the effective partition function at scale $i-1$, it is irrelevant, and we set it to $0$ from now on. 

We now turn to the second step of the procedure, which consists in reabsorbing the $2$-point counter-terms into the covariance. 
Let us define the operator
\beq
M_{i-1} = - CT_{m , i -1} + CT_{ \vphi , i - 1} \sum_\ell \Delta_\ell \,,
\eeq
corresponding to the kernel of the $2$-point function counter-terms at scale $i$. We also write the covariance at scale $i-1$ before renormalization as
\beq
C^{i-1}_{m_{phys,i}} = P \widetilde{C}^{i-1}_{m_{phys,i}} \,,
\eeq
where $\widetilde{C}^{i-1}_{m_{phys,i}}$ is a covariance without integration on $h$, and $P$ is the group-averaging operator (restoring the integration on $h$). Interestingly, one can prove that 
\beq
\left[ P , \widetilde{C}^{i-1}_{m_{phys,i}} \right] = 0 \;; \qquad \left[ P , M_{i-1} \right] = 0 \,.
\eeq  
Now, at scale $i-1$ the full effective covariance is that of the measure:
\beq
\extd \mu_{C^{i-1}_{m_{phys,i}}} (\Phi , \overline{\Phi} ) \exp\left( \int [\extd g_\ell] [\extd g_\ell'] \Phi(g_1 , g_2 , g_3) \, M_{i-1} (g_\ell ; g_\ell' ) \, \overline{\Phi}(g_1' , g_2' , g_3') \right) \,.
\eeq
Let us call $\overline{C}^{i-1}$ this covariance. It can be computed by summing over connected $2$-point functions in the following way:
\bes
\overline{C}^{i-1} &=& C^{i-1}_{m_{phys,i}} + C^{i-1}_{m_{phys,i}} M_{i-1} C^{i-1}_{m_{phys,i}} + C^{i-1}_{m_{phys,i}} M_{i-1} C^{i-1}_{m_{phys,i}} M_{i-1} C^{i-1}_{m_{phys,i}} + \ldots \nn \\
&=& P \left( \widetilde{C}^{i-1}_{m_{phys,i}} + \widetilde{C}^{i-1}_{m_{phys,i}} M_{i-1} \widetilde{C}^{i-1}_{m_{phys,i}} + \widetilde{C}^{i-1}_{m_{phys,i}} M_{i-1} \widetilde{C}^{i-1}_{m_{phys,i}} M_{i-1} \widetilde{C}^{i-1}_{m_{phys,i}} + \ldots\right) \nn \\
&=& P \frac{\widetilde{C}^{i-1}_{m_{phys,i}}}{1 - \widetilde{C}^{i-1}_{m_{phys,i}} M_{i-1}} \,.
\ees
From the explicit expression of $\widetilde{C}^{i-1}_{m_{phys,i}}$ we can deduce the UV approximation:
\bes
\widetilde{C}^{i-1}_{m_{phys,i}} &=& \int_{M^{- 2 (i-1)}}^{+\infty} \extd \alpha \, \exp\left(- \alpha ( m_{phys,i}^2 - \sum_\ell \Delta_\ell )\right)\nn \\
&=& \frac{ \exp\left( - M^{- 2 (i-1)} ( m_{phys,i}^2 - \sum_\ell \Delta_\ell )\right)}{m_{phys,i}^2 - \sum_\ell \Delta_\ell} \nn \\ 
&\underset{i \to + \infty}{\approx}& \frac{\exp\left(- M^{-2 (i-1)} m_{phys, i}^2\right)}{m_{phys,i}^2 - \sum_\ell \Delta_\ell}
\ees
and hence:
\bes
\overline{C}^{i-1} &\underset{i \to + \infty}{\approx}& \frac{1}{Z_{i-1}} \frac{1}{m^2_{phys, i-1} - \sum_\ell \Delta_\ell} \,,\\
Z_{i-1} &\equiv& \exp\left(- M^{-2 (i-1)} m_{phys, i}^2\right) \left( 1 + CT_{\vphi , i-1} \right)\,, \label{zi} \\
m^2_{phys, i-1} &\equiv& \frac{m_{phys,i}^2 + CT_{m , i-1}}{1 + CT_{\vphi , i-1}} \,. \label{mass_phys}
\ees
In order to determine the physical coupling constants, one normalizes the wave-function parameter, thanks to the field redefinition
\beq
\Phi \to \frac{\Phi}{\sqrt{Z_{i-1}}} \,.
\eeq
The powers of $Z_{i-1}$ subsequently appearing in the interaction part of the action must be reabsorbed into new physical coupling constants:
\beq\label{rescale_couplings}
t_{4 , i - 1}^{phys} \equiv \frac{t_{4 , i - 1}}{{Z_{i-1}}^2} \;; \qquad 
t_{6, 1 , i - 1}^{phys} \equiv \frac{t_{6 , 1,  i-1}}{{Z_{i-1}}^3} \;; \qquad 
t_{6, 2 , i - 1}^{phys} \equiv \frac{t_{6 , 2,  i-1}}{{Z_{i-1}}^3} \,.
\eeq
They parametrize the effective action $S_{i-1}$ at scale $i-1$, together with $CT_{m, i-1}^{phys} = CT_{\vphi, i-1}^{phys} = 0$, and the renormalized covariance is $C^{i-1}_{m_{phys,i-1}}$. This procedure can be reiterated, hence defining flow equations for the physical coupling constants and the mass.  




\subsection{Truncated equations for the counter-terms}





We first provide explicit equations for the auxiliary coupling constants and counter-terms, truncated to their first non-vanishing corrections in the physical parameters. We also restrict our attention to vertex-connected unions of face-connected divergent subgraphs. We assume that the flow equations can be given an analytic meaning when $t_{6,1, i}^{phys}$, $t_{6,2, i}^{phys}$ and $t_{4, i}^{phys}$ are small enough.  
We will use the generic notation $\cO ( t^k )$ for neglected terms of order $k$ in $t_{6,1, i}^{\rho}$, $t_{6,2, i}^{\rho}$ and $t_{4, i}^{phys}$. Of primary importance as regards asymptotic freedom are the $6$-point coupling constants, and the wave-function counter-terms. Indeed, according to Table \ref{div}, the multi-series appearing in the equation for $t_{4,i}$ (\ref{scale_induction1}) do not go beyond order $1$ in $t_{4,i}^{phys}$, which will imply that the UV behavior of $t_{4,i}^{phys}$ can be controlled by those of $t_{6,1,i}^{phys}$.

\
{\bf Remark:} In the following, we will compute the different combinatorial factors by hand, that is by counting the number of graphs contributing to a given term, together with the number of Wick contractions producing each of them. It is also possible to use directly formula (\ref{scale_induction1})\footnote{Recall that $s(\cH)$ can be explicitly computed: it is the number of permutations of the external legs of the labeled graph $\cH$ leaving its colored structure unchanged.}. The latter can instead be used to double-check our computations.

\subsubsection{Mass counter-term}

We first compute $CT_{m, i-1}$ in terms of $CT_{m, i}^{phys}$ plus corrections at order one in the small parameters $t_{6,1, i}^{phys}$, $t_{6,2, i}^{phys}$ and $t_{4, i}^{phys}$. There are four types of graphs contributing at this order: one with a type $(4)$ bubble, one with a $(6,1)$ bubble, and two with a $(6 , 2)$ bubble. They are represented in Figure \ref{mass}. 

\begin{figure}[h]
  \centering
  \subfloat[$G_4^\ell$]{\label{g4l}\includegraphics[scale=0.6]{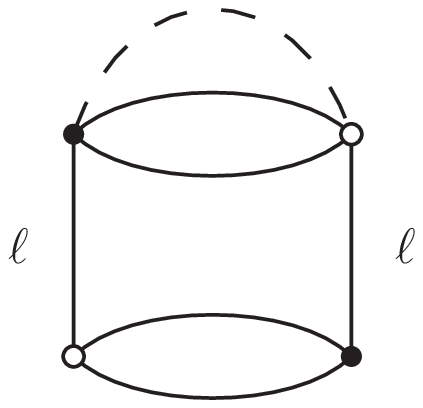}}   	       
  \subfloat[$G_{6,1}^\ell$]{\label{g61l}\includegraphics[scale=0.6]{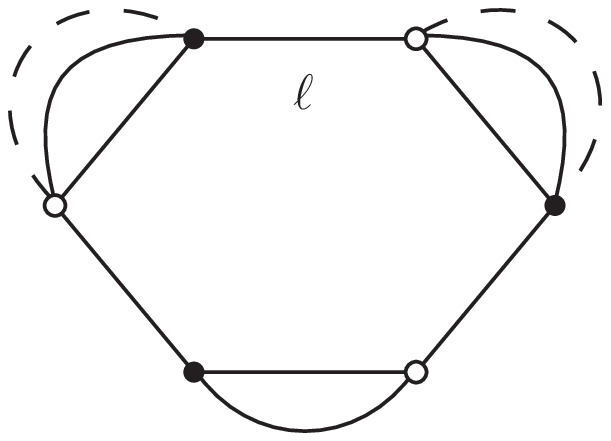}}     
  \subfloat[$G_{6,2}^\ell$]{\label{g62l}\includegraphics[scale=0.6]{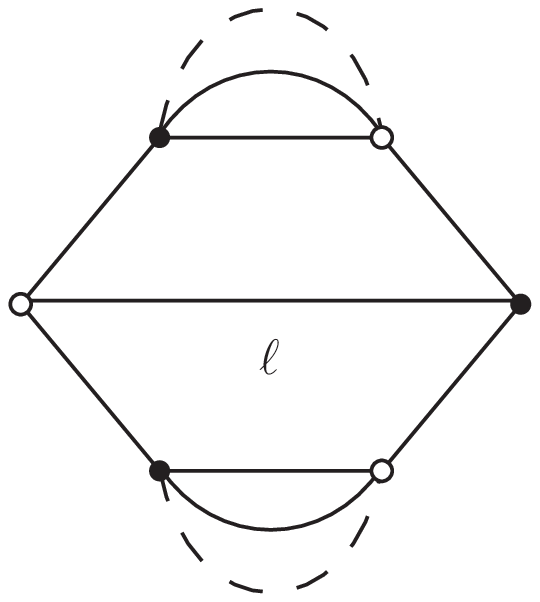}}     
  \subfloat[$G_{6,2}^{\ell \ell'}$]{\label{g62ll}\includegraphics[scale=0.6]{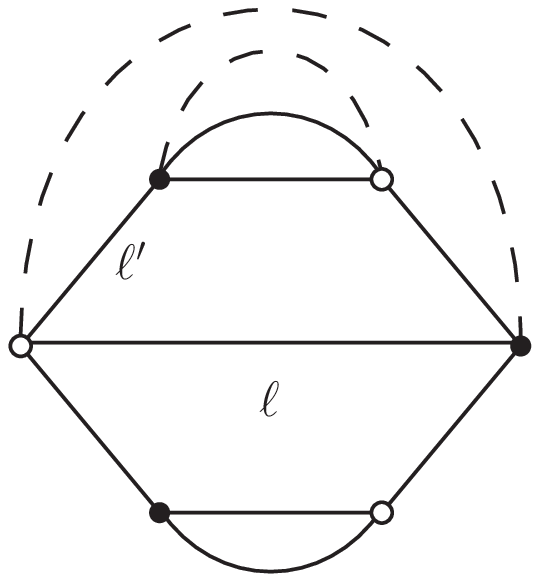}}
  \caption{First order corrections to the mass counter-term with a single-vertex.}\label{mass}
\end{figure}

Let us start with the graph $G_4^\ell$. It has degree $\omega( G_4^\ell ) = 1$, and therefore only its contraction at order $0$ in Taylor expansion contributes. There are two ways of forming its unique line, yielding a combinatorial weight of $2$. $G_4^\ell$ is therefore responsible for a linearly divergent correction:
\bes
- ( - \frac{t_{4 , i}^\rho}{2} ) \times 2 \times  \tau_{h}  \cA_{h} 
&=& t_{4 , i}^\rho \int_{M^{-2i}}^{M^{-2(i-1)}} \extd \alpha \, \e^{- m_{phys,i}^2 \alpha} \int \extd g \, [ K_\alpha (g) ]^2 \nn \\
&=& t_{4 , i}^\rho \int_{M^{-2i}}^{M^{-2(i-1)}} \extd \alpha \, \e^{- m_{phys,i}^2 \alpha} K_{2 \alpha} (\one) \, ,
\ees
where $h$ is the subgraph consisting in the unique line of $G_4^\ell$. 
Such a term arises three times, once for each $\ell$.

\
The graph $G_{6 , 1}^\ell$ has two face-connected components, each with degree $1$. The combinatorial weight is $3$, and the two face-connected components must be contracted independently, yielding a quadratically divergent correction:
\beq
- ( - \frac{t_{6,1 , i}^\rho}{3} ) \times 3 \times \tau_{h_1}  \cA_{h_1} \times \tau_{h_2}  \cA_{h_2} 
=  t_{6, 1 , i}^\rho \left(  \int_{M^{-2i}}^{M^{-2(i-1)}} \extd \alpha \, \e^{- m_{phys,i}^2 \alpha} K_{2 \alpha} (\one) \right)^2
\eeq   
Again, this correction comes in three types, one per color.

\
The situation is very similar for the graph $G_{6,2}^\ell$, which yields a counter-term
\beq
t_{6, 2 , i}^\rho \left(  \int_{M^{-2i}}^{M^{-2(i-1)}} \extd \alpha \, \e^{- m_{phys,i}^2 \alpha} K_{2 \alpha} (\one) \right)^2
\eeq 
appearing three times. 

\
Finally, we focus on $G_{6,2}^{\ell \ell'}$. It is face-connected, has degree two, and a weight $1$. Let us call $\alpha_1$ the Schwinger parameter associated to the elementary melon of $G_{6,2}^{\ell \ell'}$, and $\alpha_2$ that of its second line. The quadratically divergent counter-term associated to $G_{6,2}^{\ell \ell'}$ is then:
\bes
&&t_{6,2, i}^\rho \int_{M^{-2i}}^{M^{-2(i-1)}} \extd \alpha_1 \extd \alpha_2 \, \e^{- m_{phys,i}^2 (\alpha_1 + \alpha_2 )} \int \extd h_1 \extd h_2 \, [K_{\alpha_1} (h_1)]^2 K_{\alpha_1 + \alpha_2} (h_1 h_2) K_{\alpha_2} (h_2) \nn \\
&=& t_{6,2, i}^\rho \int_{M^{-2i}}^{M^{-2(i-1)}} \extd \alpha_1 \extd \alpha_2 \, \e^{- m_{phys,i}^2 (\alpha_1 + \alpha_2 )} \int \extd h  \, [K_{\alpha_1} (h)]^2 K_{\alpha_1 + 2 \alpha_2} (h)
\ees
This correction has to be counted $6$ times, once for each pair $(\ell \ell')$.

\
All in all, we can write:
\beq\label{mass_flow}
CT_{m, i-1} = 3 S_{1,i} \, t_{4, i}^{phys}  + 3 {S_{1,i}}^2 \, ( t_{6,1 , i}^{phys} + t_{6,2 , i}^{phys} ) + 6 S_{2,i} \, t_{6,2 , i}^{phys} + \; \ldots \; + \cO ( t^2 ) \,,
\eeq
where:
\bes
S_{1,i} &\equiv& \int_{M^{-2i}}^{M^{-2(i-1)}} \extd \alpha \, \e^{- m_{phys,i}^2 \alpha} K_{2 \alpha} (\one)\,, \\
S_{2,i} &\equiv& \int_{M^{-2i}}^{M^{-2(i-1)}} \extd \alpha_1 \extd \alpha_2 \, \e^{- m_{phys,i}^2 (\alpha_1 + \alpha_2 )} \int \extd h  \, [K_{\alpha_1} (h)]^2 K_{\alpha_1 + 2 \alpha_2} (h) \,,
\ees
and the dots indicate that we have not taken all the contributions into account.

\subsubsection{Wave-function counter-terms}

At order $1$, and given our truncation, the only type of graphs contributing to the wave-function renormalization is $G_{6,2}^{\ell \ell'}$. $G_{6,1}^{\ell}$ on the other hand is to be left aside for now, since it is made of two face-connected components, each of degree $1$. More precisely, each $G_{6,2}^{\ell \ell'}$ generates a term in $\Delta_{\ell''}$, where $\{ \ell , \ell' , \ell'' \} = \{1 , 2, 3 \}$. This induces a contribution to $CT_{\vphi, i-1}$ of the form
\beq
- t_{6,2, i}^\rho \times \frac{1}{3} \int \extd g \, \cM_i (g) \vert X_g \vert^2
\eeq 
where the function $\cM_i$ is a kernel associated to the external leg of color $\ell''$, and the minus sign comes from the fact that the operator appearing in the action is $(- \Delta_{\ell''})$. The kernel is moreover easily seen to be
\bes
\cM_i (g) &=& \int_{M^{-2i}}^{M^{-2(i-1)}} \extd \alpha_1 \int_{M^{-2 i}}^{M^{-2(i-1)}} \extd \alpha_2 \, \e^{- m_{phys,i}^2 (\alpha_1 + \alpha_2 )} \int \extd h_1 \int \extd h_2 \, [K_{\alpha_1} (h_1)]^2 \nn \\
&& \qquad \times K_{\alpha_1 + \alpha_2} (h_1 h_2) K_{\alpha_2} (h_2) K_{\alpha_2} (h_2 g) 
\ees
Each counter-term $(- \Delta_{\ell''} )$ will receive two such contributions, corresponding to the two possible choices for $(\ell \ell')$, from which we deduce that:
\beq
CT_{\vphi , i -1 } = - \frac{2}{3} \tilde{S}_{2 , i} \, t_{6 , 2 , i}^{phys} + \; \ldots \; + \cO ( t^2 ) \,,
\eeq
with
\bes
\tilde{S}_{2 , i} &\equiv& \int_{M^{-2i}}^{M^{-2(i-1)}} \extd \alpha_1 \int_{M^{-2 i}}^{M^{-2(i-1)}} \extd \alpha_2 \, \e^{- m_{phys,i}^2 (\alpha_1 + \alpha_2 )} \int \extd h_1 \int \extd h_2 \, \extd g \, \vert X_g \vert^2 \nn \\
&& \qquad \times [K_{\alpha_1} (h_1)]^2 K_{\alpha_1 + \alpha_2} (h_1 h_2) K_{\alpha_2} (h_2) K_{\alpha_2} (h_2 g) \,.
\ees

\subsubsection{6-point interactions}

According to Table \ref{div}, $6$-point divergent subgraphs up to second order in physical coupling constants necessarily consist of two $\vphi^6$ vertices. We can also understand that the vertex-connected unions of face-connected divergent subgraphs actually need to be face-connected. Indeed, if there were more than two face-connected components, at least one would have more than eight external legs and would therefore be convergent. More precisely, it turns out that there are only three categories of graphs contributing in our truncation, as shown in Figure \ref{int6}: $H_{6,1}^{\ell \ell'}$ contributes to the renormalization of $(6,1)$ interactions, while $H_{6,2}^{\ell \ell' ; \ell}$ and $H_{6,2}^{\ell \ell' ; \ell'}$ renormalize $(6 , 2)$ interactions. 

\begin{figure}[h]
  \centering
  \subfloat[$H_{6,1}^{\ell \ell'}$]{\label{h61ll}\includegraphics[scale=0.6]{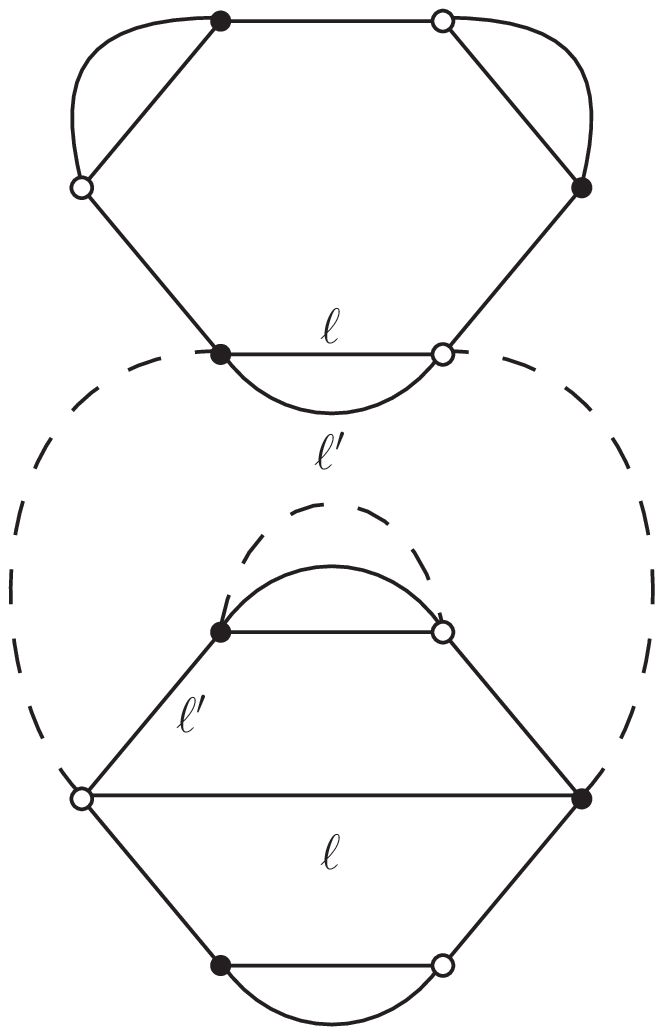}}
  \subfloat[$H_{6,2}^{\ell \ell'; \ell}$]{\label{h62ll}\includegraphics[scale=0.6]{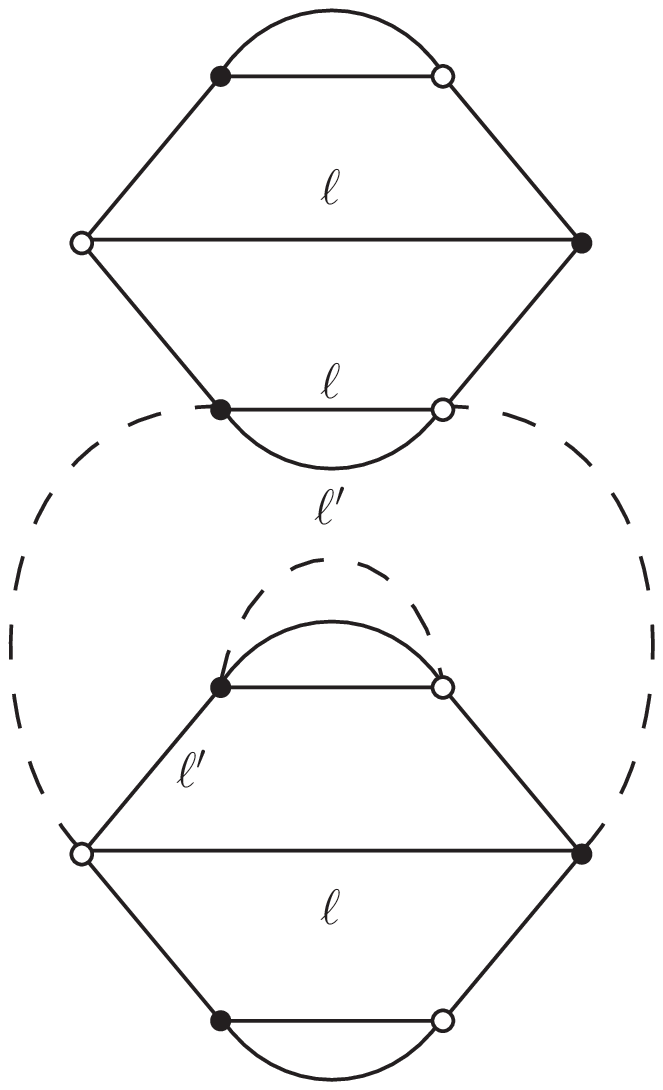}}
  \subfloat[$H_{6,2}^{\ell \ell' ; \ell'}$]{\label{h62lll}\includegraphics[scale=0.6]{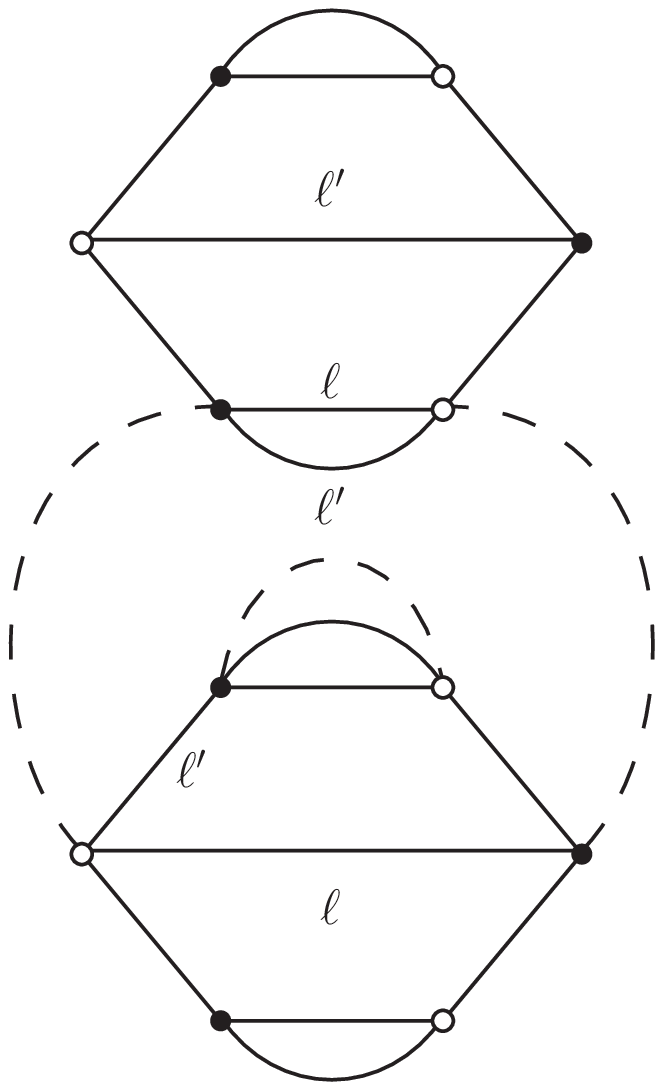}}
  \caption{Second order corrections to the $6$-point interactions.}\label{int6}
\end{figure}

Let us first look at the flow equation for $(6,1)$ interactions. The number of Wick contractions producing the graph $H_{6,1}^{\ell \ell'}$ is $3$, since the only freedom is in the choice of face $(\ell \ell')$ in the upper $(6,1)$ vertex which should be connected to the lower $(6,2)$ vertex. And because $H_{6,1}^{\ell \ell'}$ and $H_{6,1}^{\ell' \ell}$ contribute to the renormalization of a same $(6,1)$ bubble, there is an additional factor $2$ to take into account. The truncated flow therefore takes the form
\beq
\frac{ t_{6 , 1 , i - 1}}{3} = \frac{ t_{6 , 1 , i }^{phys}}{3} - 6 S_{3 , i} \, \frac{ t_{6 , 1 , i }^{phys}}{3} t_{6 , 2 , i }^{phys} + \; \ldots \; +\cO ( t^3 ) \,,
\eeq
or equivalently
\beq
t_{6 , 1 , i - 1} = t_{6 , 1 , i }^{phys} \left( 1 - 6 S_{3 , i} \, t_{6 , 2 , i }^{phys} \right) + \; \ldots \; +\cO ( t^3 ) \,,
\eeq
where $S_{3 , i}$ is define as:
\beq
S_{3,i} \equiv \int_{M^{-2i}}^{M^{-2(i-1)}} \extd \alpha_1 \, \extd \alpha_2 \, \extd \alpha_3 \, \e^{- m_{phys,i}^2 (\alpha_1 + \alpha_2 + \alpha_3 )} \int \extd h  \, [K_{\alpha_1} (h)]^2 K_{\alpha_1 + 2 (\alpha_2 + \alpha_3)} (h)\,. 
\eeq 

We find a similar equation for $(6 , 2)$ interactions, except for combinatorial weights. The number of Wick contractions producing a $H_{6,2}^{\ell \ell' ; \ell}$ is $2$, and only $1$ for $H_{6,2}^{\ell \ell' ; \ell'}$. However the first comes with an additional $1/2!$ contribution (since it consists of two identical vertices). There are moreover two graphs $H_{6,2}^{\ell \ell' ; \ell}$ contributing to a same $(6 , 2)$ interaction (corresponding to two choices for $\ell'$), and likewise two graphs $H_{6,2}^{\ell \ell' ; \ell'}$ (choice of $\ell$). This gives therefore an overall combinatorial factor $2 \times 1/2 \times 2 + 1 \times 2 = 4$, yielding: 
\beq
t_{6 , 2 , i - 1} = t_{6 , 2 , i }^{phys} \left( 1 - 4  S_{3 , i}  \, t_{6 , 2 , i }^{phys} \right) + \; \ldots \; + \cO ( t^3 ) \,.
\eeq

\subsubsection{4-point interactions}

According to the classification of divergent graphs summarized in Table \ref{div}, the multi-series defining $t_{4, i-1}$ stops at order $1$ in $t_{4,i}^{phys}$, and we can therefore write:
\beq
t_{4,i-1} = \left( 1 + f_1 ( t_{6,1,i}^{phys} , t_{6,2,i}^{phys} ) \right) \, t_{4,i}^{phys} + f_2 ( t_{6,1,i}^{phys} , t_{6,2,i}^{phys} )\,, 
\eeq
where $f_1$ and $f_2$ are multi-series in $t_{6,1,i}^{phys}$ and $t_{6,2,i}^{phys}$. 

\
Up to first order in the physical coupling constants, and within the truncation to vertex-connected unions of face-connected divergent subgraphs, $f_1$ and $f_2$ receive contributions from three types of graphs: $I_{6,1}^{\ell}$, $I_{6,2}^{\ell \ell'}$ and $H_4^{\ell \ell'}$, shown in Figure \ref{int4}. The contraction of an $I_{6,1}^{\ell}$ or an $I_{6,2}^{\ell \ell'}$ brings a $S_{1,i}$ coefficient, with combinatorial weights $3$ and $1$ respectively. $I_{6,2}^{\ell \ell'}$ and $I_{6,2}^{\ell ' \ell}$ contribute to the same effective bubble, which brings an additional factor $2$ from this type of graphs. Each $H_4^{\ell \ell'}$ has a combinatorial factor $2$, due to the symmetry of the $4$-valent vertex. Since moreover $H_4^{\ell \ell'}$ and $H_4^{\ell' \ell}$ renormalize the same $4$-valent interaction, we have to take an additional factor $2$ into account. The numerical coefficient resulting from the contraction operation is $S_{3,i}$, just like the $6$-point graph previously computed. In the approximation of $f_1$ and $f_2$ we use, we therefore have:
\beq
\frac{ t_{4, i -1}}{2} \approx \frac{ t_{4, i }^{phys}}{2}  - 2 \times 2 S_{3,i} \, \frac{ t_{6,2, i }^{phys}}{3} \frac{ t_{4, i }^{phys}}{2} +  3 S_{1,i} \, \frac{ t_{6,1, i }^{phys}}{3} + 2 S_{1,i} \, t_{6,2, i }^{phys} + \; \ldots \;,
\eeq  
and hence:
\bes
f_1 ( t_{6,1,i}^{phys} , t_{6,2,i}^{phys} ) &=& - \frac{4}{3} S_{3,i} \, t_{6,2, i }^{phys} + \; \ldots \; + \cO ( t^2 ) \,,\\
f_2 ( t_{6,1,i}^{phys} , t_{6,2,i}^{phys} ) &=& 2 S_{1,i} \left( t_{6,1, i }^{phys} +  2 t_{6,2, i }^{phys} \right) + \; \ldots \; + \cO ( t^2 ) \,. 
\ees


\begin{figure}[h]
  \centering
  \subfloat[$I_{6,1}^{\ell}$]{\label{I61l}\includegraphics[scale=0.6]{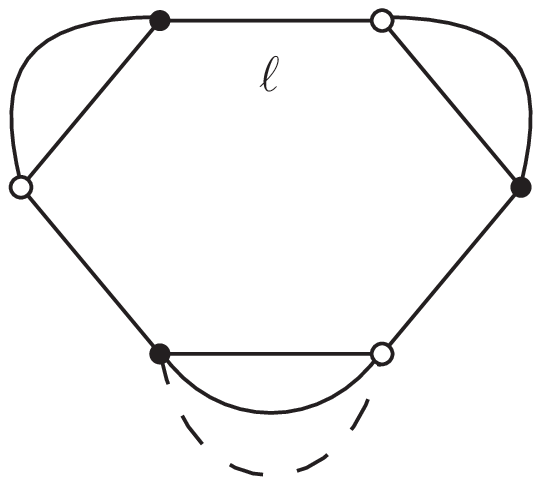}}
  \subfloat[$I_{6,2}^{\ell \ell'}$]{\label{I62ll}\includegraphics[scale=0.6]{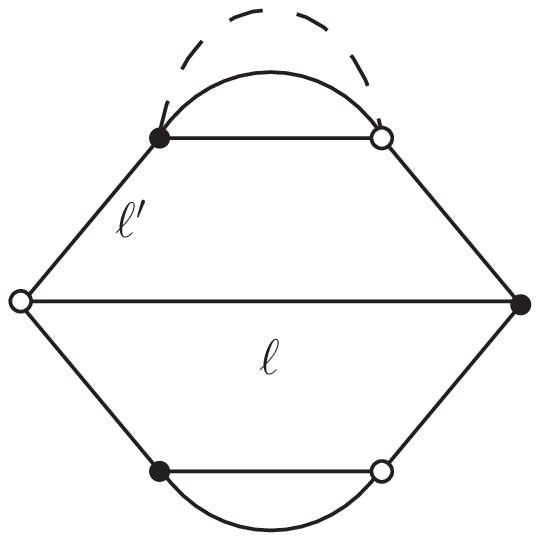}}
  \subfloat[$H_{4}^{\ell \ell'}$]{\label{h4ll}\includegraphics[scale=0.6]{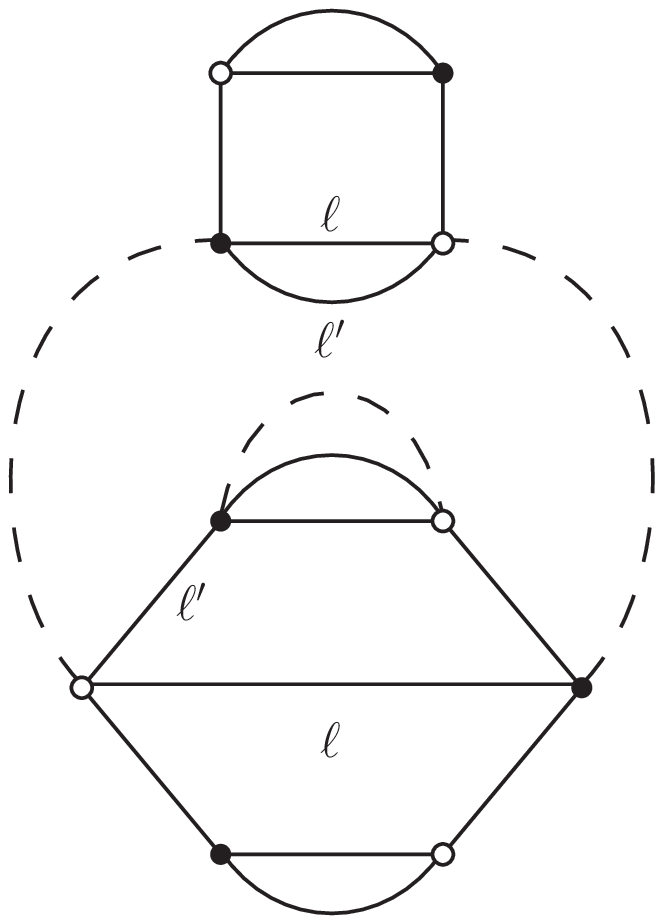}}
  \caption{Divergent graphs contributing to the $4$-point function, up to second order.}\label{int4}
\end{figure}

\subsection{Physical coupling constants: towards asymptotic freedom}





We can now easily deduce flow equations for the physical coupling constants, following (\ref{rescale_couplings}). In order to simplify this analysis, we will from now on make the assumption that the exponential factor in equation (\ref{zi}) can be neglected. Hence we will work under the following hypothesis, whose consistency must be checked at the end of the analysis.

\begin{hypothesis}\label{hyp_mass}
The flow of the mass is such that in the asymptotic UV region:
\beq
\vert \overline{m}_i^2 \vert \equiv \vert \frac{m_{phys,i}^2}{M^{2i}} \vert \ll 1\,.
\eeq
\end{hypothesis}

We can thus use the approximations
\beq
t_{6, 1 , i -1 }^{phys} \approx t_{6, 1 , i -1} ( 1 - 3 CT_{\vphi , i -1}) \;; \qquad t_{6, 2 , i -1 }^{phys} \approx t_{6, 2 , i -1} ( 1 - 3 CT_{\vphi , i -1 })\,,
\eeq
and immediately deduce the flow equations for the $6$-point coupling constants:
\bes
t_{ 6 , 1 , i - 1}^{phys} &=& t_{ 6 , 1 , i}^{phys} \left( 1 + \left[ 2 \tilde{S}_{2 , i}  - 6  S_{3 , i}  \right] t_{ 6 , 2 , i}^{phys} \right) + \; \ldots \; + \cO ( t^3 ) \,, \\
t_{ 6 , 2 , i - 1}^{phys} &=& t_{ 6 , 2 , i}^{phys} \left( 1 + \left[ 2  \tilde{S}_{2 , i} - 4 S_{3 , i}  \right] {t_{ 6 , 2 , i}^{phys}} \right) + \; \ldots \; + \cO ( t^3 ) \,.
\ees 
We define two $\beta$-coefficients, which are the analogues of the $\beta$-functions in our discrete setting:
\bes
\beta_{ 6 , 1 , i} &=& 2 \tilde{S}_{2 , i}  - 6  S_{3 , i} + \; \ldots \;, \\
\beta_{ 6 , 2 , i} &=& 2 \tilde{S}_{2 , i}  - 4  S_{3 , i}  + \; \ldots \;.
\ees
As far as asymptotic freedom is concerned, the crux of the matter is to determine the signs of the $\beta$-coefficients in the deep UV: if they converge to a finite and strictly positive value when $i \to + \infty$, then the $6$-point coupling constants run to $0$ in the UV direction. These coefficient can be evaluated through a Laplace approximation \cite{beta_su2}, and in our approximation it turns out that $\tilde{S}_{2 , i} - 3 S_{3 , i}$ is strictly positive. Hence, if vertex-connected unions of face-connected divergent subgraphs do determine the nature of the flow, the two $6$-point coupling constants necessarily converge to $0$ in the UV. This encouraging indication needs to be confirmed by a complete analysis, which will be reported on in a future publication \cite{beta_su2}.

\

We now turn to the $4$-valent interactions. We have:
\beq
t_{4,i-1}^{phys} \approx t_{4,i-1} \left( 1 - 2 CT_{\vphi, i-1} \right) \,, 
\eeq
from which we deduce
\bes
t_{4,i-1}^{phys} &=& \left( 1 + F_1 ( t_{6,1,i}^{phys} , t_{6,2,i}^{phys} ) \right) \, t_{4,i}^{phys} + F_2 ( t_{6,1,i}^{phys} , t_{6,2,i}^{phys} )\,, \\
F_1 ( t_{6,1,i }^{phys} , t_{6,2,i}^{phys} ) &=&  \frac{4}{3} \left( \tilde{S}_{2 , i} - S_{3,i} \right) \, t_{6,2, i }^{phys} + \; \ldots \; + \cO ( t^2 ) \,,\\
F_2 ( t_{6,1,i }^{phys} , t_{6,2,i}^{phys} ) &=& 2 S_{1,i} \left( t_{6,1, i }^{phys} +  2 t_{6,2, i }^{phys} \right) + \; \ldots \; +  \cO ( t^2 ) \,. 
\ees
According to Table \ref{div}, $F_1$ can only contain $\omega = 0$ terms, while $F_2$ also has linearly divergent counter-terms. In order to correctly deal with these divergences, we can therefore define a 'dimensionless' coupling constant
\beq
\overline{t}_{4, i} \equiv \frac{t_{4, i}^{phys}}{M^{i}}\,.
\eeq 
The flow equation then writes:
\beq
M^{-1} \overline{t}_{4,i-1} = \overline{t}_{4,i} \left( 1 + \frac{4}{3} \left( \tilde{S}_{2 , i} - S_{3,i} \right) \, t_{6,2, i }^{phys} + \cO ( t^2 ) \right) + 2 \frac{S_{1,i}}{M^i} \left( t_{6,1, i }^{phys} +  2 t_{6,2, i }^{phys} \right) + \; \ldots \; + \cO ( t^2 ) \,, 
\eeq
where $\cO (t^2)$ is to be understood in the sense of $t_{6,1,i}^{phys}$ and $t_{6,2,i}^{phys}$ only. 
When the latter are both negligible, one simply has
\beq
\overline{t}_{4,i} = M^{-1} \overline{t}_{4,i-1} \; \Rightarrow \; \overline{t}_{4,i} \sim K \, M^{-i}\,. 
\eeq
Hence, provided that the $t_{6,1,i}^{phys}$ and $t_{6,2,i}^{phys}$ decay to $0$ in the UV, so do $t_{4,i}^{phys}$.

\subsection{Mass and consistency of the assumptions}




We now have to check that, within the asymptotic freedom conjecture, the hypothesis \ref{hyp_mass} is self-consistent. Equations (\ref{mass_phys}) and (\ref{mass_flow}) allow to compute the physical mass flow up to second order in the coupling constants:
\bes
m_{phys,i-1}^2 &=& m_{phys,i}^2 \left( 1 + \frac{2}{3} \tilde{S}_{2,i} \, t_{6,2,i}^{phys} \right) + 3 {S_{1,i}} \, t_{4,i}^{phys} + 3 {S_{1,i}}^2 \, t_{6,1,i}^{phys} \\
&& + \; \left( 6 S_{2,i} + 3 {S_{1,i}}^2 \right) t_{6,2,i}^{phys} + \; \ldots \; + \cO( t^2 )\,.
\ees

\
The assumptions we made concerns the rescaled mass $\overline{m}_i$, so one should divide this equation by $M^{2 i}$. In the process, we can discard the terms with vanishing coefficients in the UV and write: 
\beq
M^{-2}\, \overline{m}_{i-1}^2 = \overline{m}_{i}^2 \left( 1 + \frac{2}{3} \tilde{S}_{2,i} \, t_{6,2,i}^{phys} \right) + 3 \frac{{S_{1,i}}^2}{M^{2 i}} \, t_{6,1,i}^{phys}  + \frac{6 S_{2,i} + 3 {S_{1,i}}^2}{M^{2i}} t_{6,2,i}^{phys} + \; \ldots \; + \cO( t^2 )\,.
\eeq
We can use the same idea as for the $4$-point interactions and conclude that, due to the $M^{-2}$ factor, $\overline{m}_{i-1}^2$ must go to $0$ in the UV if the $6$-point coupling constants do so. Therefore our approximations are consistent with asymptotic freedom.

\

This partial analysis leads us to the following conjecture, currently under investigation:
\begin{conjecture}
The $\SU(2)$ rank-$3$ TGFT studied in this chapter is \text{asymptotically free}.
\end{conjecture}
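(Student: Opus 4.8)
The plan is to upgrade the partial one- and two-loop computation of Section~\ref{sec:rg_flow} into a complete determination of the ultraviolet flow, and to deduce from it that all physical couplings decay as $i\to+\infty$. The backbone of the argument is already in place: the flow equations
\bes
t_{6,1,i-1}^{phys} &=& t_{6,1,i}^{phys}\left(1+\beta_{6,1,i}\,t_{6,2,i}^{phys}\right)+\cO(t^3)\,,\\
t_{6,2,i-1}^{phys} &=& t_{6,2,i}^{phys}\left(1+\beta_{6,2,i}\,t_{6,2,i}^{phys}\right)+\cO(t^3)\,,
\ees
show that the sign of the $\beta$-coefficients in the deep UV controls everything: if $\beta_{6,1,i}$ and $\beta_{6,2,i}$ converge to strictly positive limits, then running towards larger $i$ drives $t_{6,2,i}^{phys}$, and subsequently $t_{6,1,i}^{phys}$, to zero. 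Thus the proof splits into three tasks: (i) complete the perturbative computation of $\beta_{6,1,i}$ and $\beta_{6,2,i}$ at second order by restoring the contributions neglected in the face-connected truncation; (ii) evaluate the resulting integrals $S_{1,i}$, $S_{2,i}$, $S_{3,i}$, $\tilde S_{2,i}$ asymptotically and establish positivity; (iii) integrate the discrete dynamical system and propagate the decay to $t_{4,i}^{phys}$ and $\overline m_i^2$.

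First I would lift the restriction to vertex-connected unions of face-connected divergent subgraphs, and compute all the remaining two-vertex divergent contributions dictated by the Bogoliubov recursion~\eqref{bogo} together with the classification of Table~\ref{div}. Concretely, this means adding to Figures~\ref{int6} and~\ref{int4} those graphs whose divergent part genuinely spans several face-connected components, using the localization operators $\tau_{\cH}$ of Section~\ref{sec:contraction} to extract their local part and the convergent power-counting of Section~\ref{su2_cvpc} to discard the remainders. Since Table~\ref{div} bounds the divergent $n$-point functions, this is a finite enumeration at each order; the coefficients then acquire additional terms, schematically $\beta_{6,1,i}=2\tilde S_{2,i}-6S_{3,i}+\delta_{6,1,i}$ and $\beta_{6,2,i}=2\tilde S_{2,i}-4S_{3,i}+\delta_{6,2,i}$, where $\delta_{6,1,i}$ and $\delta_{6,2,i}$ collect the newly included graphs.

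Second, I would evaluate the $S$- and $\tilde S$-integrals in the limit $i\to+\infty$ by a Laplace approximation of the heat kernels at small Schwinger times, using the bound of Lemma~\ref{heat} together with the short-time divergence $K_\alpha(\one)\sim\alpha^{-3/2}$. The decisive quantity is the sign of the leading coefficient: the partial analysis already indicates $\tilde S_{2,i}-3S_{3,i}>0$, whence $\beta_{6,1,i}=2(\tilde S_{2,i}-3S_{3,i})>0$ and $\beta_{6,2,i}>\beta_{6,1,i}>0$ since $S_{3,i}>0$. Granting positivity, the discrete flow reduces to a one-dimensional recursion $t_{i-1}=t_i(1+\beta\,t_i)$ with $\beta>0$, which for positive initial data forces $t_{6,2,i}^{phys}\to0$ monotonically as $i\to+\infty$; feeding this into the equation for $t_{6,1,i}^{phys}$ yields its decay as well, and the rescaled equations already derived then give the decay of $t_{4,i}^{phys}$ and of $\overline m_i^2$. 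One closes the argument by verifying that Hypothesis~\ref{hyp_mass} is self-consistent, exactly as sketched.

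The hard part will be step (i): controlling the full set of two-vertex divergent graphs and, crucially, confirming that the corrections $\delta_{6,1,i}$, $\delta_{6,2,i}$ coming from graphs outside the face-connected truncation do not overturn the positivity of the $\beta$-coefficients in the deep UV. This is a genuine computation rather than a structural argument, since the neglected graphs carry $\SU(2)$ convolutions of heat kernels whose signs and asymptotic weights must be compared term by term against the dominant $\tilde S_{2,i}-3S_{3,i}$ contribution; an unfavourable accumulation could in principle flip the sign, and hence the conclusion. A secondary difficulty is to make the Laplace evaluation of $\tilde S_{2,i}$ rigorous, as it involves the extra weight $\vert X_g\vert^2$ and therefore a subleading moment of the heat kernel, for which the naive small-$\alpha$ scaling must be supplemented by the full asymptotic expansion provided by Lemma~\ref{heat}.
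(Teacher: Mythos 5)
Your plan retraces the paper's own (partial) analysis of this statement step by step: the truncated flow equations for the physical couplings, the coefficients $\beta_{6,1,i}=2\tilde S_{2,i}-6S_{3,i}+\dots$ and $\beta_{6,2,i}=2\tilde S_{2,i}-4S_{3,i}+\dots$, the Laplace evaluation of the heat-kernel integrals giving $\tilde S_{2,i}-3S_{3,i}>0$, and the propagation of the resulting decay to $t_{4,i}^{phys}$ and $\overline m_i^2$ together with the self-consistency check of Hypothesis~\ref{hyp_mass}. Within the truncation to vertex-connected unions of face-connected divergent subgraphs your deduction agrees with the paper's, and your identification of what remains to be done (the graphs outside that truncation) is exactly what the paper defers to future work.

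However, as a proof of the conjecture the proposal has a genuine gap, and it is precisely the gap that makes this statement a \emph{conjecture} in the paper rather than a theorem. Your step (i) is announced but never executed: you do not enumerate the additional two-vertex divergent graphs dictated by the Bogoliubov recursion \eqref{bogo} and Table~\ref{div}, you do not compute the corrections $\delta_{6,1,i}$, $\delta_{6,2,i}$, and the pivotal claim --- that these corrections cannot overturn the positivity of the $\beta$-coefficients in the deep UV --- is explicitly conceded rather than established (``granting positivity'', ``an unfavourable accumulation could in principle flip the sign''). Since the entire conclusion rests on that sign, the argument is conditional on exactly the point at issue; none of the structural machinery you invoke (power-counting, localization operators, the classification of divergences) constrains the sign of the neglected $\SU(2)$ heat-kernel convolutions, so only the explicit computation can close the argument. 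In other words, what you have written is a correct and well-organized research program --- essentially the one the paper itself lays out --- but not a proof: steps (ii) and (iii) are sound modulo step (i), and step (i) is the open problem.
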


\chapter{Conclusions and perspectives}

We are finally reaching the conclusion of this manuscript. The results we have been able to gather concern two closely related topics: the $1/N$ expansion in colored GFTs, and renormalization theory for TGFTs. We briefly recall what has been achieved in these two respects below, and discuss some possible avenues to explore in the future. 

\section{The $1/N$ expansion in colored GFTs}

As recalled in Chapter \ref{color_tensor}, research in GFTs has been recently boosted by the introduction of colored models, which drastically simplify the combinatorial structure of the $2$-complexes generated in perturbative expansion. These modified models significantly gained support after the discovery of the $1/N$ expansion, but are also strongly motivated by independent topological considerations, and to a lesser extent (i.e. in three dimensions only) by discrete gravity symmetries. Particular attention has been given to even simpler theories, colored tensor models, which can be considered as the back bone of more involved GFTs. As far as the emergence of continuum gravity is concerned, determining which ingredients of these refined models can imprint the large $N$ behavior (if any) is of primary importance. Tensor models provide us with universal tools to explore further the consequences of GFTs, but can also be considered as purely combinatorial challengers. Repeating the successes of the latter with the simplest GFTs implementing additional discrete geometric data is a primary goal in this respect: not only as a way to develop the GFT formalism itself and bridge the gap with 4d quantum gravity models; but also to compare the merits of the two approaches.  

\subsection{Achievements}

In this thesis, we focused on the simplest GFTs whose amplitudes can be given a lattice gauge theory interpretation, namely the Boulatov and Ooguri models. The existence of a $1/N$ expansion for their colored versions has been established alongside that of colored tensor models by new and powerful methods. In Chapter \ref{largeN}, which is based on papers in collaboration with Daniele Oriti \cite{vertex, edge}, these results were revisited and strengthened, thanks to representations better adapted to the symmetries of discrete $BF$ theory.

\
Taking advantage of the metric formulation of GFTs, in non-commutative variables, we were able to analyze in details the vertex representation of the colored Boulatov model, and the edge representation of the colored Ooguri model. The idea, initially put forward in \cite{ef_poincare} and \cite{diffeos}, is to take the generators of translation symmetries as elementary variables. These translation symmetries are associated to the vertices of the simplicial complexes in 3d, and to their edges in 4d. They are deformed non-commutative symmetries, therefore they require particular care: for instance, the change of variables we presented relies on a non-commutative group Fourier transform. 

\
We then put these two reformulations to good use, and derived scaling bounds in the cut-off parameter. This cut-off, implemented with a heat kernel, is what plays the role of large $N$ parameter, hence the relevance of these scaling bounds to the $1/N$ expansion. 

The first bounds we focused on, the bubble bounds, have no counterpart in the colored tensor literature. It is only thanks to the vertex formulation in $d=3$ and the edge formulation in $d=4$ that we could access the topological information encoded in the $d$-bubbles. The $d$-bubbles are particularly interesting, due to the occurrence of topological singularities, even in colored GFTs. In this respect, we could prove that all the contributions to the free energy are topological manifolds in the first few orders of the $1/N$ expansion. More precisely, for each color $\ell$, we could define a positive quantity which so to speak captures the "degree of manifoldness" of a given configuration: it is $0$ when all the bubbles of color $\ell$ are spherical, and gets bigger and bigger the more non-spherical bubbles there are, or the more topologically (and combinatorially in 4d) non-trivial they are. This quantity is what governs the decay in $N$ of the bubble bounds. Interestingly, it is given by the sum of genera of the bubbles in 3d, and the sum of their degrees (taken from the tensor models) in 4d, hence supporting the idea that the degree is the (non-topological) generalization of genus one should use in GFTs. 

The other bounds we could derive are expressed in terms of the jackets, which are particular discretized surfaces embedded in the simplicial complexes. They notably enter the definition of the degree of a colored graph: it is simply the sum of the genera of all the jackets. They are therefore at the basis of the original derivation of the $1/N$ expansion. The topological information about the jackets could also be recovered with our methods, though in a less natural fashion than for bubbles. What is very interesting however, is that our jacket bounds improve the original ones: the decay we could obtain is governed by the maximum of all the genera of the jackets, rather than their average. This subtlety does not play any role at leading order, since again degree $0$ graphs (hence melonic \cite{critical}) dominate the large $N$ behavior of the colored Boulatov and Ooguri models, but is suggestive of important deviations with respect to tensor models at subleading orders. 

\subsection{Discussion and outlook}





The first piece of information we can take out from this study is that, not very surprisingly, recasting GFT models in forms better adapted to their symmetries gives easier access to their properties. In particular, the bubble and improved jacket bounds could not be anticipated with the usual formulation of the Boulatov-Ooguri models. They could not even be deduced a posteriori from the power-counting of \cite{vm1, vm2, vm3} which, though exact, does not easily give access to the topological information we are primarily interested in. Second, our original expectation was that it would be easier to pin-point differences between colored tensor models and more complicated GFTs by focusing on the properties only possessed by the latter. And we see indeed that, thanks to a change of variable which is only possible in the presence of a gauge invariance condition, bounds which are not satisfied by the amplitudes of the tensor models could be derived. This analysis should in our view be pursued further. 

At present, only the melonic sector of colored tensor models has been analyzed in detail, revealing the existence of a critical point at which a branched-polymer continuum phase is reached. We know that, likewise, the leading order sector of the Boulatov-Ooguri models is populated by the melonic graphs, but we should not forget that they are weighted differently. The key question to elucidate is whether these additional weights are able to alter the critical regime obtained in the purely combinatorial case. A first step in this direction has been taken in \cite{boulatov_phase}, in collaboration with Aristide Baratin, Daniele Oriti, James Ryan and Matteo Smerlak. An exact expression of the melonic amplitudes in the large $N$ limit could be derived, including finite pre-factors. These are evaluated thanks to a generalization of the tree-matrix theorem, and reduce to a counting problem for $2$-trees, which are to $2$-complexes what trees are to graphs. These formula, associated with simple bounds, allow to prove that the free energy of these models has a finite radius of analyticity, hence confirming the existence of a melonic phase transition. A detailed account of its properties remains challenging, but will hopefully be facilitated by the exact formula proven in \cite{boulatov_phase}, giving a new opportunity to find key differences with tensor models.

But even if it turned out that the Boulatov-Ooguri models also describe a crumpled phase after the melonic phase transition, deviations from the tensor case could occur at higher orders. In particular, the recent double-scaling results for colored tensor models \cite{wjd_double, Dartois, schaeffer} open the way to a similar achievement in GFTs. And according to our refined jacket bounds, a double scaling for the Boulatov-Ooguri might possibly retain different classes of graphs than in the tensor case. 

\

The scaling bounds presented in this thesis could and should be improved in several respects. First, we only considered vacuum graphs, and the question arises as to how the presence of boundaries would affect the $1/N$ expansion. In the long run, the goal is to generalize the $1/N$ expansion to 4d quantum gravity models. The scaling behavior of such theories is only crudely understood, therefore adopting a similar strategy as the one used in this thesis is an intriguing possibility. In particular, we might want to investigate how the edge formulation of the Ooguri model would be affected by the imposition of the simplicity constraints. Heuristically, their role is to make the $B$ field of $BF$ theory geometrical, that is to turn it into a wedge product of discrete triads. It is therefore tempting to conjecture that upon imposition of the simplicity constraints, the Ooguri $\so(4)$ variables associated to the edges would be turned into geometrical $\mathbb{R}^4$ vectors. If so, one should then determine whether a part of the broken symmetry under (now geometric) edge translations survives. A possibility would be the presence of a vertex translation invariance, expressed as simultaneous translations of some of the edges. If so, we could look for a further change of variables leading to a vertex formulation of 4d quantum gravity models. It remains to be seen whether this very optimistic scenario can be realized, but if so we could greatly benefit from the experience gained in this thesis to develop a $1/N$ expansion for these physically relevant theories. 

\

Finally, a third idea one might want to explore further is the use of geometric symmetries, such as the vertex translations of the Boulatov model, as defining features of GFTs. Just like the Poincaré symmetry allows to classify the possible interactions in local relativistic field theories, we could use the invariance under vertex translations as an avenue towards a construction of the Boulatov model or a generalization thereof from "first principles". A key obstacle however lies in the deformed character of these symmetries, which suggest the implementation of a non-trivial braiding. Despite several attempts, by several people, such a braiding could not be found to date, precluding progress in this direction. With insights from the second part of this thesis, we would suggest to reconsider the same strategy in the TGFT context, where a general locality principle is also available. What is the fate of the translation symmetries in tensor invariant versions of the Boulatov and Ooguri models? Can they be given a geometric interpretation in 3d? 

\section{Renormalization of TGFTs}

Let us now turn to renormalization. Again, the main innovation of this thesis has been to incorporate the gauge invariance condition of spin foam models into the already existing renormalization scheme for tensor fields, and is thought of as a first step towards 4d quantum gravity. This part of the thesis is the outcome of a collaboration with Daniele Oriti and Vincent Rivasseau.

\subsection{Achievements}

In Chapter \ref{renormalization}, we introduced the particular class of tensorial group field theories we focused on, which are characterized by: a) an infinite set of interactions, labeled by colored bubbles, based on the tensorial locality principle; b) non-trivial propagators implementing a gauge invariance condition on the fields, supplemented with a Laplace operator which softly breaks the tensorial invariance of the interaction. The first ingredient is directly imported from the uncolored tensor models, and can also arise from the integration of fields in specific colored group field theories based on simplicial interactions. The gauge invariance condition turns the Feynman amplitudes into lattice gauge theories and is one of the two main ingredients of group field theories for gravity (the other being, in 4d, the so-called simplicity constraints). The Laplace operator launches the renormalization group flow, and is also partially motivated by the analysis of the radiative corrections of simpler GFTs, with ultra-local propagators. The rank $d$ of the tensors, as well as the dimension $D$ of the compact group indexing the tensors, were in a first stage kept arbitrary. A detailed analysis of the power-counting of such models allowed to derive stringent restrictions on $d$ and $D$ in order to achieve renormalizability. In particular, it was shown that only five combinations of such parameters can potentially support (interacting) just-renormalizable models. Among these, only $(d , D) = (3 , 3)$ can be directly related to a space-time theory, namely topological BF theory or 3d quantum gravity, with $G$ the symmetry group for Lorentzian or Riemannian spaces of dimension $d$. In particular, the case $(4,6)$, that would include a TGFT version of the Ooguri model, is found to be non-renormalizable. The divergences of the renormalizable examples are mostly due to melonic subgraphs, in a suitably generalized sense adapted to TGFTs.   

In Chapter \ref{chap:u1}, we explored first examples in details, namely Abelian $\U(1)$ models in $d = 4$. Even if they have no geometric interpretation, they gave us the opportunity to understand better the general formalism. In particular, we could define a generalization of the Wick ordering procedure, called melordering, which fully renormalizes such super-renormalizable theories. It correctly takes the nested structure of the divergent melonic tadpoles (the melopoles) into account, and contracts them to their local part. It should especially be noted that, contrary to usual local field theories, tadpoles are not automatically local, and not automatically quasi-local either. This is one instance of the difficulties introduced by the new notion of quasi-locality, forced upon us by tensor invariance, which we called traciality. With this tool at hand, we could rigorously prove the finiteness of the renormalized amplitudes.  

In the last chapter, we went on to study in details the only just-renormalizable model with a geometric interpretation, in the Riemannian case $G = \SU(2)$. In order to classify the divergences, proven to be all melonic, we generalized the multiscale techniques to the non-Abelian case. The tensorial interactions were shown to be renormalizable up to order $6$, and to generate up to quadratically divergent subgraphs. The same multiscale techniques could then be used to reabsorb divergences into tensorial effective coupling constants, as well as wave-function counter-terms, thus defining renormalized amplitudes. Computed as sums over particular types of Zimmermann forests, they could finally be proven finite at all orders of perturbation, which is the main result of Chapter \ref{chap:su2}. Additionally, divergent forests were found to be unexpectedly rigid in their structure, which helped simplifying some aspects of the proof of renormalizability. We finally briefly sketched the Wilsonian effective expansion, taking the wave-function renormalization into account. A complete computation of the flow equations, truncated to their first non-vanishing contributions, should allow to determine whether this theory is asymptotically free, as is expected from other TGFT models. The partial study we presented supports this conjecture.

\subsection{Discussion and outlook}

The present study provides a few lessons which in our opinion will have to be kept in mind in the construction and renormalization analysis of more elaborate models, in particular models for 4d quantum gravity. First, concerning TGFTs \textit{per se}, the message we would like to convey is that, in order to efficiently index the divergences, the most appropriate notion of connectedness is face-connectedness rather than vertex-connectedness. This is particularly true in models implementing the gauge invariance condition, in which the amplitudes are functions of holonomies around faces. The natural coarse-graining procedure in this situation is indeed to erase "high energy" faces rather than internal lines, and this can be consistently implemented in what we called tracial subgraphs. Face-connectedness also crucially enters the power-counting theorem of such models, through the rank of the incidence matrix between faces and lines of a given graph. While a proof of renormalizability can certainly be achieved with a notion of vertex-connected high divergent subgraphs only, the face-connected high divergent subgraphs we relied on in this thesis capture the fine structure of the divergences, and henceforth avoid many redundancies in the renormalization. This is exemplified by the fact that divergent subgraphs in the sense of face-connectedness do not overlap (in non-vacuum graphs), but rather organize themselves into a strong inclusion forest. Had we worked in the coarser vertex-connectedness picture, overlapping divergences would have been generic, and redundant counter-terms would have been introduced. An intriguing question to ask, in this respect, is whether face-connectedness might prove more fundamental in simpler TGFTs as well, for example in the original model \cite{tensor_4d}, where no connection degrees of freedom are introduced.  

\

We also pointed out clear limitations of face-connectedness, notably so from the effective Wilsonian point of view. This prevented us from settling down the question of asymptotic freedom in the $\SU(2)$ model of Chapter \ref{chap:su2}, which will require more work. In order to make such studies more tractable, it seems to us that the analogy between the face-connectedness condition in TGFT and $1$-particle irreducibility in ordinary quantum field theory should be investigated further.  

\

Let us now turn to the hard question of the renormalizability of quantum gravity models in four dimensions. We do not have any definitive statement to make on this issue, since, as we explained earlier, the analysis presented here does not immediately generalizes to models involving simplicity constraints. The latter GFTs, beside the fact that they are not in the class of models considered in this thesis, are also not based on group manifolds as such, but rather submanifolds of the Lorentz group. Still, it seems to us that the three dimensional $\SU(2)$ model studied in this thesis suggests to reconsider and improve the current spin foam models for quantum gravity in two essential ways, before attempting any complete study of renormalizability. The first concerns the much debated nature of scales in such models, and the definition of non-trivial propagators which decay in the UV. In particular, we think that the results of the present thesis suggest  that, in general, semi-classical reasoning interpreting the large-$j$ limit of spin foam models (the Feynman amplitudes of GFTs) as the IR general relativistic limit should be taken with care. Indeed, perturbative divergences being associated to the same large-$j$ sector (or equivalently small Schwinger parameter $\alpha$), it actually plays the role of the UV in our TGFT setting. In the Wilsonian point of view, it is therefore only for boundary states with scales much lower than the cut-off that the theory retains some predictive power. This points in the direction of large boundary geometries having to be constructed as collections of many small cells rather than a few big ones. And in practice, this means that one will have to address the question of approximate effective schemes, in order to control such regimes with large numbers of particles. An intriguing possibility would be the occurrence of one or several phase transitions along the renormalization flow. This scenario might already start to be tested in the three dimensional case, the first step being the computation of $\beta$-functions. In any case, we need to understand how to choose non-trivial kernels for propagators in four dimensional models. While there are some hints \cite{ValentinJoseph} that the Laplace-Beltrami operator is naturally generated by the quantum dynamics, when no simplicity constraints are imposed, whether the same is true in the presence of simplicity constraints is unclear at the present. We believe that this is the first open question to address as far as the renormalizability of TGFTs for four dimensional quantum gravity is concerned. The second point, which deserves similar attention, is the possible interplay between tensorial invariance and simplicity constraints, as it is not immediately clear whether the geometric meaning and motivations for such constraints, as well as the details of their implementation, straightforwardly generalize to bubble interactions. If these two important questions can be elucidated, one might try to apply the techniques used in this thesis to determine whether four dimensional TGFT models for quantum gravity with such simplicity constraints are renormalizable or not, and up to which order of interactions. 


\appendix
\chapter{Technical appendix}
\section{Heat Kernel}\label{app_heat}

Consider the $S_3$ representation of $SU(2)$, the identity $\one$ being at the north pole,
$H_0 = S_2$ being the equator and $- \one$ being the south pole. The north and south 
open hemispheres are noted respectively as $H_N$ and $H_S$. 

The heat kernel between two points $g$ and $g'$ is:
\beq\label{heat_k}
K_\alpha(g,g') = \sum_{j \in {\mathbb N}/2}(2j + 1)e^{-j(j+1)\alpha} \frac{\sin((2j + 1) \psi(g'g^{-1})}{\sin \psi(g'g^{-1})} \,,
\eeq
where $\psi(g) \in [0, \pi]$ is the class angle of $g \in SU(2)$, which is 0 at $\one$
and $\pi$ at $-\one$.
It is also the sum over Brownian paths in $SU(2)$ from $g$ to $g'$
$$
K_\alpha(g,g') = \int \extd P_{\alpha} (g,g')[\omega]
$$
where $\extd P_{\alpha} (g,g')[\omega]$ is the Wiener measure over Brownian paths $\omega$ going from
$g$ to $g'$ in time $\alpha$.

The northern heat kernel with Dirichlet boundary conditions, called
$K^{N,D}_\alpha(g,g')$  is the same integral, but in which the 
Brownian paths are constrained to lie entirely in $H_N$, except
possibly their end points $g$ and $g'$, which  are allowed to belong to the {\emph{closed}}
hemisphere $\bar H_N$. Obviously:
\beq\label{dirichlet}
K^{N,D}_\alpha(g,g')  \le K_\alpha(g,g')
\eeq
since there are less paths in the left hand side than in the right hand side.

From the Markovian character of the heat kernel 
$K_\alpha$ we have a convolution equation for 
with $g \in H_S$, in terms of the first hitting point $g'$
where the path visits the equatorial boundary: 

\beq\label{conv}
K_\alpha  (I, g)  = \int_0^{\alpha} \extd \alpha' \int_{g' \in H_0}  \extd g' K^{N,D}_{\alpha'} (I, g') K_{\alpha- \alpha'} (g', g) .
\eeq

\section{Proof of heat kernel bounds}

In order to prove lemma \ref{heat}, we first re-express the heat kernel on $\SU(2)$ in terms of the third Jacobi $\theta$-function
\beq
\theta_3 (z , t) = 1 + 2 \sum_{n = 1}^{+ \infty} \e^{{\rm i} \pi n^2 t} \cos (2 \pi n z)\,,
\eeq
defined for any $(z , t) \in \mathbb{C} \times \mathbb{R}$. We note $\theta_3'$ its derivative with respect to $z$:
\beq
\theta_3' (z , t) = - 4 \pi \sum_{n = 1}^{+ \infty} n \e^{{\rm i} \pi n^2 t} \sin (2 \pi n z)\,.
\eeq 
From equation (\ref{heat_k}), we deduce that:
\beq
K_{\alpha} (g) = \frac{- \e^{\alpha / 4}}{4 \pi \sin \psi(g)} \theta_3' \left( \frac{\psi(g)}{2 \pi} , \frac{i \alpha}{4 \pi} \right)\,. 
\eeq
The main interest of this expression is that $\theta_3$ transforms nicely under the modular group, and in particular\footnote{This is a consequence of the Poisson summation formula, so one might as well directly use this theorem instead of introducing $\theta$.}:
\beq
\theta_3 ( \frac{z}{t} , \frac{- 1}{t}) = \sqrt{- {\rm i} t } \e^{\frac{{\rm i} \pi z^2 }{t}} \theta_3 ( z , t)\,.
\eeq
Differentiation with respect to $z$ yields:
\beq
\theta_3' (z , t) = \frac{\e^{- \frac{{\rm i} \pi z^2 }{t}}}{t \sqrt{- {\rm i} t }} \left( \theta_3' \left( \frac{z}{t} , \frac{- 1}{t} \right) - 2 \pi {\rm i} z \theta_3 \left( \frac{z}{t} , \frac{- 1}{t} \right) \right)\,,
\eeq
and allows to express the heat kernel as
\beq
K_{\alpha} (g) = \frac{\e^{- \frac{ {\psi(g)}^2 }{\alpha}}}{ \alpha^{3/2} } 
\times \frac{ (4 \pi)^{1/2} {\rm i} \e^{\alpha / 4}}{ \sin \psi(g)} \left( \theta_3' \left( \frac{ - 2 {\rm i} \psi(g) }{\alpha} , \frac{4 {\rm i} \pi}{\alpha} \right) - {\rm i} \psi(g) \theta_3 \left(  \frac{ - 2 {\rm i} \psi(g) }{\alpha} , \frac{4 {\rm i} \pi}{\alpha} \right) \right) \,.
\eeq

Using the explicit expressions of $\theta_3$ and $\theta_3'$, we finally obtain:

\beq\label{nice_form}
K_{\alpha} (g) = K_{\alpha}^{0} (g) \frac{ \sqrt{4 \pi} \e^{\alpha / 4} \psi(g)}{\sin{\psi(g)}} F_\alpha (\psi(g))\,,
\eeq
where
\bes
K_{\alpha}^{0} (g) &\equiv& \frac{\e^{- \frac{ {\psi(g)}^2 }{\alpha}}}{ \alpha^{3/2} } \,, \\
F_{\alpha} (g) &\equiv&  1 + \sum_{n = 1}^{+ \infty} \e^{- 4 \pi^2 n^2 / \alpha} 
\left( 2 \cosh( \frac{4 \pi n \psi(g)}{\alpha} ) - \frac{4 \pi n}{\psi(g)} \sinh( \frac{4 \pi n \psi(g)}{\alpha}
) \right) \,.
\ees

\

This formula is suitable for investigating the behavior of $K_\alpha$ away from $- \one$. In particular, simple integral bounds on $F_\alpha$ allow to prove that:
\beq
K_\alpha (g) \underset{\alpha \to 0}{\sim} \frac{\e^{- \frac{ {\psi(g)}^2 }{\alpha}}}{ \alpha^{3/2} } 
\frac{ \sqrt{4 \pi} \psi(g)}{\sin{\psi(g)}} 
\eeq
uniformly on any compact $H$ such that $- \one \notin H$. We shall therefore first study the behavior of $K_\alpha$ and its derivatives on the fixed compact $H_{\frac{3 \pi}{4}} = \{ g \in \SU(2) \vert \psi(g) \leq \frac{3 \pi}{4}\}$. Relying on convolution properties of the heat kernel, we will then extend these results to all of $\SU(2)$. 

\

\noindent {\bf Bounds on $H_{\frac{3 \pi}{4}}$}

\
$K_{\alpha}^{0}$ is easy to analyze, as it is nothing but the flat version of $K_{\alpha}$. The function $\psi \mapsto \frac{\psi(g)}{\sin \psi(g)}$ is analytic on $H_{\frac{3 \pi}{4}}$, therefore its contributions to $K_\alpha$ and its derivatives will be uniformly bounded. The non trivial point of the proof consists in proving that $F_\alpha$ and all its derivatives are also uniformly bounded, by constants independent of $\alpha \in ] 0 , 1]$. By expanding the hyperbolic functions, we can first write:
\bes
F_\alpha (\psi) &=& 1 + \sum_{n = 1}^{+ \infty} \e^{- 4 \pi^2 n^2 / \alpha}
	\sum_{p = 0}^{\infty} a_p (n , \alpha) \psi^{2 p} ,  \\
a_p (n , \alpha) &\equiv& \frac{2}{(2 p) !} \left( \frac{4 \pi n}{ \alpha } \right)^{2 p} \left[ 1 - \frac{16 \pi^2 n^2}{(2 p + 1) \alpha} \right]	.
\ees
We can fix $0 < \epsilon < 1$, and find a constant $K_\epsilon$ such that:
\beq
\e^{- 4 \pi^2 n^2 / \alpha} |a_p (n , \alpha ) | \leq K_\epsilon \e^{- 4 \pi^2 n^2 (1 - \epsilon) / \alpha} \frac{2}{(2 p) !} \left( \frac{4 \pi n}{ \alpha } \right)^{2 p} \,.
\eeq
This implies the following bounds, for any $k \in \mathbb{N}$:
\beq
\vert F_\alpha^{(k)} (\psi) \vert \leq \frac{\partial^{k}}{\partial \psi^k} \left( 1 + 2 K_\epsilon \sum_{n = 1}^{+ \infty} \e^{- 4 \pi^2 n^2 (1 - \epsilon) / \alpha} \cosh ( \frac{4 \pi n \psi}{\alpha} ) \right) \,.
\eeq
When $n \geq 1$, we can use the fact that
\beq
\frac{\partial^{k}}{\partial \psi^k} \cosh ( \frac{4 \pi n \psi}{\alpha} ) \leq  ( \frac{4 \pi n}{\alpha} )^{k} \cosh ( \frac{4 \pi n \psi}{\alpha} ) \,,
\eeq
and the exponential decay in $n^2 / \alpha$ to deduce bounds without derivatives. All in all, we see that for any $\epsilon$, we can find constants $K_\epsilon^{(k)}$ such that:
\bes
\vert F_\alpha (\psi) \vert &\leq& 1 + K_\epsilon^{(0)} \sum_{n = 1}^{+ \infty} \e^{- 4 \pi^2 n^2 (1 - \epsilon) / \alpha} \cosh ( \frac{4 \pi n \psi}{\alpha} ) \,, \\
\vert F_\alpha^{(k)} (\psi) \vert &\leq& K_\epsilon^{(k)} \alpha^{\frac{- k}{2}} \sum_{n = 1}^{+ \infty} \e^{- 4 \pi^2 n^2 (1 - \epsilon) / \alpha} \cosh ( \frac{4 \pi n \psi}{\alpha} ) \,.
\ees
 Following \cite{tensor_4d}, let us assume that $\epsilon \leq \frac{5 \pi}{8}$, in order to ensure that the function $$
x \mapsto \e^{- 4 (1 - \epsilon) \pi^2 x^2 / \alpha} \cosh ( \frac{4 \pi x \psi} {\alpha} ) 
$$
decreases on $[1, + \infty [$ for any $\psi \in [ 0 , \frac{3 \pi}{4} ]$. This provides us with the following integral bound:
\beq
\sum_{n = 1}^{+ \infty} \e^{- 4 (1 - \epsilon) \pi^2 n^2 / \alpha} \cosh ( \frac{4 \pi n \psi}{\alpha} ) \leq \int_{1}^{ + \infty} \e^{- 4 (1 - \epsilon) \pi^2 x^2 / \alpha} \cosh ( \frac{4 \pi x \psi} {\alpha} ) \extd x \,.
\eeq
Putting the latter in Gaussian form yields an expression in terms of the error function $\rm{erfc}(x) \equiv \int_{x}^{+ \infty} \e^{- t^2} \extd t \leq \e^{- x^2}$:
\bes
\int_{1}^{ + \infty} \e^{- 4 (1 - \epsilon) \pi^2 x^2 / \alpha} \cosh ( \frac{4 \pi x \psi} {\alpha} ) \extd x 
&=& \frac{ \e^{\frac{\psi^2}{(1 - \epsilon) \alpha}}\sqrt{ \pi \alpha}}{8 \pi \sqrt{1 - \epsilon}} \left[ \rm{erfc}\left( 2 \pi \sqrt{\frac{1 - \epsilon}{\alpha}}\pi + \frac{\psi}{\sqrt{(1 - \epsilon) \alpha}}\right) \right. \nn \\
&& \qquad \left. + \; \rm{erfc}\left( 2 \pi \sqrt{\frac{1 - \epsilon}{\alpha}}\pi - \frac{\psi}{\sqrt{(1 - \epsilon) \alpha}}\right) \right] \\
&\leq& \frac{ \e^{\frac{\psi^2}{(1 - \epsilon) \alpha}}\sqrt{ \pi \alpha}}{8 \pi \sqrt{1 - \epsilon}} \left[ \e^{- (2 \pi \sqrt{\frac{1 - \epsilon}{\alpha}}\pi + \frac{\psi}{\sqrt{(1 - \epsilon) \alpha}})^{2}} \right. \nn \\
&& \qquad \left. + \; \e^{- (2 \pi \sqrt{\frac{1 - \epsilon}{\alpha}}\pi - \frac{\psi}{\sqrt{(1 - \epsilon) \alpha}})^{2}} \right] \\
&\leq& \frac{\sqrt{ \pi \alpha}}{8 \pi \sqrt{1 - \epsilon}} \e^{- 4 \pi^2 \frac{1 - \epsilon}{\alpha}}
 \left[ \e^{\frac{4 \pi \psi}{\alpha}} + \e^{- \frac{4 \pi \psi}{\alpha}} \right]\,.
\ees
The last expression is bounded by a constant independent of $\alpha$ and $\psi \in [0 , \frac{3 \pi}{4}]$ provided that $\epsilon \leq \frac{1}{4}$, which is an admissible choice. This concludes the proof of the existence of constants $K^{(k)}$ such that:
\bes\label{bound_F}
\vert F_\alpha (\psi) \vert &\leq& 1 + K^{(0)} \sqrt{\alpha}  \,, \\
\vert F_\alpha^{(k)} (\psi) \vert &\leq& K^{(k)} \alpha^{\frac{1 - k}{2}} \,,
\ees
on $H_{\frac{3 \pi}{4}}$. Using equation (\ref{nice_form}), it is then easy to prove that when $\alpha \in ] 0 , 1 ]$, $K_\alpha$ verifies the same type of bounds as $K_\alpha^0$ on $H_{\frac{3 \pi}{4}}$, therefore concluding the proof of lemma \ref{heat} on this subset.

%

\

\noindent {\bf Extension to $\SU(2)$}

\

Suppose that $g \in \SU(2) \setminus H_{3 \pi /4}$. We can use formula (\ref{conv}) and (\ref{dirichlet}) to write:
\beq
K_\alpha (g) \leq \int_0^{\alpha} \extd \alpha' \int_{g' \in H_0}  \extd g' K_{\alpha'} (g') K_{\alpha- \alpha'} (g'^{\inv} g) \,.
\eeq
This upper bound involves only heat kernels evaluated in $H_{3 \pi /4}$. Moreover, $K_{\alpha'} (g')$ does not depend on the particular value of $g' \in H_0$, the squared distance to $\one$ of the latter and of $g'^{\inv} g$ being bounded from below by a constant $c > 0$. 

From the discussion above, we know that there exists constants $\delta_1$ and $K_1$ such that:
\beq
K_\alpha (g) \leq K_1 \int_{g' \in H_0} \extd g' \int_0^{\alpha} \extd \alpha'  \frac{ \e^{- \delta_1 \vert g' \vert^2 / \alpha'}}{\alpha'^{3/2}} \frac{\e^{- \delta_1 \vert g'^{\inv} g \vert^2 / (\alpha- \alpha')}}{(\alpha- \alpha')^{3/2}} \,.
\eeq
To take care of the singularities in $\alpha' = 0$ and $\alpha' = \alpha$, we can decompose the integral over $\alpha'$ into two components: from $0$ to $\alpha/2$, and from $\alpha/2$ to $\alpha$. Each of these integrals can then be bounded independently, for instance:
\bes
\int_0^{\alpha / 2} \extd \alpha'  \frac{ \e^{- \delta_1 \vert g' \vert^2 / \alpha'}}{\alpha'^{3/2}} \frac{\e^{- \delta_1 \vert g'^{\inv} g \vert^2 / (\alpha- \alpha')}}{(\alpha- \alpha')^{3/2}}
&\leq& \int_0^{\alpha / 2} \extd \alpha' \frac{ \e^{- \delta_1 c / \alpha'}}{\alpha'^{3/2}} \frac{\e^{- \delta_1 c / \alpha}}{(\alpha / 2)^{3/2}} \\
&=& K_2 \frac{\e^{- \delta_1 c / \alpha}}{\alpha^{3/2}} \,.
\ees
We can bound the second integral in the same way, and therefore conclude that:
\beq\label{ext_bound}
K_{\alpha}(g) \leq K \frac{\e^{- \delta_2 / \alpha}}{\alpha^{3/2}} \leq K \frac{\e^{- \delta \vert g \vert^{2} / \alpha}}{\alpha^{3/2}}
\eeq
for some constants $K > 0$ and $\delta > 0$.

\

We can proceed in a similar way for the derivatives of $K_\alpha$. We fix $k \geq 1$, and a normalized Lie algebra element $X$. From (\ref{conv}), we deduce 
\bes
\vert (\cL_X)^n K_\alpha  (g) \vert  &=& \vert \int_0^{\alpha} \extd \alpha' \int_{H_0}  \extd g' K^{N,D}_{\alpha'} (g') (\cL_X)^k K_{\alpha- \alpha'} (g'^{\inv} g) \vert \\
&\leq& \int_0^{\alpha} \extd \alpha' \int_{H_0}  \extd g' K_{\alpha'} (g') \vert (\cL_X)^k K_{\alpha- \alpha'} (g'^{\inv} g) \vert \\
&\leq&  K_1 \int_{g' \in H_0} \extd g' \int_0^{\alpha} \extd \alpha'  \frac{ \e^{- \delta_1 \vert g' \vert^2 / \alpha'}}{\alpha'^{3/2}} \frac{\e^{- \delta_1 \vert g'^{\inv} g \vert^2 / ( \alpha- \alpha')}}{(\alpha- \alpha')^{(3 + k)/2}}\,,
\ees
for some constants $K_1$ and $\delta_1$. 
The same method as before allows to show that 
\beq
\vert (\cL_X)^k K_\alpha  (g) \vert \leq  K \frac{\e^{- \delta \vert g \vert / \sqrt{\alpha}}}{\alpha^{(3+ k)/2}}\,,
\eeq 
for some constants $K > 0$ and $\delta > 0$.







\bibliographystyle{hunsrt}
\bibliography{biblio}

\begin{thebibliography}{100}

\bibitem{einstein_grossmann1913}
A~Einstein and M~Grossmann.
\newblock Entwurf einer verallgemeinerten relativitätstheorie und eine theorie
  der gravitation.
\newblock {\em Zeitschrift für Mathematik und Physik}, 62:225--261, 1913.

\bibitem{einstein1915}
A~Einstein.
\newblock Entwurf einer verallgemeinerten relativitätstheorie und eine theorie
  der gravitation.
\newblock {\em Preussische Akademie der Wissenschaften, Sitzungsberichte},
  pages 844--847, 1915.

\bibitem{bohr1983}
Niels Bohr.
\newblock Discussion with {E}instein on epistemological problems in atomic
  physics.
\newblock In J.~A. Wheeler and W.~H. Zurek, editors, {\em Quantum Theory and
  Measurement}, chapter I.1. Princeton University Press, Princeton, NJ, 1983.

\bibitem{unruh1984}
WG~Unruh.
\newblock Steps towards a quantum theory of gravity.
\newblock {\em Quantum Theory of Gravity}, 1:234, 1984.

\bibitem{ashtekar_book}
Abhay Ashtekar and Ranjeet~S Tate.
\newblock {\em Lectures on non-perturbative canonical gravity}, volume~6.
\newblock World Scientific Publishing Company Incorporated, 1991.

\bibitem{rovelli_book}
Carlo Rovelli.
\newblock {\em Quantum gravity}.
\newblock Cambridge University Press, 2004.

\bibitem{thiemann_book}
Thomas Thiemann.
\newblock {\em Modern canonical quantum general relativity}.
\newblock Cambridge University Press, 2007.

\bibitem{perez_review2004}
Alejandro Perez.
\newblock {Introduction to loop quantum gravity and spin foams}.
\newblock 2004, gr-qc/0409061.

\bibitem{perez_review2012}
Alejandro Perez.
\newblock {The Spin Foam Approach to Quantum Gravity}.
\newblock {\em Living Rev.Rel.}, 16:3, 2013, 1205.2019.

\bibitem{goroff_sagnotti}
Marc~H Goroff and Augusto Sagnotti.
\newblock The ultraviolet behavior of einstein gravity.
\newblock {\em Nuclear Physics B}, 266(3):709--736, 1986.

\bibitem{smolin_background}
Lee Smolin.
\newblock {The Case for background independence}.
\newblock 2005, hep-th/0507235.

\bibitem{bojowald2001}
Martin Bojowald.
\newblock Absence of a singularity in loop quantum cosmology.
\newblock {\em Physical Review Letters}, 86(23):5227--5230, 2001.

\bibitem{ashtekar2006}
Abhay Ashtekar, Tomasz Pawlowski, and Parampreet Singh.
\newblock {Quantum nature of the big bang}.
\newblock {\em Phys.Rev.Lett.}, 96:141301, 2006, gr-qc/0602086.

\bibitem{bojowald2012}
Martin Bojowald and George~M. Paily.
\newblock {Deformed General Relativity and Effective Actions from Loop Quantum
  Gravity}.
\newblock {\em Phys.Rev.}, D86:104018, 2012, 1112.1899.

\bibitem{burgess2003}
C.P. Burgess.
\newblock {Quantum gravity in everyday life: General relativity as an effective
  field theory}.
\newblock {\em Living Rev.Rel.}, 7:5, 2004, gr-qc/0311082.

\bibitem{donoghue2012}
John~F. Donoghue.
\newblock {The effective field theory treatment of quantum gravity}.
\newblock {\em AIP Conf.Proc.}, 1483:73--94, 2012, 1209.3511.

\bibitem{reuter2012}
Martin Reuter and Frank Saueressig.
\newblock {Quantum Einstein Gravity}.
\newblock {\em New J.Phys.}, 14:055022, 2012, 1202.2274.

\bibitem{dario2013}
Dario Benedetti.
\newblock {On the number of relevant operators in asymptotically safe gravity}.
\newblock 2013, 1301.4422.

\bibitem{nicolai2005}
Hermann Nicolai, Kasper Peeters, and Marija Zamaklar.
\newblock {Loop quantum gravity: An Outside view}.
\newblock {\em Class.Quant.Grav.}, 22:R193, 2005, hep-th/0501114.

\bibitem{dewitt2011pursuit}
C{\'e}cile DeWitt-Morette.
\newblock {\em The pursuit of quantum gravity}.
\newblock Springerverlag Berlin Heidelberg, 2011.

\bibitem{dirac_qm}
PAM Dirac.
\newblock Lectures on quantum mechanics.
\newblock 1964.

\bibitem{dewitt_can}
Bryce~S DeWitt.
\newblock Quantum theory of gravity. i. the canonical theory.
\newblock {\em Physical Review}, 160(5):1113, 1967.

\bibitem{dewitt_cov}
Bryce~S DeWitt.
\newblock Quantum theory of gravity. ii. the manifestly covariant theory.
\newblock {\em Physical Review}, 162(5):1195, 1967.

\bibitem{ashtekar1986}
Abhay Ashtekar.
\newblock New variables for classical and quantum gravity.
\newblock {\em Physical Review Letters}, 57(18):2244--2247, 1986.

\bibitem{barbero1994}
J.~Fernando Barbero~G.
\newblock {Real Ashtekar variables for Lorentzian signature space times}.
\newblock {\em Phys.Rev.}, D51:5507--5510, 1995, gr-qc/9410014.

\bibitem{jacobson1995}
Ted Jacobson.
\newblock {Thermodynamics of space-time: The Einstein equation of state}.
\newblock {\em Phys.Rev.Lett.}, 75:1260--1263, 1995, gr-qc/9504004.

\bibitem{lorenzo2009}
L.~Sindoni.
\newblock {Emergent Gravity: The Analogue Models Perspective}.
\newblock 2009.

\bibitem{barcelo2005}
Carlos Barcelo, Stefano Liberati, and Matt Visser.
\newblock {Analogue gravity}.
\newblock {\em Living Rev.Rel.}, 8:12, 2005, gr-qc/0505065.

\bibitem{rovelli_smolin1994}
Carlo Rovelli and Lee Smolin.
\newblock {Discreteness of area and volume in quantum gravity}.
\newblock {\em Nucl.Phys.}, B442:593--622, 1995, gr-qc/9411005.

\bibitem{pranzetti_sigma}
Jacobo Diaz-Polo and Daniele Pranzetti.
\newblock {Isolated Horizons and Black Hole Entropy In Loop Quantum Gravity}.
\newblock {\em SIGMA}, 8:048, 2012, 1112.0291.

\bibitem{weinberg1}
Steven Weinberg.
\newblock {\em The Quantum Theory Of Fields: Foundations}, volume~1.
\newblock Cambridge university press, 1996.

\bibitem{wilson_nobel}
Kenneth~G Wilson.
\newblock Nobel lecture.
\newblock {\em Nobelprize.org}, 1965.

\bibitem{vincent_book}
Vincent Rivasseau.
\newblock {\em From perturbative to constructive renormalization}.
\newblock Princeton University Press New Jersey, 1991.

\bibitem{salmhofer}
Manfred Salmhofer.
\newblock {\em Renormalization}.
\newblock Springer Verlag, 1999.

\bibitem{straumann}
Norbert Straumann.
\newblock {\em General Relativity with Applications to Astrophysics}.
\newblock Springer-Verlag, 2004.

\bibitem{vincent_tt1}
Vincent Rivasseau.
\newblock {Quantum Gravity and Renormalization: The Tensor Track}.
\newblock {\em AIP Conf.Proc.}, 1444:18--29, 2011, 1112.5104.

\bibitem{vincent_tt2}
Vincent Rivasseau.
\newblock {The Tensor Track: an Update}.
\newblock 2012, 1209.5284.

\bibitem{edr}
Etera~R. Livine, Daniele Oriti, and James~P. Ryan.
\newblock {Effective Hamiltonian Constraint from Group Field Theory}.
\newblock {\em Class.Quant.Grav.}, 28:245010, 2011, 1104.5509.

\bibitem{gfc}
Steffen Gielen, Daniele Oriti, and Lorenzo Sindoni.
\newblock {Cosmology from Group Field Theory Formalism for Quantum Gravity}.
\newblock 2013, 1303.3576.

\bibitem{bianca_cyl}
Bianca Dittrich.
\newblock {From the discrete to the continuous: Towards a cylindrically
  consistent dynamics}.
\newblock {\em New J.Phys.}, 14:123004, 2012, 1205.6127.

\bibitem{bianca_review}
Bianca Dittrich.
\newblock {How to construct diffeomorphism symmetry on the lattice}.
\newblock {\em PoS}, QGQGS2011:012, 2011, 1201.3840.

\bibitem{bahr2012}
Benjamin Bahr, Bianca Dittrich, Frank Hellmann, and Wojciech Kaminski.
\newblock {Holonomy Spin Foam Models: Definition and Coarse Graining}.
\newblock {\em Phys.Rev.}, D87:044048, 2013, 1208.3388.

\bibitem{boulatov}
D.V. Boulatov.
\newblock {A Model of three-dimensional lattice gravity}.
\newblock {\em Mod.Phys.Lett.}, A7:1629--1646, 1992, hep-th/9202074.

\bibitem{vertex}
Sylvain Carrozza and Daniele Oriti.
\newblock {Bounding bubbles: the vertex representation of 3d Group Field Theory
  and the suppression of pseudo-manifolds}.
\newblock {\em Phys.Rev.}, D85:044004, 2012, 1104.5158.

\bibitem{edge}
Sylvain Carrozza and Daniele Oriti.
\newblock {Bubbles and jackets: new scaling bounds in topological group field
  theories}.
\newblock {\em JHEP}, 1206:092, 2012, 1203.5082.

\bibitem{u1}
Sylvain Carrozza, Daniele Oriti, and Vincent Rivasseau.
\newblock {Renormalization of Tensorial Group Field Theories: Abelian U(1)
  Models in Four Dimensions}.
\newblock 2012, 1207.6734.

\bibitem{su2}
Sylvain Carrozza, Daniele Oriti, and Vincent Rivasseau.
\newblock {Renormalization of an SU(2) Tensorial Group Field Theory in Three
  Dimensions}.
\newblock {\em Commun.Math.Phys.}, to appear.

\bibitem{beta_su2}
Sylvain Carrozza.
\newblock {In preparation}.

\bibitem{boulatov_phase}
Aristide Baratin, Sylvain Carrozza, Daniele Oriti, James~P. Ryan, and Matteo
  Smerlak.
\newblock {Melonic phase transition in group field theory}.
\newblock 2013, 1307.5026.

\bibitem{Laddha:2011mk}
Alok Laddha and Madhavan Varadarajan.
\newblock {The Diffeomorphism Constraint Operator in Loop Quantum Gravity}.
\newblock {\em Class.Quant.Grav.}, 28:195010, 2011, 1105.0636.

\bibitem{casey_weak}
Casey Tomlin and Madhavan Varadarajan.
\newblock {Towards an Anomaly-Free Quantum Dynamics for a Weak Coupling Limit
  of Euclidean Gravity}.
\newblock {\em Phys.Rev.}, D87:044039, 2013, 1210.6869.

\bibitem{adm}
R~Arnowitt, S~Deser, and CW~Misner.
\newblock Canonical variables for general relativity.
\newblock {\em Physical Review}, 117(6):1595, 1960.

\bibitem{Geiller:2012dd}
Marc Geiller and Karim Noui.
\newblock {A note on the Holst action, the time gauge, and the Barbero-Immirzi
  parameter}.
\newblock 2012, 1212.5064.

\bibitem{henneaux_teitelboim}
Marc Henneaux and Claudio Teitelboim.
\newblock {\em Quantization of gauge systems}.
\newblock Princeton university press, 1992.

\bibitem{lost}
Jerzy Lewandowski, Andrzej Okolow, Hanno Sahlmann, and Thomas Thiemann.
\newblock {Uniqueness of diffeomorphism invariant states on holonomy-flux
  algebras}.
\newblock {\em Commun.Math.Phys.}, 267:703--733, 2006, gr-qc/0504147.

\bibitem{Ashtekar:1994wa}
Abhay Ashtekar and Jerzy Lewandowski.
\newblock {Differential geometry on the space of connections via graphs and
  projective limits}.
\newblock {\em J.Geom.Phys.}, 17:191--230, 1995, hep-th/9412073.

\bibitem{Zapata:2004xq}
Jose~A. Zapata.
\newblock {Loop quantization from a lattice gauge theory perspective}.
\newblock {\em Class.Quant.Grav.}, 21:L115--L122, 2004, gr-qc/0401109.

\bibitem{Ashtekar:1994mh}
Abhay Ashtekar and Jerzy Lewandowski.
\newblock {Projective techniques and functional integration for gauge
  theories}.
\newblock {\em J.Math.Phys.}, 36:2170--2191, 1995, gr-qc/9411046.

\bibitem{Reisenberger:1996pu}
Michael~P Reisenberger and Carlo Rovelli.
\newblock {'Sum over surfaces' form of loop quantum gravity}.
\newblock {\em Phys.Rev.}, D56:3490--3508, 1997, gr-qc/9612035.

\bibitem{Thiemann:1996ay}
T.~Thiemann.
\newblock {Anomaly - free formulation of nonperturbative, four-dimensional
  Lorentzian quantum gravity}.
\newblock {\em Phys.Lett.}, B380:257--264, 1996, gr-qc/9606088.

\bibitem{Bianchi:2012nk}
Eugenio Bianchi and Frank Hellmann.
\newblock {The Construction of Spin Foam Vertex Amplitudes}.
\newblock {\em SIGMA}, 9:008, 2013, 1207.4596.

\bibitem{eprl}
Jonathan Engle, Etera Livine, Roberto Pereira, and Carlo Rovelli.
\newblock {LQG vertex with finite Immirzi parameter}.
\newblock {\em Nucl.Phys.}, B799:136--149, 2008, 0711.0146.

\bibitem{Alexandrov:2011ab}
Sergei Alexandrov, Marc Geiller, and Karim Noui.
\newblock {Spin Foams and Canonical Quantization}.
\newblock {\em SIGMA}, 8:055, 2012, 1112.1961.

\bibitem{Baez:1997zt}
John~C. Baez.
\newblock {Spin foam models}.
\newblock {\em Class.Quant.Grav.}, 15:1827--1858, 1998, gr-qc/9709052.

\bibitem{bc}
John~W. Barrett and Louis Crane.
\newblock {A Lorentzian signature model for quantum general relativity}.
\newblock {\em Class.Quant.Grav.}, 17:3101--3118, 2000, gr-qc/9904025.

\bibitem{fk}
Laurent Freidel and Kirill Krasnov.
\newblock {A New Spin Foam Model for 4d Gravity}.
\newblock {\em Class.Quant.Grav.}, 25:125018, 2008, 0708.1595.

\bibitem{Dupuis:2011fz}
Maite Dupuis and Etera~R. Livine.
\newblock {Holomorphic Simplicity Constraints for 4d Spinfoam Models}.
\newblock {\em Class.Quant.Grav.}, 28:215022, 2011, 1104.3683.

\bibitem{Dupuis:2011wy}
Maite Dupuis, Laurent Freidel, Etera~R. Livine, and Simone Speziale.
\newblock {Holomorphic Lorentzian Simplicity Constraints}.
\newblock {\em J.Math.Phys.}, 53:032502, 2012, 1107.5274.

\bibitem{bo_bc}
Aristide Baratin and Daniele Oriti.
\newblock {Quantum simplicial geometry in the group field theory formalism:
  reconsidering the Barrett-Crane model}.
\newblock {\em New J.Phys.}, 13:125011, 2011, 1108.1178.

\bibitem{bo_holst}
Aristide Baratin and Daniele Oriti.
\newblock {Group field theory and simplicial gravity path integrals: A model
  for Holst-Plebanski gravity}.
\newblock {\em Phys.Rev.}, D85:044003, 2012, 1111.5842.

\bibitem{Baratin:2010wi}
Aristide Baratin and Daniele Oriti.
\newblock {Group field theory with non-commutative metric variables}.
\newblock {\em Phys.Rev.Lett.}, 105:221302, 2010, 1002.4723.

\bibitem{ponzano_regge}
Giorgio Ponzano and Tullio Regge.
\newblock Semiclassical limit of racah coefficients.
\newblock Technical report, Princeton Univ., NJ, 1969.

\bibitem{turaev_viro}
V.G. Turaev and O.Y. Viro.
\newblock {State sum invariants of 3 manifolds and quantum 6j symbols}.
\newblock {\em Topology}, 31:865--902, 1992.

\bibitem{Barrett:1995mg}
John~W. Barrett.
\newblock {Quantum gravity as topological quantum field theory}.
\newblock {\em J.Math.Phys.}, 36:6161--6179, 1995, gr-qc/9506070.

\bibitem{Bahr:2011uj}
Benjamin Bahr, Bianca Dittrich, and Sebastian Steinhaus.
\newblock {Perfect discretization of reparametrization invariant path
  integrals}.
\newblock {\em Phys.Rev.}, D83:105026, 2011, 1101.4775.

\bibitem{Dittrich:2011zh}
Bianca Dittrich, Frank~C. Eckert, and Mercedes Martin-Benito.
\newblock {Coarse graining methods for spin net and spin foam models}.
\newblock {\em New J.Phys.}, 14:035008, 2012, 1109.4927.

\bibitem{Dittrich:2011vz}
Bianca Dittrich and Sebastian Steinhaus.
\newblock {Path integral measure and triangulation independence in discrete
  gravity}.
\newblock {\em Phys.Rev.}, D85:044032, 2012, 1110.6866.

\bibitem{dPFKR}
Roberto De~Pietri, Laurent Freidel, Kirill Krasnov, and Carlo Rovelli.
\newblock {Barrett-Crane model from a Boulatov-Ooguri field theory over a
  homogeneous space}.
\newblock {\em Nucl.Phys.}, B574:785--806, 2000, hep-th/9907154.

\bibitem{GFT_rovelli_reisenberg}
Michael~P. Reisenberger and Carlo Rovelli.
\newblock {Space-time as a Feynman diagram: The Connection formulation}.
\newblock {\em Class.Quant.Grav.}, 18:121--140, 2001, gr-qc/0002095.

\bibitem{Perez:2003}
Alejandro Perez.
\newblock {Spin foam models for quantum gravity}.
\newblock {\em Class.Quant.Grav.}, 20:R43, 2003, gr-qc/0301113.

\bibitem{freidel_gft}
Laurent Freidel.
\newblock {Group field theory: An Overview}.
\newblock {\em Int.J.Theor.Phys.}, 44:1769--1783, 2005, hep-th/0505016.

\bibitem{daniele_rev2006}
Daniele Oriti.
\newblock {The Group field theory approach to quantum gravity}.
\newblock 2006, gr-qc/0607032.

\bibitem{Krajewski:2010yq}
Thomas Krajewski, Jacques Magnen, Vincent Rivasseau, Adrian Tanasa, and
  Patrizia Vitale.
\newblock {Quantum Corrections in the Group Field Theory Formulation of the
  EPRL/FK Models}.
\newblock {\em Phys.Rev.}, D82:124069, 2010, 1007.3150.

\bibitem{ds_3}
Steffen Gielen and Daniele Oriti.
\newblock {Discrete and continuum third quantization of Gravity}.
\newblock 2011, 1102.2226.

\bibitem{lqg_propa1}
Emanuele Alesci and Carlo Rovelli.
\newblock {The Complete LQG propagator. I. Difficulties with the Barrett-Crane
  vertex}.
\newblock {\em Phys.Rev.}, D76:104012, 2007, 0708.0883.

\bibitem{lqg_propa2}
Emanuele Alesci and Carlo Rovelli.
\newblock {The Complete LQG propagator. II. Asymptotic behavior of the vertex}.
\newblock {\em Phys.Rev.}, D77:044024, 2008, 0711.1284.

\bibitem{lqg_propa3}
Eugenio Bianchi, Elena Magliaro, and Claudio Perini.
\newblock {LQG propagator from the new spin foams}.
\newblock {\em Nucl.Phys.}, B822:245--269, 2009, 0905.4082.

\bibitem{david1985planar}
Fran{\c{c}}ois David.
\newblock Planar diagrams, two-dimensional lattice gravity and surface models.
\newblock {\em Nuclear Physics B}, 257:45--58, 1985.

\bibitem{Ginsparg:1991bi}
Paul~H. Ginsparg.
\newblock {Matrix models of 2-d gravity}.
\newblock 1991, hep-th/9112013.

\bibitem{DiFrancesco:1993nw}
P.~Di~Francesco, Paul~H. Ginsparg, and Jean Zinn-Justin.
\newblock {2-D Gravity and random matrices}.
\newblock {\em Phys.Rept.}, 254:1--133, 1995, hep-th/9306153.

\bibitem{Ginsparg:1993is}
Paul~H. Ginsparg and Gregory~W. Moore.
\newblock {Lectures on 2-D gravity and 2-D string theory}.
\newblock 1993, hep-th/9304011.

\bibitem{staudacher_moore_seiberg}
Gregory~W. Moore, Nathan Seiberg, and Matthias Staudacher.
\newblock {From loops to states in 2-D quantum gravity}.
\newblock {\em Nucl.Phys.}, B362:665--709, 1991.

\bibitem{legall}
J.-F. Le~Gall and G.~Miermont.
\newblock {Scaling limits of random trees and planar maps}.
\newblock 1101.4856.

\bibitem{Ambjorn_tensors}
Jan Ambjorn, Bergfinnur Durhuus, and Thordur Jonsson.
\newblock {Three-dimensional simplicial quantum gravity and generalized matrix
  models}.
\newblock {\em Mod.Phys.Lett.}, A6:1133--1146, 1991.

\bibitem{gross}
Mark Gross.
\newblock {Tensor models and simplicial quantum gravity in > 2-D}.
\newblock {\em Nucl.Phys.Proc.Suppl.}, 25A:144--149, 1992.

\bibitem{Sasakura:1990fs}
Naoki Sasakura.
\newblock {Tensor model for gravity and orientability of manifold}.
\newblock {\em Mod.Phys.Lett.}, A6:2613--2624, 1991.

\bibitem{Gurau:2010nd}
Razvan Gurau.
\newblock {Lost in Translation: Topological Singularities in Group Field
  Theory}.
\newblock {\em Class.Quant.Grav.}, 27:235023, 2010, 1006.0714.

\bibitem{Ambjorn:2012jv}
J.~Ambjorn, A.~Goerlich, J.~Jurkiewicz, and R.~Loll.
\newblock {Nonperturbative Quantum Gravity}.
\newblock {\em Phys.Rept.}, 519:127--210, 2012, 1203.3591.

\bibitem{Barbieri1997}
A.~Barbieri.
\newblock {Quantum tetrahedra and simplicial spin networks}.
\newblock {\em Nucl.Phys.}, B518:714--728, 1998, gr-qc/9707010.

\bibitem{BaezBarrett_tetra}
John~C. Baez and John~W. Barrett.
\newblock {The Quantum tetrahedron in three-dimensions and four-dimensions}.
\newblock {\em Adv.Theor.Math.Phys.}, 3:815--850, 1999, gr-qc/9903060.

\bibitem{daniele_rev2011}
Daniele Oriti.
\newblock {The microscopic dynamics of quantum space as a group field theory}.
\newblock pages 257--320, 2011, 1110.5606.

\bibitem{lr_loll}
Renate Loll.
\newblock {Discrete approaches to quantum gravity in four-dimensions}.
\newblock {\em Living Rev.Rel.}, 1:13, 1998, gr-qc/9805049.

\bibitem{PR3}
Laurent Freidel and Etera~R. Livine.
\newblock {Ponzano-Regge model revisited III: Feynman diagrams and effective
  field theory}.
\newblock {\em Class.Quant.Grav.}, 23:2021--2062, 2006, hep-th/0502106.

\bibitem{majidfreidel}
Laurent Freidel and Shahn Majid.
\newblock {Noncommutative harmonic analysis, sampling theory and the Duflo map
  in 2+1 quantum gravity}.
\newblock {\em Class.Quant.Grav.}, 25:045006, 2008, hep-th/0601004.

\bibitem{karim}
E.~Joung, J.~Mourad, and K.~Noui.
\newblock {Three Dimensional Quantum Geometry and Deformed Poincare Symmetry}.
\newblock {\em J.Math.Phys.}, 50:052503, 2009, 0806.4121.

\bibitem{cdr_duflo}
Carlos Guedes, Daniele Oriti, and Matti Raasakka.
\newblock {Quantization maps, algebra representation and non-commutative
  Fourier transform for Lie groups}.
\newblock 2013, 1301.7750.

\bibitem{daniele_hydro}
Daniele Oriti.
\newblock {Group field theory as the microscopic description of the quantum
  spacetime fluid: A New perspective on the continuum in quantum gravity}.
\newblock {\em PoS}, QG-PH:030, 2007, 0710.3276.

\bibitem{stachel1986einstein}
John Stachel.
\newblock Einstein and the quantum: fifty years of struggle.
\newblock {\em From Quarks to Quasars: Philosophical Problems of Modern
  Physics}, pages 349--81, 1986.

\bibitem{Gurau:2009tw}
Razvan Gurau.
\newblock {Colored Group Field Theory}.
\newblock {\em Commun.Math.Phys.}, 304:69--93, 2011, 0907.2582.

\bibitem{jimmy}
James~P. Ryan.
\newblock {Tensor models and embedded Riemann surfaces}.
\newblock {\em Phys.Rev.}, D85:024010, 2012, 1104.5471.

\bibitem{RazvanN}
Razvan Gurau.
\newblock {The 1/N expansion of colored tensor models}.
\newblock {\em Annales Henri Poincare}, 12:829--847, 2011, 1011.2726.

\bibitem{RazvanVincentN}
Razvan Gurau and Vincent Rivasseau.
\newblock {The 1/N expansion of colored tensor models in arbitrary dimension}.
\newblock {\em Europhys.Lett.}, 95:50004, 2011, 1101.4182.

\bibitem{Gurau:2011xq}
Razvan Gurau.
\newblock {The complete 1/N expansion of colored tensor models in arbitrary
  dimension}.
\newblock {\em Annales Henri Poincare}, 13:399--423, 2012, 1102.5759.

\bibitem{diffeos}
Aristide Baratin, Florian Girelli, and Daniele Oriti.
\newblock {Diffeomorphisms in group field theories}.
\newblock {\em Phys.Rev.}, D83:104051, 2011, 1101.0590.

\bibitem{Gurau:2011xp}
Razvan Gurau and James~P. Ryan.
\newblock {Colored Tensor Models - a review}.
\newblock {\em SIGMA}, 8:020, 2012, 1109.4812.

\bibitem{critical}
Valentin Bonzom, Razvan Gurau, Aldo Riello, and Vincent Rivasseau.
\newblock {Critical behavior of colored tensor models in the large N limit}.
\newblock {\em Nucl.Phys.}, B853:174--195, 2011, 1105.3122.

\bibitem{dr_dually}
Dario Benedetti and Razvan Gurau.
\newblock {Phase Transition in Dually Weighted Colored Tensor Models}.
\newblock {\em Nucl.Phys.}, B855:420--437, 2012, 1108.5389.

\bibitem{v_revisiting}
Valentin Bonzom.
\newblock {Revisiting random tensor models at large N via the Schwinger-Dyson
  equations}.
\newblock {\em JHEP}, 1303:160, 2013, 1208.6216.

\bibitem{v_new}
Valentin Bonzom.
\newblock {New 1/N expansions in random tensor models}.
\newblock {\em JHEP}, 1306:062, 2013, 1211.1657.

\bibitem{jr_branched}
Razvan Gurau and James~P. Ryan.
\newblock {Melons are branched polymers}.
\newblock 2013, 1302.4386.

\bibitem{wjd_double}
Wojciech Kaminski, Daniele Oriti, and James~P. Ryan.
\newblock {Towards a double-scaling limit for tensor models: probing
  sub-dominant orders}.
\newblock 2013, 1304.6934.

\bibitem{vrv_ising}
Valentin Bonzom, Razvan Gurau, and Vincent Rivasseau.
\newblock {The Ising Model on Random Lattices in Arbitrary Dimensions}.
\newblock {\em Phys.Lett.}, B711:88--96, 2012, 1108.6269.

\bibitem{v_dimers}
Valentin Bonzom.
\newblock {Multicritical tensor models and hard dimers on spherical random
  lattices}.
\newblock {\em Phys.Lett.}, A377:501--506, 2013, 1201.1931.

\bibitem{vh_dimers}
Valentin Bonzom and Harold Erbin.
\newblock {Coupling of hard dimers to dynamical lattices via random tensors}.
\newblock {\em J.Stat.Mech.}, 1209:P09009, 2012, 1204.3798.

\bibitem{pspin}
Valentin Bonzom, Razvan Gurau, and Matteo Smerlak.
\newblock Universality in p-spin glasses with correlated disorder.
\newblock {\em Journal of Statistical Mechanics: Theory and Experiment},
  2013(02):L02003, 2013.

\bibitem{vf_packed}
Valentin Bonzom and Frédéric Combes.
\newblock {Fully packed loops on random surfaces and the 1/N expansion of
  tensor models}.
\newblock 2013, 1304.4152.

\bibitem{universality}
Razvan Gurau.
\newblock {Universality for Random Tensors}.
\newblock 2011, 1111.0519.

\bibitem{r_vir}
Razvan Gurau.
\newblock {A generalization of the Virasoro algebra to arbitrary dimensions}.
\newblock {\em Nucl.Phys.}, B852:592--614, 2011, 1105.6072.

\bibitem{uncoloring}
Valentin Bonzom, Razvan Gurau, and Vincent Rivasseau.
\newblock {Random tensor models in the large N limit: Uncoloring the colored
  tensor models}.
\newblock {\em Phys.Rev.}, D85:084037, 2012, 1202.3637.

\bibitem{tensor_4d}
Joseph Ben~Geloun and Vincent Rivasseau.
\newblock {A Renormalizable 4-Dimensional Tensor Field Theory}.
\newblock {\em Commun.Math.Phys.}, 318:69--109, 2013, 1111.4997.

\bibitem{FerriGagliardi}
Massimo Ferri and Carlo Gagliardi.
\newblock Crystallisation moves.
\newblock {\em Pacific J. Math}, 100(1):85--103, 1982.

\bibitem{Vince_gene}
Andrew Vince.
\newblock n-graphs.
\newblock {\em Discrete Mathematics}, 72(1):367--380, 1988.

\bibitem{lin}
Joseph Ben~Geloun, Thomas Krajewski, Jacques Magnen, and Vincent Rivasseau.
\newblock {Linearized Group Field Theory and Power Counting Theorems}.
\newblock {\em Class.Quant.Grav.}, 27:155012, 2010, 1002.3592.

\bibitem{r_sdgene}
Razvan Gurau.
\newblock {The Schwinger Dyson equations and the algebra of constraints of
  random tensor models at all orders}.
\newblock {\em Nucl.Phys.}, B865:133--147, 2012, 1203.4965.

\bibitem{r_constructive}
Razvan Gurau.
\newblock {The 1/N Expansion of Tensor Models Beyond Perturbation Theory}.
\newblock 2013, 1304.2666.

\bibitem{rz_loop1}
Vincent Rivasseau and Zhituo Wang.
\newblock {Constructive Renormalization for $\Phi^{4}_2$ Theory with Loop
  Vertex Expansion}.
\newblock {\em J.Math.Phys.}, 53:042302, 2012, 1104.3443.

\bibitem{rz_loop2}
Vincent Rivasseau and Zhituo Wang.
\newblock {How to Resum Feynman Graphs}.
\newblock 2013, 1304.5913.

\bibitem{josephsamary}
Joseph Ben~Geloun and Dine~Ousmane Samary.
\newblock {3D Tensor Field Theory: Renormalization and One-loop
  $\beta$-functions}.
\newblock 2012, 1201.0176.

\bibitem{joseph_etera}
Joseph Ben~Geloun and Etera~R. Livine.
\newblock {Some classes of renormalizable tensor models}.
\newblock 2012, 1207.0416.

\bibitem{fabien_dine}
Dine~Ousmane Samary and Fabien Vignes-Tourneret.
\newblock {Just Renormalizable TGFT's on $U(1)^{d}$ with Gauge Invariance}.
\newblock 2012, 1211.2618.

\bibitem{joseph_d2}
Joseph~Ben Geloun.
\newblock {Renormalizable Models in Rank $d \geq 2$ Tensorial Group Field
  Theory}.
\newblock 2013, 1306.1201.

\bibitem{cboulatov}
Joseph Ben~Geloun, Jacques Magnen, and Vincent Rivasseau.
\newblock {Bosonic Colored Group Field Theory}.
\newblock {\em Eur.Phys.J.}, C70:1119--1130, 2010, 0911.1719.

\bibitem{laurentdiffeo}
Laurent Freidel and David Louapre.
\newblock {Diffeomorphisms and spin foam models}.
\newblock {\em Nucl.Phys.}, B662:279--298, 2003, gr-qc/0212001.

\bibitem{biancadiffeos}
Bianca Dittrich.
\newblock {Diffeomorphism symmetry in quantum gravity models}.
\newblock 2008, 0810.3594.

\bibitem{biancabenny}
Benjamin Bahr and Bianca Dittrich.
\newblock {(Broken) Gauge Symmetries and Constraints in Regge Calculus}.
\newblock {\em Class.Quant.Grav.}, 26:225011, 2009, 0905.1670.

\bibitem{BarrettCraneWdW}
John~W. Barrett and Louis Crane.
\newblock {An Algebraic interpretation of the Wheeler-DeWitt equation}.
\newblock {\em Class.Quant.Grav.}, 14:2113--2121, 1997, gr-qc/9609030.

\bibitem{eterasimonevalentin}
Valentin Bonzom, Etera~R. Livine, and Simone Speziale.
\newblock {Recurrence relations for spin foam vertices}.
\newblock {\em Class.Quant.Grav.}, 27:125002, 2010, 0911.2204.

\bibitem{Vince_2d}
Andrew Vince.
\newblock The classification of closed surfaces using colored graphs.
\newblock {\em Graphs and Combinatorics}, 9(1):75--84, 1993.

\bibitem{lrd}
Laurent Freidel, Razvan Gurau, and Daniele Oriti.
\newblock {Group field theory renormalization - the 3d case: Power counting of
  divergences}.
\newblock {\em Phys.Rev.}, D80:044007, 2009, 0905.3772.

\bibitem{ValentinJoseph}
Joseph Ben~Geloun and Valentin Bonzom.
\newblock {Radiative corrections in the Boulatov-Ooguri tensor model: The
  2-point function}.
\newblock {\em Int.J.Theor.Phys.}, 50:2819--2841, 2011, 1101.4294.

\bibitem{scaling3d}
Jacques Magnen, Karim Noui, Vincent Rivasseau, and Matteo Smerlak.
\newblock {Scaling behaviour of three-dimensional group field theory}.
\newblock {\em Class.Quant.Grav.}, 26:185012, 2009, 0906.5477.

\bibitem{francesco}
Francesco Caravelli.
\newblock {A Simple Proof of Orientability in Colored Group Field Theory}.
\newblock {\em SpringerPlus}, 1:6, 2012, 1012.4087.

\bibitem{Ooguri}
Hirosi Ooguri.
\newblock {Topological lattice models in four-dimensions}.
\newblock {\em Mod.Phys.Lett.}, A7:2799--2810, 1992, hep-th/9205090.

\bibitem{vincent_2002}
Vincent Rivasseau.
\newblock An introduction to renormalization.
\newblock In {\em Poincar{\'e} Seminar 2002}, pages 139--177. Springer, 2003.

\bibitem{Simon}
Barry Simon.
\newblock {\em The $P(\Phi)_2$ Euclidean (quantum) field theory}.
\newblock Princeton University Press (Princeton, NJ), 1974.

\bibitem{josephaf}
Joseph Ben~Geloun.
\newblock {Two and four-loop $\beta$-functions of rank 4 renormalizable tensor
  field theories}.
\newblock {\em Class.Quant.Grav.}, 29:235011, 2012, 1205.5513.

\bibitem{dineaf}
Dine~Ousmane Samary.
\newblock {Beta functions of $U(1)^d$ gauge invariant just renormalizable
  tensor models}.
\newblock 2013, 1303.7256.

\bibitem{aldo}
Aldo Riello.
\newblock {Self-Energy of the Lorentzian EPRL-FK Spin Foam Model of Quantum
  Gravity}.
\newblock 2013, 1302.1781.

\bibitem{vincent_renGFT}
Vincent Rivasseau.
\newblock {Towards Renormalizing Group Field Theory}.
\newblock {\em PoS}, CNCFG2010:004, 2010, 1103.1900.

\bibitem{vm1}
Valentin Bonzom and Matteo Smerlak.
\newblock {Bubble divergences from cellular cohomology}.
\newblock {\em Lett.Math.Phys.}, 93:295--305, 2010, 1004.5196.

\bibitem{vm2}
Valentin Bonzom and Matteo Smerlak.
\newblock {Bubble divergences from twisted cohomology}.
\newblock {\em Commun.Math.Phys.}, 312:399--426, 2012, 1008.1476.

\bibitem{vm3}
Valentin Bonzom and Matteo Smerlak.
\newblock {Bubble divergences: sorting out topology from cell structure}.
\newblock {\em Annales Henri Poincare}, 13:185--208, 2012, 1103.3961.

\bibitem{pr1}
Laurent Freidel and David Louapre.
\newblock {Ponzano-Regge model revisited I: Gauge fixing, observables and
  interacting spinning particles}.
\newblock {\em Class.Quant.Grav.}, 21:5685--5726, 2004, hep-th/0401076.

\bibitem{addendum}
Joseph~Ben Geloun and Vincent Rivasseau.
\newblock {Addendum to 'A Renormalizable 4-Dimensional Tensor Field Theory'}.
\newblock 2012, 1209.4606.

\bibitem{ef_poincare}
Florian Girelli and Etera~R. Livine.
\newblock {A Deformed Poincare Invariance for Group Field Theories}.
\newblock {\em Class.Quant.Grav.}, 27:245018, 2010, 1001.2919.

\bibitem{Dartois}
Stephane Dartois, Razvan Gurau, and Vincent Rivasseau.
\newblock {Double Scaling in Tensor Models with a Quartic Interaction}.
\newblock 2013, 1307.5281.

\bibitem{schaeffer}
Gurau Razvan and Schaeffer Gilles.
\newblock {Regular colored graphs of positive degree}.
\newblock 2013, 1307.5279.

\end{thebibliography}
\addcontentsline{toc}{chapter}{Bibliography}

\end{document}